\title{Equilibration and Typicality in Quantum Processes}
\author{Pedro Figueroa Romero}
\begin{document}
    \frontmatter
    \mnfrontpage
    \null\vfill
\thispagestyle{empty}
\noindent I certify that I have made all reasonable efforts to secure copyright permissions for third-party content included in this thesis and have not knowingly added copyright content to my work without the owner's permission.

\copyright\quad Pedro Figueroa Romero (2020)
    \clearpage
    \thispagestyle{empty}\mbox{}
    \clearpage
    \chapter{Abstract}
Forgetfulness is a common occurrence in natural processes. After all, if each carbon atom remembered its detailed past, then each of these would have a unique behaviour and there would be no sense in classifying atoms and molecules. Moreover, without forgetfulness, repeatability would be impossible. Despite this, small systems constantly leak information about their state to their surroundings, and quantum mechanics tells us that this information can never be deleted, so that it invariably returns to influence their future behaviour.

How can physical nature be forgetful if it is not allowed to forget? Precisely, in the theory of open quantum systems, memory is the rule and forgetfulness the exception. This conundrum is not dissimilar from that of the emergence of the laws of thermodynamics purely from underlying quantum mechanical laws, which dates from the inception of quantum mechanics itself, and is still fertile ground for important foundational and practical questions.

In this thesis, we address the emergence of forgetfulness---more precisely called \emph{Markovianity}---through a generalization of the concepts of equilibration on average and typicality, previously employed to study the emergence of statistical mechanics from quantum mechanics: the first shows how time-dependent quantities of quantum systems evolve towards fixed values and stay close to them for most times, while the second one argues that small subsystems of a composite are in equilibrium for almost all states of the whole. Using the process tensor framework for quantum stochastic processes, we introduce an extended notion of equilibration, characterizing the conditions under which a multitime quantum process can be approximately described by an equilibrium one. Furthermore, without resorting to the Born-Markov assumption of weak coupling, we show that Markovian processes are typical, and prove that there are physical non-Markovian processes that with high probability look almost Markovian for all orders of correlations, in a phenomenon that we call \emph{Markovianization}.

The results within this thesis bridge the aforementioned gap between what we see in the real world and what idealised physical theories say. The main implications of these findings are twofold: foundationally, they give birth to almost Markovian processes from closed quantum dynamics, while for applications and experiments, they pave the way to predict and quantify the rate at which memory effects become relevant.
    \newpage
\chapter{Declaration}
This thesis contains no material which has been accepted for the award of any other degree or diploma at any university or equivalent institution and, to the best of my knowledge and belief, contains no material previously published or written by another person, except where due reference is made in the text of the thesis.\\[3em]

Melbourne, September 2020
    \chapter{Acknowledgements}
I have grown incommensurately in the last three and a half years, and despite reaching this point in such dire times (amidst the \textsc{covid}-19 pandemic), I can only be grateful for all the people who supported me in one way or another.

I am deeply thankful to both Kavan Modi and Felix Pollock for giving me the chance to do this research, for their patience, enthusiasm, kindness and guidance; they are two of the brightest minds that I have been fortunate to learn from and I hold both in the highest regard as scientists and as human beings. I was also very fortunate to have been part of the unique Monash Quantum Information Science ($\mathsf{MonQIS}$) group, I am grateful for the support and insightful conversations with all its members, as well as for the memes and the shenanigans. My special thanks to Francesco Campaioli, Mathias J{\o}rgensen, Roberto Mu\~{n}oz and Magdalini Zonnios for their support, friendship and good times. Thanks as well to Cody Duncan, honorary classical $\mathsf{MonQIS}$ member, for helping me proofreading my early drafts.

I specially thank Cristian Sierra, who showed up when I was in the middle of my PhD and is now a dearest friend and brother for life, he was there to help me get through the darkest days as well as to have fun on the sunny ones, I thank him for sharing both the mundane and the outstanding with me.

I am sincerely grateful to the whole School of Physics \& Astronomy for the accommodating and amicable atmosphere, together with all the academic and administrative facilities and support. Financially, this work would not have been possible without the Monash  Graduate  Scholarship (MGS) and the Monash International Postgraduate Research Scholarship (MIPRS).

At this stage of my life, I am now fairly convinced that luck plays a higher role in life than most talents one could put forth, and in this sense, I am incredibly lucky for having my family, \emph{mam\'{a}, pap\'{a} y mi hermana Ale}, and their support from so far away for all this time. To be writing these lines was literally unbelievable ten years ago as a \emph{high-school} dropout, but I never lacked their unconditional support and this is only possible due to them.

Finally, in this same spirit, I am grateful to Felicia Grant for filling my life with light and colour, bringing warmth and safety in these uncertain times, and being my constant reminder of what really matters in life.
    \chapter{Publications}
The main results in this thesis have been presented in the following publications:

\begin{itemize}[label={}]
  \item~\cite{FigueroaRomero2020equilibration}
  \emph{Equilibration on average in quantum processes with finite temporal resolution}\\Pedro Figueroa-Romero, Felix A. Pollock and Kavan Modi\\
  Phys. Rev. E 102, 032144.
  
  \item~\cite{FigueroaRomero2019almostmarkovian} \emph{Almost Markovian processes from closed dynamics}\\Pedro Figueroa-Romero, Kavan Modi, and Felix A. Pollock\\Quantum 3, 136 (2019).
  
  \item~\cite{FigueroaRomero2020makovianization} \emph{Markovianization by design}\\Pedro Figueroa-Romero, Felix A. Pollock and Kavan Modi\\Preprint: arXiv:2004.07620 [quant-ph].\\
\end{itemize}

By date of writing and publication, these are ordered as \cite{FigueroaRomero2019almostmarkovian}, \cite{FigueroaRomero2020equilibration}, \cite{FigueroaRomero2020makovianization}, but for sake of presentation we follow the order cited above in the corresponding Chapters \ref{sec:equilibration}, \ref{sec:typicality} and \ref{sec: Markovianization by design}.
    \newpage
    \thispagestyle{empty}
    \null\vspace{0.5in}
    \hfil \emph{No hay nada m\'{a}s vertiginoso que mirar atr\'{a}s.} \hfil
    \vfil
    \newpage
{
    \cleartorecto
    \setcounter{tocdepth}{2}
    \hypersetup{linkcolor=black, linktoc=all}
    \tableofcontents*    
    \cleartorecto
    \listoffigures*      
}
    \cleartorecto
    \printglossary[type=\acronymtype, nonumberlist]
    \chapter{Introduction}
Thermodynamics is without a doubt the most resilient, universal and, in a sense, \emph{strange} theory of physics. It was originally devised as a phenomenological theory, surviving all major revolutions in science and serving as the highest authority for most technological developments. Similarly, the concept of energy pervades the social imaginarium (albeit often in pseudo-scientific ways) and there's arguably no more popular physical law than the first law of thermodynamics, which, together with the remaining three laws (or two, depending who one asks), no one has yet been able to contradict in a real laboratory.

On the other hand, quantum mechanics is one of the fundamental pillars on which any physical theory must stand. By circa 1930, the mathematical formalism of quantum mechanics was established thoroughly by John von Neumann in Ref.~\cite{vonNeumann}, and given that statistical mechanics---the mathematical spinal cord of thermodynamics---was already a mature field of research, it is no surprise that he would quickly approach a quantum theory of statistical mechanics and raise foundational questions that remained largely unsolved until very recently.

A characteristic feature of quantum mechanics is that of generating a distinct type of correlations which is non-local in nature, and this was just recently seen to be key in explaining the quantum emergence of statistical mechanics. Correlations, however, can also be temporal, and it turns out that many foundational questions regarding these correlations in quantum mechanics can be posed in an analogous way to those regarding the foundations of statistical mechanics. Namely, how do quantum systems become \emph{forgetful}, i.e. behaving independently of their detailed past? Or how resilient is a system in thermal equilibrium when temporal correlations are present and when the act of observing the system itself disturbs it?

The overarching goal of this thesis is to provide an in-depth and novel account of the relationship between the phenomenon of equilibration, which is a pillar in the foundations of statistical mechanics, and general quantum processes that incorporate a full account of correlations and memory effects; namely, under what conditions do quantum processes with seemingly negligible temporal correlations arise, given that processes with non-vanishing correlations are the norm? And furthermore, how do we account for the pervasiveness of these in nature? In investigating these questions, we find a non-trivial interplay between complexity and randomness, further characterizing how equilibration comes about in quantum processes.

\section{Outline}
The rest of text is organised as follows: Part I consists of Chapters~\ref{sec:qm101} to~\ref{sec:processes}, containing an overview and explanation of existing literature and no original results, followed by Part II, which consists of Chapters~\ref{sec:equilibration} to~\ref{sec: Markovianization by design} and discusses the original results obtained during this PhD.{\hypersetup{linkcolor=black, linktoc=all}
\begin{description}
    \item[\cref{sec:qm101}] We briefly introduce the concepts and the mathematical framework of quantum mechanics such as quantum states, observables, measurements, closed system evolution and distinguishability of quantum states.

    \item[\cref{sec:statmech}] We first motivate this chapter with a brief discussion of the second law of thermodynamics and why a derivation purely from quantum mechanics is needed. This is followed by a discussion of the main results in the literature showing how quantum systems dynamically fulfil the second law given solely the quantum mechanical framework, in a concept known as equilibration on average. Finally, we discuss how the backbone of the second law, namely the fundamental postulate of statistical mechanics, emerges naturally through entanglement in quantum mechanics without any a-priori assumptions by appealing to a notion of typicality.

    \item[\cref{sec:processes}] We begin by describing more general quantum operations with the concept of quantum channels and three of their main representations, namely dilations, the operator sum representation and the Choi Jamio\l{}kowski isomorphism. We then describe open quantum dynamics and the problem of initial correlations, together with a resolution known as the superchannel. Following this, we motivate the generalization encoding the initial correlations problem via multiple interventions, whereby temporal correlations between more than two points become relevant, in turn leading to the generalization of the superchannel known as the process tensor. We describe how the process tensor framework generalizes the concept of classical stochastic processes and the concept of Markovianity, contrasting with several different approaches in the literature which have proved to be problematic. We finally discuss how the process tensor framework naturally provides an unambiguous measure of non-Markovianity.
    
    \item[\cref{sec:equilibration}] In this chapter we blend the concepts in the first chapters to study the conditions under which a process with a finite temporal resolution can be approximately described by an equilibrium one, which is equivalent to having operations being implemented with a fuzzy clock or to having a system with uniformly fluctuating energies. We first define what we mean by an equilibrium process and by a fuzzy clock, and we then derive a generalization of the concept of equilibration on average to one which can be operationally assessed at multiple times, placing an upper-bound on a new observable distinguishability measure comparing a multitime process with a fuzzy clock against a fixed equilibrium one. We will see that the conditions for equilibration to occur can be extended, with genuine multitime contributions depending on the fuzzy process and the amount of disturbance of the observer’s operations on it. We finally motivate a parallel between the concepts of subsystem equilibration, thermalization and the emergence of statistical mechanics, and process equilibration, \emph{Markovianization} and the emergence of memorylessness in nature. This chapter is based on Ref.~\cite{FigueroaRomero2020equilibration}
    
    \item[\cref{sec:typicality}] By bridging the ideas from the discussion of the emergence of statistical mechanics and the postulate of equal a-priori probabilities to the realm of quantum processes, we argue that this naturally leads to the question of the emergence of memoryless processes, known as Markovian, purely from the rules of quantum mechanics. Motivated by the results on typicality for quantum states, we are able to formally prove that a quantum process drawn uniformly at random will be almost memoryless with high probability whenever it is undergone within a large-dimensional environment. We argue that our results have a parallel interpretation to the case of the emergence of statistical mechanics in the sense of replacing ad-hoc assumptions and approximations to render quantum processes memoryless. We finally discuss the limitations of the typicality approach, the most contentious of which is further discussed in the following chapter. This chapter is based on Ref.~\cite{FigueroaRomero2019almostmarkovian}
    
    \item[\cref{sec: Markovianization by design}] Almost all quantum processes drawn at random within a large environment will be almost Markovian. However, nature seldom behaves randomly. In this chapter we identify a class of physically motivated quantum processes --known as unitary designs-- that satisfy a statement known as a large deviations bound, quantifying the probability that these differ greatly from their Markovian counterparts. We show that, similar to the way that quantum systems thermalize, quantum processes \emph{Markovianize} in the sense that they can converge towards Markovian processes in the correct limits, and in particular as the overall complexity of the interactions in the whole system increases. We further exemplify our result making use of an efficient construction of an approximate unitary design with an $n$-qubit quantum circuit mediated by two-qubit interactions only, showing how seemingly simple systems can speedily become forgetful. We finally discuss potential applications as well as further open questions related to the forgetfulness of nature. This chapter is based on Ref.~\cite{FigueroaRomero2020makovianization}.
\end{description}}
    \mainmatter         
    \setcounter{chapter}{-1}
    \part{Background}
    \chapter{Quantum Mechanics 101}
\label{sec:qm101}
\setlength{\epigraphwidth}{0.7\textwidth}
\epigraph{\emph{Quantum theory is a set of rules allowing the computation of probabilities for the outcomes of tests which follow specified preparations.}}{Asher Peres~(\cite{peres2006quantum})}

In this chapter we briefly introduce the basic notation and mathematical concepts from quantum mechanics required for all the remaining topics covered this PhD. These can be consulted in standard textbooks, such as Ref.~\cite{nielsen2000quantum,bengtsson2006geometry,breuer2002theory}.

\section{Quantum systems and quantum states}\label{sec: quantum states and measurements}
The essential ingredients of quantum mechanics are quantum states, the transformations between them and the measurement outcomes that we observe.

The space of definite states of every quantum system is isomorphic to the space of rays in a Hilbert space $\mscr{H}$, that is, the equivalence class of proportional vectors in $\mscr{H}$. Henceforth we will restrict ourselves to finite, $d$-dimensional Hilbert spaces over the set of complex numbers $\mbb{C}$ with inner product $\langle\phi|\psi\rangle$ for $|\phi\rangle$, $|\psi\rangle$ vectors in $\mscr{H}$. A bipartite quantum system $\mathsf{AB}$ comprising systems $\mathsf{A}$ and $\mathsf{B}$ is associated with a tensor product $\mscr{H}_\mathsf{AB}=\mscr{H}_{\mathsf{A}}\otimes\mscr{H}_\mathsf{B}$, which is such that, if $\{|\alpha\rangle\}_{\alpha=1}^{d_\mathsf{A}}$ and $\{|\beta\rangle\}_{\beta=1}^{d_\mathsf{B}}$ are bases for $\mathsf{A}$ and $\mathsf{B}$, respectively, then any vector in $\mscr{H}_{\mathsf{AB}}$ can be represented by $|\Psi\rangle=\sum_{\alpha,\beta}\psi_{\alpha\beta}|\alpha\beta\rangle$ for coefficients $\psi_{\alpha\beta}\in\mbb{C}$ and where $|\alpha\beta\rangle:=|\alpha\rangle\otimes|\beta\rangle$. Any $n$-partite space $\mscr{H}\cong\mscr{H}_1\otimes\cdots\otimes\mscr{H}_n$ then is built similarly by extension.

The most general state of a quantum system is then specified by a density operator $\rho$, which is an element of the space of bounded operators $\mscr{B}(\mscr{H})$\footnote{ In general these also must have a finite trace; this is ensured for finite-dimensional systems.} and which additionally is set to satisfy
\begin{equation}
    \rho=\rho^\dg\,\text{(Hermiticity)},\quad\rho\geq0\,\text{(positivity)}\quad\text{and}\quad\tr(\rho)=1\,\text{(unit trace)},
    \label{eq: def quantum state}
\end{equation}
where  Hermiticity means $\rho$ equals its conjugate transpose, here denoted by $\dg$, which implies in turn that all its eigenvalues are real. We denote the subset of density operators on a given space by $\$(\mscr{H})\subset\mscr{B}(\mscr{H})$. Positivity explicitly means positive semi-definite, $\langle\varphi|\rho|\varphi\rangle\geq0$ for any vector $|\varphi\rangle\in\mscr{H}$, which implies that all of the eigenvalues of $\rho$ are non-negative; we will commonly refer to this property \emph{for quantum states} simply as positivity. Together with unit trace, these will ensure all probabilities corresponding to the outcomes of a measurement of a quantum system are real, positive and add up to unity. We will commonly refer to density operators simply as quantum states.

A quantum state $\rho\in\$(\mscr{H})$ is called pure if there exists a vector $|\psi\rangle$ such that $\rho=|\psi\rangle\!\langle\psi|$, or equivalently if its rank, i.e. the dimension of its image, is equal to one. From the spectral theorem then it follows that every quantum state is a convex mixture of pure states, $\rho=\sum_ip_i|\psi_i\rangle\!\langle\psi_i|$, with $p_i\geq0$ and $\sum{p_i}=1$. The purity of a quantum state is given by
\begin{equation}
    \f{1}{d}\leq\tr(\rho^2)\leq1,
\end{equation}
which is known as the purity of the state $\rho$. The upper-bound is saturated when the state is pure, whilst the lower bound is reached for the so-called maximally mixed state, given by $\rho=\mbb1/d$, where $\mbb1$ is the identity operator on $\mscr{H}$, here the $d\times{d}$ identity matrix. This is interpreted as the state of maximal ignorance, as the system has equal probability to be in any possible pure state.

Furthermore, any mixed state can be expressed as a reduced state of a pure state in a larger Hilbert space: this is known as purification. Here \emph{reduced} means a state of a subset of degrees of freedom of the full system, i.e. for a quantum state in a bipartite system, $\rho\in\$(\mscr{H}_\mathsf{A}\otimes\mscr{H}_\mathsf{B})$, we define
\begin{align}
    \rho_\mathsf{A}&:=\tr_\mathsf{B}(\rho):=\sum_{\beta=1}^{d_\mathsf{B}}(\mbb1_\mathsf{A}\otimes\langle\beta|)\,\rho\,(\mbb1_\mathsf{A}\otimes|\beta\rangle),
    \label{eq: partial trace}
\end{align}
as the reduced state on space $\mathsf{A}$, where $\tr_\mathsf{B}(\cdot)$ is called a partial trace, defined as a trace over subspace $\mathsf{B}$ and with $\mbb1_\mathsf{A}$ the identity operator solely on $\mathsf{A}$. Similarly, $\rho_\mathsf{B}=\tr_\mathsf{A}(\rho)$ is the reduced state of $\rho$ on subspace $\mathsf{B}$. It is clear by inspection that the resulting $\rho_\mathsf{A},\,\rho_\mathsf{B}$ are legitimate quantum states. Thus purification means that every mixed quantum state $\rho\in\$(\mscr{H})$ can be expressed as $\rho=\tr_\mathsf{\Gamma}[|\Psi\rangle\!\langle\Psi|]$ for some pure state $|\Psi\rangle\in\mscr{H}\otimes\mscr{H}_\mathsf{\Gamma}$. The system $\mathsf{\Gamma}$ is usually referred to as an ancillary space or just an ancilla.\footnote{ \emph{Ancilla} is the Latin term for maidservant; despite other (mainly negative) connotations it potentially carries, it is now standard in quantum information science as a synonym of \emph{auxiliary}.} This is easily seen by the so-called Schmidt decomposition, which ensures that any bipartite Hilbert space vector can be written in the form $|\Psi\rangle=\sum_{i=1}^D\sqrt{\varphi_i}|u_iv_i\rangle$ where here $|u_i\rangle$ and $|v_i\rangle$ are orthonormal states in the respective subsystems, $D=\min(d,d_\mathsf{\Gamma})$, and $\varphi_i$ are strictly positive coefficients such that $\sum\varphi_i=1$~\cite{peres2006quantum}. Then for such $|\Psi\rangle$ we obtain the reduced state $\rho=\sum\varphi_i|u_i\rangle\!\langle{u}_i|$, so we can always go in the opposite direction by decomposing any quantum state via the spectral theorem and using it to construct a pure state in an extended space incorporating an ancilla. Notice that in the case of a purification, $d\leq{d}_\mathsf{\Gamma}$ and that such purification will not be unique. This concept has far reaching consequences as will be shown below and can be clearly motivated physically by thinking of the ancillary space as an environment.

\section{Measurements and observables}
Whereas in classical mechanics we can describe the state of a system in a somehow passive way, in quantum mechanics we need access to the density operator through other operators that play an active role in a sense we will now describe. The concept of an observable is tightly related with that of a measurement: the most general measurements are represented by a finite ordered set $\{\mathrm{M}_i\}$ called a \gls{POVM}, where the elements $\mathrm{M}_i$ are such that
\begin{equation}
    \sum_i\mathrm{M}_i=\mbb1,\quad\text{with}\quad \mathrm{M}_i=\mathrm{M}_i^\dg\quad\text{and}\quad\mathrm{M}_i\geq0,
\end{equation}
i.e. Hermitian positive semidefinite operators forming a partition of the identity operator in $\mscr{H}$. A \gls{POVM} measurement applied to a state $\rho$ produces the $i\textsuperscript{th}$ outcome with probability $\tr(\mathrm{M}_i\rho)$, with the definition of a \gls{POVM} ensuring that these sum up to unity. A \gls{POVM} is called \emph{informationally complete} if its statistics fully determine the density matrix, which requires at least $d^2$ elements~\cite{bengtsson2006geometry}.

In particular, whenever the \gls{POVM} consists of $d$ elements with all being orthogonal \emph{projectors}, i.e. $\{\mathrm{M}_i=\Pi_i\}$ satisfying 
\begin{equation}
    \Pi_i\Pi_j=\Pi_i\delta_{ij},\qquad i,j=1,2,\ldots,d,
\end{equation}
this is called a \emph{projective measurement}. We then refer to a Hermitian operator as an observable whenever its real eigenvalues correspond to measurable outcomes. In particular then, an observable with spectral decomposition $A=\sum_{i=1}^d\alpha_i\Pi_i$ describes a projective measurement with probability $\tr(\Pi_i\rho)$ of obtaining the $i\textsuperscript{th}$ outcome $\alpha_i$.

Upon measuring, a quantum system will generally change its state. In general a \gls{POVM} is not enough to determine the post-measurement state. In Chapter~\ref{sec:processes} we will introduce the notion of so-called Kraus operators which will let us deal with this; in particular, for projective measurements, the projectors themselves are Kraus operators and after measurement the state is generally\footnote{ More generally, even in the case of orthogonal projectors as \gls{POVM} elements, projectors as Kraus operators need not be implied in the post-measurement state.} transformed to $\rho\to\rho^\prime=\sum_i\Pi_i\,\rho\,\Pi_i$, or in particular, if the $i$\textsuperscript{th} outcome is observed, then
\begin{equation}
    \rho\to\rho^\prime=\f{\Pi_i\,\rho\,\Pi_i}{\tr[\Pi_i\rho]},
    \label{eq: Born2}
\end{equation}
which is called a \emph{selective} measurement in Ref.~\cite{bengtsson2006geometry}. Selective measurements are repeatable in the sense that if they are performed again the post measurement state remains the same; this is not true for general \gls{POVM}s.

Finally, we will denote the \emph{expectation value} of $A$ on the state $\rho$ by
\begin{equation}
    \langle{A}\rangle_\rho:=\tr(A\rho)=\sum_i\alpha_i\tr(\Pi_i\rho),
\end{equation}
so that $\tr(\Pi_i\rho)$ is the corresponding probability for the $i\textsuperscript{th}$ outcome $\alpha_i$, which in essence constitutes Born's rule~\cite{Gleason1975,Busch_Gleason}. In particular, the \emph{measurement statistics} of a \gls{POVM} will refer to the vector of probabilities $\tr(\Pi_i\rho)$.

\section{Closed system dynamics}\label{sec: closed system dynamics}
We now discuss the dynamical picture for quantum systems. A \emph{closed} system described by $\rho$ at a given time $t\in\mbb{R}^+$, where $\mbb{R}^+$ denotes the set of positive real numbers, will evolve unitarily according to the Schr\"{o}dinger equation,\footnote{ In general such evolution for any operator is referred to as the von Neumann or the quantum Liouville equation.}
\begin{equation}
    i\f{\partial}{\partial{t}}\rho(t)=[H,\rho(t)],
    \label{eq: Schroedinger}
\end{equation}
where $H\in\mscr{B}(\mscr{H})$ is the observable known as the Hamiltonian of the system, and where we set units $\hbar=1$. This operator can be said to be generating the dynamics of $\rho$, and whenever it does not depend on time, as we will consider throughout this thesis, it gives rise to the unitary time-evolution operator
\begin{equation}
    U(t)=\exp(-iHt).
\end{equation}

In general, any operator $V\in\mscr{B}(\mscr{H})$ such that $VV^\dg=V^\dg{V}=\mbb1$ is called unitary. We can see that given a state $\rho$, any $\sigma=V\,\rho\,V^\dg$ remains a density operator. We thus have the solution to Eq.~\eqref{eq: Schroedinger} as
\begin{equation}
    \rho(t)=U(t)\,\rho(0)\,U^\dg(t),
    \label{eq: time evolved state}
\end{equation}
where we will usually denote the initial state at time $t=0$ simply as $\rho(0)=\rho$. 

We will commonly write the Hamiltonian in a spectral decomposition
\begin{equation}
    H=\sum_{n=1}^\mathfrak{D}E_nP_n,
\end{equation}
where $P_n$ is the spectral projector onto the $n\textsuperscript{th}$ eigenspace of $H$ with energy (eigenvalue) $E_n$. Here $\mathfrak{D}=|\mathrm{spec}(H)|\leq{d}$ is the number of distinct energies $E_n$; if $H$ is degenerate with the $n$\textsuperscript{th} level having degeneracy $\ell$, then $P_n=\sum_{j=1}^\ell|n_j\rangle\!\langle{n}_j|$ with $\{|n_\ell\rangle\}$ a basis for the $n$\textsuperscript{th} energy eigenspace.

The expectation value of an observable $A$ with respect to a time-evolved state $\rho(t)$ can then be written as
\begin{align}
    \langle{A}\rangle_{\rho(t)}&=\tr[A\rho(t)]=\tr[A\,U(t)\,\rho\,U^\dg(t)]=\tr[U^\dg(t)\,A\,U(t)\,\rho],
    \label{eq: temporal expectation value}
\end{align}
by the cyclic property of the trace. This can be equivalently thought of as the expectation value $\langle{A}(t)\rangle_\rho$ of a time-evolved operator $A(t):=U^\dg(t)\,A\,U(t)$ at time $t$ on the state $\rho$. This is commonly known as the Heisenberg picture, with the observable $A$ evolving according to
\begin{equation}
    -i\f{\partial}{\partial{t}}A(t)=[H,A(t)],
    \label{eq: von Neumann}
\end{equation}
so that whenever $[H,A(t)]=0$, the expectation value in Eq.~\eqref{eq: temporal expectation value} is constant and, similarly to the classical case, $A$ is called a conserved quantity.

\section{Distinguishability of quantum states: trace distance}\label{sec: trace distance qm101}
Another important aspect we will require is to be able to quantify how different, or how distinguishable, two quantum states are. In particular, a distinguishability measure called the trace distance will be central to most of our discussions. We can put the task of distinguishing a pair of quantum states in terms of measuring the \emph{distance} between them, i.e. of quantifying how close or far are two states from each other.

Let us first define exactly what we mean by a distance measure.

\begin{definition}[Distance measure]
\label{def: distance measure}
Let $\mscr{V}$ be a vector space and $x,y\in\mscr{V}$ any two points. A distance measure $\Delta:\mscr{V}\times\mscr{V}\to\mbb{R}^+_0$ between $x$ and $y$ satisfies:
\begin{compactenum}[\itshape i.]
    \item Positivity: $\Delta(x,y)\geq0$ with equality if and only if $x=y$.
    \item Symmetry: $\Delta(x,y)=\Delta(y,x)$.
    \item Triangle inequality: $\Delta(x,y)\leq\Delta(x,v)+\Delta(v,y)$ for any $v\in\mscr{V}$.
\end{compactenum}
\end{definition}

While there can be a plethora of valid distance measures, not necessarily all of these will constitute a distinguishability measure. By distinguishability measure we specifically refer to a distance measure with an operational meaning, i.e. one that can be ultimately phrased and quantified via measurements. In both the classical and quantum cases, there is no unique way of quantifying distinguishability; however, while the classical case depends just on the statistical state, in the quantum one this depends both on the way the state is measured and the quantum states. That is, considering a \gls{POVM}, say $\{\mathrm{M}_i\}$, the relevant quantity is given by the probabilities $\tr[\mathrm{M}_i\rho]$, as discussed in Section~\ref{sec: quantum states and measurements}.

One way to do this, for any two quantum states $\rho,\sigma\in\$(\mscr{H})$, is to quantify the probability of error in guessing which one of the two is the given state for the system in a single measurement.

Consider then a distance $D_{\{M_i\}}$ defined by
\begin{equation}
    D_{\{\mathrm{M}_i\}}(\rho,\sigma):=\f{1}{2}\sum_i|\tr\mathrm{M}_i(\rho-\sigma)|,
\end{equation}
with the $1/2$ being a normalization factor. This corresponds to the distinguishability between $\rho$ and $\sigma$ given the \gls{POVM} $\{\mathrm{M}_i\}$, as it compares the probabilities for each outcome on either state given a measurement. Specifically, considering a system which was prepared in either state $\rho$ or state $\sigma$, we care about guessing which of these the system is actually in, not minding both destroying the actual state and accidentally guessing the wrong outcome. To this end, the best strategy we can take, given the $i\textsuperscript{th}$ outcome, is to guess that the state is $\rho$ if $\tr[\mathrm{M}_i\rho]\geq\tr[\mathrm{M}_i\sigma]$ and to guess it is in $\sigma$ otherwise. Then we automatically can be right half of the time, with the probability of success in correctly guessing the correct state being
\begin{equation}
    \mbb{P}_{\{\mathrm{M}_i\}}^\text{success}=\f{1}{2}-\f{1}{2}\sum_i|\tr[\mathrm{M}_i\rho]-\tr[\mathrm{M}_i\sigma]|=\f{1}{2}[1-{D}_{\{\mathrm{M}_i\}}(\rho,\sigma)].
    \label{eq: prob success tr dist}
\end{equation}

If we take a subset of all possible \gls{POVM}s, say $\mbb{M}$, we could also define
\begin{equation}
    D_\mbb{M}:=\max_{\{\mathrm{M}_i\}\in\mbb{M}}D_{\{\mathrm{M}_i\}}
\end{equation}
as a distinguishability restricted to such subset. It can be readily verified that this is a legitimate distance measure, and furthermore we have the hierarchy
\begin{equation}
    0\leq{D}_{\mbb{M}}(\rho,\sigma)\leq{D}(\rho,\sigma)\leq1,
\end{equation}
where we have defined
\begin{equation}
    {D}(\rho,\sigma):=\f{1}{2}\|\rho-\sigma\|_1,
    \label{eq: trace distance def}
\end{equation}
as the so-called trace distance, where $\|\cdot\|_1$ is the Schatten 1-norm or also sometimes called trace-norm, which can be defined as the case $p=1$ of the family of norms
\begin{equation}
    \|X\|_p:=\tr\left[|X|^p\right]^{1/p},
    \label{eq: Schatten p-norm}
\end{equation}
called the Schatten $p$-norms. This is because the trace distance would give the distinguishability measure with the optimal of all possible measurements~\cite{bengtsson2006geometry}.

Importantly, the Schatten norms satisfy the hierarchy
\begin{equation}
    \|\cdot\|_1\geq\|\cdot\|_2\geq\ldots\geq\|\cdot\|,
    \label{eq: Schatten hierarchy}
\end{equation}
where here $\|X\|_\infty:=\|X\|$ will be referred to as the \emph{operator norm}, and corresponds to the largest singular value\footnote{ The singular values of a matrix $X$ are the square roots of the eigenvalues of $X^\dg{X}$. Thus, for a Hermitian matrix, the singular values are the absolute values of its eigenvalues.} of $X$.

There are several reasons why the trace distance is important and usually preferred among other state distinguishability measures, which are nevertheless legitimate in their own right. While some of these will become apparent when we present so-called quantum maps in Section~\ref{sec: process tensor}, its operational relevance is overall what makes it a suitable distinguishability measure. Specifically, we can see from Eq.~\eqref{eq: prob success tr dist} that the trace distance is precisely the one that maximizes the probability of success with the optimal amongst all possible measurements. Similarly, other scenarios where the trace distance is relevant can be seen e.g. in Ref.~\cite{Watrous}.

\section{Entanglement}
The history of entanglement is well-known, with the discussion beginning when the famous EPR paper~\cite{EPR_original} came to light, and after which Schr\"{o}dinger, in correspondence with Einstein~\cite{schrodinger_1935}, coined the term \emph{entanglement} to describe the new kind of correlation. Schr\"{o}dinger would later add that entanglement is not \emph{one} but rather \emph{the} characteristic trait of quantum mechanics. A full in-depth, geometrical discussion of entanglement can be seen e.g. in Ref.~\cite{bengtsson2006geometry}.

While the consequences of bipartite entanglement are far reaching (often capturing the popular imagination as well), it is a concept that has a very simple definition. Consider a bipartite space $\mscr{H}\cong\mscr{H}_\mathsf{A}\otimes\mscr{H}_\mathsf{B}$, then a state $\rho\in\$(\mscr{H})$ is called \emph{separable} if either it is a product state, $\rho=\rho_\mathsf{A}\otimes\rho_\mathsf{B}$, or if it can be written as a convex combination of product states, $\rho=\sum{p}_i\rho_\mathsf{A}^{(i)}\otimes\rho_\mathsf{B}^{(i)}$ with $\sum{p}_i=1$. Otherwise, the state $\rho$ is called \emph{entangled}.

The simplest example is that of a pair of qubits, with both $\mscr{H}_\mathsf{A}$ and $\mscr{H}_\mathsf{B}$ being two-dimensional. Written in the so-called computational basis $\{|0\rangle,|1\rangle\}$, we can find an orthogonal basis with this property (as we will see, in its most extreme form) called the Bell basis, given by the four vectors
\begin{equation}
    |\varphi^{\pm}\rangle:=\f{1}{\sqrt{2}}(|00\rangle\pm|11\rangle),\qquad
    |\vartheta^{\pm}\rangle:=\f{1}{\sqrt{2}}(|01\rangle\pm|10\rangle),
\end{equation}
which we can see are not separable. We can also see that $\tr_\mathsf{B}(|\varphi^+\rangle\!\langle\varphi^+|)=\f{1}{2}(|0\rangle\!\langle0|+|1\rangle\!\langle1|)=\mbb{1_\mathsf{A}}/2$, and similarly for all other reduced states, so that despite the whole being pure, this reduced state is maximally mixed. In other words, despite having full certainty of the global state, we get maximal ignorance in either subpart. This also naturally leads to a relation between the purity and how entangled a bipartite state is: the more entangled, the lower the purity of the reduced states.

We can generalize to bipartite systems of dimension $\mathsf{d}=d_\mathsf{A}=d_\mathsf{B}$ through the state that projects into the vector
\begin{equation}
    |\mathsf{\Psi}\rangle=\f{1}{\sqrt{\mathsf{d}}}\sum_{i=1}^\mathsf{d}|ii\rangle,
    \label{eq: def maximally entangled state}
\end{equation}
which we will generically call \emph{the} maximally entangled state, for which effectively we can readily see that $\tr_\mathsf{A}(|\mathsf{\Psi}\rangle\!\langle\mathsf{\Psi}|)=\tr_\mathsf{B}(|\mathsf{\Psi}\rangle\!\langle\mathsf{\Psi}|)=\mbb{1}/\mathsf{d}$. Notice that if the dimensions are different, at most the smaller subsystem can be maximally mixed.

With this we can now introduce an analogue measure of entanglement given by the von Neumann entropy of the reduced states
\begin{equation}
    S(\rho_\mathsf{A})=-\tr[\rho_\mathsf{A}\log\rho_\mathsf{A}],
    \label{eq: von Neumann entropy}
\end{equation}
and similarly for $\rho_\mathsf{B}$, where $\log$ can be taken to be base 2, either known as entanglement entropy. Consider a Schmidt decomposition of a state $|\phi\rangle\in\mscr{H}_\mathsf{A}\otimes\mscr{H}_\mathsf{B}$,
\begin{equation}
    |\phi\rangle=\sum_{i=1}^D\sqrt{\lambda_i}|u_iv_i\rangle,
\end{equation}
with $|u_i\rangle$, $|v_i\rangle$ orthonormal states in each system, $D=\min(d_\mathsf{A},d_\mathsf{B})$ and positive coefficients $\sum\lambda_i=1$. Then, writing $\Phi:=|\phi\rangle\!\langle\phi|$, the reduced state on either system is diagonal, $\Phi_\mathsf{A}=\sum\lambda_i|u_i\rangle\!\langle{u}_i|$ and $\Phi_\mathsf{B}=\sum\lambda_i|v_i\rangle\!\langle{v}_i|$, thus, as the logarithm of a diagonal matrix is the matrix of logarithms of its entries, it follows that
\begin{align}
    S(\Phi_\mathsf{A})&=-\sum_{i=1}^D\lambda_i\log(\lambda_i)=S(\Phi_\mathsf{B}),
\end{align}
which is known as the Shannon entropy of the distribution given by the eigenvalues $\lambda_i$, and indeed it turns out this is maximized for $\lambda_i=1/D$, i.e. when the distribution in the smaller subspace is uniform.
\chapter{The foundations of Statistical Mechanics}
\label{sec:statmech}
\setlength{\epigraphwidth}{0.45\textwidth}
\epigraph{\emph{If physical theories were people, thermodynamics would be the village witch.}}{Goold et al.~(\cite{Goold_2016})}
\section{Equilibrium and the second law of thermodynamics}

Thermodynamics---the branch of physics that deals with the different manifestations of energy and the relation between them---is often regarded as a cornerstone in physics, and anyone that aims to gain a deep knowledge about the nature of reality will certainly have to master its concepts. It is, however, a different kind of theory from, say quantum mechanics, in the sense that it pervades all of physics without necessarily being a \emph{fundamental} theory of physical reality in the same way that we regard quantum mechanics as fundamental.

Quantum mechanics indeed can be seen as a fundamental theory of physics at small spatial scales which in the suitable limit will contain the physics at ordinary macroscopic scales. That is, classical physics, and all of its predictions, are in principle attainable from quantum mechanics taken in the appropriate limit. In particular, this means that the laws of thermodynamics should \emph{emerge} from the microscopic physics given by quantum mechanics.

Historically, this was acknowledged since the inception of quantum theory itself by its founding fathers, the first one being perhaps Erwin Schr\"{o}dinger, who invoked what he calls a \emph{statistical hypothesis}~\cite{Schrodinger} (with English translation in Ref.~\cite{schrodinger2003}) about initial energy level populations in two weakly interacting systems. Schr\"{o}dinger's aim was to describe the long time behaviour of such systems, and he found that the states satisfying his hypothesis are well described by thermal states (satisfy a canonical distribution). Similarly, John von Neumann in Ref.~\cite{vonNeumann} (with translation in Ref.~\cite{vonNeumann2010}) sets out to explain how the irreversible behaviour of entropy emerges from quantum mechanics and how ensemble properties can be assumed in macroscopic (real and imperfect) physical systems.

Particularly puzzling is the behaviour found in the second law of thermodynamics, where the underlying quantum dynamics should give rise to an ever-increasing entropy and evolve towards an equilibrium configuration. The dynamics described by quantum mechanics by the Schr\"{o}dinger equation is unitary, which implies that the information of the system in question is conserved throughout its evolution. This has important consequences for the second law of thermodynamics, as it implies that the dynamics is reversible and shouldn't \emph{forget} its initial configuration when converging to a fixed state.

While the works of Schr\"{o}dinger and von Neumann made progress in addressing such issues, how the second law emerges from unitary quantum dynamics is far from trivial and remained largely unresolved for years. It has only been recently that such question has come back with renewed wave of interest, given not only major advances in experimental and computational techniques but also other unifying theoretical ideas that put fundamental problems such as this one in a new light and render them tractable~\cite{Gogolin_2016, DAlessio_2016, Goold_2016}.

The breadth of topics related to the foundations of statistical mechanics is quite large, with many other areas of physics being affected by them~\cite{Bocchieri_1959, Peres_1984, Jensen_1985, Deutsch_ETH, Srednicki_1994,Srednicki_1999, Rigol_2012, Murthy_2019}; for the purpose of this thesis, however, in this section and then further, we will focus on two main ideas that directly addressed the issues arising from the quantum mechanical foundations of statistical mechanics; these can be described as kinematic and dynamical, and are known as \emph{typicality} and \emph{equilibration}, respectively. The first one refers to an idea that anticipates that almost all quantum systems will be almost in equilibrium, so that most evolutions will carry quantum systems to equilibrium and stay close to it for most times. The second is precisely concerned with the characterization of such evolutions and thus how is equilibrium is achieved. Modern approaches further distinguish as \emph{thermalization} a more restrictive case which can be regarded as a proper thermodynamic equilibration~\cite{Gogolin_2016}, but this is outside the scope of this thesis.

While each of the laws of thermodynamics enjoy their fair share of popularity, the second law is particularly celebrated for its consequences, e.g. that of providing an arrow of time, banning perpetual motion machines, or giving us the conclusion that even our universe will meet death one day. It is one of the most fundamental principles in science and its significance stretches to practically everything we can think of in physical reality. The second law is concerned primarily with the direction in which physical processes can occur: in a nutshell, it is a statement about evolution of physical systems always proceeding towards a fixed state of equilibrium. 

The statement of the second law of thermodynamics usually invokes the concept of \emph{entropy}.\footnote{ The word was introduced by Rudolf Clausius in 1865 as a composition of terms standing for \emph{energy} and \emph{transformation}. The modern interpretation of entropy is that of an information measure and physics is rather one particular application; the scope of what can be said about entropy far exceeds what is covered in this thesis.} While in classical thermodynamics it was introduced as a quantity related to the efficiency of thermodynamic processes, its interpretation in statistical mechanics (classical or quantum) relates to how probable a macrostate, i.e. a global property of a system, is. In a nutshell, the second law states that the entropy of a thermally isolated system can only increase, so that systems can only evolve towards the most probable macrostate. While this agrees with our experience of reality, reconciling this phenomenological statement with microscopic reversible laws has been, and continues to be, a challenge.

While the quantitative description of the second law is given by the so-called $H$-theorem, first derived by Ludwig Boltzmann in 1872~\cite{Boltzmann} for an ideal gas, it was soon realized that this description was not in itself a proof but a statement that followed as a consequence of an implicit assumption. Such an assumption is known as the \emph{equal a-priori probabilities} assumption, otherwise known as the fundamental postulate of statistical mechanics:
\begin{definition}[Postulate of equal a-priori probabilities~\cite{kittel1980thermal}]
\label{postulate: equal a priori}
A closed system is equally likely to be in any of the microstates accessible to it. We refer to this either as the fundamental postulate of statistical mechanics or the postulate of equal a-priori probabilities.
\end{definition}
This is seemingly a perfectly reasonable assumption to make, however, it is still an arbitrary one that needs to be put in by hand. Any successful explanation of the second law purely from quantum mechanics must then account for this fundamental postulate of statistical mechanics.

The second key element that a quantum emergence of the second law must fulfil is that of explaining how is equilibrium reached dynamically, i.e. how is it possible that closed quantum dynamics, as given by the Schr\"{o}dinger equation, becomes forgetful in the sense of converging to a fixed equilibrium state. This puzzle can be posed simply by saying that the dynamics of closed quantum systems is \emph{unitary}; this implies that dynamics will be recurrent and time-reversal invariant, as explained below. Thus, if quantum systems equilibrate, it must be said precisely in what sense and in which way they do.

 In the following sections we present some of the results that have resolved to a great extent both explaining the equal a-priori probabilities postulate as well as the dynamic emergence of equilibrium, and which will be useful further for the main results of this thesis. We highlight throughout that the foundations of statistical mechanics is a vibrantly active topic of research forming part of so-called \emph{quantum thermodynamics}, with interest from all computational, applied and foundational fronts, thus some cutting-edge topics that are also of a high relevance, such as the \emph{eigenstate thermalization hypothesis}~\cite{Srednicki_1994,Srednicki_1999,Rigol_2012,DAlessio_2016,Murthy_2019} or equilibration timescales, are not discussed in this thesis.

\section{Equilibration on average}\label{sec: equilibration on average}
One of the very first problems that we are faced with when trying to approach the question of convergence towards equilibrium in quantum mechanics is that of recurrences, i.e. that quantum states evolving unitarily eventually return to their initial states. This was proved in Ref.~\cite{Bocchieri_1957}, and despite this being the case as well in classical mechanics with so-called Poincar\'{e} recurrences, the quantum discussion can be seen to have some differences~\cite{Peres_recurrence}. For finite dimensional systems, however, such as the ones we will consider here, this picture is intuitively clear since there are finitely many mutually distinguishable states towards which evolution can occur; this can be shown quantitatively as in Ref.~\cite{Wallace_recurrence} and recurrence timescales can be discussed as well~\cite{Bhattacharyya_recurrence, Gimeno_2017}.

Similarly, time-reversal-invariance is clear from Eq.~\eqref{eq: Schroedinger} and Eq.~\eqref{eq: von Neumann}. Genuinely equilibrating closed quantum systems thus cannot exist. A way to make sense of a quantum evolution towards equilibrium is by showing that indeed quantum systems approach an equilibrium state, albeit just staying \emph{close to it for most of the time}, with recurrences being rare. This is in essence what is meant by \emph{equilibration on average}, which we should note, however, is much more generic to what one usually associates with evolution towards \emph{thermal} equilibrium, known as thermalization. Specifically, we use the following general definition.
\begin{definition}[Equilibration on average~\cite{Gogolin_2016}]
\label{def: equilibration on average}
A time dependent property \emph{equilibrates on average} if its value remains close to a given equilibrium value for most times.
\end{definition}

In this chapter we will focus mainly on equilibration on average for the expectation value of observables in closed systems.

\subsection{The equilibrium state}
To make a formal statement about equilibration on average, we first need to define what we mean by an equilibrium value. As equilibration on average assesses how close a time-dependent property is from equilibrium \emph{for most times}, a natural candidate for an equilibrium state is the time-averaged state. Specifically, given an initial state $\rho\in\$(\mscr{H})$, an observable $A=\mscr{B}(\mscr{H})$, and denoting time-averaging by an overline, if the time-average of the expectation value of an observable $\overline{\langle{A}\rangle_{\rho(t)}}=\overline{\tr[A\rho(t)]}$ equilibrates, then it should do so to $\langle{A}\rangle_{\overline{\rho(t)}}=\tr[A\overline{\rho(t)}]$. This seems like an obvious observation, but this nevertheless motivates the definition
\begin{equation}
    \omega:=\lim_{T\to\infty}\overline{\rho}^T,
    \label{eq: definition omega}
\end{equation}
as the equilibrium state of the system, where
\begin{equation}
    \overline{\rho}^T:=\f{1}{T}\int_0^T\rho(t)\,dt,
    \label{eq: finite time avg}
\end{equation}
is the uniform time-average of $\rho(t)$ over a finite-interval $[0,T]$. For quantum systems with a time-independent Hamiltonian, the limit in Eq.~\eqref{eq: definition omega} is well-defined~\cite{Gogolin_2011} in essence because of the unitary evolution~\cite{Besicovitch}, and we get
\begin{align}
    \omega&=\lim_{T\to\infty}\f{1}{T}\int_0^T\ex^{-iHt}\rho\,\ex^{iHt}dt\nonumber\\
    &=\lim_{T\to\infty}\f{1}{T}\int_0^T\sum_{n,m=1}^\mathfrak{D}\ex^{-it(E_n-E_m)}P_n\,\rho\,P_m\,dt\nonumber\\
    &=\sum_{n,m=1}^\mathfrak{D}\left(\lim_{T\to\infty}\f{1}{T}\int_0^T\ex^{-it(E_n-E_m)}dt\right)\,P_n\,\rho\,P_m\nonumber\\
    &=\sum_{n=1}^\mathfrak{D}P_n\,\rho\,P_n,
\end{align}
as $\lim_{T\to\infty}\f{1}{T}\int_0^T\ex^{-it(E_n-E_m)}dt=\delta_{nm}$, where $\delta_{nm}$ refers to a Kronecker delta, equal to $1$ if $n=m$ and equal to $0$ otherwise. That is, the state $\omega$ corresponds to the completely dephased state of $\rho$ with respect to the Hamiltonian $H$. We illustrate this in Fig.~\ref{fig: omega dephased}.

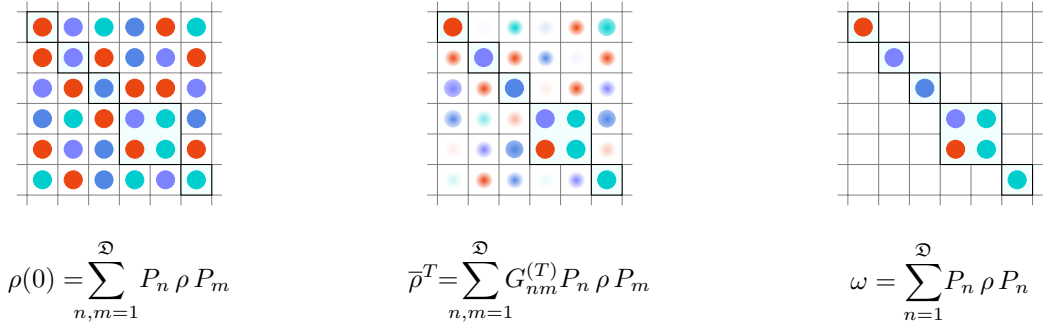
\begin{figure}[t]
\makebox[\textwidth][c]
{
    \begin{tikzpicture}[scale=1.35]
    \begin{scope}
    \draw[step=0.3cm,gray,very thin] (-1,-1) grid (1,1);
    \draw[fill=C3!5!white] (-0.9,0.9) rectangle (-0.6,0.6);
    \draw[fill=C3!5!white] (-0.6,0.6) rectangle (-0.3,0.3);
    \draw[fill=C3!5!white] (-0.3,0.3) rectangle (0,0);
    \draw[fill=C3!5!white] (0,0) rectangle (0.6,-0.6);
    \draw[fill=C3!5!white] (0.6,-0.6) rectangle (0.9,-0.9);
    \draw[fill=C2, draw=C2] (-0.75,0.75) circle[radius=2.5pt];
    \draw[fill=C4, draw=C4] (-0.45,0.45) circle[radius=2.5pt];
    \draw[fill=C1, draw=C1] (-0.15,0.15) circle[radius=2.5pt];
    \draw[fill=C4, draw=C4] (0.15,-0.15) circle[radius=2.5pt];
    \draw[fill=C3, draw=C3] (0.45,-0.45) circle[radius=2.5pt];
    \draw[fill=C3, draw=C3] (0.75,-0.75) circle[radius=2.5pt];
    \draw[fill=C2, draw=C2] (-0.75,0.45) circle[radius=2.5pt];
    \draw[fill=C4, draw=C4] (-0.75,0.15) circle[radius=2.5pt];
    \draw[fill=C1, draw=C1] (-0.75,-0.15) circle[radius=2.5pt];
    \draw[fill=C2, draw=C2] (-0.75,-0.45) circle[radius=2.5pt];
    \draw[fill=C3, draw=C3] (-0.75,-0.75) circle[radius=2.5pt];
    
    \draw[fill=C2, draw=C2] (-0.45,0.15) circle[radius=2.5pt];
    \draw[fill=C3, draw=C3] (-0.45,-0.15) circle[radius=2.5pt];
    \draw[fill=C4, draw=C4] (-0.45,-0.45) circle[radius=2.5pt];
    \draw[fill=C2, draw=C2] (-0.45,-0.75) circle[radius=2.5pt];

    \draw[fill=C2, draw=C2] (-0.15,-0.15) circle[radius=2.5pt];
    \draw[fill=C1, draw=C1] (-0.15,-0.45) circle[radius=2.5pt];
    \draw[fill=C1, draw=C1] (-0.15,-0.75) circle[radius=2.5pt];
    
    \draw[fill=C2, draw=C2] (0.15,-0.45) circle[radius=2.5pt];
    \draw[fill=C3, draw=C3] (0.15,-0.75) circle[radius=2.5pt];
    
    \draw[fill=C4, draw=C4] (0.45,-0.75) circle[radius=2.5pt];
    \draw[fill=C2, draw=C2] (0.75,0.45) circle[radius=2.5pt];
    \draw[fill=C4, draw=C4] (0.75,0.15) circle[radius=2.5pt];
    \draw[fill=C1, draw=C1] (0.75,-0.15) circle[radius=2.5pt];
    \draw[fill=C2, draw=C2] (0.75,-0.45) circle[radius=2.5pt];
    \draw[fill=C3, draw=C3] (0.75,0.75) circle[radius=2.5pt];
    
    \draw[fill=C2, draw=C2] (0.45,0.15) circle[radius=2.5pt];
    \draw[fill=C3, draw=C3] (0.45,-0.15) circle[radius=2.5pt];
    \draw[fill=C4, draw=C4] (0.45,0.45) circle[radius=2.5pt];
    \draw[fill=C2, draw=C2] (0.45,0.75) circle[radius=2.5pt];

    \draw[fill=C2, draw=C2] (0.15,0.15) circle[radius=2.5pt];
    \draw[fill=C1, draw=C1] (0.15,0.45) circle[radius=2.5pt];
    \draw[fill=C1, draw=C1] (0.15,0.75) circle[radius=2.5pt];
    
    \draw[fill=C2, draw=C2] (-0.15,0.45) circle[radius=2.5pt];
    \draw[fill=C3, draw=C3] (-0.15,0.75) circle[radius=2.5pt];
    \draw[fill=C4, draw=C4] (-0.45,0.75) circle[radius=2.5pt];
    \node at (0,-1.75cm) {$\rho(0)=\!\!\!\!\displaystyle{\sum_{n,m=1}^{\mathfrak{D}}}{P}_n\,\rho\,P_m$};
    \end{scope}
    
    \begin{scope}[shift={(4cm,0)}]
    \draw[step=0.3cm,gray,very thin] (-1,-1) grid (1,1);
    \draw[fill=C3!5!white] (-0.9,0.9) rectangle (-0.6,0.6);
    \draw[fill=C3!5!white] (-0.6,0.6) rectangle (-0.3,0.3);
    \draw[fill=C3!5!white] (-0.3,0.3) rectangle (0,0);
    \draw[fill=C3!5!white] (0,0) rectangle (0.6,-0.6);
    \draw[fill=C3!5!white] (0.6,-0.6) rectangle (0.9,-0.9);
    \draw[fill=C2, draw=C2] (-0.75,0.75) circle[radius=2.5pt];
    \draw[fill=C4, draw=C4] (-0.45,0.45) circle[radius=2.5pt];
    \draw[fill=C1, draw=C1] (-0.15,0.15) circle[radius=2.5pt];
    \draw[fill=C4, draw=C4] (0.15,-0.15) circle[radius=2.5pt];
    \draw[fill=C3, draw=C3] (0.45,-0.45) circle[radius=2.5pt];
    \draw[fill=C3, draw=C3] (0.75,-0.75) circle[radius=2.5pt];
    \shade[inner color=C2, outer color=white] (-0.75,0.45) circle[radius=2.5pt];
    \shade[inner color=C4, outer color=C4!50!white] (-0.75,0.15) circle[radius=2.5pt];
    \shade[inner color=C1, outer color=C1!20!white] (-0.75,-0.15) circle[radius=2.5pt];
    \shade[inner color=C2!10!white, outer color=white] (-0.75,-0.45) circle[radius=2.5pt];
    \shade[inner color=C3!20!white, outer color=white] (-0.75,-0.75) circle[radius=2.5pt];
    
    \shade[inner color=C2, outer color=white] (-0.45,0.15) circle[radius=2.5pt];
    \shade[inner color=C3!50!white, outer color=white] (-0.45,-0.15) circle[radius=2.5pt];
    \shade[inner color=C4, outer color=white] (-0.45,-0.45) circle[radius=2.5pt];
    \shade[inner color=C2, outer color=white] (-0.45,-0.75) circle[radius=2.5pt];

    \shade[inner color=C2!40!white, outer color=white] (-0.15,-0.15) circle[radius=2.5pt];
    \shade[inner color=C1, outer color=C1!60!white] (-0.15,-0.45) circle[radius=2.5pt];
    \shade[inner color=C1, outer color=white] (-0.15,-0.75) circle[radius=2.5pt];
    
    \draw[fill=C2,draw=C2] (0.15,-0.45) circle[radius=2.5pt];
    \shade[inner color=C3!10!white, outer color=white] (0.15,-0.75) circle[radius=2.5pt];
    
    \shade[inner color=C4, outer color=white] (0.45,-0.75) circle[radius=2.5pt];
    \shade[inner color=C2, outer color=white] (0.75,0.45) circle[radius=2.5pt];
    \shade[inner color=C4, outer color=white] (0.75,0.15) circle[radius=2.5pt];
    \shade[inner color=C1, outer color=C1!50!white] (0.75,-0.15) circle[radius=2.5pt];
    \shade[inner color=C2!30!white, outer color=white] (0.75,-0.45) circle[radius=2.5pt];
    \shade[inner color=C3, outer color=C3!40!white] (0.75,0.75) circle[radius=2.5pt];
    
    \shade[inner color=C2, outer color=white] (0.45,0.15) circle[radius=2.5pt];
    \shade[inner color=C3, outer color=C3] (0.45,-0.15) circle[radius=2.5pt];
    \shade[inner color=C4!10!white, outer color=white] (0.45,0.45) circle[radius=2.5pt];
    \shade[inner color=C2, outer color=white] (0.45,0.75) circle[radius=2.5pt];

    \shade[inner color=C2!10!white, outer color=white] (0.15,0.15) circle[radius=2.5pt];
    \shade[inner color=C1, outer color=white] (0.15,0.45) circle[radius=2.5pt];
    \shade[inner color=C1!20!white, outer color=white] (0.15,0.75) circle[radius=2.5pt];
    
    \shade[inner color=C2, outer color=white] (-0.15,0.45) circle[radius=2.5pt];
    \shade[inner color=C3, outer color=white] (-0.15,0.75) circle[radius=2.5pt];
    \shade[inner color=C4!5!white, outer color=white] (-0.45,0.75) circle[radius=2.5pt];
    \node at (0,-1.75cm) {$\overline{\rho}^T\!\!=\!\!\!\!\displaystyle{\sum_{n,m=1}^{\mathfrak{D}}}\!\!{G}_{nm}^{(T)}{P}_n\,\rho\,P_m$};
    \end{scope}
    \begin{scope}[shift={(8cm,0)}]
    \draw[step=0.3cm,gray,very thin] (-1,-1) grid (1,1);
    \draw[fill=C3!5!white] (-0.9,0.9) rectangle (-0.6,0.6);
    \draw[fill=C3!5!white] (-0.6,0.6) rectangle (-0.3,0.3);
    \draw[fill=C3!5!white] (-0.3,0.3) rectangle (0,0);
    \draw[fill=C3!5!white] (0,0) rectangle (0.6,-0.6);
    \draw[fill=C3!5!white] (0.6,-0.6) rectangle (0.9,-0.9);
    \draw[fill=C2, draw=C2] (-0.75,0.75) circle[radius=2.5pt];
    \draw[fill=C4, draw=C4] (-0.45,0.45) circle[radius=2.5pt];
    \draw[fill=C1, draw=C1] (-0.15,0.15) circle[radius=2.5pt];
    \draw[fill=C4, draw=C4] (0.15,-0.15) circle[radius=2.5pt];
    \draw[fill=C3, draw=C3] (0.45,-0.45) circle[radius=2.5pt];
    \draw[fill=C3, draw=C3] (0.75,-0.75) circle[radius=2.5pt];
    \draw[fill=C2, draw=C2] (0.15,-0.45) circle[radius=2.5pt];
    \draw[fill=C3, draw=C3] (0.45,-0.15) circle[radius=2.5pt];
    \node at (0,-1.75cm) {$\omega=\displaystyle{\sum_{n=1}^{\mathfrak{D}}}{P}_n\,\rho\,P_n$};
    \end{scope}
    \end{tikzpicture}
}
    \caption[The equilibrium state as a dephasing of the initial state]{\textbf{(Motivated by Ref.~\cite{Gogolin_2016}.) Dephasing as the mechanism for equilibration:} An initial state $\rho(0)=\rho$ in the energy eigenbasis gets dephased with respect to the Hamiltonian $H$ when averaged over time, $\overline{\rho}^T$, where here we denote $G_{nm}^{(T)}=\overline{\ex^{-it(E_n-E_m)}}^T$, ultimately being totally dephased in the infinite time-average limit, corresponding to $\omega=\lim_{T\to\infty}\overline{\rho}^T$.}
    \label{fig: omega dephased}
\end{figure}

\begin{remark}
In Ref.~\cite{Gogolin_2011} (and with more detail in Ref.~\cite{Gogolin_2016}) it is shown that $\omega$ is the unique state that maximises the von Neumann entropy among all others with the same expectation on all conserved quantities. That is, for all $\sigma\in\$(\mscr{H})$ such that $\tr(O_i\sigma)=\tr(O_i\omega)$ on all observables $\{O_i:[O_i,H]=0\}$, the state $\omega$ is the one that maximizes the von Neumann entropy.

Specifically, the von Neumann entropy of a state $\rho\in\$(\mscr{H})$, defined previously by means of Eq.~\eqref{eq: von Neumann entropy},  is a concave function~\cite{peres2006quantum}, i.e. for any states $\{\sigma_i\in\$(\mscr{H})\}$ and positive real numbers $\{\lambda_i\in\mbb{R}^+:\sum\lambda_i=1\}$, we have $S\left(\sum\lambda_i\sigma_i\right)\geq\sum\lambda_i{S}(\sigma_i)$. It then follows from Theorem V.2.1\footnote{ There one has to note that a function $f$ is said to be operator concave if $-f$ is operator convex.} in Ref.~\cite{Bhatia_2013} that $S(\omega)\geq{S}(\rho)$, since the dephasing with respect to $H$ is a so-called \emph{pinching}.\footnote{ This is the argument given in Ref.~\cite{Gogolin_2011,Gogolin_2016}; however, it also can be seen to follow from the so-called \emph{data processing inequality}~\cite{Lindblad_1975}, which implies $S(\Phi(\rho))\geq{S}(\rho)$ for any unital quantum channel $\Phi$, of which the dephasing map with respect to $H$ is a particular case. These concepts, however, will be introduced until the following chapter.}

Moreover, given any $\sigma_1,\sigma_2\in\$(\mscr{H})$, these yield the same eigenvalues on all conserved quantities $O_i$ if and only if $\sum{P_n\sigma_1P_n}=\sum{P_n\sigma_2P_n}$, and uniqueness can be shown by optimization~\cite{Gogolin_2016}.
\end{remark}

This result is remarkable in that one can get a maximum entropy principle\footnote{ While L. Boltzmann in Ref.~\cite{Boltzmann} and W. Gibbs in Ref.~\cite{gibbs2014} obtained an ever-increasing entropy result by physical considerations, E.T. Jaynes derived a maximum entropy principle in Ref.~\cite{Jaynes1} for the classical case and in Ref.~\cite{Jaynes2} for the quantum one (which is already conceptually similar to modern ones such as Ref.~\cite{Rigol_2007}).} purely from unitary quantum dynamics, as opposed to obtaining it from a probabilistic interpretation imposed onto a classical statistical theory. The same principle is satisfied in the respective conditions for a thermodynamic equilibrium with $\rho_\mc{G}\approx\exp(-\beta{H})/\tr[\exp(-\beta{H})]$ for an inverse temperature $\beta$ such that $\tr(\rho_\mc{G}H)=E$~\cite{peres2006quantum}. This maximum entropy principle, however, still does not explain the equal a-priori probabilities postulate or how quantum states evolve towards equilibrium.

\subsection{Temporal fluctuations}\label{sec: fluctuations infinite}
We are interested then in the behaviour for most times of the difference between the time-evolved expected value of an arbitrary observable, $\langle{A}\rangle_{\rho(t)}=\tr[A\rho(t)]$, and that in equilibrium, $\langle{A}\rangle_\omega=\tr[A\omega]$. By definition, the infinite temporal average of the difference will be identically equal to zero. The results in Ref.~\cite{Reimann_2008, Linden_2009, Short_2011} arguably constitute the seminal modern approach to dynamical equilibration; their main focus is on the temporal fluctuations of the expectation value of an observable around equilibrium, i.e. the variance
\begin{equation}
    \overline{|\langle{A}\rangle_{\rho(t)}-\langle{A}\rangle_\omega|^2}=\overline{|\tr[A(\rho(t)-\omega)]|^2}.
\end{equation}

Consider first a system with a non-degenerate Hamiltonian $H=\sum{E}_n|n\rangle\!\langle{n}|$ (so that $\mathfrak{D}=d$) and a pure initial state $\Psi=|\psi\rangle\!\langle\psi|\in\$(\mscr{H})$, then we have
\begin{align}
    \Psi(t)-\omega&=\sum_{n,m=1}^d\ex^{-it(E_m-E_n)}\Psi_{mn}|m\rangle\!\langle{n}|-\sum_{n=1}^d\Psi_{nn}|n\rangle\!\langle{n}|\nonumber\\
    &=\sum_{n \neq m}^d\ex^{-it(E_m-E_n)}\Psi_{mn}|m\rangle\!\langle{n}|,
\end{align}
where $\Psi_{mn}=\langle{m}|\psi\rangle\!\langle\psi|n\rangle$, and so, for any general operator $A\in\mscr{B}(\mscr{H})$,
\begin{align}
    \overline{|\langle{A}\rangle_{\Psi(t)}-\langle{A}\rangle_\omega|^2}&=\overline{\left|\sum_{n \neq m}^d\ex^{-it(E_m-E_n)}\Psi_{mn}A_{nm}\right|^2}=\sum_{\substack{n \neq m \\ \nu \neq \mu}}^d\overline{\ex^{-it(E_m-E_\mu-E_n+E_\nu)}}\Psi_{mn}\Psi_{\nu\mu}A_{nm}A_{\mu\nu}^*.
    \label{eq: variance Short 1}
\end{align}

An additional assumption is now made in Ref.~\cite{Linden_2009, Short_2011}, labelled a non-degenerate gap condition, so that no gap between energy levels occur more than once in the energy spectrum. For the average in Eq.~\eqref{eq: variance Short 1} it implies that
\begin{equation}
    \begin{matrix}E_m-E_\mu=E_n-E_\nu\\m\neq{n},\,\mu\neq\nu\end{matrix}\qquad\text{iff}\qquad\begin{matrix}m=\mu\\n=\nu\end{matrix}\quad,
    \label{eq: non-degenerate gaps}
\end{equation}
so that
\begin{equation}
    \overline{\ex^{-it(E_m-E_\mu-E_n+E_\nu)}}=\delta_{m\mu}\delta_{n\nu}.
\end{equation}

This condition was already considered by von Neumann~\cite{vonNeumann2010}, who called it a non-resonance condition, and it can be motivated physically when considering a bipartition of the full system to ensure interaction between both parts~\cite{Linden_2009, Gogolin_2016}. Now, $|\Psi_{nm}|^2=\Psi_{nn}\Psi_{mm}$, hence
\begin{align}
    \overline{|\langle{A}\rangle_{\Psi(t)}-\langle{A}\rangle_\omega|^2}&=\sum_{m \neq n}^d\Psi_{mm}\Psi_{nn}|A_{nm}|^2\nonumber\\
    &\leq\sum_{m, n}^d\Psi_{mm}A_{nm}\Psi_{nn}A_{mn}^*\nonumber\\
    &=\tr\left[\left(\sum_{n=1}^d\Psi_{nn}|n\rangle\!\langle{n}|A\right)\,\left(\sum_{m=1}^d\Psi_{mm}|m\rangle\!\langle{m}|A^\dg\right)\right]\nonumber\\
    &=\tr[\omega\,A\,\omega\,A^\dg],
\end{align}
where in the second line we included all indices $m=n$. Now we can apply the Cauchy-Schwarz inequality, which reads $\tr[P^\dg{Q}]^2\leq\tr[P^\dg{P}]\tr[Q^\dg{Q}]$ for any $P,Q\in\mscr{M}_\ell$, with $\mscr{M}_\ell$ the space of complex $\ell\times\ell$ matrices; this gives
\begin{align}
    \overline{|\langle{A}\rangle_{\Psi(t)}-\langle{A}\rangle_\omega|^2}&\leq\left(\tr\left[AA^\dg\omega^2\right]\tr\left[A^\dg{A}\omega^2\right]\right)^{1/2}.
\end{align}

Now we can employ H\"{o}lder's inequality, which says that
\begin{equation}
    |\tr[P^\dg{Q}]|\leq\|P\|_a^{1/a}\|Q\|_b^{1/b},\qquad\text{with}\quad\f{1}{a}+\f{1}{b}=1,
    \label{eq: Holders inequality}
\end{equation}
for any $a,b\in\mbb{R}^+_0$, where $\mbb{R}^+_0$ denotes the set of non-negative real numbers, and where
$\|X\|_p$ is the Schatten $p$-norm, defined in Eq.~\eqref{eq: Schatten p-norm}. We can thus take $a=1$, $b\to\infty$, so that
\begin{equation}
    \overline{|\langle{A}\rangle_{\Psi(t)}-\langle{A}\rangle_\omega|^2}\leq\|A\|^2\tr(\omega^2),
    \label{eq: Short infinite obs}
\end{equation}
as all quantities are positive.

This is already in essence the derivation made in Ref.~\cite{Short_2011}. The purity of the equilibrium state is given by
\begin{align}
    \tr(\omega^2)&=\sum_{n=1}^d\langle{n}|\Psi|n\rangle\!\langle{n}|\Psi|n\rangle=\sum_{n=1}^d\left(\tr\left[|n\rangle\!\langle{n}|\Psi\right]\right)^2,
\end{align}
which, written as in the second equality, can be directly read as the sum of squares of probabilities for each of the energy eigenstates to be occupied by the initial state. Indeed, if we consider a general degenerate Hamiltonian $H=\sum{E}_n{P}_n$ and any given initial state $\rho$, we can define
\begin{align}
    d_\text{eff}^{-1}(\rho):=\sum_{n=1}^\mathfrak{D}(\tr[P_n\rho])^2,
    \label{eq: deff inverse}
\end{align}
which is the so-called inverse \emph{effective dimension} of the state $\rho$, also labelled inverse participation ratio~\cite{PhysRevE.85.060101, Calixto_2015}. It satisfies $1\leq{d}_\text{eff}\leq\mathfrak{D}\leq{d}$, with the lower bound saturated when the initial state is an energy eigenstate and either upper bound saturated when the occupation probability is the same across all eigenspaces or eigenstates. Notice that $d_\text{eff}(\rho(t))=d_\text{eff}(\rho)$ is independent of time, and it only corresponds to the purity of $\omega$ when either the Hamiltonian is non-degenerate or when the initial state is pure.

\begin{figure}[t]
    \centering
    \begin{tikzpicture}[scale=0.45]
    \draw[thick,<->] (0,-4) -- (0,6);
    \draw[thick, -] (0,0) -- (8,0);
    \draw[-, dashed] (8,0) -- (12,0);
    \draw[thick, ->] (12,0) -- (21,0);
    \draw [ultra thick, C2] plot [smooth, tension=1] coordinates { (0,2) (1.1,5) (2,0) (3,0.3) (4,0) (5,0.1) (6,0) (7,-0.2) (8,0) (9,0)};
    \draw [very thick, C2] plot [smooth, tension=1] coordinates { (12,0) (13,-0.1) (14,0.1) (15,0) (16,0.2) (17,0) (18,-3) (19,0) (20,0) (21,0)};
    \node[rotate=90, above, shift={(0.2,0)}] at (0,0.2) {\Large{$\langle{A}\rangle_{\rho(t)}-\langle{A}\rangle_\omega$}};
    \node[right] at (21,0) {\textsf{Time} $t$};
    \node[below] at (1,0) {$0$};
    \draw[-, dashed] (1,0) -- (1,5);
    \end{tikzpicture}
    \caption[Equilibration on average on expectation values]{\textbf{Equilibration on average on expectation values:} An observable $A$ on a space with state $\rho(t)$ equilibrates on average if its expectation value remains close to the one on equilibrium $\omega$ for most times; this is guaranteed whenever the fluctuations around equilibrium are suppressed, which occurs whenever the overlap of $\rho$ with the energy eigenstates is large.}
    \label{fig: Short equilibration}
\end{figure}
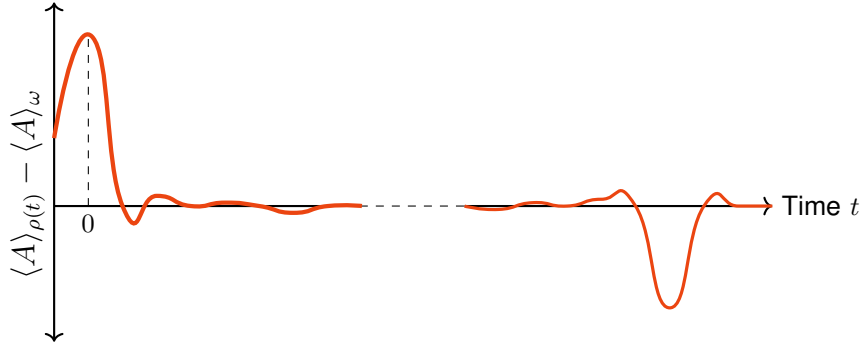

The extension of Eq.~\eqref{eq: Short infinite obs} to mixed states when the Hamiltonian is non-degenerate can be made by exploiting the fact that $|\rho_{nm}|^2\leq\rho_{nn}\rho_{mm}$, which follows from the positivity of $\rho$.\footnote{ Every principal $2\times2$ submatrix $\varrho_{ij}$ of $\rho$ is positive semi-definite~\cite{Horn_2012}, then for any such matrix, $\det(\varrho_{ij})=\rho_{ii}\rho_{jj}-|\rho_{ij}|^2\geq0$ from positivity of eigenvalues.} If the Hamiltonian is degenerate, $H=\sum_{n=1}^{\mathfrak{D}}{E}_nP_n$ with the $n$\textsuperscript{th} level having degeneracy $\ell$, but the initial state is pure, $\Psi=|\psi\rangle\!\langle\psi|$, we have
\begin{equation}
    |\psi(t)\rangle=\sum_{n=1}^{\mathfrak{D}}\sum_{j=1}^\ell\ex^{-itE_n}|n_j\rangle\!\langle{n}_j|\psi\rangle,
\end{equation}
so that the eigenbasis $\{|n_j\rangle\}$ for $H$ can be chosen such that the initial state $|\psi\rangle$ has an overlap with only a single eigenstate $|n_{j^\prime}\rangle$ for each energy level $E_n$. This means $|\psi\rangle$ will evolve as if acted on with a non-degenerate Hamiltonian $H^\prime=\sum_{n=1}^{\mathfrak{D}}E_n|n\rangle\!\langle{n}|$. In this case, it is clear that we still have $\tr(\omega^2)=d_\text{eff}^{-1}\left(\Psi\right)$.

More generally, if we consider both mixed states and degenerate energies, the general form of Eq.~\eqref{eq: Short infinite obs} can be seen to follow by purification~\cite{Short_2011}. That is, given $\rho\in\$(\mscr{H})$, take $\Psi\in\$(\mscr{H}\otimes\mscr{H})$ evolving under $\tilde{H}=H\otimes\mbb1$, i.e. with the ancilla evolving trivially, thus not affecting the spectrum or degeneracies of the original system. Then we have $\tr[A\rho]=\tr[(A\otimes\mbb1_{\mscr{H}})\Psi]$, i.e. the expectation values of $A$ on the original system coincide with those of $A\otimes\mbb1$ and also $\|A\otimes\mbb1\|=\|A\|$, i.e. the maximum singular value of $A$ on the original system also coincides with that of $A\otimes\mbb1$. Finally $d_\text{eff}(\Psi)=d_\text{eff}(\rho)$, as the full trace can be split as a composition of partial traces, i.e. $\tr_{\tilde{\mscr{H}}}=\tr_{\mscr{H}_1}\circ\tr_{\mscr{H}_2}=\tr_{\mscr{H}_2}\circ\tr_{\mscr{H}_1}$. Notice, however, that even though $\tr(\omega^{\prime\,2})=d_\text{eff}^{-1}(\rho)$, the purities do not coincide, and in fact $\tr(\omega^{\prime\,2})\geq\tr(\omega^2)$.

The result in Eq.~\eqref{eq: Short infinite obs} can thus more generally be written as
\begin{equation}
    \overline{|\langle{A}\rangle_{\rho(t)}-\langle{A}\rangle_\omega|^2}\leq\f{\|A\|^2}{d_\text{eff}(\rho)},
    \label{eq: Short infinite deff}
\end{equation}
which highlights both the role of $d_\text{eff}(\rho)$ in determining whether a system will equilibrate or not, as well as that of $\|A\|$ as a scale term for how well the observable can tell between the time-evolving state and equilibrium. That is, $d_\text{eff}(\rho)$ tells us if the temporal fluctuations around equilibrium will be suppressed or not, whilst the norm $\|A\|$ can be further restricted to an experimentally reasonable resolution~\cite{Reimann_2008}. We can illustrate the whole idea of observable equilibration on average as in Fig.~\ref{fig: Short equilibration}.

The punchline of Eq.~\eqref{eq: Short infinite deff} is that expectation values of observables in closed systems will equilibrate whenever the total occupation of the energy eigenstates by the initial state is small, or equivalently, when the overlap of the initial state with every single energy eigenstate is large so that Eq.~\eqref{eq: deff inverse} is small. That the effective dimension is typically large can be argued for small subsystems of large whole systems and, in particular, it is normally expected for macroscopic systems just by the strikingly large size these have. Taking into account that a macroscopic system has $\sim\mc{O}(10^{23})$ degrees of freedom, it's conceivable that an experimentalist will only be able to prepare a state that overlaps significantly only a very few of them so that the composite will still look quite mixed across all energy eigenstates.

Even when the experimentalist can prepare the initial state with low uncertainty and the levels can be occupied extremely unequally, it can be argued that at realistic scales the effective dimension will remain very large~\cite{Reimann_2012}. Notice that Eq.~\eqref{eq: Short infinite deff} applies the same if the observable $A$ acts only on a subpart, \gls{syst}, of a larger composite \gls{syst-env} because $\|A_\mathsf{S}\otimes\mbb1_\mathsf{E}\|=\|A_\mathsf{S}\|$ and $\tr[A_\mathsf{S}\tr_E(X)]=\tr[(A_\mathsf{S}\otimes\mbb1_\mathsf{E})X]$. Moreover, with the mathematical concepts that will be introduced in the next section, it was shown in Ref.~\cite{Linden_2009} that the effective dimension is typically large, i.e. for any pure state chosen at random from a large subspace of a Hilbert space the probability for the effective dimension to be small is exponentially small in the subspace dimension.

While these arguments make the case for the effective dimension being typically large, it seems clear that in general it would not be efficient to compute such a quantity. In Ref.~\cite{Brandao_return} a feasible way to decide if this is the case is presented for $k$-local Hamiltonians, i.e. Hamiltonians acting on a lattice of quantum systems with $H=\sum{h}_i$, with $h_i$ acting on the nearest $k$ sites to the $i\textsuperscript{th}$ one. There it is shown that indeeed the effective dimension is typically large for states with exponentially decaying correlations, i.e. with $\rho$ such that $|\langle{A}{B}\rangle_\rho-\langle{A}\rangle_\rho\langle{B}\rangle_\rho|$ maximized over operators $A$, $B$, decays exponentially in the distance between the support of the sites on which $A$ and $B$ act. Similarly, an equivalence of equilibrium ensembles, i.e. a statement that different macroscopic descriptions of the thermal equilibrium state lead to the same predictions, was shown in Ref.~\cite{Muller2015, Brandao_2015, Tasaki_2018}.

Finally, we can make a statistically relevant statement that fully evokes the definition of equilibration on average for expectation values of observables, such as that in Ref.~\cite{Reimann_2008}, by making use of Chebyshev's inequality, which states that $\mbb{P}[|X-\mu|\geq\kappa\,\sigma]\leq\kappa^{-2}$, holding for any $\kappa>0$ with $X$ a random variable with mean $\mu$ and variance $\sigma^2$. By picking $\kappa=\|A\|\left(d_\text{eff}^{1/3}\sigma\right)^{-1}$ with $\sigma:=\overline{|\tr[{A}({\rho(t)}-\omega)]|^2}$, together with the bound in Eq.~\eqref{eq: Short infinite deff}, this leads to
\begin{equation}
    \mbb{P}_t\left[\left|\langle{A}\rangle_{\rho(t)}-\langle{A}\rangle_\omega\right|\geq\f{\|A\|}{d_\text{eff}^{1/3}}\right]\leq\f{1}{d_\text{eff}^{1/3}},
\end{equation}
which explicitly states that for any time $t\in\mbb{R}^+$ drawn uniformly at random, the expectation value of $A$ with respect to $\rho(t)$ will be close to that with respect to $\omega$ for most times, whenever the effective dimension is large. This also quantitatively captures the notion mentioned previously and sketched in Fig.~\ref{fig: Short equilibration} that the departures from equilibrium are rare.

\subsection{Fluctuations within a finite time}
\label{sec: fluctuations finite time}
While the previous results show that dynamical equilibration is attained under remarkably mild conditions, two clear questions remain: one relates to the restriction on non-degeneracy of energy gaps and the second regards the timescale on which equilibration occurs. While the non-degenerate gaps condition ensures fully interacting systems regardless of how these are partitioned, it is still a restrictive one that could leave out physically relevant Hamiltonians. The second question is implicit in the previous result, as, while it ensures that equilibration for expectation values \emph{will} take place, it says nothing about \emph{when} it will take place.

One of the most significant steps taken in this direction was that in Ref.~\cite{Short_finite}: the non-degenerate energy gap condition was relaxed to one quantifying the number of degenerate gaps, and the averaging time window was restricted to be finite. As per Eq.~\eqref{eq: Short infinite deff}, this is expected to lead to an upper-bound that gives a correction with respect to the number of degenerate gaps and the width of the time-averaging window. Let us begin by considering again a pure state $\Psi\in\$(\mscr{H})$, and as per the argument above, a degenerate Hamiltonian $H=\sum{E}_nP_n$ which, however, has its eigenbasis chosen so that $\Psi$ evolves as if it was doing so under a non-degenerate $H$. Then we now care about an analogous quantity to the variance above, but now with a uniform time-average within an interval of width $T$, i.e.
\begin{equation}
    \overline{|\langle{A}\rangle_{\rho(t)}-\langle{A}\rangle_\omega|^2}^T=\sum_{\substack{n \neq m \\ \nu \neq \mu}}^\mathfrak{D}\overline{\ex^{-it(E_m-E_\mu-E_n+E_\nu)}}^T\Psi_{mn}\Psi_{\nu\mu}A_{nm}A_{\mu\nu}^*,
\end{equation}
where the finite time-average is defined in Eq.~\eqref{eq: finite time avg}. Now a way to simplify notation into a more familiar one is to label the energy gaps with $\ell:=(m,n)$, $\lambda:=(\mu,\nu)$, so that
\begin{equation}
    M^{(T)}_{\ell\lambda}:=\overline{\ex^{it(\mc{E}_\ell-\mc{E}_\lambda)}}^T
    \qquad\text{where}\qquad
    \begin{matrix}
    \mc{E}_\ell=E_m-E_n\\
    \mc{E}_\lambda=E_\mu-E_\nu\end{matrix}\,,
\end{equation}
and $v_\ell:=v_{(m,n)}=\Psi_{mn}A_{nm}$, $v_\lambda:=v_{(\mu,\nu)}=\Psi_{\mu\nu}A_{\nu\mu}$. Then it becomes clear that
\begin{align}
    \overline{|\langle{A}\rangle_{\rho(t)}-\langle{A}\rangle_\omega|^2}^T&=\sum_{\ell,\lambda}v_{\lambda}^*\, M^{(T)}_{\lambda\ell}\,v_\ell\nonumber\\
    &\leq\|M\|\|v\|_2^2=\|M\|\sum_{n\neq{m}}^\mathfrak{D}\Psi_{mn}A_{nm}\Psi_{nm}A_{mn}^*\nonumber\\
    &\leq\f{\|M\|\|A\|^2}{d_\text{eff}(\rho)},
\end{align}
where the second line is equivalent to $v^\dg{M}v\leq\lambda\|v\|_2^2$ for the Hermitian matrix $M$ with components $M_{\lambda\ell}^{(T)}$ and maximum singular value $\lambda$, and where in the third line $\|\cdot\|_2$ is the Schatten 2-norm defined in Eq.~\eqref{eq: Schatten p-norm}, with all remaining steps following as in Eq.~\eqref{eq: Short infinite deff}. This is effectively a correction to Eq.~\eqref{eq: Short infinite deff}, which reduces to it when $H$ has no degenerate energy gaps and when the infinite time limit is taken.

The components of $M$ are explicitly given by
\begin{equation}
    M_{\ell\lambda}^{(T)}=\begin{cases}1&\text{if}\qquad\mc{E}_\ell=\mc{E}_\lambda\\
    \displaystyle{\f{\exp[iT(\mc{E}_\ell-\mc{E}_\lambda)]-1}{iT(\mc{E}_\ell-\mc{E}_\lambda)}}&\text{otherwise}
    \end{cases},
    \label{eq: components M finite average}
\end{equation}
and we can use the hierarchy of Schatten norms in Eq.~\eqref{eq: Schatten hierarchy} to get
\begin{equation}
    \|M\|\leq\max_{\lambda}\sum_{\ell=1}^{\mathfrak{D}(\mathfrak{D}-1)} |M_{\ell\lambda}|,
    \label{eq: initial bound M finite average}
\end{equation}
where from Eq.~\eqref{eq: components M finite average} it follows that $|M_{\ell\lambda}|\leq1$, as all components have modulus less than one. Now the second key definition is to let $N(\varepsilon)$ be the maximum number of energy gaps in any interval of size $\varepsilon>0$,
\begin{equation}
    N(\varepsilon):=\max_E\{\ell:\ell\in\mscr{E},\,\mc{E}_\ell\in[E,\,E+\varepsilon]\},
    \label{eq: N(epsilon)}
\end{equation}
where we defined the set of labels $\mscr{E}:=\{(n,m):n,m\in\{1,2,\ldots,\mathfrak{D}\},n\neq{m}\}$. The maximum degeneracy of any energy gap is given by $\mathfrak{G}_\mc{E}:=\lim_{\varepsilon\to0^+}N(\varepsilon)$, with the non-degenerate case corresponding to $\mathfrak{G}_\mc{E}=1$. This implies that there are at most $N(\varepsilon)$ energy gaps such that
\begin{equation}
    (k-1/2)\varepsilon\leq\mc{E}_\ell-\mc{E}_\lambda<(k+1/2)\varepsilon,
    \label{eq: energy gaps with k}
\end{equation}
for any non-zero integer $k$, otherwise just taking $|M_{\ell\lambda}|\leq1$. Then Eq.~\eqref{eq: initial bound M finite average} can be bounded by,
\begin{equation}
    \|M\|\leq N(\varepsilon)\left\{1+2\sum_{k=1}^{\mathfrak{D}(\mathfrak{D}-1)/2}\f{2}{T(k-1/2)\varepsilon}\right\},
\end{equation}
where the first term comes from the $k=0$ contribution and the second term uses Eq.~\eqref{eq: energy gaps with k}; the sum is maximised by having as many values with small $|k|$ as possible, which gives the factor of $2$ at the front of the sum. Now the sum can be further bounded~\cite{Short_finite} so that
\begin{equation}
    \|M\|\leq{N}(\varepsilon)\left(1+\f{8\log_2\mathfrak{D}}{\varepsilon T}\right).
\end{equation}

This renders the final result for equilibration on average for expectation values of observables within a finite time interval:
\begin{theorem}[Observable equilibration in finite time~\cite{Short_2011}]\label{thm: Short finite time observable equilbration}
Given a quantum system in state $\rho(t)\in\$(\mscr{H})$ at time $t$, evolving via a time-independent Hamiltonian with $\mathfrak{D}$ distinct energies, then for any operator $A\in\mscr{B}(\mscr{H})$ and any energy $\varepsilon>0$ and time $T>0$,
\begin{equation}
    \overline{|\langle{A}\rangle_{\rho(t)}-\langle{A}\rangle_\omega|^2}^T\leq\f{\|A\|^2}{d_\text{eff}(\rho)}N(\varepsilon)\left(1+\f{8\log_2\mathfrak{D}}{\varepsilon{T}}\right),
    \label{eq: Short finite time}
\end{equation}
where $\|\cdot\|$ denotes largest singular value, and with inverse effective dimension $d_\text{eff}^{-1}(\varrho)$ and $N(\varepsilon)$, the maximum number of energy gaps in an interval or width $\varepsilon$, defined in Eq.~\eqref{eq: deff inverse} and in Eq.~\eqref{eq: N(epsilon)}, respectively.
\end{theorem}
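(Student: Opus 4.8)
The bulk of the argument has in fact been carried out in the preceding subsection; the plan is to assemble those pieces and to supply the reduction from the general case. First I would reduce to a \emph{pure} initial state evolving under an effectively non-degenerate Hamiltonian. Given an arbitrary $\rho\in\$(\mscr{H})$ and a possibly degenerate $H=\sum_nE_nP_n$, purify $\rho$ to a pure $\Psi\in\$(\mscr{H}\otimes\mscr{H})$ and evolve it under $\tilde{H}=H\otimes\mbb1$, so that the ancilla changes neither the set of distinct energies $E_n$ nor the energy-gap structure (hence neither $\mathfrak{D}$ nor $N(\varepsilon)$); choosing the eigenbasis inside each degenerate block so that $|\psi\rangle$ overlaps a single eigenvector per energy level makes $\Psi$ evolve exactly as if under a non-degenerate Hamiltonian with $\mathfrak{D}$ distinct energies. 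Since $\langle A\rangle_{\rho(t)}=\langle A\otimes\mbb1\rangle_{\Psi(t)}$, $\|A\otimes\mbb1\|=\|A\|$, and $d_\text{eff}(\Psi)=d_\text{eff}(\rho)$ (the full trace factoring as a composition of partial traces), it suffices to prove the inequality in this simplified setting.

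With $\Psi$ pure and $H$ effectively non-degenerate, I then expand the finite-time variance exactly as in the displayed computation: introducing the gap labels $\ell=(m,n)$, $\lambda=(\mu,\nu)$ with $m\neq n$, $\mu\neq\nu$, the coefficients $v_\ell=\Psi_{mn}A_{nm}$, and the Hermitian matrix $M^{(T)}_{\ell\lambda}=\overline{\ex^{it(\mc{E}_\ell-\mc{E}_\lambda)}}^T$ of Eq.~\eqref{eq: components M finite average}, the average becomes the Hermitian form $\sum_{\ell,\lambda}v_\lambda^{*}M^{(T)}_{\lambda\ell}v_\ell\leq\|M^{(T)}\|\,\|v\|_2^2$. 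I then bound $\|v\|_2^2$ by invoking positivity of $\Psi$ in the form $|\Psi_{mn}|^2=\Psi_{nn}\Psi_{mm}$ and applying the Cauchy--Schwarz and H\"{o}lder steps (with $a=1$, $b\to\infty$) exactly as in the passage that yields Eq.~\eqref{eq: Short infinite deff}, obtaining $\|v\|_2^2\leq\|A\|^2/d_\text{eff}(\rho)$. This leaves $\|M^{(T)}\|$ as the only quantity still to be controlled.

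Estimating $\|M^{(T)}\|$ is the heart of the proof and the step I expect to be the most delicate. I would first drop from the operator norm to the maximum absolute row sum, $\|M^{(T)}\|\leq\max_\lambda\sum_\ell|M^{(T)}_{\ell\lambda}|$, valid for the Hermitian $M^{(T)}$ with $M^{(T)}_{\ell\ell}=1$ (e.g.\ by a Gershgorin-type bound). Then, for fixed $\lambda$, I partition $\mbb{R}$ into the windows $[(k-\f{1}{2})\varepsilon,(k+\f{1}{2})\varepsilon)$, $k\in\mbb{Z}$, and sort the gap-differences $\mc{E}_\ell-\mc{E}_\lambda$ accordingly: by the definition of $N(\varepsilon)$ each such window contains at most $N(\varepsilon)$ labels $\ell$. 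On the central window ($k=0$) I use the trivial $|M^{(T)}_{\ell\lambda}|\leq1$; on the window of index $k\neq0$ I use $|M^{(T)}_{\ell\lambda}|=\left|\f{\ex^{iT(\mc{E}_\ell-\mc{E}_\lambda)}-1}{iT(\mc{E}_\ell-\mc{E}_\lambda)}\right|\leq\f{2}{T|k-1/2|\varepsilon}$. Since there are at most $\mathfrak{D}(\mathfrak{D}-1)$ gap labels in all, $|k|$ ranges over at most $\mathfrak{D}(\mathfrak{D}-1)/2$ positive values, giving $\|M^{(T)}\|\leq N(\varepsilon)\left(1+2\sum_{k=1}^{\mathfrak{D}(\mathfrak{D}-1)/2}\f{2}{T(k-1/2)\varepsilon}\right)$; the remaining harmonic-type sum is of order $\log\mathfrak{D}$ and, following Ref.~\cite{Short_finite}, is bounded so that the bracket becomes $1+\f{8\log_2\mathfrak{D}}{\varepsilon T}$. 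Multiplying by $\|v\|_2^2\leq\|A\|^2/d_\text{eff}(\rho)$ gives Eq.~\eqref{eq: Short finite time}. As a consistency check, letting $T\to\infty$ and $N(\varepsilon)\to1$ recovers the infinite-time, non-degenerate-gap bound in Eq.~\eqref{eq: Short infinite deff}.
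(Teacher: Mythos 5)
Your proposal is correct and follows essentially the same route as the paper: reduction to a pure state evolving under an effectively non-degenerate Hamiltonian (with the mixed-state case handled by purification), the gap-labelled quadratic form $v^\dg M^{(T)} v\leq\|M^{(T)}\|\|v\|_2^2$ with $\|v\|_2^2\leq\|A\|^2/d_\text{eff}(\rho)$, and the row-sum/window-counting bound on $\|M^{(T)}\|$ giving $N(\varepsilon)\left(1+\f{8\log_2\mathfrak{D}}{\varepsilon T}\right)$. The only cosmetic difference is that you justify the row-sum bound via a Gershgorin-type argument where the paper cites the norm hierarchy, which changes nothing of substance.
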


It is entirely clear now that Eq.~\eqref{eq: Short finite time} reduces to Eq.~\eqref{eq: Short infinite deff} in the non-degenerate energy gaps case, $\mathfrak{G}_\mc{E}:=\lim_{\varepsilon\to0^+}N(\varepsilon)=1$ together with the infinite time-window limit $T\to\infty$. Thus the previous conditions on the effective dimension $d_\text{eff}$ as well as the resolution of $A$ for equilibration to occur remain, while now the correction factor is also required to be small. The first term only requires a low energy gap degeneracy, as opposed to restrict to no gap degeneracy at all, while the second already touches upon one of the most relevant problems for equilibration, namely determining a relevant timescale within which it will occur.

The biggest issue with having equilibration on average in an infinite time-average window is self evident in that equilibration could take the age of the universe to manifest, and while the prediction for its occurrence would still be correct, it would nevertheless be meaningless. By inspecting Eq.~\eqref{eq: Short finite time}, a time-averaging window of width
\begin{equation}
    T\gtrsim\f{\log_2\mathfrak{D}}{\varepsilon},
    \label{eq: Short timescale}
\end{equation}
will give equilibration provided the remaining quantities in the bound are small, becoming of the order of the infinite-time-average bound in Eq.~\eqref{eq: Short infinite deff}. Notice the minimum $\varepsilon$ is not an ideal choice: while this parameter can be picked arbitrarily, how large both $N(\varepsilon)$ and the bound in Eq.~\eqref{eq: Short timescale} have to be taken into account. This is problematic as it leads to a dependence in the system size and could be at odds with recurrence times~\cite{Hemmer_1958, Bhattacharyya_recurrence,Wallace_recurrence,Gimeno_2017}.

While this result offers an insight into the equilibration time scale problem, the question is far from settled as opposed to that about dynamical equilibration. While a general upper bound is much sought after, with promising candidates as in Ref.~\cite{GarciaPintos_2017,deOliveira_2018}, where it is claimed that for physical observables one such bound is independent of the system size,\footnote{ See too, however, the much recent Ref.~\cite{Heveling_2020}} the problem has so far proved too complex in general and only progress in some classes of systems has been made.\footnote{ A review of equilibration timescales and the classes of systems where results have been obtained can be found in Ref.~\cite{Wilming2018}.} Moreover, even performing classical simulations is out of reach to directly approach this, since it has to be done on large systems and long timescales. This is thus one of the most relevant open problems in equilibration to this day, and as a consequence is out of the scope of this thesis.

\subsection{Trace distance equilibration}\label{sec: tr dist equilibration}
The main issue that arises from approaching equilibration on average in terms of expectation values of observables is that, even when the expectation values with respect to two different states can be equal, this does not necessarily mean that observations cannot distinguish the states. This can be made explicit with an example given in Ref.~\cite{Short_2011}, where an observable yields an equal mixture of $-1$ and $+1$ outcomes with respect to one state and always $0$ with respect to another state, so that a measurement of the observable will \emph{always} distinguish them, despite their expectation values being identical. This can be phrased alternatively as saying that, even though an observable might not distinguish a pair of states for most times, this does not imply that for most times it can not distinguish them.

These issues can be addressed by considering the distinguishability between two quantum states, which can be phrased in terms of quantifying an operationally relevant distance measure between such pair of states, in our case between a time-evolved state and the equilibrium state. We thus follow Ref.~\cite{Short_2011, Short_finite} and employ the trace distance $D$, defined in Eq.~\eqref{eq: trace distance def}, between $\rho(t)$ and $\omega$, and so we say that a system equilibrates on average if $\overline{{D}(\rho(t),\omega)}^T\ll1$ over a finite time-window of width $T$. Some of the main results on this front are the ones in Ref.~\cite{Linden_2009,Short_2011, Short_finite} under the very same conditions that we specify in Theorem~\ref{thm: Short finite time observable equilbration} previously for observable equilibration; furthermore, the proof for observable equilibration can be used in a straightforward way to obtain
\begin{equation}
    \overline{{D}_\mbb{M}(\rho(t),\,\omega)}^T\leq\f{\varkappa(\mbb{M})}{4\sqrt{d_\text{eff}}}\sqrt{N(\varepsilon)f(\varepsilon T)},
\end{equation}
where we define
\begin{equation}
    f(\varepsilon T):=1+\f{8\log_2\mathfrak{D}}{\varepsilon T},
    \label{eq: Short def f(epsilon T)}
\end{equation}
and where where $\varkappa(\mbb{M})$ is the number of possible measurement outcomes in a finite subset of possible measurements $\mbb{M}$, each with a finite set of outcomes, and the remaining quantities as in Theorem~\ref{thm: Short finite time observable equilbration}. This result follows specifically because
\begin{align}
    \overline{{D}_\mbb{M}(\rho(t),\,\omega)}^T&\leq\f{1}{2}\sum_{i,\{\mathrm{M}_i\}\in\mbb{M}}\overline{|\tr[\mathrm{M}_i(\rho(t)-\omega)]|}^T\nonumber\\
    &\leq\f{1}{2}\sum_{i,\{\mathrm{M}_i\}\in\mbb{M}}\sqrt{\overline{|\tr[\mathrm{M}_i(\rho(t)-\omega)]|^2}^T},
    \label{eq: Short dist POVM}
\end{align}
where in the first line we upper bounded the definition over the maximum with a sum over all \gls{POVM}s in $\mbb{M}$ and in the second line we used Jensen's inequality,\footnote{ $\mbb{E}[f(X)]\leq{f}(\mbb{E}[X])$ with expectation $\mbb{E}$, a random variable $X$ and a concave function $f$.} with the remaining steps as in the derivation for Eq.~\eqref{eq: Short finite time}, where it is furthermore argued in Ref.~\cite{Short_2011} that $\sum_{i,\{\mathrm{M}_i\}\in\mbb{M}}\|\mathrm{M}_i\|\leq\varkappa(\mbb{M})/2$. The previous conditions for equilibration now remain, with the only difference being the term $\varkappa(\mbb{M})$, which ought only to be compared with $d_\text{eff}$. If we again consider the discussion above for realistic, macroscopic systems with $\mc{O}(10^{23})$ degrees of freedom---such that $d\simeq10^{10^{23}}$---even with e.g. $d_\text{eff}\leq{d}^{1/100}$, the number of possible outcomes $\varkappa(\mbb{M})$ will remain small compared to the effective dimension, despite it being in itself very possibly a large quantity.

Finally, a bound on the distinguishability with the trace distance was also obtained in Ref.~\cite{Linden_2009,Short_2011, Short_finite} for a subsystem  of a larger system-environment composite. Now the interest is on the distinguishability between the reduced states $\rho_\mathsf{S}(t):=\tr_\mathsf{E}(\rho(t))$ and $\omega_\mathsf{S}:=\tr_\mathsf{E}(\omega)$. The upper bound, with a setup as in Theorem~\ref{thm: Short finite time observable equilbration} on a composite Hilbert space $\mscr{H}_\mathsf{S}\otimes\mscr{H}_\mathsf{E}$ reads
\begin{equation}
    \overline{{D}(\rho_\mathsf{S}(t),\,\omega_\mathsf{S})}^T\leq\f{d_\mathsf{S}}{2\sqrt{d_\text{eff}}}\sqrt{N(\varepsilon)f(\varepsilon T)},
    \label{eq: Short trace dist finite}
\end{equation}
which, despite being very similar, cannot be obtained from Eq.~\eqref{eq: Short dist POVM} as the number of all possible measurements is infinite. It can, however, be obtained as well from Theorem~\ref{thm: Short finite time observable equilbration}; the key step in the proof in Ref.~\cite{Short_2011} is taking an orthonormal basis in \gls{syst} given by $d_\mathsf{S}^2$ operators $\{F_i\}$ such that $\rho(t)-\omega=\sum_i\lambda_i(t)F_i$. The explicit form of the operators is reminiscent of a Fourier transform and can be seen in Ref.~\cite{Short_2011}. However, we point out that these are non-Hermitian, so that when Theorem~\ref{thm: Short finite time observable equilbration} is applied, it is relevant that it applies to general linear bounded operators $A\in\mscr{B}(\mscr{H})$; the other relevant property is that $F_iF_j^\dg=F_i^\dg{F}_j=\delta_{ij}\mbb{1}_S/d_\mathsf{S}$, so that also $\|F_\mathsf{S}^\dg\otimes\mbb1_\mathsf{E}\|^2=1/d_\mathsf{S}$. With this, we have
\begin{align}
    \overline{\|\rho_\mathsf{S}(t)-\omega_\mathsf{S}\|_1}^T
    &\leq\sqrt{d_\mathsf{S}}\,\overline{\|\rho_\mathsf{S}(t)-\omega_\mathsf{S}\|_2}^T\nonumber\\
    &\leq\sqrt{d_\mathsf{S}}\sqrt{\sum_{i,j}\overline{\lambda_i(t)\lambda_j^*(t)}^T\tr[F_iF_j^\dg]}\nonumber\\
    &=\sqrt{d_\mathsf{S}}\sqrt{\sum_{i}\overline{|\lambda_i(t)|^2}^T}\nonumber\\
    &=\sqrt{d_\mathsf{S}}\sqrt{\sum_{i}\overline{|\tr\{[\rho(t)-\omega](F_i^\dg\otimes\mbb{1}_\mathsf{E})\}|^2}^T},
\end{align}
where in the first line the inequality $\|X\|_1\leq\sqrt{\mathrm{dim}(X)}\|X\|_2$ was used\footnote{ Let $X$ a matrix of dimension $n\times{n}$ and let $\{x_i\}$ be its eigenvalues, then by convexity of the square, ${\|X\|_1^2=n^2(\sum|x_i|/n)^2\leq{n}\sum|x_i|^2=n\|X\|_2^2}$.}, followed by Jensen's inequality for the square root in the second; the result then follows by applying Theorem~\ref{thm: Short finite time observable equilbration} with the operator $A=F_i^\dg\otimes\mbb{1}_\mathsf{E}$. The result in Eq.~\eqref{eq: Short trace dist finite} implies equilibration in small subsystems \gls{syst}, i.e. whenever we can access only a small subpart of the full system-environment composite with any observable no matter how exotic or unrealistic.

The result in Eq.~\eqref{eq: Short finite time} is remarkable in its generality and its reach, as it implies that small subsystems will equilibrate on average within some finite time-window with respect to any measurement we wish to make.

Here we have presented some of the main results for equilibration on average for expectation values and briefly discussed the distinguishability approach; however, a comprehensive discussion of dynamical equilibration with other approaches and further intersections with other topics related to the foundations of statistical mechanics can be found e.g. in Ref.~\cite{Gogolin_2011, Yukalov_2011, Wilming2018, deOliveira_2018}.

We now continue to discuss the second big question we posed at the beginning, namely, how the fundamental postulate of statistical mechanics is justified given only the rules of quantum mechanics.

\section{Typicality}\label{sec: state typicality}
While we came near to touching upon the emergence of the fundamental postulate of statistical mechanics around Eq.~\eqref{eq: von Neumann entropy}, with the maximum entropy principle satisfied by the equilibrium state, this has not been fully justified just yet. The approach of equilibrium on average can be classified as a dynamical one, in the sense that it tells us \emph{how} quantum states can evolve towards equilibrium and stay close to it for most times. However, the backbone of the quest to understand equilibration is precisely a notion of equilibrium itself and how it is justified. Furthermore, from concepts like the zeroth law of thermodynamics, the canonical ensemble of statistical mechanics, or the maximum entropy principle, this equilibrium seems to be the most likely state in most cases.

This is precisely the notion that is understood when one speaks of typicality, and it is the kind of notion that Schr\"{o}dinger and von Neumann had in mind when first trying to reconcile statistical mechanics with quantum mechanics. In Ref.~\cite{Schrodinger} (translated in Ref.~\cite{schrodinger2003}), Sch\"{o}dinger studied a pair of systems that are weakly coupled for long times, and found that by assuming a certain proportionality of the population of the initial state in the energy levels with respect to the degeneracy in the non-interacting levels---which he dubbed a \emph{statistical hypothesis}---the reduced states of a small subsystem are well described by a thermal state. Similarly, in Ref.~\cite{vonNeumann} (translated in Ref.~\cite{vonNeumann2010}), von Neumann presents his \emph{quantum ergodic theorem}, stating in essence that for non-degenerate, non-resonant  (i.e. with no degenerate gaps) Hamiltonians, for most decompositions of the Hilbert space\footnote{ Detail and other informal discussion can be seen in Ref.~\cite{Normal_typicality}.} and all initial states, for most times the evolving state of the system is macroscopically indistinguishable from a suitable microcanonical state.

This is in the very same spirit as most of the works that approached this question many years later. Some prominent examples are Ref.~\cite{Lloyd_1988, Goldstein_2006, Gemmer_2009}. Notice that even in the work of von Neumann, despite the statement resembling one about equilibration, it is really one about the suitable partitioning of the Hilbert space, holding for most of them under some assumptions about the Hamiltonian but for all initial states. In this thesis we will focus mainly on the result in Ref.~\cite{Popescu2006}, which not only addresses the question of typicality but resolves to a large extent the question about the emergence of the fundamental postulate of a-priori probabilities purely from quantum mechanical laws. To do this we first need to specify what does it mean for some property to hold for most quantum states, or even more so, what it means to sample a state at random and seeing that some property for it then holds.

\subsection{Random states and the Haar measure}\label{sec: Random states and Haar}
To discuss the issue of sampling of quantum states, we should first discuss the geometry of the set of quantum states and how to place a probability measure on it.

As mentioned in Section~\ref{sec: quantum states and measurements}, we are concerned here only with studying finite quantum systems which can be associated with the complex Hilbert space $\mscr{H}\cong\mbb{C}^d$. A pure state is defined by a vector $|\psi\rangle\in\mscr{H}$ such that $\langle\psi|\psi\rangle=1$. Hence it follows that any other vector $|\varphi\rangle$ can be obtained by a unitary transformation $U|\psi\rangle$ where $U$ is a unitary matrix of dimension $d$, i.e. such that $UU^\dg=U^\dg{U}=\mbb1$. That is, we can define $|\varphi\rangle=U|\psi\rangle$, which is such that $\langle\varphi|\varphi\rangle=1$, and, similarly, we can see that for the state $\Phi=|\varphi\rangle\!\langle\varphi|$, Hermiticity and positivity,
 are satisfied, making $\Phi$ a valid pure quantum state. This can be understood intuitively as well if we consider that $\mbb{C}^d$ is isomorphic to $\mbb{R}^{2d}$, i.e. any vector $|\psi\rangle\in\mbb{C}^d$ can be thought of as a real vector in $\mbb{R}^{2d}$. Furthermore, all of pure states form a $2d-1$ dimensional sphere\footnote{ More precisely, a set of equivalence classes of points of the sphere, as the global phase means that there is a circle of points on the sphere for each distinct pure state.} $\mbb{S}^{2d-1}=\{v\in\mbb{R}^{2d}:\|v\|_2=1\}$, and so we can think of unitary operators acting on these as rotations. A similar discussion can be made for mixed states e.g. by writing any $\rho\in\$(\mscr{H})$ as a convex combination of pure states, although the geometrical analogy is not quite the same. Details can be found e.g. in Ref.~\cite{bengtsson2006geometry}. For our purposes we only need to have clear that quantum states can always be related by a unitary transformation.

Sampling quantum states \emph{at random} then can be seen to be induced by the sampling of unitaries at random. To do this we need a probability measure on the $d$ dimensional unitary group,\footnote{ A group is a set together with a binary operation satisfying axioms of closure, associativity, identity and invertibility. We take this as a standard concept but present all relevant detail in Appendix~\ref{appendix - Haar measure}.}
\begin{equation}
    \mbb{U}(d)=\{U\in\mbb{C}^{d\times{d}}:UU^\dg=U^{\dg}U=\mbb{1}\},    
\end{equation}
where $\mbb{C}^{d\times{d}}$ denotes complex square matrices of dimension $d$. Now, a natural way to place a probability measure on the unitary group is to demand that it is an invariant measure in the sense that it should not change under fixed unitary shifts, i.e. it should assign the same measure to all of $\mbb{U}(d)$.

To begin with, consider the reals: we can measure the size of an interval $\mbb{I}=[a,b]\subset\mbb{R}$ of the real line by its length $\ell(\mbb{I})=b-a$. Invariance in this case means that $\ell(\mbb{I}+c)=\ell(\mbb{I})$ for any $c\in\mbb{R}$. Now, the length of the interval is simply $\ell(\mbb{I}):=\int_a^bdx$, so more generally, for any subset $\mbb{L}\subset\mbb{R}$ we can measure its size by $\mu_\ell(\mbb{L}):=\int_\mbb{L}dx$, and we also have
\begin{equation}
    \mu_\ell(\mbb{L}+c)=\int_{\mbb{L}+c}\,dx=\int_\mbb{L}\,d(x+c)=\int_\mbb{L}\,dx=\mu_\ell(\mbb{L}),
\end{equation}
and it can be said that $\mu_\ell$ is a translation-invariant measure of the additive group $(\mbb{R},+)$.

This property can then be extended to the unitary group $\mbb{U}(d)$ so that for a given subset $\mbb{W}\subseteq\mbb{U}(d)$ and any $V\in\mbb{U}(d)$ we define
\begin{equation}
    \muhaar(\mbb{W})=\int_\mbb{W}d\muhaar(U),
\end{equation}
as the Haar measure,\footnote{ The Haar measure is named after Alfr\'{e}d Haar, who introduced such invariant measure in 1932 more generally over locally compact groups, allowing an analogue of Lebesgue integrals, such as in the cited example over the additive real group.} $\muhaar$, of the set $\mbb{W}$, which satisfies the left-right invariance property
\begin{align}
   \muhaar(\mbb{W})&=\muhaar(V\mbb{W})=\int_\mbb{W}d\muhaar(VU)=\int_\mbb{W}d\muhaar(UV)=\muhaar(\mbb{W}V).
\end{align}

Now what exactly does this mean? Remember the analogy of rotations on real space, then the Haar measure implies that if we measure a ``cap'' of the sphere and rotate it we get exactly the same measure. Similarly, it means that all unitaries have the same measure, so that it does not matter if we rotate to any other subset of the unitary group, we still get the same answer.

How are we to practically use the Haar measure to actually \emph{measure} the size of a subset of unitaries? It turns out that this can be done relatively easy on a computer~\cite{Mezzadri}, as we detail in Appendix~\ref{appendix - numerics Almost} (this was done for the results of Ref.~\cite{FigueroaRomero2019almostmarkovian}, which we discuss in Chapter~\ref{sec:typicality}). However, it is an elaborate task in the sense that we need to parametrize the unitary matrix entering $d\muhaar$, and there are $d^2$ independent real parameters in a unitary matrix,\footnote{ A complex square matrix of dimension $d$ has $2d^2$ real parameters, and unitarity $UU^\dg=\sum{U}_{ij}U^*_{\ell{j}}|i\rangle\!\langle\ell|=\mbb1$ imposes $\sum_j {U}_{ij}U^*_{\ell{j}}=\delta_{i\ell}$ which gives $d^2$ real constraints.} so that
\begin{align}
    d\muhaar(U)=f(\theta_1,\ldots,\theta_{d^2})\,d\theta_1\,\cdots\,d\theta_{d^2},
\end{align}
with $\{\theta_1,\ldots,\theta_{d^2}\}$ the parameters of $U$ acting as a set of local real coordinates on the manifold\footnote{ A manifold is a topological space for which every point has a neighborhood which is Euclidean.} described by $\mbb{U}(d)$ embedded in $\mbb{R}^{2d^2}$ and with a given \emph{probability density function}\footnote{ We define precisely a probability density function in Chapter~\ref{sec:equilibration}. See Ref.~\cite{Zyczkowski_1994} for an explicit construction.} $f$.

Finally, the Haar measure can be seen to be unique up to a multiplicative constant~\cite{halmos2013measure}, i.e. for any two Haar measures $\mu_\haar$ and $\mu_\haar^\prime$, we can relate these as $\mu_\haar=\alpha\mu_\haar^\prime$ for some constant $\alpha\in\mbb{R}^+$. The Haar measure is finite and positive, so we can make this a legitimate probability measure by normalizing it as $\muhaar(\mbb{U})=1$. In general, this property holds for groups known as \emph{compact} (closed and bounded). However, here we are only concerned with the unitary group, which satisfies such property. This means then that we can say that we \emph{sample a unitary matrix} uniformly at random whenever we take a unitary matrix that is distributed according to the Haar measure, and we denote this by $U\sim\muhaar$.

\subsection{Twirling and the Schur-Weyl duality}
We can now wrap up the previous discussion and connect it with sampling a pure quantum state at random simply by sampling a unitary from the Haar measure. Ultimately what we care about are the statistical properties of random quantum states, or more general quantities depending on unitaries. Specifically, we will rely on the moments of the unitary group to compute statistical properties of quantities relying on compositions of unitaries, such as the purity of a reduced state.

While we will later be concerned with higher order moments of the unitary group with respect to the Haar measure, it is instructive to consider here the calculation of the first and second moments of the unitary group over the Haar measure. For this we can rely on the following.

\begin{theorem}[Schur-Weyl duality~\cite{Roberts2017}]
\label{Thm: Schur-Weyl}
Any operator $\mc{O}$ acting on $\mscr{H}^{\otimes{n}}:=\bigotimes_{i=1}^n\mscr{H}_n$ commutes with all operators $V^{\otimes{n}}$, where $V\in\mbb{U}(d)$, if and only if $\mc{O}$ is a linear combination of permutation operators:
\begin{equation}
    [\mc{O},V^{\otimes{n}}]=0,\,\forall\,{V}\in\mbb{U}(d)\qquad
    \Longleftrightarrow\qquad
    \mc{O}=\sum_{\sigma\in\mathfrak{G}_n}c_\sigma\wp_\sigma,
\end{equation}
with $\mathfrak{G}_n$ denoting the symmetric group over $\{1,2,\ldots,n\}$ (i.e. the set of permutation operations that can be performed on $n$ symbols), and where the permutation operator is defined as
\begin{equation}
    \wp_\sigma|v_1,\ldots,v_n\rangle=|v_{\sigma(1)},\ldots,v_{\sigma(n)}\rangle,
\end{equation}
for any $|v_1,\ldots,v_n\rangle\in\mscr{H}^{\otimes{n}}$.
\end{theorem}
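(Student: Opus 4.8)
The plan is to prove the two implications separately, the easy one first. For the ``$\Longleftarrow$'' direction, suppose $\mc{O}=\sum_{\sigma\in\mathfrak{G}_n}c_\sigma\wp_\sigma$; it suffices to check that each permutation operator $\wp_\sigma$ commutes with $V^{\otimes n}$ for every $V\in\mbb{U}(d)$, since commutation is linear in $\mc{O}$. This is immediate from the definitions: acting on a product vector $|v_1,\ldots,v_n\rangle$ one has $\wp_\sigma V^{\otimes n}|v_1,\ldots,v_n\rangle=\wp_\sigma|Vv_1,\ldots,Vv_n\rangle=|Vv_{\sigma(1)},\ldots,Vv_{\sigma(n)}\rangle=V^{\otimes n}\wp_\sigma|v_1,\ldots,v_n\rangle$, and since product vectors span $\mscr{H}^{\otimes n}$ the identity $[\wp_\sigma,V^{\otimes n}]=0$ follows on all of $\mscr{H}^{\otimes n}$. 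No unitarity of $V$ is even needed here; it holds for all $V\in\mbb{C}^{d\times d}$.

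The substantive direction is ``$\Longrightarrow$''. Here I would invoke the standard double-commutant / isotypic-decomposition argument. Let $\mc{A}\subseteq\mscr{B}(\mscr{H}^{\otimes n})$ be the algebra generated by $\{V^{\otimes n}:V\in\mbb{U}(d)\}$ and let $\mathcal{P}\subseteq\mscr{B}(\mscr{H}^{\otimes n})$ be the algebra spanned by $\{\wp_\sigma:\sigma\in\mathfrak{G}_n\}$. Both are $*$-closed (the former because $(V^{\otimes n})^\dg=(V^\dg)^{\otimes n}$ and $V^\dg\in\mbb{U}(d)$, the latter because $\wp_\sigma^\dg=\wp_{\sigma^{-1}}$), hence by von Neumann's double commutant theorem each equals its own double commutant. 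The hypothesis $[\mc{O},V^{\otimes n}]=0$ for all $V\in\mbb{U}(d)$ says precisely that $\mc{O}\in\mc{A}'$, the commutant of $\mc{A}$. So the claim reduces to showing $\mc{A}'=\mathcal{P}$. One inclusion, $\mathcal{P}\subseteq\mc{A}'$, is exactly the ``$\Longleftarrow$'' computation above. For the reverse inclusion $\mc{A}'\subseteq\mathcal{P}$, equivalently (by double commutant) $\mathcal{P}'\subseteq\mc{A}$, I would argue that any operator commuting with all permutations $\wp_\sigma$ lies in the algebra generated by the $V^{\otimes n}$: decompose $\mscr{H}^{\otimes n}=\bigoplus_\lambda \mscr{H}_\lambda^{\mbb{U}}\otimes\mscr{H}_\lambda^{\mathfrak{G}_n}$ into $\mbb{U}(d)\times\mathfrak{G}_n$ isotypic components indexed by partitions $\lambda$ of $n$ with at most $d$ parts; an operator in $\mathcal{P}'$ acts as the identity on each $\mscr{H}_\lambda^{\mathfrak{G}_n}$ factor and arbitrarily on each $\mscr{H}_\lambda^{\mbb{U}}$ factor, and by irreducibility and inequivalence of the $\mbb{U}(d)$-representations $\mscr{H}_\lambda^{\mbb{U}}$ together with Burnside's theorem such operators are exactly spanned by $\{V^{\otimes n}\}$. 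Alternatively, one can avoid the full representation-theoretic machinery by a direct spanning/dimension-counting argument: show that $\mathrm{span}\{V^{\otimes n}:V\in\mbb{U}(d)\}$ equals $\mathrm{Sym}^n(\mscr{B}(\mscr{H}))$, the symmetric subspace of $\mscr{B}(\mscr{H})^{\otimes n}\cong\mscr{B}(\mscr{H}^{\otimes n})$ under the permutation action, by polarization of $V\mapsto V^{\otimes n}$; then the commutant of this symmetric algebra is spanned by the $\wp_\sigma$.

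The main obstacle I anticipate is the reverse inclusion $\mc{A}'\subseteq\mathcal{P}$: the ``easy'' direction is a one-line check, but showing that the commutant is \emph{no larger than} the span of permutations genuinely requires either (i) the irreducibility of the Schur–Weyl $\mbb{U}(d)$-modules plus Burnside/Jacobson density, or (ii) a careful polarization argument establishing that $\{V^{\otimes n}\}$ spans the full symmetric subalgebra. Given the level of the surrounding text, I would most likely present option (ii) in a compact form—noting that by polarizing the polynomial identity $V\mapsto V^{\otimes n}$ (valid since $\mbb{U}(d)$ spans $\mscr{B}(\mscr{H})$ as a complex vector space) one gets all symmetric tensors, and then computing the commutant of the symmetric subalgebra directly—while citing Ref.~\cite{Roberts2017} for the full details and the representation-theoretic background.
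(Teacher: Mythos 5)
There is no proof of this statement in the thesis to compare against: the theorem is imported as a known representation-theoretic fact, cited to Ref.~\cite{Roberts2017}, and used only as a tool to evaluate twirls. Your proposal therefore supplies what the paper omits, and its skeleton is the standard one: the ``$\Longleftarrow$'' direction by the one-line computation on product vectors (correct, and as you note unitarity is irrelevant there), and the ``$\Longrightarrow$'' direction by identifying the commutant of $\mc{A}=\mathrm{span}\{V^{\otimes n}\}$ with $\mathcal{P}=\mathrm{span}\{\wp_\sigma\}$ via the double commutant theorem, using that both are unital $*$-closed subalgebras of $\mscr{B}(\mscr{H}^{\otimes n})$.

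Two steps need repair before the hard direction is actually closed. In your option (i), the decomposition $\mscr{H}^{\otimes n}=\bigoplus_\lambda\mscr{H}^{\mbb{U}}_\lambda\otimes\mscr{H}^{\mathfrak{G}_n}_\lambda$ with irreducible, pairwise inequivalent factors \emph{is} Schur--Weyl duality, so invoking it makes the argument circular as written; it must instead be extracted from commutant theory for the finite group $\mathfrak{G}_n$ alone (Maschke plus double commutant), after which the identification with $\mathrm{span}\{V^{\otimes n}\}$ still has to be argued. In your option (ii), the parenthetical justification of the polarization step is insufficient: the fact that $\mbb{U}(d)$ spans $\mscr{B}(\mscr{H})$ as a vector space does not let you polarize $X\mapsto X^{\otimes n}$ within the unitary group, because polarization evaluates this map at linear combinations of unitaries, which are not unitary (indeed a finite spanning set $S$ shows spanning alone fails: $\mathrm{span}\{X^{\otimes n}:X\in S\}$ has dimension at most $|S|$, far below that of the symmetric subspace). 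The standard patch is analytic: the entries of $V^{\otimes n}$ are holomorphic polynomials in the entries of $V$, and a holomorphic polynomial vanishing on $\mbb{U}(d)$ vanishes on all of $\mathrm{GL}(d,\mbb{C})$, hence $\mathrm{span}\{V^{\otimes n}:V\in\mbb{U}(d)\}=\mathrm{span}\{X^{\otimes n}:X\in\mscr{B}(\mscr{H})\}$; \emph{then} polarization identifies this with the permutation-conjugation-invariant part of $\mscr{B}(\mscr{H})^{\otimes n}$, i.e.\ with $\mathcal{P}'$, and $\mc{A}'=\mathcal{P}''=\mathcal{P}$ follows. With those two repairs your outline is a complete and correct proof; as the thesis does, you could equally well leave the details to the cited literature.
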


For example, for $n=3$, there are $3!=6$ permutations and $\sigma$ denotes the assignment of each, e.g. the permutation $1\to2\to1$, $3\to3$ corresponds to $\sigma(1)=2$, $\sigma(2)=1$, $\sigma(3)=3$. This can also be denoted in so-called cycle notation simply as $(1,2)(3)$.

The Schur-Weyl duality is a result from representation theory and it carries deep consequences that go far beyond the results of this thesis; for our purposes, however, it allows us to compute integrals in a so-called \emph{twirled} map, defined as
\begin{gather}
    \Xi^{(n)}(X):=\int_{\mbb{U}(d)}{U}^{\otimes{n}}X({U}^{\otimes{n}})^\dg\,d\muhaar(U),
    \label{eq: SW def}
\end{gather}
 where $X\in\mscr{H}^{\otimes{n}}$. For now let us simply take this definition for the twirl as a map $\Xi:\mscr{H}^{\otimes{n}}\to\mscr{H}^{\otimes{n}}$. We will revisit this briefly in Section~\ref{sec: quantum channels}.
 
 It is straightforward to see that $\Xi^{(n)}(X)$ commutes with all $V^{\otimes{n}}$ by the invariant property of the Haar measure,
 \begin{align}
     V^{\otimes{n}}\Xi^{(n)}(X)&=\int_{\mbb{U}(d)}V^{\otimes{n}}U^{\otimes{n}}X(V^{\otimes{n}})^\dg\,d\muhaar(U)\nonumber\\
     &=\int_{\mbb{U}(d)}W^{\otimes{n}}X(W^{\otimes{n}})^\dg{V}^{\otimes{n}}\,d\muhaar(V^\dg{W})\nonumber\\
     &=\int_{\mbb{U}(d)}W^{\otimes{n}}X(W^{\otimes{n}})^\dg{V}^{\otimes{n}}\,d\muhaar(W)\nonumber\\
     &=\Xi^{(n)}(X)V^{\otimes{n}},
 \end{align}
where we defined $W=VU$ in the second line. Thus, the Schur-Weyl duality ensures that
\begin{align}
    \Xi^{(n)}(X)=\sum_{\sigma\in\mathfrak{G}_n}c_\sigma(X)\,\wp_\sigma,
    \label{eq: SW twirl}
\end{align}
where $c_\sigma$ is now a linear function of $X$.

\subsection{Average states and average purity}
The decomposition of the twirl map in Eq.~\eqref{eq: SW twirl} is quite relevant for our purposes because it lets us swiftly compute the first and second moment with $n=1,2$. By the $n\textsuperscript{th}$ statistical moment we are referring to the expectation of products of components of Haar-distributed unitaries, i.e.
\begin{equation}
    \mbb{E}_\haar[U_{i_1j_1}\cdots{U}_{i_nj_n}U_{i_1^\prime j_1^\prime}^*\cdots{U}_{i_n^\prime j_n^\prime}^*]=\int\limits_{\mbb{U}(d)}U_{i_1j_1}\cdots{U}_{i_nj_n}U_{i_1^\prime j_1^\prime}^*\cdots{U}_{i_n^\prime j_n^\prime}\,d\muhaar(U),
    \label{eq: n-moments of the unitary group}
\end{equation}
where $U_{ab}$ are entries of the unitary $U=\sum{U}_{ab}|a\rangle\!\langle{b}|$ in a given basis. Here the notation $\mbb{E}_\haar[\cdot]=\int_{\mbb{U}(d)}[\cdot]\,d\muhaar$ explicitly means the expectation with respect to the Haar measure with $X\sim\muhaar$. Thus computing the $n$-moments of the unitary group can be rendered equivalent to computing an $n$-twirl, $\mbb{E}_\haar[U^{\otimes{n}}X(U^{\otimes{n}})^\dg]=\Xi^{(n)}(X)$.

For the average we have by definition
\begin{equation}
    \mbb{E}_\haar[U\,\rho\,U^\dg]=\int_{\mbb{U}(d)}U\,\rho\,U^\dg\,d\muhaar(U),
    \label{eq: average state Haar}
\end{equation}
so this corresponds to the 1-twirl, $\Xi^{(1)}(\rho)$, which by Eq.~\eqref{eq: SW twirl} must imply
\begin{equation}
    \mbb{E}_\haar[U\,\rho\,U^\dg]=c(\rho)\,\mbb{1},
    \label{eq: SW average state Haar}
\end{equation}
because the only possible permutation of one object is to itself and thus the identity is the only possible value for $\wp$. Now, we can exploit both the cyclicity and linearity of the trace, together with the normalization of the Haar measure, to see that
\begin{align}
    \tr\{\mbb{E}_\haar[U\,\rho\,U^\dg]\}&=\muhaar(\mbb{U})\tr(\rho)=1\nonumber\\
    &=c(\rho)\,\tr(\mbb{1})=d\,c(\rho),
\end{align}
where in the first line we took the trace of Eq.~\eqref{eq: average state Haar}, whilst in the second we took that of Eq.~\eqref{eq: SW average state Haar}.

Thus, it follows that $c(\rho)=\tr(\rho)=1$, and hence the average quantum state from the uniform probability measure is the maximally mixed state,
\begin{equation}
    \mbb{E}_\haar\left[U\,\rho\,U^\dg\right]=\f{\mbb{1}}{d},
    \label{eq: average quantum state}
\end{equation}
and it similarly follows, in general, that for any random $d\times{d}$ matrix $X$ that is Haar-distributed, $\mbb{E}_\haar[X]=\tr(X)\mbb{1}/d$. This result is intuitive since if we take into account that the maximally mixed state is the maximal ignorance state, so for any quantum state drawn uniformly at random, naturally we would expect to get the equiprobable state.

For the second moment we can encounter different relevant quantities, but particularly we will care about the purity $\tr\left[\rho_\mathsf{A}^2\right]$ of a reduced state $\rho_\mathsf{A}=\tr_\mathsf{B}[\rho_\mathsf{AB}]$, where $\rho_\mathsf{AB}\in\$(\mscr{H}_\mathsf{A}\otimes\mathscr{H}_\mathsf{B})$, i.e.
\begin{equation}
    \mbb{E}_\haar\left\{\tr\left[\left(\rho_\mathsf{A}^\prime\right)^2\right]\right\}=\int_{\mbb{U}(d)}\tr\left\{\tr_\mathsf{B}\left[U\,\rho_\mathsf{AB}\,U^\dg\right]\tr_\mathsf{B}\left[U\,\rho_\mathsf{AB}\,U^\dg\right]\right\}\,d\muhaar(U),
\end{equation}
where here $\rho_\mathsf{A}^\prime=\tr_\mathsf{B}[U\,\rho_\mathsf{AB}\,U^\dg]$. This means we can now use the 2-twirl, $\Xi^{(2)}$; towards this, we now have via Schur-Weyl duality
\begin{equation}
    \Xi^{(2)}[X]=\alpha(X)\,\mbb{1}+\beta(X)\,\swap,
    \label{eq: Schur-Weyl 2-twirl}
\end{equation}
where here $\mbb1\in\mscr{H}^{\otimes2}$ is a $d^2$ dimensional identity operator and $\swap:=\sum|ij\rangle\!\langle{ji}|$ is known as a swap operator, which, as the name suggests, swaps the respective states, i.e. $\swap|uv\rangle=|vu\rangle$, and is such that $\swap=\swap^\dg$ and $\swap^2=\mbb1$. This is because the only possible permutations on two elements are precisely the identity and interchanging the elements. Now we need two equations to determine $\alpha$ and $\beta$, but this can be done in a similar manner to the 1-twirl: first we have
\begin{equation}
    \tr\{\Xi^{(2)}[X]\}=d^2\alpha(X)+d\,\beta(X)=\tr(X),
\end{equation}
where the first equality follows from tracing the Schur-Weyl duality expression, while the second follows from the definition of the twirl. Now, similarly,
\begin{equation}
    \tr\{\swap\Xi^{(2)}[X]\}=d\,\alpha(X)+d^2\,\beta(X)=\tr(\swap X),
\end{equation}
as $\swap$ commutes with $U^{\otimes2}$, so we can solve for $\alpha$ and $\beta$,
\begin{equation}
    \alpha(X)=\f{\tr(X)}{d^2-1}-\f{\tr(\swap X)}{d(d^2-1)},\qquad
    \beta(X)=\f{\tr(\swap X)}{d^2-1}-\f{\tr(X)}{d(d^2-1)},
    \label{eq: Schur-Weyl constants}
\end{equation}
and consequently Eq.~\eqref{eq: Schur-Weyl 2-twirl} is now solved.

Going back to the motivation of the purity of a reduced state, first notice that
\begin{equation}
    \tr[\swap(\rho_\mathsf{A}\otimes\rho_\mathsf{A})]=\sum_{i,j=1}^{d_\mathsf{A}}\langle{j}|\rho_\mathsf{A}|i\rangle\!\langle{i}|\rho_\mathsf{A}|j\rangle=\tr\left[\rho_\mathsf{A}^2\right],
    \label{eq: purity swap}
\end{equation}
so that we may readily use the 2-twirl if we take $\mscr{H}\cong\mscr{H}_\mathsf{A}\otimes\mscr{H}_\mathsf{B}$ and $X=\rho_\mathsf{AB}^{\otimes2}$, then partial trace over the subspace $\mathsf{B}$ for each copy,
\begin{align}
    \tr_{\mathsf{B}\mathsf{B}}\left\{\Xi^{(2)}\left[\rho_\mathsf{AB}^{\otimes2}\right]\right\}&=\alpha\left(\rho_\mathsf{AB}^{\otimes2}\right)\,d_\mathsf{B}^2\,\mbb1+\beta\left(\rho_\mathsf{AB}^{\otimes2}\right)\,d_\mathsf{B}\swap,
   \label{eq: partial trace 2-twirl}
\end{align}
where implicitly both the identity and the swap act on $\mscr{H}_\mathsf{A}^{\otimes2}$. Taking the trace then, we get the expected purity
\begin{align}
    \mbb{E}_\haar\left\{\tr\left[\left(\rho_\mathsf{A}^\prime\right)^2\right]\right\}&=\tr\left[\swap\left(\tr_{\mathsf{B}\mathsf{B}}\left\{\Xi^{(2)}\left[\rho_\mathsf{AB}^{\otimes2}\right]\right\}\right)\right]\nonumber\\[0.1in]
    &=\f{d_\mathsf{A}d_\mathsf{B}^2}{d_\mathsf{AB}^2-1}-\f{d_\mathsf{B}\tr(\rho_\mathsf{AB}^2)}{d_\mathsf{AB}^2-1}+\f{d_\mathsf{A}^2d_\mathsf{B}\tr(\rho_\mathsf{AB}^2)}{d_\mathsf{AB}^2-1}-\f{d_\mathsf{A}}{d_\mathsf{AB}^2-1}\nonumber\\[0.1in]
    &=\f{d_\mathsf{A}(d_\mathsf{B}^2-1)+d_\mathsf{B}(d_\mathsf{A}^2-1)\tr(\rho_\mathsf{AB}^2)}{d_\mathsf{AB}^2-1},
\end{align}
and in particular if the original state is pure, $\tr(\rho_\mathsf{AB}^2)=1$, this yields
\begin{equation}
    \mbb{E}_\haar\left\{\tr\left[\left(\rho_\mathsf{A}^\prime\right)^2\right]\right\}=\f{d_\mathsf{A}+d_\mathsf{B}}{d_\mathsf{AB}+1}.
    \label{eq: average purity reduced state}
\end{equation}

Moreover notice that
\begin{equation}
    \lim_{d_\mathsf{B}\to\infty}\mbb{E}_\haar\left\{\tr\left[\left(\rho_\mathsf{A}^\prime\right)^2\right]\right\}=\f{1}{d_\mathsf{A}},
\end{equation}
i.e. the expected purity of the reduced state $\rho_\mathsf{A}$ will approach that of the maximally mixed state whenever $d_\mathsf{B}\gg{d}_\mathsf{A}$. This implies that, on average, we would have maximum ignorance about the small subsystem despite knowing everything about the state of the full composite. As it turns out, this is the case whenever $\mathsf{A}$ and $\mathsf{B}$ have the maximum amount of \emph{entanglement} possible. We will introduce precisely what this means below. Bipartite maximally entangled states happen to be typical, in the sense that if we sample a bipartite pure state at random, it will be highly entangled with very high probability. To formalize this notion of typicality let us introduce a concept known as \emph{concentration of measure}, which will let us to go back to the main issue we set out to investigate, which is the emergence of the fundamental postulate of statistical mechanics.

\subsection{Concentration of measure}
While Eq.~\eqref{eq: average purity reduced state} is an interesting result, it does not tell us anything about the distribution of bipartite states and how much their reduced states will differ from the maximally mixed state. A powerful concept, however, that allows us to dig into the structure of such distribution without having to  compute all of the moments of the unitary group is one called concentration of measure.

As the name suggests, concentration of measure refers to a given measure being concentrated in a region of a metric space~\cite{Ledoux}. More precisely, let $\mscr{M}$ be a metric space with metric (i.e. distance) $\tilde{\Delta}_{\mscr{M}}$ and probability measure $\mu_\mathsf{M}$. Then we say that a function $f:\mscr{M}\to\mbb{R}$ satisfies a concentration of measure around its mean if, for any point $x\in\mscr{M}$ and any $\delta>0$,
\begin{equation}
    \mbb{P}_\mathsf{M}[f(x)\geq\mbb{E}_\mathsf{M}(f)+\delta]\leq\alpha_\mathsf{M}(\delta/\mscr{L}),
    \label{eq: def concentration of measure}
\end{equation}
where here, as in the case of the Haar measure, $\mbb{P}_\mathsf{M}$ and $\mbb{E}_\mathsf{M}$ explicitly refer to the probability and expectation with $x\sim\mu_\mathsf{M}$, and where $\mscr{L}$ is the so-called Lipschitz constant of $f$, and the function $\alpha_\mathsf{M}$ is known as the concentration function or concentration rate, which must vanish in increasing $\delta$ in order to have concentration of measure.

The Lipschitz constant has a special role because, intuitively, it tells us how fast the function $f$ changes, which in turn will have an impact on the concentration rate and thus imply concentration of measure or absence thereof. Here we are concerned with functions mapping to the reals; however, in general, we say that a function $f:\mbb{X}\to\mbb{Y}$ between metric spaces $(\mbb{X},\tilde{\Delta}_{_\mbb{X}})$ and $(\mbb{Y},\tilde{\Delta}_{_\mbb{Y}})$ is $\mscr{L}$-Lipschitz if there is a real constant $\mscr{L}\geq0$ such that
\begin{gather}
    \tilde{\Delta}_{_\mbb{Y}}\left(f(x_1),f(x_2)\right)\leq\mscr{L}\,\tilde{\Delta}_{_\mbb{X}}(x_1,x_2),
\end{gather}
for any $x_1,x_2\in\mbb{X}$.

This means that the function in Eq.~\eqref{eq: def concentration of measure} will satisfy a concentration of measure around its mean whenever its rate of change is low, i.e. whenever it has a small Lipschitz constant, given that the concentration rate $\alpha_\mathsf{M}$ has to be vanishing in increasing $\delta$.

Whilst concentration of measure can be defined as in Eq.~\eqref{eq: def concentration of measure}, it is a mathematical concept with a much larger depth~\cite{Ledoux}. However, as we discussed at the beginning of this section, pure quantum states can be represented as equivalence classes of points in the hypersphere $\mbb{S}^{2d-1}$, so for our purposes the relevant quantities are the hypersphere as our metric space together with its concentration rate. This result is known as Levy's lemma.\footnote{ Also known as an isoperimetric inequality, it is named after Paul L\'{e}vy who derived it in 1919 albeit in a slightly different form.}

\begin{theorem}[Levy's lemma~\cite{Popescu2006}]
\label{def: Levy's Lemma}
Let $f:\mbb{S}^{d-1}\to\mbb{R}$ be an $\mscr{L}$-Lipschitz continuous function on the $d-1$ dimensional hypersphere $\mbb{S}^{d-1}$, then for any $x\in\mbb{S}^{d-1}$ and any $\delta>0$,
\begin{equation}
    \mbb{P}_\haar[|f(x)-\mbb{E}_\haar(f)|\geq\delta]\leq2\exp\left(-\f{d\,\delta^2}{9\pi^3\mscr{L}^2}\right).
    \label{eq: Levy's lemma}
\end{equation}
\end{theorem}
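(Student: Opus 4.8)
The plan is to prove this by the classical concentration-of-measure argument on the sphere: the spherical isoperimetric inequality, an estimate for the measure of the enlargement of a hemisphere, and a final passage from the \emph{median} of $f$ to its mean. (One could instead transfer the problem to Gaussian space and invoke Gaussian isoperimetry, but the explicit constant $9\pi^3$ appearing in the statement is the one naturally produced by the route below, following Ref.~\cite{Popescu2006}; see also Ref.~\cite{Ledoux}.)

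First I would let $\mathfrak{m}$ be a median of $f$ for the uniform (Haar-induced) probability measure on $\mbb{S}^{d-1}$, i.e. a number with $\mbb{P}_\haar[f\le\mathfrak{m}]\ge1/2$ and $\mbb{P}_\haar[f\ge\mathfrak{m}]\ge1/2$, and work with the sublevel set $A=\{x\in\mbb{S}^{d-1}:f(x)\le\mathfrak{m}\}$ together with its $\varepsilon$-enlargement $A_\varepsilon$ in the geodesic metric. Since $f$ is $\mscr{L}$-Lipschitz, every $y\in A_\varepsilon$ obeys $f(y)\le\mathfrak{m}+\mscr{L}\varepsilon$, so $\{f>\mathfrak{m}+\mscr{L}\varepsilon\}\subseteq\mbb{S}^{d-1}\setminus A_\varepsilon$ and hence $\mbb{P}_\haar[f>\mathfrak{m}+\mscr{L}\varepsilon]\le1-\mbb{P}_\haar[A_\varepsilon]$; the lower tail follows symmetrically from the superlevel set $\{f\ge\mathfrak{m}\}$, which also has measure at least $1/2$.

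The essential geometric input---and the step I expect to be the genuine obstacle, in that I would cite it rather than reprove it---is the spherical isoperimetric inequality (L\'{e}vy--Schmidt): among all measurable subsets of $\mbb{S}^{d-1}$ of a fixed measure, a geodesic ball minimises the measure of every $\varepsilon$-enlargement. As $\mbb{P}_\haar[A]\ge1/2$, this yields $\mbb{P}_\haar[A_\varepsilon]\ge\mbb{P}_\haar[C_\varepsilon]$ with $C$ a hemisphere, and $C_\varepsilon$ is the cap of geodesic radius $\pi/2+\varepsilon$. It then remains to bound the complementary cap by a direct integration of $\sin^{d-2}$ over the polar angle: this gives $\mbb{P}_\haar[\mbb{S}^{d-1}\setminus C_\varepsilon]\le\sqrt{\pi/8}\,\exp\!\left(-(d-1)\varepsilon^2/2\right)$ (any bound of this Gaussian type suffices; the precise numerical constant is immaterial for what follows). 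Combining the last two paragraphs gives $\mbb{P}_\haar[\,|f-\mathfrak{m}|\ge\mscr{L}\varepsilon\,]\le\sqrt{\pi/2}\,\exp\!\left(-(d-1)\varepsilon^2/2\right)$, i.e. concentration about the median.

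Finally I would convert concentration about the median into concentration about the mean. Integrating the median tail bound gives $|\mbb{E}_\haar(f)-\mathfrak{m}|\le\mbb{E}_\haar|f-\mathfrak{m}|=\int_0^\infty\mbb{P}_\haar[\,|f-\mathfrak{m}|\ge t\,]\,dt\le c\,\mscr{L}/\sqrt{d}$ for an absolute constant $c$. Setting $\varepsilon=\delta/\mscr{L}$, using $|f-\mbb{E}_\haar(f)|\le|f-\mathfrak{m}|+|\mathfrak{m}-\mbb{E}_\haar(f)|$, splitting into the regime $\delta\ge2|\mathfrak{m}-\mbb{E}_\haar(f)|$ (where the median bound applies with $\delta/2$ in place of $\mscr{L}\varepsilon$) and the regime $\delta<2|\mathfrak{m}-\mbb{E}_\haar(f)|=O(\mscr{L}/\sqrt{d})$ (where the claimed right-hand side already exceeds $1$ and the bound is vacuous), and then absorbing all numerical factors into a single deliberately loose constant, one reaches $\mbb{P}_\haar[\,|f-\mbb{E}_\haar(f)|\ge\delta\,]\le2\exp\!\left(-d\,\delta^2/(9\pi^3\mscr{L}^2)\right)$; the looseness of $9\pi^3$ is exactly what makes room for this bookkeeping. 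The only regularity worth noting is that continuity of $f$ makes the sub- and superlevel sets closed and measurable, so the isoperimetric inequality applies verbatim and no further smoothness is needed.
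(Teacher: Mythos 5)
The paper itself offers no proof of this statement: Levy's lemma is imported wholesale from Ref.~\cite{Popescu2006} (with the constant $9\pi^3$ tracing back to the standard references on concentration of measure such as Ref.~\cite{Ledoux}), so there is no in-paper argument to compare yours against. Your outline is precisely the classical proof underlying that citation --- spherical isoperimetry applied to the sublevel and superlevel sets of the median, the Gaussian-type cap estimate, and the median-to-mean conversion --- and it is sound as sketched; the final bookkeeping indeed closes because $9\pi^3$ is loose enough to absorb both the cap-estimate prefactor and the $O(\mscr{L}/\sqrt{d})$ median--mean shift, with the bound trivially true in the small-$\delta$ regime where that shift dominates.
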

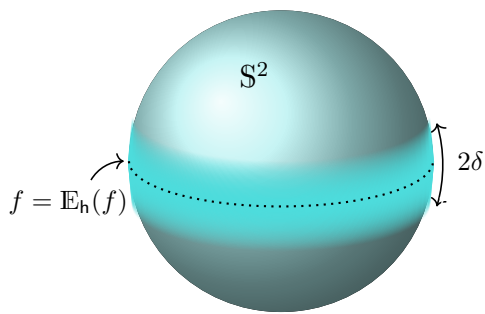
\begin{figure}[t]
\centering
    \begin{tikzpicture}
    \fill[ball color=C3!30, opacity=1] (0,0) circle (2cm);
    \foreach \x in {0,2,...,100}{
    \draw[thick, C3!70, opacity=1-\x*0.01](-2+\x*0.00052,0-\x*0.006) arc (180:360:2-\x*0.00052 and 0.6);
    \draw[thick, C3!70, opacity=1-\x*0.01](-2+\x*0.00052,0+\x*0.006) arc (180:360:2-\x*0.00052 and 0.6);
    }
    \node[right] at (2.2,0) {$2\delta$};
    \draw[semithick, <->] (2.05,-0.6) arc(-16:16:2) ;
    \draw[dotted, thick] (-2,0) arc (180:360:2 and 0.6);
    \node at (-0.35,1.15) {\Large$\mbb{S}^2$};
    \draw[semithick, ->] (-2.5,-0.25) arc(150:90:0.5);
    \node[left,below] at (-2.8,-0.25) {$f=\mbb{E}_\haar(f)$};
    \node[right,below] at (2.8,-0.25) {$\textcolor{white}{f=\mbb{E}_\haar(f)}$};
    \end{tikzpicture}
    \caption[Levy's lemma: concentration of measure on the sphere]{\textbf{Levy's lemma for the hypersphere} $\mbb{S}^d$ implies that the probability of picking a point at random outside a band of width $2\delta$ along the equator converges exponentially to zero in $d\delta^2$. For the unit sphere with $d=2$, concentration is rather weak, however for high-dimensional hyperspheres most random points will lie along the equator with high probability.}
    \label{fig: Levy's lemma sphere}
\end{figure}
That is, for all functions on a high-dimensional hypersphere that do not change too rapidly, i.e. with a small Lipschitz constant $\mscr{L}$, the function evaluated on a point picked uniformly at random will be close to its expectation with very high probability. To make this concrete, consider $x=(x_1,\ldots,x_d)$ and the function defined by $f(x)=x_1$, then the expectation is the equator, $x_1=0$, and thus Levy's lemma in Eq.~\eqref{eq: Levy's lemma} tells us that the probability of finding a random point outside a band of width $2\delta$ along the equator converges exponentially to zero in $d\,\delta^2$. Thus \emph{large-dimensional hyperspheres are fat along every equator}, as illustrated in Fig.~\ref{fig: Levy's lemma sphere}.

Let us now go back to the physical scenario. Levy's lemma allows for a neat application of the previous result, letting us show that pure bipartite maximally entangled states are typical, in the sense that most bipartite states are concentrated around them. In other words, if we sample pure bipartite states at random according to the Haar measure, these will be highly entangled with very high probability. Furthermore, as we will see, it is a crucial ingredient for the main result of this section on the emergence of the fundamental postulate of statistical mechanics.

\subsection{Typicality of entangled states}
An average reduced state under the Haar measure will be maximally mixed, as can be seen from Eq.~\eqref{eq: average state Haar}, as we can always take the partial trace after averaging. However, we can now ask, what is the average distinguishability between a generic reduced state and the maximally mixed state,
\begin{equation}
    \mbb{E}_\haar\left[{D}\left(\rho^\prime_\mathsf{A},\f{\mbb1_\mathsf{A}}{d}\right)\right]=\f{1}{2}\mbb{E}_\haar\|\rho^\prime_\mathsf{A}-\f{\mbb1_\mathsf{A}}{d_\mathsf{A}}\|_1,
\end{equation}
where here $\rho^\prime_\mathsf{A}=\tr_\mathsf{B}[U\rho_\mathsf{AB}U^\dg]$ with $U\in\mbb{U}(d_\mathsf{AB})$ uniformly distributed, $U\sim\muhaar$. The choice of the trace distance here is motivated as in Sections~\ref{sec: trace distance qm101} and~\ref{sec: tr dist equilibration}.

Now we may use both the relation $\|X\|_1\leq\sqrt{\dim(X)}\|X\|_2$ between Schatten norms and Jensen's inequality for the square root to get
\begin{align}
    \mbb{E}_\haar\left[{D}\left(\rho^\prime_\mathsf{A},\f{\mbb1_\mathsf{A}}{d_\mathsf{A}}\right)\right]&\leq\f{1}{2}\sqrt{d_\mathsf{A}}\sqrt{\mbb{E}_\haar\left[\tr(\rho^{\prime\,2}_\mathsf{A})\right]-\f{1}{d_\mathsf{A}}}=\f{1}{2}\sqrt{\f{d_\mathsf{A}^2-1}{d_\mathsf{AB}+1}},
\end{align}
where in the second line we used the average purity in Eq.~\eqref{eq: average purity reduced state}, assuming that we care about the situation in which the global state $\rho_\mathsf{AB}$ is pure, $\tr(\rho_\mathsf{AB}^2)=1$. Now, in particular when $\mathsf{B}$ is much greater than $\mathsf{A}$, we get
\begin{equation}
    \mbb{E}_\haar\left[{D}\left(\rho^\prime_\mathsf{A},\f{\mbb1_\mathsf{A}}{d_\mathsf{A}}\right)\right]\lesssim\f{1}{2}\sqrt{\f{d_\mathsf{A}}{d_\mathsf{B}}},
\end{equation} which itself converges to zero in the $d_\mathsf{B}\gg{d}_\mathsf{A}$ limit.

Now we can regard this trace distance to the maximally mixed state as a function ${D}(\tr_\mathsf{B}[\cdot],\mbb{1}_\mathsf{A}/d_\mathsf{A}):\mbb{S}^{2d_\mathsf{AB}-1}\to\mbb{R}^+$ from pure states on the sphere to the reals, and apply Levy's lemma, provided it is Lipschitz continuous. We have, for any two $\rho_\mathsf{AB},\sigma_\mathsf{AB}\in\$(\mscr{H}_\mathsf{A}\otimes\mscr{H}_\mathsf{B})$,
\begin{align}
    \left|{D}\left(\rho_\mathsf{A},\f{\mbb1_\mathsf{A}}{d_\mathsf{A}}\right)-{D}\left(\sigma_\mathsf{A},\f{\mbb1_\mathsf{A}}{d_\mathsf{A}}\right)\right|&\leq{D}(\rho_\mathsf{A},\sigma_\mathsf{A})\leq{D}(\rho_\mathsf{AB},\sigma_\mathsf{AB}),
    \label{eq: Lipschitz tr dist}
\end{align}
where in the first inequality we used the triangle inequality and in the second we used the fact that the partial trace does not increase the trace distance.

This last property can be seen by noticing that the difference of density matrices is Hermitian and thus it can be diagonalized with real eigenvalues, i.e. we have $\rho-\sigma=UDU^\dg$ and we can further split this as $\rho-\sigma=Q-S$ where $Q$, $S$ are positive semi-definite with orthogonal eigenspaces. Now, because $\rho$, $\sigma$ are states, it follows that $\tr(Q)=\tr(S)$, and thus ${D}(\rho,\sigma)=\tr(Q)$ given that $|\rho-\sigma|=Q+S$. Now, finally $\tr(Q_\mathsf{A})\geq\tr(\Pi\,Q_\mathsf{A})$ for any projector $\Pi$, so taking the trace distance definition maximizing over projectors, ${D}(\rho,\sigma)=\max_\Pi\tr[\Pi(\rho-\sigma)]$ it follows that ${D}(\rho,\sigma)\geq\tr(\Pi\,Q_\mathsf{B})\geq\tr[\Pi(\rho_\mathsf{A}-\sigma_\mathsf{A})]\geq{D}(\rho_\mathsf{A},\sigma_\mathsf{A})$. This property holds in general for any trace preserving map, but this will be discussed in more detail in Section~\ref{sec: quantum channels}.

Thus Eq.~\eqref{eq: Lipschitz tr dist} means that the Lipschitz constant of ${D}(\tr_\mathsf{B}[\cdot],\mbb1_\mathsf{A}/d_\mathsf{A})$ can simply be taken to $\mscr{L}=1$, as for pure states ${D}(|\phi\rangle\!\langle\phi|,|\psi\rangle\!\langle\psi|)^2=1-|\langle\psi|\phi\rangle|^2$~\cite{nielsen2000quantum}, so that if $\rho_\mathsf{AB}=|\phi\rangle\!\langle\phi|$ and $\sigma_\mathsf{AB}=|\psi\rangle\!\langle\psi|$,
\begin{align}
    {D}(\rho_\mathsf{AB},\sigma_\mathsf{AB})&\leq\sqrt{(1-|\langle\psi|\phi\rangle|)(1+|\langle\psi|\phi\rangle|)}\nonumber\\
    &\leq\sqrt{2(1-\mathrm{Re}[\langle\psi|\phi\rangle])}\nonumber\\
    &\leq\left||\phi\rangle-|\psi\rangle\right|,
    \label{eq: Lipschits tr dist 2}
\end{align}
and we now can readily apply Levy's lemma.

\begin{theorem}[Almost all pure quantum states are almost maximally entangled~\cite{glassle2013almost}]
\label{Thm: entanglement concentration}
For any pure state $\rho_\mathsf{AB}\in\$(\mscr{H}_\mathsf{A}\otimes\mscr{H}_\mathsf{B})$ of dimension $d_\mathsf{AB}=d_\mathsf{A}d_\mathsf{B}$ with $d_\mathsf{B}\gg{d}_\mathsf{A}$, drawn uniformly at random, and for any $\delta>0$,
\begin{equation}
    \mbb{P}_\haar\left[{D}\left(\rho_\mathsf{A}^\prime,\f{\mbb1_\mathsf{A}}{d_\mathsf{A}}\right)\geq\f{1}{2}\sqrt{\f{d_\mathsf{A}}{d_\mathsf{B}}}+\delta\right]\leq2\exp\left(-\f{2d_\mathsf{AB}\,\delta^2}{9\pi^3}\right),
    \label{eq: typicality entangled}
\end{equation}
so when $\mathsf{B}$ is much bigger than $\mathsf{A}$, the probability of a random reduced state of being distinguishable from the maximally mixed state is vanishing. This implies that almost all pure states will be almost maximally entangled.
\end{theorem}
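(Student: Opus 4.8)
The plan is to combine the average-case estimate obtained just above with the concentration bound of Levy's lemma (Theorem~\ref{def: Levy's Lemma}). First I would package the quantity of interest as a single real-valued function on the sphere of pure states, $f(|\Psi\rangle) := {D}(\tr_\mathsf{B}[|\Psi\rangle\!\langle\Psi|],\,\mbb1_\mathsf{A}/d_\mathsf{A})$, where $|\Psi\rangle$ ranges over the unit sphere $\mbb{S}^{2d_\mathsf{AB}-1}$; here I use the identification from Section~\ref{sec: Random states and Haar} that normalized vectors in $\mbb{C}^{d_\mathsf{AB}}\cong\mbb{R}^{2d_\mathsf{AB}}$ form precisely this sphere, together with the fact that sampling $\rho_\mathsf{AB}'=U\rho_\mathsf{AB}U^\dg$ with $U\sim\muhaar$ is equivalent to sampling $|\Psi\rangle$ uniformly from $\mbb{S}^{2d_\mathsf{AB}-1}$.

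Next I would assemble the two inputs Levy's lemma requires. The Lipschitz estimate is already available: Eq.~\eqref{eq: Lipschitz tr dist} combined with Eq.~\eqref{eq: Lipschits tr dist 2} gives $|f(|\phi\rangle)-f(|\psi\rangle)| \le {D}(|\phi\rangle\!\langle\phi|,|\psi\rangle\!\langle\psi|) \le \big\||\phi\rangle-|\psi\rangle\big\|$, so $f$ is $\mscr{L}$-Lipschitz with $\mscr{L}=1$ with respect to the Euclidean metric on the sphere. The mean is controlled by the average purity of a reduced state of a Haar-random pure state, Eq.~\eqref{eq: average purity reduced state}: using $\|X\|_1 \le \sqrt{\dim X}\,\|X\|_2$ and Jensen's inequality for the square root one obtains $\mbb{E}_\haar[f] \le \tfrac{1}{2}\sqrt{(d_\mathsf{A}^2-1)/(d_\mathsf{AB}+1)} \le \tfrac{1}{2}\sqrt{d_\mathsf{A}/d_\mathsf{B}}$, where the last inequality holds for all $d_\mathsf{A},d_\mathsf{B}\ge1$ since $d_\mathsf{B}(d_\mathsf{A}^2-1)\le d_\mathsf{A}(d_\mathsf{A}d_\mathsf{B}+1)$.

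Then I would apply Levy's lemma to $f$ on $\mbb{S}^{2d_\mathsf{AB}-1}$, i.e.\ with sphere-dimension parameter $2d_\mathsf{AB}$ and $\mscr{L}=1$, which yields $\mbb{P}_\haar[\,|f-\mbb{E}_\haar f| \ge \delta\,] \le 2\exp(-2d_\mathsf{AB}\delta^2/(9\pi^3))$. Finally, the event appearing in the statement is nested inside this two-sided event: if $f \ge \tfrac{1}{2}\sqrt{d_\mathsf{A}/d_\mathsf{B}}+\delta$ then, since $\mbb{E}_\haar f \le \tfrac{1}{2}\sqrt{d_\mathsf{A}/d_\mathsf{B}}$, we have $f-\mbb{E}_\haar f \ge \delta > 0$ and hence $|f-\mbb{E}_\haar f| \ge \delta$; monotonicity of probability then delivers the claimed inequality.

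There is no genuine obstacle here, since every ingredient has been prepared in the preceding subsections; the points that demand care are purely bookkeeping. The first is getting the sphere dimension right --- it is $2d_\mathsf{AB}$ rather than $d_\mathsf{AB}$, because the Hilbert space is complex --- so that the exponent comes out as $2d_\mathsf{AB}\delta^2/(9\pi^3)$ and not half that. The second is the one-sided-versus-two-sided nesting of events in the last step. The genuinely structural fact, that the trace distance is non-increasing under the partial trace (which is what fixes $\mscr{L}=1$), has already been proved in the discussion surrounding Eq.~\eqref{eq: Lipschitz tr dist}.
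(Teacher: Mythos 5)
Your proposal is correct and follows essentially the same route as the paper: bound $\mbb{E}_\haar[D(\rho'_\mathsf{A},\mbb1_\mathsf{A}/d_\mathsf{A})]$ via the average purity in Eq.~\eqref{eq: average purity reduced state} together with $\|X\|_1\leq\sqrt{\dim(X)}\|X\|_2$ and Jensen, take $\mscr{L}=1$ from Eqs.~\eqref{eq: Lipschitz tr dist}--\eqref{eq: Lipschits tr dist 2}, and apply Levy's lemma on $\mbb{S}^{2d_\mathsf{AB}-1}$. Your observation that $(d_\mathsf{A}^2-1)/(d_\mathsf{AB}+1)\leq d_\mathsf{A}/d_\mathsf{B}$ holds exactly (rather than only asymptotically, as the paper's $\lesssim$ suggests) is a minor but welcome tightening of the bookkeeping.
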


A similar result for the average entanglement entropy was derived in Ref.~\cite{Page_purity}, as well as a concentration of measure result in Ref.~\cite{Hayden2006}, which can be argued for by using a relation between the entanglement entropy and the trace distance, known as Fannes inequality~\cite{nielsen2000quantum}.

Theorem~\ref{Thm: entanglement concentration} already hints at implications for the emergence of the fundamental postulate of statistical mechanics; despite the universe being in a pure state, we may find small systems in almost maximal ignorance states, with the reason behind this being precisely entanglement.

\subsection{Entanglement as a canonical principle}
While much progress in topics regarding typicality has been made since von Neumann's results, a turning point in the topic can arguably be attributed to the work of Popescu, Short and Winter in Ref.~\cite{Popescu2006}, where the authors explicitly drew a connection between the emergence of the fundamental postulate of statistical mechanics and the typicality of entanglement.

Consider an \gls{syst-env} system-environment composite given by the space $\mscr{H}\cong\mscr{H}_\mathsf{S}\otimes\mscr{H}_\mathsf{E}$ as a closed system corresponding to the universe. A global constraint on these, which thermodynamically would correspond to the total energy of the universe, can be modelled in general by restricting the allowed global states to a subspace $\mscr{H}_\mathsf{R}\subseteq\mscr{H}$ of dimension $d_\mathsf{R}$. Now the fundamental postulate of statistical mechanics would correspond to a-priori assigning all pure states on $\mathsf{R}$ the same probability, i.e. having the equilibrium thermodynamics of the universe under $\mathsf{R}$ completely described by the maximally mixed state $\mbb1_\mathsf{R}/d_\mathsf{R}$.

Now the canonical state of the system is defined as the equal a-priori probability state of $\mathsf{R}$ with the degrees of the \gls{env} traced out,
\begin{equation}
    \Omega_\mathsf{S}=\tr_\mathsf{E}\left(\f{\mbb1_\mathsf{R}}{d_\mathsf{R}}\right).
\end{equation}

Now the question can be posed in similar terms to that of the typicality of entangled states. The crucial insight in Ref.~\cite{Popescu2006} is that indeed the \gls{syst-env} universe might be in a pure state, but despite this, the reduced states in subsystem \gls{syst} will typically be close to the canonical state. Now we consider the average trace distance between an \gls{syst} state $\rho_\mathsf{S}$ and the canonical state $\Omega_\mathsf{S}$, which we can bound as
\begin{align}
    \mbb{E}_\haar[{D}\left(\rho^\prime_\mathsf{S},\Omega_\mathsf{S}\right)]&\leq\f{1}{2}\mbb{E}_\haar[\|\rho_\mathsf{S}^\prime-\Omega_\mathsf{S}\|_2]\leq\f{1}{2}\sqrt{d_\mathsf{S}}\sqrt{\mbb{E}[\tr(\rho_\mathsf{S}^{\prime\,2})]-\tr(\Omega_\mathsf{S}^2)},
    \label{eq: Popescu tr dist 1}
\end{align}
where here $\rho^\prime_\mathsf{S}=\tr_\mathsf{E}[U\rho_\mathsf{R}U^\dg]$ with $U\in\mbb{U}(d_\mathsf{R})$, and $\rho_\mathsf{R}\in\$(\mscr{H}_\mathsf{R})$ is a pure state $\tr(\rho_\mathsf{R})=1$ of the universe; in the first line we used $\|X\|_1\leq\sqrt{\dim(X)}\|X\|_2$, while in the second one we used Jensen's inequality together with $\mbb{E}_\haar[\rho_S^\prime]=\Omega_\mathsf{S}$.

We can bound the expectation of the purity of $\rho_\mathsf{S}$ by writing, similar to Eq.~\eqref{eq: purity swap},
\begin{equation}
    \tr[\rho_\mathsf{S}^2]=\tr[\swap(\rho_\mathsf{S}\otimes\rho_\mathsf{S})]=\tr[(\mbb1_\mathsf{EE}\otimes\swap_\mathsf{SS})\rho_\mathsf{R}\otimes\rho_\mathsf{R}],
\end{equation}
where $\swap$ acts on $\mathsf{SS}$ and the full trace in the second equality is over $\mathsf{RR}$. Thus we need the 2-twirl
\begin{equation}
    \Xi^{(2)}[\rho_\mathsf{R}\otimes\rho_\mathsf{R}]=\f{1}{d_\mathsf{R}(d_\mathsf{R}+1)}\left(\mbb1+\swap\right),
\end{equation}
where we used the Schur-Weyl duality as in Eq.~\eqref{eq: Schur-Weyl 2-twirl} with the constants determined by means of Eq.~\eqref{eq: Schur-Weyl constants}. The identity and the $\swap$ are in space $\mscr{H}_\mathsf{R}\otimes\mscr{H}_\mathsf{R}$. This 2-twirl is equivalent to $\mbb{E}[\rho_\mathsf{R}^\prime\otimes\rho_\mathsf{R}^\prime]$, where $\rho_\mathsf{R}^\prime=U\rho_\mathsf{R}U^\dg$. Thus we have
\begin{align}
    \mbb{E}[\tr(\rho_\mathsf{S}^{\prime\,2})]&=\tr[(\mbb1_\mathsf{EE}\otimes\swap_\mathsf{SS})\mbb{E}_\haar[\rho_\mathsf{R}\otimes\rho_\mathsf{R}]]\nonumber\\
    &=\f{1}{d_\mathsf{R}(d_\mathsf{R}+1)}\tr[(\mbb1_\mathsf{EE}\otimes\swap_\mathsf{SS})(\mbb1_\mathsf{RR}+\swap_\mathsf{RR})],
\end{align}
now we notice that $\swap_\mathsf{RR}=\mbb{1}_\mathsf{RR}(\swap_\mathsf{EE}\otimes\swap_\mathsf{SS})$, where the identity $\mbb{1}_\mathsf{RR}$ here simply means that the \gls{syst-env} swaps are restricted to $\mathsf{R}$, thus
\begin{align}
    \mbb{E}[\tr(\rho_\mathsf{S}^{\prime\,2})]&=\f{\tr[\mbb1_\mathsf{RR}(\mbb1_\mathsf{EE}\otimes\swap_\mathsf{SS})]}{d_\mathsf{R}(d_\mathsf{R}+1)}+\f{\tr[\mbb1_\mathsf{RR}(\swap_\mathsf{EE}\otimes\mbb1_\mathsf{SS})]}{d_\mathsf{R}(d_\mathsf{R}+1)},\nonumber\\[0.1in]
    &\leq\tr\left[\left(\f{\mbb1_\mathsf{R}}{d_\mathsf{R}}\right)^{\otimes2}(\mbb1_\mathsf{EE}\otimes\swap_\mathsf{SS})\right]+\tr\left[\left(\f{\mbb1_\mathsf{R}}{d_\mathsf{R}}\right)^{\otimes2}(\swap_\mathsf{EE}\otimes\mbb1_\mathsf{SS})\right]\nonumber\\[0.1in]
    &=\tr[(\Omega_\mathsf{S}\otimes\Omega_\mathsf{S})\swap]+\tr[(\Omega_\mathsf{E}\otimes\Omega_\mathsf{E})\swap]\nonumber\\[0.1in]
    &=\tr(\Omega_\mathsf{S}^2)+\tr(\Omega_\mathsf{E}^2),
\end{align}
which neatly renders Eq.~\eqref{eq: Popescu tr dist 1} as
\begin{equation}
    \mbb{E}_\haar[{D}\left(\rho^\prime_\mathsf{S},\Omega_\mathsf{S}\right)]\leq\f{1}{2}\sqrt{d_\mathsf{S}\tr(\Omega_\mathsf{E}^2)},
\end{equation}
where $\tr(\Omega_\mathsf{E}^2)=d_\mathsf{R}^{-2}\tr[(\tr_\mathsf{S}(\mbb1_\mathsf{R}))^2]$ is called an (inverse) effective dimension of the environment, as it measures the dimension of the space in which the environment is most likely to be found. This is also in analogy with the effective dimension introduced in Section~\ref{sec: equilibration on average} by means of Eq.~\eqref{eq: deff inverse} (there for a probability of occupation of energy eigenstates). This can further be bounded as
\begin{equation}
    \tr(\Omega_\mathsf{E}^2)\leq{d}_\mathsf{S}/d_\mathsf{R},
\end{equation} by writing this effective dimension in terms of the eigenvalues $\lambda_i$ of $\Omega_\mathsf{E}$, and pulling out the largest one $\Lambda$ as $\tr(\Omega_\mathsf{E}^2)=\sum\lambda_i^2\leq\Lambda\sum\lambda_i=\Lambda\leq{d}_\mathsf{S}/d_\mathsf{R}$, given that $\tr_\mathsf{S}(\mbb{1}_\mathsf{R})=d_\mathsf{S}$.

Now, the Lipschitz constant of the trace distance ${D}(\tr_\mathsf{E}[\cdot],\Omega_\mathsf{S})$ can be seen to be equal to one, as was done in Eq.~\eqref{eq: Lipschitz tr dist}. Thus Levy's lemma can be readily applied to get\footnote{ Notice that in Ref.~\cite{Popescu2006} Levy's lemma is applied to the trace norm as opposed to the trace distance.}
\begin{theorem}[General Canonical Principle~\cite{Popescu2006}]
\label{theo: Canonical Principle}
For any pure state $\rho_\mathsf{R}\in\$(\mscr{H}_\mathsf{R})$ selected uniformly at random, where $\mscr{H}_\mathsf{R}\subseteq\mscr{H}_\mathsf{E}\otimes\mscr{H}_\mathsf{S}$ of dimension $d_\mathsf{R}$ is a subspace of an \gls{syst-env} composite of dimension $d_\mathsf{ES}=d_\mathsf{E}d_\mathsf{S}$, and any $\delta>0$,
\begin{equation}
    \mbb{P}_\haar\left[{D}(\rho_\mathsf{S},\Omega_\mathsf{S})\geq\f{1}{2}\sqrt{d_\mathsf{S}\tr(\Omega_\mathsf{E}^2)}+\delta\right]\leq2\exp\left(-\f{2d_\mathsf{R}\,\delta^2}{9\,\pi^3}\right),
\end{equation}
where $\Omega_\mathsf{S}=\tr_\mathsf{E}(\mbb1_\mathsf{R})/d_\mathsf{R}$ is called the canonical state of the system and with
\begin{equation}
    \tr(\Omega_\mathsf{E}^2)\leq\f{d_\mathsf{S}}{d_\mathsf{R}},
\end{equation}
satisfied for the inverse dimension of the environment $\tr(\Omega_\mathsf{E}^2)$.
\end{theorem}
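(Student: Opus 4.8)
The plan is to reuse, essentially verbatim, the machinery developed for Theorem~\ref{Thm: entanglement concentration}: bound the Haar-average of the trace distance $D(\rho_\mathsf{S},\Omega_\mathsf{S})$ using the Schatten-norm inequality and the $2$-twirl, and then upgrade this average estimate to a concentration bound by checking Lipschitz continuity and applying Levy's lemma (Theorem~\ref{def: Levy's Lemma}). Throughout, write $\rho_\mathsf{S}' = \tr_\mathsf{E}[U\rho_\mathsf{R}U^\dg]$ with $U\sim\muhaar$ on $\mbb{U}(d_\mathsf{R})$ and $\rho_\mathsf{R}$ a fixed pure state, and let $P_\mathsf{R}$ denote the projector onto $\mscr{H}_\mathsf{R}$, so that $\Omega_\mathsf{S}=\tr_\mathsf{E}(P_\mathsf{R})/d_\mathsf{R}$ and $\Omega_\mathsf{E}=\tr_\mathsf{S}(P_\mathsf{R})/d_\mathsf{R}$.

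\textbf{Step 1: average bound.} First I would use $D(\rho_\mathsf{S}',\Omega_\mathsf{S})=\f12\|\rho_\mathsf{S}'-\Omega_\mathsf{S}\|_1\le\f12\sqrt{d_\mathsf{S}}\,\|\rho_\mathsf{S}'-\Omega_\mathsf{S}\|_2$, then Jensen's inequality for the square root together with $\mbb{E}_\haar[\rho_\mathsf{S}']=\Omega_\mathsf{S}$ (from Eq.~\eqref{eq: average quantum state} applied on $\mathsf{R}$, followed by partial tracing), to get $\mbb{E}_\haar[D(\rho_\mathsf{S}',\Omega_\mathsf{S})]\le\f12\sqrt{d_\mathsf{S}}\sqrt{\mbb{E}_\haar[\tr(\rho_\mathsf{S}'^2)]-\tr(\Omega_\mathsf{S}^2)}$, exactly as in Eq.~\eqref{eq: Popescu tr dist 1}.

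\textbf{Step 2: second moment via the $2$-twirl.} Using the swap trick $\tr(\rho_\mathsf{S}'^2)=\tr[(\mbb1_\mathsf{EE}\otimes\swap_\mathsf{SS})\,\rho_\mathsf{R}'^{\otimes2}]$ and the Schur--Weyl evaluation $\Xi^{(2)}[\rho_\mathsf{R}^{\otimes2}]=\f{1}{d_\mathsf{R}(d_\mathsf{R}+1)}(P_\mathsf{R}^{\otimes2}+\swap_\mathsf{RR})$ (Eqs.~\eqref{eq: Schur-Weyl 2-twirl}--\eqref{eq: Schur-Weyl constants} with $\tr(\rho_\mathsf{R}^2)=1$), I would split $\mbb{E}_\haar[\tr(\rho_\mathsf{S}'^2)]$ into two pieces. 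The algebraic identity $\swap_\mathsf{RR}=P_\mathsf{R}^{\otimes2}(\swap_\mathsf{EE}\otimes\swap_\mathsf{SS})$, valid because the full copy-swap preserves $\mscr{H}_\mathsf{R}\otimes\mscr{H}_\mathsf{R}$ and hence commutes with $P_\mathsf{R}^{\otimes2}$, turns the first piece into $\tr[(\tr_\mathsf{E}P_\mathsf{R})^2]=d_\mathsf{R}^2\tr(\Omega_\mathsf{S}^2)$ and the second into $\tr[(\tr_\mathsf{S}P_\mathsf{R})^2]=d_\mathsf{R}^2\tr(\Omega_\mathsf{E}^2)$ after collapsing $\swap_\mathsf{SS}^2=\mbb1$; both are manifestly non-negative, so replacing the denominator $d_\mathsf{R}(d_\mathsf{R}+1)$ by $d_\mathsf{R}^2$ gives $\mbb{E}_\haar[\tr(\rho_\mathsf{S}'^2)]\le\tr(\Omega_\mathsf{S}^2)+\tr(\Omega_\mathsf{E}^2)$, and therefore $\mbb{E}_\haar[D(\rho_\mathsf{S}',\Omega_\mathsf{S})]\le\f12\sqrt{d_\mathsf{S}\tr(\Omega_\mathsf{E}^2)}$.

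\textbf{Step 3: Lipschitz continuity, Levy's lemma, and the final estimate on $\tr(\Omega_\mathsf{E}^2)$.} Identifying pure states on $\mscr{H}_\mathsf{R}$ with points of $\mbb{S}^{2d_\mathsf{R}-1}$, I would check that $f:|\psi\rangle\mapsto D(\tr_\mathsf{E}|\psi\rangle\!\langle\psi|,\Omega_\mathsf{S})$ is $1$-Lipschitz: the reverse triangle inequality for the trace distance gives $|f(|\phi\rangle)-f(|\psi\rangle)|\le D(\tr_\mathsf{E}\phi,\tr_\mathsf{E}\psi)$, contractivity of the partial trace under $D$ gives $\le D(\phi,\psi)$, and the chain in Eq.~\eqref{eq: Lipschits tr dist 2} gives $\le\||\phi\rangle-|\psi\rangle\|$. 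Applying Levy's lemma (Theorem~\ref{def: Levy's Lemma}) with $\mscr{L}=1$ on $\mbb{S}^{2d_\mathsf{R}-1}$, combined with the one-sided bound $\mbb{P}_\haar[f\ge\mbb{E}_\haar(f)+\delta]\le\mbb{P}_\haar[|f-\mbb{E}_\haar(f)|\ge\delta]$ and the Step-2 bound on $\mbb{E}_\haar(f)$, yields the stated concentration inequality. The final claim $\tr(\Omega_\mathsf{E}^2)\le d_\mathsf{S}/d_\mathsf{R}$ follows by pulling out the largest eigenvalue $\Lambda$ of $\Omega_\mathsf{E}$, $\tr(\Omega_\mathsf{E}^2)=\sum_i\lambda_i^2\le\Lambda\sum_i\lambda_i=\Lambda$, and noting $\mbb1_\mathsf{R}\le\mbb1_\mathsf{ES}$ forces $\Omega_\mathsf{E}=\tr_\mathsf{S}(\mbb1_\mathsf{R})/d_\mathsf{R}\le(d_\mathsf{S}/d_\mathsf{R})\mbb1_\mathsf{E}$.

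\textbf{Expected main obstacle.} Steps 1 and 3 are a direct transcription of the proof of Theorem~\ref{Thm: entanglement concentration}; the only genuinely delicate part is the $2$-twirl bookkeeping in Step 2 --- in particular getting the swap identity $\swap_\mathsf{RR}=P_\mathsf{R}^{\otimes2}(\swap_\mathsf{EE}\otimes\swap_\mathsf{SS})$ right on the restricted subspace, correctly identifying $\tr_\mathsf{E}P_\mathsf{R}=d_\mathsf{R}\Omega_\mathsf{S}$ and $\tr_\mathsf{S}P_\mathsf{R}=d_\mathsf{R}\Omega_\mathsf{E}$, and observing that the operators $\mbb1_\mathsf{EE}\otimes\swap_\mathsf{SS}$ and $\swap_\mathsf{EE}\otimes\mbb1_\mathsf{SS}$ are not positive, so the denominator replacement $d_\mathsf{R}(d_\mathsf{R}+1)\mapsto d_\mathsf{R}^2$ is licensed only after one has already shown the two traces in question equal non-negative quantities. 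Everything else is routine.
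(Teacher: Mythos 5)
Your proposal is correct and follows essentially the same route as the paper: the average bound via $\|\cdot\|_1\le\sqrt{d_\mathsf{S}}\|\cdot\|_2$ and Jensen, the second moment via the $2$-twirl with the swap identity $\swap_\mathsf{RR}=P_\mathsf{R}^{\otimes2}(\swap_\mathsf{EE}\otimes\swap_\mathsf{SS})$ yielding $\mbb{E}_\haar[\tr(\rho_\mathsf{S}'^2)]\le\tr(\Omega_\mathsf{S}^2)+\tr(\Omega_\mathsf{E}^2)$, then the $1$-Lipschitz check and Levy's lemma on $\mbb{S}^{2d_\mathsf{R}-1}$, and the largest-eigenvalue argument for $\tr(\Omega_\mathsf{E}^2)\le d_\mathsf{S}/d_\mathsf{R}$. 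Your explicit remarks on the non-positivity of $\mbb1_\mathsf{EE}\otimes\swap_\mathsf{SS}$ and on why the denominator replacement is licensed only after the traces are identified as non-negative make precise a step the paper leaves implicit, but the argument is the same.
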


This result implies that whenever $d_\mathsf{S}\ll1/\tr(\Omega_\mathsf{E}^2)$ and $\delta\ll1\ll d_\mathsf{R}\delta^2$, most quantum states will be almost canonical with very high probability. The second condition in particular reduces to $d_\mathsf{R}\gg1$ for $\delta=d_\mathsf{R}^{-1/3}$. Now, as discussed previously on equilibration in Section~\ref{sec: fluctuations infinite}, the systems of interest in realistic scenarios are typically much smaller that their environments and in particular we expect $d_\mathsf{R}\gg{d}_\mathsf{S}$ as well. This has the implication then that we do not need to invoke the a-priori equal probabilities postulate, but that we rather have typicality of canonical states emerging from their high entanglement with the effective environment. As seen in Ref.~\cite{Popescu2006}, when the restriction to the accessible space imposed by $\mathsf{R}$ corresponds to the total energy, the canonical state can be seen to correspond to the canonical distribution, or Gibbs so-called state, $\Omega_\mathsf{S}\simeq\exp(-\beta\,H_\mathsf{S})/\tr[\exp(-\beta\,H_\mathsf{S})]$ with an inverse temperature $\beta$ and system Hamiltonian $H_\mathsf{S}$, which is further exemplified through the model of a spin chain.

\begin{remark}We notice that this argument can be similarly applied to the expectation value of a given observable $A\in\mscr{B}(\mscr{H}_\mathsf{S})$. Notice that $\mbb{E}_\haar[\tr(A\rho^\prime_\mathsf{S})]=\tr[A\Omega_\mathsf{S}]$, so we really only need the Lipschitz constant of $\langle{A}\rangle_{\tr_\mathsf{E}(\cdot)}=\tr[A\tr_\mathsf{E}(\cdot)]$, which can be obtained similarly as in Eq.~\eqref{eq: Lipschits tr dist 2} by letting two $\rho_\mathsf{R}=|\phi\rangle\!\langle\phi|$ and $\rho_\mathsf{R}=|\phi\rangle\!\langle\phi|$ pure states so that
\begin{align}
    |\langle{A}\rangle_{\rho_\mathsf{R}-\sigma_\mathsf{R}}|&=\f{1}{2}\left|(\langle\phi|+\langle\psi|)A(|\phi\rangle-|\psi\rangle)+
    (\langle\phi|-\langle\psi|)A(\langle\phi|+\langle\psi|)\right|\nonumber\\
    &\leq\|A\|\left||\phi\rangle-|\psi\rangle\left|\,\right||\phi\rangle+|\psi\rangle\right|\nonumber\\[0.1in]
    &\leq2\|A\|\,\left||\psi\rangle-|\phi\rangle\right|,
\end{align}
and if $A$ only acts in subsystem $\mathsf{A}$ we get the same answer given that $\langle{A}\rangle_{\rho_\mathsf{S}}=\langle{A\otimes\mbb1}\rangle_{\rho_\mathsf{R}}$ and $\|A\otimes\mbb1\|=\|A\|$, so we conclude that $\mscr{L}=2$, and thus
\begin{equation}
    \mbb{P}_\haar\left[|\langle{A}\rangle_{\rho_\mathsf{R}-\sigma_\mathsf{R}}|\geq\delta\right]\leq2\exp\left(-\f{d_\mathsf{R}\,\delta^2}{18\pi^3\|A\|^2}\right),
\end{equation}
and here similarly $\delta$ can be chosen to a suitable value, e.g. $\delta=d_\mathsf{R}^{-1/3}$ so that for a large accessible space, the expectation value of any observable in a random state of the system will be almost the one on the canonical state with high probability.
\end{remark}

The approach of typicality is rather complementary to that of dynamical equilibration but it nevertheless provides a further understanding of the quantum foundations of statistical mechanics. One of its features, which can be seen as a drawback, is that it gives a kinematic argument for equilibration (or thermalization) rather than a dynamical one: it just speaks about almost all quantum states looking almost canonical but it does not say anything about how they get there. Another feature of the approach presented here, that will prove challenging in Part II of this thesis, is the drawing of state vectors from the Haar measure: Ref.~\cite{Gogolin_2011} in Section 6.2 discusses several different approaches with different measures that constrain typicality to a more physically meaningful notion. Finally, it is worth mentioning the approach to equilibration by Ref.~\cite{Masanes} employing so-called \emph{unitary designs}, which are distributions of unitary operators reproducing a finite number of moments of the Haar measure. We will formally introduce and employ this concept in Chapter~\ref{sec: Markovianization by design}.

    \chapter{Quantum Processes}
\label{sec:processes}
\setlength{\epigraphwidth}{0.6825\textwidth}
\epigraph{\emph{A philosopher once said, ``It is necessary for the very existence of science that the same conditions always produce the same results.''\\Well, they don't!}}{-- Richard P. Feynman (\emph{Character of Physical Law}).}

In the previous chapter, we already encountered more general transformations of quantum states that are not given simply by the action of an operator. In fact, the whole dynamics, either of the state of a whole Hilbert space (closed system), or of a subspace of a larger composite (open system), can be described as a the action of one of these \emph{superoperators} or \emph{quantum maps}. As we will see, however, when we try and do this quantum map description for an open quantum system composed of a subsystem \gls{syst} and an environment \gls{env}, we are faced with having to give up an important property known as Complete Positivity, unless we assume that the initial state has no correlations between \gls{syst} and \gls{env}.

Even if we assume that the initial state has no correlations, the unitary evolution of the whole \gls{syst-env} system will lead to a correlated state at a later time $t$ when the system is observed, so that if an experimenter were to intervene in this system at a later time $\tau>t$, they would be faced with the same conundrum as before. A resolution to this can be given by changing the approach to one which considers a dynamical map taking the operations that the experimenter can control, which mathematically are just quantum channels (or more generally, \gls{CP} maps), to output quantum states in what is known as the \emph{superchannel}~\cite{Modi2012}. As we will see, the superchannel naturally satisfies the complete positivity property and can account for two-time correlations between the initial preparation of the system and a final measurement.

However, as we mentioned, if the experimenter were to intervene on the system a third time, they would require a \emph{bigger} superchannel, i.e. a generalization taking as input more than a couple of interventions and that hence can account for multitime correlations. This is achieved by an object known as the \emph{process tensor}, which has a tensor-like structure in the sense of being a map from multiple \gls{CP} maps to a quantum state, and similarly satisfies the relevant mathematical properties such as being \gls{CP} itself.

The process tensor framework then naturally leads to a generalization of the classical notion of stochastic processes as a collection of random variables in time~\cite{Accardi_1976, Accardi_1982}. In particular, this gives a generalization of the concept of Markovianity from classical stochastic processes as that of a dependence in the past to make predictions. While there have been many attempts to do this~\cite{Breuer_2016}, these have proved inconsistent or insufficient to account for temporal correlations~\cite{Milz_PRL2019}. We will see that the process tensor gives a clear operational Markov condition generalizing the classical one, as well as providing a non-ambiguous measure of non-Markovianity.

Still, assuming no initial correlations and a weak coupling between \gls{syst} and \gls{env} along evolution is widely applicable and it has been fertile ground for research for many years~\cite{breuer2002theory}. Together with Chapter~\ref{sec:statmech}, this will bring forward questions that seem almost parallel to those in the foundations of statistical mechanics, namely, why are Markovian processes are so prevalent, when the theory of open quantum systems tells us---as we will see below---that temporal correlations should be the norm, and how does equilibration hold when temporal correlations come into play in multitime processes.

\section{Quantum channels}
\label{sec: quantum channels}

The transformations of quantum states that are considered (deterministically) physically realizable are known as quantum channels, and their usage date back from the 1960's with the work of George Sudarshan and collaborators~\cite{Sudarshan_1961, Sudarshan_2_1961} and some years after with Karl Kraus~\cite{Kraus}.

Quantum channels are linear maps $\Phi:\mscr{B}(\mscr{H}_\mathsf{in})\to\mscr{B}(\mscr{H}_\mathsf{out})$ for any two choice of spaces $\mscr{H}_\mathsf{in}$ and $\mscr{H}_\mathsf{out}$, with the additional conditions of being \gls{CP} and \gls{TP}, so as to preserve the properties of quantum states. The channel $\Phi$ can thus interchangeably be called a \gls{CPTP} map. In general the input and output spaces can differ; for simplicity and here we will usually assume $\mscr{H}_\mathsf{in}\cong\mscr{H}_\mathsf{out}$, unless stated otherwise, where we distinguish output from input spaces with a prime, e.g. $\mscr{H}_\mathsf{A}$ refers to an input space and $\mscr{H}_{\mathsf{A}^\prime}$ to an output space. Note also that whenever we refer to a map, we are axiomatically implying throughout this thesis a \emph{linear} map unless stated otherwise.\footnote{ A discussion can be seen in Ref.~\cite{Milz_operational}; in particular the requirement of linearity does not follow from the linearity of quantum mechanics but rather from linearity of mixing in a statistical theory.}

A map $\Phi$ being \gls{CP} means not only that its action on a positive operator is positive, $\Phi(X)\geq0$ for any $X\geq0$, but also that if its domain is only a subspace of a larger space, it will remain positive. That is, let $\mscr{H}\cong\mscr{H}_\ell\otimes\mscr{H}_\mathsf{in}$ where $\mscr{H}_\ell$ is an $\ell$-dimensional Hilbert space, and denote by $\mc{I}_\ell$ the identity map on $\mscr{H}_\ell$, i.e. a map acting trivially as $\mc{I}_\ell(X)=X$ for any $X\in\mscr{B}(\mscr{H})$; then the \gls{CP} property means
\begin{equation}
    (\mc{I}_\ell\otimes\Phi)(Y)\geq0,
\end{equation}
for all positive $Y\in\mc{B}(\mscr{H}_\ell\otimes\mscr{H}_{\mathsf{in}})$ and all $\ell$. In particular if positivity holds only for $\ell\leq\mathrm{L}$, then the map is said to be $\mathrm{L}$-positive. One such example is the map $\Phi_\mathrm{L}(X) = (\mathrm{L}-1)\tr(X)\mbb1_\mathrm{L}-X$~\cite{choi_1972}. The \gls{CP} property is motivated physically to ensure that these map states to states, even when these are correlated with another space. Nevertheless, historically, the \gls{CP} property hasn't been without contention, in particular for dynamical systems with initial correlations~\cite{Pechukas_1994, Alicki_1995, Pechukas_1995}, where it was argued that in such case either linearity or complete positivity would need to be given up. However as is explained in the following sections, when correctly accounting for correlations in composite systems it is not necessary to give up\footnote{ While we restrict ourselves to working with \gls{CP} maps, non-\gls{CP} maps have also been studied and are not without applications, see e.g. Ref.~\cite{Wood_thesis}} complete positivity~\cite{Modi2012, Milz_operational, milz_2019, Taranto_2020}.

Secondly, a map $\Phi$ is \gls{TP} whenever
$\tr[\Phi(X)]=\tr(X)$ for any $X\in\mscr{B}(\mscr{H}_\mathsf{in})$, and in particular this means that probabilities are conserved after the action of a \gls{TP} map on quantum states. As opposed to the \gls{CP} property, if a quantum map in question fails to be trace preserving but is \gls{TNI}, it remains a physical map, albeit with the interpretation that it can only ever be successfully realised with some probability. There is always a chance that some other transformation could take place.~\cite{Cappellini_2007}.

An example of a quantum channel that is ubiquitous in quantum theory is the partial trace, $\tr_{\mathsf{B}^\prime}:\mscr{B}(\mscr{H}_\mathsf{A}\otimes\mscr{H}_\mathsf{B})\to\mscr{B}(\mscr{H}_{\mathsf{A}^\prime})$, already introduced in Section~\ref{sec: quantum states and measurements} through Eq.~\eqref{eq: partial trace}. Other standard examples of quantum channels include the identity channel $\mc{I}$, with
\begin{equation}
        \mc{I}(\rho)=\rho,
\end{equation}

the unitary channel $\mc{U}$, with
\begin{equation}
    \mc{U}(\rho)=U\rho\,U^\dg,\quad\text{where}\quad\,UU^\dg=U^\dg{U}=\mbb1,
\end{equation}

and the depolarizing channel $\Lambda_q$, with
\begin{equation}
    \Lambda_q(\rho)=q\rho+(1-q)\f{\mbb1_\mc{H}}{d},\quad\,\text{with}\quad\,0\leq{q}\leq1.
    \label{eq: depolarizing}
\end{equation}

Similarly, we have also dealt already with the \emph{dephasing} channel, which as the name suggests, gets rid of the phases of a quantum state, or nullifies the off-diagonal terms of a state with respect to a given basis. Specifically, if we have a system with initial state $\rho\in\$(\mscr{H})$ and with evolution $U=\exp(-iHt)$ for a Hamiltonian $H=\sum_{n=1}^\mathfrak{D}{E}_nP_n$, with $P_n$ the projector to the eigenspace of energy $E_n$, then we can write
\begin{equation}
   \omega=\mc{D}(\rho),\quad\text{where}\,\quad\mc{D}(\cdot):=\sum_{n=1}^\mathfrak{D}P_n(\cdot)P_n,
   \label{eq: dephasing omega}
\end{equation}
for the dephasing map $\mc{D}$ under the Hamiltonian $H$. The twirling in Eq.~\eqref{eq: SW def} as well is another example of a quantum channel.

\section{Three representations of quantum channels}
The definition of a quantum channel can be somewhat abstract and described purely by its inputs and outputs. There are, however, ways to work practically with any quantum channel through their different representations, some of which can further be extended to any \gls{CPTP} or \gls{CPTNI} map or their generalizations. In the following we introduce three of the main representations for quantum channels.

\subsection{Operator sum representation}
The first representation we mention is a decomposition into operators which also gives a condition for the \gls{CP} property. This representation was first introduced in physics by Karl Kraus in 1971 in Ref.~\cite{Kraus} based on work by W. Forrest Stinespring in 1955~\cite{Stinespring}, and it is thus known as the operator sum representation, the Kraus representation or the Stinespring representation.

\begin{theorem}[Kraus representation~\cite{bengtsson2006geometry}]
\label{thm: Kraus representation}
A map $\Phi:\mscr{B}(\mscr{H}_\mathsf{in})\to\mscr{B}(\mscr{H}_\mathsf{out})$ is \gls{CP} if and only if its action has the form
\begin{equation}
    \Phi(\rho)=\sum_{i}K_i\rho{K}_i^\dg,
\end{equation}
where the operators $K_i$ are known as Kraus operators~\cite{kraus1983states}.
\end{theorem}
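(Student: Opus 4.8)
The plan is to prove the two implications separately: the ``if'' direction is immediate from the definition of complete positivity given above, while the ``only if'' direction rests on the positivity of a single operator built from $\Phi$ --- the object that will reappear as its Choi state in the next section.

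For the ``if'' direction, suppose $\Phi(\rho)=\sum_i K_i\rho K_i^\dg$. Given any ancilla $\mscr{H}_\ell$ and any positive $Y\in\mscr{B}(\mscr{H}_\ell\otimes\mscr{H}_\mathsf{in})$, I would write $(\mc{I}_\ell\otimes\Phi)(Y)=\sum_i(\mbb{1}_\ell\otimes K_i)\,Y\,(\mbb{1}_\ell\otimes K_i)^\dg$ and note that each summand has the form $AYA^\dg$ with $A=\mbb{1}_\ell\otimes K_i$, hence is positive semidefinite; a sum of positive operators is positive, so $\Phi$ is $\ell$-positive for every $\ell$, i.e. \gls{CP}. (Trace preservation, not part of the statement, would correspond to the extra condition $\sum_i K_i^\dg K_i=\mbb{1}$.)

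For the converse, I would fix an orthonormal basis $\{|m\rangle\}_{m=1}^{d_\mathsf{in}}$ of $\mscr{H}_\mathsf{in}$, introduce a copy $\mscr{H}'\cong\mscr{H}_\mathsf{in}$, and form the unnormalized maximally entangled vector $|\Omega\rangle=\sum_m|m\rangle\otimes|m\rangle\in\mscr{H}'\otimes\mscr{H}_\mathsf{in}$. Define $\mc{C}_\Phi:=(\mc{I}\otimes\Phi)(|\Omega\rangle\!\langle\Omega|)=\sum_{m,n}|m\rangle\!\langle n|\otimes\Phi(|m\rangle\!\langle n|)$, an operator on $\mscr{H}'\otimes\mscr{H}_\mathsf{out}$. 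Applying the \gls{CP} hypothesis with $\ell=d_\mathsf{in}$ to the positive operator $|\Omega\rangle\!\langle\Omega|$ gives $\mc{C}_\Phi\geq0$, so the spectral theorem yields $\mc{C}_\Phi=\sum_i|w_i\rangle\!\langle w_i|$ with the eigenvalues absorbed into the (unnormalized) eigenvectors $|w_i\rangle\in\mscr{H}'\otimes\mscr{H}_\mathsf{out}$. Expanding $|w_i\rangle=\sum_m|m\rangle\otimes|k_i^m\rangle$ and setting $K_i:=\sum_m|k_i^m\rangle\!\langle m|:\mscr{H}_\mathsf{in}\to\mscr{H}_\mathsf{out}$, a short computation gives $\sum_i|w_i\rangle\!\langle w_i|=\sum_{m,n}|m\rangle\!\langle n|\otimes\big(\sum_iK_i|m\rangle\!\langle n|K_i^\dg\big)$. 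Matching this against the definition of $\mc{C}_\Phi$ term by term --- legitimate because the $\{|m\rangle\!\langle n|\}$ are linearly independent in $\mscr{B}(\mscr{H}')$ --- gives $\Phi(|m\rangle\!\langle n|)=\sum_iK_i|m\rangle\!\langle n|K_i^\dg$ for all $m,n$, and linearity extends this to $\Phi(\rho)=\sum_iK_i\rho K_i^\dg$ for every $\rho$.

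The main obstacle is bookkeeping rather than conceptual: one must keep straight the ``vectorization'' correspondence between a vector $|w_i\rangle$ in $\mscr{H}'\otimes\mscr{H}_\mathsf{out}$ and an operator $K_i:\mscr{H}_\mathsf{in}\to\mscr{H}_\mathsf{out}$ (which tensor factor is the domain, which index gets conjugated), and notice that it is precisely the freedom to take the ancilla as large as the input --- here $\ell=d_\mathsf{in}$ --- that makes $\mc{C}_\Phi\geq0$ available from the \gls{CP} assumption, whereas mere positivity of $\Phi$ would not suffice. As a byproduct the construction shows one may take the number of Kraus operators equal to $\mathrm{rank}(\mc{C}_\Phi)\leq d_\mathsf{in}d_\mathsf{out}$, and that two Kraus families describe the same channel iff they are related by an isometry, but these refinements are not needed for the statement as given.
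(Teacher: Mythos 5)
Your proof is correct, and it follows essentially the route the paper itself uses: the thesis states the theorem with a textbook citation, but its Choi–Jamio\l{}kowski section carries out exactly your ``only if'' argument, decomposing the positive Choi state spectrally and vectorizing the eigenvectors into Kraus operators $\mathfrak{G}_i$ (and the Stinespring-dilation passage gives the same construction from the other representation). The easy ``if'' direction, which you handle by noting $(\mc{I}_\ell\otimes\Phi)(Y)=\sum_i(\mbb{1}_\ell\otimes K_i)Y(\mbb{1}_\ell\otimes K_i)^\dg\geq0$, is likewise the standard step left implicit in the text, so nothing further is needed.
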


In particular, if $\Phi$ is \gls{TP}, then \begin{equation}
    \sum_iK_i^\dg{K}_i=\mbb1_\mathsf{in},
    \label{eq: Kraus complete}
\end{equation}
whereas if it is such that $\Phi(\mbb1_\mathsf{in})=\mbb1_\mathsf{out}$, i.e. so-called \emph{unital}, then
\begin{equation}
    \sum_iK_i{K}_i^\dg=\mbb1_\mathsf{out}.
\end{equation}

The Kraus representation is clearly not unique. The minimal number of Kraus operators is known as the Kraus rank, and satisfies $\kappa\leq{d}_\mathsf{in}d_\mathsf{out}$; in particular, there is always a representation with $\kappa$ orthogonal Kraus operators, i.e. such that $\tr[K_iK_j]=\delta_{ij}$ called the canonical Kraus form.

\begin{example}As a simple example, consider a two-level, qubit system, where any density matrix can be written as $\rho=\f{1}{2}(\mbb1+\undervec{r}\cdot\undervec{\sigma})$, where $\undervec{r}\in\mbb{R}^3$ is such that $|\undervec{r}|\leq1$, and $\undervec\sigma$ is a vector of the Pauli matrices
\begin{equation}
    \sigma_x=\begin{pmatrix}0&1\\1&0\end{pmatrix},\quad
    \sigma_y=\begin{pmatrix}0&-i\\i&0\end{pmatrix}
    ,\quad
    \sigma_z=\begin{pmatrix}1&0\\0&-1\end{pmatrix}.
\end{equation}

Then the Kraus operators
\begin{equation}
    K_1=\f{\sqrt{1+3q}}{2}\mbb1
    ,\quad{K}_2=\f{\sqrt{1-q}}{2}\sigma_x,
    \quad{K}_3=\f{\sqrt{1-q}}{2}\sigma_y,
    \quad{K}_4=\f{\sqrt{1-q}}{2}\sigma_z
\end{equation}
describe the action of a qubit depolarizing channel $\Lambda_q$ in Eq~\eqref{eq: depolarizing}. This map can equivalently be described as one with an action $\Lambda_Q(\rho)=Q\rho+(1-Q)\sum_i\sigma_i\rho\,\sigma_i$ through a relation between $Q$ and $q$~\cite{nielsen2000quantum}.
\end{example}

\subsection{Dilations}
A so-called purification was introduced in Section~\ref{sec: quantum states and measurements} as the fact that every mixed quantum state $\rho\in\mscr{B}(\mscr{H}_\mathsf{A})$ can be represented by means of a pure quantum state in a larger Hilbert space, $|\psi\rangle\!\langle\psi|\in\mscr{B}(\mscr{H}_\mathsf{A}\otimes\mscr{H}_\mathsf{B})$, as the reduced state $\rho=\tr_\mathsf{B}(|\psi\rangle\!\langle\psi|)$. Quantum states, however, can actually be thought of as being a particular case of a quantum channel $\rho:\mbb{C}\to\mscr{B}(\mscr{H})$, so really the idea of purification more generally extends to any quantum channel.

The analogous purification of quantum channels is more precisely referred to as a \emph{Stinespring dilation} and follows from a more general mathematical result known as the Stinespring dilation theorem~\cite{Stinespring}. In particular, it implies that there exists an ancillary system, say $\mathsf{B}$, such that the action of a \gls{CPTP} map $\Phi:\mscr{B}(\mscr{H}_{\mathsf{A}_\mathsf{in}})\to\mscr{B}(\mscr{H}_{\mathsf{A}_\mathsf{out}})$ can be written as 
\begin{equation}
    \Phi(\rho)=\tr_{\mathsf{B}_\mathsf{out}}[\,\mc{U}(\rho\otimes\beta)],
    \label{eq: dilation}
\end{equation}
where here $\mc{U}:\mscr{B}(\mscr{H}_{\mathsf{A}_\mathsf{in}}\otimes\mscr{H}_{\mathsf{B}_\mathsf{in}})\to\mscr{B}(\mscr{H}_{\mathsf{A}_\mathsf{out}}\otimes\mscr{H}_{\mathsf{B}_\mathsf{out}})$ is a unitary map, and with $\beta\in\mscr{B}(\mscr{H}_{\mathsf{B}_\mathsf{in}})$ a given quantum state in the ancillary space. Notice $d_{\mathsf{AB}_\mathsf{in}}=d_{\mathsf{AB}_\mathsf{out}}$, although the individual dimensions may differ. Similarly, the more general Stinespring theorem considers isometries\footnote{ An isometry on a Hilbert space is a linear operator that preserves distances; this is equivalent to any linear operator $W$ such that $\|Wv\|=\|v\|$ for all $v$ in such space, with $\|\cdot\|$ a corresponding norm. A unitary operator is a particular case of an isometry.} that might not actually be unitary and thus applies in general to \gls{CP} maps; here we only deal with dilations of the form of Eq.~\eqref{eq: dilation}.

The dilation representation is a powerful one that serves as a cornerstone in open quantum dynamics, as it can be interpreted as all quantum channels arising from a unitary interaction with an environment, as we detail in section~\ref{sec: open quantum dynamics}. Notice this implies \gls{CPTP} maps are precisely the crucial type of maps when it comes to the physical picture; this will be dealt with in depth in the following sections. More generally, the dilation representation is useful when the properties of the global unitary channel can be used to deduce properties of the channel in question, as it can be done with more general maps as in section~\ref{sec: process tensor}; Stinespring theorem assures that a dilation exists for all \gls{CP} maps, even if it is not unique (more precisely, they can be said to be unique up to a unitary transformation).

\begin{figure}[t]
    \centering
    \begin{tikzpicture}
    \begin{scope}
    \shade[outer color=C3!50!white, inner color=white, draw=black, rounded corners, thick] (-0.75,-0.75) rectangle (0.75,0.75);
    \draw[thick,-] (-1.75,0) -- (-0.75,0);
    \draw[thick, -] (0.75,0) -- (1.75,0);
    \node[left] at (-1.75,0) {$\rho$};
    \node[above] at (-1.25,0) {$\mathsf{A}_\mathsf{in}$};
    \node at (0,0) {\LARGE$\Phi$};
    \node[above] at (1.25,0) {$\mathsf{A}_\mathsf{out}$};
    \node[right] at (1.75,0) {$\rho_\mathsf{out}$};
    \end{scope}
    \begin{scope}[shift={(3.5,0)}]
    \node at (0,0) {$\Longleftrightarrow$};
    \end{scope}
    \begin{scope}[shift={(1.75,0)}]
    \node[left] at (3.5,0.5) {$\beta$};
    \node[above] at (4,0.5) {$\mathsf{B}_\mathsf{in}$};
    \node[left] at (3.5,-0.5) {$\rho$};
    \node[above] at (4,-0.5) {$\mathsf{A}_\mathsf{in}$};
    \draw[thick, -] (3.5,0.5) -- (4.5,0.5);
    \draw[thick, -] (3.5,-0.5) -- (4.5,-0.5);
    \shade[outer color=C2!60!white, inner color=white, draw=black, rounded corners, thick] (4.5,-1) rectangle (6,1);
    \node at (5.25,0) {\LARGE$\mc{U}$};
    \draw[thick, -] (6,0.5) -- (7,0.5);
    \draw[thick, -] (6,-0.5) -- (7,-0.5);
    \node[above] at (6.5,0.5) {$\mathsf{B}_\mathsf{out}$};
    \node[above] at (6.5,-0.5) {$\mathsf{A}_\mathsf{out}$};
    \node[right] at (7,-0.5) {$\rho_\mathsf{out}$};
    \node[right] at (7,0.5) {\trash};
    \end{scope}
    \end{tikzpicture}
    \caption[Stinespring dilation of a quantum channel]{\textbf{Stinespring dilation of a quantum channel:} The action of a quantum channel $\Phi:\mscr{B}(\mscr{H}_\mathsf{A})\to\mscr{B}(\mscr{H}_{\mathsf{A}\mathsf{out}})$ on a state $\rho$ can be represented as the reduced state of the outcome of a unitary map $\mc{U}:\mscr{B}(\mscr{H}_\mathsf{A}\mathsf{in}\otimes\mscr{H}_\mathsf{B}\mathsf{in})\to\mscr{B}(\mscr{H}_{\mathsf{A}_\mathsf{out}}\otimes\mscr{H}_{\mathsf{B}_\mathsf{out}})$ acting on a joint state $\rho\otimes\beta$. We denote quantum maps by boxes acting on inputs on the left and yielding outputs to the right. Lines denote Hilbert spaces, with no notion of time or direction attached to them; we normally assume $\mscr{H}_\mathsf{in}\cong\mscr{H}_\mathsf{out}$ and omit the output space labels when sufficiently clear.}
    \label{fig: Stinespring dilation of a quantum channel}
\end{figure}

A practical way to think intuitively about quantum operations is by means of graphical diagrams depicting the relations between maps and states. As we will see, this seemingly simple way of depicting abstract quantities is a handy but powerful tool that allows to deal with more general and complex situations; in essence it is used to represent any quantum computation as a circuit of inputs, quantum channels, which here are the equivalent of logical operations, and outputs. While we do not stick to all the conventions, a thorough description of the graphical calculus often employed in open quantum systems and quantum information can be seen in Ref.~\cite{coecke_kissinger_2017, Wood2015Tensor}.

In Fig.~\ref{fig: Stinespring dilation of a quantum channel} we depict the dilation of a \gls{CP} map $\Phi:\mscr{B}(\mscr{H}_{\mathsf{A}_\mathsf{in}})\to\mscr{B}(\mc{H}_{\mathsf{A}_\mathsf{out}})$ acting on an input state $\rho$ and rendering as outcome a state $\rho_\mathsf{out}=\Phi(\rho)$. Conventionally we depict inputs to the left and outputs to the right, with lines representing Hilbert spaces and boxes the quantum maps; we further represent pictorially the partial tracing operation with a trash can (\trash)\,symbol. Whenever there are parallel lines, these represent a tensor product, as in Fig.~\ref{fig: Stinespring dilation of a quantum channel} for $\mscr{H}_\mathsf{A}\otimes\mscr{H}_\mathsf{B}$; in this case the input state is a product state $\rho\otimes\beta$ as well, but in general this need not be the case.

It is instructive to see how the Stinespring dilation relates to the Kraus representation. Let us take Eq.~\eqref{eq: dilation} assuming that the ancillary space is prepared in a pure state $|\beta\rangle\!\langle\beta|$; then we can write explicitly the partial trace by introducing a basis $\{|b\rangle\}_{b=1}^{d_{\mathsf{B}_\mathsf{out}}}$ for $\mscr{H}_{\mathsf{B}_\mathsf{out}}$ as
\begin{align}
    \Phi(\rho)&=\sum_{b=1}^{d_{\mathsf{B}_\mathsf{out}}}(\mbb1_{\mathsf{A}_\mathsf{out}}\otimes\langle{b}|)\,U\,(\rho\otimes|\beta\rangle\!\langle\beta|)\,U^\dg(\mbb1_{\mathsf{A}_\mathsf{out}}\otimes|b\rangle)\nonumber\\
    &=\sum_{b=1}^{d_{\mathsf{B}_\mathsf{out}}}(\mbb1_{\mathsf{A}_\mathsf{out}}\otimes\langle{b}|)\,U\,(\mbb1_{\mathsf{A}_\mathsf{in}}\otimes|\beta\rangle)\,(\rho\otimes\mbb1_{\mathsf{B}_\mathsf{in}})\,(\mbb1_{\mathsf{A}_\mathsf{in}}\otimes\langle\beta|)\,U^\dg(\mbb1_{\mathsf{A}_\mathsf{out}}\otimes|b\rangle)\nonumber\\
    &:=\sum_{b=1}^{d_{\mathsf{B}_\mathsf{out}}}K_b\,\rho\,{K}_b^\dg,
    \label{eq: Stinespring CPTP}
\end{align}
where we identified $K_b:=(\mbb1_{\mathsf{A}_\mathsf{out}}\otimes\langle{b}|)\,U\,(\mbb1_{\mathsf{A}_\mathsf{in}}\otimes|\beta\rangle)$ with  the Kraus operators~\cite{kraus1983states}. These clearly satisfy the completeness property in Eq.~\eqref{eq: Kraus complete}. For a canonical Kraus representation this in turn implies a minimal $d_{\mathsf{B}_\mathsf{out}}\leq{d}_{\mathsf{A}_\mathsf{in}}d_{\mathsf{A}_\mathsf{out}}$. This can then be done similarly if the ancilla is prepared in a mixed state if written as a convex combination~\cite{bengtsson2006geometry}.

\subsection{Choi-Jamio\l{}kowski isomorphism}
The third main representation that we will deal with in this thesis is based on a (one-to-one) correspondence between quantum channels and quantum states. To begin with, consider a basis $\{|i\rangle\}$ for a $d$-dimensional Hilbert space $\mscr{H}$, and define the \emph{vectorization} map by
\begin{equation}
    \mathrm{vec}(|i\rangle\!\langle{j}|)=|ij\rangle.
\end{equation}

This gives an isomorphism $\mscr{B}(\mscr{H})\cong\mscr{H}\otimes\mscr{H}$ in the sense that it is a one to one correspondence between linear bounded operators and vectors. In general we can do this for any operator $A\in\mscr{B}(\mscr{H})$ in the basis above with components $\alpha_{ij}=\langle{i}|A|j\rangle$ as
\begin{equation}
    |\phi_A\rangle:=\mathrm{vec}(A)=\sum_{i,j=1}^d{\alpha_{ij}}|ij\rangle,
\end{equation}
or similarly for a vector $|\varphi\rangle\in\mscr{H}\otimes\mscr{H}$, we can write this in a given basis and turn one of the basis vectors in a covector. Now, how exactly do we get this correspondence? Notice that we may write
\begin{align}
    |\phi_A\rangle&=\sum_{i,j=1}^d\alpha_{ij}|i\rangle\otimes|j\rangle=\sum_{i,j,k=1}^d\alpha_{ij}|k\rangle\otimes|i\rangle\!\langle{j}|k\rangle=\sum_{k=1}^d|k\rangle\otimes{A}|k\rangle\nonumber\\
    &=(\mbb1\otimes{A})\sum_{i=1}^d|ii\rangle,
\end{align}
where the state $|\tilde{\psi}\rangle:=\sum|ii\rangle$ is precisely the maximally entangled state in Eq.~\eqref{eq: def maximally entangled state} up to a normalization factor. We will usually refer to this as an unnormalized maximally entangled state. It also does not matter if we act with $A$ on the first or the second Hilbert space. This means we can write $|\phi_A\rangle=(\mbb1\otimes{A})|\tilde{\psi}\rangle=(A\otimes\mbb1)|\tilde{\psi}\rangle$.

Let us label the pair of copies of the spaces with $\mathsf{1}$ and $\mathsf{2}$, so that $|\phi_A\rangle\in\mscr{H}_\mathsf{1}\otimes\mscr{H}_\mathsf{2}$. We can go the opposite way via
\begin{equation}
    A=\tr_\mathsf{2}\left(|\phi_{A^\mathrm{T}}\rangle\!\langle\tilde{\psi}|\right)=\tr_\mathsf{2}\left[\left(\mbb1_\mathsf{1}\otimes{A}^\mathrm{T}\right)|\tilde{\psi}\rangle\!\langle\tilde{\psi}|\right],
\end{equation}
where $\mathrm{T}$ denotes a transpose, i.e. $A^\mathrm{T}=\sum\alpha_{ji}|i\rangle\!\langle{j}|$.

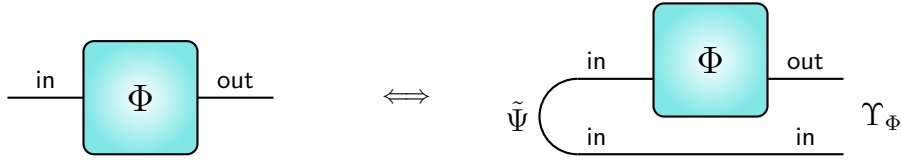
\begin{figure}[t]
    \centering
    \begin{tikzpicture}
    \begin{scope}
    \shade[outer color=C3!50!white, inner color=white, draw=black, rounded corners, thick] (-0.75,-0.75) rectangle (0.75,0.75);
    \draw[thick,-] (-1.75,0) -- (-0.75,0);
    \draw[thick, -] (0.75,0) -- (1.75,0);
    \node[above] at (-1.25,0) {$\mathsf{in}$};
    \node at (0,0) {\LARGE$\Phi$};
    \node[above] at (1.25,0) {$\mathsf{out}$};
    \end{scope}
    \begin{scope}[shift={(3.5,0)}]
    \node at (0,0) {$\Longleftrightarrow$};
    \end{scope}
    \begin{scope}[shift={(2.25,0.5)}]
    \node[left] at (3,-0.75) {\Large$\tilde{\Psi}$};
    \node[above] at (3.75,-0.25) {$\mathsf{in}$};
    \node[above] at (3.75,-1.25) {$\mathsf{in}$};
    \draw[thick, -] (3.5,-0.25) -- (4.5,-0.25);
    \draw[thick] (3.5,-0.25) arc (90:270:0.5);
    \draw[thick, -] (3.5,-1.25) -- (7,-1.25);
    \shade[outer color=C3!50!white, inner color=white, draw=black, rounded corners, thick] (4.5,-0.75) rectangle (6,0.75);
    \node at (5.25,0) {\LARGE$\Phi$};
    \draw[thick, -] (6,-0.25) -- (7,-0.25);
    \node[above] at (6.5,-0.25) {$\mathsf{out}$};
    \node[above] at (6.5,-1.25) {$\mathsf{in}$};
    \node[right] at (7.1,-0.75) {\Large$\Upsilon_\Phi$};
    \end{scope}
    \end{tikzpicture}
    \caption[Choi-Jamio\l{}kowski isomorphism]{\textbf{The Choi-Jamio\l{}kowski isomorphism} gives a one-to-one correspondence between quantum maps and states. A map $\Phi:\mscr{B}(\mscr{H}_\mathsf{in})\to\mscr{B}(\mscr{H}_{\mathsf{out}})$ can be represented as a quantum state $\Upsilon_\Phi\in\mscr{B}(\mscr{H}_{\mathsf{out}}\otimes\mscr{H}_{\mathsf{in}})$, by letting $\Phi$ act on half an unnormalized maximally entangled state $\tilde\Psi\in\mscr{B}(\mscr{H}_{\mathsf{in}}\otimes\mscr{H}_{\mathsf{in}})$, which is represented by an arch joining the entangled systems.}
    \label{fig: Choi-Jamiolkowski}
\end{figure}

So what we expect now is to have an analogous correspondence between quantum channels and density matrices; and indeed, if we let
\begin{equation}
    \tilde{\Psi}:=\sum_{i,j=1}^{d_\mathsf{A}}|ii\rangle\!\langle{jj}|\in\mscr{B}(\mscr{H}_\mathsf{in}\otimes\mscr{H}_\mathsf{in}),
\end{equation}
and $\Phi:\mscr{B}(\mscr{H}_\mathsf{in})\to\mscr{B}(\mscr{H}_\mathsf{out})$ a quantum channel, we have
\begin{equation}
    \Upsilon_\Phi=(\Phi\otimes\mc{I})\tilde{\Psi},
    \label{eq: definition CJI}
\end{equation}
and we can recover the action of the map $\Phi$ through
\begin{equation}
    \Phi(\rho)=\tr_\mathsf{in}[\Upsilon_\Phi(\mbb1_\mathsf{out}\otimes{\rho}^\mathrm{T})],
    \label{eq: map through Choi}
\end{equation}
for any $\rho\in\$(\mscr{H}_\mathsf{in})$, and where the partial trace is over the input space. The matrix in Eq.~\eqref{eq: definition CJI} defines what is referred to as the \gls{CJI}~\cite{Jamiolkowski, Choi_1975}, and we will refer to $\Upsilon_\Phi$ as the Choi state of $\Phi$. It can also be visualized diagrammatically as in Fig.~\ref{fig: Choi-Jamiolkowski}.

The \gls{CJI}, however, as opposed to the Kraus representation or the dilation representation, not only applies to quantum channels or \gls{CP} maps but generally to any linear map. Here we will consider primarily \gls{CP} maps, for which Eq.~\eqref{eq: definition CJI} implies $\Upsilon_\Phi\geq0$; indeed a positive Choi state can be decomposed as $\Upsilon_\Phi=\sum_{i=1}^{\mathsf{D}}{g_i}|\gamma_i\rangle\!\langle\gamma_i|$ with its $\mathsf{D}=d_\mathsf{in}d_\mathsf{out}$ nonnegative eigenvalues $g_i$ and eigenvectors $\{|\gamma_i\rangle\}$, then
\begin{align}
    \Phi(X)&=\sum_{i=1}^{\mathsf{D}}g_i\tr_\mathsf{in}[|\gamma_i\rangle\!\langle\gamma_i|(\mbb1_\mathsf{out}\otimes{X}^\mathrm{T})]\nonumber\\
    &=\sum_{i=1}^{\mathsf{D}}\sum_{e=1}^{d_\mathsf{in}}g_i(\mbb1_\mathsf{out}\otimes\langle{e}_\mathsf{in}|)|\gamma_i\rangle\!\langle\gamma_i|(\mbb1_\mathsf{out}\otimes{X}^\mathrm{T}|e_\mathsf{in}\rangle)\nonumber\\
    &=\sum_{i=1}^{\mathsf{D}}\sum_{e,\varepsilon=1}^{d_\mathsf{in}}g_i(\mbb1_\mathsf{out}\otimes\langle{e}_\mathsf{in}|)|\gamma_i\rangle\!\langle\gamma_i|(\mbb1_\mathsf{out}\otimes|\varepsilon_\mathsf{in}\rangle)\langle\varepsilon_\mathsf{in}|{X}^\mathrm{T}|e_\mathsf{in}\rangle\nonumber\\
    &=\sum_{i=1}^{\mathsf{D}}\sum_{e,\varepsilon=1}^{d_\mathsf{in}}g_i(\mbb1_\mathsf{out}\otimes\langle{e}_\mathsf{in}|)|\gamma_i\rangle\!\langle{e}_\mathsf{in}|{X}|\varepsilon_\mathsf{in}\rangle\!\langle\gamma_i|(\mbb1_\mathsf{out}\otimes|\varepsilon_\mathsf{in}\rangle)\nonumber\\
    &:=\sum_{i=1}^\mathsf{D}\mathfrak{G}_iX\mathfrak{G}_i^\dg,
\end{align}
which gives a Kraus representation with $\mathfrak{G}_i:=\sum_{e=1}^{d_\mathsf{in}}\sqrt{g_i}(\mbb1_\mathsf{out}\otimes\langle{e}_\mathsf{in}|)|\gamma_i\rangle\!\langle{e}_\mathsf{in}|$ each of the Kraus operators.

If additionally the map is \gls{TP} as well, we have
\begin{equation}
    \tr_\mathsf{out}[\Upsilon_\Phi]=\sum_{i,j=1}^{d_\mathsf{in}}\tr[\Phi(|i\rangle\!\langle{j}|)]|i\rangle\!\langle{j}|=\mbb1_\mathsf{in},
    \label{eq: Choi state TP condition}
\end{equation}
which makes evident as well that the Choi \emph{state} does not have unit trace by construction.

The punchline of the \gls{CJI} is that linear maps can be represented as matrices in a larger space, and in particular that quantum channels have corresponding positive matrices, also in a larger space. Two straightforward examples come for the identity map, which leads simply to $\Upsilon_\mc{I}=\tilde\Psi$ and the unitary map, which leads also to a maximally entangled state $\Upsilon_\mc{U}=(\mc{U}\otimes\mc{I})\tilde\Psi$ between the input and output states.

The Choi state in general will be a relevant tool in this thesis, albeit for more general maps, as we will see below.

\section{Open quantum dynamics}\label{sec: open quantum dynamics}
Let us now go back to the physical motivation and consider a generic scenario where an experimenter has access to the subpart \gls{syst} of a larger composite \gls{syst-env}, where an environment \gls{env} is out of access and control to the experimenter. The whole \gls{syst-env} universe is closed so in general it will evolve unitarily. The system \gls{syst}, however, from the perspective of the experimenter, will be described by a \gls{CP} map taking preparations and yielding outcome states, which suggests a Stinespring representation as in Eq.~\eqref{eq: dilation}.

The standard experimental procedures to reconstruct quantum states and quantum dynamical maps are known as quantum state tomography (\texttt{QST}) and quantum process tomography (\texttt{QPT}), respectively. \texttt{QST} relies on the measurement statistics, $\vec{p}=(\tr[\mathrm{M}_1\rho],\ldots,\tr[\mathrm{M}_\ell\rho])$, of a given measurement $\{\mathrm{M}_i\}$ (which has to form a basis on the space of the system) to reconstruct $\rho$. \texttt{QPT}, on the other hand, aims to reconstruct the dynamical \gls{CP} map, $\Phi$, by preparing a set of linearly independent input states $\{\rho_i\}$, sending them through the map, and then collecting the outputs as a linear combination of the inputs, which then allows to reconstruct a pair of Kraus operators of the channel. We will not deal in detail with either procedure but it lets us put into perspective what the real world situation is; more on either can be seen e.g. in Ref.~\cite{nielsen2000quantum}.

Let us consider then the following scenario: at some time, which we may set as $t=0$, the system \gls{syst} is uncorrelated from \gls{env} so that the experimenter is able to prepare inputs $\rho_\mathsf{S}(0)$ and reconstruct the final states by \texttt{QST} at some time $t=\delta{t}$. We will only consider unitary evolution of the whole under a time-independent Hamiltonian $H$, so that
\begin{equation}
    \mc{U}_{\delta{t}}(\cdot)=\exp(-iH\delta{t})(\cdot)\exp(iH\delta{t}),
    \label{eq: unitary map Hamiltonian}
\end{equation}
is the unitary map of the evolution during $\delta{t}$. Letting the environment state at $t=0$ be $\varepsilon$, we have
\begin{equation}
    \rho_\mathsf{S}(t)=\mscr{Z}_{\delta{t}}[\rho_\mathsf{S}(0)]=\tr_\mathsf{E}[\,\mc{U}_{\delta{t}}(\rho_\mathsf{S}(0)\otimes\varepsilon)],
\end{equation}
where here $\mscr{Z}_{\delta{t}}:\$(\mscr{H}_\mathsf{S})\to\$(\mscr{H}_\mathsf{S})$ is the dynamical map taking \gls{syst} states to \gls{syst} states.

We can allow time to vary for the dynamical map so that $\{\mscr{Z}_t:t\geq0\}$ constitutes a one-parameter family of dynamical maps with $\mscr{Z}_0=\mc{I}$; we can furthermore see that these are \gls{CPTP}, as we did for the Stinespring dilation in Eq.~\eqref{eq: Stinespring CPTP}. Now, experimentally, \texttt{QPT} can be used to reconstruct the dynamical map.

\subsection{The Born-Markov approximation}
The reconstruction of the dynamical maps $\mscr{Z}_t$ can be done for numerous physical cases and phenomenological models~\cite{breuer2002theory}. Despite this, even ignoring the fact that it is very difficult to prepare a system that is uncoupled to the environment~\cite{PhysRevA.76.042113}, computing the dynamical map is typically unmanageable without making some further simplifying assumptions.

Notice that even if the state of the whole composite, $\rho_\mathsf{SE}$, obeys the Schr\"{o}dinger equation, this does not immediately imply that an analogue differential equation, i.e. a time-local master equation, for $\rho_\mathsf{S}$ exists. In particular, if $\rho_\mathsf{S}$ obeys a differential equation, this means that the state of \gls{syst} is local in time~\cite{Preskill_QI_notes}, i.e. to determine $\rho_\mathsf{S}(t+dt)$ we would only need to know $\rho_\mathsf{S}(t)$. However, the interaction with the environment makes this generically impossible, as information will be exchanged between \gls{syst} and \gls{env} and we would need to know the state $\rho_\mathsf{S}$ at earlier times as well.

The first notion of locality in time is what is commonly understood in open quantum dynamics as \emph{Markovianity}. We will expand on this notion below, however, the general concept is the same: that of \emph{memorylessness}, as opposed to \emph{non-Markovianity}, in which we need to know the past states of a given system to determine its future. Crucially, notice once again that open quantum systems theory is non-Markovian \emph{by definition}, and that Markovian open quantum dynamics are in reality impossible.

Despite this, as is common when doing idealizations in all of physics, approximating open systems as Markovian is effective for a large class of physical scenarios and has a wide applicability~\cite{carmichael1993open, blanchard2000decoherence, schlosshauer2007decoherence, alicki2007quantum}. The simplest assumptions that can be made to render the dynamics Markovian are known as the Born-Markov approximation, which lead to an important class, albeit the simplest one, of open quantum dynamics and master equations.

\begin{definition}[Born-Markov approximation]
\label{def: Born-Markov approximation}
These are two assumptions:
\begin{compactenum}[\itshape i.]
    \item Weak coupling (Born): The coupling between \gls{syst} and \gls{env} is sufficiently weak and \gls{env} is reasonably large, so that $\varepsilon$ is practically unaffected and the whole \gls{syst-env} state remains approximately in product state at all times: $\rho_\mathsf{SE}(t)\approx\rho_\mathsf{S}(t)\otimes\varepsilon,\,\forall{t}\geq0$.
    \item Forgetful environment (Markov): The self-correlations within \gls{env} induced by the interaction with \gls{syst} decay rapidly compared to the timescale over which $\rho_\mathsf{S}$ changes noticeably.
\end{compactenum}
\end{definition}

Beginning from the Schr\"{o}dinger equation, Eq.~\eqref{eq: Schroedinger}, for a total Hamiltonian taking into account the \gls{syst-env} interaction, a master equation by the name of Lindblad-Franke-GKS equation\footnote{ GKS standing after Gorini, Kossakowski and Sudarshan. The equation is sometimes only referred to after Lindblad. The contributions are: Franke~\cite{Franke1976}, GKS~\cite{GKS} and Lindblad~\cite{Lindblad1976} (all from 1976).} can be derived by assuming the Born-Markov approximation~\cite{breuer2002theory}. While the Born-Markov conditions are the underlying physical conditions to render the Hamiltonian dynamics Markovian, the Lindblad-Franke-GKS equation in a general form can be obtained by positing that the dynamical map satisfies
\begin{equation}
   \mscr{Z}_{t_1+t_2}=\mscr{Z}_{t_1}\mscr{Z}_{t_2},
    \label{eq: divisibility dynamical maps} 
\end{equation}
where $\mscr{Z}_{t_1}\mscr{Z}_{t_2}:=\mscr{Z}_{t_1}\circ\mscr{Z}_{t_2}$ implicitly means a composition of maps. This is known as the divisibility, or semigroup, property, and it is usually employed in the definition of quantum Markovianity, as it resembles a property known by the same name in the theory of classical stochastic processes. We will come back to this issue below, as it will be important for the discussion in this thesis, the point about this property now is that it allows to write $\mscr{Z}_t=\exp(\mc{L}\,t)$, with $\mc{L}$ known as the generator of the semigroup, which then can be written in its most general form~\cite{breuer2002theory}, and which in particular can be derived explicitly using the Born-Markov approximation.

The assumptions in the Born-Markov approximation can be boiled down to positing that the environment is negligibly affected by its interaction with the system and that it is forgetful or dissipative. Both assumptions are widely applicable in physical settings, the Born assumption for example could be taken considered for macroscopic systems or on systems within very large environments and the Markov condition could be used if one considers an environment at a high enough temperature.

This already begs the question of how Markovian behaviour comes about with no a-priori assumptions. As expected from the discussion in Chapter~\ref{sec:statmech}, we can make a very clear analogy between this question and that of the foundations of statistical mechanics. To do so, let us continue analysing the problem of system-environment correlations in time and how to resolve them.

\subsection{The initial correlation problem}
Let us embrace now the inevitability of the coupling of system \gls{syst} to the environment \gls{env}, i.e. in a standard experimental setup, the experimenter's preparation of input states might affect the environment as well. The reconstruction of dynamical maps by \texttt{QPT} was suggested in the late nineties~\cite{Nielsen_QPT} and soon after experiments began to be carried out~\cite{Nielsen1998,Childs_2001,Mitchell_2003,PhysRevLett.90.193601,Weinstein_2004, OBrien_2004, Myrskog_2005, Howard_2006, Chow_2009}. However, non completely positive behaviour was being obtained, which was discarded as experimental noise and hence just not physically valid. On the other hand, despite not having access to the whole \gls{syst-env} composite, experiments were also devised to detect initial correlations~\cite{Li_2011, Gessner2014}.

As it turns out, in 1994 and 95, Philip Pechukas and Robert Alicki showed that if initial correlations are to be considered, then one either has to give up complete positivity or linearity on the dynamical map~\cite{Pechukas_1994, Alicki_1995, Pechukas_1995}. A simple argument based on the original by Pechukas and Alicki can be seen in Ref.~\cite{Milz_operational}, where a map labelled an \emph{assignment map}, $\zeta:\mscr{B}(\mscr{H}_\mathsf{S})\to\mscr{B}(\mscr{H}_\mathsf{E}\otimes\mscr{H}_\mathsf{S})$, such that it satisfies a \emph{consistency condition} $\tr_\mathsf{E}[\zeta(\rho_\mathsf{S})]=\tr_\mathsf{E}(\rho_\mathsf{SE})=\rho_\mathsf{S}$ is used to write 
\begin{equation}
    \mscr{Z}_t[\rho_\mathsf{S}]=\tr_\mathsf{E}[\,\mc{U}_t\,\zeta(\rho_\mathsf{S})]=\tr_\mathsf{E}[\,\mc{U}_t(\rho_\mathsf{SE})],
\end{equation}
where here $\mc{U}_t$ is a unitary map as defined in Eq.~\eqref{eq: unitary map Hamiltonian} and where $\rho_\mathsf{S}$ and $\rho_\mathsf{SE}$ stands for the initial state on either \gls{syst} or \gls{syst-env}. Then demanding the \gls{CP} property on $\mscr{Z}_t$ implies (as can be seen in detail in Ref.~\cite{Milz_operational}) that $\zeta(\rho_\mathsf{S})=\rho_\mathsf{S}\otimes\varepsilon$, i.e. the initial state is uncorrelated, or otherwise either the \gls{CP} property or linearity have to be given up. This conundrum is known as the initial correlation problem.

Now, while the \gls{CP} property is desirable, it seems tempting to try and go around it, as opposed to try and do the same with linearity, which as we mentioned at the beginning of this chapter is related to the mixing property of any statistical theory, and furthermore, it is a pillar assumption for \texttt{QPT}. In particular, simple examples of \gls{NCP} behaviour, i.e. maps that are positive only for some subset $\{\rho_\mathsf{S}:\zeta(\rho_\mathsf{S})\geq0\}$ of \emph{compatible} states, can be seen to arise whenever the system and the environment are entangled~\cite{PhysRevA.64.062106}, and more generally for other types of correlations~\cite{Rodriguez_Rosario_2008,Aspuru_2010}. This led to several arguments for embracing \gls{NCP} maps~\cite{Jordan_2004, Shaji_2005, Jordan_2006,PhysRevA.77.042113} at the beginning of the current century; the main problem, however, is operationally determining the set of compatible states, i.e. an experimenter does not necessarily know which is this set and determining it experimentally leads to ambiguity, since the very act of trying to identify a correlated system will disturb the environment~\cite{Modi_2011,Xu_2012,Majeed_2019}. A simple example with only two qubits can be seen in detail in Ref.~\cite{Milz_operational}.

\subsection{The superchannel}
A resolution to the initial correlation problem comes rather with a change of perspective. As Asher Peres puts it~\cite{Peres_2003}: ``\emph{The simple and obvious truth is that quantum phenomena do not occur in a Hilbert space. They occur in a laboratory. (\ldots) The experimenter controls the emission process and observes detection events. The theorist’s problem is to predict the probability of response of this or that detector, for a given emission procedure}''. The ambiguity in the \gls{NCP} scenario in a sense already gives us a hint: in any experiment the very first step is to prepare some unknown fiducial state into a known input. But if the fiducial state is correlated, the experimenter cannot do this without disturbing the environment, so the prepared states are not really inputs anymore. The crucial insight in Ref.~\cite{Modi2012} was thus to adopt a description that focuses on the objects that can be controlled operationally: preparations and measurements. Moreover, we will see that we do not need to give up linearity or positivity in any meaningful way.

\begin{figure}[t]
    \centering
    \begin{tikzpicture}
    \begin{scope}
    \fill[outer color=C4!60!white, inner color=white, draw=black, rounded corners, thick] (0.5,-1.25) -- (0.5,1.25) -- (5.5,1.25) -- (5.5,-1.25) -- (4,-1.25) -- (4,0.25) -- (2,0.25) -- (2,-1.25) -- cycle;
    \draw[thick, -] (2,-0.5) -- (2.5,-0.5);
    \draw[thick, -] (3.5,-0.5) -- (4,-0.5);
    \draw[thick, -] (5.5,-0.5) -- (6,-0.5);
    \node[below,right] at (2,-0.75) {$\mathsf{in}$};
    \node[below,left] at (4,-0.75) {$\mathsf{in^\prime}$};
    \node[below,right] at (5.5,-0.75) {$\mathsf{out}$};
    \node at (4.5,0.5) {\Large$\mc{M}$};
    \node[below] at (3,-1.5) {$\mathsf{(a)}$};
    \end{scope}
    \begin{scope}[shift={(6.6,0)}]
    \fill[inner color=white, outer color=C4!5!white, draw=C4, dashed, thick, rounded corners] (0.9,-1.25) -- (0.9,1.25) -- (7.65,1.25) -- (7.65,0) -- (6.25,0) -- (6.25,-1.25) -- (4.25,-1.25) -- (4.25,0.25) -- (2.75,0.25) -- (2.75,-1.25) -- cycle;
    \node at (8.15,0.75) {\Large$\mc{M}$};
    \node[above] at (2.4,0.5) {$\mathsf{E}$};
    \node[above] at (2.4,-0.5) {$\mathsf{S}$};
    \draw[thick, -] (2,0.5) -- (7,0.5);
    \draw[thick, -] (2,-0.5) -- (7,-0.5);
    \fill[outer color=C1!10!white, inner color=white, draw=black, thick] (1.6,0) ellipse (0.5cm and 1cm);
    \node[left] at (2,0) {$\rho_\mathsf{SE}$};
    \shade[outer color=C3!50!white, inner color=white, draw=black, rounded corners, thick] (3,-1) rectangle (4,0);
    \node at (3.5,-0.5) {\large$\mc{A}$};
    \shade[outer color=C2!60!white, inner color=white, draw=black, rounded corners, thick] (4.5,-1) rectangle (6,1);
    \node at (5.25,0) {\LARGE$\mc{U}$};
    \node[right] at (7,-0.5) {$\rho_\mathsf{S}^\prime$};
    \node[right] at (7,0.5) {\trash};
    \node[below] at (4,-1.5) {$\mathsf{(b)}$};
    \end{scope}
    \end{tikzpicture}
    \caption[Circuit diagram of the superchannel]{\textbf{Circuit diagram of the superchannel:} $\mathsf{(a)}$ The superchannel $\mc{M}$ can be thought of as a two-legged box taking as input a \gls{CPTNI} map $\mc{A}:\mscr{B}(\mscr{H}_\mathsf{in})\to\mscr{B}(\mscr{H}_{\mathsf{in^\prime}})$ and yielding an output state $\rho^\prime_\mathsf{S}\in\$(\mscr{H}_\mathsf{out})$. $\mathsf{(b)}$ An initial state $\rho_\mathsf{SE}$ on the whole \gls{syst-env} system, in general correlated, is acted on with a preparation described by $\mc{A}$, then the whole evolves unitarily with the unitary channel $\mc{U}$, yielding the state $\mc{M}[\mc{A}]=\rho_\mathsf{S}^\prime$. The dashed box displays the contents of $\mc{M}$ explicitly as all dynamical content out of the control of the experimenter.}
    \label{fig: Superchannel}
\end{figure}

In general, any experimental intervention can be described by any \gls{CPTNI} maps $\mc{A}_i$. This means that for a preparation $\mc{A}$, the reduced output state of a system, \gls{syst}, after a given unitary evolution $\mc{U}$ can be described as an object $\mc{M}$ taking as input $\mc{A}$ and giving as an output a quantum state,
\begin{equation}
    \rho_\mathsf{S}^\prime=\tr_\mathsf{E}[\,\mc{U}\mc{A}(\rho_\mathsf{SE})]:=\mc{M}[\mc{A}],
    \label{eq: superchannel def 1}
\end{equation}
where implicitly we denote $\mc{A}$ for $\mc{A}\otimes\mc{I}_\mathsf{E}$, as we will throughout this thesis unless the distinction need to be made explicitly. This can be visualized as in Fig.~\ref{fig: Superchannel}. Keep in mind that if the map $\mc{A}$ is \gls{TNI}, then in general the output can be subnormalized, $\tr(\rho_\mathsf{S}^\prime)\leq1$, so despite us referring to it as a quantum state, we are allowing for it to have some missing information whenever the operation $\mc{A}$ might fail to be implemented.

The map $\mc{M}$ is known as the \emph{superchannel}, as it generalizes the idea of a quantum channel as a map from superoperators to quantum states. The superchannel contains all the dynamical information inaccessible to the experimenter. It can be said to be a supersuperoperator $\mc{M}:\mscr{B}_(\mscr{H}_\mathsf{in})\otimes\mscr{B}(\mscr{H}_\mathsf{in'})\to\mscr{B}(\mscr{H}_\mathsf{out})$. Linearity can be seen by inspection of Eq.~\eqref{eq: superchannel def 1}, as
\begin{equation}
    \mc{M}[\sum\alpha_i\mc{A}_i]=\sum\alpha_i\mc{M}[\mc{A}_i],
\end{equation}
and similarly trace preservation can be seen in the sense that $\tr[\mc{M}[\mc{A}]]=\tr[\mc{A}(\rho_\mathsf{S})]$, which corresponds to unity when $\mc{A}$ is \gls{TP}, or otherwise to a probability of successfully implementing $\mc{A}$.

Complete positivity, however, now means that if we have a channel acting together on an ancillary space of dimension $\ell$, i.e. $\Theta:\mscr{B}(\mscr{H}_\mathsf{in}\otimes\mscr{H}_\ell)\to\mscr{B}(\mscr{H}_\mathsf{in^\prime}\otimes\mscr{H}_{\ell})$, then the output corresponds to a positive map $\Theta^\prime:\mscr{B}(\mscr{H}_\ell)\to\mscr{B}(\mscr{H}_{\mathsf{out}}\otimes\mscr{H}_{\ell})$ for all dimensions $\ell$, with
\begin{equation}
    (\mc{M}\otimes\mc{I}_\ell)\Theta=\Theta^\prime\geq0,\qquad\forall\ell\geq0,
\end{equation}
where $\mc{M}$ acts only on the ``$\mathsf{in}$'' part of $\Theta$. This is more easily visualized as in Fig.~\ref{fig: CP superchannel}.

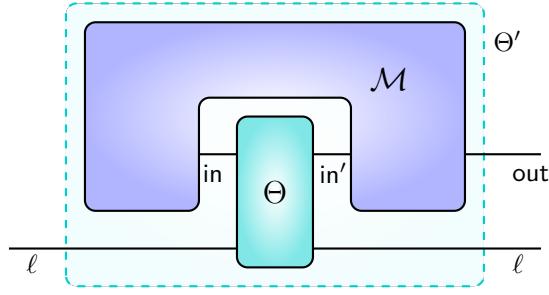
\begin{figure}[t]
    \centering
    \begin{tikzpicture}
    \begin{scope}
    \fill[outer color=C3!5!white, inner color=white, draw=C3, thick, dashed, rounded corners] (0.25,-2.25) -- (0.25,1.5) -- (5.75,1.5) -- (5.75,-2.25) -- (0.25,-2.25) -- cycle;
    \fill[outer color=C4!60!white, inner color=white, draw=black, rounded corners, thick] (0.5,-1.25) -- (0.5,1.25) -- (5.5,1.25) -- (5.5,-1.25) -- (4,-1.25) -- (4,0.25) -- (2,0.25) -- (2,-1.25) -- cycle;
    \draw[thick, -] (2,-0.5) -- (2.5,-0.5);
    \draw[thick, -] (3.5,-0.5) -- (4,-0.5);
    \draw[thick, -] (5.5,-0.5) -- (6.5,-0.5);
    \node[below,right] at (1.925,-0.75) {$\mathsf{in}$};
    \node[below,left] at (4.075,-0.75) {$\mathsf{in'}$};
    \node[below,right] at (6,-0.75) {$\mathsf{out}$};
    \node at (4.5,0.5) {\Large$\mc{M}$};
    \fill[outer color=C3!50!white, inner color=white, rounded corners, draw=black, thick] (2.5,-2) rectangle (3.5,0);
    \node at (3,-1) {\Large$\Theta$};
    \draw[thick, -] (-0.5,-1.75) -- (2.5,-1.75);
    \draw[thick, -] (3.5,-1.75) -- (6.5,-1.75);
    \node[below,left] at (0,-1.95) {$\ell$};
    \node[below,right] at (6,-1.95) {$\ell$};
    \node[below,right] at (5.75,1) {$\Theta^\prime$};
    \end{scope}
    \end{tikzpicture}
    \caption[Complete positivity of the superchannel]{\textbf{The \gls{CP} property for a superchannel} means $(\mc{M}\otimes\mc{I}_\ell)\Theta=\Theta^\prime\geq0$ for any \gls{CP} map $\Theta$ and any ancillary dimension $\ell\ge0$; here $\Theta^\prime$ can be thought of as a map from an input space of dimension $\ell$ to a product of such space together with the output space from the action of $\mc{M}$.}
    \label{fig: CP superchannel}
\end{figure}

We know that a channel being \gls{CP} implies a positive Choi state and vice versa. Now, as mentioned before, the superchannel is just a slightly more elaborate channel, so we may similarly obtain its Choi state. Remember that for a channel, the way to obtain the corresponding choi state is simply to act on the input space with half an unnormalized maximally entangled state $\tilde{\Psi}=\sum|ii\rangle\!\langle{jj}|\in\mscr{B}(\mscr{H}_\mathsf{in}\otimes\mscr{H}_\mathsf{in})$ and we obtained an unnormalized state in a larger space including the second copy of the input space. In this case, looking at Fig.~\ref{fig: Superchannel}$\mathsf{(a)}$, we see that we need to act on $\mscr{H}_\mathsf{in'}$ with half $\tilde{\Psi}\in\mscr{B}(\mscr{H}_\mathsf{in'}\otimes\mscr{H}_\mathsf{in'})$ and the remaining spaces should go out (almost) intact, giving an unnormalized state in a larger space composed of the input and output spaces.

To act with half the maximally entangled state we can simply swap half of it with the input spaces; let us label the new ancillary spaces by $\mathsf{A}$ and $\mathsf{B}$, which really just copies of \gls{syst} and have the same dimension, i.e. $d_\mathsf{A}=d_\mathsf{B}=d_\mathsf{S}$. Then
\begin{equation}
    \Upsilon_\mc{M}:=\tr_\mathsf{E}[\,\mc{U}\,\mc{S}\,(\rho_\mathsf{SE}\otimes\tilde{\Psi})],
\end{equation}
is the Choi state of the superchannel, where here $\mc{U}$ stands for $\mc{U}\otimes\mc{I}_\mathsf{AB}$, and
\begin{align}
    \mc{S}(\cdot)&:=\swap_\mathsf{SA}(\cdot)\swap_\mathsf{SA}\\
    \text{where here}\quad \swap_\mathsf{SA}&:=\mbb1_\mathsf{E}\sum_{i,j=1}^{d_\mathsf{S}}|ij\rangle\!\langle{ji}|\otimes\mbb1_\mathsf{B},
\end{align}
given that here the input and output space is simply the subsystem \gls{syst}. The swap of course can be done with either $\mathsf{A}$ or $\mathsf{B}$. This can also be seen clearly through a circuit diagram as in Fig.~\ref{fig: superchannel Choi state}.

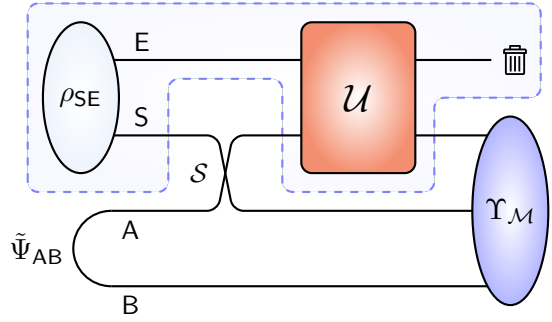
\begin{figure}[t]
    \centering
    \begin{tikzpicture}
    \fill[inner color=white, outer color=C4!5!white, draw=C4, dashed, thick, rounded corners] (0.9,-1.25) -- (0.9,1.25) -- (7.65,1.25) -- (7.65,0) -- (6.25,0) -- (6.25,-1.25) -- (4.25,-1.25) -- (4.25,0.25) -- (2.75,0.25) -- (2.75,-1.25) -- cycle;
    \node[above] at (2.4,0.5) {$\mathsf{E}$};
    \node[above] at (2.4,-0.5) {$\mathsf{S}$};
    \draw[thick, -] (2,0.5) -- (7,0.5);
    \draw[thick, -, rounded corners] (2,-0.5) -- (3.4,-0.5) -- (3.6,-1.5) --  (7,-1.5);
    \draw[thick, rounded corners] (2,-1.5) -- (3.4,-1.5) -- (3.6,-0.5) -- (7,-0.5);
    \node[left] at (3.4,-1) {$\mc{S}$};
    \draw[thick] (2,-1.5) arc (90:270:0.5);
    \draw[thick] (2,-2.5) -- (7,-2.5);
    \node[left] at (1.5,-2) {\large$\tilde{\Psi}_\mathsf{AB}$};
    \node[below] at (2.25,-1.5) {$\mathsf{A}$};
    \node[below] at (2.25,-2.5) {$\mathsf{B}$};
    \fill[outer color=C1!10!white, inner color=white, draw=black, thick] (1.6,0) ellipse (0.5cm and 1cm);
    \node[left] at (2,0) {\large$\rho_\mathsf{SE}$};
    \shade[outer color=C2!60!white, inner color=white, draw=black, rounded corners, thick] (4.5,-1) rectangle (6,1);
    \node at (5.25,0) {\LARGE$\mc{U}$};
    \node[right] at (7,0.5) {\trash};
    \fill[outer color=C4!60!white, inner color=white, draw=black, thick] (7.25,-1.5) ellipse (0.5cm and 1.25cm);
    \node at (7.25,-1.5) {\large$\Upsilon_\mc{M}$};
    \end{tikzpicture}
    \caption[The Choi state of the superchannel]{\textbf{The Choi state $\Upsilon_\mc{M}$ of the superchannel} can similarly be obtained by letting it act on half an unnormalized maximally entangled state $\tilde\Psi_\mathsf{AB}$; to do this, half of the ancillary space is swapped with the input on \gls{syst}. Here $\mathsf{A}$ and $\mathsf{B}$ are labels: both correspond to copies of \gls{syst}.}
    \label{fig: superchannel Choi state}
\end{figure}

Let us define $\FS_{\alpha\beta}:=\mbb1_\mathsf{E}\otimes|\alpha\rangle\!\langle\beta|$, so that we can expand the definition of the Choi state as
\begin{align}
    \Upsilon_\mc{M}=\sum_{\alpha,\ldots,\delta}\tr_\mathsf{E}[\,U\,\FS_{\alpha\beta}\rho_\mathsf{SE}\FS_{\delta\gamma}U^\dg]\otimes|\beta\alpha\rangle\!\langle\delta\gamma|,
    \label{eq: Choi superchannel expanded}
\end{align}
then to recover the action of the superchannel on a given map $\mc{A}$, we need to contract the Choi state of $\mc{M}$ with that of $\mc{A}$. First notice that we have to write $\Upsilon_\mc{A}=(\mbb1_\mathsf{A}\otimes\mc{A})\tilde\Psi$, i.e. $\mc{A}$ has to act on the $\mathsf{B}$ ancillary space (analogous to Eq.~\eqref{eq: map through Choi}, where we contract with the entangled half that was not acted on in $\Upsilon_\mc{M}$); this can also be understood easily through diagrams, as in Fig.~\ref{fig: Choi superchannel contraction}. This is, tracing over all inputs
\begin{align}
    \tr_\mathsf{in}[\Upsilon_\mc{M}(\mbb1_\mathsf{out}\otimes\Upsilon_\mc{A}^\mathrm{T})]&=\tr_\mathsf{in}\left\{\sum_{\alpha,\ldots,\delta}\tr_\mathsf{E}[U\FS_{\alpha\beta}\rho_\mathsf{SE}\FS_{\delta\gamma}U^\dg]\otimes|\beta\alpha\rangle\!\langle\delta\gamma|\Upsilon_\mc{A}^\mathrm{T}\right\}\nonumber\\
    &=\sum_{\alpha,\ldots,\delta,i,j}\tr_\mathsf{E}[U\FS_{\alpha\beta}\rho_\mathsf{SE}\FS_{\delta\gamma}U^\dg]\langle\delta|j\rangle\!\langle{i}|\beta\rangle\!\langle\gamma|\mc{A}^\mathrm{T}(|i\rangle\!\langle{j}|)|\alpha\rangle\nonumber\\
    &=\sum_{\alpha,\ldots,\delta}\tr_\mathsf{E}[U(\mbb1_\mathsf{E}\otimes|\alpha\rangle\!\langle\beta|)\rho_\mathsf{SE}(\mbb1_\mathsf{E}\otimes|\delta\rangle\!\langle\gamma|)U^\dg]\langle\alpha|\mc{A}(|\beta\rangle\!\langle\delta|)|\gamma\rangle\nonumber\\
    &=\sum_{\alpha,\ldots,\delta}\sum_\mu\tr_\mathsf{E}[U(\mbb1_\mathsf{E}\otimes|\alpha\rangle\!\langle\beta|)\rho_\mathsf{SE}(\mbb1_\mathsf{E}\otimes|\delta\rangle\!\langle\gamma|)U^\dg]\langle\alpha|A_\mu|\beta\rangle\!\langle\delta|A_\mu^\dg|\gamma\rangle\nonumber\\
    &=\sum_\mu\tr_\mathsf{E}[U(\mbb1_\mathsf{E}\otimes{A}_\mu)\rho_\mathsf{SE}(\mbb1_\mathsf{E}\otimes{A}_\mu^\dg)U^\dg]\nonumber\\
    &=\tr_\mathsf{E}[\,\mc{U}\,\mc{A}\,(\rho_\mathsf{SE})],
\end{align}
where in the fifth line we used the fact that $\mc{A}$ is \gls{CP} so that $\mc{A}(\cdot)=\sum{A}_\mu(\cdot)A_\mu^\dg$.

The Choi state is manifestly positive, given that everything in its definition is positive; this in turn implies that $\mc{M}$ is \gls{CP}, as expected. We may now think of this operationally as a situation in which the experimenter brings along an ancillary system of dimension $\ell$ and performs an entangling operation $\Theta_{\mathsf{S}\ell}$, with the subsequent \gls{syst-env} dynamics occurring; then we are assured that the output $\Theta^\prime_{\mathsf{S}\ell}$ will be positive. The Kraus operators of $\mc{M}$ can be obtained as well, as done in Ref.~\cite{Modi_2011, Milz_operational}. This shows conclusively that the dynamics of initially correlated systems are indeed completely positive, we were only looking in the wrong place. Notice as well that the \gls{CPTP} property for the superchannel was a natural consequence rather than an a-priori condition.

\begin{figure}[t]
    \centering
    \begin{tikzpicture}
    \begin{scope}
    \fill[outer color=C4!60!white, inner color=white, draw=black, rounded corners, thick] (0.5,-1.25) -- (0.5,1.25) -- (5.5,1.25) -- (5.5,-1.25) -- (4,-1.25) -- (4,0.25) -- (2,0.25) -- (2,-1.25) -- cycle;
    \fill[outer color=C3!5!white, inner color=white, draw=C3, dashed, rounded corners, thick] (4.85,-1.35) -- (7,-1.35) -- (7,-3.15) -- (4.85,-3.15) -- cycle;
    \draw[thick, -, rounded corners] (2,-0.5) -- (2.9,-0.5) -- (3.1,-1.5) -- (6,-1.5);
    \draw[thick, -, rounded corners] (0.75,-1.5) -- (2.9,-1.5) -- (3.1,-0.5) -- (4,-0.5);
    \draw[thick] (0.75,-1.5) arc (90:270:0.5);
    \node[left] at (0.25,-2) {$\tilde{\Psi}$};
    \draw[thick] (0.75,-2.5) -- (5,-2.5);
    \fill[outer color=C3!50!white, inner color=white, draw=black, rounded corners, thick] (5,-2) rectangle (6,-3);
    \node at (5.5,-2.5) {\large$\mc{A}^\mathrm{T}$};
    \draw[thick] (6,-1.5) arc (90:-90:0.5);
    \node[right] at (6.5,-2) {$\tilde{\Psi}$};
    \draw[thick, -] (5.5,-0.5) -- (6,-0.5);
    \node[below,right] at (2,-0.75) {$\mathsf{in}$};
    \node[below,left] at (4,-0.75) {$\mathsf{in^\prime}$};
    \node[below,right] at (5.5,-0.75) {$\mathsf{out}$};
    \node at (4.5,0.5) {\Large$\mc{M}$};
    \end{scope}
    \begin{scope}[shift={(8,-0.5)}]
    \node at (-0.6,-0.25) {\Huge{\textbf{=}}};
    \fill[outer color=C4!60!white, inner color=white, draw=black, rounded corners, thick] (0.5,-1.25) -- (0.5,1.25) -- (5.5,1.25) -- (5.5,-1.25) -- (4,-1.25) -- (4,0.25) -- (2,0.25) -- (2,-1.25) -- cycle;
    \draw[thick, -] (2,-0.5) -- (2.5,-0.5);
    \draw[thick, -] (3.5,-0.5) -- (4,-0.5);
    \draw[thick, -] (5.5,-0.5) -- (6,-0.5);
    \node[below,right] at (2,-0.75) {$\mathsf{in}$};
    \node[below,left] at (4.05,-0.75) {$\mathsf{in'}$};
    \node[below,right] at (5.5,-0.75) {$\mathsf{out}$};
    \node at (4.5,0.5) {\Large$\mc{M}$};
     \fill[outer color=C3!50!white, inner color=white, draw=black, rounded corners, thick] (2.5,-1) rectangle (3.5,0);
    \node at (3,-0.5) {\large$\mc{A}$};
    \end{scope}
    \end{tikzpicture}
    \caption[The action of the superchannel in terms of Choi states]{\textbf{The action of a superchannel} $\mc{M}$ on a \gls{CP} map $\mc{A}$ can be given as the contraction of the respective Choi states as $\mc{M}[\mc{A}]=\tr_\mathsf{in^{(\prime)}}[\Upsilon_\mc{M}(\mbb1_\mathsf{out}\otimes\Upsilon_\mc{A}^\mathrm{T})]$, where $\Upsilon_\mc{A}^\mathrm{T}=(\mbb1_\mathsf{in}\otimes\mc{A}^\mathrm{T})\tilde\Psi$.}
    \label{fig: Choi superchannel contraction}
\end{figure}

Now, the superchannel, just as in the case of quantum channels, can be reconstructed experimentally through \texttt{QPT} without any a-priori knowledge about the initial state or the \gls{syst-env} dynamics. In a similar fashion to standard \texttt{QPT}, now the experimenter requires $d_\mathsf{S}^4$ linearly independent \gls{CP} maps $\{\mc{A}_i\}$ and measuring the corresponding outputs $\rho_\mathsf{S}^{(i)}$, with which e.g. the Choi state can be reconstructed. This procedure, just as any \texttt{QPT} or \texttt{QST} procedure requires a relatively great amount of resources, however it was reported already in Ref.~\cite{Ringbauer_2015}.

Going back to the discussion of \gls{NCP} dynamical maps, notice that the case $\mc{M}[\mc{I}]$ gives the usual dynamical map scenario, which however is operationally void without determining the superchannel. Notice that the initial uncorrelated state case with a dynamical map is contained as a particular one under the superchannel, i.e. for $\rho_{\mathsf{SE}}=\rho_\mathsf{S}\otimes\varepsilon$, we have
\begin{align}
    \mc{M}[\mc{A}]=\tr_\mathsf{E}[\,\mc{U}_t\,(\mc{A}(\rho_\mathsf{S})\otimes\varepsilon)]=\mscr{Z}_t[\sigma_\mathsf{S}],
\end{align}
where $\sigma_\mathsf{S}=\mc{A}(\rho_\mathsf{S})$. In such case the dynamical map description is sufficient, however, in general the superchannel can be employed to quantify initial correlations e.g. by letting $\rho_\mathsf{SE}=\rho_\mathsf{S}\otimes\varepsilon+\chi_\mathsf{SE}$~\cite{Modi_2011}. As we will see below, this also turns out to be a special case of the general notion of a Markovian process.

\section{Multiple time-steps}\label{sec: process tensor}
One of the crucial features of the superchannel is that it allows to diagnose the presence of initial correlations playing a role in the reduced subsystem dynamics. This is already some form of \emph{memory} or non-Markovian effect. This, however, only accounts for correlations between the initial preparation and the final measurement; if we were to fully describe non-Markovianity, we would look for an extension to any dependence on any given time in the past, analogous to the classical definition for non-Markovianity.

This is crucial because if we are to deal consistently with non-Markovianity, we first require a non-ambiguous characterization that genuinely deals with temporal correlations across an arbitrary number of points. Far from being just an interesting theoretical aspect, this is a tremendously practical question, given the ever-increasing technological capability and interest to deal with noisy systems over several time-steps~\cite{Blok2014, Shrapnel_2018,Giarmatzi2018, Winick2019, Morris2019, Taranto_2019, Taranto_2019_2, Guo2020, shapiro2012quantum, Altafini}.

Let us then first revisit the classical notion of a stochastic process and of Markovianity.

\subsection{Classical stochastic processes}\label{sec: classical stochastic processes}
A classical stochastic process is a probabilistic process in time, i.e. it can be described mathematically as a collection of random variables indexed by time. As these are mainly used to model situations in which only partial knowledge or certainty is available, it is no surprise that classical stochastic processes are ubiquitous in our attempt to describe anything from coin tossing to financial markets or disease spread. We have already appealed to some of the notions needed to discuss stochastic processes, however, we can formalize these a bit more to make these concepts concrete. Most of these concepts can be consulted in full detail in standard textbooks such as~\cite{feller1968} or in Ref.~\cite{breuer2002theory} related to open systems or furthermore in particular in the context related to this thesis in Ref.~\cite{milz_2019,Taranto_2020}.

It is worth mentioning that some of the notation we use here is only for local purposes i.e. it does not necessarily have any relation with the symbols on previous sections unless explicitly stated.

First we need to introduce a \emph{probability space}, which sets the scene to model probabilistically a given class of situations: this is a triple $(\Omega,\Sigma,\mu)$ where
\begin{compactenum}[\itshape i.]
    \item $\Omega$ is called a \emph{sample space}; mathematically, this is an arbitrary nonempty set. It stands for the set of all (discrete or continuous) outcomes, i.e. single realisations of the model.
    \item $\Sigma$ is called an \emph{event space} and is a collection of subsets of the sample space called events. The event space $\Sigma$ is a $\sigma$-algebra: this means that, denoting by $\pi(\Omega)$ the power set (set of all subsets) of the sample space, $\Sigma\subseteq\pi(\Omega)$ satisfies\footnote{ We use standard set notation with $\setminus$ meaning set subtraction, $\cup$ union of sets and $\cap$ intersection of sets.}
    \begin{compactitem}[--]
        \item $\Omega\in\Sigma$.
        \item If $\alpha\in\Sigma$, then also $(\Omega\setminus\alpha)\in\Sigma$.
        \item If $\alpha_i\in\Sigma$ for $i=1,2,\ldots$, then also $\cup_i\alpha_i\in\Sigma$.
        \item If $\alpha_i\in\Sigma$ for $i=1,2,\ldots$, then also $\cap_i\alpha_i\in\Sigma$.
    \end{compactitem}
    Together, the pair $(\Omega,\Sigma)$ is called a \emph{measurable space}.

    \item $\mu$ is called a \emph{probability measure}. It is a function $\mu:\Sigma\to[0,1]$ such that
    \begin{compactitem}[--]
    \item $\mu(\Omega)=1$
    \item $\mu(\cup_i\alpha_i)=\sum_i\mu(\alpha_i)$ for any countable collection of pairwise disjoint\footnote{ That is, $\alpha_i\cap\alpha_j=\emptyset,\forall i\neq{j}$.} subsets $\{\alpha_i\}\in\Sigma$.
    \end{compactitem}
\end{compactenum}

\begin{example}Perhaps the simplest example is a coin toss: if a fair coin is tossed three times, then $\Omega=\{\texttt{HHH,HHT,HTT,HTH,THT,THH,TTH,TTT}\}$ is the sample space, where \texttt{H} and \texttt{T} stand for `heads' and `tails', respectively, and the order in each element stands for the first, second and third toss. The power set $\pi(\Omega)$ has $2^{|\Omega|}=2^8$ events. If the experiment concerns the event ``at least two heads occur'', i.e. $\omega=\{\texttt{HHH,HHT,HTH,THH}\}$, then the event space is $\Sigma=\{\emptyset,\varkappa,\omega,\Omega\}$, where $\varkappa=\Omega\setminus\omega$, so the elements correspond to the experiment not being performed, not more than one head occurred, at least two head occurred and the experiment being performed. The probability measure gives $\mu(\emptyset)=0$, $\mu(\omega)=\mu(\varkappa)=0.5$, $\mu(\Omega)=1$.
\end{example}

Now, we may define a \emph{random variable} $X$ on the probability space $(\Omega,\Sigma,\mu)$ as a function $X:\Omega\to\mbb{R}$; in general these can correspond to functions from the sample space to any other measurable space, e.g. in Section~\ref{sec: Random states and Haar} we considered random unitary matrices as random variables. Intuitively, random variables are real-valued quantities that can be measured from outcomes of random trials. This point of view is useful simply because an experimenter often cares about some function of the outcomes of the experiment rather than the outcomes themselves. This allows to ask about the probability that the random variable takes a value within a subset $s\in\mathtt{S}$, where $\mathtt{S}\subseteq\mbb{R}$ is a collection of open subsets\footnote{ Formally these have to be so-called Borel subsets~\cite{feller1968}.} of the real numbers, i.e. in a slight abuse of notation we may define
\begin{equation}
    \mbb{P}[X\in{s}]:=\mbb{P}[\{\omega\in\Omega:X(\omega)\in{s}\}],
\end{equation}
where $\mbb{P}$ is a probability distribution $\mbb{P}:\mathtt{S}\to[0,1]$, related to the probability measure $\mu$ via $\mbb{P}[s]=\mu[X^{-1}(s)]$, with $X^{-1}(s)\in\Sigma$. This is so because $\mathtt{S}$ is a $\sigma$-algebra and $(\mbb{R},\mathtt{S})$ a measurable space. We notice that given a function $f:\mbb{R}\to\mbb{R}$ such that $f^{-1}(s)\in\mathtt{S}$, then any other $Y$ defined by $Y(\omega)=f(X(\omega))$ for an event $\omega$ is also a random variable. In general we will refer to the value $X(\omega)$ as a \emph{realisation}.

A \emph{stochastic process} then concerns $\mathtt{S}$-valued random variables $X$ in a parameter $t\geq0$, usually standing for time. That is, a stochastic process is a random variable $X:\Omega\times\mbb{R}^+_0\to\mbb{R}$, where $\mbb{R}^+_0$ stands for the set of non-negative real numbers. We can then think of a stochastic process in two ways. For a fixed event $\omega\in\Omega$, the function $X:\mbb{R}^+_0\to\mbb{R}$ defines a so-called trajectory of the stochastic process. Conversely, $X:\Omega\to\mbb{R}$ for a fixed $t\geq0$ denotes a collection of random variables at a given time.

In practice we usually care about the probability distribution of a given trajectory and a sensible way to describe it is through a discrete number of $k$ times
\begin{equation}
    \mscr{T}_k=\{t_0,t_1,,\ldots,t_{k-1}\},
    \label{eq: set of times}
\end{equation}
and whenever we explicitly assume that times are ordered as $t_0<t_1<\ldots<t_n$, we will denote this by $n:0$. We can then construct a vector of random variables $\mathbf{X}_{\mscr{T}_k}=(X_0(t_0),X_1(t_1),\ldots,X_{k-1}(t_{k-1}))$ for each time in $\mscr{T}_k$ with each $X_i(\cdot,t_i):\Omega\to\mbb{R}$. This then defines the joint probability distribution
\begin{equation}
    \mbb{P}(\vec{x}_{_{\mscr{T}_k}})=\mu\left[\mathbf{X}_{\mscr{T}_k}^{-1}(\vec{x}_{_{\mscr{T}_k}})\right],
    \label{eq: classical joint probability}
\end{equation}
for realisations $X_i(\omega,t_i)$ of an event $\omega\in\Omega$ to lie within $x_i$ at time $t_i$, where we also defined $\vec{x}_{_{\mscr{T}_k}}=(x_0,x_1,\ldots,x_{k-1})$, i.e. the expression in Eq.~\eqref{eq: classical joint probability} explicitly means, for example, for a two time-step process,
\begin{equation}
    \mbb{P}(x_0,x_1)=\mbb{P}[\{\omega\in\Omega:X_0(\omega,t_0)\in{x}_0\,\,\text{and}\,\,{X}_1(\omega,t_1)\in{x}_1\}].
\end{equation}

This is enough to statistically characterize the stochastic process with finite time-steps as the probability for a given trajectory to fall within sets $x_i$ at each time $t_i$, together with $k$-point correlations between the random variables at each time-step.

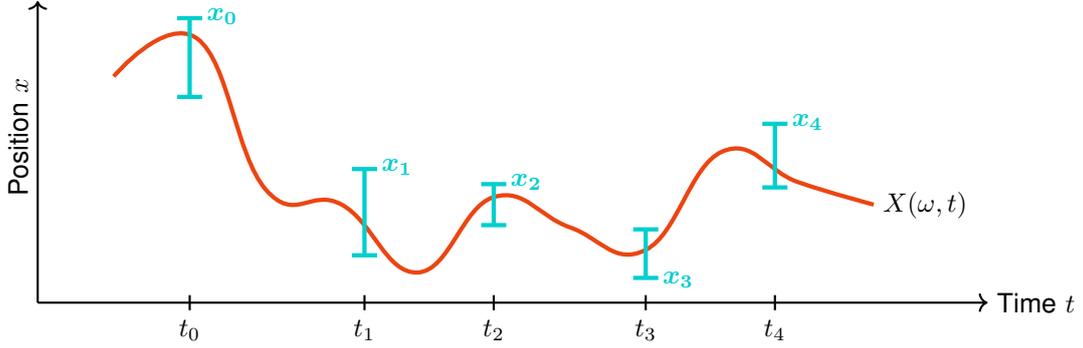
\begin{figure}[t]
    \centering
    \begin{tikzpicture}
    \draw[thick,->] (-1,-1) -- (-1,3);
    \draw[thick, ->] (-1,-1) -- (11.5,-1);
    \draw [ultra thick, C2] plot [smooth, tension=0.75] coordinates { (0,2) (1.1,2.5) (2,0.5) (3,0.3) (4,-0.6) (5,0.4) (6,0) (7,-0.3) (8,1) (9,0.6) (10,0.3)};
    \node[rotate=90, above, shift={(0.2,0)}] at (-1,1) {\textsf{Position} $x$};
    \node[right] at (10,0.3) {$X(\omega,t)$};
    \node[right] at (11.5,-1) {\textsf{Time} $t$};
    \draw[thick] (1,-1.1) -- (1,-0.9);
    \draw[|-|, ultra thick, C3] (1,2.8) -- (1,1.7);
    \node[C3, right] at (1.1,2.8) {$\boldsymbol{x_0}$};
    \node[below] at (1,-1.1) {$t_0$};
    \draw[thick] (3.3,-1.1) -- (3.3,-0.9);
    \draw[|-|, ultra thick, C3] (3.3,-0.4) -- (3.3,0.8);
    \node[C3, below, right] at (3.4,0.8) {$\boldsymbol{x_1}$};
    \node[below] at (3.3,-1.1) {$t_1$};
    \draw[thick] (5,-1.1) -- (5,-0.9);
    \draw[|-|, ultra thick, C3] (5,0) -- (5,0.6);
    \node[C3, right] at (5.1,0.6) {$\boldsymbol{x_2}$};
    \node[below] at (5,-1.1) {$t_2$};
    \draw[thick] (7,-1.1) -- (7,-0.9);
    \draw[|-|, ultra thick, C3] (7,-0.7) -- (7,0);
    \node[C3, right] at (7.1,-0.7) {$\boldsymbol{x_3}$};
    \node[below] at (7,-1.1) {$t_3$};
    \draw[thick] (8.7,-1.1) -- (8.7,-0.9);
    \draw[|-|, ultra thick, C3] (8.7,0.5) -- (8.7,1.4);
    \node[C3, right] at (8.8,1.4) {$\boldsymbol{x_4}$};
    \node[below] at (8.7,-1.1) {$t_4$};
    \end{tikzpicture}
    \caption[Trajectory in classical stochastic processes]{\textbf{An example of a stochastic trajectory} for a particle in a one-dimensional random walk: for given times $\mscr{T}_5=\{t_0,t_1,\ldots,t_4\}$, a classical stochastic process is determined by the joint probability distribution $\mbb{P}_{\mscr{T}_5}(x_0,x_1,\ldots,x_4)$ to find the particle in the region $x_i$ at time $t_i$.}
    \label{fig: Joint probability distribution}
\end{figure}

The choice of $\mscr{T}_k$ is arbitrary, so this already tell us that if obtain realisations for some but not all such times, say $\mscr{T}_\ell$, with $|\mscr{T}_\ell|<|\mscr{T}_k|$, then the joint probability distribution $\mbb{P}(\vec{x}_{_{\mscr{T}_\ell}})$ should be contained in $\mbb{P}_{\mscr{T}_k}(\vec{x}_{_{\mscr{T}_k}})$. Indeed, we can marginalise, i.e. sum out the probabilities for the extra random variables in the larger set of times to recover the joint distribution in the smaller one,
\begin{equation}
    \mbb{P}(\vec{x}_{_{\mscr{T}_\ell}})=\sum_{\mscr{T}_k\setminus\mscr{T}_\ell}\mbb{P}(\vec{x}_{_{\mscr{T}_k}}),
\end{equation}
where the sum runs over realisations for the excessive times in $\mscr{T}_k$ but not in $\mscr{T}_\ell$.

This containment property generalizes to the fact that there exists an infinite joint probability distribution that contains all the finite ones: this is reconciled through the so-called Kolmogorov extension theorem~\cite{breuer2002theory,tao2011, kolmogorov2018}, which binds the realistic scenario of a discrete number of observations with the firm mathematical footing of a general stochastic process. More specifically, it gives the consistency conditions for a family of joint probabilities to guarantee the existence of an underlying continuous stochastic process.

\subsection{The classical Markov condition}
Within any classical stochastic process we can consider different particular cases for the joint distribution in Eq.~\eqref{fig: Joint probability distribution}. In particular, a Markov process as introduced for open quantum systems as being \emph{memoryless} implies that, relative to a given time-step, the probabilities for future steps should only depend on such time-step\footnote{ Strictly speaking, outright \emph{memoryless} should refer to no dependence even in the present state.} and not on the past ones.

\begin{definition}[Classical Markov condition]
\label{classical Markov}
A classical stochastic process is called Markovian if for any ordered times $t_0<t_1<\cdots<t_k$ we have
\begin{equation}
    \mbb{P}(x_k|\vec{x}_{k-1:0})=\mbb{P}(x_k|x_{k-1}),
    \label{eq: classical Markov}
\end{equation}
where here $\vec{x}_{j:0}=(x_0,\ldots,x_j)$ and $\mbb{P}(A|B)$ refers to the \emph{conditional probability} of event $A$ given event $B$, defined by $\mbb{P}(A|B)=\mbb{P}(A,B)/\mbb{P}(B)$, i.e.
\begin{equation}
    \f{\mbb{P}(\vec{x}_{k:0})}{\mbb{P}(\vec{x}_{k-1:0})}=\mbb{P}(x_k|\vec{x}_{k-1:0})\,\,\stackrel{\textsf{Markov}}{=}\,\,\mbb{P}(x_k|x_{k-1})=\f{\mbb{P}(x_k,x_{k-1})}{\mbb{P}(x_{k-1})}.
    \label{eq: def conditional probability}
\end{equation}
Conversely, any classical process that does not satisfy Eq.~\eqref{eq: classical Markov} is called non-Markovian.
\end{definition}

Physically, a conditional probability $\mbb{P}(x_j|x_i)$ can be understood as a probability for a system to go from a state $x_i$ to state $x_j$ and is often referred to as a transition probability or a propagator. The Markov condition is a very significant simplification, since normally we would need all $(k+1)$-point correlations to compute the following joint probability distribution. Notice that in general, by repeatedly applying the definition of the conditional probability,
\begin{equation}
    \mbb{P}(\vec{x}_{k:0})=\mbb{P}(x_k|\vec{x}_{k-1:0})\mbb{P}(x_{k-1}|\vec{x}_{k-2:0})\cdots\mbb{P}(x_1|x_0)\mbb{P}(x_0),
\end{equation}
so that to determine the full joint probability distribution we require an increasing number of transition probabilities. However, if the the Markov condition holds, a Markovian joint distribution satisfies
\begin{equation}
    \mbb{P}(\vec{x}_{k:0})=\mbb{P}(x_k|x_{k-1})\mbb{P}(x_{k-1}|x_{k-2})\cdots\mbb{P}(x_1|x_0)\mbb{P}(x_0),
\end{equation}
requiring only 2-point correlations to determine the future joint distribution.

Specifically, suppose we have a classical system with $n$ possible outcomes (as in the case of a coin flip, $n=2$, for example). Then to determine the distribution $\mbb{P}(x_0)$ we require $n-1$ probabilities, to determine $\mbb{P}(x_1|x_0)\mbb{P}(x_0)$ we require $n-1+n(n-1)=n^2-1$ probabilities, and so on until we reach $n^{k+1}-1$ to determine $\mbb{P}(\vec{x}_{k:0})$, i.e. we get an exponential growth in the number of time-steps $k$. However, if the process is Markovian, we require $(n-1)(1+kn)$ probabilities, so just a linear growing number of terms in the number of time-steps $k$.

Notice that the Markov condition is a statement regarding multiple time-steps, or equivalently a statement regarding the observation of multiple events. That is, to have a general notion of Markovianity or memorylessness we must consider correlations across several events in time.

Now, continuing with the case of $n$ possible outcomes, notice that we can store the initial probability distribution $\mbb{P}(x_0)$ as an $n$-dimensional vector $\mbb{P}_0$ with entries corresponding to the probability of each outcome so that
\begin{equation}
    \mbb{P}_k=\mathfrak{P}_{k:0}\mbb{P}_0,
\end{equation}
is the vector corresponding to $\mbb{P}(x_k)$, where $\mathfrak{P}_{k:0}$ is an $n\times{n}$ matrix containing the propagator for each outcome from time-step $t_0$ to time-step $t_k$. Explicitly, this means a propagator matrix $\mathfrak{P}_{j:i}$ from time-step $t_i$ to step $t_j>t_i$, with random variables taking given values $X_i=u_a$ and $X_j=v_b$, where $1\leq{a,b}\leq{n}$, will have the probability entries
\begin{equation}
    \left(\mathfrak{P}_{j:i}\right)_{ab}=\mbb{P}[X_i=u_a|X_j=v_b],
\end{equation}
which are such that
\begin{equation}
    \sum_{b=1}^n\left(\mathfrak{P}_{j:i}\right)_{ab}=1,
\end{equation}
and are called \emph{stochastic matrices}. This property can be thought as justifying the label of these matrices as propagators, as they would give an evolution of a probability vector $\mbb{P}_i$ at time $t_i$ to a legitimate probability vector $\mbb{P}_j$ at time $t_j$.

A Markovian process then implies that the full propagator can be broken up in the intermediate steps as
\begin{equation}
    \mathfrak{P}_{k:0}=\mathfrak{P}_{k:k-1}\mathfrak{P}_{k-1:k-2}\cdots\mathfrak{P}_{1:0},
\end{equation}
and we can further marginalise to break it up in any two propagator matrices as
\begin{equation}
    \mathfrak{P}_{k:0}=\mathfrak{P}_{k:j}\mathfrak{P}_{j:0},\qquad\forall{t_k}>t_j>t_0,
    \label{eq: Classical divisibility}
\end{equation}
which is known as a divisibility property, and which is the motivation for the analogous property on quantum dynamical maps to define Markovianity as in Eq.~\eqref{eq: divisibility dynamical maps}.

The divisibility property on two contiguous time-steps can equivalently be expressed as
\begin{equation}
    \mbb{P}(x_{\ell+1}|x_{\ell-1})=\sum_{t_\ell}\mbb{P}(x_{\ell+1}|x_\ell)\mbb{P}(x_\ell|x_{\ell-1}),
    \label{eq: Chapman-Kolmogorov}
\end{equation}
where the time-steps are ordered $t_{\ell-1}<t_\ell<t_{\ell+1}$ for any $\ell>0$, which usually in this form is known as the Chapman-Kolmogorov equation~\cite{Breuer_2016}. The intuitive explanation of this equation is that the probability of transitioning from a state $x_{\ell-1}$ to another $x_{\ell+1}$ can be obtained by multiplying the transition probabilities to and from an intermediate state $x_\ell$ and summing over all the possible of these.

Both divisibility in Eq.~\eqref{eq: Classical divisibility} and the Chapman-Kolmogorov equation in Eq.~\eqref{eq: Chapman-Kolmogorov}, however, do not say anything about higher point correlations, i.e. divisibility is insufficient to characterize Markovianity and it is possible to find non-Markovian processes which can satisfy divisibility~\cite{Hanggi1977,Hanggi1978,Vacchini_2011}, where however, the propagator would not correspond to a conditional transition probability matrix~\cite{Hanggi_1982}.

Clearly the Markov condition is a massive mathematical simplification: because of this it has been largely studied and has a wide reach and applicability~\cite{Dynkin_2006, stroock2013}. If we turn to physics, however, not only in the more general quantum case is it an idealization but also in the classical case as soon as we start considering fully realistic scenarios we realise that Markovian processes are rather exceptional~\cite{VanKampen_1998}.

A simple example contrasting both kinds of processes is that of drawing a marble from a bag full of marbles with a handful of colors: if we were to replace or put back each marble as we draw it from the bag, the process will be Markovian and we do not need to remember anything to update our predictions of what color the next marble will have; if, however, we discard each marble once we take it out, by knowing the full amount of marbles we have to keep track of all the colors as they come out to update our predictions accordingly. Other examples for realistic scenarios can be seen in Ref.~\cite{VanKampen_1998}, in particular for cases where Markovianity becomes a reasonable approximation within certain limits in which the initial configuration is forgotten.

In a sense, with the previous Chapter~\ref{sec:statmech} we have seen that equilibration on average is indeed a process where the initial state is effectively forgotten. Let us thus go back to approaching the case of non-Markovianity and stochastic processes in quantum mechanics.

\subsection{The process tensor}\label{sec: the process tensor}
To consider the generalization of a stochastic process in open quantum systems we first need to consider multiple interventions. Given the operational scenario of the superchannel, we see that what we need is to extend this picture to not only a preparation and a measurement but to an arbitrary number of interventions. Crucially, notice that in the discussion of classical stochastic processes, there was an implicit assumption that realisations do not alter in any way the subsequent states, and thus such assumption will not hold in the quantum extension anymore.

The physical scenario is now the following: an experimenter prepares a fiducial quantum state $\rho_\mathsf{SE}$ on system \gls{syst} of a joint \gls{syst-env} composite through an operation $\mc{A}_0$, which in general is an \gls{CPTNI} map; subsequently the whole composite evolves unitarily through a unitary map $\mc{U}_1$, after which an operation $\mc{A}_1$ is performed on \gls{syst}, then the whole evolves unitarily under a unitary map $\mc{U}_2$, and so on, until an intervention $\mc{A}_{k-1}$, followed finally by a unitary map $\mc{U}_k$. This means the final state in system \gls{syst} will be given by
\begin{equation}
    \rho_\mathsf{S}^{(k)}=\tr_\mathsf{E}\left[\,\mc{U}_k\,\mc{A}_{k-1}\,\mc{U}_{k-1}\,\mc{A}_{k-1}\cdots\mc{U}_1\mc{A}_0\,(\rho_\mathsf{SE})\right],
    \label{eq: process tensor open}
\end{equation}
where here we also implicitly write $\mc{A}_\ell$ for $\mc{A}_\ell\otimes\mc{I}_\mathsf{E}$. As for the case of the superchannel, keep in mind that $\rho_\mathsf{S}^{(k)}$ can be subnormalized (even if we refer to it as a quantum state) if the operations are in general \gls{TNI} but not necessarily \gls{TP}.

Now we see that the generalization of the superchannel to $k$ time-steps must be a map $\mc{T}_{k:0}:\mscr{B}(\mscr{H}_\mathsf{S})^{\otimes{2k}}\to\mscr{B}(\mscr{H}_\mathsf{S})$, taking $k$ \gls{CPTNI} maps as arguments and giving a quantum state as output at time-step $k$, i.e.
\begin{equation}
    \mc{T}_{k:0}[\vec{\mc{A}}_{k-1:0}]=\rho_\mathsf{S}^{(k)},
\end{equation}
where $\vec{\mc{A}}_{k-1:0}=(\mc{A}_0,\mc{A}_1,\ldots,\mc{A}_{k-1})$. This can depicted diagramatically as in Fig.~\ref{fig: Process tensor}.

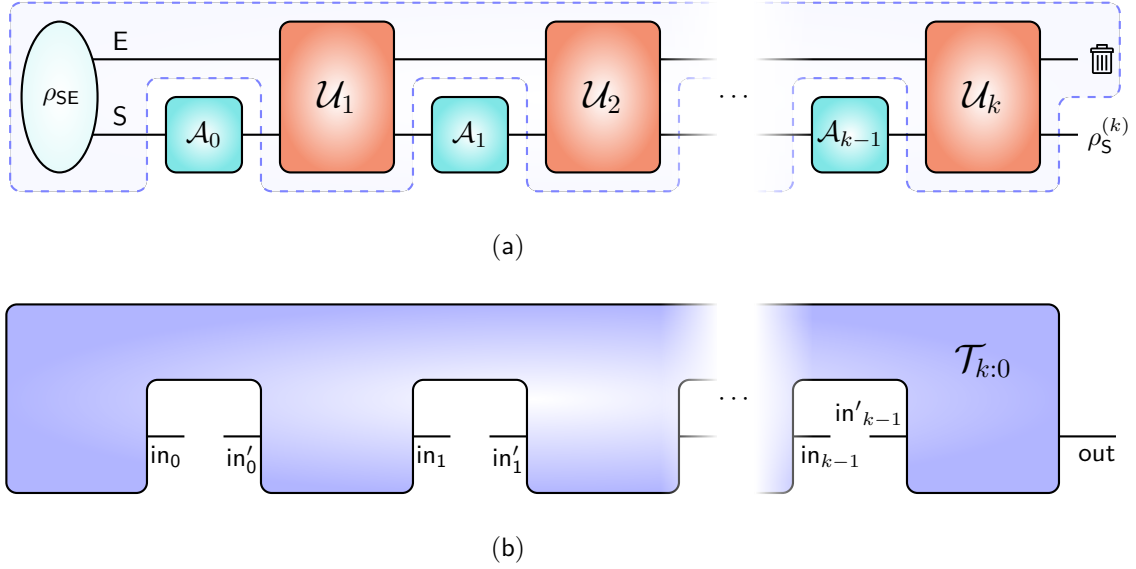
\begin{figure}[t]
    \centering
    \begin{tikzpicture}
    \begin{scope}
    \fill[inner color=white, outer color=C4!5!white, draw=C4, dashed, thick, rounded corners] (0.95,-1.25) -- (0.95,1.25) -- (15.55,1.25) -- (15.55,0) -- (14.75,0) -- (14.75,-1.25) -- (12.75,-1.25) -- (12.75,0.25) -- (11.25,0.25) -- (11.25,-1.25) -- (10.5,-1.25) -- (10.5,0.25) -- (9.75,0.25) -- (9.75,-1.25) -- (7.75,-1.25) -- (7.75,0.25) -- (6.25,0.25) -- (6.25,-1.25) -- (4.25,-1.25) -- (4.25,0.25) -- (2.75,0.25) -- (2.75,-1.25) -- cycle;
    \node[above] at (2.4,0.5) {$\mathsf{E}$};
    \node[above] at (2.4,-0.5) {$\mathsf{S}$};
    \draw[thick, -] (2,0.5) -- (15,0.5);
    \draw[thick, -] (2,-0.5) -- (15,-0.5);
    \fill[outer color=C3!10!white, inner color=white, draw=black, thick] (1.6,0) ellipse (0.5cm and 1cm);
    \node[left] at (2,0) {$\rho_\mathsf{SE}$};
    \shade[outer color=C3!50!white, inner color=white, draw=black, rounded corners, thick] (3,-1) rectangle (4,0);
    \node at (3.5,-0.5) {\large$\mc{A}_0$};
    \shade[outer color=C2!60!white, inner color=white, draw=black, rounded corners, thick] (4.5,-1) rectangle (6,1);
    \node at (5.25,0) {\LARGE$\mc{U}_1$};
    \shade[outer color=C3!50!white, inner color=white, draw=black, rounded corners, thick] (6.5,-1) rectangle (7.5,0);
    \node at (7,-0.5) {\large$\mc{A}_1$};
    \shade[outer color=C2!60!white, inner color=white, draw=black, rounded corners, thick] (8,-1) rectangle (9.5,1);
    \node at (8.75,0) {\LARGE$\mc{U}_2$};
    \draw[white, fill=white, path fading= west] (9.5,-1.5) -- (9.5,1.5) -- (10.25,1.5) -- (10.25,-1.5) ;
    \draw[white, fill=white] (10.25,1.5) -- (10.75,1.5) -- (10.75,-1.5) -- (10.25,-1.5);
    \draw[white, fill=white, path fading= east] (10.75,1.5) -- (11.5,1.5) -- (11.5,-1.5) -- (10.75,-1.5);
    \node at (10.5,0) {$\cdots$};
    \shade[outer color=C3!50!white, inner color=white, draw=black, rounded corners, thick] (11.5,-1) rectangle (12.5,0);
    \node at (12,-0.5) {\large$\mc{A}_{k-1}$};
    \shade[outer color=C2!60!white, inner color=white, draw=black, rounded corners, thick] (13,-1) rectangle (14.5,1);
    \node at (13.75,0) {\LARGE$\mc{U}_k$};
    \node[right] at (15,-0.5) {$\rho_\mathsf{S}^{(k)}$};
    \node[right] at (15,0.5) {\trash};
    \node at (7.5,-2) {$\mathsf{(a)}$};
    \end{scope}
    \begin{scope}[shift = {(0,-4)}]
    \draw[thick, -] (2,-0.5) -- (15.5,-0.5);
    \fill[inner color=white, outer color=C4!60!white, draw=black, thick, rounded corners] (0.9,-1.25) -- (0.9,1.25) -- (14.75,1.25) -- (14.75,-1.25) -- (12.75,-1.25) -- (12.75,0.25) -- (11.25,0.25) -- (11.25,-1.25) -- (10.5,-1.25) -- (10.5,0.25) -- (9.75,0.25) -- (9.75,-1.25) -- (7.75,-1.25) -- (7.75,0.25) -- (6.25,0.25) -- (6.25,-1.25) -- (4.25,-1.25) -- (4.25,0.25) -- (2.75,0.25) -- (2.75,-1.25) -- cycle;
    \draw[white, fill=white, path fading= west] (9.5,-1.5) -- (9.5,1.5) -- (10.25,1.5) -- (10.25,-1.5) ;
    \draw[white, fill=white] (10.25,1.5) -- (10.75,1.5) -- (10.75,-1.5) -- (10.25,-1.5);
    \draw[white, fill=white, path fading= east] (10.75,1.5) -- (11.5,1.5) -- (11.5,-1.5) -- (10.75,-1.5);
    \node at (10.5,0) {$\cdots$};
    \node at (7.5,-2) {$\mathsf{(b)}$};
    \node at (13.75,0.5) {\LARGE$\mc{T}_{k:0}$};
    \shade[outer color=white, inner color=white, draw=white, rounded corners] (3.25,-1) rectangle (3.75,0);
    \node[below] at (3,-0.5) {$\mathsf{in_0}$};
    \node[below] at (4,-0.5) {$\mathsf{in'_0}$};
    \shade[outer color=white, inner color=white, draw=white, rounded corners] (6.75,-1) rectangle (7.25,0);
    \node[below] at (6.5,-0.5) {$\mathsf{in_1}$};
    \node[below] at (7.5,-0.5) {$\mathsf{in'_1}$};
    \shade[outer color=white, inner color=white, draw=white, rounded corners] (11.75,-1) rectangle (12.25,0);
    \node[below] at (11.75,-0.5) {$\mathsf{in}_{k-1}$};
    \node[above] at (12.25,-0.5) {$\mathsf{in'}_{k-1}$};
    \node[below] at (15.25,-0.5) {$\mathsf{out}$};
    \end{scope}
    \end{tikzpicture}
    \caption[A $k$-step process tensor]{\textbf{A $k$-step process tensor is a map from $k$ operations (\gls{CPTNI} maps) to a quantum state:} $\mathsf{(a)}$ A joint system-environment in a fiducial state $\rho_\mathsf{SE}$ is prepared with an operation $\mc{A}_0$ and subsequently undergoes a joint unitary evolution $\mc{U}_1$, until an intervention with an operation $\mc{A}_1$ is made with subsequent evolution $\mc{U}_1$, and so on until a time-step $k$, where the state of the system \gls{syst} is given by $\rho_\mathsf{S}^{(k)}$. $\mathsf{(b)}$ The process tensor $\mc{T}_{k:0}$ is all of the content out of direct control to the experimenter, and contains all the dynamical information about the system, including \gls{syst-env} correlations across all points in time.}
    \label{fig: Process tensor}
\end{figure}

Such a map, $\mc{T}_{k:0}$ is called a \emph{process tensor}~\cite{Pollock_2018, Pollock_2018_Markov} and we will refer to it as a $k$-step process tensor, or often simply as a $k$-step process. In this sense, a $0$-step process is a joint fiducial state, which might have undergone a unitary evolution, and yielding an \gls{syst} quantum state, e.g. as is considered for equilibration on average. A superchannel is the particular case $\mc{M}=\mc{T}_{1:0}$, i.e. a $1$-step process, taking a preparation and yielding a reduced \gls{syst} quantum state. The name of $\mc{T}_{k:0}$ makes reference to the multilinear structure of such map, which can easily be seen by insertion through Eq.~\eqref{eq: process tensor open}, as $\mc{T}_{k:0}[\alpha\vec{\mc{A}}_{k:0}]+\beta\vec{\mc{B}}_{k:0}]=\alpha\mc{T}_{k:0}[\vec{\mc{A}}_{k:0}+\beta\mc{T}_{k:0}[\vec{\mc{B}}_{k:0}]$ for any two sets of operations $\{\mc{A}_i\}$ and $\{\mc{B}_i\}$ and scalars $\alpha$, $\beta$. This implies that the process tensor can be reconstructed through a relevant tomographic scheme~\cite{PhysRevA.98.012108}. Henceforth, whenever we consider a general $k$-step process tensor, we will simply denote it by $\mc{T}$ unless necessary otherwise.

Now, we know that the superchannel is \gls{CP}, however, we can similarly check that this will be the case for an arbitrary number of time-steps also by constructing its Choi state and checking that is is positive. Knowing the procedure for the superchannel, depicted in Fig.~\ref{fig: superchannel Choi state}, we can see that the generalization to an arbitrary number of $k$ steps follows by introducing $k$ maximally entangled states $\tilde{\Psi}_{\mathsf{A}_i\mathsf{B}_i}\in\mscr{B}(\mscr{H}_{\mathsf{A}_i}\otimes\mscr{H}_{\mathsf{B}_i})$, where $\mscr{H}_{\mathsf{A}_i}\cong\mscr{H}_\mathsf{S}$ and similarly for $\mathsf{B}$, and letting half of each act as an input at every step by swapping the input spaces with the corresponding ancilla. This is more clearly illustrated in Fig.~\ref{fig: Choi process tensor}.

\begin{figure}[t]
    \centering
    \begin{tikzpicture}
    \begin{scope}[xscale=1, yscale=0.8]
    \fill[inner color=white, outer color=C4!5!white, draw=C4, dashed, thick, rounded corners] (0.95,-1.25) -- (0.95,1.25) -- (14.5,1.25) -- (14.5,0) -- (13.25,0) -- (13.25,-1.25) -- (11.25,-1.25) -- (11.25,0.25) -- (10.25,0.25) -- (10.25,-1.25) -- (9.5,-1.25) -- (9.5,0.25) -- (8.75,0.25) -- (8.75,-1.25) -- (6.75,-1.25) -- (6.75,0.25) -- (5.75,0.25) -- (5.75,-1.25) -- (3.75,-1.25) -- (3.75,0.25) -- (2.75,0.25) -- (2.75,-1.25) -- cycle;
    \node[above] at (2.4,0.5) {$\mathsf{E}$};
    \node[above] at (2.4,-0.5) {$\mathsf{S}$};
    \draw[thick] (2,0.5) -- (13.75,0.5);
    \draw[thick] (2,-0.5) -- (14,-0.5) ;
    \draw[thick] (2.25,-1.5) arc (90:270:0.5);
    \draw[thick] (2.25,-1.5) -- (14,-1.5) ;
    \draw[thick] (2.25,-2.5) -- (14,-2.5);
    \node[left] at (1.75,-2) {$\tilde{\Psi}_{\mathsf{A}_1\mathsf{B}_1}$};
    \draw[thick] (2.25,-3) arc (90:270:0.5);
    \draw[ultra thick, C3] (3.25,-0.5) -- (3.25,-1.5);
    \node[C3] at (3.25,-0.5) {$\boldsymbol{\mathsf{X}}$};
    \node[C3] at (3.25,-1.5) {$\boldsymbol{\mathsf{X}}$};
    \node[below right] at (3.25,-1.5) {$\mc{S}_1$};
    \draw[thick] (2.25,-3) -- (14,-3);
    \draw[thick] (2.25,-4) -- (14,-4);
    \node[left] at (1.75,-3.5) {$\tilde{\Psi}_{\mathsf{A}_2\mathsf{B}_2}$};
    \draw[ultra thick, C3] (6.25,-0.5) -- (6.25,-3);
    \node[C3] at (6.25,-0.5) {$\boldsymbol{\mathsf{X}}$};
    \node[C3] at (6.25,-3) {$\boldsymbol{\mathsf{X}}$};
    \node[below right] at (6.25,-3) {$\mc{S}_2$};
    \node[left] at (1.75,-4.5) {$\vdots$};
    \draw[thick] (2.25,-5) arc (90:270:0.5);
    \draw[thick] (2.25,-5) -- (14,-5);
    \draw[thick] (2.25,-6) -- (14,-6);
    \node[left] at (1.75,-5.5) {$\tilde{\Psi}_{\mathsf{A}_k\mathsf{B}_k}$};
    \draw[ultra thick, C3] (10.75,-0.5) -- (10.75,-5);
    \node[C3] at (10.75,-0.5) {$\boldsymbol{\mathsf{X}}$};
    \node[C3] at (10.75,-5) {$\boldsymbol{\mathsf{X}}$};
    \node[below right] at (10.75,-5) {$\mc{S}_k$};
    \fill[outer color=C3!10!white, inner color=white, draw=black, thick] (1.6,0) ellipse (0.5cm and 1cm);
    \node[left] at (2,0) {$\rho_\mathsf{SE}$};
    \shade[outer color=C2!60!white, inner color=white, draw=black, rounded corners, thick] (4,-1) rectangle (5.5,1);
    \node at (4.75,0) {\LARGE$\mc{U}_1$};
    \shade[outer color=C2!60!white, inner color=white, draw=black, rounded corners, thick] (7,-1) rectangle (8.5,1);
    \node at (7.75,0) {\LARGE$\mc{U}_2$};
    \draw[white, fill=white, path fading= west] (8.5,-6.25) -- (8.5,1.5) -- (9.25,1.5) -- (9.25,-6.25) ;
    \draw[white, fill=white] (9.25,1.5) -- (9.75,1.5) -- (9.75,-6.25) -- (9.25,-6.25);
    \draw[white, fill=white, path fading= east] (9.75,1.5) -- (10.5,1.5) -- (10.5,-6.25) -- (9.75,-6.25);
    \node at (9.5,0) {$\cdots$};
    \node at (9.5,-2) {$\cdots$};
    \node at (9.5,-3.5) {$\cdots$};
    \node at (9.5,-4.4) {$\vdots$};
    \node at (9.5,-5.5) {$\cdots$};
    \shade[outer color=C2!60!white, inner color=white, draw=black, rounded corners, thick] (11.5,-1) rectangle (13,1);
    \node at (12.25,0) {\LARGE$\mc{U}_k$};
    \node[right] at (13.75,0.5) {\trash};
    \fill[outer color=C4!60!white, inner color=white, draw=black, thick] (13.5,-0.25) rectangle (14.5,-6.25);
    \draw[white, fill=white, path fading= north] (13.25,-4) -- (13.25,-4.5) -- (14.75,-4.5) --(14.75,-4) ;
    \draw[white, fill=white] (13.25,-4.4) -- (14.75,-4.4) -- (14.75,-4.6) -- (13.25,-4.6);
    \draw[white, fill=white, path fading= south] (13.25,-4.5) -- (14.75,-4.5) -- (14.75,-5) -- (13.25,-5);
    \node at (14,-4.4) {$\vdots$};
    \node at (14,-3.25) {\large$\Upsilon_{k:0}$};
    \end{scope}
    \end{tikzpicture}
    \caption[Choi state representation of the process tensor]{\textbf{The Choi state representation of a $k$-step process tensor}, denoted $\Upsilon_{k:0}$, can be obtained by swapping out the system, $\mc{S}_i$ with half a maximally entangled state, $\Psi_{\mathsf{A}_i\mathsf{B}_i}$, at each step $i$. The final state is an unnormalized many-body state acting on a $d_\mathsf{S}^{2k+1}$ dimensional system.}
    \label{fig: Choi process tensor}
\end{figure}
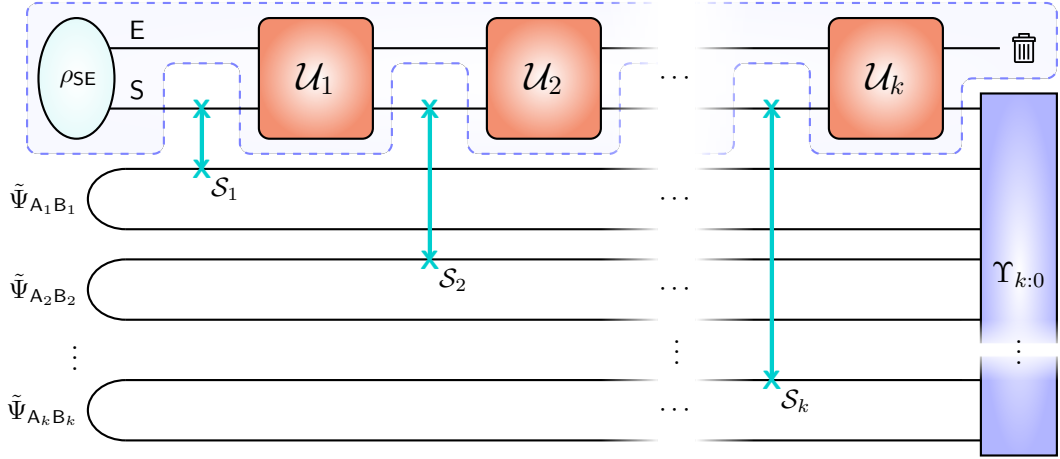

Specifically, the Choi state of the process tensor takes the form
\begin{equation}
    \Upsilon_{k:0}=\tr_\mathsf{E}\left[\,\mc{U}_k\,\mc{S}_k\,\mc{U}_{k-1}\,\mc{U}_{k-1}\cdots\,\mc{U}_1\,\mc{S}_1\,(\rho_\mathsf{SE}\otimes\tilde{\Psi}^{\otimes{k}})\right],
    \label{eq: process tensor Choi state}
\end{equation}
where here we are implicitly writing $\mc{U}_i$ for $\mc{U}_i\otimes\mc{I}_{\mathsf{A}_1\mathsf{B}_1\cdots\mathsf{A}_k\mathsf{B}_k}$ and $\tilde{\Psi}^{\otimes{k}}=\tilde{\Psi}_{\mathsf{A}_1\mathsf{B}_1}\otimes\cdots\otimes\tilde{\Psi}_{\mathsf{A}_k\mathsf{B}_k}$. The generalized swap between system \gls{syst} and an ancilla $\mathsf{A}_i$ at time-step $i$ is defined by 
\begin{equation}
    \mc{S}_i(\cdot):=\swap_{\mathsf{S}\mathsf{A}_i}(\cdot)\swap_{\mathsf{S}\mathsf{A}_i}
\end{equation}
where here
\begin{equation}
    \swap_{\mathsf{S}\mathsf{A}_i}:=\sum_{i,j=1}^{d_\mathsf{S}}\mc{I}_\mathsf{E}\otimes|i\rangle\!\langle{j}|\otimes\mc{I}_{\mathsf{A}_1\mathsf{B}_1\cdots\mathsf{A}_{i-1}\mathsf{B}_{i-1}}\otimes|j\rangle\!\langle{i}|\otimes\mc{I}_{\mathsf{B}_i\mathsf{A}_{i+1}\mathsf{B}_{i+1}\cdots\mathsf{A}_k\mathsf{B}_k},
\end{equation}
and as in the case of the superchannel, the Choi state of the process tensor is manifestly positive by definition.

Similar to the case of the process tensor, we will commonly denote simply by $\Upsilon$ a $k$-step process tensor unless it is relevant to denote explicitly the number of time-steps.

To see that this reproduces correctly the action of the process tensor, we can contract with the Choi state of the collection of time-ordered operations $\mc{A}_0,\mc{A}_1,\ldots,\mc{A}_{k-1}$: these are uncorrelated, so the full Choi state is simply a tensor product of Choi states.

We can illustrate this for $k=2$ and the generalization follows trivially. First let us expand the swaps and the maximally entangled states as we did in Eq.~\eqref{eq: Choi superchannel expanded} for the superchannel, i.e.
\begin{align}
    \Upsilon_{2:0}=\sum_{\alpha,\ldots,\delta}\tr_\mathsf{E}\left[U_2\FS_{\alpha_2\beta_2}U_1\FS_{\alpha_1\beta_1}\rho_\mathsf{SE}\FS_{\delta_1\gamma_1}U_1^\dg\FS_{\delta_2\gamma_2}U_2^\dg\right]\otimes|\beta_1\alpha_1\beta_2\alpha_2\rangle\!\langle\delta_1\gamma_1\delta_2\gamma_2|,
\end{align}
so that, contracting with $\Upsilon_{\vec{\mc{A}}_{1:0}}^\mathrm{T}:=\Upsilon_{\mc{A}_0}^\mathrm{T}\otimes\Upsilon_{\mc{A}_1}^\mathrm{T}$, where here similarly the single operation Choi states are defined by $\Upsilon_{\mc{A}_i}=(\mbb1_{\mathsf{A}_i}\otimes\mc{A}_i)\tilde{\Psi}$, then
\begin{align}
    &\tr_\mathsf{in}[\Upsilon_{2:0}(\mbb{1}_\mathsf{out}\otimes\Upsilon_{\vec{\mc{A}}_{1:0}}^\mathrm{T})]\nonumber\\
    &=\sum_{\alpha,\ldots,\delta}\tr_\mathsf{E}\left[U_2\FS_{\alpha_2\beta_2}U_1\FS_{\alpha_1\beta_1}\rho_\mathsf{SE}\FS_{\delta_1\gamma_1}U_1^\dg\FS_{\delta_2\gamma_2}U_2^\dg\right]\langle\gamma_1|\mc{A}_0^\mathrm{T}(|\beta_1\rangle\!\langle\delta_1|)|\alpha_1\rangle\!\langle\gamma_2|\mc{A}_1^\mathrm{T}(|\beta_2\rangle\!\langle\delta_2|)|\alpha_2\rangle\nonumber\\
    &=\sum_{\alpha,\ldots,\delta}\tr_\mathsf{E}\left[U_2\FS_{\alpha_2\beta_2}U_1\FS_{\alpha_1\beta_1}\rho_\mathsf{SE}\FS_{\delta_1\gamma_1}U_1^\dg\FS_{\delta_2\gamma_2}U_2^\dg\right]\langle\alpha_1|\mc{A}_0(|\beta_1\rangle\!\langle\delta_1|)|\gamma_1\rangle\!\langle\alpha_2|\mc{A}_1(|\beta_2\rangle\!\langle\delta_2|)|\gamma_2\rangle\nonumber\\
    &=\sum_\mu\tr_\mathsf{E}\left[U_2A_{\mu_1}U_1A_{\mu_0}\rho_\mathsf{SE}A_{\mu_0}^{\dg}U_1^{\dg}A_{\mu_1}^{\dg}U_2^\dg\right]\nonumber\\
    &=\tr_\mathsf{E}\left[\,\mc{U}_2\,\mc{A}_1\,\mc{U}_1\,\mc{A}_0(\rho_\mathsf{SE})\right]\nonumber\\
    &=\mc{T}_{2:0}[\vec{\mc{A}}_{1:0}],
\end{align}
where the trace is over all input spaces, as labelled in Fig.~\ref{fig: Process tensor}, and where we used the fact that the operations $\mc{A}_i$ are \gls{CP} by decomposing $\mc{A}_i(\cdot)=\sum_{\mu_i}A_{\mu_i}(\cdot)A_{\mu_i}^\dg$. This can be checked similarly for any number of time-steps $k$, so that
\begin{equation}
    \mc{T}[\vec{\mc{A}}_{k-1:0}]=\tr_\mathsf{in}\left[\Upsilon\left(\mbb{1}_\mathsf{out}\otimes\Upsilon_{\vec{\mc{A}}_{k-1:0}}^\mathrm{T}\right)\right],
\end{equation}
for any $k$-step process $\mc{T}$ with Choi state $\Upsilon$.

Now, with the Choi state at hand we can also readily verify that the process tensor is \gls{TP}, i.e.
\begin{align}
    \tr_\mathsf{out}[\Upsilon]&=\sum_{\alpha,\ldots,\delta}\tr\left[U_k\FS_{\alpha_k\beta_k}\cdots{U}_1\FS_{\alpha_1\beta_1}\rho_\mathsf{SE}\FS_{\delta_1\gamma_1}U_1^\dg\cdots\FS_{\delta_k\gamma_k}U_k^\dg\right]|\beta_1\alpha_1\cdots\beta_k\alpha_k\rangle\!\langle\delta_1\gamma_1\cdots\delta_k\gamma_k|\nonumber\\
    &=\sum_{\alpha,\beta}|\beta_1\alpha_1\cdots\beta_k\alpha_k\rangle\!\langle\beta_1\alpha_1\cdots\beta_k\alpha_k|\nonumber\\
    &=\mbb1_{\mathsf{A}_1\mathsf{B}_1\cdots\mathsf{A}_k\mathsf{B}_k}\nonumber\\
    &=\mbb1_\mathsf{in},
\end{align}
where the trace is over the output in system \gls{syst}, thus whenever the operations $\mc{A}_i$ are \gls{TP}, the action of the process tensor on these will yield a properly normalized quantum state.

The final ingredient we require from the process tensor is a containment property (which amounts to a \emph{causality} property~\cite{milz_2019}) and more generally a consistency condition. Now, in principle this is akin to doing a marginalization, however, when we did this for classical processes we implicitly made the assumption that there was a single way to probe the system in question and that this probing had no influence at all in the process. So, even if classically summing over all events amounts to doing nothing, we cannot do the same in the quantum case. So, for example, the sum over outcomes of a measurement $\mc{B}$ in a basis $\{|\beta\rangle\}$ is given by $\mc{B}(\rho)=\sum_\beta|\beta\rangle\!\langle\beta|\rho|\beta\rangle\!\langle\beta|$, thus a state being classical corresponds to it being diagonal in the basis $\{|\beta\rangle\}$ and $\mc{B}(\rho)=\rho$, however when $\rho$ contains non-diagonal non-zero elements this stops being true.

The containment property clearly holds for the process tensor by replacing the relevant operations with identities, however, it was recently proved that consistency in general holds not only for the process tensor as well, but for any stochastic theory~\cite{Milz_2020}. This result serves to define quantum stochastic processes by means of the process tensor in an unambiguous way. For our purposes, we should notice that indeed smaller process tensors are contained in bigger ones in the sense that
\begin{equation}
    \mc{T}_{\mscr{T}_\ell}\left[\vec{A}_{\mscr{T}_\ell}\right]=\mc{T}_{\mscr{T}_k}\left[\vec{A}_{\mscr{T}_\ell}\cup\mc{I}_{\mscr{T}_k\setminus\mscr{T}_\ell}\right],\qquad\forall\mscr{T}_\ell,\mscr{T}_k:\,|\mscr{T}_\ell|<|\mscr{T}_\ell|,
\end{equation}
where $\mscr{T}_n$ is a discrete set of times as defined in Eq.~\eqref{eq: set of times}. This can also be visualized easily through Fig.~\ref{fig: Process tensor containment}.

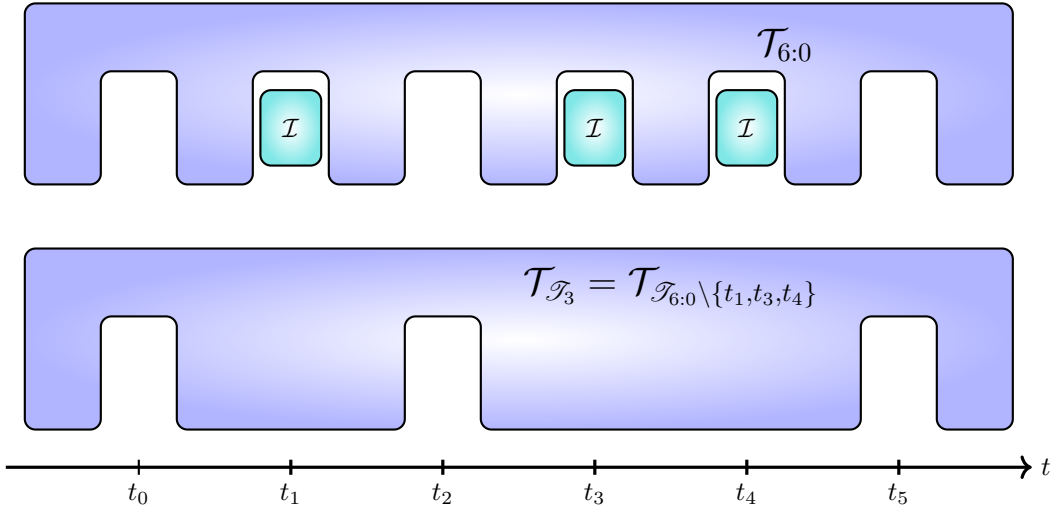
\begin{figure}[t]
    \centering
    \begin{tikzpicture}
    \begin{scope}
    \fill[inner color=white, outer color=C4!60!white, draw=black, thick, rounded corners] (2,1.15) -- (15,1.15) -- (15,-1.25) -- (14,-1.25) -- (14,0.25) -- (13,0.25) -- (13,-1.25) -- (12,-1.25) -- (12,0.25) -- (11,0.25) -- (11,-1.25) -- (10,-1.25) -- (10,0.25) -- (9,0.25) -- (9,-1.25) -- (8,-1.25) -- (8,0.25) -- (7,0.25) -- (7,-1.25) -- (6,-1.25) -- (6,0.25) -- (5,0.25) --  (5,-1.25) -- (4,-1.25) -- (4,0.25) -- (3,0.25) -- (3,-1.25) -- (2,-1.25)  -- cycle;
    \node at (12,0.6) {\LARGE$\mc{T}_{6:0}$};
    \shade[outer color=C3!50!white, inner color=white, draw=black, rounded corners, thick] (5.1,-1) rectangle (5.9,0);
    \node at (5.5,-0.5) {$\mc{I}$};
    \shade[outer color=C3!50!white, inner color=white, draw=black, rounded corners, thick] (9.1,-1) rectangle (9.9,0);
    \node at (9.5,-0.5) {$\mc{I}$};
    \shade[outer color=C3!50!white, inner color=white, draw=black, rounded corners, thick] (11.1,-1) rectangle (11.9,0);
    \node at (11.5,-0.5) {$\mc{I}$};
    \end{scope}
    \begin{scope}[shift={(0,-3.25)}]
    \fill[inner color=white, outer color=C4!60!white, draw=black, thick, rounded corners] (2,1.15) -- (15,1.15) -- (15,-1.25) -- (14,-1.25) -- (14,0.25) -- (13,0.25) -- (13,-1.25) -- (12,-1.25) -- (11,-1.25) -- (10,-1.25) -- (9,-1.25) -- (8,-1.25) -- (8,0.25) -- (7,0.25) -- (7,-1.25) -- (6,-1.25) --  (5,-1.25) -- (4,-1.25) -- (4,0.25) -- (3,0.25) -- (3,-1.25) -- (2,-1.25)  -- cycle;
    \node at (10.5,0.625) {\LARGE$\mc{T}_{\mscr{T}_3}=\mc{T}_{\mscr{T}_{6:0}\setminus\{t_1,t_3,t_4\}}$};
    \end{scope}
    \begin{scope}[shift={(0,-5)}]
    \draw[very thick, ->] (1.75,0) -- (15.25,0);
    \node[right] at (15.25,0) {$t$};
    \draw[thick, -] (3.5,-0.1) -- (3.5,0.1);
    \node[below] at (3.5,-0.1) {$t_0$};
    \draw[very thick] (5.5,-0.1) -- (5.5,0.1);
    \node[below] at (5.5,-0.1) {$t_1$};
    \draw[very thick] (7.5,-0.1) -- (7.5,0.1);
    \node[below] at (7.5,-0.1) {$t_2$};
    \draw[very thick] (9.5,-0.1) -- (9.5,0.1);
    \node[below] at (9.5,-0.1) {$t_3$};
    \draw[very thick] (11.5,-0.1) -- (11.5,0.1);
    \node[below] at (11.5,-0.1) {$t_4$};
    \draw[very thick] (13.5,-0.1) -- (13.5,0.1);
    \node[below] at (13.5,-0.1) {$t_5$};
    \end{scope}
    \end{tikzpicture}
    \caption[Containment property of the process tensor]{\textbf{Process tensor containment property (example):} A 6-step process tensor $\mscr{T}_{6:0}$ contains the 3-step process tensor $\mc{T}_{\mscr{T}_3}$ where $\mscr{T}_3=\{t_0,t_2,t_5\}$; this can be seen by applying identity operations $\mc{I}$ at timesteps $t_1,t_3,t_4$. The overarching idea is that there exists a unique maximal description containing any process on a reduced number of time-steps~\cite{Milz_2020}.}
    \label{fig: Process tensor containment}
\end{figure}

Finally, let us draw a more precise picture in relation with the discussion of the classical case. A quantum event $x_i$ at the $i$\textsuperscript{th} time-step corresponds to an outcome, i.e. a random variable, of the corresponding intervention, which is a \gls{CPTNI} map $\mc{A}^{(x_i)} (\cdot):= \sum_\nu {A}_\nu^{(x_i)}(\cdot)A_\nu^{(x_i)\,\dg}$ with Kraus operators $\{A_\nu^{(x_i)}\}$ satisfying $\sum_\nu{A}_\nu^{(x_i)}{A}_\nu^{(x_i)\,\dg}\leq\mbb1$. More generally, an intervention corresponds to the action of an \emph{instrument} $\mathcal{J} = \left\{\mc{A}^{(x_i)}\right\}$, which overall yields a \gls{CPTP} map when summed over possible outcomes $\sum_{x_i} \mc{A}^{(x_i)}:=\mc{A}^{\mc{J}}$~\cite{Lindblad1979}, naturally generalizing the concept of a \gls{POVM}.

Thus, if an experimenter applies the sequence of \gls{CP} maps $\mc{A}_0^{(x_0)},\mc{A}_1^{(x_1)},\ldots,\mc{A}_{k-1}^{(x_{k-1})}$, each being an element of a corresponding instrument $\mc{J}_0,\mc{J}_1,\ldots,\mc{J}_{k-1}$, and measures the final state $\rho_\mathsf{S}^{(k)}$ with a \gls{POVM} through an instrument $\mc{J}_k=\left\{\mc{A}_k^{(x_k)}=\mathrm{M}_k^{(x_k)}\right\}$, the probability to observe a sequence of quantum events is given by
\begin{align}
    \mbb{P}\left(\vec{x}_{k:0}|\vec{\mathcal{J}}_{k:0}\right)&=\tr\left[\mc{A}_k^{(x_k)} \mathcal{U}_{k} \mc{A}_{k-1}^{(x_{k-1})}\,\mc{U}_{k-1}\ \!\dots\mc{A}_1^{x_1}\mc{U}_1\,\mathcal{A}_0^{x_0} (\rho_\mathsf{SE})\,\right]\nonumber\\
    &=\tr\left\{\,\mc{A}_k^{(x_k)}\,\mc{T}[\vec{\mc{A}}_{k-1:0}^{\,(x_{k-1:0})}]\right\},
    \label{eq: Born rule 1}
\end{align}
where here $\vec{\mathcal{J}}_{k:0}=\{\mc{J}_0,\mc{J}_1,\ldots,\mc{J}_k\}$, so that the process tensor contains the probability distributions for all possible measurements, naturally generalising the classical setting described above.

This can be neatly rewritten by means of the corresponding Choi states, clearly separating the influence of the environment from that of the interventions, in a multitime generalization of the Born rule~\cite{Oreshkov2012, Costa_2016, Shrapnel_2018}:
\begin{gather}
    \mbb{P}\left(\vec{x}_{k:0}|\vec{\mathcal{J}}_{k:0}\right)=\tr\left[\,\Upsilon \Lambda^\mathrm{T}\,\right],
    \label{eq: Born rule 2}
\end{gather}
where we have now defined $\Lambda$ as the Choi state of the interventions $\left\{\mc{A}_i^{(x_i)}\right\}_{i=0}^k$. Notice that $\Lambda$ itself can be thought of having a process-tensor-like structure, here with a set of uncorrelated operations between time-steps. So in fact, we can generalize this to a situation where the experimental interventions are correlated with one another by means of an ancillary space $\mathsf{\Gamma}$, equivalent to an environment, albeit one that the experiment has access to. We can visualize this as a comb contracting with a process tensor as in Fig.~\ref{fig: tester}.

Finally, we can more generally consider the operations $\Lambda$ as an element of a so-called \emph{tester}~\cite{Chiribella_2009}, which is a set of operation combs that overall lead to a deterministic operation. That is, explicitly denoting $\Lambda_{k:0}^{(x_{k:0})}=\Upsilon_{\vec{\mc{A}}_{k:0}^{\,(x_{k:0})}}$, then a tester is given by the set $\left\{\Lambda_{k:0}^{(x_{k:0})}\right\}$, where each element is positive and where the sum, $\Lambda_{k:0}^{\vec{\mc{J}}_{k:0}}:=\sum_{x_i}\Lambda_{k:0}^{(x_{k:0})}$, adds up to a \gls{CPTP} map, $\tr_\mathsf{in}[\Lambda_{k:0}^{\vec{\mc{J}}_{k:0}}]=\mbb1_\mathsf{out}$. In a nutshell, this also generalizes the concept of instrument, i.e. what a tester element is to a tester, a \gls{CP} map is to an instrument and a \gls{POVM} element is to a \gls{POVM}.

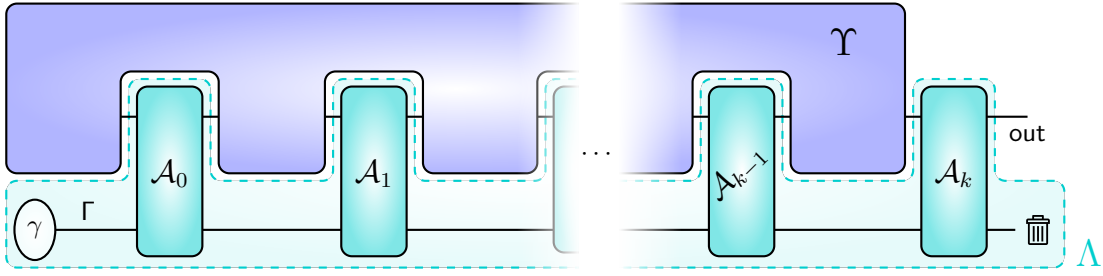
\begin{figure}[t]
    \centering
    \begin{tikzpicture}
    \begin{scope}[xscale=1.075]
    \fill[inner color=white, outer color=C3!10!white, draw=C3, dashed, thick, rounded corners] (3.25,-1.35) -- (3.25,-2.5) -- (16.2,-2.5) -- (16.2,-1.35) -- (15.35,-1.35) --  (15.35,0) -- (14.35,0) -- (14.35,-1.35) -- (12.75,-1.35) -- (12.75,0) -- (11.75,0) -- (11.75,-1.35) -- (10.5,-1.35) -- (10.5,0) -- (9.85,0) -- (9.85,-1.35) -- (8.25,-1.35) -- (8.25,0) -- (7.25,0) -- (7.25,-1.35) -- (5.75,-1.35) -- (5.75,0) -- (4.75,0) -- (4.75,-1.35) -- cycle;
    \node[below] at (15.75,-0.5) {$\mathsf{out}$};
    \draw[thick, -] (4,-0.5) -- (15.75,-0.5);
    \fill[inner color=white, outer color=C4!60!white, draw=black, thick, rounded corners] (3.25,-1.25) -- (3.25,1) -- (14.25,1) -- (14.25,-1.25) -- (12.85,-1.25) -- (12.85,0.1) -- (11.65,0.1) -- (11.65,-1.25) -- (10.5,-1.25) -- (10.5,0.1) -- (9.75,0.1) -- (9.75,-1.25) -- (8.35,-1.25) -- (8.35,0.1) -- (7.15,0.1) -- (7.15,-1.25) -- (5.85,-1.25) -- (5.85,0.1) -- (4.65,0.1) -- (4.65,-1.25) -- cycle;
    \node at (10.5,0.5) {$\cdots$};
    \node at (13.5,0.5) {\LARGE$\Upsilon$};
    \draw[thick, -] (3.6,-2) -- (15.6,-2);
    \node[right] at (15.58,-2) {\trash};
    \fill[outer color=C3!5!white, inner color=white, draw=black, thick] (3.6,-2) ellipse (0.25 and 0.4);
    \node at (3.6,-2) {$\gamma$};
    \node[above] at (4.25,-2) {$\mathsf{\Gamma}$};
    \shade[outer color=C3!50!white, inner color=white, draw=black, rounded corners, thick] (4.85,-2.35) rectangle (5.65,-0.1);
    \node at (5.25,-1.25) {\Large$\mc{A}_0$};
    \shade[outer color=C3!50!white, inner color=white, draw=black, rounded corners, thick] (7.35,-2.35) rectangle (8.15,-0.1);
    \node at (7.75,-1.25) {\Large$\mc{A}_1$};
    \shade[outer color=C3!50!white, inner color=white, draw=black, rounded corners, thick] (9.95,-2.3) rectangle (10.5,-0.1);
    \shade[outer color=C3!50!white, inner color=white, draw=black, rounded corners, thick] (11.85,-2.35) rectangle (12.65,-0.1);
    \node[rotate=45] at (12.2,-1.25) {\Large$\mc{A}_{k-1}$};
    \shade[outer color=C3!50!white, inner color=white, draw=black, rounded corners, thick] (14.45,-2.35) rectangle (15.25,-0.1);
    \node at (14.85,-1.25) {\Large$\mc{A}_k$};
    \draw[white, fill=white, path fading= west] (9.5,-2.75) -- (9.5,1.5) -- (10.25,1.5) -- (10.25,-2.75) ;
    \draw[white, fill=white] (10.25,1.5) -- (10.75,1.5) -- (10.75,-2.75) -- (10.25,-2.75);
    \draw[white, fill=white, path fading= east] (10.75,1.5) -- (11.5,1.5) -- (11.5,-2.75) -- (10.75,-2.75);
    \node at (10.5,-1) {$\cdots$};
    \node[below, C3] at (16.5,-2) {\LARGE$\Lambda$};
    \end{scope}
    \end{tikzpicture}
    \caption[Correlated operations on a process tensor]{\textbf{A set of correlated operations} $\left\{\mc{A}_0,\mc{A}_1,\ldots,\mc{A}_{k-1}\right\}$ followed by a \gls{POVM} element $\mc{A}_k=\mathrm{M}_k$ acting on a joint $\mathsf{S\Gamma}$ system, where $\mathsf{\Gamma}$ is an ancillary system initialized in state $\gamma$, can be described as a process-tensor like object that is \gls{CP} with Choi state $\Lambda$. Here we equivalently denote the process tensor as $\Upsilon$.}
    \label{fig: tester}
\end{figure}

The topic of quantum stochastic processes is rapidly developing and by no means are we making an exhaustive presentation in this thesis. While many questions date back to several decades, with noticeable approaches by Accardi~\cite{Accardi_1976, Accardi_1982} and Lindblad~\cite{Lindblad1979}, similar to the case of the foundations of statistical mechanics, the topic has experienced a renewed interest from different perspectives (and names), with a great deal of advancement in recent years~\cite{Hardy_2012, Hardy_2016, Werner_2005,Caruso_2014, Chiribella_2008, Chiribella_2009, Chiribella_2013, Oreshkov2012, Costa_2016, Oreshkov_2016_2,Portmann_2017, Milz_2018, Strasberg_2019, Strasberg_2019_2}. As we will see, in a sense, with this thesis we have contributed to this program from a front that we can directly relate with equilibration and typicality. To draw this connection let us then discuss how the process tensor naturally generalizes the concept of Markovianity and accounts for a non-ambiguous measure of non-Markovianity. 

\subsection{The quantum Markov condition}
As we have now argued, the standard approach to open quantum dynamics by means of dynamical maps offers a necessary but not sufficient condition for Markovianity. Attempts to define what a quantum Markovian process is have been made~\cite{Rivas_2014, Breuer_2016} and several \emph{measures} of non-Markovianity have been proposed e.g. based on divisibility of dynamical maps~\cite{Rivas_2010, Hou_2011}, positivity~\cite{Wolf_2008, Usha_2011, Devi_2012, Manisalco_2014}, or on trace distance, positing that this distinguishability for two initial quantum states must be monotonically decreasing over time, essentially implying that if one can distinguish better these two states it is because some information is flowing back from the environment~\cite{Breuer_2009}. All of these are valid witnesses, however, they have been seen to be inconsistent with each other~\cite{Rivas_2011}, disagreeing on whether a process is Markovian and/or on the degree of non-Markovianity.

As we have stressed before, at its core a quantum Markov condition has to consider multitime correlations and approaches based on dynamical maps or generally quantum channels capture by definition only two-point correlations.

Let us begin by writing the conditional probability of getting an outcome $x_k$ given outcomes $\vec{x}_{k-1:0}$, all with respective instruments $\vec{\mc{J}}_{k:0}$ analogous to Eq.~\eqref{eq: def conditional probability}, i.e.
\begin{equation}
    \mbb{P}(x_k|\vec{x}_{k-1:0};\vec{\mc{J}}_{k:0})=\f{\mbb{P}(\vec{x}_{k:0}|\vec{\mc{J}}_{k:0})}{\mbb{P}(\vec{x}_{k-1:0}|\vec{\mc{J}}_{k-1:0})},
    \label{eq: conditional quantum}
\end{equation}
where the numerator is given by Eq.~\eqref{eq: Born rule 1} and similarly for the denominator, $\mbb{P}(\vec{x}_{k-1:0}|\vec{\mc{J}}_{k-1:0})=\tr\left\{\mc{A}_{k-1}^{\,(x_{k-1})}\mc{T}_{k-1:0}\left[\vec{\mc{A}}_{k-2:0}^{\,(\vec{x}_{k-2:0})}\right]\right\}$. The main issue for defining a Markov condition lies in the invasiveness the operations $\mc{A}_i^{(x_i)}$. Notice that the denominator of Eq.~\eqref{eq: conditional quantum} won't necessarily capture all of the information of the output state after interrogation at time-step $k-1$, while this will affect the numerator at time-step $k$.

The main idea in Ref.~\cite{Pollock_2018_Markov} to resolve this is called a \emph{causal break}. Essentially, the history dependence of a process at time-step $k$ can be checked by fixing its state at such time-step and analyzing its future. This can be described as follows: an experimenter performs $k-1$ operations, and at time-step $k$ performs a measurement, given by a \gls{POVM} $\left\{\mathrm{M}^{(m_k)}\right\}$, and independently reprepares the system in an independent known state, $\sigma_\mathsf{S}^{(s_k)}$. Here the labels $m_k$ and $s_k$ just distinguish the measured outcome and the input for the fresh state. What this achieves is to break the information flow on subsystem \gls{syst}, while leaving it in a known state, thus allowing to condition future statistics on this state. Put differently, after a causal break the \gls{syst-env} state will be in a product form, and independent of the previous state of \gls{syst}, however, the \gls{env} part an explicitly depend on the state $\rho_\mathsf{SE}^{(k)}$ before the causal break, and as such on all operations $\vec{\mc{A}}_{k-1:0}^{(\vec{x}_{k-1:0})}$ on \gls{syst} before the causal break.

Specifically, in terms of an instrument, this means a causal break at time-step $k$ is given by
\begin{equation}
    \mc{J}_k^{\mc{B}}:=\left\{\mc{B}_k^{(x_k)}:=\mathrm{M}^{(m_k)}\otimes\sigma_\mathsf{S}^{(s_k)}\right\},
\end{equation}
which can be depicted as in Fig.~\ref{fig: causal break}. In general any operation with an input independent of its output will constitute a causal break, with the respective Choi states being uncorrelated. After a causal break the joint \gls{syst-env} state is rendered in a product form, with the state \gls{syst} only depending on the label $s_k$ but not on the previous $\rho_{\mathsf{SE}}^{(k)}$. That is, the causal break acts on the \gls{syst-env} state as
\begin{equation}
    \left[\mc{B}_k^{(x_k)}\otimes\mc{I}_\mathsf{E}\right]\rho_\mathsf{SE}^{(k)}=\sigma_\mathsf{S}^{(s_k)}\otimes\tr_\mathsf{S}\left[\rho_\mathsf{SE}^{(k)}\left(\mathrm{M}^{(m_k)}\otimes\mbb1_\mathsf{E}\right)^\mathrm{T}\right],
\end{equation}
so the environment, however, can still depend on $\rho_{\mathsf{SE}}^{(k)}$ and the full operation process $\Lambda_{k-1:0}$, thus if we are able to distinguish any two distinct operation tensors, $\Lambda_{k-1:0}\neq\Lambda^\prime_{k-1:0}$ and two different \gls{POVM} outcomes $m_k\neq{m}^\prime_k$, the environment must have carried the memory allowing us to do so.

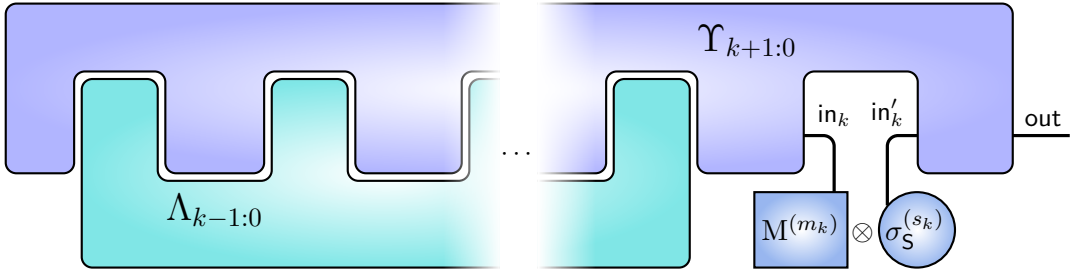
\begin{figure}[t]
    \centering
    \begin{tikzpicture}
    \begin{scope}
    \fill[inner color=white, outer color=C3!50!white, draw=black, thick, rounded corners]  (12.75,-2.5) -- (12.75,0) -- (11.75,0) -- (11.75,-1.35) -- (10.5,-1.35) -- (10.5,0) -- (9.85,0) -- (9.85,-1.35) -- (8.25,-1.35) -- (8.25,0) -- (7.25,0) -- (7.25,-1.35) -- (5.75,-1.35) -- (5.75,0) -- (4.75,0) -- (4.75,-2.5) -- cycle;
    \draw[very thick] (17,-0.75) -- (17.75,-0.75);
    \node[above] at (17.4,-0.75) {$\mathsf{out}$};
    \fill[inner color=white, outer color=C4!60!white, draw=black, thick, rounded corners] (3.75,-1.25) -- (3.75,1) -- (17,1) -- (17,-1.25) -- (15.75,-1.25) -- (15.75,0.1) -- (14.25,0.1) -- (14.25,-1.25) --  (12.85,-1.25) -- (12.85,0.1) -- (11.65,0.1) -- (11.65,-1.25) -- (10.5,-1.25) -- (10.5,0.1) -- (9.75,0.1) -- (9.75,-1.25) -- (8.35,-1.25) -- (8.35,0.1) -- (7.15,0.1) -- (7.15,-1.25) -- (5.85,-1.25) -- (5.85,0.1) -- (4.65,0.1) -- (4.65,-1.25) -- cycle;
    \draw[very thick, rounded corners] (14.25,-0.75) -- (14.65,-0.75) -- (14.65,-1.6);
    \node[above] at(14.65,-0.75) {$\mathsf{in}_k$} ;
    \draw[very thick, rounded corners] (15.75,-0.75) -- (15.35,-0.75) -- (15.35,-1.675);
    \node[above] at(15.35,-0.75) {$\mathsf{in}_k^\prime$} ;
    \fill[inner color=white, outer color=C1!60!white, draw=black,thick] (13.6,-1.5) rectangle (14.825,-2.5);
    \fill[inner color=white, outer color=C1!60!white, draw=black,thick] (15.74,-2) ellipse (0.5 and 0.5);
    \node[below] at (14.9,-1.6) {\large$\mathrm{M}^{(m_k)}\,\otimes\,\sigma_\mathsf{S}^{(s_k)}$};
    \node at (10.5,0.5) {$\cdots$};
    \node at (13.5,0.5) {\LARGE$\Upsilon_{k+1:0}$};
    \node at (6.5,-1.75) {\LARGE$\Lambda_{k-1:0}$};
    \draw[white, fill=white, path fading= west] (9.5,-2.75) -- (9.5,1.5) -- (10.25,1.5) -- (10.25,-2.75) ;
    \draw[white, fill=white] (10.25,1.5) -- (10.75,1.5) -- (10.75,-2.75) -- (10.25,-2.75);
    \draw[white, fill=white, path fading= east] (10.75,1.5) -- (11.5,1.5) -- (11.5,-2.75) -- (10.75,-2.75);
    \node at (10.5,-1) {$\cdots$};
    \end{scope}
    \end{tikzpicture}
    \caption[Causal break]{\textbf{Causal break} by means of an instrument $\mc{J}_k^\mc{B}=\left\{\mathrm{M}_k^{(m_k)}\otimes\sigma_\mathsf{S}^{(s_k)}\right\}$ with $\mathrm{M}_k^{(m_k)}$ a \gls{POVM} element with outcome $m_k$ and $\sigma_\mathsf{S}^{(s_k)}$ an independent input state labelled by $s_k$, acting at time-step $k$ on a process $\Upsilon_{k+1:0}$ which previously acts on an operation process $\Lambda_{k-1:0}$. Quantum Markovianity refers to an independence  at any future time-step $k+1$ from the previous operations $\Lambda_{k-1:0}$ and the outcome $m_k$; any dependence on these will point to information being carried through the environment on $\Upsilon_{k+1:0}$ and in such case the process is non-Markovian.}
    \label{fig: causal break}
\end{figure}

We can then introduce the following:

\begin{definition}[Quantum Markov condition~\cite{Pollock_2018_Markov, Costa_2016}]
\label{def: quantum Markov} A $(k+1)$-step quantum process $\Upsilon_{k+1:0}$ is Markovian if the statistics with respect to any measurement $\mc{J}_{k+1}$ after a causal break $\mc{J}_k^{\mc{B}}:=\left\{\mathrm{M}^{(m_k)}\otimes\sigma_\mathsf{S}^{(s_k)}\right\}$ are independent of outcome $m_k$ and all historic outcomes $\vec{x}_{k-1:0}$ with all possible historic instruments $\vec{\mc{J}}_{k-1:0}$,
\begin{equation}
    \mbb{P}\left(x_{k+1}|m_k,s_k,\vec{x}_{k-1:0};\mc{J}_{k+1},\mc{J}_k^\mc{B},\vec{\mc{J}}_{k-1:0}\right)=\mbb{P}\left(x_{k+1}|s_k;\mc{J}_{k+1},\mc{J}_k^\mc{B}\,\right),
    \label{eq: quantum Markov}
\end{equation}
for any given state $\sigma_\mathsf{S}^{(s_k)}$ prepared in the causal break.
\end{definition}

Now analogous to the classical case, if Eq.~\eqref{eq: quantum Markov} is not satisfied, the quantum process is called non-Markovian. This provides an unambiguous method to witness memory effects. Importantly, the \emph{history} of a process, given by the operation process $\Lambda_{k-1:0}$ with instruments $\vec{\mc{J}}_{k-1:0}$ and the measurement associated to $\mathrm{M}^{(m_k)}$ is the generalization of a \emph{trajectory} for a quantum process~\cite{Sakuldee_2018}.

Now, if we fix the choice of instruments at each time-step and these consist of causal breaks, $\mc{J}^\mc{B}_i=\left\{\mc{B}_i^{(x_i)}\right\}$ for $i=0,1,\ldots,k$, then Eq.~\eqref{eq: quantum Markov} implies the classical Markov condition $\mbb{P}\left(x_{k+1}|\vec{x}_{k:0}\right)=\mbb{P}(x_{k+1}|x_k)$ as in Eq.~\eqref{eq: classical Markov}.  Two other consequences that are shown in Ref.~\cite{Pollock_2018_Markov} are that Markovian dynamics are divisible, within the time-steps where they are defined, as in Eq.~\eqref{eq: divisibility dynamical maps} (with the converse, as discussed for the classical case, not being true), and that whenever the definitions e.g. in Ref.~\cite{Rivas_2010, Hou_2011, Wolf_2008, Usha_2011, Devi_2012, Manisalco_2014,Breuer_2009,Rivas_2011} predict non-Markovianity, Eq.~\eqref{eq: quantum Markov} will also predict non-Markovianity, while the converse is not true. As stressed before, many of the shortcomings in other approaches boil down to having necessary but not sufficient conditions for Markovianity.

Finally, the concept of a causal break simply gives us an operationally non-ambiguous way to characterize a quantum process as Markovian but whether this is the case should not depend on such concept; precisely, the process tensor was conceived around the idea of separating experimentally accessible quantities from those intrinsic to the dynamical process. As it turns out, the process tensor gives a straightforward way to classify Markovian processes: given that these have no temporal correlations and are divisible, this implies that their Choi state is simply a tensor product of dynamical maps connecting adjacent time-steps. More specifically, notice that pairs of subsystems of the Choi state $\Upsilon$ in Fig.~\ref{fig: Choi process tensor} correspond to the different $\mathsf{in}$ and $\mathsf{in'}$ spaces at each time-steps of $\mc{T}$, so that the Choi state encodes temporal correlations as spatial correlations.

\begin{figure}[t]
    \centering
    \begin{tikzpicture}
    \begin{scope}[xscale=1, yscale=0.8]
    \fill[inner color=white, outer color=C1!5!white, draw=C1, dashed, thick, rounded corners] (1.75,0.1) -- (1.75,1.5) -- (6.2,1.5) -- (6.2,0) -- (5.7,0) -- (5.7,-1.2) -- (3.8,-1.2) -- (3.8,0.1) -- cycle;
    \fill[inner color=white, outer color=C1!5!white, draw=C1, dashed, thick, rounded corners] (6.4,0.1) -- (6.4,1.5) -- (9.2,1.5) -- (9.2,0) -- (8.7,0) -- (8.7,-1.2) -- (6.8,-1.2) -- (6.8,0.1) -- cycle;
    \node[above] at (2.9,0.5) {$\mathsf{E}$};
    \node[above] at (2.9,-0.5) {$\mathsf{S}$};
    \draw[thick, rounded corners] (2,0.5) -- (5.9,0.5) -- (5.9,0.7);
    \node[above] at (5.9,0.7) {\trash};
    \fill[outer color=C3!10!white, inner color=white, draw=black, thick] (6.7,1) ellipse (0.2cm and 0.3cm);
    \node at (6.7,1) {$\varepsilon_2$};
    \draw[thick, rounded corners] (7,0.5) -- (6.7,0.5) -- (6.7,0.7);
    \draw[thick, rounded corners] (8.5,0.5) -- (8.8,0.5) -- (8.8,0.7);
    \node[above] at (8.8,0.7) {\trash};
     \begin{scope}[shift={(4.5,0)}]
    \fill[inner color=white, outer color=C1!5!white, draw=C1, dashed, thick, rounded corners] (6.4,0.1) -- (6.4,1.5) -- (9.2,1.5) -- (9.2,0) -- (8.7,0) -- (8.7,-1.2) -- (6.8,-1.2) -- (6.8,0.1) -- cycle;
    \fill[outer color=C3!10!white, inner color=white, draw=black, thick] (6.7,1) ellipse (0.2cm and 0.31cm);
    \node at (6.7,1) {$\varepsilon_k$};
    \draw[thick, rounded corners] (7,0.5) -- (6.7,0.5) -- (6.7,0.7);
    \draw[thick, rounded corners] (8.5,0.5) -- (8.8,0.5) -- (8.8,0.7);
    \node[above] at (8.8,0.7) {\trash};
    \end{scope}
    \draw[thick] (2,-0.5) -- (14,-0.5) ;
    \draw[thick] (2.25,-1.5) arc (90:270:0.5);
    \draw[thick] (2.25,-1.5) -- (14,-1.5) ;
    \draw[thick] (2.25,-2.5) -- (14,-2.5);
    \node[right] at (1.75,-2) {$\tilde{\Psi}_{\mathsf{A}_1\mathsf{B}_1}$};
    \draw[thick] (2.25,-3) arc (90:270:0.5);
    \draw[ultra thick, C3] (3.25,-0.5) -- (3.25,-1.5);
    \node[C3] at (3.25,-0.5) {$\boldsymbol{\mathsf{X}}$};
    \node[C3] at (3.25,-1.5) {$\boldsymbol{\mathsf{X}}$};
    \node[below right] at (3.25,-1.5) {$\mc{S}_1$};
    \draw[thick] (2.25,-3) -- (14,-3);
    \draw[thick] (2.25,-4) -- (14,-4);
    \node[right] at (1.75,-3.5) {$\tilde{\Psi}_{\mathsf{A}_2\mathsf{B}_2}$};
    \draw[ultra thick, C3] (6.25,-0.5) -- (6.25,-3);
    \node[C3] at (6.25,-0.5) {$\boldsymbol{\mathsf{X}}$};
    \node[C3] at (6.25,-3) {$\boldsymbol{\mathsf{X}}$};
    \node[below right] at (6.25,-3) {$\mc{S}_2$};
    \node[right] at (1.75,-4.4) {$\vdots$};
    \draw[thick] (2.25,-5) arc (90:270:0.5);
    \draw[thick] (2.25,-5) -- (14,-5);
    \draw[thick] (2.25,-6) -- (14,-6);
    \node[right] at (1.75,-5.5) {$\tilde{\Psi}_{\mathsf{A}_k\mathsf{B}_k}$};
    \draw[ultra thick, C3] (10.75,-0.5) -- (10.75,-5);
    \node[C3] at (10.75,-0.5) {$\boldsymbol{\mathsf{X}}$};
    \node[C3] at (10.75,-5) {$\boldsymbol{\mathsf{X}}$};
    \node[below right] at (10.75,-5) {$\mc{S}_k$};
    \fill[outer color=C3!10!white, inner color=white, draw=black, thick] (2.18,0.6) ellipse (0.3cm and 0.4cm);
    \fill[outer color=C3!10!white, inner color=white, draw=black, thick] (2.15,-0.5) ellipse (0.35cm and 0.45cm);
    \node[left] at (2.5,0.6) {$\varepsilon_1$};
    \node[left] at (2.575,-0.5) {$\rho_\mathsf{S}^{(0)}$};
    \shade[outer color=C2!60!white, inner color=white, draw=black, rounded corners, thick] (4,-1) rectangle (5.5,1);
    \node at (4.75,0) {\LARGE$\mc{U}_1$};
    \shade[outer color=C2!60!white, inner color=white, draw=black, rounded corners, thick] (7,-1) rectangle (8.5,1);
    \node at (7.75,0) {\LARGE$\mc{U}_2$};
    \draw[white, fill=white, path fading= west] (8.5,-6.25) -- (8.5,1.5) -- (9.25,1.5) -- (9.25,-6.25);
    \draw[white, fill=white] (9.25,1.5) -- (9.75,1.5) -- (9.75,-6.25) -- (9.25,-6.25);
    \draw[white, fill=white, path fading= east] (9.75,1.5) -- (10.5,1.5) -- (10.5,-6.25) -- (9.75,-6.25);
    \node at (9.5,0) {$\cdots$};
    \node at (9.5,-2) {$\cdots$};
    \node at (9.5,-3.5) {$\cdots$};
    \node at (9.5,-4.5) {$\vdots$};
    \node at (9.5,-5.5) {$\cdots$};
    \shade[outer color=C2!60!white, inner color=white, draw=black, rounded corners, thick] (11.5,-1) rectangle (13,1);
    \node at (12.25,0) {\LARGE$\mc{U}_k$};
    \fill[bottom color=C4!60!white, top color=white, thick, draw=black] (13.5,-0.25) -- (13.5,-0.9) -- (14.5,-0.9) -- (14.5,-0.25);
    \draw[very thick, dashed] (13.5,-0.25) -- (14.5,-0.25);
    \fill[outer color=C4!60!white, inner color=white, thick, draw=black] (13.5,-1.1) rectangle (14.5,-1.9);
    \fill[outer color=C4!60!white, inner color=white, thick, draw=black] (13.5,-2.1) rectangle (14.5,-3.4);
    \fill[outer color=C4!60!white, inner color=white, thick, draw=black] (13.5,-4.4) -- (13.5,-3.6) -- (14.5,-3.6) -- (14.5,-4.4);
    \node[right] at (14.5,-3.75) {\small$\Upsilon^{(\mscr{Z})}_{2:1}$};
    \fill[outer color=C4!60!white, inner color=white, thick, draw=black] (13.5,-4.6) -- (13.5,-5.4) -- (14.5,-5.4) -- (14.5,-4.6);
    \draw[white, fill=white, path fading= north] (13.25,-4) -- (13.25,-4.5) -- (14.7,-4.5) --(14.7,-4) ;
    \draw[white, fill=white] (13.25,-4.35) -- (14.7,-4.35) -- (14.7,-4.65) -- (13.25,-4.65);
    \draw[white, fill=white, path fading= south] (13.25,-4.5) -- (14.7,-4.5) -- (14.7,-5) -- (13.25,-5);
    \fill[top color=C4!60!white, bottom color=white, thick, draw=black] (13.5,-6.25) -- (13.5,-5.6) -- (14.5,-5.6) -- (14.5,-6.25);
    \draw[very thick, dashed] (13.5,-6.25) -- (14.5,-6.25);
    \node at (14,-4.4) {$\vdots$};
    \node[right] at (14.75,-4.35) {\small$\vdots$};
    \node[right] at (14.5,-1.5) {\small$\rho_\mathsf{S}^{(0)}$};
    \node[right] at (14.5,-2.75) {\small$\Upsilon^{(\mscr{Z})}_{1:0}$};
    \node[right] at (14.5,-5.2) {\small$\Upsilon^{(\mscr{Z})}_{k-1:k-2}$};
    \node[right] at (14.5,-0.5) {\small$\Upsilon^{(\mscr{Z})}_{k:k-1}$};
    \end{scope}
    \end{tikzpicture}
    \caption[Markovian process with open quantum evolution]{\textbf{The process tensor for a Markov process} has no temporal correlations and its Choi state takes a tensor product form of maps connecting adjacent time-steps. For an open quantum evolution through a dynamical map $\mscr{Z}_{j:i}(\cdot)=\tr_\mathsf{E}[\,\mc{U}_{j:i}(\cdot\otimes\varepsilon_j)]$ from time-step $i$ to $j$, it takes a form equivalent to $\Upsilon_{k:0}=\Upsilon^{(\mscr{Z})}_{k:{k-1}}\otimes\cdots\otimes\Upsilon^{(\mscr{Z})}_{1:0}\otimes\rho_\mathsf{S}^{(0)}$, where $\Upsilon^{(\mscr{Z})}_{j:i}$ is the Choi state of $\mscr{Z}_{j:i}$. The dashed lines represent correlation between the pair of spaces where $\Upsilon_{k:k-1}^{(\mscr{Z})}$ acts on.}
    \label{fig: Markovian processes}
\end{figure}
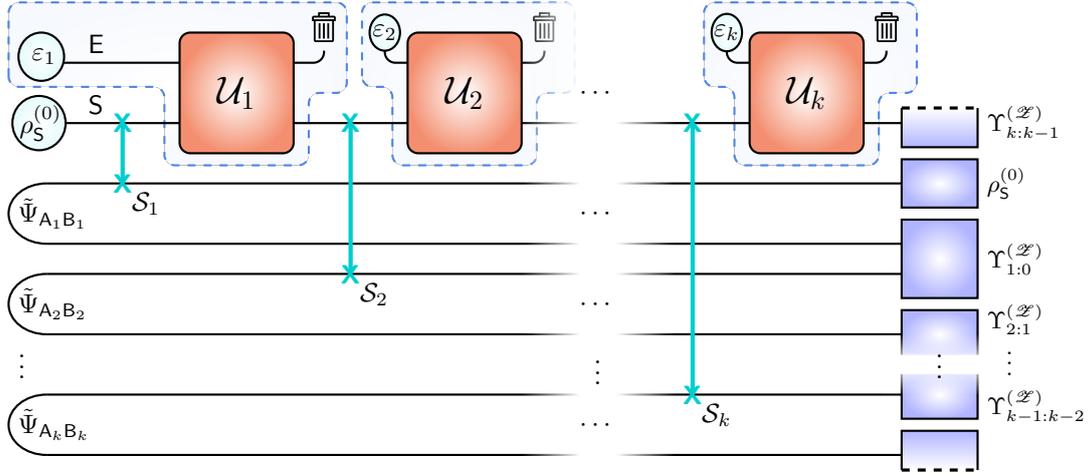

This means that all Markovian processes have the form~\cite{Pollock_2018_Markov, Milz_operational}
\begin{equation}
    \Upsilon^\markov=\mc{Z}_{k:k-1}\otimes\mc{Z}_{k-1:k-2}\otimes\cdots\mc{Z}_{1:0}\otimes\rho_\mathsf{S}^{(0)},
    \label{eq: Markov process}
\end{equation}
for dynamical \gls{CPTP} map Choi states $\mc{Z}_{j:i}$ connecting adjacent time-steps $i$ to $j$. A formal proof, showing that both Eq.~\eqref{eq: Markov process} implies Eq.~\eqref{eq: quantum Markov} and vice-versa, can be seen in Ref.~\cite{Taranto_2020}.

In particular for dynamical maps $\mscr{Z}_{j:i}(\cdot)=\tr_\mathsf{E}[\,\mc{U}_{j:i}(\cdot\otimes\varepsilon_j)]$, with unitary maps $\mc{U}_{j:i}$ dictating evolution from time $t_i$ to time $t_j$, and fiducial \gls{env} states $\varepsilon_j$, this implies that strictly Markovian dynamics in open quantum systems are those in which the environment absolutely forgets at every step, i.e. when the environment is completely discarded, or traced out, and the joint dynamics at each state has no memory whatsoever of previous \gls{env} states. We can visualize this as in Fig.~\ref{fig: Markovian processes}.

\subsection{An unambiguous measure for non-Markovianity}\label{sec: measure of non-Markovianity}
The previous discussion makes it clear that strictly Markovian processes cannot be physically realised by continuous open quantum system dynamics, or, in other words, all open quantum evolutions generated by a time-independent \gls{syst-env} Hamiltonians are  non-Markovian according to the criterion in Eq.~\eqref{eq: quantum Markov}.

Far from this feature rendering the process tensor framework useless, it provides a clear and non-ambiguous way to measure non-Markovianity. Consider any relevant distinguishability measure $\Delta$ satisfying Def.~\ref{def: distance measure}. Then we can quantify the non-Markovianity of any process $\Upsilon$ by measuring \emph{how far} it is from the closest Markovian process,
\begin{equation}
    \mc{N}_\Delta:=\min_{\Upsilon^\markov}\Delta\left(\Upsilon,\,\Upsilon^\markov\right),
    \label{eq: non-Markov measure}
\end{equation}
where the minimum is taken to represent \emph{the nearest} Markovian process.

The choice of the distance measure $\Delta$ will depend on the problem at hand, the operational meaning that one is looking for, or simply computational convenience which e.g. can then be used to place bounds on a different measure of interest. Furthermore, this choice can be relaxed to be any \emph{pseudo-distance}~\cite{Pollock_2018_Markov} respecting only condition \emph{i.} of Def.~\ref{def: distance measure} and being \gls{CP}-contractive, which means $\Delta(\Phi(\rho),\Phi(\sigma))\leq\Delta(\rho,\sigma)$ for any \gls{CP} map.

\begin{example}To exemplify, one such pseudo-distance is given by the relative entropy,
\begin{equation}
    S(\rho\,\|\,\sigma)=\tr[\rho\,(\log\rho-\log\sigma)],
\end{equation}
which is read as \emph{the relative entropy of $\rho$ with respect to $\sigma$}, which satisfies positivity but not the symmetry nor triangle inequality properties, but it is nevertheless a valid measure of the separation of a pair of states.

The relative entropy has several useful properties and has been widely studied, having a close connection as a generalization from classical entropy measures~\cite{nielsen2000quantum, bengtsson2006geometry}; in particular, the classical relative entropy of two probability distributions is related to the probability of distinguishing these after a large (finite) number of independent samples. This gives a large deviations bound known as Sanov’s theorem~\cite{Hoeffding_1994, cover2012elements}, which can be generalized to the quantum case~\cite{bengtsson2006geometry, Hiai_1991, Vedral_1997} as the probability of confusing two quantum states $\rho$, $\sigma$ after a large number of realizations $n$ as
\begin{equation}
    \mbb{P}_\text{confusion}=\exp\left[-n\,S(\rho\,\|\,\sigma)\right].
\end{equation}

This means that with the relative entropy, the measure $\mc{N}_S=\min_{\Upsilon^\markov}S(\Upsilon\|\Upsilon^\markov)$ gives an operational meaning through the probability of confusing a quantum process for a Markovian one decreasing exponentially in the number of realizations of the process, $\mbb{P}_\text{confusion}^\markov=\exp[-n\,\mc{N}_S]$.

Furthermore, the Markovian process minimizer of the relative entropy is given simply by the marginals (i.e. the reduced components) of the process tensor. We can see this as follows. First, given that we can write $\mc{N}_S=\min_{\Upsilon^\markov}\left\{-\tr[\,\Upsilon\,\log\,\Upsilon^\markov]-S(\Upsilon)\right\}$ with $S(\Upsilon)$ the von Neumann entropy of $\Upsilon$, we have
\begin{align}
    -\tr[\,\Upsilon\,\log\,\Upsilon^\markov]&=-\tr\left\{\,\Upsilon\,\log\left[\bigotimes_{\ell=1}^k\mc{Z}_{\ell:\ell-1}\otimes\rho_\mathsf{S}^{(0)}\right]\right\}\nonumber\\
    &=-\sum_{i=1}^k\tr\left\{\Upsilon\left(\log[\mc{Z}_{i:i-1}\otimes\mbb1]\right)\right\}-\tr\left\{\Upsilon\left(\log[\mbb1\otimes\rho_\mathsf{S}^{(0)}]\right)\right\}\nonumber\\
    &=-\sum_{i=1}^k\tr\left(\Upsilon_{i:i-1}\log\mc{Z}_{i:i-1}\right)-\tr\left(\Upsilon_{0}\log\rho_\mathsf{S}^{(0)}\right),
\end{align}
where in the second line the identities are in all remaining spaces and in the third line $\Upsilon_{\ell:\ell-1}=\tr_{\overline{\ell:\ell-1}}(\Upsilon)$ is the reduced state of $\Upsilon$ obtained by tracing all subsystems except the ones corresponding to dynamics from time-step $\ell-1$ to time-step $\ell$, and similarly for $\Upsilon_{\overline{0}}=\tr_{\overline{0}}(\Upsilon)$ just over the zero\textsuperscript{th} time-step; these are what we refer to as the \emph{marginals} of $\Upsilon$.

So the minimization in $\mc{N}_S$ amounts to minimizing the individual relative entropies, as $-\tr[\rho\log\sigma]=S(\rho\|\sigma)+S(\rho)$, with the minimum occurring for $\rho=\sigma$; thus we can conclude
\begin{equation}
    \Upsilon^\markov_S=\Upsilon_{k:k-1}\otimes\Upsilon_{k-1:k-2}\otimes\cdots\otimes\Upsilon_{1:0}\otimes\rho_\mathsf{S}^{(0)},
\end{equation}
is the closest Markovian process under relative entropy.
\end{example}

For other measures the minimization can potentially be non-trivial given that the Choi state is a $d_\mathsf{S}^{2k+1}$ dimensional many-body state. Despite this, bounds and different relations are usually accessible between different distinguishability measures. In the following we will see an extensive use and application of the measure of non-Markovianity in Eq.~\eqref{eq: non-Markov measure} to derive the main results of this PhD. Determining the degree of Markovianity of general quantum processes is not only a foundational task, with an ever-increasing interest and relevance in determining the breakdown of the Markovian approximation in modern experiments~\cite{Gessner2014, Ringbauer_2015, Morris2019, Winick2019}.

We now proceed to presenting these results.
    \part{Main Results}
    \chapter{Equilibration in Quantum Processes across Multiple Points in Time}
\label{sec:equilibration}

\setlength{\epigraphwidth}{0.825\textwidth}
\epigraph{\emph{We all behave like Maxwell’s demon \textup{[\,\dots]} We disturb the tendency toward equilibrium.}}{-- James Gleick (\emph{The Information: A History, a Theory, a Flood})}

As we described in Chapter~\ref{sec:statmech}, the seemingly paradoxical nature of the unitary evolution of quantum systems towards equilibrium can be resolved by means of the concept of equilibration on average. We saw that in essence, the mechanism behind equilibration is that of dephasing, as illustrated in Fig.~\ref{fig: omega dephased}. In particular, if we consider an open system \gls{syst}, we have seen in Chapter~\ref{sec:processes} that it might be the case that correlations in time or memory, manifesting in the statistics of sequential observations, maintain information about the initial perturbation. That is, put simply, it is unclear whether quantum stochastic processes equilibrate in an analogous way, i.e. whether they are most often found close to some average value.

The first problem we face when trying to draw a connection with equilibration on average on multitime processes is trying to pin down what exactly we mean by a process equilibrating, or introducing a notion of an \emph{equilibrium process} and a proper multitime \emph{timescale} on which a process would equilibrate on average within finite time-windows. This is not an easy task but here we elaborate on the notion we gave in Ref.~\cite{FigueroaRomero2020equilibration}. In our work, we derived sufficient conditions for general multitime observations to relax close to their corresponding equilibrium values at each time-step when the corresponding operations are implemented with an imperfect, \emph{fuzzy} clock, or equivalently, on a system with uniformly fluctuating energies.

While our approach to equilibration is entirely general, ultimately the idea is that there might be a similar connection between equilibration in quantum processes and the dynamical emergence of Markovianity, just as there is between equilibration on average of quantum systems and thermalization. In a sense it is expected that a Markovianization is a stronger condition (i.e. more constrained) than multitime process equilibration, just as thermalization is to system equilibration. Making this connection explicit (or for that matter making a different connection) will have to be done in the future, but as we stated before, it is important both for foundational and practical reasons knowing how and when does quantum processes dynamically Markovianize and it is mainly in this spirit that we explore how they equilibrate.

\section{Equilibration due to finite temporal resolution}
The approach of the results studied in Section~\ref{sec: fluctuations finite time} focuses on the temporal fluctuations of the expectation values of $A$ around equilibrium, in essence with a small variance relating to the expectation value of $A$ concentrating around its mean. As we saw, this gives a statistically meaningful characterization of how equilibration is dynamically achieved.

To explore equilibration on multitime processes, we took a similar approach but one focusing on procedures with a clear operational meaning. Precisely, one can picture a situation where an experimenter can implement operations with some finite temporal resolution only, i.e. where they have some uncertainty in the readings of their clock. In the words of the previous chapter, this \emph{fuzziness} can be associated to the instruments describing a set of operations at a given time in a process, with the fuzziness described by a given probability distribution. Thus we are asking how different an evolving quantum state appears from equilibrium when measured at a time that can vary in each realisation, being randomly drawn from some distribution that quantifies the fuzziness associated with finite temporal resolution.

To precisely define what we mean fuzziness, let us first define a \emph{probability density function} (\glsunset{pdf}\gls{pdf}), which will let us generalize the idea of finite temporal resolution. We now treat the time variables $t_i$, i.e. the waiting time between the $i\textsuperscript{th}$ and the $(i+1)\textsuperscript{th}$ interventions, as a non-negative real random variable, which now, however, are continuous rather than discrete (as opposed to the presentation in Section~\ref{sec: classical stochastic processes}). Let us focus on a single time $t$, analogous to the standard equilibration case, for now. This means that our sample space is the whole non-negative real line $\mbb{R}_0^+$, with events corresponding to intervals within it. In the discrete setting we employed random variables and their distributions, however the question now is how precisely do we define the distribution of a continuous random variable? A way to do this is to specify probabilities that the random variable will be within a given interval, rather than that of it taking a possible value. We can achieve this with the following.

\begin{definition}[Probability Density Function]
A probability density function (\gls{pdf}) for a real random variable $X$ is a function $\mscr{P}:\mathbb{R}\to\mathbb{R}_0^+$ such that
\begin{equation}
    \mbb{P}(a\leq{X}\leq{b}\,)=\int_a^b\mscr{P}(x)\,dx,
\end{equation}
for all $a\leq{b}$, and
\begin{equation}
    \mbb{P}(X\in\mbb{R})=\int_{-\infty}^\infty\mscr{P}(x)\,dx=1.
\end{equation}
\end{definition}

This definition of course can be generalized; in the case of finite distributions, for example, the corresponding quantity is the probability mass function.

Thus specifically by a finite temporal resolution observation we mean an observable $A$ (either on subsystem \gls{syst} or acting coarsely on \gls{syst-env}) measured after a time $t>0$ sampled from a one-parameter family of probability distributions with density function $\mscr{P}_{T}$, i.e. which is such that $\int_0^\infty\,dt\,\mscr{P}_T(t)=1$. The parameter $T$ represents the uncertainty or fuzziness of the distribution; for example, it could be associated with the variance of the distribution. With this definition, we may generalize the time-average over a time-window $T$ of a given time-dependent quantity $f$ by 
\begin{gather}
\overline{f}^{\mscr{P}_T} := \int_{0}^{\infty}dt\,\mscr{P}_T(t)\,f(t),
\end{gather}
so that the uniform average considered in Ref.~\ref{sec: fluctuations finite time} corresponds to the case $\mscr{P}_T=T^{-1}$.

Let us then consider a dynamical setup as in standard equilibration, focusing on the dynamics of a $d_\mathsf{S}$-dimensional subsystem \gls{syst} of a $d_\mathsf{S}d_\mathsf{E}$-dimensional composite system \gls{syst-env}, and refering to subsystem equilibration as the relaxation of \gls{syst} towards some steady state, while the whole \gls{syst-env} evolves unitarily via $U=\exp(-iHt)$, with a general time-independent Hamiltonian $H=\sum_{n=1}^\mathfrak{D}E_n{P}_n$. For simplicity, we denote the full \gls{syst-env} initial state as $\rho$.

Then we can define the time average of the initial state over a finite-interval of width $T$ as the fuzzy average
\begin{align}
     \overline{\rho}^{\mscr{P}_T} &:= \int_{0}^{\infty}dt\,\mscr{P}_T(t)\,\rho(t)=\int_0^\infty dt\,\mscr{P}_T(t)\,\ex^{-it(E_n-E_m)}\,P_{n}\,\rho\,P_m,
     \label{eq: fuzzy finite time state}
\end{align}
so we now need to make sense of the integral of the exponential factor. To do this, we know that in the infinite-time limit the average state \emph{should}, by definition, correspond to the uniform time-averaged state $\omega$. That is, we require
\begin{equation}
    \omega:=\lim_{T\to\infty}\overline{\rho}^{\mscr{P}_T}=\mc{D}(\rho),
\end{equation}
where $\mc{D}$ is the dephasing map with respect to $H$,
\begin{equation}
    \mc{D}(\rho)=\sum_{n=1}^\mathfrak{D}P_n\,\rho\,P_n,
    \label{eq: dephasing infinite}
\end{equation}
which we defined briefly before in Eq.~\eqref{eq: dephasing omega}. This means $\omega$ is independent of the choice of the \gls{pdf} $\mscr{P}_T$. We can further generalize this dephasing map to a finite-time one as
\begin{equation}
    \mc{G}_T(\rho):=G_{nm}^{(T)}P_n\,\rho\,P_m,\quad\text{where}\quad G_{nm}^{(T)}:=\overline{\ex^{-it(E_n-E_m)}}^{\mscr{P}_T},
    \label{eq: dephasing finite}
\end{equation}
which we similarly used briefly before only to simplify notation in Fig.~\ref{fig: omega dephased}. This means that we can make sense of the integral in Eq.~\eqref{eq: fuzzy finite time state} by requiring $\mscr{P}_T$ to be such that
\begin{equation}
    \lim_{T\to\infty}G_{nm}^{(T)}=\delta_{nm},
\end{equation}
which otherwise is entirely general.

We consider then the average distinguishability by means of an observable $A$ between the equilibrium state $\omega$, and the non-equilibrium, fuzzy state $\overline{\rho}^{\mscr{P}_T}$, which can be quantified as the difference of expectation values between these states as $\left|\left<A\right>_{\overline{\rho}^{\mscr{P}_T}-\omega}\right|=|\tr[A(\overline{\rho}^{\mscr{P}_T}-\omega)]|$. Given the finite time and fully dephasing operators, we can bound this as
\begin{equation}
    \left|\langle{A}\rangle_{\overline{\rho}^{\mscr{P}_T}-\omega}\right|=|\tr[A(\mc{G}_T-\mc{D})(\rho)]|\leq\|A\|\left\| \left(\mc{G}_T - \mc{D}\right)(\rho)\right\|_2,
\end{equation}
where we used $|\tr[X\sigma]\leq\|X\|\|\sigma\|_2$, with the Schatten norms defined as in Eq.~\eqref{eq: Schatten p-norm}, which can be seen to follow from H\"{o}lder's inequality in Eq.~\eqref{eq: Holders inequality} and the hierarchy of the Schatten norms in Eq.~\eqref{eq: Schatten hierarchy}. Now we can bound the right-hand-side as
\begin{align}
    \left\| \left(\mc{G}_T - \mc{D}\right)(\rho)\right\|_2^2 =&\tr\left|\sum_{n \neq m} {G}_{nm} P_n\rho P_m\right|^2
    =\sum_{n \neq m } |G_{nm}^{(T)}|^2\tr\left[ P_n \rho P_m\rho\right]\nonumber\\
    \leq&\max_{n\neq{m}}|G_{nm}^{(T)}|^2\sum_{n \neq m}\tr[P_n\rho P_m \rho]
    =\max_{n\neq{m}}|G_{nm}^{(T)}|^2\tr(\rho^2-\omega^2)\nonumber\\
    =&\|\rho-\omega\|_2^2 \ \max_{n\neq{m}}|G_{nm}^{(T)}|^2, \label{eq:singlestep multitime equilibration}
\end{align}
where in the second equality we implicitly used $P_nP_{n'}=\delta_{nn'}$, and in the last line we used $\tr(\rho^2-\omega^2) = \|\rho-\omega\|_2^2$, given that $\tr(\rho\,\omega) = \tr(\omega^2)$.

Thus it follows that
\begin{equation}
    \left|\left<A\right>_{\overline{\rho}^{\mscr{P}_T}-\omega}\right| \leq \mscr{S}_T\, \|A\|\|\rho-\omega\|_2,\quad\text{where}\quad\mscr{S}_T:=\max_{n\neq{m}}|G_{nm}^{(T)}|,
    \label{eq: main standard case multitime}
\end{equation}
with the rate of convergence to zero determined by $\mscr{S}_T$, which essentially tell us the off-diagonal term that will die off at the slowest rate on average with respect to the fuzzy clock.

In particular, when the fuzziness $T$ corresponds to that of the uniform distribution over an interval of width $T$ as we described in Section~\ref{sec: fluctuations finite time}, the \gls{pdf} is $\mscr{P}_T=T^{-1}$ and we get $|G_{nm}^{(T)}|=|\mathrm{sin}(T\mc{E}_{nm})/T\mc{E}_{nm}|$ with $\mc{E}_{nm} := (E_n-E_m)/2$. The bound in Eq.~\eqref{eq: main standard case multitime} then tells us that the evolved state $\rho(t)$ will differ from the equilibrium $\omega$ when measured at a given time with a temporal-resolution $T$ at most with proportion $|T\mc{E}_{nm}|^{-1}$ for the smallest energy gap $\mc{E}_{nm}$, with a scale set by the size of the observable $A$ and how different the initial state $\rho$ is from the equilibrium $\omega$. Notice as well that in general $\|\omega\|_2^2\leq{d}_\text{eff}^{-1}(\rho)$, where the inverse effective dimension is defined in Eq.~\eqref{eq: deff inverse}, with equality both for pure $\rho$ or when the Hamiltonian is non-degenerate; both quantities relate to how spread the initial state $\rho$ is in the energy eigenbasis.

This initial fuzziness can be interpreted as the observer not knowing exactly when the process actually began. However, one question we can ask is whether we are able to overcome the fuzziness of the initial interval by making a sequence of measurements. As we described in Section~\ref{sec:processes}, these operations can correspond to any possible experimental intervention, which can furthermore be correlated with each other through an ancillary system. In this case temporal correlations within the dynamics itself can propagate through the environment and similarly the disturbance introduced by the experimental operations might become relevant. Moreover, any fuzziness in the subsequent measurements also has to be accounted for.

\section{Multitime equilibration due to finite temporal resolution}
Of course one might not stop at a single observation but continue gathering data to assess how close the system remains to equilibrium with respect to a set of possible operations, $\{\mc{A}_i\}$, given by weighted \gls{CPTNI} maps $\mathcal{A}_\ell(\cdot) := \sum_\mu a_{\ell_\mu} K_{\ell_\mu}(\cdot)K_{\ell_\mu}^\dg$, with $\sum_\mu K_{\ell_\mu}^\dg\,K_{\ell_\mu}\leq \mbb{1}$ and $a_{\ell_\mu}\in\mbb{R}$ being the corresponding outcome weights. More specifically, the scenario is the following: an initial state $\rho$ in the full \gls{syst-env} composite evolves unitarily through a time-independent Hamiltonian dynamics until, at time $t_0$, an operation $\mc{A}_0$ is performed jointly on \gls{syst} along with an ancilla $\mathsf{\Gamma}$, which is initially uncorrelated in state $\gamma$. We now denote the full initial state by
\begin{equation}
    \varrho:=\rho\otimes\gamma,
\end{equation}
and after the first operation, \gls{syst-env} evolve unitarily again for a time $t_1$ until another operation $\mc{A}_1$ is made on \gls{syst-ancilla}, and so on for $k$ time-steps. The unitary evolution at each step is given by the map \begin{equation}
    \mc{U}_\ell(\cdot)=\exp\{-iH_\ell{t}_\ell\}(\cdot)\exp\{iH_\ell{t}_\ell\},
\end{equation}
acting on \gls{syst-env}. The time-independent Hamiltonians $H_\ell$ can in general be different at each step. The ancillary space $\mathsf{\Gamma}$ can be interpreted as a quantum memory device, and might carry information about previous interactions with the system. As done in Chapter~\ref{sec: process tensor}, we denote the Choi state of the operations $\{\mc{A}_i\}$ by $\Lambda$ and the underlying dynamical process by $\Upsilon$ (with the number of time-steps implicit).

The joint multitime expectation of these set of operations is thus given by
\begin{equation}
    \langle\Lambda\rangle_\Upsilon=\tr[\mc{A}_k\,\mc{U}_k\cdots\mc{A}_0\,\mc{U}_0(\varrho)],
\end{equation}
where implicitly $\mc{A}_i$'s act only on \gls{syst-ancilla}, while the unitaries $\mc{U}_i$ act on \gls{syst-env}. For simplicity, we consider a fixed set of projectors $\{P_n\}$ for all Hamiltonians such that $H_\ell=\sum P_nE_{n_\ell}$ at each step $\ell$, with $P_n$ projecting onto the energy eigenspaces of $H_i$ with energy $E_{n_i}$. Also, we denote simply by $\cdot$ the composition of superoperators when clear by context.

\begin{figure}[t]
\centering
    \begin{tikzpicture}[scale=1.85]
    \begin{scope}
    \end{scope}
    \begin{scope}[local bounding box=scope1]
        \draw[|-, very thick] (0.75,0)  -- (1.5,0);
        \draw[decoration={brace,mirror,raise=0.45em}, decorate] (0.77,0)  -- (1.48,0);
        \draw[|-, very thick] (1.5,0) -- (3,0);
        \draw[decoration={brace,mirror,raise=0.45em}, decorate] (1.52,0)  -- (2.98,0);
        \draw[|-, very thick] (3,0) -- (4,0);
        \draw[-, very thick] (4.5,0) -- (5.5,0);
        \draw[decoration={brace,mirror,raise=0.45em}, decorate] (4.35,0)  -- (5.48,0);
        \draw[|-, very thick] (5.5,0) -- (7.5,0);
        \draw[decoration={brace,mirror,raise=0.45em}, decorate] (5.52,0)  -- (7.48,0);
        \draw[|->, very thick] (7.5,0) -- (8.5,0);
        \node at (4.25,0) {\ldots};
        \node[C1!80!black] at (4.25,0.7) {\ldots};
        \node[C2] at (4.25,0.35) {\ldots};
         \node[below] at (1.125,-0.2) {\Large$t_0$};
        \draw[dashed, C2!90!white, very thick] (1,0) -- (1,1);
        \draw[dashed, C2!90!white, very thick] (2,0) -- (2,0.425);
        \node[below] at (2.25,-0.2) {\Large$t_1$};
        \draw[dashed, C2!90!white, very thick] (2.5,0) -- (2.5,0.55);
        \draw[dashed, C2!90!white, very thick] (3.5,0) -- (3.5,0.55);
        \node[below] at (5,-0.2) {\Large$t_{k-1}$};
        \draw[dashed, C2!90!white, very thick] (4.75,0) -- (4.75,0.975);
        \draw[dashed, C2!90!white, very thick] (6.25,0) -- (6.25,0.435);
        \node[below] at (6.5,-0.2) {\Large$t_{k}$};
        \draw[dashed, C2!90!white, very thick] (7.2,0) -- (7.2,0.6);
        \draw[dashed, C2!90!white, very thick] (7.8,0) -- (7.8,0.6);
        \draw[C2, thick,|<->|] (1,0.4) -- node[fill=white]{\large$T_0$} (2,0.4);
        \draw[C2, thick,|<->|] (2.5,0.4) -- node[fill=white]{\large$T_1$} (3.5,0.4);
        \draw[C2, thick,|<->|] (4.75,0.4) -- node[fill=white]{\large$T_{k-1}$} (6.25,0.4);
        \draw[C2, thick, |<->|] (7.2,0.4) -- node[fill=white]{\large$T_k$} (7.8,0.4);
        \node[right] at (8.5,0) {$\mathbf{t}$};
        \node[right] at (1,1.1) {\color{C1!80!black}\Large$\boldsymbol{\mathscr{P}}_{T_0}(t_0)$};
        \node[right] at (3,1.1) {\color{C1!80!black}\Large$\boldsymbol{\mathscr{P}}_{T_1}(t_1)$};
        \node[right] at (5,1.12) {\color{C1!80!black}\Large$\boldsymbol{\mathscr{P}}_{T_{k-1}}(t_{k-1})$};
        \node[right] at (7.6,1.1) {\color{C1!80!black}\Large$\boldsymbol{\mathscr{P}}_{T_k}(t_k)$};
    \end{scope}
    \begin{scope}[xscale = 0.225,
        yscale=0.14, shift={($(scope1.west)+(-0.15,0)$)},
        declare function={gamma(\z)=
    (2.506628274631*sqrt(1/\z) + 0.20888568*(1/\z)^(1.5) + 0.00870357*(1/\z)^(2.5) - (174.2106599*(1/\z)^(3.5))/25920 - (715.6423511*(1/\z)^(4.5))/1244160)*exp((-ln(1/\z)-1)*\z);},
    declare function={gammapdf(\x,\k,\theta) = \x^(\k-1)*exp(-\x/\theta) / (\theta^\k*gamma(\k));}]
        \begin{axis}[every axis plot/.append style={
        mark=none,samples=50,smooth,line width=7pt},
        axis line style={draw=none},
        ticks = none]
        \addplot[domain=0:5, C1!80!black] {gammapdf(x,1,2)};
        \end{axis}
    \end{scope}
    \begin{scope}[xscale = 0.3,
        yscale=0.125, shift={($(scope1.west)+(4.7,0)$)},
        declare function={gamma(\z)=
    (2.506628274631*sqrt(1/\z) + 0.20888568*(1/\z)^(1.5) + 0.00870357*(1/\z)^(2.5) - (174.2106599*(1/\z)^(3.5))/25920 - (715.6423511*(1/\z)^(4.5))/1244160)*exp((-ln(1/\z)-1)*\z);},
    declare function={gammapdf(\x,\k,\theta) = \x^(\k-1)*exp(-\x/\theta) / (\theta^\k*gamma(\k));}]
        \begin{axis}[every axis plot post/.append style={
        mark=none,samples=50,smooth,line width=5pt},
        axis line style={draw=none},
        ticks = none]
        \addplot[domain=1:10, C1!80!black] {gammapdf(x,9,0.5)};
        \end{axis}
    \end{scope}
    \begin{scope}[xscale = 0.3,
        yscale=0.14, shift={($(scope1.west)+(12.5,0)$)},
        declare function={gamma(\z)=
    (2.506628274631*sqrt(1/\z) + 0.20888568*(1/\z)^(1.5) + 0.00870357*(1/\z)^(2.5) - (174.2106599*(1/\z)^(3.5))/25920 - (715.6423511*(1/\z)^(4.5))/1244160)*exp((-ln(1/\z)-1)*\z);},
    declare function={gammapdf(\x,\k,\theta) = \x^(\k-1)*exp(-\x/\theta) / (\theta^\k*gamma(\k));}]
        \begin{axis}[every axis plot post/.append style={
        mark=none,samples=50,smooth,line width=5pt},
        axis line style={draw=none},
        ticks = none]
        \addplot[domain=0:15, C1!80!black] {gammapdf(x,2,2)};
        \end{axis}
    \end{scope}
    \begin{scope}[xscale = 0.3,
        yscale=0.14, shift={($(scope1.west)+(19.1,0)$)}]
        \begin{axis}[every axis plot post/.append style={
        mark=none,domain=-5:5,samples=50,smooth,line width=5pt},
        axis line style={draw=none},
        ticks = none]
        \addplot[C1!80!black] {gauss(0,1)};
        \end{axis}
    \end{scope}
    \end{tikzpicture}
    \caption[Multitime fuzzy clocks]{\textbf{Finite temporal resolution in a quantum process:} between interventions, each Hamiltonian evolution is time-averaged over the waiting times between interventions $t_0,\,t_1,\cdots,\,t_k$ with corresponding average waiting times $\tau_0,\,\tau_1,\cdots,\,\tau_k$, over a probability distribution with \gls{pdf} given by $\mscr{P}_{T_i}$, with $T_i$ having a suitable uncertainty parameter role.}
    \label{fig: fuzziness}
\end{figure}
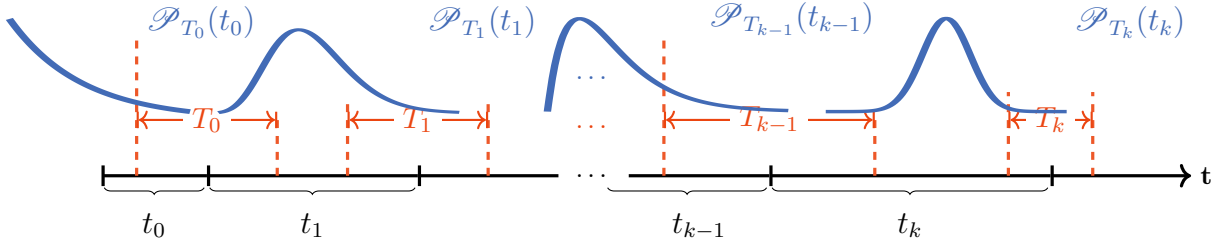

We denote the time intervals of the free evolutions as $t_j$, which is preceded by the $j\textsuperscript{th}$ measurement and followed by $(j+1)\textsuperscript{th}$ measurement. In other words, $t_j$ is the waiting time between $j\textsuperscript{th}$ and $(j+1)\textsuperscript{th}$ measurements. To avoid cluttering of notation we will denote the multitime probability distribution as $\mscr{P}_\mathbf{T}(\mathbf{t})=\prod_{i=0}^k\mscr{P}_{T_i}(t_i)$, where now we use the labels
\begin{gather}
\mathbf{t} := (t_0, t_1, \dots, t_k) \quad \mbox{and} \quad
\mathbf{T} = (T_0,T_1,\ldots,T_k)
\end{gather}
for waiting times and the fuzziness parameters for each time interval, respectively. We can now define a finite temporal resolution process as
\begin{align}
    \overline{\Upsilon}^{\mscr{P}_\mathbf{T}}:= \int_0^\infty dt_k \cdots \int_0^\infty dt_1 \int_0^\infty dt_0 \,\mscr{P}_\mathbf{T}(\mathbf{t})\,\Upsilon,
\end{align}
and we will denote the average waiting time between each pair of measurements by
\begin{equation}
    \tau_i := \int_{0}^{\infty} dt_i \ t_i \ \mscr{P}_{T_i}(t_i),
\end{equation}
and we can visualize these concepts pictorially as in Fig.~\ref{fig: fuzziness}.

We are interested in quantifying how different this out-of-equilibrium process, where time intervals are fuzzy, looks from an equilibrium process. To define the equilibrium process we follow the lead of earlier results, i.e. the initial state relaxes to the equilibrium state
\begin{equation}
    \varpi:=\varpi_0=\lim_{T_0\to\infty}\int_0^\infty\mc{U}_0(\varrho)\,\mscr{P}_{T_0}(t_0)\,dt_0=\mc{D}(\varrho)= \omega\otimes\gamma,
    \label{eq: def varpi}
\end{equation}
until an operation $\mc{A}_0$ is performed, and subsequently the system relaxes again to an equilibrium state $\varpi_1=\mc{D}\mc{A}_0(\varpi)$ until an operation $\mc{A}_1$ is made, and so on for $k$-time-steps. Then we can define
\begin{align}
    \varpi_i&:=\lim_{T_i\to\infty}\int_0^\infty\mc{U}_i\mc{A}_{i-1}(\varpi_{i-1})\,\mscr{P}_{T_i}(t_i)\,dt_i=\mc{D}\mc{A}_{i-1}\cdots\mc{A}_0\mc{D}(\varrho),
    \label{eq: varpi definition}
\end{align}
for any $i=0,\cdots,k$ as the equilibrium states after each intervention up to $\mc{A}_{i-1}$.

\begin{figure}[t]
\centering
\begin{tikzpicture}
  \begin{scope}[scale=0.43]
   \fill[-, dashed, thick, outer color=C4!10!white, inner color=white, rounded corners, draw=C4] (-5.5,0) -- (-5.5,2.5) -- (25.25,2.5) -- (25.25,0.5) -- (22.5,0.5) -- (22.5,-2) -- (19.5,-2) -- (19.5,0.5) -- (16.5,0.5) -- (16.5,-2) -- (10.5,-2) -- (10.5,0.5) -- (7.5,0.5) -- (7.5,-2) -- (4.5,-2) -- (4.5,0.5) -- (1.5,0.5) -- (1.5,-2.5) -- (-5.5,-2.5) -- (-5.5,0);
    \fill[-, dashed, thick, outer color=C3!10!white, inner color=white, rounded corners, draw=C3] (1.75,0) -- (1.75,-4.5) -- (27.5,-4.5)  -- (27.5,-2) -- (25.25,-2) -- (25.25,0.25) -- (22.75,0.25) -- (22.75,-2.25) -- (19.25,-2.25) -- (19.25,0.25) -- (16.75,0.25) -- (16.75,-2.25) -- (10.25,-2.25) -- (10.25,0.25) -- (7.75,0.25) -- (7.75,-2.25) -- (4.25,-2.25) -- (4.25,0.25) -- (1.75,0.25) -- (1.75,0) ;
    \node[C4!60!black] at (26.75,1.7) {\Large$\overline{\Upsilon}^{\mathbf{T}}$};
    \node[C3!70!black,above] at (26.5,-2) {\Large$\Lambda$};
    \draw[thick, -] (-3,1.25) -- (23.5,1.25);
    \draw[thick,-] (-3,-1) -- (23.5,-1);
    \shade[inner color=white, outer color=C3!10!white, draw=black, thick] (-4,0) ellipse (1.2cm and 2.2cm);
    \node at (-4,0) {\Large$\rho$};
    \node[above] at (-2.3,1.25) {$\mathsf{E}$};
    \node[above] at (-2.3,-1.1) {$\mathsf{S}$};
    \path[outer color=C4!5!white, inner color=C2!40!white, draw=black, thick, rounded corners] (-1.5,-1.55) rectangle (1,2);
    \node at (-0.25,0.3) {\Large$\mc{G}_{T_0}$};
    \shade[outer color=C3!60!white, inner color=white, draw=black, rounded corners, thick] (2,-4) rectangle (4,0);
    \node at (3,-2) {\Large$\mc{A}_0$};
    \draw[-, thick] (4,-3) -- (26,-3);
    \node[below] at (4.5,-3) {$\mathsf{\Gamma}$};
    \path[outer color=C4!5!white, inner color=C2!40!white, draw=black, thick, rounded corners] (4.75,-1.55) rectangle (7.25,2);
    \node at (6,0.3) {\Large$\mc{G}_{T_1}$};
    \shade[outer color=C3!60!white, inner color=white, draw=black, rounded corners, thick] (8,-4) rectangle (10,0);
    \node at (9,-2) {\Large$\mc{A}_1$};
    \path[outer color=C4!5!white, inner color=C2!40!white, draw=black, thick, rounded corners] (10.75,-1.55) rectangle (13.25,2);
    \node at (12,0.3) {\Large$\mc{G}_{T_2}$};
    \shade[outer color=C3!60!white, inner color=white, draw=black, rounded corners, thick] (17,-4) rectangle (19,0);
    \node[rotate=65] at (17.9,-2) {\Large$\mc{A}_{k-1}$};
    \path[outer color=C4!5!white, inner color=C2!40!white, draw=black, thick, rounded corners] (19.75,-1.55) rectangle (22.25,2);
    \node at (21,0.3) {\Large$\mc{G}_{T_k}$};
    \node[right] at (23.5,1.25) {\trash};
    \shade[outer color=C3!60!white, inner color=white, draw=black, rounded corners, thick] (23,-4) rectangle (25,0);
    \draw[white, fill=white, path fading= west] (13.5,2.75) -- (13.5,-4.75) -- (14,-4.75) -- (14,2.75) ;
    \draw[white, fill=white] (14,2.75) -- (16,2.75) -- (16,-4.75) -- (14,-4.75);
    \draw[white, fill=white, path fading= east] (16.5,2.75) -- (16,2.75) -- (16,-4.75) -- (16.5,-4.75);
    \node at (15,0) {$\cdots$};
    \node at (24,-2) {\Large$\mc{A}_k$};
    \node[right] at (26,-3) {\trash};
    \end{scope}
    
    \vspace{0.1in}
    
    \begin{scope}[shift = {(0,-4)}, scale=0.43]
    
   \fill[-, dashed, thick, outer color=C4!10!white, inner color=white, rounded corners, draw=C4] (-5.5,-4) -- (-5.5,2.5) -- (25.25,2.5) -- (25.25,0.5) -- (22.5,0.5) -- (22.5,-2) -- (19.5,-2) -- (19.5,0.5) -- (16.5,0.5) -- (16.5,-2) -- (10.5,-2) -- (10.5,0.5) -- (7.5,0.5) -- (7.5,-2) -- (4.5,-2) -- (4.5,0.5) -- (1.5,0.5) -- (1.5,-4) -- cycle;
    \fill[-, dashed, thick, outer color=C3!10!white, inner color=white, rounded corners, draw=C3] (1.75,0) -- (1.75,-4.5) -- (27.5,-4.5)  -- (27.5,-2) -- (25.25,-2) -- (25.25,0.25) -- (22.75,0.25) -- (22.75,-2.25) -- (19.25,-2.25) -- (19.25,0.25) -- (16.75,0.25) -- (16.75,-2.25) -- (10.25,-2.25) -- (10.25,0.25) -- (7.75,0.25) -- (7.75,-2.25) -- (4.25,-2.25) -- (4.25,0.25) -- (1.75,0.25) -- (1.75,0) ;
    \node[C4!60!black] at (26.5,1.7) {\Large$\Omega$};
    \node[C3!70!black,above] at (26.5,-2) {\Large$\Lambda$};
    \draw[thick, -] (-3,1.25) -- (23.5,1.25);
    \draw[thick,-] (-3,-1) -- (23.5,-1);
    \shade[inner color=white, outer color=C3!10!white, draw=black, thick] (-4,0) ellipse (1.2cm and 2.2cm);
    \node at (-4,0) {\Large$\rho$};
    \node[above] at (-2.3,1.25) {$\mathsf{E}$};
    \node[above] at (-2.3,-1.1) {$\mathsf{S}$};
    \fill[outer color=C2!60!white, inner color=white, draw=black, thick, rounded corners] (-1.5,-1.55) rectangle (1,2);
    \node at (-0.25,0.3) {\Large$\mc{D}$};
    \shade[outer color=C3!60!white, inner color=white, draw=black, rounded corners, thick] (2,-4) rectangle (4,0);
    \node at (3,-2) {\Large$\mc{A}_0$};
    \draw[-, thick] (4,-3) -- (26,-3);
    \node[below] at (4.5,-3) {$\mathsf{\Gamma}$};
    \fill[outer color=C2!60!white, inner color=white, draw=black, thick, rounded corners] (4.75,-1.55) rectangle (7.25,2);
    \node at (6,0.3) {\Large$\mc{D}$};
    \shade[outer color=C3!60!white, inner color=white, draw=black, rounded corners, thick] (8,-4) rectangle (10,0);
    \node at (9,-2) {\Large$\mc{A}_1$};
    \fill[outer color=C2!60!white, inner color=white, draw=black, thick, rounded corners] (10.75,-1.55) rectangle (13.25,2);
    \node at (12,0.3) {\Large$\mc{D}$};
    \shade[outer color=C3!60!white, inner color=white, draw=black, rounded corners, thick] (17,-4) rectangle (19,0);
    \node[rotate=65] at (17.9,-2) {\Large$\mc{A}_{k-1}$};
   \fill[outer color=C2!60!white, inner color=white, draw=black, thick, rounded corners] (19.75,-1.55) rectangle (22.25,2);
    \node at (21,0.3) {\Large$\mc{D}$};
    \node[right] at (23.5,1.25) {\trash};
    \shade[outer color=C3!60!white, inner color=white, draw=black, rounded corners, thick] (23,-4) rectangle (25,0);
    \node at (24,-2) {\Large$\mc{A}_k$};
    \node[right] at (26,-3) {\trash};
    \draw[white, fill=white, path fading= west] (13.5,2.75) -- (13.5,-4.75) -- (14,-4.75) -- (14,2.75) ;
    \draw[white, fill=white] (14,2.75) -- (16,2.75) -- (16,-4.75) -- (14,-4.75);
    \draw[white, fill=white, path fading= east] (16.5,2.75) -- (16,2.75) -- (16,-4.75) -- (16.5,-4.75);
    \node at (15,0) {$\cdots$};
    \node[rotate=-90] at (12,3.75) {\Large$\to$};
    \node[right] at (12.5,3.75) {$(\mathbf{T}\to\infty)$};
    \draw[thick, decoration={brace,mirror}, decorate] (-5,-2.5)  -- (1,-2.5);
    \node[below] at (-2,-2.8) {\Large$\omega$};
    \end{scope}
    \end{tikzpicture}
\caption[Equilibrium process and fuzzy multitime equilibration]{\textbf{Equilibration of quantum processes by finite temporal resolution:} This refers to a $k$-step process $\overline{\Upsilon}^{\mathbf{T}}$ time-averaged over each Hamiltonian evolution $(\mc{G})$ within time-windows of width $\mathbf{T}=(T_0,T_1,\ldots,T_k)$ remaining close to an equilibrium process $\Omega$. We define an equilibrium process as one dephased $(\mc{D})$ with respect to the corresponding Hamiltonian at each time step. Equilibration is determined according to a set of operations $\{\mc{A}_i\}$ on a subsystem $\mathsf{S}$, which can be correlated in time through an ancillary space $\mathsf{\Gamma}$, and is represented by a single tensor $\Lambda$. By definition, equality is attained in the limit of all $\mathbf{T}\to\infty$.}
\label{Fig: multitime equilibration}
\end{figure}
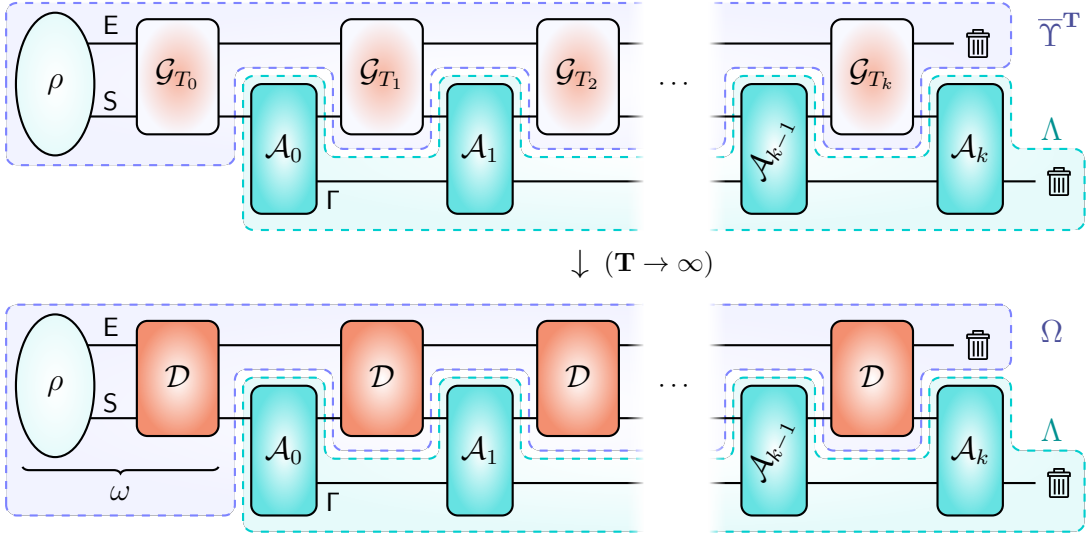

This is a sensible definition for the intermediate equilibrium states, which, however, is dependent on each operation $\mc{A}_j$. We can, however, define the equilibrium quantum process independently of the operations as
\begin{gather}
    \Omega:=\lim_{\mathbf{T}\to\infty} \overline{\Upsilon}^{\mscr{P}_\mathbf{T}},
    \label{eq: def Omega}
\end{gather}
which is depicted in Fig.~\ref{Fig: multitime equilibration} as a set of dephasing maps $\mc{D}$ at each timestep. This means then that we can write
\begin{align}
    \langle\Lambda\rangle_\Omega&=\tr[\Lambda\Omega]=\tr[\mc{A}_k\varpi_k]=\tr[\mc{A}_k\mc{D}\mc{A}_{k-1}\mc{D}\cdots\mc{A}_0\mc{D}(\varrho)],
    \label{eq: equilibrium expectation}
\end{align}
for the expectation of a sequence of operations $\{\mc{A}_i\}$ on the equilibrium process $\Omega$.

Since we can also express each finite averaging in the energy eigenbasis using the partial dephasing maps $\mc{G}_{T_i}$, defined in Eq.~\eqref{eq: dephasing finite}, we can similarly write $\langle\Lambda\rangle_{\overline{\Upsilon}^{\mscr{P}_\mathbf{T}}}=\tr[\mc{A}_k\varrho_k]$, where we now define
\begin{equation}
    \varrho_i := \mc{G}_{T_i}\mc{A}_{i-1} \cdots \mc{A}_0 \mc{G}_{T_0}(\varrho),\quad\text{for}\quad i=0,1,\cdots,k,
    \label{eq: varrho definition}
\end{equation}
as intermediate finite-time-averaged states after each intervention up to $\mc{A}_{i-1}$. As by definition $\lim_{T_i\to\infty}\mc{G}_{T_i}=\mc{D}$, the infinite-time limits $\mathbf{T}\to\infty$ make $\Upsilon$ indistinguishable from $\Omega$. We also depict this in Fig.~\ref{Fig: multitime equilibration}.

We may now generalize the left hand side of Eq.~\eqref{eq: main standard case multitime} with 
$|\langle\Lambda\rangle_{\overline{\Upsilon}^{\mathbf{T}}-\Omega}|$, asking how different the statistics of a set of operations $\{\mc{A}_i\}$ can be on a fuzzy clock process, $\overline{\Upsilon}^{\mathbf{T}}$, as opposed to those in the equilibrium one $\Omega$. For clarity, let us present the case $H_i=H=\sum E_nP_n$, i.e. with a fixed Hamiltonian for all time-steps, and with $T_i=T_j=T$, i.e. with a clock with the same amount of fuzziness at all steps, as in Ref.~\cite{FigueroaRomero2020equilibration}; we will then show how the general case reduces to this particular one.

\begin{theorem}[Multitime equilibration due to finite temporal resolution~\cite{FigueroaRomero2020equilibration}]\label{Thm: Multitime equilibration}
Given an environment-system-ancilla $\mathsf{(SE\Gamma)}$ composite with initial state $\varrho=\rho\otimes\gamma$ and initial equilibrium state $\varpi=\omega\otimes\gamma$, for any $k$-step process $\Upsilon$ with an evolution generated by a time-independent Hamiltonian on \gls{syst-env} at each step, and for any fuzzy multitime observable $\Lambda$ corresponding to a sequence of temporally local operations $\{\mc{A}_i\}_{i=0}^k$, each with fuzziness $T$ acting on the joint \gls{syst-ancilla} system,
\begin{align}
\left|\langle\Lambda\rangle_{\overline{\Upsilon}^{\mscr{P}_\mathbf{T}}-\Omega}\right|\leq \mbb{A}_k + \sum_{\ell=0}^{k-1}\|\mc{A}_{k:\ell+1}\|\left(\mbb{B}_\ell + \mbb{C}_\ell\right)
\quad\text{with}\quad
\mbb{A}_k:=\mscr{S}_T^{k+1}\|\mc{A}_{k:0}\|\,\|\varrho-\varpi\|_2,
    \label{eq: result main multi equilibration}
\end{align}
where here $\mscr{S}_T:=\max_{n\neq{m}}|G_{nm}^{(T)}|$ and $\mc{A}_{j:i}:=\mc{A}_j\cdots\mc{A}_i$ is a composition of operations; the norm $\|\cdot\|$ here stands for the norm on superoperators induced by the 2-norm, $\|\mc{X}\|=\sup_{\|\sigma\|_2=1}\|\mc{X}(\sigma)\|_2$; the first term is a single-time equilibration contribution, whereas the second term contains $k$ multitime contributions where
\begin{equation}
    \mbb{B}_\ell:=\|[\mc{G}_T^{k-\ell}-\mc{D},\mc{A}_\ell]\varrho_\ell\|_2, \qquad \mbb{C}_\ell:=\|[\mc{D},\mc{A}_\ell](\varrho_\ell-\varpi_\ell)\|_2,
    \label{eq: multitime bound corrections}
\end{equation}
with $\varrho_i$ and $\varpi_i$ intermediate finite-time averaged and equilibrium states at step $i$ defined in Eq.~\eqref{eq: varrho definition} and Eq.~\eqref{eq: varpi definition}, and where $[\cdot,\cdot]$ denotes a commutator of superoperators. 
\end{theorem}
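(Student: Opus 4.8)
The plan is to interpolate between the fuzzy-clock process $\overline{\Upsilon}^{\mscr{P}_\mathbf{T}}$ and the equilibrium process $\Omega$ by replacing, one time-step at a time, the finite-time dephasing maps $\mc{G}_T$ by the full dephasing map $\mc{D}$. Since $\langle\Lambda\rangle_{\overline{\Upsilon}^{\mscr{P}_\mathbf{T}}}=\tr[\mc{A}_k\,\varrho_k]$ with $\varrho_k=\mc{G}_T\mc{A}_{k-1}\mc{G}_T\cdots\mc{A}_0\mc{G}_T(\varrho)$ and $\langle\Lambda\rangle_{\Omega}=\tr[\mc{A}_k\,\varpi_k]$ with $\varpi_k=\mc{D}\mc{A}_{k-1}\mc{D}\cdots\mc{A}_0\mc{D}(\varrho)$, the first move is to bound $|\langle\Lambda\rangle_{\overline{\Upsilon}^{\mscr{P}_\mathbf{T}}-\Omega}|$ by $\|\mc{A}_k\|\,\|\varrho_k-\varpi_k\|_2$ using $|\tr[X\sigma]|\le\|X\|\|\sigma\|_2$ as in Section~\ref{sec:equilibration}, and then to control $\|\varrho_k-\varpi_k\|_2$ by a telescoping sum. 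Writing $\varrho_k-\varpi_k$ as a sum of $k+1$ terms, each of the form (a product of $\mc{A}$'s and $\mc{D}$'s) acting on $(\mc{G}_T-\mc{D})$ acting on an intermediate state, lets me isolate each ``error'' contribution; submultiplicativity of the induced superoperator norm $\|\mc{X}\|=\sup_{\|\sigma\|_2=1}\|\mc{X}(\sigma)\|_2$ pulls out the $\|\mc{A}_{k:\ell+1}\|$ factors.

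The key structural step is then to rewrite each telescoped error term so that the difference $\mc{G}_T-\mc{D}$ (or $\mc{G}_T^{\,j}-\mc{D}$) appears as a \emph{commutator} with the operation $\mc{A}_\ell$. The point is that on the state $\varpi_\ell$, which is already fully dephased at step $\ell$, the map $\mc{G}_T$ and $\mc{D}$ agree; so the only way a residual discrepancy can survive is through the disturbance that $\mc{A}_\ell$ introduces by repopulating off-diagonal blocks. Concretely, I expect to insert and subtract terms so that $\mc{G}_T^{\,k-\ell}\mc{A}_\ell(\varrho_\ell)-\mc{D}\mc{A}_\ell(\varpi_\ell)$ splits into (i) $[\mc{G}_T^{\,k-\ell}-\mc{D},\mc{A}_\ell]\varrho_\ell$, which is $\mbb{B}_\ell$, (ii) $[\mc{D},\mc{A}_\ell](\varrho_\ell-\varpi_\ell)$, which is $\mbb{C}_\ell$, and (iii) a purely ``free-evolution'' term $(\mc{G}_T^{\,k-\ell}-\mc{D})\mc{A}_\ell\varpi_\ell$-type contribution that, once $\mc{A}_\ell$ has been commuted past, acts on a dephased state and collapses — this last class of terms is what will eventually be resummed into the leading $\mbb{A}_k=\mscr{S}_T^{k+1}\|\mc{A}_{k:0}\|\,\|\varrho-\varpi\|_2$ after iterating all the way down to the initial pair $(\varrho,\varpi)$. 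Here I will use repeatedly that $\|(\mc{G}_T-\mc{D})(X)\|_2\le\mscr{S}_T\|X-\mc{D}(X)\|_2\le\mscr{S}_T\|X\|_2$, exactly as in Eq.~\eqref{eq:singlestep multitime equilibration}, and that $\mc{G}_T^{\,j}$ acts diagonally on each off-diagonal block with eigenvalue $(G^{(T)}_{nm})^{j}$, so $\|\mc{G}_T^{\,j}-\mc{D}\|$ restricted to off-diagonal blocks is still bounded by $\mscr{S}_T$ (since $|G^{(T)}_{nm}|\le 1$, powers only shrink it). Finally, since the intermediate maps $\mc{G}_T,\mc{D}$ are pinchings they are contractions in $\|\cdot\|_2$, which is what keeps every intermediate factor from blowing the bound up.

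The main obstacle will be the bookkeeping in the telescoping expansion: making the $k+1$ insert/subtract steps line up so that exactly the collection $\{\mbb{B}_\ell,\mbb{C}_\ell\}_{\ell=0}^{k-1}$ plus the single geometric term $\mbb{A}_k$ survive, rather than a messier sum with extra cross terms. The cleanest way I can see is to prove the bound by induction on $k$: at step $k$ peel off the last $\mc{G}_T$ versus $\mc{D}$, producing an $\mbb{A}$-type term with one factor of $\mscr{S}_T$ and the two commutator terms $\mbb{B}_{k-1},\mbb{C}_{k-1}$ together with a tail that is structurally the same problem for $k-1$ steps (with the operation $\mc{A}_{k-1}$ absorbed), so the induction hypothesis closes it. One has to be slightly careful that in the tail the relevant ``observable'' is $\mc{A}_k\mc{G}_T\mc{A}_{k-1}$ (or its dephased analogue), so the $\|\mc{A}_{k:\ell+1}\|$ factors accumulate correctly under submultiplicativity; tracking that the norm of $\mc{G}_T$ is $1$ (it is a pinching) is what prevents spurious extra constants. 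Once the single-Hamiltonian, uniform-fuzziness case is done, the general case with step-dependent $H_i$ and $T_i$ follows verbatim by replacing $\mscr{S}_T^{k+1}$ with $\prod_{i=0}^{k}\mscr{S}_{T_i}^{(i)}$ and each $\mc{G}_T$ by the appropriate $\mc{G}_{T_i}$ (with $\mc{D}$ the dephasing for the relevant $H_i$), since the only property of $\mc{G}_T$ used is $\|(\mc{G}_{T_i}-\mc{D}_i)(X)\|_2\le\mscr{S}_{T_i}\|X\|_2$ and contractivity.
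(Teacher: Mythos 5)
Your proposal is correct and follows essentially the paper's own route: unrolling the recursion you describe reproduces exactly the paper's one-shot commutator identity, splitting the difference into the single-time piece $\mc{A}_{k:0}\left(\mc{G}_{k:0}-\mc{D}\right)(\varrho)$ (which yields $\mbb{A}_k$ via $\mscr{S}_T^{k+1}\|\varrho-\varpi\|_2$) plus the terms $[\mc{G}_{k:\ell+1}-\mc{D},\mc{A}_\ell]\varrho_\ell$ and $[\mc{D},\mc{A}_\ell](\varrho_\ell-\varpi_\ell)$, all bounded with $|\tr[\mc{X}(\sigma)]|\leq\|\mc{X}\|\,\|\sigma\|_2$. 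The only slip is peeling off $\|\mc{A}_k\|$ before telescoping, which by submultiplicativity gives $\|\mc{A}_k\|\,\|\mc{A}_{k-1:\ell+1}\|$ instead of the stated composed norm $\|\mc{A}_{k:\ell+1}\|$; keeping the whole prefactor under the trace (as the paper does) and applying the induced-2-norm bound to the composition recovers the exact constants.
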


\begin{proof}
The main idea to bound the difference $\left|\langle\Lambda\rangle_{\overline{\Upsilon}^{\mscr{P}_\mathbf{T}}-\Omega}\right|$ is to rearrange terms by commutation to obtain a single-time contribution with added correction terms. Let us slightly simplify notation by writing $\mc{G}_i=\mc{G}_{T_i}$ and only labelling $\mc{D}$ whenever it is relevant to know at which step this map is acting on. Let us first get this idea through with the $k=1$ case,
\begin{align}
    &\left|\langle\Lambda\rangle_{\overline{\Upsilon}^{\mscr{P}_\mathbf{T}}-\Omega}\right|=\left|\tr\left[\left(\bigcircop_{j=0}^1 \mathcal{A}_j  \mc{G}_j - \bigcircop_{j=0}^1 \mathcal{A}_j \mc{D}_j\right)(\varrho)\right]\right|\nonumber \\
    &= \Bigg|\tr\left[\mc{A}_1\mc{A}_0 \left(\mc{G}_1\mc{G}_0 - \mc{D}_1\mc{D}\right)(\varrho)\right] +\tr\Big[\mathcal{A}_1\Big([\mc{G}_1,\mc{A}_0]\mc{G}_0 - [\mc{D}_{1}, \mc{A}_0]\mc{D}\Big)(\varrho)\Big]\Bigg|\nonumber\\
    &\leq\left|\tr\left[\mc{A}_{1:0} \left(\mc{G}_{1:0} - \mc{D}\right)(\varrho)\right]\right| + \left|\tr\left\{\mathcal{A}_{1} [\mc{G}_{1} - \mc{D}, \mc{A}_0] \mc{G}_0(\varrho)\right\}\right| + \left|\tr\left\{\mathcal{A}_{1} [\mc{D},\mc{A}_0]  (\mc{G}_0-\mc{D})(\varrho)\right\}\right|,
\end{align}
where the third line follows by the triangle inequality ($|a-c|\leq|a-b|+|b-c|$, here with $b=\tr\{\mc{A}_1[\mc{D},\mc{A}_0]\mc{G}_0(\varrho)\}$). We now adopt the notation $\mc{A}_{j:i}:=\mc{A}_i\circ\cdots\circ\mc{A}_j$ and from now on ommit the $\circ$ symbol to denote composition of contiguous superoperators.

Then similarly we can do this for an arbitrary number of time-steps $k$,
\begin{align}
    &\left|\langle\Lambda\rangle_{\overline{\Upsilon}^{\mscr{P}_\mathbf{T}}-\Omega}\right|=\left|\tr\left[\left(\bigcircop_{j=0}^{k} \mathcal{A}_j \mc{G}_j - \bigcircop_{j=0}^{k} \mathcal{A}_j \mc{D}_j\right)(\varrho)\right]\right|\nonumber \\
    &=\Bigg|\tr\left[\mc{A}_{k:0} \left(\mc{G}_{k:0} - \mc{D}\right)(\varrho)\right] + \sum_{\ell=0}^{k-1}\tr\Big[\mathcal{A}_{k:\ell+1} \Big([\mc{G}_{k:\ell+1}, \mc{A}_\ell]  \mc{G}_\ell \bigcircop_{j=0}^{\ell-1} \mathcal{A}_j \mc{G}_j- [\mc{D}, \mc{A}_\ell]\mc{D}_\ell \bigcircop_{j=0}^{\ell-1} \mathcal{A}_j \mc{D}_j \Big)(\varrho)\Big]\Bigg|\nonumber\\
    &\leq \left|\tr\left[\mc{A}_{k:0} \left(\mc{G}_{k:0}\!- \mc{D}\right)(\varrho)\right]\right| + \!\sum_{\ell=0}^{k-1}\left|\tr\left[\mathcal{A}_{k:\ell+1} [\mc{G}_{k:\ell+1}\!- \mc{D}, \mc{A}_\ell] (\varrho_\ell)\right]\right| + \!\sum_{\ell=0}^{k-1}\left|\tr\left[\mathcal{A}_{k:\ell+1} [\mc{D}, \mc{A}_\ell]  (\varrho_\ell-\varpi_\ell)\right]\right|,
\end{align}
where $\varrho_\ell:=\mc{G}_\ell\bigcircop_{j=0}^{\ell-1}\mc{A}_j \mc{G}_j(\varrho)$. Using $|\tr \mc{X}(\varrho)|\leq \|\mc{X}\|\|\varrho\|_2$, where for simplicity, here $\|\cdot\|$ stands for the induced 2-norm defined as $\|\mc{X}\|:= \sup_{\|\sigma\|_2=1} \|\mc{X}(\sigma)\|_2$, which is a possible generalization of the operator norm for superoperators.\footnote{ We can write the action of any bounded linear map $\mc{X}$ (not necessarily \gls{CP}) as $\mc{X}(\cdot)=\sum_\mu{L}_\mu(\cdot)R_\mu^\dg$ as a generalization of the Kraus representation in Section~\ref{sec: quantum states and measurements}~\cite{Milz_operational}. The inequality can then be seen to follow with H\"{o}lder's inequality and the hierarchy of Schatten-norms. We use 2-norm for convenience in calculation.} Then we further can bound
\begin{align}
&\left|\langle\Lambda\rangle_{\overline{\Upsilon}^{\mscr{P}_\mathbf{T}}-\Omega}\right|\leq \left\|\mc{A}_{k:0}\right\|\, \left\| \left(\mc{G}_{k:0} - \mc{D}\right)(\varrho)\right\|_2 + \sum_{\ell=0}^{k-1} \left\|\mathcal{A}_{k:\ell+1} \right\|\, \left\| [\mc{G}_{k:\ell+1} - \mc{D}, \mc{A}_\ell] ( \varrho_\ell)\right\|_2 \nonumber \\
    & \qquad \qquad + \sum_{\ell=0}^{k-1}\left\|\mathcal{A}_{k:\ell+1} \right\|\, \left\|[\mc{D}, \mc{A}_\ell] ( \varrho_\ell-\varpi_\ell)\right\|_2,
    \label{appendix eq: 2normbound}
\end{align}
which contains the terms $\mbb{B}$ and $\mbb{C}$ of Eq.~\eqref{eq: multitime bound corrections} in the second and third terms of the inequality, and where the first term generalizes Eq.~\eqref{eq:singlestep multitime equilibration} as
\begin{align}
    \left\| \left(\mc{G}_{k:0} - \mc{D}\right)(\varrho)\right\|_2^2&=\tr\left|\sum_{n \neq m} G_{n_k{m}_k}^{(k)}\cdots{G}_{n_0m_0}^{(0)} P_n(\varrho)\,P_m\right|^2\nonumber\\
    &=\sum_{\substack{n \neq m \\ n^\prime \neq m^\prime}}\prod_{j=0}^k G_{n_jm_j}^{(j)}G_{m_j^\prime{n}_j^\prime}^{(j)}\tr\left[ P_n\varrho\,P_mP_{m'}\varrho\,P_{n'}\right]\nonumber\\
    &=\sum_{n \neq m }\prod_{j=0}^k |G_{n_jm_j}^{(j)}|^2\tr\left[ P_n \varrho P_m\varrho\right]\nonumber\\
    &\leq\prod_{j=0}^k\max_{n\neq{m}}|G_{n_jm_j}^{(j)}|^2\left\{\sum_{n , m}\tr[P_n\varrho P_m \varrho]-\sum_n\tr[P_n\varrho P_n \varrho]\right\}\nonumber\\
    &=\prod_{j=0}^k\max_{n\neq{m}}|G_{n_jm_j}^{(j)}|^2\tr(\varrho^2-\varpi^2)\nonumber\\
    &=\|\varrho-\varpi\|_2^2\prod_{j=0}^k\max_{n\neq{m}}|G_{n_jm_j}^{(j)}|^2.
\end{align}
where similarly in the last line $\tr(\varrho^2-\varpi^2)=\|\varrho-\varpi\|_2^2$, because $\tr(\varrho\,\varpi)=\tr(\varpi^2)$.
\end{proof}

\begin{remark}
In general, by definition the term $\mscr{S}_T$, which depends on the waiting time distribution $\mscr{P}_T$, converges to zero in increasing $T$, with the rate of convergence depending on the specific distribution. In particular, as we exemplified for the single-time case, for the uniform distribution on all time-steps as standard equilibration in Section~\ref{sec: fluctuations finite time}, we average over a time-window of width $T$ around each $\tau_i$ for all time-steps, with $\mscr{P}_T=T^{-1}$ in the interval $[\tau_i-T/2, \tau_i+T/2]$, and $\mscr{P}_T=0$ outside it. This yields
\begin{equation}
    \left|G_{mn}^{(T)}\right|=|\mathrm{sin}(T\mc{E}_{mn})/T\mc{E}_{mn}|,
\end{equation}
where the term $\mscr{S}_T$ then picks the smallest non-zero energy gap in the Hamiltonian.

Similarly, if the fuzziness corresponds to that of a half-normal distribution with variance $T$, then overall $\mscr{S}_T$ decays exponentially with
\begin{gather}
\left|G_{mn}^{(T)}\right|\sim\exp(-T\mc{E}_{mn}^2)\left|1-\mathrm{erf}(i\sqrt{T}\mc{E}_{mn})\right|,    
\end{gather}
where $\mathrm{erf}$ is the error function and $E_m-E_n=2\mc{E}_{mn}$. For both cases, if $T$ is small, $\mscr{S}_T$ will also be vanishingly small whenever the energy gap $\mc{E}_{nm}$ that maximizes $|G_{nm}^{(T)}|$ is large enough, i.e. $\mc{E}_{nm}\gg{T}$. This property holds in general, since distributions $\mscr{P}_T$ can be approximated as uniform for small $T$ or because the gaps $\mc{E}_{nm}$ can be seen as a rescaling factor on $T$ in the definition of $G_{nm}^{(T)}$.
\end{remark}

The term $\mbb{A}_k$ in Eq.~\eqref{eq: result main multi equilibration} neglects temporal correlations and the operations $\{\mc{A}_i\}$ are all composed as a single operation $\mc{A}_{k:0}=\mc{A}_k\cdots\mc{A}_0$. This is essence can be interpreted as a single-time contribution to equilibration. The two-norm distance satisfies $\|\varrho-\varpi\|_2^2\leq{1-(d_Ed_S)^{-1}}$ as the ancillary input $\gamma$ can be taken to be pure. As discussed above, this term is suppressed through the $\mscr{S}_T$ contributions when $i)$ the averaging window, or equivalently the fuzziness of the clock $T$ is large enough and $ii)$ for small $T$ whenever the energy gap maximizing the time-averaging $\left|G_{nm}^{(T)}\right|$ factor is large with respect to $T$.

Now, we can bound further the terms $\mbb{B}_\ell$ and $\mbb{C}_\ell$ in Theorem~\ref{Thm: Multitime equilibration}, which contain genuine multitime contributions relating to how well the intermediate states at step $\ell$ equilibrate. Continuing from Eq.~\eqref{appendix eq: 2normbound}, we have
\begin{align}
    &\mbb{B}_\ell=\left\| [\mc{G}_{k:\ell+1} - \mc{D}, \mc{A}_\ell] ( \varrho_\ell)\right\|_2\leq\left\| \mc{A}_\ell\right\|\, \left\|(\mc{G}_{k:\ell+1} - \mc{D}) ( \varrho_\ell)\right\|_2 + \left\| (\mc{G}_{k:\ell+1} - \mc{D})\mc{A}_\ell ( \varrho_\ell)\right\|_2,
\end{align}
where we used the triangle inequality on the commutator. Now from this inequality we have, similarly, for the first term,
\begin{align}
    \left\| \left(\mc{G}_{k:\ell+1} - \mc{D}\right)(\varrho_\ell)\right\|_2^2
    &\leq\prod_{j=\ell+1}^k\max_{n\neq{m}}|G_{n_jm_j}^{(j)}|^2\left\{\sum_{n , m}\tr[P_n\varrho_\ell P_m \varrho_\ell]-\sum_n\tr[P_n\varrho_\ell P_n \varrho_\ell]\right\}\nonumber\\
    &=\prod_{j=\ell+1}^k\max_{n\neq{m}}|G_{n_jm_j}^{(j)}|^2\tr[\varrho_\ell^2-\mc{D}(\varrho_\ell)\varrho_\ell]\nonumber\\
    &=\|\varrho_\ell-\mc{D}(\varrho_{\ell})\|_2^2\prod_{j=\ell+1}^k\max_{n\neq{m}}|G_{n_jm_j}^{(j)}|^2,
\end{align}
as $\tr[(\mc{D}(\varrho_\ell))^2]=\tr[\mc{D}(\varrho_\ell)\varrho_\ell]$, whilst for the second term, with $\varrho^\prime_\ell=\mc{A}_\ell(\varrho_\ell)$,
\begin{align}
    \left\| \left(\mc{G}_{k:\ell+1} - \mc{D}\right)(\varrho_\ell^\prime)\right\|_2^2
    &\leq\prod_{j=\ell+1}^k\max_{n\neq{m}}|G_{n_jm_j}^{(j)}|^2\left\{\sum_{n , m}\tr[P_n(\varrho_\ell^\prime) P_m (\varrho_\ell^\prime)]-\sum_n\tr[P_n(\varrho_\ell^\prime) P_n (\varrho_\ell^\prime)]\right\}\nonumber\\
    &=\prod_{j=\ell+1}^k\max_{n\neq{m}}|G_{n_jm_j}^{(j)}|^2\tr[\varrho_\ell^{\prime\,2}-\mc{D}(\varrho_\ell^\prime)\varrho_\ell^\prime]\nonumber\\
    &=\|\mc{A}_\ell(\varrho_\ell)-\mc{D}\mc{A}_\ell(\varrho_\ell)\|_2^2\prod_{j=\ell+1}^k\max_{n\neq{m}}|G_{n_jm_j}^{(j)}|^2,
\end{align}
so putting these together,
\begin{equation}
    \mbb{B}_\ell\leq\prod_{j=\ell+1}^k\max_{n\neq{m}}|G_{n_jm_j}^{(j)}|\left\{\|\mc{A}_\ell\|\|\varrho_\ell-\mc{D}(\varrho_{\ell})\|_2+\|\mc{A}_\ell(\varrho_\ell)-\mc{D}\mc{A}_\ell(\varrho_\ell)\|_2\right\}.
\end{equation}

This means that in the particular case of same evolution $H_i=H_j=H$ and same fuzziness, $T_i=T_j=T$, we have
\begin{equation}
    \mbb{B}_\ell\lesssim\mscr{S}_T^{k-\ell},
\end{equation}
so that, crucially, this term is suppressed overall in the width of the time-window $T$. 

Finally, for $\mbb{C}_\ell$, notice that we can further simplify the last term of Eq.~\eqref{eq: multitime final bound} as
\begin{align}
    \mbb{C}_\ell&=\|[\mc{D},\mc{A}_\ell](\varrho_\ell-\varpi_\ell)\|_2\leq\|\mc{D}(\varrho_{\ell+1})\|_2 + \|\varpi_{\ell+1}\|_2 + \|\mc{A}_\ell\| (\|\mc{D}(\varrho_\ell)\|_2 + \|\varpi_{\ell}\|_2)\nonumber\\
    &\leq\|\mc{D}(\varrho_{\ell+1})-\varpi_{\ell+1}\|_2+\|\mc{A}_\ell\|\|\mc{D}(\varrho_\ell)-\varpi_\ell\|_2,
\end{align}
and each term is the purity of a dephased state, which will decay as the inverse effective dimension of that state. This follows as in general, $\tr\left[(\mc{D}(\sigma))^2\right]\leq{d}_\text{eff}^{-1}(\sigma)$ for any state $\sigma$, with equality for either pure states or non-degenerate Hamiltonians. On the other hand, when the control operations from $0$ to $\ell$ succeed in driving $\varrho_\ell$ so that the action of the commutator does not dephase it significantly, the purity of $\varrho_\ell$ may be large and thus $\mbb{C}_\ell$ may become trivial (i.e. it approaches 1).

More concretely, the operations $\mc{A}_j$ interleaved within the intermediate states $\varrho_\ell$ and $\varpi_\ell$ will relate in the multitime correction terms in Eq.~\eqref{eq: multitime bound corrections} to how greatly they disturb either the finite-time averaged $\varrho_{j-1}$ or the equilibrated $\varpi_{j-1}$. This is most evident in the term $\mbb{C}_\ell$, which can be bounded as well as $\mbb{C}_\ell\leq \|[\mc{D}, \mc{A}_\ell]\| \|\varrho_\ell - \varpi_\ell\|_2$. The norm of the commutator can be written in terms of both the capacity of the operations $\mc{A}_\ell$ to generate coherences between different energy eigenspaces from equilibrium and the degree to which the operations can turn such coherences into populations. Environments in physical systems are typically much larger than the subsystems that can be probed, and, keeping in mind that the operations $\mc{A}_j$ act only on subsystem \gls{syst} and the ancilla $\mathsf{\Gamma}$, the ability to generate and detect energy coherences should be severely limited in many physically relevant cases.

A more general version of the bound in Theorem~\ref{Thm: Multitime equilibration} can thus be given as follows. Let us denote $\mf{S}_{b:a}:=\prod_{j=a}^b\max_{n\neq{m}}|G_{n_jm_j}^{(j)}|$, then,
\begin{align}
    \left|\langle\Lambda\rangle_{\overline{\Upsilon}^{\mscr{P}_\mathbf{T}}-\Omega}\right|&\leq \mf{S}_{k:0}\,\|\mc{A}_{k:0}\|\,\|\varrho-\varpi\|_2\nonumber\\
    &\quad +\sum_{\ell=0}^{k-1}\mf{S}_{k:\ell+1}\,\|\mc{A}_{k:\ell+1}\|\,\bigg\{\|\mc{A}_\ell\|\|\varrho_\ell-\mc{D}(\varrho_{\ell})\|_2+\|\mc{A}_\ell(\varrho_\ell)-\mc{D}\mc{A}_\ell(\varrho_\ell)\|_2\bigg\}\nonumber\\
    &\quad\quad +\sum_{\ell=0}^{k-1}\|\mc{A}_{k:\ell+1}\|\left\{\|\mc{D}(\varrho_{\ell+1})-\varpi_{\ell+1}\|_2+\|\mc{A}_\ell\|\|\mc{D}(\varrho_\ell)-\varpi_\ell\|_2\right\},
    \label{eq: multitime final bound}
\end{align}
so if we now take the Hamiltonian at each time-step to be a fixed $H=\sum{E}_nP_n$ and we fix the fuziness to be the same at each step $T_i=T$, we now have $\FS_{b:a}=\mscr{S}_T^{b-a}$ where $\mscr{S}_T:=\max_{n\neq{m}}\left|G_{nm}^{(T)}\right|$ as in Eq.~\eqref{eq: main standard case multitime}.

To summarize, our result in Theorem~\ref{Thm: Multitime equilibration} shows that either subsystems or global coarse properties of a closed time-independent Hamiltonian system will display equilibration for multiple sequential operations with a temporal uncertainty or fuzziness provided:
\begin{compactenum}[\itshape i.]
    \item Both the initial and intermediate states have a significant overlap with the energy eigenstates.
    \item The temporal fuzziness is large enough relative to the average measurement time or, equivalently, the energy gaps in the Hamiltonian are large enough with respect to the temporal fuzziness.
    \item The disturbance by the operations on intermediate states is small.
\end{compactenum}

\section{Genuine multitime equilibration}
In the previous section we have stressed the multitime nature of the bound in Theorem~\ref{Thm: Multitime equilibration}, however, how can we be sure that this is not simply an elaborated example of the results of Ref.~\cite{Short_finite} (described in Section~\ref{sec: fluctuations finite time})? Why would it not be possible to write the joint expectation $|\langle\Lambda\rangle_\Upsilon|$ as a single Heisenberg picture operator acting on the initial state, i.e. to simply group all of the time evolutions and measurements into a single Hermitian operator $\mathfrak{F}$ representing the measurement procedure acting on the initial state? This indeed suggests that a single-time equilibration bound, like the one in Eq.~\eqref{eq: Short finite time} would suffice to study equilibration in general quantum processes.

Let us then give a simple example that demonstrates that Theorem~\ref{Thm: Multitime equilibration} indeed captures genuine multitime phenomena.

\begin{example}
For simplicity, consider only two \gls{CPTNI} interventions acting on \gls{syst-ancilla} of the form
\begin{equation}
    \mc{A}(\cdot)=\sum_\mu{a}_\mu{A}_\mu(\cdot)A_\mu^\dg,\qquad
    \mc{B}(\cdot)=\sum_\mu{b}_\mu{B}_\mu(\cdot){B}_\mu^\dg,
\end{equation}
then the joint expectation for these operations interleaved with evolutions over time intervals $\delta{t}_0$ and $\delta{t}_1$ is
\begin{align}
    \langle\Lambda\rangle_\Upsilon&=\tr[\mc{B}\,\mc{U}_1\mc{A}\,\mc{U}_0(\varrho)]=\sum{a}_\mu b_\nu\tr[B_\nu U_1 A_\mu\,\mc{U}_0(\varrho)A_\mu^\dg U_1^\dg B_\nu^\dg],
\end{align}
where as above, $\mc{U}_\ell=U_\ell(\cdot)U_\ell^\dg$, with $U_\ell:=\exp(-iH_\ell{t}_\ell)$ with $H_\ell$ the Hamiltonian at time-step $\ell$ and $\varrho=\rho\otimes\gamma$ the full initial $\mathsf{SE\Gamma}$ state. Let us fix the basis for the Hamiltonians so that $H_\ell=\sum{E}_{n_\ell}{P}_n$ as above. We can now move terms around using cyclicity of trace to get
\begin{align}
    \langle\Lambda\rangle_\Upsilon
    &=\sum{a}_\mu b_\nu\tr[A_\mu^\dg U_1^\dg B_\nu^\dg{B}_\nu U_1 A_\mu\mc{U}_0(\varrho)]=\tr[\mc{A}\,\mc{U}_1^\star(\mathsf{B})\,\mc{U}_0(\varrho)],
\end{align}
where we define where $\mc{U}_\ell^\star(\cdot):=U_\ell^\dg(\cdot)U_\ell$ with $\mathsf{B}:=\sum\,b_\mu{B}_\mu^\dg{B}_\mu$. The argument we refer to is that we can write this as $\langle{\mathfrak{F}}\rangle_{\varrho}=\tr[\mathfrak{F}\varrho(\delta t_0)]$, where $\varrho(\delta t_0):=\mc{U}_0(\varrho)$ and
\begin{align}
    \mathfrak{F}:=\mc{A}\,\mc{U}_1^\star(\mathsf{B}),
\end{align}
and obtain that $|\langle\Lambda\rangle_{\overline{\Upsilon}^{\mathscr{P}^\mathbf{T}}-\Omega}|=|\langle{\mathfrak{F}}\rangle_{\overline{\varrho}^{\mathscr{P}^{T_0}}-\omega}|$.

This happens to be the case if the interval $\delta{t}_1$ is fixed, allowing fuzziness only in the first evolution time. In such case indeed we can simply apply the single-time result in  Eq.~\eqref{eq:singlestep multitime equilibration}, given that $\langle\Lambda\rangle_\Omega=\langle{\mathfrak{F}}\rangle_{\varpi_0}$ with $\varpi_0=\mc{D}(\varrho)$ as defined in Eq.~\eqref{eq: def varpi}.

However, when the fuzziness of the clock is present for each intervention, i.e. in both evolution times, we have
\begin{equation}
    \langle\Lambda\rangle_{\overline{\Upsilon}^{\mathscr{P}_\mathbf{T}}}=\tr[\mc{B}\,\mc{G}_1\mc{A}\,\mc{G}_0(\varrho)],
\end{equation}
where $\mc{G}_\ell=\mc{G}_{T_\ell}$, as we use in the proof of Theorem~\ref{Thm: Multitime equilibration} and defined in Eq.~\eqref{eq: dephasing finite}, is the time-evolution superoperator finite-time averaged with respect to the probability distributions $\mathscr{P}_{T_\ell}$ with a characteristic temporal fuzziness $T_\ell$. As defined in Eq.~\eqref{eq: equilibrium expectation}, the joint expectation with respect to the equilibrium process is
\begin{equation}
    \langle\Lambda\rangle_{\Omega}=\tr[\mc{B}\,\mc{D}\,\mc{A}\,\mc{D}(\varrho)].
\end{equation}

Now due to the double time average, it is impossible to write the difference of both quantities in terms of an operator expectation value, since
\begin{align}
    \left|\langle\Lambda\rangle_{\overline{\Upsilon}^{\mathscr{P}_T}-\Omega}\right|&=\tr[\mc{B}\,\mc{G}_1\,\mc{A}\,\mc{G}_0(\varrho)-\mc{B}\,\mc{D}\,\mc{A}\,\mc{D}(\varrho)]=\tr[\mc{B}\,\mc{G}_1\,\mc{A}\,\overline{\varrho}^{\mathscr{P}^{T_0}}-\mc{B}\,\mc{D}\mc{A}\,\varpi_0]\nonumber\\
    &\neq\left|\langle{\mathfrak{E}}\rangle_{\overline{\varrho}^{\mathscr{P}^{T_0}}-\varpi_0}\right|,
\end{align}
for any operator $\mathfrak{E}$. Exceptions occur when we take the $T_1\to\infty$ limit (so that $\mc{G}_1\to\mc{D}$), or if there is no fuzziness for a fixed $\delta{t}_1$ (so that $\mc{G}_1$ and $\mc{D}$ are replaced by a fixed $\mc{U}_{\delta{t}_1}$) as argued above. This means that in general the result in Theorem~\ref{Thm: Multitime equilibration} constitutes an equilibration result that cannot be reduced to a single-time one.
\end{example}

\section{Conclusions}
Let us finally note that our approach to describe the operations that can act on the process is general in the sense that these are \gls{CP} maps which can be correlated between time-steps and propagate information from their interactions with the subsystem \gls{syst} through the ancillary space $\mathsf{\Gamma}$. While these set a scale in all terms of the right-hand side of Eq.~\eqref{eq: main standard case multitime}, they can also contribute to loosening it, potentially allowing to distinguish the fuzzy process from the equilibrium one within a finite time. It is not entirely clear, however, if a departure from equilibration is more readily accessible with a larger ancillary space $\mathsf{\Gamma}$, and, for long time fuzziness $T$, the upper-bound in Eq.~\eqref{eq: main standard case multitime} should remain close to zero.

Similar to the single-time standard case described in Section~\ref{sec: fluctuations finite time}, equilibration over multiple observations in open systems is expected intuitively through decoherence arguments~\cite{Yukalov_2012}. The interplay with memory effects, through both the \gls{env} and $\mathsf{\Gamma}$ in the interventions is as yet not entirely clear, e.g., under which circumstances finite temporal resolution equilibration can occur without the dynamics being Markovian, i.e., memoryless, or if the temporal correlations among interventions through the ancillary space can display a departure from equilibration within a finite-time.

In the following Chapters we will see more clearly how the questions on the foundations of statistical mechanics can be posed in direct analogy in the context of quantum processes with respect to Markovianity. Whilst here we approached the question of equilibration in quantum processes somewhat pragmatically, we can conjecture a bridge between the characterization of the equilibration process and Markovianity akin to that between the generic, time-average equilibrium state and the Gibbs state for thermal equilibrium. The relationship between the two properties is as yet, however, not so transparent but we can certainly expect some progress in this direction in the near future.
    \chapter{Markovian Typicality}
\label{sec:typicality}

\setlength{\epigraphwidth}{0.4\textwidth}
\epigraph{\emph{Il n'y a de nouveau que ce qui est oublié.}\footnotemark}{-- Rose Bertin}
\footnotetext{\emph{There is nothing new except what has been forgotten}. The quote is sometimes attributed to Marie Antoinette, of whom Bertin was the dressmaker.}

As we saw in Chapter~\ref{sec:statmech}, the quest towards understanding how thermodynamics emerges purely from quantum mechanical laws has seen a great deal of progress in recent years. Most prominently, equilibration on average deals with the dynamical explanation of how reversible and recurrent Schr\"odinger dynamics lead to irreversible reduced dynamics which converge and revolve around an equilibrium state, specifically telling us that time-dependent quantum properties evolve towards a certain fixed equilibrium value and stay close to it for most times. Moreover, we have seen that this dynamical convergence towards and around equilibrium holds more generally for subparts of closed systems under general quantum stochastic processes.

Whenever equilibration on average holds, it implies that the dynamics erases the information contained in the initial state of the respective system, however, there may still be non-Markovian memory of the initial state encoded in the temporal correlations between observables. Moreover, as we highlighted in Chapter~\ref{sec:processes}, all open quantum evolutions generated by a time-independent \gls{syst-env} Hamiltonian are non-Markovian, i.e. we know that, as far as nature is concerned, non-Markovianity is the rule and Markovianity is the exception which at best is an idealization.

How can we explain then this apparent contradiction? Furthermore, the Born-Markov approximation has proven to be extremely fertile over the years, being applicable to a wide class of physical models and situations~\cite{carmichael1993open, blanchard2000decoherence, breuer2002theory, schlosshauer2007decoherence, alicki2007quantum, Dynkin_2006, stroock2013}.

There is now an evident parallel flow of ideas that we can draw from the emergence of statistical mechanics discussed in Chapter~\ref{sec:statmech}; quantum systems dynamically equilibrate despite non-equilibrium being generic, in turn fulfilling the second law of thermodynamics, whose emergence can be explained from first principles such as entanglement, rather than from the equal a-priori probabilities postulate. Now we know that quantum processes satisfy an analogous form of dynamical equilibration and we can similarly ask if the emergence of forgetful processes can arise fundamentally rather than from ad-hoc assumptions or approximations such as the Born-Markov condition.

In Ref.~\cite{FigueroaRomero2019almostmarkovian} without resorting to the Born-Markov assumption or any other approximation, we formally proved that quantum processes are close to Markovian ones, when the subsystem \gls{syst} is sufficiently small compared to the whole \gls{syst-env}, with a probability that tends to unity exponentially in the size of the latter. That is, we showed that Markovian processes are typical when these occur in small subsystems, with generic processes obeying a concentration of measure around Markovian ones. We also showed that, for a fixed global system size, it may not be possible to neglect non-Markovian effects when the process is allowed to continue for long enough, although detecting non-Markovianity for such processes would usually require non-trivial entangling resources. These results give birth to \textit{almost} Markovian processes from closed dynamics analogous to the way in which entanglement supersedes the fundamental postulate of statistical mechanics.

\section{Random quantum processes}
The main approach we take to formally prove that Markovian processes are typical is generally speaking the one described in Section~\ref{sec: state typicality}; here as well we want to explore the statistical properties of quantum processes and study how in this case Markovian processes turn out to be exceptional. As we will see, however, we will need some additional mathematical concepts when it comes to the moments of the unitary group.

Similar to the case of quantum states, to approach the question of what sampling a random quantum process means, we require a probability measure that assigns non-vanishing probabilities to mathematically generic unitary dynamics on the closed \gls{syst-env} composite. We similarly we achieved this by sampling the evolution from the unitarily invariant Haar measure, introduced in Section~\ref{sec: Random states and Haar}. As we saw, this has the additional advantage of allowing employ random matrix theory techniques~\cite{Weingarten,Collins_2003,GuMoments, Collins_2006, Puchala_2017, Guhr1998, mehta2004random} and leads to the relatively straightforward application of concentration of measure results~\cite{Ledoux, Milman, boucheron2013concentration}.

Consider then a $k$-step quantum process $\Upsilon$ on a \gls{syst-env} composite of dimension $d_\mathsf{SE}=d_\mathsf{S}d_\mathsf{E}$, with initial state $\rho$ and with unitary evolution given by unitary maps $\mc{U}_i=U_i(\cdot)\,U_i^\dg$ acting on the full \gls{syst-env} at the $i$\textsuperscript{th} timestep.  As we now are dealing with unitary evolution at several time-steps, we use the Haar measure to sample two distinct types of unitary \gls{syst-env} evolution.

\begin{definition}\label{def: random constant interaction}
We refer to these two ways of sampling as:
    \begin{compactenum}[\itshape i.]
    \item Random interaction: All $U_i$ independently chosen.
    \item Constant interaction: $U_i=U_j,\quad \forall \leq{i}\neq{j}\leq{k}$.
    \end{compactenum}
\end{definition}

We now depict this in Fig.~\ref{fig: Random Process Markov}$\mathsf{(a)}$. The entire set of unitaries enters into the process tensor as in the definition of the Choi state in Eq.~\eqref{eq: process tensor Choi state}. In the first case, the global system will quickly explore its entire (pure) state space for any initial state. The second case corresponds more closely to what one might expect for a truly closed system, where the Hamiltonian remains the same throughout the process. These correspond to two extremes; more generally, the dynamics from step to step may be related but not identical.

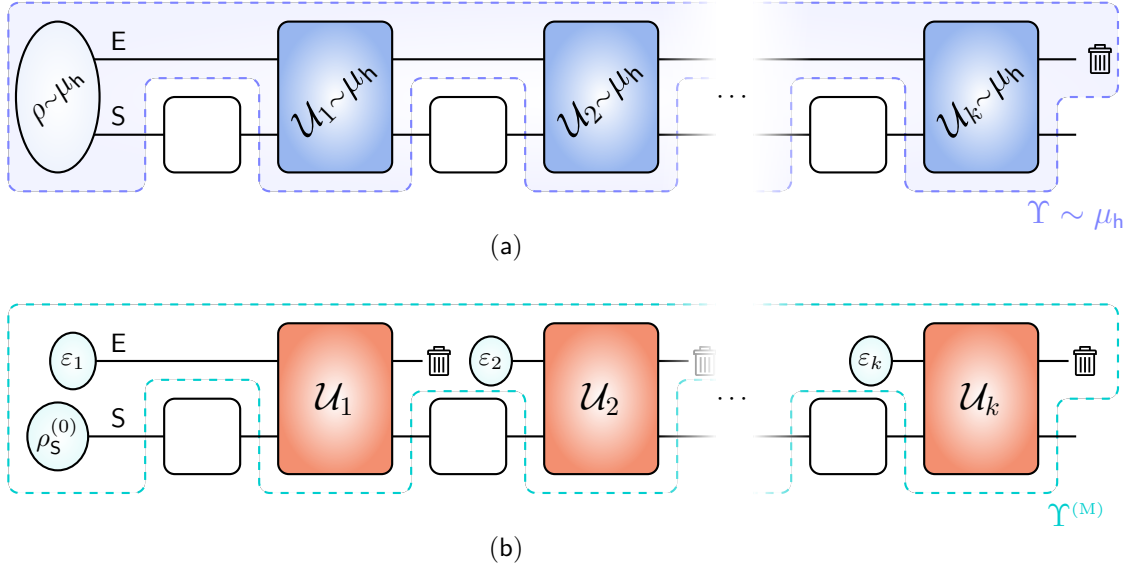
\begin{figure}[t]
    \centering
    \begin{tikzpicture}
    \begin{scope}
    \fill[inner color=white, outer color=C4!10!white, draw=C4, dashed, thick, rounded corners] (0.95,-1.25) -- (0.95,1.25) -- (15.55,1.25) -- (15.55,0) -- (14.75,0) -- (14.75,-1.25) -- (12.75,-1.25) -- (12.75,0.25) -- (11.25,0.25) -- (11.25,-1.25) -- (10.5,-1.25) -- (10.5,0.25) -- (9.75,0.25) -- (9.75,-1.25) -- (7.75,-1.25) -- (7.75,0.25) -- (6.25,0.25) -- (6.25,-1.25) -- (4.25,-1.25) -- (4.25,0.25) -- (2.75,0.25) -- (2.75,-1.25) -- cycle;
    \node[above] at (2.4,0.5) {$\mathsf{E}$};
    \node[above] at (2.4,-0.5) {$\mathsf{S}$};
    \node[below,C4] at (15,-1.25) {\Large$\Upsilon\sim\mu_\haar$};
    \draw[thick, -] (2,0.5) -- (15,0.5);
    \draw[thick, -] (2,-0.5) -- (15,-0.5);
    \fill[outer color=C1!10!white, inner color=white, draw=black, thick] (1.6,0) ellipse (0.55cm and 1cm);
    \node[rotate=45] at (1.6,0) {\Large$\rho\scriptstyle{\sim}\displaystyle\muhaar$};
    \shade[outer color=white, inner color=white, draw=black, rounded corners, thick] (3,-1) rectangle (4,0);
    \shade[outer color=C1!60!white, inner color=white, draw=black, rounded corners, thick] (4.5,-1) rectangle (6,1);
    \node[rotate=45] at (5.25,0) {\LARGE$\mc{U}_1\scriptstyle{\sim}\displaystyle\muhaar$};
    \shade[outer color=white, inner color=white, draw=black, rounded corners, thick] (6.5,-1) rectangle (7.5,0);
    \shade[outer color=C1!60!white, inner color=white, draw=black, rounded corners, thick] (8,-1) rectangle (9.5,1);
    \node[rotate=45] at (8.75,0) {\LARGE$\mc{U}_2\scriptstyle{\sim}\displaystyle\muhaar$};
    \draw[white, fill=white, path fading= west] (9.5,-1.5) -- (9.5,1.5) -- (10.25,1.5) -- (10.25,-1.5) ;
    \draw[white, fill=white] (10.25,1.5) -- (10.75,1.5) -- (10.75,-1.5) -- (10.25,-1.5);
    \draw[white, fill=white, path fading= east] (10.75,1.5) -- (11.5,1.5) -- (11.5,-1.5) -- (10.75,-1.5);
    \node at (10.5,0) {$\cdots$};
    \shade[outer color=white, inner color=white, draw=black, rounded corners, thick] (11.5,-1) rectangle (12.5,0);
    \shade[outer color=C1!60!white, inner color=white, draw=black, rounded corners, thick] (13,-1) rectangle (14.5,1);
    \node[rotate=45] at (13.75,0) {\LARGE$\mc{U}_k\scriptstyle{\sim}\displaystyle\muhaar$};
    \node[right] at (15,0.5) {\trash};
    \node at (7.5,-2) {$\mathsf{(a)}$};
    \end{scope}
    \begin{scope}[shift = {(0,-4)}]
    \fill[inner color=white, outer color=white, draw=C3, dashed, thick, rounded corners] (0.95,-1.25) -- (0.95,1.25) -- (15.55,1.25) -- (15.55,0) -- (14.75,0) -- (14.75,-1.25) -- (12.75,-1.25) -- (12.75,0.1) -- (11.25,0.1) -- (11.25,-1.25) -- (10.5,-1.25) -- (10.5,0.25) -- (9.75,0.25) -- (9.75,-1.25) -- (7.75,-1.25) -- (7.75,0.1) -- (6.25,0.1) -- (6.25,-1.25) -- (4.25,-1.25) -- (4.25,0.25) -- (2.75,0.25) -- (2.75,-1.25) -- cycle;
    \node[below,C3] at (15,-1.25) {\Large$\Upsilon^\markov$};
    \node[above] at (2.4,0.5) {$\mathsf{E}$};
    \node[above] at (2.4,-0.5) {$\mathsf{S}$};
    \draw[thick, -] (2,0.5) -- (6,0.5);
    \draw[thick, -] (6,0.5) -- (6.4,0.5);
    \node[right] at (6.3,0.5) {\trash};
    \draw[thick, -] (2,-0.5) -- (15,-0.5);
    \fill[outer color=C3!10!white, inner color=white, draw=black, thick] (1.8,0.5) ellipse (0.3 and 0.375);
    \fill[outer color=C3!10!white, inner color=white, draw=black, thick] (1.6,-0.5) ellipse (0.4 and 0.45);
    \node at (1.8,0.5) {$\varepsilon_1$};
    \node[left] at (2,-0.5) {$\rho_\mathsf{S}^{(0)}$};
    \shade[outer color=white, inner color=white, draw=black, rounded corners, thick] (3,-1) rectangle (4,0);
    \shade[outer color=C2!60!white, inner color=white, draw=black, rounded corners, thick] (4.5,-1) rectangle (6,1);
    \node at (5.25,0) {\LARGE$\mc{U}_1$};
    \shade[outer color=white, inner color=white, draw=black, rounded corners, thick] (6.5,-1) rectangle (7.5,0);
    \draw[thick] (7.3,0.5) -- (9.9,0.5);
    \node[right] at (9.8,0.5) {\trash};
    \shade[outer color=C2!60!white, inner color=white, draw=black, rounded corners, thick] (8,-1) rectangle (9.5,1);
    \node at (8.75,0) {\LARGE$\mc{U}_2$};
    \fill[outer color=C3!10!white, inner color=white, draw=black, thick] (7.3,0.5) ellipse (0.275 and 0.325);
    \node at (7.3,0.5) {$\varepsilon_2$};
    \draw[white, fill=white, path fading= west] (9.5,-1.5) -- (9.5,1.5) -- (10.25,1.5) -- (10.25,-1.5) ;
    \draw[white, fill=white] (10.25,1.5) -- (10.75,1.5) -- (10.75,-1.5) -- (10.25,-1.5);
    \draw[white, fill=white, path fading= east] (10.75,1.5) -- (11.5,1.5) -- (11.5,-1.5) -- (10.75,-1.5);
    \node at (10.5,0) {$\cdots$};
    \shade[outer color=white, inner color=white, draw=black, rounded corners, thick] (11.5,-1) rectangle (12.5,0);
    \draw[thick] (12.3,0.5) -- (14.9,0.5);
    \shade[outer color=C2!60!white, inner color=white, draw=black, rounded corners, thick] (13,-1) rectangle (14.5,1);
    \node at (13.75,0) {\LARGE$\mc{U}_k$};
    \fill[outer color=C3!10!white, inner color=white, draw=black, thick] (12.3,0.5) ellipse (0.275 and 0.325);
    \node at (12.3,0.5) {$\varepsilon_k$};
    \node[right] at (14.8,0.5) {\trash};
    \node at (7.5,-2) {$\mathsf{(b)}$};
    \end{scope}
    \end{tikzpicture}
    \caption[Sampling a quantum process at random and quantum Markovian processes]{\textbf{Sampling a quantum process at random and quantum Markovian processes:} $\mathsf{(a)}$ By the sampling of a random quantum process we mean a $k$-step process with a unitary evolution sampled from the Haar measure $\muhaar$ either according to a random interaction (independent $\mc{U}_i\neq\mc{U}_j$) or a constant interaction (all $\mc{U}_i=\mc{U}_j$) as per Definition~\ref{def: random constant interaction}. $(\mathsf{b})$ We consider how distinguishable is a generic process sampled at uniformly at random from a Markovian one, which is mathematically equivalent to one where \gls{syst-env} is initially uncorrelated, $\rho=\rho_\mathsf{S}^{(0)}\otimes\varepsilon_1$, with \gls{env} being subsequently discarded and replaced afresh at every time-step.}
    \label{fig: Random Process Markov}
\end{figure}

Two main features of this approach now stand out. The first is that we are directly sampling unitary dynamics from the Haar measure which are not generally given as a time-independent Hamiltonian dynamics; this is slightly different from sampling a pure quantum state at random, as done in Section~\ref{sec: state typicality}, as in this case the type of dynamics generated by the Haar measure will be relevant. The second is precisely that the interaction or information flow between all parts of the whole \gls{syst-env} composite will be relevant, i.e. no parts of the environment dimension are superfluous and in this sense we can think of having a strong interaction among all of the \gls{syst-env} composite degrees of freedom.

\section{The moments of the unitary group \& average processes}
As a $k$-step process tensor generically involves $k$ unitary maps $\mc{U}_1,\mc{U}_2,\ldots,\mc{U}_k$. The fiducial state $\rho$ can be taken to have undergone some evolution $\mc{U}_0$, which will let us interpret it as a random state. Back in Section~\ref{sec: Random states and Haar}, when we distributed \gls{syst-env} states according to the Haar measure, we only required the first and second moments of the unitary group to obtain a concentration of measure result. In this case, however, we are generically considering quantities involving $k+1$ unitary maps, either all the same or all different and independently sampled.

We showed in Eq.~\eqref{eq: average quantum state} that the average quantum state drawn from the Haar measure is the maximally mixed state. For a random process, with all $\mc{U}_i\neq\mc{U}_j$ independently chosen, we can simply apply the average independently over each unitary. First let us rewrite the definition of the Choi state representation of the process tensor,
\begin{align}
    &\Upsilon=\tr_\mathsf{E}\left[\,\mc{U}_k\,\mc{S}_k\,\mc{U}_{k-1}\mc{S}_{k-1}\cdots\mc{U}_1\mc{S}_1\left(\rho\otimes\tilde{\Psi}^{\otimes\,k}\right)\,\right]\nonumber\\
    &=\!\!\sum_{\alpha,\ldots,\delta}\!\!\tr_\mathsf{E}\left[U_k\FS_{\alpha_k\beta_k}\!\cdots{U}_1\FS_{\alpha_1\beta_1}\rho\FS_{\delta_1\gamma_1}U_1^\dg\cdots\FS_{\delta_k\gamma_k}U_k^\dg\right]\otimes|\beta_1\alpha_1\!\cdots\beta_k\alpha_k\rangle\!\langle\delta_1\gamma_1\!\cdots\delta_k\gamma_k|,
\end{align}
where the sum runs over all Greek indices from $1$ to $d_\mathsf{S}$, with $\tilde\Psi$ being an unnormalized maximally entangled state acting in the respective $d_\mathsf{S}$-dimensional ancillary spaces $\mathsf{A}_i\mathsf{B}_i$, where $\mc{S}_i$ are swaps between \gls{syst} and ancillary system $\mathsf{A}_i$ at time-step $i$, and where $\FS_{\alpha\beta}=\mbb{1}_\mathsf{E}\otimes|\alpha\rangle\!\langle\beta|$. Full detail can be revisited around the definition in Eq.~\eqref{eq: process tensor Choi state}. Then let us denote as
\begin{equation}
    \mbb{E}_\haarrand[f(U_0,U_1,\ldots,U_k)]=\int_{\mbb{U}(d)}f(U_0,U_1,\ldots,U_k)\,d\muhaar(U_0)\,d\muhaar(U_1)\cdots{d}\muhaar(U_k),
\end{equation}
the integration, or averaging, over the Haar measure independently over all time-steps, implicitly being over the different unitaries of the argument, so that we obtain
\begin{align}
    &\mbb{E}_\haarrand[\Upsilon]=\sum_{\alpha,\ldots,\gamma}\tr_\mathsf{E}\left\{\mbb{E}_\haarrand\left[U_k\FS_{\alpha_k\beta_k}\cdots{U}_1\FS_{\alpha_1\beta_1}U_0\rho\,U_0^\dg\FS_{\delta_1\gamma_1}U_1^\dg\cdots\FS_{\delta_k\gamma_k}U_k^\dg\right]\right\}\nonumber\\
    &\qquad\qquad\qquad\otimes|\beta_1\alpha_1\cdots\beta_k\alpha_k\rangle\!\langle\delta_1\gamma_1\cdots\delta_k\gamma_k|\nonumber\\
    &=\!\!\sum_{\alpha,\ldots,\gamma,\epsilon}\f{\mbb1_\mathsf{S}}{d_\mathsf{S}}\langle\epsilon\beta_k|\mbb{E}_\haarrand\left[U_{k-1}\FS_{\alpha_{k-1}\beta_{k-1}}\cdots{U}_1\FS_{\alpha_1\beta_1}U_0\rho\,U_0^\dg\FS_{\delta_1\gamma_1}U_1^\dg\cdots\FS_{\delta_{k-1}\gamma_{k-1}}U_{k-1}^\dg\right]|\epsilon\delta_k\rangle\nonumber\\
    &\qquad\qquad\otimes|\beta_1\alpha_1\cdots\beta_k\alpha_k\rangle\!\langle\delta_1\gamma_1\cdots\delta_k\alpha_k|\nonumber\\
    &=\!\!\sum_{\alpha,\ldots,\gamma,\epsilon}\f{\mbb1_\mathsf{S}}{d_\mathsf{S}^2}\langle\epsilon\beta_{k-1}|\mbb{E}_\haarrand\left[U_{k-2}\FS_{\alpha_{k-2}\beta_{k-2}}\!\!\cdots{U}_1\FS_{\alpha_1\beta_1}U_0\rho\,U_0^\dg\FS_{\delta_1\gamma_1}U_1^\dg\cdots\FS_{\delta_{k-2}\gamma_{k-2}}U_{k-2}^\dg\right]|\epsilon\delta_{k-1}\rangle\nonumber\\
    &\qquad\qquad\otimes|\beta_1\alpha_1\cdots\beta_{k-1}\alpha_{k-1}\rangle\!\langle\delta_1\gamma_1\cdots\delta_{k-1}\alpha_{k-1}|\otimes1_{\mathsf{A}_k\mathsf{B}_k}\nonumber\\[-0.05in]
    &\,\,\,\vdots\nonumber\\
    &=\f{\mbb{1}_{\mathsf{SA}_1\mathsf{B}_1\cdots\mathsf{A}_k\mathsf{B}_k}}{d_\mathsf{S}^{\,k+1}},
    \label{eq: average process rand}
\end{align}
where we introduced $\{|\epsilon\rangle\}$ as a basis for \gls{env} to perform each trace. This is, up to normalization of the maximally entangled states, a maximally mixed state in the full system-ancillary space, implying that the average process tensor is \emph{maximally noisy}.

\begin{example}
As a simple example consider $k=2$. Then the action of the average process tensor, say $\mc{T}_{2:0}^{\,\haarrand}$, on a pair of \gls{CPTNI} operations $\mc{A}_0$ and $\mc{A}_1$ is given by
\begin{align}
    \mc{T}_{2:0}^{\,\haarrand}[\{\mc{A}_i\}]&=\tr_\mathsf{in}\left\{\mbb{E}_\haarrand[\Upsilon_{2:0}]\left[(\mbb1_{\mathsf{A}_1}\otimes\mc{A}_0^\mathrm{T})\tilde{\Psi}\otimes(\mbb1_{\mathsf{A}_2}\otimes\mc{A}_1^\mathrm{T})\tilde{\Psi}\right]\right\}=\f{\tr[\mc{A}_0(\mbb1)]\tr[\mc{A}_1(\mbb1)]}{d_\mathsf{S}^3}\,\mbb1_\mathsf{S}\nonumber\\
    &\leq\f{\mbb1_\mathsf{S}}{d_\mathsf{S}},
\end{align}
with equality for \gls{TP} maps, i.e. with $\tr[\mc{A}_i(\mbb1)]=\tr(\mbb1)=d_\mathsf{S}$.
\end{example}

This generalizes similarly for any $k$, implying that the Haar average process tensor is maximally noisy, or analogous to quantum channel terms, completely depolarizing. Notice as well that this average process is Markovian, as the Choi state has a tensor product structure.

This fact relies upon the independent sampling of each unitary; however, if we consider a constant interaction, with all $\mc{U}_i=\mc{U}_j$, we require a single integral equivalent to the $(k+1)$-moment of the unitary group, as defined in Eq.~\eqref{eq: n-moments of the unitary group}. In Section~\ref{sec: Random states and Haar} we were able to compute the first and second moments by means of the twirl map and the Schur-Weyl duality. However, for higher-order moments this is not practical anymore, as per by the Schur-Weyl duality we need to consider all permutations and then relate back the twirl map with the moments of the unitary group. We can, however, capture in an abstract form the behaviour of the moments of the unitary group by means of an object called the \emph{Weingarten function} (which in a sense also relies on the Schur-Weyl duality); this will let us, in particular, understand the asymptotic behaviour in dimensions and time-steps for random processes, as will become clear below.

The Weingarten function can be defined in different ways; it is a fairly complicated function to evaluate explicitly~\cite{GuMoments, ZhangMInt} but in any case, for the $n$-moments of the unitary group, it only depends on a given permutation of $n$ and it gives a rational number in the dimension of the unitary in question. Tables with particular cases are often cited in the literature, which are helpful for computing lower moments, see e.g. Ref.~\cite{GuMoments, Roberts2017}. An alternative is to perform calculations numerically~\cite{Puchala_2017, Ginory_2019}). For our purposes, we keep the definition of the Weingaten function as follows.

\begin{definition}[Weingarten function~\cite{GuMoments}]
    Let $\mathfrak{G}_n$ be the symmetric group on $\{1,2,\ldots,n\}$, and let $n\leq{d}$ and $\sigma\in\mathfrak{G}_n$, then the (unitary) Weingarten function,\footnote{ $\Wg$ is named after Donald Weingarten~\cite{Weingarten}, who first studied asymptotic properties of the $n$-moments of $\mbb{U}(d)$. An explicit expression in terms of characters of symmetric groups and Schur functions was first derived by Beno\^{i}t Collins~\cite{Collins_2003} (later expanded in Ref.~\cite{Collins_2006}), who coined the term.} $\Wg$, is defined by
    \begin{equation}
        \Wg(\sigma,d)=\mbb{E}_\haar\left[\,\prod_{i=1}^nU_{ii}U_{i\sigma(i)}^*\right],
    \end{equation}
    where $U\sim\muhaar$ is a Haar distributed $d\times{d}$ unitary matrix.
\end{definition}

More explicit expressions of $\Wg$, such as the one derived in Ref.~\cite{Collins_2003}, are outside the scope of this thesis, however, it is important to point out that $\Wg$ depends not only on the particular permutation $\sigma$ but on its cycle structure, that is, for example, for $n=3$, the cyclic permutations $\sigma=(1,2,3)$ and $\sigma^\prime=(1,3,2)$, meaning $\sigma:1\to2\to3\to1$ and $\sigma^\prime:1\to3\to2\to1$, will evaluate to the same value on $\Wg$, i.e. $\Wg(\sigma,d)=\Wg(\sigma^\prime,d)$. We denote this cycle type with brackets as follows: given a set of positive integers $\{m_i\}$ such that $m_1+m_2+\ldots+m_\ell=n$, then $[m_1,m_2,\ldots,m_\ell]$ means cyclic permutations replacing $m_1$, $m_2$ and $m_\ell$ elements. For the previous example then the cycle type of $\sigma$ and $\sigma'$ is $[3]$, similarly for the permutations $(1,2)(3)$, $(1,3)(2)$ and $(2,3)(1)$ it will be $[2,1]$, whilst for $(1)(2)(3)$ we denote it as $\boldsymbol{1}^3:=[1,1,1]$.

The similarity with the expression for the $n$-moments of $\mbb{U}(d)$ in Eq.~\eqref{eq: n-moments of the unitary group} is now quite evident, and it implies the following.

\begin{theorem}[The $n$-moments of the unitary group~\cite{Collins_2003,GuMoments}]
Let $U\in\mbb{U}(d)$ with $U\sim\muhaar$ be a a $d\times{d}$ Haar random distributed unitary and $n\leq{d}$, then
\begin{equation}
    \mbb{E}_\haar\left[\,\prod_{\ell=1}^nU_{i_\ell{j}_\ell}U_{i^\prime_\ell{j}^\prime_\ell}^*\right]=\sum_{\sigma,\tau\in\mathfrak{G}_n}\prod_{\ell=1}^n\delta_{i_\ell{i}^\prime_{\sigma(\ell)}}\delta_{j_\ell{j}^\prime_{\tau(\ell)}}\,\mathrm{Wg}(\tau\sigma^{-1},d),\label{k moments of U}
\end{equation}
where $U_{ij}$ is the $ij$\textsuperscript{th} entry of $U$.
\end{theorem}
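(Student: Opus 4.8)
The plan is to prove the formula for the $n$-th moment of the unitary group by reducing it to the definition of the Weingarten function given immediately above, using the left-right invariance of the Haar measure. First I would observe that the quantity $\mathbb{E}_\haar\left[\prod_{\ell=1}^n U_{i_\ell j_\ell} U^*_{i'_\ell j'_\ell}\right]$, viewed as a tensor in the index collections $(i,j)$ and $(i',j')$, commutes with the natural action of $\mathbb{U}(d)^{\otimes n}$ by conjugation; this is a direct consequence of the invariance of $d\muhaar$ under $U\mapsto VU$ and $U\mapsto UW$ for fixed $V,W\in\mathbb{U}(d)$, exactly as in the derivation of Eq.~\eqref{eq: SW twirl}. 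By the Schur-Weyl duality (Theorem~\ref{Thm: Schur-Weyl}), any such tensor must be a linear combination of permutation operators on each of the two $n$-fold copies, so the moment has the form $\sum_{\sigma,\tau\in\mathfrak{G}_n} c_{\sigma,\tau}\prod_\ell \delta_{i_\ell i'_{\sigma(\ell)}}\delta_{j_\ell j'_{\tau(\ell)}}$ for some coefficients $c_{\sigma,\tau}$ depending only on $d$.

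Next I would pin down the coefficients. The key point is that invariance forces $c_{\sigma,\tau}$ to depend only on the relative permutation $\tau\sigma^{-1}$ — this follows by acting with permutation operators on both sides and relabelling, and it is here that one also needs $n\leq d$ so that the permutation operators on $(\mathbb{C}^d)^{\otimes n}$ are linearly independent and the decomposition is unique. Then I would extract $c_{\sigma,\tau}$ by specialising the free indices: choosing $i_\ell = i'_\ell = \ell$ and $j_\ell = j'_\ell = \ell$ for $\ell = 1,\dots,n$ collapses the sum on the right-hand side, picking out precisely the terms with $\sigma,\tau$ both fixing the relevant structure, and matching this against the definition $\mathrm{Wg}(\pi,d) = \mathbb{E}_\haar\left[\prod_{i=1}^n U_{ii}U^*_{i\pi(i)}\right]$ identifies $c_{\sigma,\tau} = \mathrm{Wg}(\tau\sigma^{-1},d)$. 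A small linear-algebra argument (inverting the Gram matrix of permutation operators, whose entries are $d^{\#\text{cycles}}$) is what actually makes the specialisation argument rigorous, but I would cite this rather than carry it out.

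The main obstacle I anticipate is the uniqueness step: establishing that the coefficients are genuinely well-defined functions of $\tau\sigma^{-1}$ alone, and that the specialisation of indices really does isolate a single Weingarten value without contamination from other permutations. This requires care with the condition $n\leq d$ (without it the permutation operators become linearly dependent and the Weingarten function as defined here is not the unique solution), and with the bookkeeping of how $\sigma$ acts on the $i$-indices versus how $\tau$ acts on the $j$-indices. The rest — invariance of the measure, the Schur-Weyl reduction, the final index substitution — is routine given the results already assembled in the excerpt. I would also remark that this is the standard derivation due to Collins and Collins--{\'S}niady, so in the thesis it would be appropriate to present it in compressed form with a reference to Ref.~\cite{Collins_2003, Collins_2006}.
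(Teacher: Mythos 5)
The thesis does not actually prove this theorem: it is imported as background, stated with citations to Collins and to the Weingarten-calculus literature, and the surrounding text only defines $\mathrm{Wg}$ and then uses the formula. So there is no in-paper proof to compare against; what you have written is precisely the standard Collins--\'{S}niady derivation that those references contain, and as a compressed, citation-backed presentation it is appropriate and essentially sound: Haar invariance plus Schur--Weyl duality gives the ansatz $\sum_{\sigma,\tau}c_{\sigma,\tau}\prod_\ell\delta_{i_\ell i'_{\sigma(\ell)}}\delta_{j_\ell j'_{\tau(\ell)}}$, invariance under permuting tensor factors reduces $c_{\sigma,\tau}$ to a function of $\tau\sigma^{-1}$, and the condition $n\leq d$ is exactly what makes the permutation operators linearly independent so the coefficients are well defined.

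One step is stated imprecisely. The specialisation you propose, $i_\ell=i'_\ell=j_\ell=j'_\ell=\ell$, kills every term except $\sigma=\tau=\mathrm{id}$ and therefore only identifies $c_{\mathrm{id},\mathrm{id}}$; it does not by itself give $c_{\sigma,\tau}$ for general pairs. To extract a general coefficient you should instead choose all indices distinct and adapted to the target pair (e.g.\ $i_\ell=\ell$, $i'_m=\sigma_0^{-1}(m)$, and analogously for the $j$'s), which is possible precisely because $n\leq d$, and then use invariance of the Haar measure under left/right multiplication by permutation matrices to relate the resulting moment to the defining moment $\mbb{E}_\haar\bigl[\prod_i U_{ii}U^*_{i\pi(i)}\bigr]$; you also implicitly need that $\mathrm{Wg}$ is a class function, since the extraction naturally produces $\mathrm{Wg}(\pi^{-1})$ rather than $\mathrm{Wg}(\pi)$. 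Your fallback — inverting the Gram matrix of permutation operators, whose entries are $d^{\#\,\mathrm{cycles}}$ — is the fully rigorous route and is exactly what the cited references do, so deferring to them is fine; just do not present the all-equal-indices substitution as if it pinned down every coefficient.
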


A property that follows from this result, together with the invariance of the Haar measure, is that $\mbb{E}_\haar\left[U_{i_1j_1}\cdots{U}_{i_nj_n}U_{i^\prime_1{j}^\prime_1}^*\cdots{U}_{i^\prime_{n'}{j}^\prime_{n'}}^*\right]=0$, i.e. there has to be the same amount of $U_{ij}^*$ as there are $U_{ij}$~\cite{GuMoments}. As we have already computed the first and second moments in Eq.~\eqref{eq: average quantum state} and in Eq.~\eqref{eq: Schur-Weyl 2-twirl} (together with Eq.~\eqref{eq: Schur-Weyl constants}), albeit indirectly through the twirl map, we can nevertheless compare with the values in the table of Ref.~\cite{GuMoments} for the corresponding Weingarten functions,
\begin{equation}
    \Wg([1],d)=\f{1}{d},\qquad\Wg([\boldsymbol{1}^2],d)=\f{1}{d^2-1},\qquad\Wg([2],d)=-\f{1}{d(d^2-1)}.
\end{equation}

For the first moment, the Weingarten function corresponds to the only coefficient of the 1-twirl, whilst for the second moment the Weingarten functions appear as coefficients in the $\alpha$, $\beta$ functions of the 2-twirl in Eq.~\eqref{eq: Schur-Weyl constants}, that is, in fact we can write the 1-twirl $\Xi^{(2)}$ and the 2-twirl $\Xi^{(2)}$ as
\begin{align}
    \Xi^{(1)}[(\cdot)]&=\Wg([1],d)\tr[(\cdot)],\\
    \Xi^{(2)}[(\cdot)]&=\Wg([\boldsymbol{1}^2],d)Z+\Wg([2],d)\swap Z,\quad\text{where}\quad{Z}=\tr[(\cdot)]\mbb1+\tr[\swap(\cdot)]\swap,
\end{align}
and so similarly, the $n$-twirl can be written in terms of the corresponding $n$ Weingarten functions.

As mentioned before, this is generally not an easy thing to do and is usually done only for moments of small $n$; however, the asymptotic behaviour is usually one of interest in random matrix theory (or more generally in any theory of non-commutative random variables) and in our case it will help us establish our results for Haar distributed quantum processes. Ultimately, the asymptotic behavior of the moments of the unitary group boils down to that of the $\Wg$ function.

\begin{remark}In Ref.~\cite{GuMoments} it is shown that
\begin{gather}
    \Wg(\sigma\in\mathfrak{G}_n,d)\sim\f{1}{d^{\,2n-\#\sigma}},\,\,\text{as}\,\,d\to\infty,
    \label{Wg asympt result}
\end{gather}
as a refinement of a result in Ref.~\cite{Collins_2006}, where $\#\sigma$ is the number of cycles of the permutation $\sigma$ counting also fixed points (assignments from an element to itself, $\sigma(x)=x$).
\end{remark}

Let us then consider the average process in the constant interaction case, $\mc{U}_i=\mc{U}_j=\mc{U},\,\forall{i,j\leq{k}}$. As it is clear that for us $d=d_\mathsf{SE}$, we will omit the dimension dependence in $\Wg$. As we originally derived\footnote{ In Ref.~\cite{FigueroaRomero2019almostmarkovian} we normalized each maximally entangled state entering the Choi state of the process tensor; the reason for this will be made clear in the following sections.} in Ref.~\cite{FigueroaRomero2019almostmarkovian} and we reproduce in Appendix~\ref{appendix - Average process constant}, the average $k$-step process tensor in the constant interaction case can be written by means of a set $\{|s_i^{(\prime)}\rangle\}_{s_i=1}^{d_\mathsf{S}}$ of \gls{syst} system bases for $i=0,1,\ldots,k$ as
\begin{equation}
    \mbb{E}_\haar[\Upsilon]=\sum_{\sigma,\tau\in\mathfrak{G}_{k+1}}\!\!\rho_{_{\tau(0);0}}\Wg(\tau\sigma^{-1})\Delta_{k,\sigma,\tau}^{(d_\mathsf{E})}|s_{\sigma(k)}\rangle\!\langle{s}_k|\bigotimes_{j=1}^k|s_{\sigma(j-1)}s^\prime_{\tau(j)}\rangle\!\langle{s}_{j-1}s^\prime_{j}|,
\label{average state Ui=Uj}
\end{equation}
with implicit sum over all repeated basis ($s^{(\prime)}_i$) indices, where here $\mathfrak{G}_{k+1}$ is the symmetric group on $\{0,1,\ldots,k\}$, and with the definitions
\begin{align}
    \rho_{_{\tau(0);0}}&=\langle{e}^\prime_{\tau(0)}s^\prime_{\tau(0)}|\rho|e^\prime_0s^\prime_0\rangle,
    \label{phi notation Ui=Uj}\\ \Delta_{k,\sigma,\tau}^{(d_\mathsf{E})}&=\delta_{e_{\sigma(k)}e_k}\prod_{\ell=1}^k\delta_{e_{\sigma(\ell-1)}e^\prime_{\tau(\ell)}}\delta_{e_{\ell-1}e^\prime_\ell},
    \label{DeltaEstate}
\end{align}
where $\{|e_i\rangle\}_{e_i=1}^{d_\mathsf{E}}$, $\{|e_i^\prime\rangle\}_{e_i^\prime=1}^{d_\mathsf{E}}$, with $i=0,1,\ldots,k$, also implicitly summed over all elements $e_i$ and $e_i^\prime$, is a set of \gls{env} bases and the $\Delta$ term is simply a monomial in $d_\mathsf{E}$, with degree determined by $\sigma$ and $\tau$.

The case $k=0$ recovers $\mbb{E}_\haar[\Upsilon_{0:0}]=\mbb{1}_S/d_\mathsf{S}$ as expected, as no process occurs; the $1/d_\mathsf{S}$ factor arises from the $\Wg([1],d_\mathsf{SE})$ function, with the unifying $d_\mathsf{E}$ factor coming from $\Delta_{0,\sigma,\tau}^{d_\mathsf{E}}$. In Appendix~\ref{appendix: average state Ui=Uj superchannel} we also write the case $k=1$ for a superchannel and its purity, which can be seen to be close to the maximally mixed one and coincide with it in the large \gls{env} limit, i.e. it coincides with the random interaction case.

As we saw in the whole Chapter~\ref{sec:statmech}, the small subsystem limit, $d_\mathsf{E}\gg{d}_\mathsf{S}$, is of particular interest. For the average process tensor, we saw that the random interaction case in Eq.~\eqref{eq: average process rand} is independent of $d_\mathsf{E}$, however for the constant interaction case we will get terms in inverse powers of $d_\mathsf{E}$ arising from the $\Wg$ functions. In particular, when looking at the limit $d_\mathsf{E}\to\infty$ of Eq.~\eqref{average state Ui=Uj}, the only term that does not vanish is the one with $\sigma,\tau=\boldsymbol{1}^{k+1}$, i.e. with both permutations being identities, as these generate the most numerator powers in $d_\mathsf{E}$ in the $\Delta_{k,\sigma,\tau}^{(d_\mathsf{E})}$ term in Eq.\eqref{DeltaEstate} when summed over all $e_i$'s. In other words, we get the contribution from $\Delta_{k,\sigma,\tau}^{(d_\mathsf{E})}$,
\begin{gather}
    \sum_{\substack{e_0,e_1,\ldots,e_k=1\\e^\prime_1,e^\prime_2,\ldots,e^\prime_k=1}}^{d_\mathsf{E}}\left[\delta_{e_k e_k}\prod_{\ell=1}^k\delta_{e_{\ell-1}e^\prime_{\ell}}\delta_{e_{\ell-1}e^\prime_\ell}\right]=d_\mathsf{E}^{k+1},
\end{gather}
and all other terms will vanish because of the $d_\mathsf{E}$ powers in the denominator generated by the $\Wg$ functions will dominate those from $\Delta_{k,\sigma,\tau}^{(d_\mathsf{E})}$. Given the asymptotic limit of $\Wg$ in Eq.~\eqref{Wg asympt result} we can see that indeed the least powers in $d$ produced by it are those when $\sigma\tau^{-1}=\mathbf{1}^{k+1}$ because $\#\boldsymbol{1}^n=n$, i.e. identity produces the greatest number of cycles, being the number of all possible fixed points. Finally, as $\sum_{\epsilon^\prime_0\varsigma^\prime_0}\langle\epsilon^\prime_0\varsigma^\prime_0|\rho|
\epsilon^\prime_0\varsigma^\prime_0\rangle=\tr\rho=1$, we get
\begin{align}
    \mbb{E}_\haar[\Upsilon]&\sim d_\mathsf{E}^{k+1}\mathrm{Wg}(\boldsymbol{1}^{k+1})\,\mbb{1}_{\mathsf{SA}_1\mathsf{B}_1\ldots\mathsf{A}_k\mathsf{B}_k}\nonumber\\
    &\sim\f{\mbb{1}_{\mathsf{SA}_1\mathsf{B}_1\ldots\mathsf{A}_k\mathsf{B}_k}}{d_\mathsf{S}^{k+1}},\quad\text{when}\quad d_\mathsf{E}\to\infty,
\end{align}
coinciding with the average over a random interaction process.

\section{A bound on non-Markovianity \& average noisiness}
As we saw previously in Section~\ref{sec: measure of non-Markovianity}, the process tensor leads to a well-defined Markov criterion from which it is possible to construct a family of operationally meaningful measures of non-Markovianity, many of which can be stated simply as \emph{distances} between a process tensor's Choi state $\Upsilon$ and the closest Markovian one $\Upsilon^\markov$. We saw as well that $\Upsilon^\markov$ must take the form of a tensor product of quantum maps $\mc{Z}_{i:i-1}$ connecting adjacent pairs of time steps, $\Upsilon^\markov = \bigotimes_{i=1}^k\mc{Z}_{i:i-1}\otimes\rho_\mathsf{S}^{(0)}$.

We are now interested in studying the non-Markovianity of a generic quantum process. In Ref.~\cite{FigueroaRomero2019almostmarkovian}, in analogy with the seminal studies on equilibration that we introduced in Chapter~\ref{sec:statmech}, we chose the measure of non-Markovianity defined in terms of the trace distance $D$ as
\begin{gather}
    \mc{N}_1:=\min_{\Upsilon^\markov}D\left(\Upsilon,\Upsilon^\markov\right)
\propto\min_{\Upsilon^\markov}\|\Upsilon-\Upsilon^\markov\|_1
\label{def nonMarkov trDistance}
\end{gather}
where $\|X\|_1:=\tr\sqrt{XX^\dg}$ is the trace norm (or Schatten 1-norm as defined by Eq.~\eqref{eq: Schatten p-norm}). In particular, this trace-distance measure is related to relative entropy, briefly discussed in Section~\ref{sec: measure of non-Markovianity}, through the so-called quantum Pinsker inequality, $\mc{R}(\Upsilon\|\Upsilon^\markov)\geq2\,D^2(\Upsilon,\Upsilon^\markov)$. Now, given that $\mc{N}_1$ is a trace-distance measure, we demand that it satisfies $0\leq\mc{N}_1\leq1$. This can be done imposing a normalization factor or directly normalizing the Choi states. We chose the latter,\footnote{ The reasons might be said to be somewhat historical; one reason that motivated this choice is that it is more intuitive to think of properly normalized maximally mixed states as maximally noisy processes, as opposed to having an identity with an incorrect normalization factor.} with the caveat that we employ it consistently only for the purposes of employing this measure of non-Markovianity.

\begin{notation}
From this point onward we normalize the Choi state of the process tensor to unity, which is equivalent to defining these via properly normalized maximally entangled states, i.e. from here on we redefine $\Upsilon$ as
\begin{equation}
    \Upsilon=\tr_\mathsf{E}[\,\mc{U}_k\mc{S}_k\,\mc{U}_{k-1}\mc{S}_{k-1}\cdots\mc{U}_1\mc{S}_1\mc{U}_0(\rho\otimes\Psi^{\otimes\,k})],\quad\text{where}\quad\mathsf{\Psi}=\f{1}{d_\mathsf{S}}\tilde{\Psi},
\end{equation}
where $\mathsf{\Psi}$ are now rightful maximally entangled states, so that now $\Upsilon$ satisfies $\tr[\Upsilon]=1$. This implies that now we explicitly have
\begin{equation}
    \mc{N}_1:=\f{1}{2}\min_{\Upsilon^\markov}\|\Upsilon-\Upsilon^\markov\|_1.
\end{equation}
\end{notation}

Now, we should point out that the choice of this measure is historical as well, as the trace distance is rather a distinguishability measure for quantum states rather than for quantum maps, and a distinguishability measure on Choi states would be directly a measure on these states rather than an operationally meaningful one on the corresponding processes. As we will see throughout the next chapter, this is not an issue in itself as we can tightly bound the measure $\mc{N}_1$ with an operationally relevant measure of non-Markovianity known as the diamond distance, and the result we obtained in Ref.~\cite{FigueroaRomero2019almostmarkovian} is only changed by a multiplicative constant in a minor way, ultimately not changing its consequences. In fact, we will see that we can very often relate families of non-Markovianity measures, such as those given by Schatten norms, by at most a multiplicative factor. However, the results in this chapter and the numerical calculations were originally obtained with $\mc{N}_1$. We will thus, for consistency and for simplicity's sake, present the results as originally derived with this measure in this thesis.\footnote{ See Eq.~\eqref{eq: non-Markov concentration Haar} for the main result in this Chapter in terms of diamond norm.}

To begin with, given the difficulty in computing and minimizing the Markovian Choi state, we may upper bound the distance $\mc{N}_1$ by a trace distance with respect to the maximally mixed state, which as we have seen would correspond to the noisiest Markovian process possible,
\begin{gather}
    \mc{N}_1\leq{D}\left(\Upsilon,\f{\mbb1}{d_\mathsf{S}^{2k+1}}\right),
    \label{TrDist 1st bound}
\end{gather}
where the identity acting on \gls{syst} together with the $k+1$ ancillas is implied. We may further bound this by considering the following cases separately.

\textbf{\textsf{1. Case}} $d_\mathsf{E}<d_\mathsf{S}^{2k+1}$: We notice that $\mathrm{rank}(\Upsilon)\leq{d_\mathsf{E}}$.\footnote{ One may see this by looking at the pure state ${|\Phi\rangle\!\langle\Phi|\equiv\mc{U}(\Theta\otimes\Psi^{\otimes{k}})\,\mc{U}^\dg}$ with Schmidt decomposition ${|\Phi\rangle=\sum_{i=1}^n\sqrt{\lambda_i}|e_is_i\rangle}$ where $n=\min\left(d_\mathsf{E},d_\mathsf{S}^{2k+1}\right)$.} Letting $\gamma$ be the diagonal matrix of up to $d_\mathsf{E}$ non-vanishing eigenvalues $\lambda_{\gamma_i}$ of the Choi state, we may write
\begin{align}
    \|\Upsilon-\f{\mbb1}{d_\mathsf{S}^{2k+1}}\|_1&=\sum_{i=1}^{d_\mathsf{E}}\left|\lambda_{\gamma_i}-\f{1}{d_\mathsf{S}^{2k+1}}\right|+\sum_{j=d_\mathsf{E}+1}^{d_\mathsf{S}^{2k+1}}\left|-\f{1}{d_\mathsf{S}^{2k+1}}\right|\nonumber\\
    &=\|\gamma-\f{\mbb1_\mathsf{E}}{d_\mathsf{S}^{2k+1}}\|_1+1-\f{d_\mathsf{E}}{d_\mathsf{S}^{2k+1}},
\end{align}
where $|\cdot|$ denotes the standard absolute value, so using the inequality ${\|X\|_1\leq\sqrt{\mathrm{dim}(X)}\|X\|_2}$ for a square matrix $X$, where $\|X\|_2=\sqrt{\tr(XX^\dg)}$ is the Schatten 2-norm,
\begin{align}
    \|\Upsilon-\f{\mbb1}{d_\mathsf{S}^{2k+1}}\|_1 & \leq \sqrt{d_\mathsf{E}}\|\gamma-\f{\mbb1_\mathsf{E}}{d_\mathsf{S}^{2k+1}}\|_2+1-\f{d_\mathsf{E}}{d_\mathsf{S}^{2k+1}} \nonumber \\
    &=\sqrt{d_\mathsf{E}\tr[\Upsilon^2]+\f{d^2_\mathsf{E}}{d_\mathsf{S}^{4k+2}}-\f{2 d_\mathsf{E}} {d_\mathsf{S}^{2k+1}}} +1-\f{d_\mathsf{E}}{d_\mathsf{S}^{2k+1}},
\end{align}
Furthermore, applying Jensen's inequality for the square-root, $\mbb{E}[\sqrt{X}]\leq\sqrt{\mbb{E}[X]}$, for any $\mbb{E}_\haar$ or $\mbb{E}_\haarrand$, we have
\begin{gather}
    \mbb{E}\left[\mc{N}_1\right]\leq\f{1}{2}\left(\sqrt{d_\mathsf{E}\mbb{E}[\tr(\Upsilon^2)]+\f{d_\mathsf{E}^2}{d_\mathsf{S}^{4k+2}}-\f{2d_\mathsf{E}}{d_\mathsf{S}^{2k+1}}}+1-\f{d_\mathsf{E}}{d_\mathsf{S}^{2k+1}}\right).
    \label{up bound2}
\end{gather}

\textbf{\textsf{2. Case}} $d_\mathsf{E}\geq{d}_\mathsf{S}^{2k+1}$: This case is a small subsystem limit for most $k$. Directly applying ${\|X\|_1\leq\sqrt{\mathrm{dim}(X)}\|X\|_2}$ as before,
\begin{gather}
    \|\Upsilon-\f{\mbb1}{d_\mathsf{S}^{2k+1}}\|_1\leq\sqrt{d_\mathsf{S}^{2k+1}}\|\Upsilon-\f{\mbb1}{d_\mathsf{S}^{2k+1}}\|_2 = \sqrt{d_\mathsf{S}^{2k+1}\tr[\Upsilon^2] -1},
\end{gather}
Similarly, taking the average over evolution, by means of Jensen's inequality,
\begin{equation}
    \mbb{E}\left[\mc{N}_1\right]\leq\f{1}{2}\sqrt{d_\mathsf{S}^{2k+1}\mbb{E}\left[\tr(\Upsilon^2)\right]-1}.
    \label{up bound1}
\end{equation}

This means that we can construct a piecewise function
\begin{gather}
    \mc{B}_k(d_\mathsf{E},d_\mathsf{S})\equiv
    \begin{cases}
    \displaystyle{\f{\sqrt{d_\mathsf{E}\,\mbb{E}[\tr(\Upsilon^2)]-x}+y}{2}}
    &\text{if}\quad d_\mathsf{E}<d_\mathsf{S}^{2k+1}
    \\[0.1in]
    \displaystyle{\f{\sqrt{d_\mathsf{S}^{2k+1}\mbb{E}[\tr(\Upsilon^2)]-1}}{2}}
    &\text{if}\quad d_\mathsf{E}\geq{d}_\mathsf{S}^{2k+1}
    \end{cases}\quad,
    \label{eq: bound on nM}
\end{gather}
with $x=d_\mathsf{E}d_\mathsf{S}^{-(2k+1)}\left(1+y\right)$ and $y=1-d_\mathsf{E}d_\mathsf{S}^{-(2k+1)}$, which provides an upper bound on the average non-Markovianity, i.e.
\begin{equation}
    \mbb{E}[\mc{N}_1]\leq\mc{B}_k.
\end{equation}

So similar to Section~\ref{sec: state typicality}, we now need to compute the average purity of $\Upsilon$.

The purity is a quantifier of the mixedness, or uniformity of eigenvalues of a positive operator, and as we saw as well in Section~\ref{sec: state typicality}, when computed on reduced states of bipartite systems it can serve as a quantifier of entanglement. In Appendix~\ref{appendix - average purity random} and in Appendix~\ref{appendix - average purity constant}, we reproduce the computation of the expected purity of the Choi state of a process tensor $\mathbb{E}[\Upsilon^2]$ in the random and the constant interaction pictures, respectively, which we derived originally in Ref.~\cite{FigueroaRomero2019almostmarkovian}. This average purity can be directly translated as a quantifier for \emph{noisiness} of the quantum process itself and can serve as well to measure the entanglement between system and environment. Bear in mind again that here the process tensors are normalized to unity. The average purities take the form
\begin{gather}
    \mbb{E}_\haarrand[\tr\left(\Upsilon^2\right)]=\f{d_\mathsf{E}^2-1}{d_\mathsf{E}(d_\mathsf{SE}+1)}\left(\f{d_\mathsf{E}^2-1}{d_\mathsf{SE}^2-1}\right)^k+\f{1}{d_\mathsf{E}},
    \label{average purity ergodic}
\end{gather}
for the random interaction picture, where we have assumed that the fiducial state $\rho$ is pure, and
\begin{gather}
    \mbb{E}_\haar[\tr\left(\Upsilon^2\right)]=d_\mathsf{S}^{-2k}\sum_{\sigma,\tau\in\mathfrak{G}_{2k+2}}\Wg(\tau\sigma^{-1})\,\rho_{_{\tau(0);k+1}}\rho_{_{\tau(k+1);0}}\,\Delta_{k,\sigma,\tau}^{(d_\mathsf{E},d_\mathsf{S})},
    \label{average time independent purity}
\end{gather}
in the constant interaction case, where $\rho_{\tau(\cdot);\ell}$ uses the same notation as in Eq.~\eqref{phi notation Ui=Uj} and is shown explicitly in Eq.~\eqref{appendix: purity const rho} on Appendix~\ref{appendix - average purity constant}, and $\Delta$ is a product, scaling with $k$, of monomials in $d_\mathsf{E}$ and $d_\mathsf{S}$ depending on permutations $\sigma$ and $\tau$, shown in full in Eq.~\eqref{TIpuritydef} also within Appendix~\ref{appendix - average purity constant}.

\subsection{Limiting cases}
In both the constant and random interaction cases, the bound $\mc{B}_k$ on the non-Markovianity $\mc{N}_1$ is a well-behaved rational function of $d_\mathsf{E}$, $d_\mathsf{S}$ and $k$. In the constant interaction case, Eq.~\eqref{average time independent purity} takes a non-trivial form mainly because of the $\Wg$ function (which is intrinsic to the Haar-unitary averaging in the constant interaction picture). However, due to the results in Ref.~\cite{GuMoments,Collins_2006}, we can still study analytically the behaviour of the bound $\mc{B}_k$ for both cases in the following limits.

\subsubsection{Small subsystem limit} For both interaction pictures,
\begin{gather}
    \lim_{d_\mathsf{E}\to\infty}\mbb{E}[\tr(\Upsilon^{2})]=\f{1}{d_\mathsf{S}^{2k+1}},
\end{gather}
which corresponds to the purity of the maximally mixed state. Note that the averaging occurs after computing the purity independently of which case is considered, i.e. this does not correspond to the purity of the average process but the average purity of a process.

This implies that on average, in the small subsystem limit, a process will be indistinguishable from the maximally noisy (and hence Markovian) one,
\begin{gather}
    \lim_{d_\mathsf{E}\to\infty}\mbb{E}[\mc{N}_1]=\lim_{d_\mathsf{E}\to\infty}\mc{B}_k(d_\mathsf{E},d_\mathsf{S})=0,
\end{gather}
and from Eq.~\eqref{average purity ergodic} we know it does so at a rate $\mc{O}(1/d_\mathsf{E})$ in the random interaction case.

\subsubsection{Long time limit}
The other interesting limiting case is the one where the \gls{syst-env} dimension is fixed, but the number of time steps is taken to be very large. The resulting process encodes all high order correlation functions between observables over a long period of time. Again for both cases, the expected purity in this limit goes as
\begin{gather}
    \lim_{k\to\infty}\mbb{E}[\tr(\Upsilon^{2})]=\f{1}{d_\mathsf{E}},
\end{gather}
which corresponds to the maximally mixed purity of the environment. This implies that the Choi state of the full \gls{syst-env} unitary process is maximally entangled between \gls{syst} and \gls{env}. In this limit, we get correspondingly
\begin{gather}
    \lim_{k\to\infty}\mbb{E}[\mc{N}_1]\leq\lim_{k\to\infty}\mc{B}_k(d_\mathsf{E},d_\mathsf{S})=1,
\end{gather}
meaning only that we cannot say much about non-Markovianity in this limit, i.e. a typical process in this limit could be highly non-Markovian. Indeed, we expect this to be the case, since the finite-dimensional \gls{syst-env} space will have a finite recurrence time.

\subsubsection{Average state purity limit}
As expected as well, our result generalizes the well-known result for quantum states, i.e., when $k=0$. In this case we recover the average purity
\begin{equation}
    \mbb{E}[\tr(\rho_S^{2})]=\f{d_\mathsf{E}+d_\mathsf{S}}{d_\mathsf{SE}+1},
\end{equation}
where $\rho_S\equiv\Upsilon_{0:0}=\tr_E(U\rho\,U^\dg)$, as we re-derived in Eq.~\eqref{eq: average purity reduced state}.

\section{Concentration around Markovian processes}
From the discussion in Section~\ref{sec: Random states and Haar} it seems clear that we have almost all the ingredients to turn the statements above into a concentration of measure result. We proved the following.

\begin{theorem}[Concentration of measure around Markovian processes~\cite{FigueroaRomero2019almostmarkovian}]
\label{Thm: typicality processes}
Let $\Upsilon\sim\muhaar$ be a $k$-step quantum process undergone by a $d_\mathsf{S}$-dimensional subsystem of a larger $d_\mathsf{SE}$-dimensional composite, sampled at random according to the Haar measure, then  for any $\delta>0$,
\begin{gather}
    \mbb{P}_\haar[\,\mc{N}_1\geq\mc{B}_k(d_\mathsf{E},d_\mathsf{S})+\delta]\leq\mathrm{e}^{-\mscr{C}(d_\mathsf{E},d_\mathsf{S})\delta^2},
    \label{eq: main result}
\end{gather}
where $\mscr{C}(d_\mathsf{E},d_\mathsf{S})=c \, d_\mathsf{SE}\left(\f{d_\mathsf{S}-1}{d_\mathsf{S}^{k+1}-1}\right)^2$ with $c=1/4$ for a constant interaction process and $c=(k+1)/4$ for a random interaction process. The function $\mc{B}_k$ is an upper bound on the expected non-Markovianity $\mbb{E}[\mc{N}_1]$, given in Eq.~\eqref{eq: bound on nM}, whose details depend on the way in which processes are sampled.
\end{theorem}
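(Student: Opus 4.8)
The plan is to prove this via the machinery of concentration of measure already set up in the excerpt, namely Levy's lemma (Theorem~\ref{def: Levy's Lemma}). The strategy has three ingredients, which I would carry out in order. First, I would identify the function whose concentration we want: the map that sends the tuple of Haar-distributed unitaries (one in the constant-interaction case, $k+1$ in the random-interaction case) to the non-Markovianity $\mc{N}_1$ of the resulting process tensor $\Upsilon$. Since the unitary group $\mbb{U}(d)$ and the product $\mbb{U}(d)^{\times(k+1)}$ can be realised as (quotients of) high-dimensional hyperspheres via the standard embedding discussed in Section~\ref{sec: Random states and Haar}, Levy's lemma applies once we control the Lipschitz constant. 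Second, I would bound that Lipschitz constant; this is where the prefactor $\mscr{C}(d_\mathsf{E},d_\mathsf{S})$ and in particular the factor $\left(\tfrac{d_\mathsf{S}-1}{d_\mathsf{S}^{k+1}-1}\right)^2$ and the $c=1/4$ versus $c=(k+1)/4$ distinction will come from. Third, I would feed in the already-proven bound $\mbb{E}[\mc{N}_1]\leq\mc{B}_k(d_\mathsf{E},d_\mathsf{S})$ from Eq.~\eqref{eq: bound on nM} to replace the true mean by the explicit function $\mc{B}_k$.

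In more detail: for the Lipschitz estimate I would first note that $\mc{N}_1$ is itself $1$-Lipschitz (with respect to the trace distance) as a function of the Choi state $\Upsilon$, being a minimised trace distance and hence an infimum of $1$-Lipschitz functions, combined with the triangle inequality exactly as in Eq.~\eqref{eq: Lipschitz tr dist}. The remaining task is to bound how much $\Upsilon$, viewed through the trace distance, can change when one of the unitaries $U_i$ is perturbed. Writing $\Upsilon$ as a partial trace of $\mc{U}_k\mc{S}_k\cdots\mc{U}_1\mc{S}_1\mc{U}_0(\rho\otimes\Psi^{\otimes k})$ (with $\Psi$ now normalised), I would use that the partial trace is trace-distance non-increasing, that each $\mc{S}_i$ is an isometry, and that replacing a unitary channel $\mc{U}_i$ by $\mc{U}_i'$ changes the output state in trace distance by at most (a constant times) $\|U_i-U_i'\|$ in operator norm; a telescoping argument over the $k+1$ unitaries then yields a Lipschitz constant proportional to $\sqrt{k+1}$ in the random-interaction case and constant in the constant-interaction case — this is the origin of the $c=(k+1)/4$ versus $c=1/4$ split, after the factor of two from $\mc{N}_1=\tfrac12\min\|\cdot\|_1$ is accounted for. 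The subtler point is the appearance of $\tfrac{d_\mathsf{S}-1}{d_\mathsf{S}^{k+1}-1}$: because $\Upsilon$ lives on a $d_\mathsf{S}^{2k+1}$-dimensional space but only $d_\mathsf{S}$-dimensional inputs are swapped in at each step, a perturbation of a single $U_i$ can only disturb a $d_\mathsf{S}$-dimensional ``slot'' of the global pure state, so the effective perturbation of $\Upsilon$ is damped by roughly this ratio; I would make this precise by tracking which factors of the normalised $\Psi^{\otimes k}$ are touched.

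Having pinned down the Lipschitz constant $\mscr{L}$, I would apply Theorem~\ref{def: Levy's Lemma} to $f(\vec U)=\mc{N}_1$ on the appropriate hypersphere of dimension $\sim d_\mathsf{SE}$ (or $(k+1)d_\mathsf{SE}$ in the random case), giving $\mbb{P}_\haar[\mc{N}_1\geq\mbb{E}[\mc{N}_1]+\delta]\leq 2\exp(-\tilde c\, d_\mathsf{SE}\,\delta^2/\mscr{L}^2)$, and then substitute $\mbb{E}[\mc{N}_1]\leq\mc{B}_k$ using the monotonicity of the tail in the mean. Absorbing the dimensional constants from $9\pi^3$ and the embedding, and the factor of $2$, into the constant $c$ produces exactly the stated form with $\mscr{C}(d_\mathsf{E},d_\mathsf{S})=c\,d_\mathsf{SE}\left(\tfrac{d_\mathsf{S}-1}{d_\mathsf{S}^{k+1}-1}\right)^2$. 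I expect the main obstacle to be the Lipschitz estimate — specifically, making rigorous the claim that a single-unitary perturbation only affects a $d_\mathsf{S}$-dimensional sector of the Choi state, so that the damping factor $\tfrac{d_\mathsf{S}-1}{d_\mathsf{S}^{k+1}-1}$ emerges cleanly rather than being merely heuristic; a careful treatment via the Stinespring/purified picture of $\Upsilon$ and the Schmidt structure across the cut between the $d_\mathsf{S}$-dimensional ``active'' space and the rest should settle it, but that is where the real work lies.
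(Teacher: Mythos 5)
There is a genuine gap, and it sits exactly where you predicted the real work lies: in the Lipschitz/concentration step. First, Levy's lemma (Theorem~\ref{def: Levy's Lemma}) as stated applies to functions on the hypersphere; the unitary group $\mbb{U}(d_\mathsf{SE})$ (and its $(k+1)$-fold product) is \emph{not} a hypersphere or a quotient of one to which that lemma transfers directly, so "the standard embedding discussed in Section~\ref{sec: Random states and Haar}" does not give you the concentration inequality you invoke. What the paper actually uses is a concentration function for Riemannian manifolds with a Ricci curvature lower bound (a Gromov--Bishop-type comparison): $\mbb{U}(d)$ is treated as diffeomorphic to $\mbb{SU}(d)\times\Sigma^1(1)$ with $\mathrm{Ric}=d/2$, compared to a sphere of matching curvature, giving $\alpha(\delta/\mscr{L})\leq\exp(-\delta^2 d_\mathsf{SE}/4\mscr{L}^2)$, and analogously for the product manifold $\mbb{U}(d)^{\times(k+1)}$ with $\mathrm{Ric}=(k+1)d/2$. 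This is also where your attribution of the $c=(k+1)/4$ versus $c=1/4$ split goes wrong: in the paper the Lipschitz constants of the two cases \emph{coincide}, $\mscr{L}\leq\f{d_\mathsf{S}^{k+1}-1}{d_\mathsf{S}-1}$ (for the random case one bounds $\sum_\ell d_\mathsf{S}^\ell\|U_\ell-V_\ell\|_2$ by the same constant times the $2$-product metric), and the extra factor $(k+1)$ comes from the larger dimension/curvature of the product manifold, not from $\mscr{L}$. Indeed, if $\mscr{L}$ really scaled as $\sqrt{k+1}$ in the random case, the exponent $\propto d_\mathsf{SE}/\mscr{L}^2$ would be \emph{weakened} by $(k+1)$, the opposite of the claimed $c=(k+1)/4$.

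Second, your heuristic for the factor $\f{d_\mathsf{S}-1}{d_\mathsf{S}^{k+1}-1}$ is backwards and internally inconsistent with the constant you want to recover. Perturbing the unitary at step $i$ is \emph{amplified}, not damped: using $\|A\sigma A^\dg-B\sigma B^\dg\|_1\leq2\|A-B\|_2$ and summing over the $d_\mathsf{S}^2$ ancilla indices of the normalised $\Psi$ factors, the step-$i$ contribution to $\|\Upsilon(U)-\Upsilon(V)\|_1$ is of order $d_\mathsf{S}^{\,i}\|U_i-V_i\|_2$, and telescoping over $i=0,\dots,k$ gives the geometric sum $\mscr{L}\leq\sum_\ell d_\mathsf{S}^\ell=\f{d_\mathsf{S}^{k+1}-1}{d_\mathsf{S}-1}$, a \emph{large} Lipschitz constant. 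The small ratio $\left(\f{d_\mathsf{S}-1}{d_\mathsf{S}^{k+1}-1}\right)^2$ in $\mscr{C}$ then appears as $1/\mscr{L}^2$ in the exponent, i.e.\ as a \emph{weakening} of the concentration. If instead, as you argue, the perturbation of $\Upsilon$ were damped by $\f{d_\mathsf{S}-1}{d_\mathsf{S}^{k+1}-1}$ (so $\mscr{L}\lesssim1$), you would obtain $\mscr{C}\propto d_\mathsf{SE}\left(\f{d_\mathsf{S}^{k+1}-1}{d_\mathsf{S}-1}\right)^2$ -- a much stronger, and incorrect, bound that contradicts the statement you are trying to prove. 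So the skeleton (concentration of a Lipschitz function of the unitaries, then $\mbb{E}[\mc{N}_1]\leq\mc{B}_k$ to shift the centre) is the right one and matches the paper, but to close the argument you need (i) the unitary-group concentration inequality in place of Levy's lemma, and (ii) the amplification (not damping) telescoping bound for $\mscr{L}$, with the $(k+1)$ improvement coming from the product-manifold geometry.
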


\begin{proof}We have previously obtained the function $\mc{B}_k$ completely, so we now derive the Lipschitz constants $\mscr{L}$ for both the constant and the random interaction cases, as well as the concentration rate.

\subsubsection{Lipschitz constant (constant interaction)}
Let $\tau:\mbb{U}(d_\mathsf{SE})\to\mathbb{R}$ defined by $\tau(U)=D\left(\Upsilon(U),\f{\mbb1}{d_\mathsf{S}^{2k+1}}\right)=\f{1}{2}\|\Upsilon(U)-\f{\mbb1}{d_\mathsf{S}^{2k+1}}\|_1$, where we explicitly mean $\Upsilon(U)=\tr_E[\mathsf{U}_{k:0}\Theta\mathsf{U}_{k:0}^\dg]$ where $\mathsf{U}_{k:0}=U_k\mc{S}_k\cdots{U}_1\mc{S}_1U_0$ and with $\Theta=\rho\otimes\mathsf{\Psi}^{\otimes{k}}$ a pure state, then for all $V\in\mbb{U}(d_\mathsf{SE})$,
\begin{align}
    |\tau({U})-\tau({V})|&=\left|D\left(\Upsilon(U),\f{\mbb1}{d_\mathsf{S}^{2k+1}}\right)-D\left(\Upsilon(V),\f{\mbb1}{d_\mathsf{S}^{2k+1}}\right)\right|\nonumber\\
    &\leq{D}(\Upsilon(U),\Upsilon(V))\nonumber\\
    &\leq{D}(\mathsf{U}_{k:0}\Theta\mathsf{U}_{k:0}^\dg,\mathsf{V}_{k:0}\Theta\mathsf{V}_{k:0}^\dg),
\end{align}
where here similarly $\mathsf{V}_{k:0}=V_k\mc{S}_k\cdots{V}_1\mc{S}_1V_0$.

Consider now different labelings for the unitary at each time step so that we can easily track each one. By the triangle inequality,
\begin{align}
    \|\mathsf{U}_{k:0}\Theta\mathsf{U}_{k:0}^\dg-\mathsf{V}_{k:0}\Theta\mathsf{V}_{k:0}^\dg\|_1&\leq\|\mathsf{U}_{k:0}\Theta\mathsf{U}_{k:0}^\dg-U_k\mc{S}_k\mathsf{V}_{k-1:0}\Theta\mathsf{V}_{k-1:0}^\dg\mc{S}_k^\dg{U}_k^\dg\|_1\nonumber\\
    &\qquad\qquad\qquad\qquad+\|U_k\mc{S}_k\mathsf{V}_{k-1:0}\Theta\mathsf{V}_{k-1:0}^\dg\mc{S}_k^\dg{U}_k^\dg-\mathsf{V}_{k:0}\Theta\mathsf{V}_{k:0}^\dg\|_1\nonumber\\
    &=\|\mathsf{U}_{k-1:0}\Theta\mathsf{U}_{k-1:0}^\dg-\mathsf{V}_{k-1:0}\Theta\mathsf{V}_{k-1:0}^\dg\|_1\nonumber\\
    &\qquad\qquad\qquad\qquad+\|U_k\mc{S}_k\mathsf{V}_{k-1:0}\Theta\mathsf{V}_{k-1:0}^\dg\mc{S}_k^\dg{U}_k^\dg-\mathsf{V}_{k:0}\Theta\mathsf{V}_{k:0}^\dg\|_1,
    \label{LipIneq2}
\end{align}
where it is clear by context that $U$ stands for $U\otimes\mbb1$.

Now we use Lemma 1 of Ref.~\cite{qspeedLipsch}, which states that
\begin{gather}\|A\sigma{A}^\dg-B\sigma{B}^\dg\|_1\leq2\|A-B\|_2\end{gather}
for two unitaries $A$, $B$ and any $\sigma$. For the simplest case with $k=1$, this gives
\begin{align}
    &\|U_1\mc{S}_1V_0\Theta{V}_0^\dg\mc{S}_1^\dg{U}_1^\dg-V_1\mc{S}_1V_0\Theta{V}_0^\dg\mc{S}_1^\dg{V}_1^\dg\|_1\nonumber\\
    &\qquad\qquad=\|\f{\sum}{d_\mathsf{S}}(U_1\FS_{\alpha\beta}V_0\rho{V}_0^\dg\FS_{\gamma\delta}U_1^\dg-{V}_1\FS_{\alpha\beta}V_0\rho{V}_0^\dg\FS_{\gamma\delta}V_1^\dg)\otimes|\beta\alpha\rangle\!\langle\gamma\delta|\|_1\nonumber\\
    &\leq\f{\sum_{\alpha\beta}}{d_\mathsf{S}}\tr\bigg[\sum_{\gamma\delta}(U_1\FS_{\alpha\beta}V_0\rho{V}_0^\dg\FS_{\gamma\delta}U_1^\dg-{V}_1\FS_{\alpha\beta}V_0\rho{V}_0^\dg\FS_{\gamma\delta}V_1^\dg)\nonumber\\
    &\qquad\qquad\qquad\qquad\qquad\qquad\qquad\qquad(U_1\FS_{\beta\alpha}V_0\rho{V}_0^\dg\FS_{\delta\gamma}U_1^\dg-{V}_1\FS_{\beta\alpha}V_0\rho{V}_0^\dg\FS_{\delta\gamma}V_1^\dg)\bigg]^{1/2}\nonumber\\
    &\leq\f{\sum_{\alpha\beta}}{d_\mathsf{S}}\|U_1\FS_{\alpha\beta}V_0\rho{V}_0^\dg\left(\sum\FS_{\gamma\delta}\right)U_1^\dg-{V}_1\FS_{\alpha\beta}V_0\rho{V}_0^\dg\left(\sum\FS_{\gamma^\prime\delta^\prime}\right)V_1^\dg\|_1\nonumber\\
    &\leq\f{2}{d_\mathsf{S}}\sum_{\alpha,\beta=1}^{d_\mathsf{S}}\|U_1-V_1\|_2=2d_\mathsf{S}\|U_1-V_1\|_2,
\end{align}
and doing similarly, for the $i$-th step,
\begin{align}
    \|U_i\mc{S}_i\mathsf{V}_{i-1:0}\Theta\mathsf{V}_{i-1:0}^\dg\mc{S}_i^\dg{U}_i^\dg-\mathsf{V}_{i:0}\Theta\mathsf{V}_{i:0}^\dg\|_1
    &\leq2d_\mathsf{S}^i\|U_i-V_i\|.
\end{align}

Thus bounding iteratively expression~\eqref{LipIneq2}, it follows that
\begin{align}
    \|\mathsf{U}_{k:0}\Theta\mathsf{U}_{k:0}^\dg-\mathsf{V}_{k:0}\Theta\mathsf{V}_{k:0}^\dg\|_1&\leq2\sum_{\ell=0}^kd_\mathsf{S}^\ell\|U_\ell-V_\ell\|_2,
\end{align}
finally giving
\begin{align}
    |\tau(U)-\tau(V)|&\leq\left(\f{d_\mathsf{S}^{k+1}-1}{d_\mathsf{S}-1}\right)\|U-V\|_2.
\end{align}

\subsubsection{Lipschitz constant (random interaction)}
On the other hand, for the ergodic case, let $\mbb{U}^{\times(k+1)}(d)=\underbrace{\mbb{U}(d)\times\cdots\times\mbb{U}(d)}_{k+1\,\text{times}}$ be the $k+1$ Cartesian product space of $d$-dimensional unitary groups, then we define $\zeta:\mbb{U}^{\times(k+1)}(d_\mathsf{SE})\to\mathbb{R}$ by $\zeta(\vec{U})=D\left(\Upsilon(\vec{U}),\f{\mbb1}{d_\mathsf{S}^{2k+1}}\right)$ where now $\vec{U}=(U_0,\cdots,U_k)$. Similarly as before, we now have
\begin{gather}
    \left|\zeta(\vec{U})-\zeta(\vec{V})\right|\leq\sum_{\ell=0}^kd_\mathsf{S}^\ell\|U_\ell-V_\ell\|_2,
\end{gather}
and we may let the metric on $\mbb{U}^{\times(k+1)}(d_\mathsf{SE})$ be the 2-product metric $\delta_\mathbb{U}$ defined~\cite{deza2009encyclopedia} by
\begin{gather}
    \delta_\mathbb{U}(\vec{x},\vec{y})=\sqrt{\sum_{\ell=0}^k\|x_\ell-y_\ell\|_2^2},
\end{gather}
which then satisfies
\begin{align}
    \sum_{\ell=0}^kd_\mathsf{S}^\ell\|U_\ell-V_\ell\|_2&\leq\sum_{\ell=0}^kd_\mathsf{S}^\ell\sqrt{\sum_{\ell^\prime=0}^k\|U_{\ell^\prime}-V_{\ell^\prime}\|_2^2}=\left(\f{d_\mathsf{S}^{k+1}-1}{d_\mathsf{S}-1}\right)\delta_\mathbb{U}(\vec{U},\vec{V}),
\end{align}
and thus we conclude that
\begin{gather}
    \left|\zeta(\vec{U})-\zeta(\vec{V})\right|\leq\left(\f{d_\mathsf{S}^{k+1}-1}{d_\mathsf{S}-1}\right)\delta_\mathbb{U}(\vec{U},\vec{V}),
\end{gather}
so the (bound on) Lipschitz constants coincide with 
\begin{equation}
    \mscr{L}\leq\f{d_\mathsf{S}^{k+1}-1}{d_\mathsf{S}-1},  
    \label{eq: N1 Lipschitz}
\end{equation}
which essentially behaves as $\mc{O}\left(d_\mathsf{S}^k\right)$.

\subsubsection{The concentration function (constant interaction)}
We now make use of a result related to the Gromov-Bishop inequality (see e.g. Theorem 7 in Ref.~\cite{Philthy}) stating that if $\mathrm{Ric}(M)\geq\mathrm{Ric}(\Sigma^n(R))=\f{n-1}{R^2}$ for an $n$-dimensional manifold $M$, where $\Sigma^n(R)$ is the $n$-dimensional sphere of radius $R$ and $\mathrm{Ric}(X)$ is the infimum of diagonal elements of the Ricci curvature tensor on $X$, then the respective concentration functions satisfy $\alpha_M(x)\leq\alpha_{\Sigma^n(R)}(x)\leq\exp\left[-\f{x^2(n-1)}{2R^2}\right]$~\cite{Ledoux}.

For the constant interaction case, the corresponding manifold is the group manifold $\mscr{U}$ of $\mbb{U}(d)$ (where here $d=d_\mathsf{SE}$), which is diffeomorphic to $\mbb{SU}(d)\times\Sigma^1(1)$~\cite{Ledoux}, where $\mbb{SU}(d)$ denotes the special unitary group (i.e. with added $\det(U)=1$ for any element) and thus has $\mathrm{Ric}(\mscr{U})=d/2$ and $\dim(\mscr{U})=d^2$. Then it follows that $\mathrm{Ric}(\mscr{U})\geq\mathrm{Ric}(\Sigma^{d^2}(r))$ if $r^2\geq2(d^2-1)/d$ and hence, taking the minimal case,
\begin{gather}
    \alpha_{_\mathscr{U}}(\delta/\mscr{L})\leq\alpha_{\Sigma^{d^2}(r)}(\delta/\mscr{L})\leq\exp\left(-\f{\delta^2d_\mathsf{SE}}{4\mscr{L}^2}\right).
\end{gather}

\subsubsection{The concentration function (random interaction)} For the random interaction case the corresponding manifold is the group manifold $\mathbf{U}$ of the $k+1$ Cartesian product space $\mbb{U}(d)\times\cdots\times\mbb{U}(d)$. In this case $\mathrm{Ric}(\mathbf{U})=(k+1)d/2$ and $\dim(\mathbf{U})=(k+1)d^2$. Then it follows that $\mathrm{Ric}(\mathbf{U})\geq\mathrm{Ric}(\Sigma^{(k+1)d^2}(\mc{R}))$ if $\mc{R}^2\geq2[(k+1)d^2-1]/[d(k+1)]$ and hence, taking the minimal case,
\begin{gather}
    \alpha_{_\mathbf{U}}(\delta/\mscr{L})\leq\alpha_{\Sigma^{(k+1)d_\mathsf{SE}^2}(\mc{R})}(\delta/\mscr{L})\leq\exp\left(-\f{\delta^2(k+1)d_\mathsf{SE}}{4\mscr{L}^2}\right).
\end{gather}
The factor $d_\mathsf{SE}\mscr{L}^{-2}$ behaves as $\mc{O}\left(d_\mathsf{E}d_\mathsf{S}^{-2k+1}\right)$, so both concentration functions will be small whenever $d_\mathsf{E}\gg{d}_\mathsf{S}^{2k}$.
\end{proof}

Theorem~\ref{Thm: typicality processes} assures that the probability for the non-Markovianity $\mc{N}_1$ to exceed a function of $k$, $d_\mathsf{S}$ and $d_\mathsf{E}$, that becomes very small in the large $d_\mathsf{E}$ limit, itself becomes small in that limit. Our result is meaningful when both $\mc{B}_k+\delta$ and $\exp\left(-\mscr{C}\delta^2\right)$ are small; the latter is fulfilled in the \emph{small subsystem} or \emph{large environment} limit, which in our setting means $d_\mathsf{E}\gg{d}_\mathsf{S}^{2k+1}$. We may also state the minimal value for $\delta$ that, assuming the large environment limit, renders both sides small,\footnote{ Similar to Ref.~\cite{Popescu2006} detailed in Section~\ref{sec: state typicality}, here we look for an $x>0$ such that $\delta=d_\mathsf{E}^{-x}$ and $\delta^2d_\mathsf{E}=d_\mathsf{E}^x$.} i.e., such that $\delta^2d_\mathsf{E}\gg1\gg\delta$; this is fulfilled for $\delta=d_\mathsf{E}^{-1/3}$. A geometrical cartoon to illustrate the result is presented in Fig.~\ref{main cartoon}.

Theorem~\ref{Thm: typicality processes} can be said to state that \emph{almost all quantum processes are almost Markovian}.

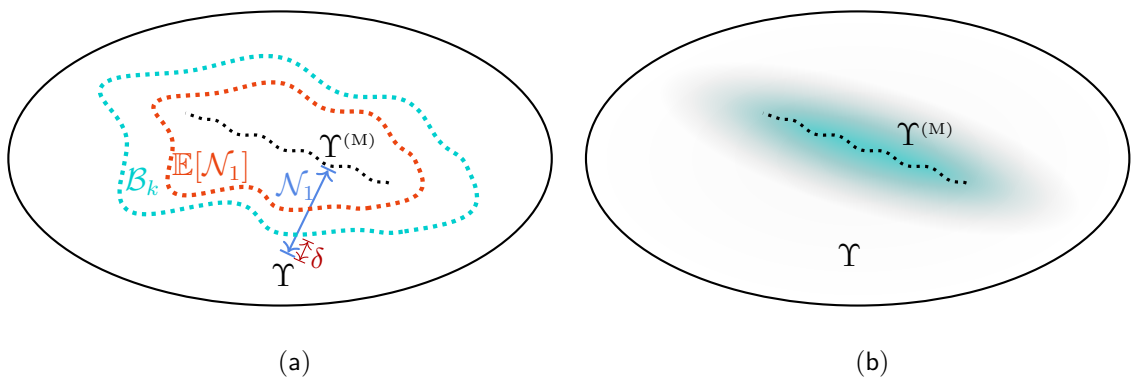
\begin{figure}[t]
\centering
\begin{tikzpicture}[xscale=2, yscale=1.2, every node/.style={scale=1}]
\begin{scope}
    \shade[thick, scale=0.65, inner color=white, outer color=white, draw=black]  (-2.225,0.4) ellipse (2.75cm and 2.5cm);
    \draw[-, black, very thick, dotted, decorate,decoration={snake, amplitude=.4mm, segment length=0.25in}, rotate around={150:(-0.75,0)}] (-0.7725,0) -- (0.7725,0);
    \node[right, above] at (-1,0.175) {\Large$\Upsilon^\markov$};
    \node at (-1.425,-1.025) {\Large$\Upsilon$};
    \draw[C1,|<->|, thick] (-1.4,-0.8) -- (-1.125,0.15);
    \node[C1, right] at (-1.53, -0.05) {\Large$\mc{N}_1$};
    \draw[red!70!black,|<->|] (-1.32,-0.875) -- (-1.26,-0.65);
    \node[right, below, red!70!black] at (-1.2,-0.625) {\large$\delta$};
    \node[C3] at (-2.35,0) {\Large$\mc{B}_k$};
    \node[C2] at (-1.9,0.16) {\Large$\mathbb{E}[\mc{N}_1]$};
    \draw[-, C2, ultra thick, dotted, rotate around={150:(-0.75,0)}] plot[smooth cycle, tension=.7] coordinates {(-1,0) (-0.8,0.3) (-0.6,0.4) (-0.3,0.6) (0,0.5) (0.4,0.8) (0.7,0.3) (1,0) (0.7,-0.3) (0.4,-0.6) (0,-0.5) (-0.3,-0.6) (-0.6,-0.4) (-0.8,-0.3)};
    \draw[-, C3, ultra thick, dotted, rotate around={150:(-0.75,0)},scale=1.4] plot[smooth cycle, tension=.7] coordinates {(-1,0) (-0.8,0.3) (-0.6,0.4) (-0.3,0.6) (0,0.5) (0.4,0.8) (0.7,0.3) (1,0) (0.7,-0.3) (0.4,-0.6) (0,-0.5) (-0.3,-0.6) (-0.6,-0.4) (-0.8,-0.3)};
    \node[scale=1] at (-1.35,-2) {$\mathsf{(a)}$};
\end{scope}
\begin{scope}[shift={(3.8,0)}]
    \shade[thick, scale=0.65, inner color=gray!5!white, outer color=white, draw=black]  (-2.225,0.4) ellipse (2.75cm and 2.5cm);
    \foreach\i in {0,0.01,...,0.8} {
        \fill[opacity=\i*0.03, C3, rotate around={60:(-1,-1)}] (0,0) ellipse ({0.8-\i} and {1.8-1.8*\i});         
      }
    \draw[-, black, very thick, dotted, decorate,decoration={snake, amplitude=.4mm, segment length=0.25in}, rotate around={150:(-0.75,0)}] (-0.7725,0) -- (0.7725,0);
    \node[right, above] at (-1,0.3) {\Large$\Upsilon^\markov$};
    \node at (-1.5,-0.8) {\Large$\Upsilon$};
    \node at (-1.35,-2) {$\mathsf{(b)}$};
\end{scope}
    \end{tikzpicture}
    \caption[Concentration around Markovian processes in large dimensional environments]{\textbf{Concentration around Markovian processes in large dimensional environments:} $\mathsf{(a)}$ A geometric cartoon of our main result in a space of process tensors: the probability of the non-Markovianity $\mc{N}_1$ of deviating from $\mc{B}_k$ by some $\delta>0$ decreases exponentially in $\delta^2$. In $\mathsf{(b)}$, quantum processes $\Upsilon$ on large dimensional environments (such that $d_E\gg{d}_S^{2k+1}$) concentrate around the Markovian ones $\Upsilon^\markov$.}\label{main cartoon}
\end{figure}

Our results imply that it is fundamentally hard to observe non-Markovianity in a typical process and thus go some way to explaining the overwhelming success of Markovian theories.

Specifically, Theorem~\ref{Thm: typicality processes} shows that even when interacting strongly with the wider composite system, a subsystem will typically undergo highly Markovian dynamics when the rest of the system has a sufficiently large dimension, and that the probability to be significantly non-Markovian vanishes with the latter. Our main result formalizes the notion that in the large environment limit a quantum process, taken uniformly at random, will be almost Markovian with very high probability. This corroborates the common understanding of the Born-Markov approximation, discusses in Section~\ref{sec: state typicality}, but, crucially, we make no assumptions about weak coupling between \gls{env} and \gls{syst}. Instead, in the Haar random interactions we consider, every part of the system typically interacts significantly with every part of \gls{env}. This is in contrast to many open systems models, even those with superficially infinite dimensional baths, where the effective dimension of the environment is relatively small~\cite{tamascelli2018}; it can always be bounded by a function of time scales in the system-environment Hamiltonian~\cite{Luchnikov2018}, which could be encoded in a bath spectral density. Our result is also more general than the scenario usually considered, since it accounts for interventions and thus the flow of information between \gls{syst} and \gls{env} across multiple times.

\subsection{Numerical sampling}
To support our results, we sampled process tensors $\Upsilon$ numerically in the random interaction case and computed their corresponding average non-Markovianity $\mbb{E}[\mc{N}_1]$ as a function of environment dimension $d_\mathsf{E}$ for a fixed system dimension $d_\mathsf{S}=2$, obtaining the behaviour shown in Fig.~\ref{numPlot}. The details on how this was done can be seen in Appendix~\ref{appendix - numerics Almost}. 

For constant interaction the numerical results are practically indistinguishable from those in the random case, but as mentioned, the analytical bound $\mc{B}_k$ is much harder to compute exactly. This suggests that either a simpler bound exists or that it might be possible to simplify the one we have obtained.\footnote{ Notice that despite expressions in Eq.~\eqref{average state Ui=Uj} and Eq.~\eqref{average time independent purity} being seemingly complicated, this complexity arises only because of the permutations and the Weingarten function.} As expected, our numerical results fall within the bound $\mc{B}_k(d_\mathsf{E},d_\mathsf{S}=2)$ and they behave similarly; we notice that the bound in general seems to be somewhat loose, and become loosest when $d_\mathsf{E}\simeq d_\mathsf{S}^{2k+1}$, implying that non-Markovianity might be hard to detect even when not strictly in the large environment limit. However, it does saturate rapidly as $d_\mathsf{E}$ increases.

\begin{figure}[t]
\centering
\begin{tikzpicture}
\node[anchor=south west, inner sep=0] (image) at (0,0) {\includegraphics[width=0.75\textwidth]{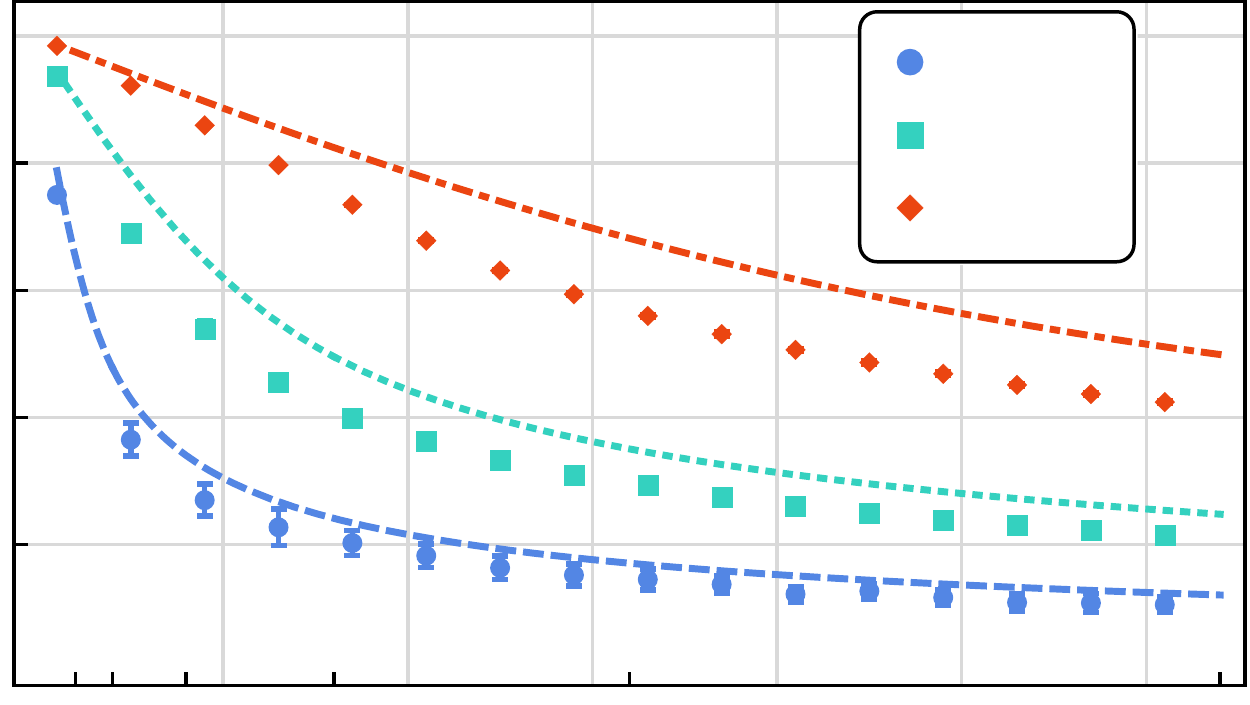}};
\begin{scope}[x={(image.south east)},y={(image.north west)}]
    \node at (0,0) {$0$};
    \node[left] at (0,0.23) {$0.2$};
    \node[left] at (0,0.41) {$0.4$};
    \node[left] at (0,0.59) {$0.6$};
    \node[left] at (0,0.77) {$0.8$};
    \node[left] at (0,0.95) {$1$};
    \node[above] at (0,1) {\Large$\mbb{E}_\haarrand[\mc{N}_1]$};
    \node at (0.81,0.91) {$k=1$};
    \node at (0.81,0.81) {$k=2$};
    \node at (0.81,0.71) {$k=3$};
    \node[below] at (0.5,-0.055) {\Large$d_\mathsf{E}$};
    \node[below] at (0.055,0.01) {$2^2$};
    \node[below] at (0.0955,0.01) {$2^3$};
    \node[below] at (0.152,0.01) {$2^4$};
    \node[below] at (0.27,0.01) {$2^5$};
    \node[below] at (0.51,0.01) {$2^6$};
    \node[below] at (0.975,0.01) {$2^7$};
\end{scope}
\end{tikzpicture}
     \caption[Average non-Markovianity on numerical sampling]{\textbf{Average non-Markovianity $\mbb{E}_\haarrand[\mc{N}_1]$ of a random interaction process for a qubit in the environment dimension $d_E$ at fixed time steps $k$:} Discrete values are shown for numerical averages over $\lfloor40/k\rfloor$ randomly generated process tensors $\Upsilon$ at time steps $k=1,2,3$ and with fixed $d_\mathsf{S}=2$; error bars denote the standard deviation due to sampling error. The lines above each set of points denote the upper bound $\mc{B}_k(d_E,2)$. Process tensors were generated by sampling Haar random unitaries according to Ref.~\cite{Mezzadri} and described in Appendix~\ref{appendix - numerics Almost}}\label{numPlot}
\end{figure}

So far, our results are valid for process tensors constructed with Haar random unitaries at $k$ evenly spaced steps; we are effectively considering a strong interaction between system and environment which rapidly scrambles quantum information in both~\cite{Roberts2017}. As we saw in Section~\ref{sec: equilibration on average}, the mechanism for equilibration is precisely that of dephasing, or effectively, the scrambling of information on the initial state of the system. This suggests that, even when timescales will differ with the type of evolution considered, most physical evolutions fall within our result, with e.g. a weaker behavior in $k$.

This is perhaps the most contentious part of our results from Ref.~\cite{FigueroaRomero2019almostmarkovian} presented in this chapter: as opposed to the typicality of quantum states where the Haar measure only plays a role in the sampling, in the case of quantum processes it has an implication on the class of dynamics that are being considered, and as we are well aware nature is far from random. We will discuss this point in more precise terms in the next chapter, where we directly approached this issue. Despite these features, we will now show that our results still hold at a coarse-grained level, where the intermediate dynamics corresponds to products of Haar random unitaries, which are not themselves Haar random.

\section{Observing non-Markovianity}
The choice of unitaries going into the process tensor in the previous sections (in our case drawn from the Haar measure), dictates a time scale for the system, up to a freely chosen energy scale. However, as we saw in Section~\ref{sec: the process tensor}, illustrated in Fig.~\ref{fig: Process tensor containment}, the process tensor satisfies a containment property, so that we can straightforwardly construct process tensors on a longer, coarse-grained time scale by simply allowing the system to evolve, or equivalently be acted with an identity operation, between some subset of time steps. And in fact, as we also mentioned before, process tensors at all time scales should be related in this way to an underlying process tensor with an infinite number of steps~\cite{Milz_2020}. We thus now refer to this construction through the containment property as \emph{coarse graining}.

To see that our main result directly applies to any coarse-grained process tensor, we again consider the definition of our non-Markovianity measure in Eq.~\eqref{def nonMarkov trDistance}. Consider the coarser grained process tensor
\begin{equation}
    \Upsilon_\textsf{coarse}=\Upsilon_{k:0\setminus\{i\in[0,k-1]\}},    
\end{equation}
where a subset of operations $\{\mc{A}_i\}_{i\in[0,k-1]}$ are replaced by identity operations, i.e. the system is simply left to evolve. Letting $\mc{N}_\textsf{coarse} \equiv \min_{\Upsilon_\textsf{coarse}^\markov} \mc{D}(\Upsilon_\textsf{coarse}, \Upsilon_\textsf{coarse}^\markov)$, we have
\begin{gather}
    \mc{N}_\textsf{coarse}\leq\mc{N}_1,
    \label{eq:coarsegraining}
\end{gather}
since the set of allowed $\Upsilon_\textsf{coarse}^\markov$ strictly contains the allowed $\Upsilon^\markov$ at the finer-grained level. This renders the new process less distinguishable from a Markovian one, i.e., coarse-graining can only make processes more Markovian.

The physical intuition behind this result is that the amount of information which can be encoded in a coarse grained process tensor is strictly less than that in its parent process tensor. In fact, this is a key feature of non-Markovian memory: the memory should decrease under coarse graining. On the other hand, due to the same reasoning, we cannot say anything about finer-grained dynamics once given a process up to a finite number of time-steps. One approach to tackling this issue would be to choose a different sampling procedure which explicitly takes scales into account and can deal with this problem directly incorporating this notion of graining.

Moreover, there is another important limitation for observing non-Markovianity. The operational interpretation of the trace distance, discussed in the previous section, implies that observing non-Markovianity requires applying a measurement that is an eigenprojector operator of $\Upsilon-\Upsilon^\markov$. The optimal measurement will, in general, be entangled across all time steps. In practice, this is hard to achieve and typically one considers a sequence of local measurements $m\in\mathbb{M}$. In general, for an any set of measurements $\mathbb{M}$ we can define a restricted measure of non-Markovianity detectable with that set: $\mc{D}_{\mathbb{M}} (\Upsilon,\Upsilon^\markov)\equiv\max_{m\in\mathbb{M}} \f{1}{2} |\tr[m(\Upsilon-\Upsilon^\markov)]|  \leq \mc{D} (\Upsilon,\Upsilon^\markov)$, which means that the detectable non-Markovianity will be smaller. This is akin to the \emph{eigenstate thermalization hypothesis}~\cite{PhysRevLett.120.150603}, where all eigenstates of a physical Hamiltonian look uniformly distributed with respect to most \emph{physically reasonable} observables. In our setting, this means that looking for non-Markovianity with observables that are local in time --i.e., physically reasonable-- we  find almost no temporal correlations.

The locality constraint, along with monotonicity of non-Markovianity under coarse graining, have further important consequences for a broad class of open systems studies where master equations are employed~\cite{Pollock2018tomographically}. Since master equations usually only account for two-point correlations with local measurements, they will be insensitive to most of the temporal correlations being accounted for by our measure, leading to an even greater likelihood for their descriptions to be Markovian. We will also discuss this point in the next chapter when going beyond the Haar measure.

\section{Conclusions}
The generic form of open quantum dynamics is non-Markovian, but, despite this, it is often very well approximated by simpler Markovian dynamics. How this \emph{memorylessness} emerges is not dissimilar to questions, regarding the emergence of thermodynamic behaviour, which have pervaded quantum mechanics since its conception. We now know that the fundamental postulate of equal a-priori probabilities of statistical mechanics can be traced back to the entanglement between subsystems and their environment~\cite{Popescu2006}. With Theorem~\ref{Thm: typicality processes}, obtained originally in Ref.~\cite{FigueroaRomero2019almostmarkovian}, we have shown that very similarly, if we sample a generic quantum process occurring in a large finite environment at random, it will be almost Markovian with very high probability.

Specifically, we have showed that, even when interacting strongly with the wider composite system, a subsystem will typically undergo highly Markovian dynamics when the rest of the system has a sufficiently large dimension, and that the probability to be significantly non-Markovian vanishes with the latter. Theorem~\ref{Thm: typicality processes} formalizes the notion that in the large environment limit a quantum process, taken uniformly at random, will be almost Markovian with very high probability, in turn also corroborating the common understanding of the Born-Markov approximation described in Section~\ref{sec: open quantum dynamics}, where crucially, however, we make no assumptions about weak coupling between \gls{env} and \gls{syst}. It is important that this is different to the case of many open systems models which consider superficially infinite dimensional baths, i.e. where the effective dimension of the environment is relatively small~\cite{tamascelli2018}, as these can always be bounded by a function of time scales in the system-environment Hamiltonian~\cite{Luchnikov2018} and which could be encoded in a bath spectral density. Most importantly, as have been stressed throughout Chapter~\ref{sec:processes}, our result naturally contains the scenario usually considered, since it accounts for multiple interventions and thus the flow of information between \gls{syst} and \gls{env} across multiple times.

It is also important to highlight that while it may still be possible to observe non-Markovian behaviour at a time scale that is smaller than the fundamental time scale set by the chosen unitaries, Eq.~\eqref{eq:coarsegraining} tells us that any coarse grained process will remain concentrated around the Markovian ones in the large environment limit. Otherwise, for larger and larger systems, one needs an ever increasing number of time steps, corresponding to higher order correlations, in order to increase the probability of witnessing non-Markovianity. However, even in this case, from the discussion in the previous section, we know that the measurement on this large number of times steps will be temporally entangled, which may also be difficult to achieve.

Finally, we highlight the interplay of the typicality of Markovian processes and dynamical equilibration on multiple time-steps. We have drawn an analogy before with the relation between equilibration, thermalization and the typicality of canonical states, and it now seems highly plausible that an analogous relationship holds between multitime equilibration, \emph{Markovianization} and the typicality of Markovian processes. While a fully dynamical characterization of a notion of Markovianization is still to be achieved, in the following chapter we will describe a step towards this notion as an emergence of Markovian processes by addressing the bothersome aspect in Theorem~\ref{Thm: typicality processes} of being somewhat far from physical, that is, we will show that Markovian processes still satisfy a large deviation bound even if we step away from the Haar measure and consider more physically motivated evolutions.
    \chapter{Markovianization by Design}
\label{sec: Markovianization by design}
\setlength{\epigraphwidth}{0.65\textwidth}
\epigraph{\emph{Matter is matter, neither noble nor vile, infinitely transformable, and its proximate origin is of no importance whatsoever.}}{-- Primo Levi (\emph{The Periodic Table})}

An important limitation of the typicality of Markovian processes in large environments by means of Theorem~\ref{Thm: typicality processes}, is that it encompasses too wide a class of \gls{syst-env} interactions, many of which can be deemed highly unphysical. Implementing a Haar random unitary requires an exponential number two-body interactions and random bits~\cite{Knill_1995}, so Haar random dynamics cannot be obtained efficiently in a physical setting. This seems to be at odds with the applicability of the Born-Markov assumption on a wide variety of physical models~\cite{carmichael1993open, blanchard2000decoherence, schlosshauer2007decoherence, alicki2007quantum}. Forgetfulness is indeed a common feature of the world around us, and one that is crucial for doing science: without forgetfulness, repeatability would be impossible. From a somewhat philosophical standpoint, consider that if any given atom remembered its own past, then it would be unique and there would be no sense in classifying atoms and molecules.

So can we say physically relevant models satisfy a concentration of measure with respect to Markovianity?

In addressing this problem, in Ref.~\cite{FigueroaRomero2020makovianization} we identified a class of isolated physical processes which approximately Markovianize. We show that, similar to the way in which quantum states thermalize, quantum processes can \emph{Markovianize} in the sense that they can converge to a class of typical processes, satisfying a meaningful large deviations principle whenever they are undergone within a large environment and under complex enough---but not necessarily fully random---dynamics. To accomplish this, we employ large deviation bounds\footnote{ I.e. bounds on probabilities for rare events.} for so-called \emph{approximate unitary designs} derived in Ref.~\cite{Low_2009}, and apply them to the process tensor formalism.

The concept of a unitary design provides a finite approximation to uniform Haar randomness and it refers to a set of unitaries that reproduce a finite number of moments of the unitary group with the Haar measure. In a physical scenario, unitary designs reproducing an ever-increasing number of moments of the Haar measure can be seen to arise naturally from seemingly simple situations as the evolution time increases~\cite{Roberts2017, Winter_HamDesign, 2designsXZ}. One can further allow some small error in such reproduction of the Haar moments, thus rendering the respective unitary designs approximate. To further establish these ideas together with the main claim in our result, we make use of an efficient construction of an approximate unitary design with an $n$-qubit quantum circuit using two-qubit interactions only, showing how seemingly simple, physically motivated systems, can speedily become forgetful.

Given the ever-increasing interest and relevance in determining the breakdown of the Markovian approximation in modern experiments~\cite{Gessner2014, Ringbauer_2015, Morris2019, Winick2019}, we discuss potential applications and extensions to our results, as well as their limitations and possible ways to overcome them.

\section{Approximate unitary designs}
The core of the issue with our previous approach for the typicality of Markovian processes lies in the unitaries entering a process being uniformly distributed. Physical unitary evolution has a very specific structure determined by a Hamiltonian, and in this sense it can be said to be far from random. In some circumstances, however, there are physical processes which can approximate some of the statistical features of the Haar measure~\cite{Guhr1998, DAlessio_2016, Gogolin_2016, mehta2004random}. For example, consider the toy model depicted in Fig.~\ref{Fig: Design box and process}$\mathsf{(a)}$, comprising a dilute gas of $n$ particles evolving autonomously in a closed box. The gas particles interact with each other in one of two ways as they randomly move inside the box. Following and intervening on an impurity particle, taken to be the system, this model can be well described by a circuit such as the one in Fig.~\ref{Fig: Design box and process}$\mathsf{(b)}$. The simplicity of this system suggests that it can only \emph{uniformly randomize} after a large number of random two-qubit interactions, progressively resembling genuine Haar random dynamics. One possible way to quantify this progressive resemblance of the Haar measure is given by the concept of unitary designs.

\begin{figure}[t]
\centering
\begin{minipage}{0.4\textwidth}
\begin{tikzpicture}
\node[anchor=south west, inner sep=0] (image) at (0,0) {\includegraphics[width=0.8\textwidth, height=0.725\textwidth]{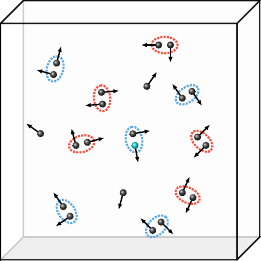}};
\node[below] at (0.425\textwidth,-0.25) {$\mathsf{(a)}$};
\end{tikzpicture}
  \end{minipage}
  \begin{minipage}{0.59\textwidth}
  \begin{tikzpicture}
  \begin{scope}[xscale=0.35, yscale=0.375,shift={(-0.2,0)}]
 \foreach \i in {1,...,6}{
 \node at (0,-\i) {$|0\rangle$};
 \draw[] (1,-\i) -- (24,-\i);
 
 \shade[bottom color=C2, top color=white, draw=black] (1.5,-0.75) rectangle (2,-1.25);
 \draw[thick] (1.75,-1.25) -- (1.75,-1.75);
 \shade[bottom color=C2, top color=white, draw=black] (1.5,-1.75) rectangle (2,-2.25);
 
 \shade[bottom color=C2, top color=white, draw=black] (1.5,-3.75) rectangle (2,-4.25);
 \draw[thick] (1.75,-4.25) -- (1.75,-5.75);
 \shade[bottom color=C2, top color=white, draw=black] (1.5,-5.75) rectangle (2,-6.25);
 
 \shade[bottom color=C2, top color=white, draw=black] (2.5,-2.75) rectangle (3,-3.25);
 \draw[thick] (2.75,-3.25) -- (2.75,-4.75);
 \shade[bottom color=C2, top color=white, draw=black] (2.5,-4.75) rectangle (3,-5.25);
 
 \shade[bottom color=C2, top color=white, draw=black] (3.5,-3.75) rectangle (4,-4.25);
 \draw[thick] (3.75,-4.25) -- (3.75,-5.75);
 \shade[bottom color=C2, top color=white, draw=black] (3.5,-5.75) rectangle (4,-6.25);
 
 \shade[bottom color=C2, top color=white, draw=black] (4.5,-1.75) rectangle (5,-2.25);
 \draw[thick] (4.75,-2.25) -- (4.75,-4.75);
 \shade[bottom color=C2, top color=white, draw=black] (4.5,-4.75) rectangle (5,-5.25);
 
  \shade[outer color=C3!60!white, inner color=white, draw=black, thick, rounded corners] (5,-5.5) rectangle (6.5,-7);
  \node at (5.75,-6.25) {$\mc{A}_0$};
 }
 
 \shade[bottom color=C2, top color=white, draw=black] (5.75,-0.75) rectangle (6.25,-1.25);
 \draw[thick] (6,-1.25) -- (6,-3.75);
 \shade[bottom color=C2, top color=white, draw=black] (5.75,-3.75) rectangle (6.25,-4.25);
 
 \shade[bottom color=C2, top color=white, draw=black] (7,-3.75) rectangle (7.5,-4.25);
 \draw[thick] (7.25,-4.25) -- (7.25,-5.75);
 \shade[bottom color=C2, top color=white, draw=black] (7,-5.75) rectangle (7.5,-6.25);
 
 \shade[bottom color=C2, top color=white, draw=black] (7,-1.75) rectangle (7.5,-2.25);
 \draw[thick] (7.25,-2.25) -- (7.25,-2.75);
 \shade[bottom color=C2, top color=white, draw=black] (7,-2.75) rectangle (7.5,-3.25);
 
  \shade[bottom color=C2, top color=white, draw=black] (8,-0.75) rectangle (8.5,-1.25);
 \draw[thick] (8.25,-1.25) -- (8.25,-4.75);
 \shade[bottom color=C2, top color=white, draw=black] (8,-4.75) rectangle (8.5,-5.25);
 
 \shade[bottom color=C2, top color=white, draw=black] (9,-1.75) rectangle (9.5,-2.25);
 \draw[thick] (9.25,-2.25) -- (9.25,-3.75);
 \shade[bottom color=C2, top color=white, draw=black] (9,-3.75) rectangle (9.5,-4.25);
 
 \shade[outer color=C3!60!white, inner color=white, draw=black, thick, draw=black, rounded corners] (10,-5.5) rectangle (11.5,-7);
  \node at (10.75,-6.25) {$\mc{A}_1$};
  
  \shade[bottom color=C2, top color=white, draw=black] (10,-2.75) rectangle (10.5,-3.25);
 \draw[thick] (10.25,-3.25) -- (10.25,-4.75);
 \shade[bottom color=C2, top color=white, draw=black] (10,-4.75) rectangle (10.5,-5.25);
 
 \shade[bottom color=C2, top color=white, draw=black] (11,-0.75) rectangle (11.5,-1.25);
 \draw[thick] (11.25,-1.25) -- (11.25,-3.75);
 \shade[bottom color=C2, top color=white, draw=black] (11,-3.75) rectangle (11.5,-4.25);
 
 \shade[bottom color=C2, top color=white, draw=black] (12,-2.75) rectangle (12.5,-3.25);
 \draw[thick] (12.25,-3.25) -- (12.25,-5.75);
 \shade[bottom color=C2, top color=white, draw=black] (12,-5.75) rectangle (12.5,-6.25);
 
 \shade[bottom color=C2, top color=white, draw=black] (16,-0.75) rectangle (16.5,-1.25);
 \draw[thick] (16.25,-1.25) -- (16.25,-2.75);
 \shade[bottom color=C2, top color=white, draw=black] (16,-2.75) rectangle (16.5,-3.25);
 
 \draw[white, fill=white, path fading= east] (15,0.5) -- (17,0.5) -- (17,-6.5) -- (15,-6.5);
 \draw[white, fill=white] (13,0.5) -- (15,0.5) -- (15,-6.5) -- (13,-6.5);
 \draw[white, fill=white, path fading= west] (11.5,-6.5) -- (11.5,0.5) -- (13,0.5) -- (13,-6.5) ;
 
 \node at (14,-3.5) {$\cdots$};
 
 \shade[bottom color=C2, top color=white, draw=black] (17,-1.75) rectangle (17.5,-2.25);
 \draw[thick] (17.25,-2.25) -- (17.25,-3.75);
 \shade[bottom color=C2, top color=white, draw=black] (17,-3.75) rectangle (17.5,-4.25);
 
 \shade[bottom color=C2, top color=white, draw=black] (17,-4.75) rectangle (17.5,-5.25);
 \draw[thick] (17.25,-5.25) -- (17.25,-5.75);
 \shade[bottom color=C2, top color=white, draw=black] (17,-5.75) rectangle (17.5,-6.25);
 
 \shade[bottom color=C2, top color=white, draw=black] (18,-0.75) rectangle (18.5,-1.25);
 \draw[thick] (18.25,-1.25) -- (18.25,-1.75);
 \shade[bottom color=C2, top color=white, draw=black] (18,-1.75) rectangle (18.5,-2.25);
 
 \shade[bottom color=C2, top color=white, draw=black] (18,-3.75) rectangle (18.5,-4.25);
 \draw[thick] (18.25,-4.25) -- (18.25,-4.75);
 \shade[bottom color=C2, top color=white, draw=black] (18,-4.75) rectangle (18.5,-5.25);
 
 \shade[bottom color=C2, top color=white, draw=black] (19,-2.75) rectangle (19.5,-3.25);
 \draw[thick] (19.25,-3.25) -- (19.25,-5.75);
 \shade[bottom color=C2, top color=white, draw=black] (19,-5.75) rectangle (19.5,-6.25);
 
 \shade[outer color=C3!60!white, inner color=white, draw=black,rounded corners, thick] (20,-5.5) rectangle (21.5,-7);
  \node at (20.75,-6.25) {$\mc{A}_k$};
  
  \shade[bottom color=C2, top color=white, draw=black] (20,-1.75) rectangle (20.5,-2.25);
 \draw[thick] (20.25,-2.25) -- (20.25,-3.75);
 \shade[bottom color=C2, top color=white, draw=black] (20,-3.75) rectangle (20.5,-4.25);
 
 \shade[bottom color=C2, top color=white, draw=black] (21,-0.75) rectangle (21.5,-1.25);
 \draw[thick] (21.25,-1.25) -- (21.25,-4.75);
 \shade[bottom color=C2, top color=white, draw=black] (21,-4.75) rectangle (21.5,-5.25);
  
  \shade[bottom color=C2, top color=white, draw=black] (22,-2.75) rectangle (22.5,-3.25);
 \draw[thick] (22.25,-3.25) -- (22.25,-5.75);
 \shade[bottom color=C2, top color=white, draw=black] (22,-5.75) rectangle (22.5,-6.25);
 
 \shade[bottom color=C2, top color=white, draw=black] (23,-1.75) rectangle (23.5,-2.25);
 \draw[thick] (23.25,-2.25) -- (23.25,-4.75);
 \shade[bottom color=C2, top color=white, draw=black] (23,-4.75) rectangle (23.5,-5.25);
  
  \draw[white, fill=white, path fading= west] (22,0.5) -- (24,0.5) -- (24,-6.5) -- (22,-6.5);
  
  \draw [thick,decoration={brace,mirror,raise=0.55cm},decorate] (6,-6) -- (24,-6); 
  
  \node[align=left] at (15.6,-9.15) {Multi-time correlations\\and memory effects};
  
  \draw [thick,decoration={brace,mirror,raise=0.55cm},decorate] (0,-6) -- (5.5,-6); 
    
  \node[align=left] at (3.8,-9.15) {Standard\\Statistical Mechanics};
  \node at (12.25,-12) {$\mathsf{(b)}$};
  \end{scope}
  \end{tikzpicture}
  \end{minipage}
\caption[Unitary designs, from simplicity to randomness]{\textbf{Unitary designs, from simplicity to randomness:} $\mathsf{(a)}$ A toy model with dynamics given by a unitary design is that of an impurity particle (teal) immersed in a gas of $n_\mathsf{E}$ particles within a closed box, where all particles interact in pairs in one of two ways at random. $\mathsf{(b)}$ An \gls{syst-env}-system with random two-qubit gate interactions only, and multiple interventions $\{\mc{A}_i\}$. While the standard approach towards typicality or equilibrium properties concerns the whole \gls{syst-env} dynamics and/or a single measurement on system \gls{syst}, we show that complex dynamics within large environments will be highly Markovian with high probability. On the other hand, if probed enough times, information about past correlations will eventually become non-negligible.}
\label{Fig: Design box and process}
\end{figure}

\begin{definition}[Exact unitary $\design$-design~\cite{Low_2009}]
An exact unitary $\design$-design is defined as a probability measure $\mu_{\,\design}$ on $\mbb{U}(d)$ such that for all positive $s\leq\design$, and all $d^s\times{d}^s$ complex matrices $X$,
\begin{gather}
    \mbb{E}_{\,\design}\left[\mc{V}^{\otimes{s}}(X)\right]=\mbb{E}_\mathsf{h}\left[\mc{U}^{\otimes{s}}(X)\right],\quad\forall{s\leq\design},
    \label{eq: def exact t design}
\end{gather}
where $\mc{U}(\cdot):=U(\cdot)U^\dg$ and $\mc{V}(\cdot):=V(\cdot)V^\dg$ are unitary maps with $U,V\in\mbb{U}(d)$.
\end{definition}

Here, as above, the notation $\mathbb{E}_\Omega$ indicates the expectation value with respect to a given probability measure $\mu_\Omega$, i.e. we have $V\sim\mu_{\,\design}$ and $U\sim\mu_\mathsf{h}$ in Eq.~\eqref{eq: def exact t design}. In the case we will be interested in, the unitary maps will correspond to \gls{syst-env} unitaries distributed according to the either the Haar measure or a unitary design. As per the definition in Eq.~\eqref{eq: def exact t design}, a unitary $\design$-design reproduces up to the $\design$-th moment over the uniform distribution given by the Haar measure. In particular, $\mu_{\,\design}$ can consist of a finite ensemble $\{V_i,p_i\}_{i=1}^N$ of unitaries $V_i$ and probabilities $p_i$, as is now common in applications such as so-called randomized benchmarking of error rates in quantum gates~\cite{Dankert_2009, Wallman_2014}.

Moreover, this definition of a unitary design can be relaxed by allowing a small error $\varepsilon$. In general an $\varepsilon$-approximate $\design$-design, which we denote $\mu_{\,\appdesign}$, can be defined through
\begin{gather}
    \left|\mbb{E}_{\,\appdesign}\left[\mc{V}^{\otimes{s}}(X)\right]-\mbb{E}_\mathsf{h}\left[\mc{U}^{\otimes{s}}(X)\right]\right|\leq\varepsilon,\quad\forall\,{s}\leq\design
    \label{eq: def approx t design generic}
\end{gather}
for a suitable metric $|\cdot|$, with all quantities as in Eq.~\eqref{eq: def exact t design}, now with $V \sim \mu_{\,\appdesign}$.

We specifically employed the definition of Ref.~\cite{Low_2009} for unitary designs, which uses the fact that the definition of an exact $\design$-design can be written in terms of a balanced monomial $\Theta$ of degree less or equal to $\design$ in the components of the unitaries $U$.

\begin{definition}[Balanced monomial~\cite{Low_2009}]
A (unitary) balanced monomial $\Theta_s:\mbb{U}(d)\to\mbb{R}$ of degree $s$ is a monomial in unitary elements with $s$ conjugated and $s$ unconjugated elements.
\end{definition}

This means, for example, that $\Theta_2(U)=U_{ab}U_{cd}U_{ef}^*U_{hg}^*$ is a balanced monomial of degree 2 for given components $a,b,\ldots,g$. In this language, we have seen as well that the $n$-moments of the unitary group are non-vanishing only for balanced monomials. Thus, writing Eq.~\eqref{eq: def exact t design} in terms of matrix elements, this can be seen to be equivalent to requiring $\mbb{E}_{\,\design}[\Theta_s(V)]=\mbb{E}_\haar[\Theta_s(U)]$ for all monomials $\Theta_s$ of degree $s\leq\design$. Similarly then, for an $\varepsilon$-approximate $\design$-design, we adopt the definition of Ref.~\cite{Low_2009} with Eq.~\eqref{eq: def approx t design generic}, which implies the following.

\begin{definition}[$\varepsilon$-approximate unitary $\design$-design]
The distribution $\mu_{\appdesign}$ is a $\varepsilon$-approximate unitary $\design$-design  if for all balanced monomials $\Theta_s$ of degree $s\leq\design$,
\begin{gather}
	\left|\mbb{E}_{\,\appdesign}\Theta_s(V)-\mbb{E}_\mathsf{h}\Theta_s(U)\right|\leq\f{\varepsilon}{d^{\,\design}},
	\label{eq: approximate design Low}
\end{gather}
with $U,V\in\mbb{U}(d)$.
\end{definition}

From now on we focus on the more general approximate designs. We will see below that the degree $\varepsilon$ to which the distribution of the unitary dynamics on $\mu_{\,\appdesign}$ differs from an exact design for given $\design$ depends on the complexity of the model.

Notice what this means for the toy model of Fig.~\ref{Fig: Design box and process}$\mathsf{(a)}$: as individual random two-body interactions of each kind accumulate, what we expect is for the dynamics to start \emph{scrambling} their information across the whole gas in the box, progressively becoming more complex and uniformly random~\cite{Roberts2017}. Unitary designs give us this finite quantification of the resemblance towards uniform Haar randomness and, in this case, it can give us a precise way to account for the progressive emergence of complexity from seemingly simple individual two-body interactions.

Unitary designs for $\design=2,3$ have been widely studied~\cite{Emerson_2005, Gross2007, Harrow2009, Dankert_2009, Nakata2013, Wallman_2014, Webb_2016, 2designsXZ, Zhu_2017} and efficient constructions are known for larger values of $\design$~\cite{Brandao2016, Winter_HamDesign, Harrow2009}. The latter are of particular relevance, precisely as designs for large $\design$, i.e., those with a higher complexity~\cite{Roberts2017}, are expected to satisfy tighter large deviation bounds. Indeed, a statement of the form $\mbb{P}_{\appdesign}[\mc{N}\geq\delta]$ for a non-Markovianity measure $\mc{N}$ is expected to satisfy a bound similar to that in Theorem~\ref{Thm: typicality processes}, approaching genuine concentration of measure as the level and quality of the design increases. Such large deviation bounds --which here simply refers to bounds on probabilities for \emph{rare} events-- over approximate unitary designs were derived in a general form in Ref.~\cite{Low_2009} for a polynomial function satisfying a concentration of measure bound. 

\begin{theorem}[Large deviation bounds for $\design$-designs~\cite{Low_2009}]
\label{thm: large dev Low}
Let $\mc{X}$ be a polynomial of degree $\mathsf{T}$. Let $f(U)=\sum_i\alpha_i\Theta_{s_i}(U)$ where $\Theta_{s_i}(U)$ are monomials and let $\alpha(f)=\sum_i|\alpha_i|$. Suppose that $f$ has probability concentration
\begin{equation}
    \mbb{P}_\haar[|f-\zeta|\geq\delta]\leq C\exp\left(-\mscr{C}\delta^2\right),
\end{equation}
and let $\mu_{\appdesign}$, be an $\varepsilon$-approximate unitary $\design$-design, then
\begin{equation}
    \mbb{P}_\appdesign[|f-\zeta|\geq\delta]\leq\f{1}{\delta^{2m}}\left(C\left(\f{m}{\mscr{C}}\right)^m+\f{\varepsilon}{d^{\,\design}}(\alpha+|\zeta|)^{2m}\right),
    \label{eq: Low main large dev}
\end{equation}
for any integer $m$ with $2m\mathsf{T}\leq\design$.
\end{theorem}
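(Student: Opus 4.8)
The plan is to derive the large deviation bound for $\varepsilon$-approximate designs by combining a Chebyshev/Markov-type inequality at the level of the $2m$-th moment with the approximation property in Eq.~\eqref{eq: approximate design Low}, and then estimating the Haar $2m$-th moment via the concentration hypothesis. The key observation is that $g(U) := f(U) - \zeta$ is, upon raising to the $2m$-th power, a polynomial whose terms are (after expansion) linear combinations of balanced monomials of degree at most $2m\mathsf{T} \leq \design$, so the approximate design reproduces its expectation up to a controlled error.

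First I would apply Markov's inequality to the nonnegative random variable $|f - \zeta|^{2m}$:
\begin{equation}
    \mbb{P}_\appdesign[|f-\zeta|\geq\delta] = \mbb{P}_\appdesign\left[|f-\zeta|^{2m}\geq\delta^{2m}\right] \leq \f{\mbb{E}_\appdesign\left[|f-\zeta|^{2m}\right]}{\delta^{2m}}.
\end{equation}
Next I would split the approximate-design moment into the Haar moment plus an error: writing $|f-\zeta|^{2m} = (f-\zeta)^m\overline{(f-\zeta)}^m$ and expanding, each resulting monomial in the entries of $U$ (and $U^*$) is balanced of degree $\leq 2m\mathsf{T} \leq \design$, so Eq.~\eqref{eq: approximate design Low} gives $|\mbb{E}_\appdesign[\Theta_s] - \mbb{E}_\haar[\Theta_s]| \leq \varepsilon/d^{\design}$ term by term. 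The sum of the absolute values of the coefficients in the expansion of $(f-\zeta)^{2m}$ is bounded by $(\alpha(f) + |\zeta|)^{2m}$, since $\alpha(f) = \sum_i|\alpha_i|$ bounds the $\ell^1$-norm of the coefficients of $f$ and $|\zeta|$ accounts for the constant shift; submultiplicativity of the $\ell^1$ coefficient norm under products then yields the $2m$-th power. Hence
\begin{equation}
    \mbb{E}_\appdesign\left[|f-\zeta|^{2m}\right] \leq \mbb{E}_\haar\left[|f-\zeta|^{2m}\right] + \f{\varepsilon}{d^{\design}}(\alpha + |\zeta|)^{2m}.
\end{equation}

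Then I would bound the Haar $2m$-th moment using the concentration hypothesis. Since $\mbb{P}_\haar[|f-\zeta|\geq\delta]\leq C\exp(-\mscr{C}\delta^2)$, the layer-cake formula gives $\mbb{E}_\haar[|f-\zeta|^{2m}] = \int_0^\infty 2m\,t^{2m-1}\mbb{P}_\haar[|f-\zeta|\geq t]\,dt \leq C\int_0^\infty 2m\,t^{2m-1}e^{-\mscr{C}t^2}\,dt = C\,\mscr{C}^{-m}\,\Gamma(m+1) = C\,m!\,\mscr{C}^{-m}$, and using $m! \leq m^m$ one obtains $\mbb{E}_\haar[|f-\zeta|^{2m}] \leq C(m/\mscr{C})^m$. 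Substituting the two estimates back into the Markov bound yields exactly Eq.~\eqref{eq: Low main large dev}.

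The main obstacle I anticipate is the careful bookkeeping in the second step: verifying that every monomial appearing in the expansion of $(f-\zeta)^m\overline{(f-\zeta)}^m$ is indeed balanced of degree at most $\design$ (so that the design property applies), and that the aggregate coefficient bound is genuinely $(\alpha+|\zeta|)^{2m}$ rather than something larger. This requires being precise about what ``balanced monomial of degree $s$'' means when $f$ itself is already a sum of balanced monomials $\Theta_{s_i}$ of possibly different degrees $s_i$, about how the conjugate $\overline{f}$ contributes the complementary conjugated factors, and about how the constant term $-\zeta$ interacts with monomials of nonzero degree (it simply multiplies through, but one must check the degree never exceeds $2m\mathsf{T}$). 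Everything else — Markov's inequality, the Gamma-function moment integral, the $m! \leq m^m$ estimate — is routine.
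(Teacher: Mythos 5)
Your proposal is correct and follows essentially the same route as the paper's source (Low's proof as sketched in the text): Markov's inequality at the level of the $2m$-th moment, the term-by-term design approximation with the $\ell^1$ coefficient bound $(\alpha+|\zeta|)^{2m}$ (this is exactly Lemma~3.4 quoted in the chapter), and a moment bound from the Gaussian tail. The only cosmetic difference is that you compute the Haar moment directly via the layer-cake integral, $\mbb{E}_\haar[|f-\zeta|^{2m}]\leq C\,m!\,\mscr{C}^{-m}\leq C(m/\mscr{C})^m$, rather than invoking Lemma~5.2 as a black box, which in fact recovers the stated constant cleanly.
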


This is the most general expression, where $\zeta$ can be any quantity, in particular the expectation of $f$. The main idea from this result in Ref.~\cite{Low_2009} (similarly applied before in Ref.~\cite{Brandao2016}) is that given a $\mu_\appdesign$ distribution as an $\varepsilon$-approximate unitary $\design$-design and a concentration result for a polynomial $f$ of degree $\mathsf{T}$, then one can compute
\begin{gather}
	\mbb{E}_{\,\appdesign}\left[f^m\right]=\mbb{E}_\haar\left[f^m\right]+g(\varepsilon,\design,f),
\end{gather}
where $m\leq{\design/2\mathsf{T}}$. Using Markov's inequality we have
\begin{align}
	\mbb{P}_\appdesign(f\geq\delta)=\mbb{P}_\appdesign\left(f^m\geq\delta^m\right)&\leq\f{\mbb{E}_{\,\appdesign}\left[f^m\right]}{\delta^m}=\f{1}{\delta^m}\left[\mbb{E}_\haar\left[f^m\right]+g(\varepsilon,\mathsf{t},f)\right],
	\label{eq: large dev form}
\end{align}
which is the form of the main large deviations bound in Eq.~\eqref{eq: Low main large dev}. More precisely, the other two main results that come along with the proof of Theorem~\ref{thm: large dev Low} given in Ref.~\cite{Low_2009} and allowing to compute the right hand-side of Eq.~\eqref{eq: large dev form} are the following.

\begin{lemma}[3.4 of Ref.~\cite{Low_2009}]
\label{Lemma: Low 1}
Let $\mc{X}$ be a polynomial of degree $\mathsf{T}$ and $\zeta$ any constant. Let $f(U)=\sum_i\alpha_i\Theta_{s_i}(U)$ where $\Theta_{s_i}(U)$ are monomials and let $\alpha(f)=\sum_i|\alpha_i|$. Then for an integer $m$ such that $2m\mathsf{T}\leq\design$ and $\mu_\appdesign$ an $\varepsilon$-approximate unitary $\design$-design,
\begin{equation}
    \mbb{E}_{\,\appdesign}\left[|f-\zeta|^{2m}\right]\leq\mbb{E}_\haar\left[|f-\zeta|^{2m}\right]+\f{\varepsilon}{d^{\,\design}}\left(\alpha+|\zeta|\right)^{2m}.
    \label{eq: Low lemma 1}
\end{equation}
\end{lemma}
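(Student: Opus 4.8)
The plan is to expand $|f-\zeta|^{2m}$ into a linear combination of balanced monomials, each of degree at most $\design$, apply the $\varepsilon$-approximate $\design$-design property term by term, and collect the errors. First I would absorb the constant $\zeta$ into the monomial expansion by treating the identity $\mbb{1}$ as a balanced monomial of degree $0$: writing $f-\zeta=\sum_{i\in I'}\alpha'_i\,\Theta'_{s_i}$ with $I'=I\cup\{0\}$, $\alpha'_0=-\zeta$, $\Theta'_0=\mbb{1}$, one has $\sum_{i\in I'}|\alpha'_i|=\alpha+|\zeta|$ and every $\Theta'_{s_i}$ balanced of degree $\le\mathsf{T}$. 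Using $|x|^{2m}=x^m\overline{x}^{\,m}$ for $x\in\mbb{C}$,
\[|f-\zeta|^{2m}=(f-\zeta)^m\,\overline{(f-\zeta)}^{\,m}=\sum_{\vec i,\vec j}\Bigl(\prod_{l=1}^m\alpha'_{i_l}\Bigr)\Bigl(\prod_{l=1}^m\overline{\alpha'_{j_l}}\Bigr)\,M_{\vec i\vec j},\]
where $\vec i=(i_1,\dots,i_m)$ and $\vec j=(j_1,\dots,j_m)$ run over $(I')^m$ and $M_{\vec i\vec j}:=\Theta'_{s_{i_1}}\cdots\Theta'_{s_{i_m}}\,\overline{\Theta'_{s_{j_1}}}\cdots\overline{\Theta'_{s_{j_m}}}$.

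The structural point to check is that each $M_{\vec i\vec j}$ is again a \emph{balanced} monomial: products and complex conjugates of balanced monomials are balanced, the balanced degree being additive under products, so $M_{\vec i\vec j}$ has degree at most $m\mathsf{T}+m\mathsf{T}=2m\mathsf{T}\le\design$. Hence the defining bound of an $\varepsilon$-approximate $\design$-design, Eq.~\eqref{eq: approximate design Low}, applies to each $M_{\vec i\vec j}$, yielding $\mbb{E}_{\,\appdesign}[M_{\vec i\vec j}]=\mbb{E}_\haar[M_{\vec i\vec j}]+\eta_{\vec i\vec j}$ with $|\eta_{\vec i\vec j}|\le\varepsilon/d^{\,\design}$. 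Taking expectations of the displayed identity and using linearity,
\[\mbb{E}_{\,\appdesign}\bigl[|f-\zeta|^{2m}\bigr]=\mbb{E}_\haar\bigl[|f-\zeta|^{2m}\bigr]+\sum_{\vec i,\vec j}\Bigl(\prod_l\alpha'_{i_l}\Bigr)\Bigl(\prod_l\overline{\alpha'_{j_l}}\Bigr)\,\eta_{\vec i\vec j}.\]

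It then remains to bound the error sum by the triangle inequality:
\[\Bigl|\sum_{\vec i,\vec j}\Bigl(\prod_l\alpha'_{i_l}\Bigr)\Bigl(\prod_l\overline{\alpha'_{j_l}}\Bigr)\eta_{\vec i\vec j}\Bigr|\le\f{\varepsilon}{d^{\,\design}}\Bigl(\sum_{i\in I'}|\alpha'_i|\Bigr)^{2m}=\f{\varepsilon}{d^{\,\design}}\bigl(\alpha+|\zeta|\bigr)^{2m},\]
and, since $\mbb{E}_\haar[|f-\zeta|^{2m}]$ is real, discarding the sign of the remainder gives the one-sided estimate of Eq.~\eqref{eq: Low lemma 1}. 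This is exactly the ingredient needed for Theorem~\ref{thm: large dev Low}: combining it with the Haar concentration bound to control $\mbb{E}_\haar[|f-\zeta|^{2m}]$ and then invoking Markov's inequality $\mbb{P}_{\appdesign}(|f-\zeta|\ge\delta)\le\delta^{-2m}\mbb{E}_{\,\appdesign}[|f-\zeta|^{2m}]$ produces Eq.~\eqref{eq: Low main large dev}.

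The main obstacle, and really the only non-mechanical step, is the degree bookkeeping just described: one must be certain that multiplying out $|f-\zeta|^{2m}$ produces \emph{only} balanced monomials and \emph{only} of degree $\le\design$, since the approximate-design hypothesis says nothing about higher moments. This works precisely because $\zeta$ can be treated as a degree-$0$ balanced monomial (so the cross terms $(f-\zeta)^a(-\zeta)^{m-a}$ cause no trouble) and because balanced degree is additive, bounded through $2m\mathsf{T}\le\design$. A secondary caveat worth stating explicitly is that the argument presumes $f$ is a \emph{balanced} polynomial --- a linear combination of balanced monomials --- which holds automatically for the process-tensor quantities of interest here (purities and trace distances assembled from $U$ and $U^{\dg}$); were $f$ not balanced, one would additionally need that unbalanced monomials have $O(\varepsilon/d^{\,\design})$ expectation under $\mu_{\,\appdesign}$ as well.
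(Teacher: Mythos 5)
Your argument is correct and is essentially the proof of Lemma 3.4 in Low's paper, which this thesis simply cites rather than reproves: expand $|f-\zeta|^{2m}$ via the multinomial theorem into balanced monomials of degree at most $2m\mathsf{T}\leq\design$ (treating $-\zeta$ as a degree-$0$ term), apply the $\varepsilon$-approximate design condition term by term, and bound the accumulated error by $\f{\varepsilon}{d^{\,\design}}(\alpha+|\zeta|)^{2m}$ using the triangle inequality. Your degree bookkeeping and the caveat about balancedness are exactly the points the paper's own footnote alludes to (the multinomial expansion requiring integer $m$), so there is nothing to add.
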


\begin{lemma}[5.2 of Ref.~\cite{Low_2009}]
\label{Lemma: Low 2}
Let $X$ be any non-negative random variable with probability concentration
\begin{equation}
    \mbb{P}(X\geq\delta+\gamma)\leq{C}\exp(-\mscr{I}\,\delta^2),
\end{equation}
where $\gamma\geq0$, then
\begin{equation}
    \mbb{E}[X^m]\leq{C}\left(\f{2m}{\mscr{I}}\right)^{m/2}+(2\gamma)^m,
    \label{eq: Low lemma 2}
\end{equation}
for any $m>0$.
\end{lemma}

So, in essence, to solve our problem of finding a large deviations bound for quantum processes, what we can do given the results from Ref.~\cite{Low_2009}, is to determine the right-hand sides of Eq.~\eqref{eq: Low lemma 1} and Eq.~\eqref{eq: Low lemma 2} by phrasing our measure of non-Markovianity and all the other relevant quantities in such terms.

\section{Large deviations on unitary designs around Markovian processes}
Let us now then revisit Theorem~\ref{Thm: typicality processes}. For concreteness, in Ref.~\cite{FigueroaRomero2020makovianization} we focused only in the random interaction picture ($\mc{U}_i\neq\mc{U}_j$). As we mention in the previous chapter, the choice of the non-Markovianity measure as the trace distance between a quantum process and the closest Markovian one in Ref.~\cite{FigueroaRomero2019almostmarkovian} was motivated mainly due to its relation with the equilibration and state typicality results; strictly speaking this measure gives the distinguishability between explicitly constructed Choi states of corresponding process tensors and has no operational meaning. In this case a more suitable choice is the so-called \emph{diamond norm}.

While trace distance is a natural metric for differentiating two quantum states, the natural distance for differentiating two quantum channels is the diamond norm, which allows for the use of additional ancillas, as in discussed in Ref.~\cite{PhysRevA.71.062310}. We are interested in optimally distinguishing between a non-Markovian process from a Markovian one, which leads to the multitime diamond distance:
\begin{gather}
    \mc{N}_\bdiamond:=\f{1}{2}\min_{\Upsilon^\markov}\|\Upsilon-\Upsilon^\markov\|_\bdiamond,
 \label{Eq: def nM diamond}
\end{gather}
where
\begin{equation}
    \|\Upsilon\|_\bdiamond:=\sup_{\mc{J}=\{\mc{O}_i\}}\left\|\sum_i\tr[\mc{O}_i\Upsilon\otimes\mbb1]|i\rangle\!\langle{i}|\right\|_1
\end{equation} is a generalized diamond norm~\cite{Sacchi_2005,Phil_MemStr,Chiribella_2008_mem, Chiribella_2009, Gutoski_2012}, with supremum over the instrument $\mc{J}=\{\mc{O}_i\}$. This definition generalizes the diamond norm for quantum channel distinguishability~\cite{Kitaev} (also called cb-norm in Ref.~\cite{Paulsen_diamond} or completely bounded trace norm in Ref.~\cite{Watrous}), reducing to it for a single step process tensor, and similarly being interpreted as the optimal probability to discriminate a process from the closest Markovian one in a single shot, given any set of measurements together with an ancilla.

As we saw in the previous two chapters, this choice of non-Markovianity measure is not unique; more generally for any Schatten $p$-norm $\|X\|_p:=\tr(|X|^p)^{\frac{1}{p}}$ we can define a family of non-Markovianity measures as
\begin{gather}
    \mc{N}_p:= \f{1}{2} \min_{\Upsilon^\markov} \|\Upsilon-\Upsilon^\markov\|_p,
    \label{eq: Schatten}
\end{gather}
as done with $p=1$ in Theorem~\ref{Thm: typicality processes}, whenever $\Upsilon$ is normalized such that $\tr[\Upsilon]= \tr[\Upsilon^\markov]=1$. Then, we have the hierarchy $\mc{N}_1\geq\mc{N}_2\geq\ldots$, induced by that of the Schatten norms in Eq.~\eqref{eq: Schatten hierarchy}. In particular, we have the relation $d_\mathsf{S}^{-2k-1} \mc{N}_\bdiamond \leq \mc{N}_1\leq\mc{N}_\bdiamond$. This implies that the result in Theorem~\ref{Thm: typicality processes}, can be written equivalently as
\begin{gather}
    \mbb{P}_\haar\left[\mc{N}_\bdiamond\geq{d}_\mathsf{S}^{2k+1}\mc{B}+\delta\right]\leq \exp\left\{ -4\,\mscr{C}\,\delta^2 d_\mathsf{S}^{-2(2k+1)} \right\},
    \label{eq: non-Markov concentration Haar}
\end{gather}
where
\begin{gather}
    \mscr{C}=\f{d_\mathsf{SE}(k+1)}{16} \left(\f{d_\mathsf{S}-1}{d_\mathsf{S}^{k+1}-1}\right)^2,
    \label{eq: Lipschitz Haar}
\end{gather}
is the constant related to the Lipschitz constant of $\mc{N}_1$ in Theorem~\ref{Thm: typicality processes} up to a dimensional multiplicative factor of $4$, and where we now simply denote $\mc{B}\geq\mbb{E}[\mc{N}_1]$ the upper bound on the non-Markovianity $\mc{N}_1$ given in Eq.~\eqref{eq: bound on nM} with average purity in the random interaction case given in Eq.~\eqref{average purity ergodic}. In Ref.~\cite{FigueroaRomero2020makovianization} we refer to $\mscr{C}$ as \emph{the} Lipschitz constant of $\mc{N}_1$, but strictly speaking this is an inverse and rescaled version of $\mscr{L}$ in Eq.~\eqref{eq: N1 Lipschitz}. We recall that, holding everything else constant, we have the limiting cases $\mc{B}=0$ when $d_\mathsf{E}\to\infty$ and  $\mc{B}=1$ when $k\to\infty$, so that the expected non-Markovianity vanishes in the small subsystem limit and becomes loosest in the long time limit case.

\begin{theorem}[Markovianization with approximate unitary designs~\cite{FigueroaRomero2020makovianization}]
\label{Thm: Large Dev Markov}
Given a $k$-step process $\Upsilon$ on a $d_\mathsf{S}$ dimensional subsystem, generated from global unitary $d_\mathsf{SE}$ dimensional \gls{syst-env} dynamics distributed according to an $\varepsilon$-approximate unitary $\design$-design $\mu_{\,\appdesign}$, the likelihood that its non-Markovianity exceeds any $\delta>0$ is bounded as
\begin{gather}
  \mbb{P}_{\appdesign}\left[\,\mc{N}_\bdiamond\geq\delta\,\right]\leq\mathsf{B},
  \label{eq: largedev_design}
\end{gather}
where $\mathsf{B}$ is defined as
\begin{gather}
    \mathsf{B}:=\f{d_\mathsf{S}^{3m(2k+1)}}{\delta^{2m}}\left[\left(\f{m}{\mscr{C}}\right)^{m}\hspace{-0.05in}+(2\mc{B})^{2m}+\f{\varepsilon}{d_\mathsf{SE}^{\,\mathsf{t}}}\eta^{2m}\right],
    \label{eq: main markovianization thm}
\end{gather}
for any $m\in(0,\mathsf{t}/4]$ and
\begin{gather}
    \eta:=(d_{\mathsf{SE}}^4d_\mathsf{S}^{2k}+d_\mathsf{S}^{-(2k+1)})/4,
    \label{eq: def eta}
\end{gather}
where $\mscr{C}$ is defined in Eq.~\eqref{eq: Lipschitz Haar} and $\mc{B}$ an upper bound on the expected norm-1 non-Markovianity $\mbb{E}_\haar[\mc{N}_1]$, defined in Eq.~\eqref{eq: bound on nM} with average noisiness given by Eq.~\eqref{average purity ergodic}.
\end{theorem}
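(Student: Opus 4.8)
The plan is to reduce Theorem~\ref{Thm: Large Dev Markov} to a direct application of Theorem~\ref{thm: large dev Low} (together with Lemmas~\ref{Lemma: Low 1} and~\ref{Lemma: Low 2}), using the Haar concentration result already in hand, Eq.~\eqref{eq: non-Markov concentration Haar}, as the input probability bound. The key conceptual point is that the quantity we care about, the diamond-norm non-Markovianity $\mc{N}_\bdiamond$, is not itself a polynomial in the entries of the \gls{syst-env} unitaries, so I cannot feed it directly into the large-deviations machinery. Instead I would work with $\mc{N}_1$, which via the explicit Choi-state formula $\Upsilon = \tr_\mathsf{E}[\mc{U}_k\mc{S}_k\cdots\mc{U}_1\mc{S}_1\mc{U}_0(\rho\otimes\mathsf{\Psi}^{\otimes k})]$ is (after fixing the minimizing $\Upsilon^\markov$, or better, after using the bound $\mc{N}_1\le D(\Upsilon,\mbb{1}/d_\mathsf{S}^{2k+1})$ from Eq.~\eqref{TrDist 1st bound}) a function whose relevant powers are balanced monomials of bounded degree in the $U_i$. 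Concretely, $\mc{N}_1^{2m}$ expands into balanced monomials of degree at most $\mathsf{T}\cdot 2m$ where $\mathsf{T}$ is the per-step polynomial degree of the process (each unitary appears with its conjugate in the Choi state, so $\mathsf{T}$ is a small fixed number times the number of steps); the constraint $2m\mathsf{T}\le\mathsf{t}$ is what produces the admissible range $m\in(0,\mathsf{t}/4]$ after bookkeeping the degree count.

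The steps, in order, would be: (i) record the Haar concentration bound for $\mc{N}_1$ from Theorem~\ref{Thm: typicality processes}, namely $\mbb{P}_\haar[\mc{N}_1\ge\mc{B}+\delta]\le e^{-\mscr{C}'\delta^2}$ with $\mscr{C}'$ the constant $\mscr{C}(d_\mathsf{E},d_\mathsf{S})$ there (here $C=1$, $\gamma=\mc{B}$); (ii) apply Lemma~\ref{Lemma: Low 2} with $X=\mc{N}_1$ to get $\mbb{E}_\haar[\mc{N}_1^{2m}]\le (m/\mscr{C})^{m}+(2\mc{B})^{2m}$ (absorbing the $2m/\mscr{I}$ constants into the rescaled $\mscr{C}$ of Eq.~\eqref{eq: Lipschitz Haar}, which differs from $\mscr{L}^{-2}$ by the dimensional factor $4$ that was already flagged in the text); (iii) apply Lemma~\ref{Lemma: Low 1} with $f=\mc{N}_1$, $\zeta=0$, to pass from Haar to the $\varepsilon$-approximate design, picking up the extra term $\tfrac{\varepsilon}{d_\mathsf{SE}^{\mathsf{t}}}(\alpha+|\zeta|)^{2m}$; the coefficient sum $\alpha$ of $\mc{N}_1$ (equivalently of $D(\Upsilon,\mbb{1}/d_\mathsf{S}^{2k+1})$) expanded in balanced monomials is exactly what is packaged into $\eta=(d_\mathsf{SE}^4 d_\mathsf{S}^{2k}+d_\mathsf{S}^{-(2k+1)})/4$ — this requires counting terms in $\tfrac12\|\Upsilon-\mbb{1}/d_\mathsf{S}^{2k+1}\|_1$ and bounding $\|\cdot\|_1$ by $\sqrt{\dim}\|\cdot\|_2$ as in the case analysis of the previous chapter, so the $d_\mathsf{SE}^4 d_\mathsf{S}^{2k}$ piece comes from the $\tr[\Upsilon^2]$-type sum over Choi-state indices and the $d_\mathsf{S}^{-(2k+1)}$ from the identity subtraction; (iv) combine via Markov's inequality $\mbb{P}_\appdesign[\mc{N}_1\ge\delta']\le\mbb{E}_\appdesign[\mc{N}_1^{2m}]/\delta'^{2m}$; (v) finally convert the statement about $\mc{N}_1$ into one about $\mc{N}_\bdiamond$ using the norm equivalence $d_\mathsf{S}^{-(2k+1)}\mc{N}_\bdiamond\le\mc{N}_1$, i.e. $\{\mc{N}_\bdiamond\ge\delta\}\subseteq\{\mc{N}_1\ge\delta\, d_\mathsf{S}^{-(2k+1)}\}$, which replaces $\delta'$ by $\delta\,d_\mathsf{S}^{-(2k+1)}$ and hence produces the prefactor $d_\mathsf{S}^{3m(2k+1)}/\delta^{2m}$ (one factor $d_\mathsf{S}^{2m(2k+1)}$ from the rescaled threshold, another $d_\mathsf{S}^{m(2k+1)}$ absorbed when tidying the $\eta$ and $\mscr{C}$ expressions to the stated form).

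The main obstacle I anticipate is step (iii): getting an honest, clean bound on the coefficient sum $\alpha$ of $\mc{N}_1$ in the balanced-monomial expansion, and matching it to the compact closed form $\eta$ in Eq.~\eqref{eq: def eta}. The trace distance to the maximally mixed state is not literally a polynomial (the $\|\cdot\|_1$ involves absolute values / square roots of eigenvalues), so one has to be careful about what "degree $\mathsf{T}$ polynomial" means here — the resolution, following the case split in the previous chapter, is to bound $\mc{N}_1 \le \tfrac12\sqrt{\dim}\,\|\Upsilon - \mbb{1}/d_\mathsf{S}^{2k+1}\|_2$ and work with $\mc{N}_1^{2m}$ as a genuine balanced monomial of bounded degree, at the cost of the $\sqrt{\dim}$ factors that end up inside $\eta$. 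A secondary but routine nuisance is tracking all the constant factors of $4$ and $2$ so that the rescaled Lipschitz constant $\mscr{C}$ of Eq.~\eqref{eq: Lipschitz Haar} and the $\eta$ of Eq.~\eqref{eq: def eta} come out exactly as stated rather than merely up to an unspecified constant; since the theorem gives explicit formulae, these have to be chased through carefully rather than hidden in $\mc{O}(\cdot)$ notation. Everything else — the admissible range of $m$, the final assembly of $\mathsf{B}$ — follows mechanically from plugging the three ingredients together.
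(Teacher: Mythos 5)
Your outline follows the paper's own proof essentially step for step: transfer the Haar concentration of Theorem~\ref{Thm: typicality processes} to a Schatten-2 surrogate, extract Haar moments with Lemma~\ref{Lemma: Low 2}, pass to the design with Lemma~\ref{Lemma: Low 1} applied to the squared 2-norm quantity (which, unlike $\mc{N}_1$, is a genuine balanced polynomial in the unitary entries), then Markov's inequality and the norm chain $d_\mathsf{S}^{-(2k+1)}\mc{N}_\bdiamond\leq\mc{N}_1\leq\sqrt{d_\mathsf{S}^{2k+1}}\,\mc{N}_2$. The paper simply names the surrogate $\mc{N}_2$ and keeps the entire conversion factor $d_\mathsf{S}^{3(2k+1)/2}$ (per power) in the prefactor, rather than absorbing any of it into $\eta$ or $\mscr{C}$; that is why those constants come out exactly as in Eq.~\eqref{eq: Lipschitz Haar} and Eq.~\eqref{eq: def eta}, with the $\sqrt{\dim}$ you worry about living in $d_\mathsf{S}^{3m(2k+1)}/\delta^{2m}$ and not inside $\eta$.

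Two points in your bookkeeping need repair before this becomes the stated theorem. First, the degree count: you say the polynomial degree $\mathsf{T}$ is ``a small fixed number times the number of steps'', which, fed into $2m\mathsf{T}\leq\design$, would make the admissible range of $m$ shrink with $k$ and contradict $m\in(0,\design/4]$. The theorem is proved in the random-interaction picture, where each $U_i$ is independently design-distributed; the surrogate $\mc{N}_2^{\,2}\leq\f{1}{4}\left[\tr\left(\Upsilon^2\right)-d_\mathsf{S}^{-(2k+1)}\right]$ is then a balanced polynomial of degree $2$ in each unitary separately, so $\mathsf{T}=2$ independently of $k$ and the condition $4m\leq\design$ follows. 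Second, the crux you correctly flag---the coefficient sum $\eta$---is not settled by ``absorbing $\sqrt{\dim}$ factors''; the paper computes it by evaluating $\tr(\Upsilon^2)$ on all-ones matrices while exploiting that the environment indices of the unitaries enter only as a factorized product, so all but one slot can be replaced by $\mbb1_\mathsf{E}\otimes J_\mathsf{S}$ rather than $J_\mathsf{E}\otimes J_\mathsf{S}$. This is what keeps the count at $4\eta\leq d_\mathsf{E}^2 d_\mathsf{S}^{2(k+1)}\left(\sum|\rho_{ij}|\right)^2+d_\mathsf{S}^{-(2k+1)}\leq d_\mathsf{SE}^4 d_\mathsf{S}^{2k}+d_\mathsf{S}^{-(2k+1)}$; a naive all-ones substitution in every slot inflates the $d_\mathsf{E}$ powers and misses the stated $\eta$. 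With those two repairs your plan reproduces the paper's argument.
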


\begin{proof}
The overall strategy is as in Ref.~\cite{Low_2009}: a bound on the moments $\mbb{E}_{\,\appdesign}[\mc{N}_\bdiamond^{\,m}]$ is given in terms of $\mc{B}$, $\mscr{C}$ and $\eta$, followed by Markov's inequality. As we mention by the end of the previous section, this amounts to determining the right-hand sides of Eq.~\eqref{eq: Low lemma 1} and Eq.~\eqref{eq: Low lemma 2}. For us the relevant quantity is the non-Markovianity $\mc{N}_\bdiamond$, which is a fairly hard quantity to work with, however, we can use the relationship of this norm with the family of Schatten-norm non-Markovianity measures to compute relevant bounds.

Unsurprisingly, the hardest quantity to compute is the sum of moduli of coefficients of the non-Markovianity expressed as a polynomial in the unitaries, so the most accessible way to do this is to turn to the non-Markovianity $\mc{N}_2$.

In general, $\|X\|_1\geq\|X\|_2$ for Schatten norms, so given the concentration result for $\mc{N}_1$ in Theorem~\ref{Thm: typicality processes} and the upper bound $\mc{B}\geq\mbb{E}[\mc{N}_1]$ given in Eq.~\eqref{eq: bound on nM}, this also implies
\begin{gather}
    \mbb{P}_\haar\left[\mc{N}_2\geq\mc{B}+\delta\right]\leq\mathrm{e}^{-
\mscr{C}\delta^2},
\end{gather}
so that in turn Eq.~\eqref{eq: Low lemma 2} implies that
\begin{align}
	\mbb{E}_\haar\left[\mc{N}_2^{\,2m}\right]&\leq\left(\f{m}{\mscr{C}}\right)^{m}+(2\mc{B})^{2m}=\left[\f{16m}{(k+1)d_\mathsf{SE}}\left(\f{d_\mathsf{S}^{k+1}-1}{d_\mathsf{S}-1}\right)^2\,\right]^{m}+(2\mc{B})^{2m},
\end{align}
for any $m>0$, and where we have absorbed a multiplicative factor of $4$ in the definition of $\mscr{C}$, which is now as in Eq.~\eqref{eq: Lipschitz Haar}.

For the case of all unitaries at each step being independently sampled, $\mc{N}_2^{\,2}$ is a polynomial of degree $p=2$ when the unitaries are all distinct (random interaction type). We can thus take $\mc{N}_2^{\,2}$ and apply Lemma~\ref{Lemma: Low 1} of Ref.~\cite{Low_2009} for a unitary $\design$-design $\mu_{\,\appdesign}$ with $\design\geq{4m}$, which holds for real $m>0$,\footnote{ The proof of Lemma~\ref{Lemma: Low 1} in Ref.~\cite{Low_2009} requires $m$ to be an integer through the multinomial theorem; in the notation of the cited paper, this can be relaxed to be a real number when $\mu=0$ and applying the multinomial theorem for a real power: convergence will require (an ordering such that) \unexpanded{$|\alpha_t\mbb{E}M_t|>2^{1-n}|\alpha_{t-n}\mbb{E}M_{t-n}|$} for each $n=1,\ldots,t-1$ for both the approximate design and Haar expectations.} as
\begin{align}
    \mbb{E}_{\,\appdesign}\left[\mc{N}_2^{\,2m}\right]\leq\mbb{E}_\haar\left[\mc{N}_2^{\,2m}\right]+\f{\varepsilon}{d_\mathsf{SE}^{\,\design}}\,\eta^{2m},
\end{align}
where $\eta$ is the sum of the moduli of the coefficients of
\begin{align}
    \mc{N}_2^{\,2}&=\left(\f{1}{2}\min_{\Upsilon^\markov}\|\Upsilon-\Upsilon^\markov\|_2\right)^2\leq\f{1}{4}\|\Upsilon-\f{\mbb1}{d_\mathsf{S}^{2k+1}}\|_2^2=\f{1}{4}\left[\tr\left(\Upsilon^2\right)-d_\mathsf{S}^{-(2k+1)}\right],
    \label{eq: appendix eta}
\end{align}
so our problem now boils down to computing the sum of the moduli of the coefficients of the purity, or noisiness, of the process.

Let us explicitly write the process $\Upsilon$ as a function of the set of unitary maps $\mathfrak{U}:=\{\mc{U}_i\}_{i=0}^k$, i.e.
\begin{align}
    &\Upsilon[\mathfrak{U}]=\tr_\mathsf{E}\left[\,\mc{U}_k\,\mc{S}_k\,\mc{U}_{k-1}\mc{S}_{k-1}\cdots\mc{U}_1\mc{S}_1\left(\rho\otimes\mathsf{\Psi}^{\otimes\,k}\right)\,\right]\nonumber\\
    &=\!\!\sum_{\alpha,\ldots,\delta}\!\!\tr_\mathsf{E}\left[U_k\FS_{\alpha_k\beta_k}\!\cdots{U}_1\FS_{\alpha_1\beta_1}\rho\FS_{\delta_1\gamma_1}U_1^\dg\cdots\FS_{\delta_k\gamma_k}U_k^\dg\right]\otimes|\beta_1\alpha_1\!\cdots\beta_k\alpha_k\rangle\!\langle\delta_1\gamma_1\!\cdots\delta_k\gamma_k|,
    \label{eq: pt design proof}
\end{align}
where the sum runs over all Greek indices from $1$ to $d_\mathsf{S}$, with $\mathsf\Psi$ being a maximally entangled state acting in the respective $d_\mathsf{S}$-dimensional ancillary spaces $\mathsf{A}_i\mathsf{B}_i$, where $\mc{S}_i$ are swaps between \gls{syst} and ancillary system $\mathsf{A}_i$ at time-step $i$, and where $\FS_{\alpha\beta}=\mbb{1}_\mathsf{E}\otimes|\alpha\rangle\!\langle\beta|$. Here as well, full detail can be revisited around the definition in Eq.~\eqref{eq: process tensor Choi state}.

Now, the standard approach to compute the sum of the moduli of the coefficients of a given polynomial is to evaluate on an argument (here a $d_\mathsf{SE}\times{d}_\mathsf{SE}$ matrix) full of ones (so that all single monomials equal to one) and take each summand to the corresponding modulus. We follow this approach, however, we first notice that the environment part in Eq.~\eqref{eq: pt design proof} is just a product of the environment parts of all unitaries and initial state.\footnote{ Let \unexpanded{$U=\sum{U}^{es}_{e^\prime{s}^\prime}|es\rangle\!\langle{e}^\prime{s}^\prime|$} where \unexpanded{$|e\rangle$ and $|s\rangle$} are $\mathsf{E}$ and $\mathsf{S}$ bases. Unitarity then implies \unexpanded{$\sum\overline{U}_{es}^{ab}U_{\epsilon\sigma}^{ab}=\delta_{e\epsilon}\delta_{s\sigma}$}, and so this means that \unexpanded{$\tr_\mathsf{E}[V\mf{S}_{\alpha\beta}U\rho{U}^\dg\mf{S}_{\gamma\delta}V^\dg]=\sum
{V}^{es}_{e^\prime{s}^\prime}\overline{V}^{e\sigma}_{e^\prime\sigma^\prime}{U}^{e^\prime{s}_2}_{b{s}_2^\prime}
\overline{U}^{e^\prime\sigma_2}_{b\sigma_2^\prime}\rho^{br}_{bt}\,\phi(S)$} where \unexpanded{$\phi(S)$} stands for the system $\mathsf{S}$ part; for each $b$ index the rest of the terms are summed over $e$. This generalizes similarly for any number of unitaries.} This implies that at most $d_\mathsf{E}$ terms need to be set to one and we can evaluate $\Upsilon$ in a set of matrices $\mscr{J}=\{\mbb1_\mathsf{E}\otimes{J}_\mathsf{S},\cdots,\mbb1_\mathsf{E}\otimes{J}_\mathsf{S},J_\mathsf{E}\otimes{J}_\mathsf{S}\}$ with $J$ a matrix with each element equal to one in the respective \gls{env} or \gls{syst} systems: let $\rho= \sum \rho_{ese^\prime{s}^\prime}|es\rangle\!\langle{e^\prime{s}^\prime}|$, then
\begin{align}
    &\Upsilon[\mscr{J}]=d_\mathsf{S}^{-k}\sum\rho_{ese^\prime{s}^\prime}\tr\left[d_\mathsf{E}J_\mathsf{E}|e\rangle\!\langle{e}^\prime|\right]{J}_\mathsf{S}|\alpha_k\rangle\!\langle\beta_k|\cdots|\alpha_1\rangle\!\langle\beta_1|J_\mathsf{S}|s\rangle\!\langle{s}^\prime|J_\mathsf{S}|\delta_1\rangle\!\langle\gamma_1|\cdots|\delta_k\rangle\!\langle\gamma_k|J_\mathsf{S}\nonumber\\
    &\qquad\qquad\qquad\qquad\qquad\qquad\otimes|\beta_1\alpha_1\cdots\beta_k\alpha_k\rangle\!\langle\delta_1\gamma_1\cdots\delta_k\gamma_k|\nonumber\\
    &=\f{d_\mathsf{E}}{d_\mathsf{S}^k}\sum\rho_{ese^\prime{s}^\prime}\,{J}_\mathsf{S}|\alpha_k\rangle\!\langle\beta_k|\cdots|\alpha_1\rangle\!\langle\beta_1|J_\mathsf{S}|s\rangle\!\langle{s}^\prime|J_\mathsf{S}|\delta_1\rangle\!\langle\gamma_1|\cdots|\delta_k\rangle\!\langle\gamma_k|J_\mathsf{S}\nonumber\\
    &\qquad\qquad\qquad\qquad\qquad\qquad\otimes|\beta_1\alpha_1\cdots\beta_k\alpha_k\rangle\!\langle\delta_1\gamma_1\cdots\delta_k\gamma_k|,
\end{align}
and hence (we now omit the subindex $\mathsf{S}$ on the $J$ matrices for simplicity),
\begin{align}
    &\left(\f{d_\mathsf{S}^k}{d_\mathsf{E}}\right)^2
    \tr[\Upsilon^2(\mscr{J})]=
    \sum\rho_{ese^\prime{s}^\prime}\rho_{\epsilon\sigma\epsilon^\prime\sigma^\prime}\tr[
    {J}|\alpha_k\rangle\!\langle\beta_k|\cdots|\alpha_1\rangle\!\langle\beta_1|J|s\rangle\!\langle{s}^\prime|J|\delta_1\rangle\!\langle\gamma_1|\cdots\nonumber\\
    &\qquad\qquad\qquad J|\delta_k\rangle\!\langle\gamma_k|J^2|\gamma_k\rangle\!\langle\delta_k|J\cdots|\gamma_1\rangle\!\langle\delta_1|J|\sigma\rangle\!\langle\sigma^\prime|J|\beta_1\rangle\!\langle\alpha_1|J\cdots|\beta_k\rangle\!\langle\alpha_k|J
    ]\nonumber\\
    &=
    d_\mathsf{S}^2\sum\rho_{ese^\prime{s}^\prime}\rho_{\epsilon\sigma\epsilon^\prime\sigma^\prime}\tr[J|\alpha_k\rangle\!\langle\beta_k|\cdots|\alpha_1\rangle\!\langle\beta_1|J|s\rangle\!\langle{s}^\prime|J|\delta_1\rangle\!\langle\gamma_1|\cdots\nonumber\\
    &\qquad\qquad\qquad \langle\gamma_{k-1}|J|\delta_k\rangle\!\langle\delta_k|J|\gamma_{k-1}\rangle\cdots|\gamma_1\rangle\!\langle\delta_1|J|\sigma\rangle\!\langle\sigma^\prime|J|\beta_1\rangle\!\langle\alpha_1|J\cdots|\beta_k\rangle\!\langle\alpha_k|J]\nonumber\\
    &=
    d_\mathsf{S}^{2k+1}\sum\rho_{ese^\prime{s}^\prime}\rho_{\epsilon\sigma\epsilon^\prime\sigma^\prime}
    \,\tr[J|\alpha_k\rangle\!\langle\beta_k|\cdots|\alpha_1\rangle\!\langle\beta_1|J|s\rangle\!\langle{s}^\prime|J|\sigma\rangle\!\langle\sigma^\prime|J|\beta_1\rangle\!\langle\alpha_1|J\cdots|\beta_k\rangle\! \langle\alpha_k|J]\nonumber\\
    &=
    d_\mathsf{S}^{2k+3}\sum\rho_{ese^\prime{s}^\prime}\rho_{\epsilon\sigma\epsilon^\prime\sigma^\prime}
     \langle\beta_k|J|\alpha_{k-1}\rangle\cdots|\alpha_1\rangle\!\langle\beta_1|J|s\rangle\!\langle{s}^\prime|J|\sigma\rangle\!\langle\sigma^\prime|J|\beta_1\rangle\!\langle\alpha_1|\cdots \langle\alpha_{k-1}|J|\beta_k\rangle\nonumber\\
    &=
    d_\mathsf{S}^{2k+5}\sum\rho_{ese^\prime{s}^\prime}\rho_{\epsilon\sigma\epsilon^\prime\sigma^\prime}
    \langle\beta_{k-1}|J|\alpha_{k-2}\rangle\cdots|\alpha_1\rangle\!\langle\beta_1|J|s\rangle\!\langle{s}^\prime|J|\sigma\rangle\!\langle\sigma^\prime|J|\beta_1\rangle\!\langle\alpha_1|\cdots \langle\alpha_{k-1}|J|\beta_{k-1}\rangle\nonumber\\
    &=
    d_\mathsf{S}^{2(2k+1)}\sum\rho_{ese^\prime{s}^\prime}\rho_{\epsilon\sigma\epsilon^\prime\sigma^\prime},
\end{align}
where to obtain the second line we used the fact that $J^n=d_\mathsf{S}^{n-1}J$ for positive integers $n$, here applied for $n=2$, together with the trace over system $\mathsf{S}$ given by $\sum\langle\gamma_k|\cdot|\gamma_k\rangle$. This is similarly done to get the third line by $\sum|\delta_k\rangle\!\langle\delta_k|=\mbb1_\mathsf{S}$, and taking the trace summing over $|\gamma_{k-1}\rangle$, which can subsequently be done for all $|\gamma_i\rangle$ and $|\delta_i\rangle$. For the fourth line the cyclicity of the trace was used, followed by taken an identity summing up over $|\alpha_k\rangle$, using $J^2=d_\mathsf{S}J$ and taking the trace. This can be done through all remaining steps, giving the last equality.
This, together with Eq.~\eqref{eq: appendix eta}, implies that (now writing simply $i$, $j$ for $\mathsf{SE}$ indices),
\begin{align}
    4\eta&\leq d_\mathsf{E}^2d_\mathsf{S}^{2(k+1)}\left(\sum|\rho_{ij}|\right)^2+\f{1}{d_\mathsf{S}^{2k+1}}
    \nonumber\\&
    \leq{d}_\mathsf{E}^4d_\mathsf{S}^{2(k+2)}\sum|\rho_{ij}|^2+\f{1}{d_\mathsf{S}^{2k+1}}
    \nonumber\\&
    \leq{d}_\mathsf{E}^4d_\mathsf{S}^{2(k+2)}+\f{1}{d_\mathsf{S}^{2k+1}},
\end{align}
where in the second line we used $\|X\|_1^2\leq{d}\|X\|_2^2$ for element-wise norms $\|X\|_p^p=(\sum|x_{ij}|^p)$ and in the third line we used $\|\rho\|_2^2\leq1$.

We can finally put everything together as follows. As $d_\mathsf{S}^{-2k1-1}\mc{N}_\bdiamond\leq\mc{N}_1\leq\sqrt{d_\mathsf{S}^{2k+1}}\mc{N}_2$, also for $0<m\leq{t}/4$,
\begin{align}
    &\mbb{P}_{\appdesign}[\mc{N}_\bdiamond\geq\delta] \leq \ \mbb{P}_{\appdesign} \left[ \sqrt{d_\mathsf{S}^{3(2k+1)}}\,\mc{N}_2\geq\delta \right]
    =\mbb{P}_{\appdesign} \left[ \mc{N}_2^{\,2m}\geq\f{\delta^{2m}}{d_\mathsf{S}^{3m(2k+1)}} \right]\nonumber\\
    \leq& \f{d_\mathsf{S}^{3m(2k+1)}\,\mbb{E}_{\,\appdesign}\mc{N}_2^{\,2m}}{\delta^{2m}}
    \leq\left(\f{d_\mathsf{S}^{3(2k+1)}}{\delta^2}\right)^m\left[\left(\f{4m}{\mscr{C}}\right)^{m}+(2\mc{B})^{2m}+\f{\epsilon}{d_\mathsf{SE}^{\,t}}\eta^{2m}\right]\nonumber\\
    =&\left(\f{d_\mathsf{S}^{3(2k+1)}}{\delta^2}\right)^m\left\{\left[\f{16m}{(k+1)\,d_\mathsf{SE}}\left(\f{d_\mathsf{S}^{k+1}-1}{d_\mathsf{S}-1}\right)^2\right]^{m}
    \!\!+(2\mc{B})^{2m}+\f{\epsilon}{16^md_\mathsf{SE}^{\,t}}\left(d_\mathsf{E}^4d_\mathsf{S}^{2(k+2)}+\f{1}{d_\mathsf{S}^{2k+1}}\right)^{2m}\right\}\label{largedev_design},
\end{align}
where in the third line we used Markov's inequality.
\end{proof}

The choice of $0<m\leq\design/4$ can be made to optimize the right-hand-side of the inequality, which ideally should be small whenever $\delta$ is. The term $d_\mathsf{S}^{3(2k+1)}/\delta^2$ arises from bounding $\mc{N}_\bdiamond$ and Markov's inequality, while the three summands within square brackets will be small provided \emph{i}) $\mscr{C}$ is large, \emph{ii}) $\mc{B}$ is small and \emph{iii}) the unitary design sufficiently small $\varepsilon$ and large $\design$ {is well-approximate and high enough}. For conditions \emph{i}) and \emph{ii}), as in Chapter~\ref{sec:typicality}, we require a fixed $k$ such that $d_\mathsf{E}\gg{d}_\mathsf{S}^{2k+1}$: this implies $\mc{B}\approx0$ so that
\begin{align}
    \mbb{P}_{\appdesign}\left[\mc{N}_\bdiamond\geq\delta\right]
    &\lesssim\left(\f{d_\mathsf{S}^{3(2k+1)}}{\delta^2}\right)^m\left\{\left[\f{16m}{(k+1)\,d_\mathsf{SE}}\left(\f{d_\mathsf{S}^{k+1}-1}{d_\mathsf{S}-1}\right)^2\right]^{m}\hspace{-0.1in}+\f{\varepsilon}{16^md_\mathsf{SE}^{\,\design}}\left({d}_\mathsf{E}^4d_\mathsf{S}^{2(k+2)}+\f{1}{d_\mathsf{S}^{2k+1}}\right)^{2m}\right\} \nonumber\\[0.1in]
    &\approx\left\{\left[\f{16m}{\delta^2(k+1)}\f{d_\mathsf{S}^{2(4k+1)}}{d_\mathsf{E}}\right]^{m} +\varepsilon\f{d_\mathsf{E}^{8m-\design}d_\mathsf{S}^{m(10k+11)-\design}}{\delta^{2m}16^m}\right\}
    \label{LargeDev_BigE}.
\end{align}

Now, supposing the $\design$-design is exact, i.e. $\varepsilon=0$, we require $m\leq\delta^2\f{(k+1)d_\mathsf{E}}{16\,d_\mathsf{S}^{6k}}$, together with $m\leq\design/4$. On the other hand if $\varepsilon$ is non-zero, we require
\begin{align}
    \varepsilon\ll\left[\delta^2\left(\f{2}{d_\mathsf{E}^2d_\mathsf{S}^{(10k+11)/4}}\right)^4\right]^md_\mathsf{SE}^{\,\design}.
\end{align}

The choice of real $m$ is only restricted by $0<m\leq\design/4$, but is otherwise arbitrary. The right-hand side of Eq.~\eqref{eq: main markovianization thm} is not monotonic in $m$ over all the remaining parameters, so it won't always be optimal for some fixed choice. One may thus optimize the choice of $m$ numerically for each particular case.

\begin{figure}[t]
\centering
\begin{minipage}{\textwidth}
\centering
\begin{minipage}{0.84\textwidth}
\begin{tikzpicture}
\node[anchor=south west, inner sep=0] (image) at (0,0) {\includegraphics[width=\textwidth]{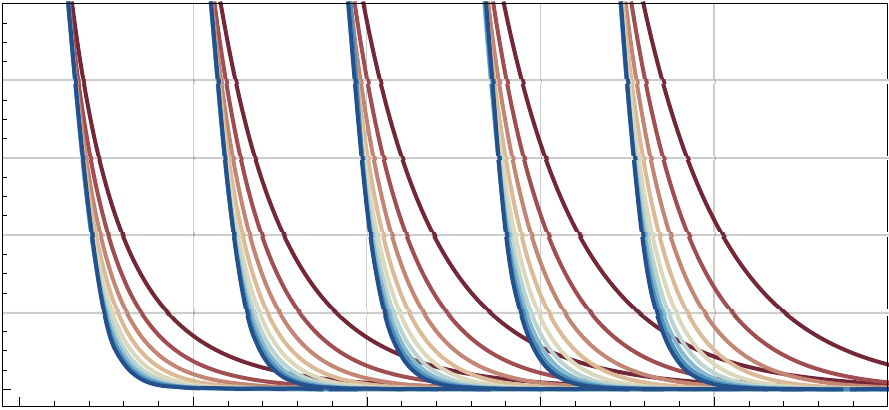}};
\begin{scope}[x={(image.south east)},y={(image.north west)}]
  \node[right] at (0,1.075) {\large{$\mathsf{B}$}};
  \node at (0.5,-0.175)
  {\large$\log_2(d_\mathsf{E})$};
  \node[below, right] at (0,-0.05)
  {\large{10}};
  \node[below] at (0.5,0)
  {\large{35}};
  \node[below,left] at (1,-0.05)
  {\large{60}};
  \node[rotate=90, above] at (0,1) {\large{1}};
  \node[rotate=90, above] at (0,0.51) {\large{0.5}};
  \node[rotate=90, above] at (0,0) {\large{0}};
  \node[below] at (0.168,0.8) {\large$k=0$};
  \node[below] at (0.338,0.8) {\large$k=1$};
  \node[below] at (0.493,0.8) {\large$k=2$};
  \node[below] at (0.65,0.8) {\large$k=3$};
  \node[below] at (0.82,0.8) {\large$k=4$};
  \end{scope}
\end{tikzpicture}
\end{minipage}%
\begin{minipage}{0.15\textwidth}
\hspace*{0.25in}
\vspace*{0.325in}
\begin{tikzpicture}
\node[anchor=south west, inner sep=0] (image) at (0,0) {\includegraphics[width=0.25\textwidth]{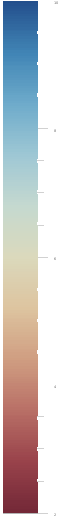}};
\begin{scope}[x={(image.south east)},y={(image.north west)}]
  \node[above] at (0.425,1) {\large$\mathsf{t}$};
  \node[rotate=90,above] at (0,0) {\large{2}};
  \node[rotate=90,above] at (0,0.975) {\large{10}};
  \end{scope}
\end{tikzpicture}
\end{minipage}
\end{minipage}
\caption[Large deviations bound on non-Markovianity with unitary designs]{\textbf{Large deviations bound on non-Markovianity with unitary designs:} Upper bound $\mathsf{B}$ on $\mbb{P}_\appdesign[\,\mc{N}_\bdiamond\geq0.1]$ defined by Eq.~\eqref{eq: largedev_design} against $\log_2(d_\mathsf{E})$ for a subsystem qubit undergoing a joint closed approximate unitary design interaction at each step. We fix an $\varepsilon=10^{-12}$ approximate unitary $\design$-design for different values $2\leq\design\leq10$ and fixed values of timesteps $k$, optimizing $m$ for each case.}
  \label{Fig: plot Prob-dE}
\end{figure}

Overall, the bound in Eq.~\eqref{eq: largedev_design} approaches concentration whenever $d_\mathsf{E}$ is large relative to $d_\mathsf{S}$ and $k$, together with large enough $\design$, as shown in Fig.~\ref{Fig: plot Prob-dE}. Therefore, the vast majority of processes sampled from such a $\design$-design are indistinguishable from Markovian ones in this limit. Despite the final bound in Theorem~\ref{Thm: Large Dev Markov} being seemingly complicated, we can apply it to state-of-the-art efficient costructions of simple circuits that generate unitary designs. We now show how these processes can be modelled in terms of random circuits.

\section{Markovianization by circuit design}
While no explicit sets forming unitary $\design$-designs for $\design\ge4$ are known to date, several efficient constructions generating approximate unitary designs by quantum circuits are known. Using these constructions we can highlight the physical implications of the theorem above. We begin by discussing the details of one such construction. As suggested in Fig.~\ref{Fig: Design box and process}$\mathsf{(b)}$, this construction only requires two-qubit interactions and, under certain conditions, yields an approximate unitary design, from which we can use Eq.~\eqref{eq: largedev_design} in Theorem~\ref{Thm: Large Dev Markov} to verify that Markovianization emerges.

In our manuscript we focused specifically on Result 2 of Ref.~\cite{Winter_HamDesign}, where a circuit with interactions mediated by two-qubit diagonal gates with three random parameters is introduced. To begin with, an efficient approximation for a unitary design on a system composed of $n$-qubits is shown in Ref.~\cite{Winter_HamDesign} for a circuit labeled $\mathrm{RDC}(\mc{I}_2)$, where the name stands for \emph{Random Diagonal Circuit}, and refers to a circuit where $\mc{I}_2=\{I_i\}$ is a set of subsets of qubit labels $I_i\subset\{1,\ldots,n\}$, such that $|I_i|=2$, i.e., at step $i$, $I_i$ picks a pair of qubits, to which a Pauli-$Z$-diagonal gate with three random parameters is applied. The same idea follows in general for an arbitrary number of qubits $|I_i|$, but here we focus on the case of two-qubit interactions only. This construction can already be seen in Ref.~\cite{2designsXZ} as arising from only two types of random diagonal interactions, which can be simplified into a product of $Z$-diagonal ones. The brilliance in this construction lies in the intuition that repeated alternate applications of these diagonal gates quickly randomizes the system.

This idea now fully captures the gas scenario depicted in Fig.~\ref{Fig: Design box and process}$\mathsf{(a)}$, where we only have two types of random two-body interactions repeatedly occurring, and we focus on one of the particles of the gas. We can more concretely illustrate this idea in Fig.~\ref{fig: RDC}, where we depict an $n$-qubit \gls{syst-env} composite with $k$ interventions on one of the qubits, with the unitary interactions within the circuit being only between pairs of qubits and of only two kinds.

\begin{figure}[t]
\centering
\begin{tikzpicture}
  \begin{scope}[xscale=0.575, yscale=0.55]
 \draw[-, C2, thick, rounded corners, fill=C2!1!white](1.25,-0.5) -- (1.25,-6.75) -- (4.75,-6.75) -- (4.75,-5.35) -- (5.35,-5.35) -- (5.35,-0.5) -- cycle;
 
 \draw[-,C2, thick, rounded corners, fill=C2!1!white](5.55,-0.5) -- (10.75,-0.5) -- (10.75,-5.35) -- (9.75,-5.35) -- (9.75,-6.75) -- (6.75,-6.75) -- (6.75,-5.35) -- (5.55,-5.35)  -- cycle;
 
\draw[-,C2, thick, rounded corners, fill=C2!1!white](15.5,-0.5) -- (20.75,-0.5) -- (20.75,-5.35) -- (19.8,-5.35) -- (19.8,-6.75) -- (16.75,-6.75) -- (16.75,-5.35) --  (15.5,-5.35) -- cycle;
 
 \node[below] at (3,-6.76) {\Large$\mc{W}_{\ell_0}$};
 \node[below] at (8.3,-6.76) {\Large$\mc{W}_{\ell_1}$};
 \node[below] at (18.35,-6.76) {\Large$\mc{W}_{\ell_k}$};
 
 \foreach \i in {1,...,6}{
 \node at (0,-\i) {\Large$|0\rangle$};
 \draw[] (0.75,-\i) -- (24,-\i);
 }
 
 \draw[thick] (1.75,-1.25) -- (1.75,-1.75);
\shade[inner color=white, outer color=C1, draw=black, rotate around={45:(1.75,-1)},thick] (1.5,-0.75) rectangle (2,-1.25);
\shade[inner color=white, outer color=C1, draw=black, rotate around={45:(1.75,-2)}, thick] (1.5,-1.75) rectangle (2,-2.25);
 
 \draw[thick] (1.75,-4.25) -- (1.75,-5.75);
 \shade[bottom color=C2, top color=white, draw=black] (1.5,-3.75) rectangle (2,-4.25);
 \shade[bottom color=C2, top color=white, draw=black] (1.5,-5.75) rectangle (2,-6.25);
 
 \draw[thick] (2.75,-3.25) -- (2.75,-4.75);
 \shade[inner color=white, outer color=C1, draw=black, rotate around={45:(2.75,-3)}] (2.5,-2.75) rectangle (3,-3.25);
 \shade[inner color=white, outer color=C1, draw=black, rotate around={45:(2.75,-5)}] (2.5,-4.75) rectangle (3,-5.25);
 
 \draw[thick] (3.75,-4.25) -- (3.75,-5.75);
 \shade[bottom color=C2, top color=white, draw=black] (3.5,-3.75) rectangle (4,-4.25);
 \shade[bottom color=C2, top color=white, draw=black] (3.5,-5.75) rectangle (4,-6.25);
 
 \draw[thick] (4.75,-2.25) -- (4.75,-4.75);
 \shade[inner color=white, outer color=C1, draw=black, rotate around={45:(4.75,-2)}] (4.5,-1.75) rectangle (5,-2.25);
 \shade[inner color=white, outer color=C1, draw=black, rotate around={45:(4.75,-5)}] (4.5,-4.75) rectangle (5,-5.25);
 
  \shade[outer color=C3!60!white, inner color=white, draw=black, rounded corners, thick] (5,-5.5) rectangle (6.5,-7);
  \node at (5.75,-6.25) {\Large$\mc{A}_0$};
 
 \draw[thick] (6.1,-1.25) -- (6.1,-3.75);
 \shade[bottom color=C2, top color=white, draw=black] (5.85,-0.75) rectangle (6.35,-1.25);
 \shade[bottom color=C2, top color=white, draw=black] (5.85,-3.75) rectangle (6.35,-4.25);
 
 \draw[thick] (7.25,-2.25) -- (7.25,-2.75);
 \shade[inner color=white, outer color=C1, draw=black, rotate around={45:(7.25,-2)}] (7,-1.75) rectangle (7.5,-2.25);
 \shade[inner color=white, outer color=C1, draw=black, rotate around={45:(7.25,-3)}] (7,-2.75) rectangle (7.5,-3.25);
 
 \draw[thick] (8.25,-1.25) -- (8.25,-4.75);
  \shade[bottom color=C2, top color=white, draw=black] (8,-0.75) rectangle (8.5,-1.25);
 \shade[bottom color=C2, top color=white, draw=black] (8,-4.75) rectangle (8.5,-5.25);
 
 \draw[thick] (9.25,-2.25) -- (9.25,-3.75);
 \shade[inner color=white, outer color=C1, draw=black, rotate around={45:(9.25,-2)}] (9,-1.75) rectangle (9.5,-2.25);
 \shade[inner color=white, outer color=C1, draw=black, rotate around={45:(9.25,-4)}] (9,-3.75) rectangle (9.5,-4.25);
 
 \shade[outer color=C3!60!white, inner color=white, draw=black, rounded corners, thick] (10,-5.5) rectangle (11.5,-7);
  \node at (10.75,-6.25) {\Large$\mc{A}_1$};
  
 \draw[thick] (10.25,-3.25) -- (10.25,-4.75);
  \shade[bottom color=C2, top color=white, draw=black] (10,-2.75) rectangle (10.5,-3.25);
 \shade[bottom color=C2, top color=white, draw=black] (10,-4.75) rectangle (10.5,-5.25);
 
 \draw[thick] (11.25,-1.25) -- (11.25,-3.75);
 \shade[inner color=white, outer color=C1, draw=black, rotate around={45:(11.25,-1)}] (11,-0.75) rectangle (11.5,-1.25);
 \shade[inner color=white, outer color=C1, draw=black, rotate around={45:(11.25,-4)}] (11,-3.75) rectangle (11.5,-4.25);
 
 \draw[thick] (12.25,-3.25) -- (12.25,-5.75);
 \shade[bottom color=C2, top color=white, draw=black] (12,-2.75) rectangle (12.5,-3.25);
 \shade[bottom color=C2, top color=white, draw=black] (12,-5.75) rectangle (12.5,-6.25);
 
  \shade[outer color=C3!60!white, inner color=white, draw=black, rounded corners, thick] (14.75,-5.5) rectangle (16.25,-7);
  \node at (15.25,-6.25) {\Large$\mc{A}_{k-1}$};
 
 \draw[thick] (16.25,-1.25) -- (16.25,-2.75);
 \shade[inner color=white, outer color=C1, draw=black, rotate around={45:(16.25,-1)}] (16,-0.75) rectangle (16.5,-1.25);
 \shade[inner color=white, outer color=C1, draw=black, rotate around={45:(16.25,-3)}] (16,-2.75) rectangle (16.5,-3.25);
 
 \draw[white, fill=white, path fading= east] (15,0.5) -- (17,0.5) -- (17,-7) -- (15,-7);
 \draw[white, fill=white] (13,0.5) -- (15,0.5) -- (15,-7) -- (13,-7);
 \draw[white, fill=white, path fading= west] (11,-7) -- (11,0.5) -- (13,0.5) -- (13,-7) ;
 
 \node at (14,-3.5) {$\cdots$};
 
 \draw[thick] (17.25,-2.25) -- (17.25,-3.75);
 \shade[bottom color=C2, top color=white, draw=black] (17,-1.75) rectangle (17.5,-2.25);
 \shade[bottom color=C2, top color=white, draw=black] (17,-3.75) rectangle (17.5,-4.25);
 
 \draw[thick] (17.25,-5.25) -- (17.25,-5.75);
 \shade[bottom color=C2, top color=white, draw=black] (17,-4.75) rectangle (17.5,-5.25);
 \shade[bottom color=C2, top color=white, draw=black] (17,-5.75) rectangle (17.5,-6.25);
 
 \draw[thick] (18.25,-4.25) -- (18.25,-4.75);
 \shade[inner color=white, outer color=C1, draw=black, rotate around={45:(18.25,-4)}] (18,-3.75) rectangle (18.5,-4.25);
 \shade[inner color=white, outer color=C1, draw=black, rotate around={45:(18.25,-5)}] (18,-4.75) rectangle (18.5,-5.25);
 
 \draw[thick] (19.25,-3.25) -- (19.25,-5.75);
 \shade[bottom color=C2, top color=white, draw=black] (19,-2.75) rectangle (19.5,-3.25);
 \shade[bottom color=C2, top color=white, draw=black] (19,-5.75) rectangle (19.5,-6.25);
 
 \shade[outer color=C3!60!white, inner color=white, draw=black, rounded corners, thick] (20,-5.5) rectangle (21.5,-7);
  \node at (20.75,-6.25) {\Large$\mc{A}_k$};
  
 \draw[thick] (20.25,-2.25) -- (20.25,-3.75);
  \shade[inner color=white, outer color=C1, draw=black, rotate around={45:(20.25,-2)}] (20,-1.75) rectangle (20.5,-2.25);
 \shade[inner color=white, outer color=C1, draw=black, rotate around={45:(20.25,-4)}] (20,-3.75) rectangle (20.5,-4.25);
 
 \draw[thick] (21.25,-1.25) -- (21.25,-4.75);
 \shade[bottom color=C2, top color=white, draw=black] (21,-0.75) rectangle (21.5,-1.25);
 \shade[bottom color=C2, top color=white, draw=black] (21,-4.75) rectangle (21.5,-5.25);
  
 \draw[thick] (22.25,-3.25) -- (22.25,-5.75);
  \shade[inner color=white, outer color=C1, draw=black, rotate around={45:(22.25,-3)}] (22,-2.75) rectangle (22.5,-3.25);
 \shade[inner color=white, outer color=C1, draw=black, rotate around={45:(22.25,-6)}] (22,-5.75) rectangle (22.5,-6.25);
 
 \draw[thick] (23.25,-2.25) -- (23.25,-4.75);
 \shade[bottom color=C2, top color=white, draw=black] (23,-1.75) rectangle (23.5,-2.25);
 \shade[bottom color=C2, top color=white, draw=black] (23,-4.75) rectangle (23.5,-5.25);
  
  \draw[white, fill=white, path fading= west] (22,0.5) -- (24,0.5) -- (24,-6.5) -- (22,-6.5);
  \end{scope}
  \end{tikzpicture}
  \caption[Markovianization by circuit design]{\textbf{Markovianization by circuit design:} Cartoon of a quantum process which can Markovianize under only two different types of 2-qubit interaction dynamics. For an $n$-qubit system, the unitaries $\mc{W}_\ell$ defined by Eq.~\eqref{eq: circuit W} generate an $\varepsilon$-approximate unitary $\design$-design whenever $\ell\geq\design-\log_2(\varepsilon)/n$, as found in Ref.~\cite{Winter_HamDesign}. This can be thought as stemming from repeated alternate applications of random 2-qubit gates diagonal in only two Pauli bases (rectangles and diamonds). A qubit probed with a set of operations $\{\mc{A}_i\}$ on a system undergoing $\varepsilon$-approximate unitary $\design$-design dynamics $\mc{W}_\ell$ on a large environment will Markovianize for small design error $\varepsilon$ and large complexity $\design$ as specified in the main text.}
  \label{fig: RDC}
  \end{figure}
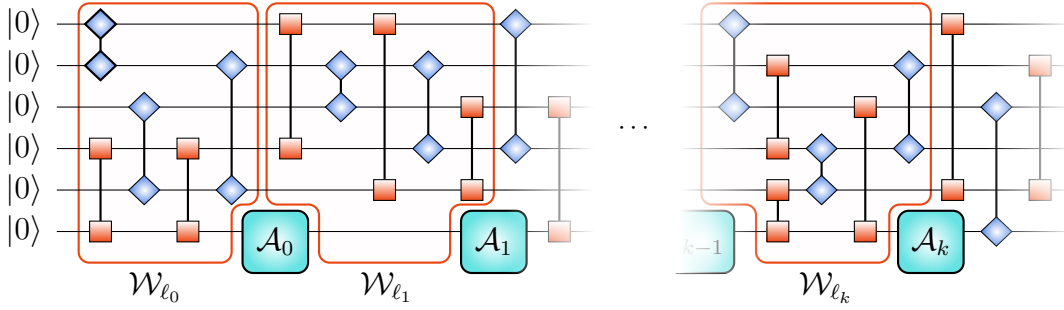
  
 A particular case which further simplifies things by writing all two-qubit gates in a diagonal form is denoted by $\mathrm{RDC}_\text{disc}^{(\design)}(\mc{I}_2)$, where the subscript ``$\text{disc}$'' refers to \emph{discrete} sets from which the parameters of the two-qubit gates will be sampled, and the superscript $\design$ is a natural number which determines this set. Specifically, all gates in $\mathrm{RDC}_\text{disc}^{(\design)}(\mc{I}_2)$ have the simplified form 
\begin{gather}
(\mathrm{diag}\{1,e^{i\phi_1}\}\otimes\mathrm{diag}\{1,e^{i\phi_2}\}) \ \mathrm{diag}\{1,1,1,e^{i\vartheta}\},
\end{gather}
where $\mathrm{diag}$ denotes Pauli-$Z$ diagonal, and with
\begin{equation}
    \phi_1,\phi_2\sim\{2\pi\,m/(\design+1):m\in\{0,\ldots,\design\}\},
\end{equation} chosen independently from such discrete set, and similarly \begin{equation}
    \vartheta\sim\{2\pi\,m/(\lfloor\design/2\rfloor+1):m\in\{0,\ldots,\lfloor\design/2\rfloor\}\}.
\end{equation}

Notice that despite the apparent complexity of this construction, it is still just a circuit comprised only of 2-qubit diagonal gates with only three random parameters each, and therein lies its simplicity. Let us then state the main Result of Ref.~\cite{Winter_HamDesign} that we applied on our result for Markovianization.

\begin{figure}[t]
\centering
\begin{minipage}{\textwidth}
\centering
\begin{minipage}{0.84\textwidth}
\begin{tikzpicture}
\node[anchor=south west, inner sep=0] (image) at (0,0) {\includegraphics[width=\textwidth]{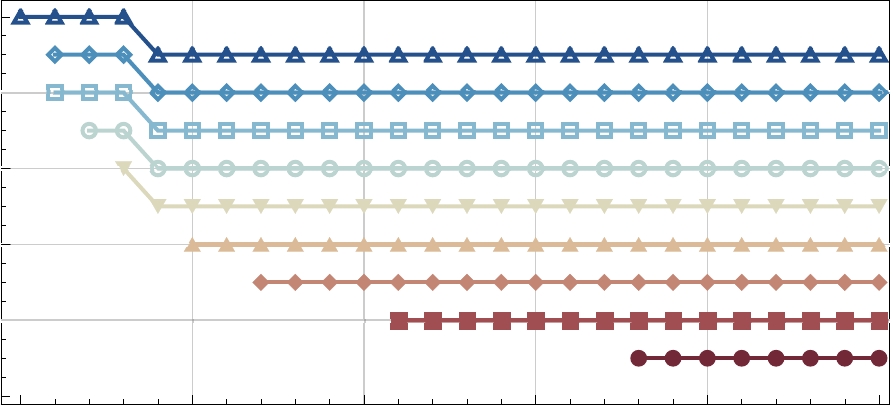}};
\begin{scope}[x={(image.south east)},y={(image.north west)}]
  \node[right] at (-0.05,1.075) {\large{Non-commuting gate depth $\mscr{D}$}};
  \node at (0.5,-0.175)
  {\large{$\log_2(d_\mathsf{E})$}};
  \node[below, right] at (-0.0075,-0.05)
  {\large{35}};
  {\large{40}};
  \node[below] at (0.42,0)
  {\large{45}};
  {\large{50}};
  \node[below] at (0.795,0)
  {\large{55}};
  {\large{60}};
  \node[rotate=90,above] at (0,0.95) {\large{12}};
  \node[rotate=90, above] at (0,0.58) {\large{8}};
  \node[rotate=90, above] at (0,0.22) {\large{4}};
  \node at (0.15,0.15) {\large$k=2$};
  \end{scope}
\end{tikzpicture}
\end{minipage}%
\begin{minipage}{0.15\textwidth}
\hspace{0.25in}
\vspace{0.325in}
\begin{tikzpicture}
\node[anchor=south west, inner sep=0] (image) at (0,0) {\includegraphics[width=0.25\textwidth]{Fig_bar.pdf}};
\begin{scope}[x={(image.south east)},y={(image.north west)}]
  \node[above] at (0.45,1) {\large{$\mathsf{t}$}};
  \node[rotate=90,above] at (0,0) {\large{2}};
  \node[rotate=90,above] at (0,0.975) {\large{10}};
  \end{scope}
\end{tikzpicture}
\end{minipage}
\end{minipage}
\caption[Markovianization by circuit design: non-commuting gate depth]{\textbf{Scaling of the non-commuting gate depth $\mscr{D}$}, as in Eq.~\eqref{eq:depth}, equivalent to that of the minimum amount of repetitions $\ell$ in the $n=n_\mathsf{E}+1$ qubit circuit $\mc{W}_\ell$ on \gls{syst-env}, plotted against the environment qubits, $n_\mathsf{E}=\log_2(d_\mathsf{E})$, to generate an $\varepsilon=10^{-12}$ approximate unitary $\design$-design for $2\leq\design\leq10$, such that for a single-qubit system undergoing a process with $k=2$ timesteps, $\mbb{P}_{\appdesign}\left[\mc{N}_\bdiamond\geq0.1\right]\leq\mathsf{B}\leq0.01$.}
\label{Plot: nqubits}
\end{figure}

\begin{theorem}[Main Result 2 of Ref.~\cite{Winter_HamDesign}]
Let $\mathsf{H}_n = \mathsf{H}^{\otimes{n}}$ be $n$ copies of the Hadamard gate,\footnote{ In the computational basis the Hadamard gate is given by $\mathsf{H}=(|0\rangle\!\langle0|+|1\rangle\!\langle0|+|0\rangle\!\langle1|-|1\rangle\!\langle1|)/\sqrt{2}$.} then for an $n$-qubit system, when $\design$ is of order $\sqrt{n}$, a circuit of the form
\begin{gather}
\mc{W}_\ell:= \left( \mathrm{RDC}_\text{disc}^{(\design)}(\mc{I}_2) \ \mathsf{H}_n \right)^{2\ell} \ \mathrm{RDC}_\text{disc}^{(\design)}(\mc{I}_2),
\label{eq: circuit W}
\end{gather}
yields an $\varepsilon$-approximate unitary $\design$-design if \begin{gather}
    \ell\geq\design - \f{\log_2(\varepsilon)}{n},
\end{gather}
up to leading order in $n$ and $\design$.
\end{theorem}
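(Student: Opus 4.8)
The plan is to follow the standard spectral-gap route for unitary designs, specialized to the random diagonal construction. The object to control is the $\design$-th moment superoperator of $\mc{W}_\ell$, namely $\Phi_{\mc{W}_\ell}:=\mbb{E}\big[\mc{W}_\ell^{\otimes\design}(\cdot)\big]$ acting on $\mscr{B}(\mscr{H}^{\otimes\design})$ with $\mscr{H}\cong\mbb{C}^{2^n}$, and to show it is $\varepsilon$-close, in the normalized monomial sense of Eq.~\eqref{eq: approximate design Low}, to the Haar moment operator $\Phi_\haar:=\mbb{E}_\haar\big[\mc{U}^{\otimes\design}(\cdot)\big]$, which by Schur--Weyl duality (Theorem~\ref{Thm: Schur-Weyl}) is the orthogonal projector onto the span of the permutation operators $\wp_\sigma$. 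First I would observe that, because the $2\ell+1$ building blocks of $\mc{W}_\ell=(\mathrm{RDC}_\text{disc}^{(\design)}(\mc{I}_2)\,\mathsf{H}_n)^{2\ell}\,\mathrm{RDC}_\text{disc}^{(\design)}(\mc{I}_2)$ are sampled independently while the Hadamard layers are deterministic, the moment operator factorizes as $\Phi_{\mc{W}_\ell}=\Phi_0\,(\Phi_1)^{2\ell}$, where $\Phi_1$ is the moment operator of one $\mathrm{RDC}_\text{disc}^{(\design)}(\mc{I}_2)\,\mathsf{H}_n$ block and $\Phi_0$ that of a bare $\mathrm{RDC}_\text{disc}^{(\design)}(\mc{I}_2)$. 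The symmetric sandwiching is exactly what makes the relevant operator self-adjoint and positive semidefinite, so that a spectral decomposition yields clean eigenvalue (rather than singular-value) bounds.

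Next I would note that $\Phi_\haar$ is a fixed point, $\Phi_1\Phi_\haar=\Phi_\haar\Phi_1=\Phi_\haar$, since every permutation operator commutes with $V^{\otimes\design}$ for all $V$. Hence $\Phi_{\mc{W}_\ell}-\Phi_\haar=\Phi_0\big[\Phi_1(\mbb{1}-\Phi_\haar)\big]^{2\ell}$, and the whole problem collapses to controlling the restriction of $\Phi_1$ to the complement of the permutation sector. Setting $g:=\|\Phi_1(\mbb{1}-\Phi_\haar)\|_\infty$ and using that $\Phi_0$ is a contraction (being an average of unitary channels), I obtain $\|\Phi_{\mc{W}_\ell}-\Phi_\haar\|_\infty\le g^{2\ell}$. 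Passing from this operator-norm estimate to the normalized monomial bound of Eq.~\eqref{eq: approximate design Low} costs at most a factor polynomial in the dimension of the $\design$-copy space, of the form $d^{\,c\design}$ with $d=2^n$; imposing $d^{\,c\design}\,g^{2\ell}\le\varepsilon\,d^{-\design}$ and solving for $\ell$ gives $\ell\gtrsim\frac{(c+1)\design\,n+\log_2(1/\varepsilon)}{2\log_2(1/g)}$, which reduces to the claimed threshold $\ell\ge\design-\log_2(\varepsilon)/n$ to leading order in $n$ and $\design$ provided $\log_2(1/g)=\Theta(n)$, i.e. provided a single random diagonal layer already contracts the nontrivial sector by an exponentially small factor $g\sim 2^{-\Theta(n)}$.

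The heart of the argument, and the main obstacle, is therefore showing that $g=\|\Phi_1(\mbb{1}-\Phi_\haar)\|_\infty$ is exponentially small, uniformly for $\design=O(\sqrt n)$. Here I would expand everything in the $\design$-copy Pauli basis. A $Z$-diagonal unitary averaged over its phases acts on this basis as a classical stochastic map that annihilates any $\design$-copy Pauli pattern with nontrivial net $Z$-content on the addressed pair; the discrete phase set of $\mathrm{RDC}_\text{disc}^{(\design)}$ is chosen with precisely enough distinct values to reproduce the first $\design$ Fourier moments of the circle, so it annihilates the same patterns up to the $\design$-th moment. Conjugating by $\mathsf{H}_n$ swaps $X\leftrightarrow Z$, so one full $\mathrm{RDC}\,\mathsf{H}_n$ block implements one step of a symmetric Markov chain on $\design$-copy Pauli patterns whose stationary support is exactly the permutation sector, and whose spectral gap is the quantity I need. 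Bounding this gap away from $1$ is a combinatorial estimate; the delicate point is controlling ``collision'' corrections in which two or more of the $\design$ copies carry the same Pauli label on overlapping registers, which produce subleading but nonzero contributions that could spuriously slow mixing. A birthday-type bound keeps them harmless precisely when $\design^2/n\ll1$, which is the origin of the $\design=O(\sqrt n)$ restriction.

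To finish, I would assemble: establish $g\le 2^{-\Theta(n)}$ from the gap estimate, combine with $\|\Phi_{\mc{W}_\ell}-\Phi_\haar\|_\infty\le g^{2\ell}$, and carefully track the normalization factors incurred in converting to Eq.~\eqref{eq: approximate design Low} so that the resulting condition on $\ell$ matches $\design-\log_2(\varepsilon)/n$ to leading order. I expect the genuinely hard work to be entirely concentrated in the third step --- pinning down the spectral gap of the Pauli-pattern Markov chain tightly enough (and for $\design$ as large as $\sqrt n$) that the final bound on $\ell$ is sharp to leading order rather than merely up to an unspecified multiplicative constant. Everything else is bookkeeping: the factorization of moment operators, the fixed-point property, the contraction argument, and the operator-norm-to-monomial conversion are all routine once the gap is in hand.
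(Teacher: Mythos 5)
You should first note that the thesis does not actually prove this statement: it is imported verbatim as Main Result 2 of Ref.~\cite{Winter_HamDesign} and used as a black box in Chapter~\ref{sec: Markovianization by design}, so there is no in-paper proof to compare your argument against; it has to be judged against the cited work. With that caveat, your sketch is structurally consistent with how such results are obtained there: the discrete-phase $\mathrm{RDC}_\text{disc}^{(\design)}$ layers are exact diagonal-unitary $\design$-designs, the Hadamard layers alternate the diagonal basis between $Z$ and $X$, the moment superoperator of the whole circuit factorizes over the independent layers, the Haar projector onto the permutation sector is a fixed point, and convergence is governed by the contraction of the moment operator on the complement of that sector. Your algebraic identity $\Phi_{\mc{W}_\ell}-\Phi_\haar=\Phi_0\big[\Phi_1(\mbb{1}-\Phi_\haar)\big]^{2\ell}$ and the contraction property of $\Phi_0$ are fine.

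The genuine gap is that all of the quantitative content of the theorem is deferred to the step you explicitly postpone, and the stated conclusion does not follow from what you leave yourself. The claim is $\ell\geq\design-\log_2(\varepsilon)/n$ with leading coefficient exactly $1$ in front of $\design$; to get that you need (i) the per-block contraction to satisfy $g\sim 2^{-n}$ up to factors subexponential in $n$ (your weaker assumption $\log_2(1/g)=\Theta(n)$ with an unspecified constant only yields $\ell\gtrsim C\,\design+C'\log_2(1/\varepsilon)/n$, which is a strictly weaker statement than the theorem), and (ii) the conversion from your superoperator-norm estimate to the normalized monomial criterion of Eq.~\eqref{eq: approximate design Low}, with its $d^{-\design}$ normalization, to cost only factors that do not change the coefficient of $\design$ — your generic $d^{\,c\design}$ allowance does change it. Neither point is established, and the mechanism you offer for the restriction $\design=O(\sqrt{n})$ (a birthday bound on Pauli-label collisions) is asserted rather than derived; in the cited analysis the restriction arises from tracking error terms of order $\design^2 2^{-n}$ (and the finite size of the discrete phase sets) through the iteration. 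So what you have is a plausible roadmap whose routine parts are correct, but the heart of the theorem — the sharp spectral-gap estimate and the bookkeeping that pins the leading-order constant — is precisely what is missing.
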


Furthermore, of great relevance in this result is the fact that all the 2-qubit gates in each repetition of $\mc{W}_\ell$, except those in $\mathsf{H}_n$, can be applied simultaneously because they commute~\cite{2designsXZ,Nakata2017decouplingrandom}. Therefore, if $\mc{W}_\ell$ yields an approximate unitary design as above, the order of the non-commuting gate depth $\mscr{D}$, defined in Ref.~\cite{Nakata2017decouplingrandom} as the circuit depth when each commuting part of the circuit is counted as a single part, will coincide with the bound on the order of the number of repetitions $\ell$. That is, the non-commuting gate depth asymptotes to
\begin{gather}\label{eq:depth}
    \mscr{D}\sim\design-\f{\log_2(\varepsilon)}{n}.
\end{gather}

We can now take the system from the toy model of Fig.~\ref{Fig: Design box and process}$\mathsf{(a)}$ as given by a spin locally interacting with a large, $n_\mathsf{E}$-qubit environment via a random time-independent Hamiltonian, with Eq.~\eqref{eq: largedev_design} statistically predicting under which conditions memory effects can be neglected. In Fig.~\ref{Plot: nqubits} we take such a system for a single qubit and demand a bound $\mathsf{B}\leq0.01$ on the probability $\mbb{P}_{\appdesign}\left[\mc{N}_\bdiamond\geq0.1\right]$ for a $k=2$ timestep process; with this, we plot the scaling of the non-commuting gate depth $\mscr{D}$ required to achieve an $\epsilon=10^{-12}$ approximate unitary $\design$-design using $\mc{W}_\ell$ circuits for different values of $2\leq\design\leq10$. While the number of 2-qubit gates is on the order of $10^4$, the number of repetitions $\ell$ is at most 12 for an approximate 10-design and stays mostly constant as the number of environment qubits increases.

This construction naturally accommodates the cartoon example in Fig.~\ref{Fig: Design box and process}: as long as the two interactions in the example together generate the necessary level of complexity, Markovianization will emerge. This shows, in principle, how simple dynamics described by approximate unitary designs can Markovianize under the right conditions, but, moreover, taking the physical interpretation of a qubit locally interacting through two-qubit diagonal unitaries with a large environment, it also hints at how macroscopic systems can display Markovianization of small subsystem dynamics in circuits requiring just a small gate depth. Furthermore, for macroscopic systems with coarse observables, the same Markovianization behaviour would remain resilient to a much larger number of interventions.

\section{Conclusions}
The results in this Chapter, based on Ref.~\cite{FigueroaRomero2020makovianization} show that quantum processes with physically motivated interactions can Markovianize, in the sense of becoming Markovian with very high probability in suitable limits. Beyond foundational considerations, our results have direct consequences for the study of open systems using standard tools, such as master equations and dynamical maps. As we saw in Chapter~\ref{sec:processes}, these can be seen as a family of one-step process tensors, where in the presence of initial \gls{syst-env} correlations a minimum of two steps must be considered~\cite{Ringbauer_2015, Vega_2020}. Specifically, for the case of $k\le 2$, our result can be used to estimate the time scale, using gate depth as a proxy, on which an approximate unitary design's open dynamics can be described (with high probability) with a truncated memory kernel~\cite{breuer2002theory, Cerrillo2014, Pollock2018tomographically}, or even a Markovian master equation.

Conversely, for larger $k$, our results would have implications for approximations made in computing higher order correlation functions, such as the quantum regression theorem~\cite{Guarnieri2014}. These higher order approximations are independent of those at the level of dynamical maps, which can, e.g., be divisible, even when the process is non-Markovian~\cite{Milz_PRL2019}. This is reflected in the loosening behaviour of the bound in Eq.~\eqref{eq: largedev_design} as the number of timesteps increases, which can be interpreted as a growing potential for temporal correlations to become relevant when more information about the process is accessible.

This breadth of applicability is in contrast with the results presented in Chapter~\ref{sec:typicality} from Ref.~\cite{FigueroaRomero2019almostmarkovian}, which has two main drawbacks: first, as stated above, Haar random interactions do not exist in nature and hence the relevance of the result is limited. Second, the rate of Markovianization is far too strong. Almost all processes, sampled according to the Haar measure, will look highly Markovian even for a large $k$. This, unlike our current result, misses almost all interesting physical dynamical processes.

While the behaviour of the large deviations bound in Theorem~\ref{Thm: Large Dev Markov} is polynomial, rather than exponential, thus not exhibiting concentration per-se, we have nevertheless exemplified how, with modestly large environments and relatively simple interactions, almost Markovian processes can come about with high probability. Physical macroscopic environments will be far larger than the scale shown in Figs.~\ref{Fig: plot Prob-dE}~and~\ref{Plot: nqubits}.

Finally, despite the fundamental relevance of our result, it is well known that typicality arguments can have limited reach. For instance, the exotic Hamiltonians, introduced in Ref.~\cite{Gemmer_2020}, which lead to \emph{strange} relaxation, may not Markovianize even though the \gls{syst-env} process is highly complex with a large \gls{env}. There is also still significant scope for further addressing physical aspects, such as the question of whether, and how, a time-independent Hamiltonian can give rise to an approximate unitary designs~\cite{Winter_HamDesign}, the real-world time scales of Markovianization, or the potential role of different approaches to pseudo-randomness such as that in Ref.~\cite{PhysRevX.7.041015}, where it is shown that driven quantum systems can converge rapidly to the uniform distribution. 

Furthermore, a renewed wave of interest in thermalization has come along with the so-called eigenstate thermalization hypothesis, which is a stronger and seemingly more fundamental condition on thermalization~\cite{Deutsch_ETH,Srednicki_1994, Srednicki_1999, DAlessio_2016, Murthy_2019, Brenes_2020}, and we would thus expect a deep connection in the sense of ETH between Markovianization, thermalization, and/or dynamical equilibration to be forthcoming. In any case, it is clear that many physical systems Markovianize at some scale, and it only remains to discover how.
    \part{Envoi}
    \chapter{Conclusions}
Within this thesis, we have investigated the concepts of equilibration and typicality in the context of quantum stochastic processes, appealing to the motivation of advancing an understanding of the emergence of Markovianity purely from quantum mechanical laws. We have seen that questions regarding this emergence can be phrased in analogous terms to those related to the emergence of statistical mechanics from quantum mechanics.

Specifically, with the original results of Part II of this thesis we have:
\hypersetup{linkcolor=black, linktoc=all}{
\begin{description}
 \item[\cref{sec:equilibration}] Extended the notion of equilibration on average on expectation values of observables to a multitime scenario, establishing sufficient conditions for the multitime statistics due to a sequence of observations on a quantum process to equilibrate.
 \item[\cref{sec:typicality}] Formally shown---without resorting to the Born-Markov assumption or making any approximations---that quantum processes undergone within a finite large dimensional environment concentrate around Markovian ones.
 \item[\cref{sec: Markovianization by design}] Obtained a large deviation bound for quantum processes described by approximate unitary designs, showing that quantum processes can Markovianize in the sense that they can converge to a class of Markovian processes whenever they are undergone within a large environment and under complex enough dynamics.
\end{description}
}
These results were made possible due to the developments in the topics of equilibration and typicality, which relate to the emergence of statistical mechanics solely from quantum mechanics, and the process tensor framework for quantum stochastic processes, both presented in Part I.

In {\bfseries\sffamily\cref{sec:statmech}} we focused on the results on equilibration on average by Ref.~\cite{Short_finite} and typicality by Ref.~\cite{Popescu2006}, which we consider were pivotal regarding the emergence of statistical mechanics purely from quantum mechanics. Topics related to equilibration and typicality are currently highly active ones in fronts such as equilibration time-scales~\cite{Gogolin_2016, Wilming2018} and the eigenstate thermalization hypothesis~\cite{Srednicki_1994,Srednicki_1999,Rigol_2012,DAlessio_2016,Murthy_2019}.

In {\bfseries\sffamily\cref{sec:processes}} we presented the mathematical framework of quantum maps and some of their representations, which allowed us to approach the problem of temporal correlations in open quantum systems. We saw that this is resolved by the process tensor framework for quantum stochastic processes, introduced in Ref.~\cite{Pollock_2018} (and equivalent frameworks with different perspectives in Ref.~\cite{Hardy_2012, Hardy_2016, Werner_2005,Caruso_2014, Chiribella_2008, Chiribella_2009, Chiribella_2013, Oreshkov2012, Costa_2016, Oreshkov_2016_2,Portmann_2017}), which establishes an unambiguous condition for Markovianity capturing all multitime phenomena and memory effects~\cite{Pollock_2018_Markov}. Not unexpectedly, the topic of quantum stochastic processes and the process tensor framework are highly active research areas, with investigations in machine learning~\cite{Luchnikov2020MachineLN}, resource theories~\cite{berk2019resource}, open dynamics simulations~\cite{Mathias_2019} or the causal structure of quantum processes~\cite{Milz_2018}, to name but a few examples. Our work is rather a contribution to the area of quantum processes in the spirit of the approaches of equilibration and thermalization of Chapter 1.

We now discuss the outlook for future research related to this thesis and some possible ways to approach it.

\section{Outlook}
The results within this thesis make a significant step towards the understanding of the emergence of Markovianity in quantum processes. There are, nevertheless, several ways in which this understanding can be further advanced, as well as applied.

In the case of multitime equilibration, two main issues stand out. The first relates to the role that memory has, both in the dynamical process itself and within the external operations. While we obtained sufficient conditions under which quantum processes will equilibrate, the interplay with memory effects, both through the environment and the ancillary space in the interventions is as yet not entirely clear, e.g. under which circumstances finite-temporal resolution equilibration can occur without the dynamics being Markovian, or if the temporal correlations among interventions through the ancillary space can display a departure from equilibration within a finite-time.

The relation between multitime equilibration and Markovianity is somewhat akin to the link between equilibration and thermalization, in the sense that stronger conditions and a deeper characterization of the equilibrium process might be needed in order to determine whether it is generically indeed an almost Markovian process. This could be achieved e.g. by investigating the non-Markovianity of the equilibrium process $\Omega$ and the limits that make it exactly Markovian. In the case of the memory on the external operations, while it is expected that having access to a memory space that can keep track of the multitime statistics in a process would necessarily serve to witness a departure from equilibration, attempts to show this proved elusive during this PhD. Possible ways to move forward could be e.g. employing ideas topics such as state transfer~\cite{Giovannetti_2006} or algorithmic cooling~\cite{PhysRevLett.123.170605}, where it would find relevant technological applications as well.

A further issue in multitime equilibration will be setting time-scales in which equilibration occurs. This, however, is a longstanding problem even in the standard scenario, so tackling the problem in general quantum processes could be a very ambitious goal.

In the case of Markovianization, there are some open questions that might be possible to address in a reasonable time-frame. In particular, further constraining Markovianization to general physical and experimentally realizable processes, is a goal that could be achieved in many ways with recent theoretical developments; one such way could be developing the notion of time-independent Hamiltonian unitary designs~\cite{Winter_HamDesign}, or another could be looking at driven quantum systems that can quickly increase their complexity~\cite{PhysRevX.7.041015}.

Finally, two topics in the border with the ones discussed in this thesis are the eigenstate thermalization hypothesis (ETH) and Randomized Benchmarking.

The ETH relates to the energy eigenstates of large and chaotic systems behaving as random variables and displaying the corresponding equilibrium statistical properties~\cite{Srednicki_1994, Deutsch_ETH}. The framework in which the ETH is set naturally overlaps with many of the technical concepts employed in this thesis and many questions regarding higher-order correlations could be approached with the process tensor and techniques from random matrix theory employed to study the typicality of Markovian processes.

On the other hand, Randomized Benchmarking refers to the estimation of average error rates in random circuits by implementing random quantum gates (elementary unitary operations) that would amount to identities if there was no error present. The mathematical formalism employed in the derivation for the results on the typicality of Markovian processes and Markovianization is mostly the same as the one in randomized benchmarking: the idea is to estimate the so-called fidelity, i.e. a measure of noisiness, of a class of sets of noisy random quantum gates entering a quantum process with respect to the corresponding noise-free ones by looking at the average over the noisy gates. Whenever the noise is Markovian, so that there is both no time-dependence and no dependence on the particular choice of gates, randomized benchmarking has been widely studied~\cite{Knill_2008, Wallman_2014, Proctor_2017, Dirkse_2019, Harper_2019, Helsen_2019}. However, randomized benchmarking for non-Markovian noise, i.e. due to memory effects, is still an open problem and the process tensor is the natural framework to approach it. This is a problem whose solution could represent a significant contribution with high impact in the field of quantum computing.

We remain confident that the work presented in this thesis will serve as a stepping stone to tackle these problems, as well as an inspiration to approach others, old or new, naturally connected to quantum stochastic processes and quantum information science.
    \appendix           
    \appendixpage       
    \hypersetup{colorlinks=true, linkcolor=C2, citecolor=C2, urlcolor=C2}
    \chapter{Notation}
\label{appendix: notation}
\setlength\LTleft{0pt}
\setlength\LTright{0pt}
An effort has been made to maintain consistency in the notation throughout this thesis. Here we present the standard notation that is employed throughout the text.

Many symbols are only distinguished by their style, e.g. $\mbb{P}$, $\mc{P}$, $\mscr{P}$, $\wp$ and $P$ all stand for different concepts, so we hope this stylistic distinction, together with the relevant context, is enough to easily tell them apart.

Keep in mind that we only work with finite dimensional quantum systems. Most abbreviated terminology, such as referring to density matrices as quantum states, Hilbert space as space, or stochastic process as process, is sufficiently clear by context and we do not reproduce it here. Similarly, some simplifications such as positivity of a quantum states meaning positive semi-definiteness, or removing the subindex for the number of time-steps in a process tensor is usually done when sufficiently clear by context.

We have also committed some abuse of notation when relevant for simplicity's sake, e.g. omitting identity operators when an operator acts only on a subpart of the whole system in question. When this is done it has been duly pointed out.

We have also been somewhat lax when it comes to the usual mathematical structure of introducing Definitions, Remarks or Notation in a bullet point style and opted for a more narrative structure; we have nevertheless used these when we have considered relevant, in particular with the original theorems from this PhD which are followed by their proof.

Finally, we made an effort to employ a consistent color scheme on the different figures, mainly using red to denote dynamical objects, blue/teal on experimental operations and/or maps and purple on process tensors.

\begin{longtable}{l l}  
\toprule
      \textsc{General}                                 \\
      \toprule
      \\
      $\mbb{R}, \mbb{R}^+, \mbb{R}_0^+$ & Set of real, positive real and non-negative real numbers\\
      $\mbb{C}$ & Set of complex numbers\\
      $\mbb{U}(d)$ & Unitary group of dimension $d$ \\
      $\mscr{H}$ & Hilbert space\\
      $(\cdot)^*, (\cdot)^\mathrm{T}, (\cdot)^\dg$ & Conjugate, Transpose and Conjugate transpose\\
      $|\psi\rangle, \langle\psi|$ & Vector on a Hilbert space (ket) and Adjoint vector on a Hilbert space (bra)\\
      $\langle\,\cdot\,\rangle$ & Expectation value\\
      $[\cdot,\cdot]$ & Commutator, $[a,b]=ab-ba$\\
      $\rho$ & Density matrix on a Hilbert space\\
      $\tr$ & Trace\\
      $\mathsf{A,B,\ldots}$ & Systems are labelled with sans-serif capital letters\\
      $\mathsf{in}$, $\mathsf{out}$ & Inputs and outputs\\
      $\mathsf{in'}$ & An apostrophe distinguishes an input that has been acted on with a map\\
      $\mscr{H}_\mathsf{A}$ & Hilbert space associated to system $\mathsf{A}$\\
      $\rho_\mathsf{A}$ & Density matrix on Hilbert space $\mathsf{A}$\\
      $\tr_\mathsf{A}$ & Partial trace over system $\mathsf{A}$\\
      $\tilde{\Psi}$ & Unnormalized maximally entangled state $\tilde{\Psi}=\sum|ii\rangle\!\langle{jj}|$\\
      $\mathsf{\Psi}$ & Normalized to unity maximally entangled state, $\mathsf{\Psi}:=\tilde{\Psi}/d$\\
      $d_\mathsf{A}$ & Standard for the dimension of the Hilbert space of a system, $\dim(\mscr{H}_\mathsf{A})$\\
      $\mscr{T}_k$ & Set of time-steps $\{t_0,t_1,\ldots,t_{k-1}\}$ \\
      $k:0$ & Set of ordered time-steps $t_0<t_1<\ldots<{t}_{k-1}$
\end{longtable}
    
\begin{longtable}{l l}  
\toprule
      \textsc{Operations}                                 \\
      \toprule
      \\
      $\mscr{B}(\mscr{H})$ & Space of bounded linear operators acting on the Hilbert space $\mscr{H}$\\
      $\$(\mscr{H})$ & Space of density matrices acting on the Hilbert space $\mscr{H}$\\
      $A,B,\ldots$ & Bounded linear operators are denoted with standard capital letters\\
      $\mathrm{M}_i$ & \gls{POVM} element\\
      $H$ & Hamiltonian operator\\
      $\Pi$ & Projection operator\\
      $U$ & Unitary operator\\
      $\mbb1$ & Identity operator\\
      $\swap$ & Swap operator, $\swap:=\sum|ij\rangle\!\langle{ji}|$\\
      $\FS_{\alpha\beta}$ & Defined as $\FS_{\alpha\beta}:=\mbb1_\mathsf{E}\otimes|\alpha\rangle\!\langle\beta|$ with $|\alpha\rangle,|\beta\rangle$ being \gls{syst} basis vectors\\
      $\Phi$ & Generic map between bounded linear operator spaces\\
      $\mc{U}$ & Unitary map\\
      $\mc{I}$ & Identity map\\
      $\mscr{Z}_t,\mscr{Z}_{j:i}$ & Dynamical map on time parameter $t$, dynamical map from time-step $i$ to $j$ \\
      $\mc{S}_i$ & Swap map between system \gls{syst} and an ancilla at time-step $t_i$\\
      $\Xi^{(n)}$ & $n$-fold twirl map\\
      $\Upsilon_\Phi$ & Choi state of a map $\Phi$\\
      $\mc{A}_0,\mc{A}_1,\ldots$ & \gls{CPTNI} maps at time-steps $t_0,t_1,\ldots$\\
      $\mc{M}$ & Superchannel\\
      $\mc{T}_{k:0}$ & $k$-step process tensor; the subindex $k:0$ is dropped when clear by context\\
      $\Upsilon_{k:0}$ & Choi state of a process tensor\\
      $\mc{J}$ & Instrument - a collection of \gls{CP} maps that sums to a \gls{CPTP} map\\
      $\Lambda_{k:0}$ & Choi state of a sequence of \gls{CP} operations (tester when over instruments)\\
\end{longtable}

\begin{longtable}{l l}  
\toprule
      \textsc{Equilibration}\hfill                \\
      \toprule
\\
     $\overline{(\cdot)}$ & Infinite time-average\\
     $\overline{(\cdot)}^T$ & Uniform time-average over an interval of width $T$\\
     $\overline{(\cdot)}^{\mscr{P}_T}$ & Time-average over a distribution with \gls{pdf} $\mscr{P}_T$ with fuzziness $T$\\
     $\omega$ & Equilibrium state\\
     $\varrho$, $\varpi$ & Initial \gls{syst-ancilla} state and \gls{syst-ancilla} equilibrium state\\
     $d_\text{eff}$ & Effective dimension (a.k.a. participation ration)\\
     $E_n$ & Energy levels (Hamiltonian eigenvalues)\\
     $P_n$ & Projector onto the eigenspace of energy $E_n$\\
     $\mc{E}_{nm}$ & Energy difference $\mc{E}_{nm}\propto E_n-E_m$\\
     $\mathfrak{D}$ & Number of distinct energies, $\mathfrak{D}=|\mathrm{spec}(H)|$\\
     $N(\varepsilon)$ & Max number of energy gaps in any interval of size $\epsilon>0$\\
     $\mc{D}$ & Dephasing map with respect to a Hamiltonian\\
     $\mc{G}_T$ & Partial dephasing map within a time interval of width $T$\\
\end{longtable}

\begin{longtable}{l l}  
\toprule
      \textsc{Norms / distances}\hfill                \\
      \toprule
\\
     $\|\cdot\|_p$ & Schatten $p$-norm\\
     $\|\cdot\|_1$ & (Schatten) 1-norm or trace norm\\
     $\|\cdot\|$ & Operator norm (max singular value)\\
     $D(\cdot,\cdot)$ & Trace distance\\
     $S(\cdot)$ & von-Neumann entropy\\
     $S(\cdot\,\|\,\cdot)$ & Relative entropy
\end{longtable}

\begin{longtable}{l l}  
\toprule
      \textsc{Probability / Statistics}\hfill                \\
      \toprule
\\
      $\mu$ & Probability measure\\
      $\mbb{P}$ & Probability distribution\\
      $\mathfrak{P}_{j:i}$ & Propagator stochastic matrix from time $t_i$ to time $t_j$\\
      $\mbb{E}$ & Expectation\\
      $\haar$ & Haar measure\\
      $\haarrand$ & Haar measure in random interaction ($U_i\neq{U}_j$)\\
      $\mathsf{t}$ & Unitary $\mathsf{t}$-design\\
      $\mathsf{t}_\epsilon$ & $\epsilon$-approximate unitary $\mathsf{t}$-design\\
      $\mscr{L}$ & Lipschitz constant\\
      $\mscr{P}$ & Probability density function\\
      $\mathfrak{G}_n$ & Symmetric group on $n$-elements \\
      $\wp$ & Permutation operator
\end{longtable}
    \chapter{Haar distributed process tensors}
\label{appendix - Haar measure}
Here we present the derivation of the expressions presented in Chaper~\ref{sec:typicality} for the expectation of quantum processes $\Upsilon$ and the expectation of the purity, or noisiness, of quantum processes $\tr\left(\Upsilon^2\right)$, according to the Haar measure.

\section{Average processes - Constant interaction}
\label{appendix - Average process constant}
We are considering here $\Upsilon$ to be unnormalized (or more precisely, normalized to $\tr(\Upsilon)=d_\mathsf{S}^{k+1}$), as presented first in the main text in Eq.~\eqref{average state Ui=Uj}. Also, here $\Upsilon$ stands for a $k$-step process $\Upsilon_{k:0}$ as done in the main text as well.

By definition,
\begin{align}
\mbb{E}_\haar(\Upsilon)&=\tr_\mathsf{E}\sum_{\alpha,\ldots,\gamma=1}^{d_\mathsf{S}}\,\int_{\mbb{U}(d_\mathsf{SE})}U\FS_{\alpha_k\beta_k}
\cdots{U}\FS_{\alpha_1\beta_1}U\,\rho\,
{U}^\dg\FS_{\gamma_1\delta_1}^\dg{U}^\dg\cdots\FS_{\gamma_k\delta_k}^\dg{U}^\dg\,d\muhaar(U)\nonumber\\
&\qquad\qquad\otimes|\beta_1\alpha_1\ldots\beta_k\alpha_k\rangle\!\langle\delta_1\gamma_1\ldots\delta_k\gamma_k|,
\end{align}
where $\FS_{\alpha\beta}:=\mbb1_\mathsf{E}\otimes|\alpha\rangle\!\langle\beta|$. We now decompose the unitaries as $U=\sum{U}_{ab}|a\rangle\!\langle{b}|$ and $U^\dg=\sum{U}^*_{a^\prime{b}^\prime}|b^\prime\rangle\!\langle{a}^\prime|$ --notice that the $a$ and $b$ labels refer to the whole \gls{syst-env} space--, introducing a resolution of the identity on \gls{env} within the $\FS$ operators as $\FS_{ab}=\sum_e|ea\rangle\!\langle{eb}|$, and then evaluating the $k+1$ moments of $\mbb{U}(d_\mathsf{SE})$ by means of the moments of the unitary group in Eq.~\eqref{k moments of U},
\begin{align}
    &\int_{\mbb{U}(d_\mathsf{SE})}U\FS_{\alpha_k\beta_k}\cdots{U}\FS_{\alpha_1\beta_1}U\rho{U}^\dg\FS_{\gamma_1\delta_1}^\dg{U}^\dg\cdots\FS_{\gamma_k\delta_k}^\dg{U}^\dg\,d\muhaar(U)\nonumber\\
    &=\sum\int_{\mbb{U}(d_\mathsf{SE})}U_{i_0j_0}\cdots{U}_{i_kj_k}U^*_{i^\prime_0j^\prime_0}\cdots{U}^*_{i^\prime_kj^\prime_k}\,d\muhaar(U)\,|i_k\rangle\!\langle{j}_0|\rho|j^\prime_0\rangle\!\langle{i}^\prime_k|\nonumber\\
    &\qquad\qquad\qquad\prod_{\ell=1}^{k}\delta_{j_\ell(e\alpha)_\ell}\delta_{(e\beta)_\ell{i}_{\ell-1}}\delta_{j^\prime_\ell(e^\prime\gamma)_\ell}\delta_{(e^\prime\delta)_\ell{i^\prime}_{\ell-1}}\nonumber\\
    &=\sum\sum_{\sigma,\tau\in\mathfrak{G}_{k+1}}\langle{j}_0|\rho|j^\prime_0\rangle\delta_{i_0i^\prime_{\sigma(0)}}\delta_{j_0j^\prime_{\tau(0)}}\prod_{\ell=1}^k\delta_{i_\ell{i}^\prime_{\sigma(\ell)}}\delta_{j_\ell{j}^\prime_{\tau(\ell)}}\delta_{j_\ell(e\alpha)_\ell}\delta_{(e\beta)_\ell{i}_{\ell-1}}\delta_{j^\prime_\ell(e^\prime\gamma)_\ell}\delta_{(e^\prime\delta)_\ell{i^\prime}_{\ell-1}}\nonumber\\
    &\qquad\qquad\qquad\qquad\Wg(\tau\sigma^{-1})\,|i_k\rangle\!\langle{i}^\prime_k|\nonumber\\
    &=\sum\sum_{\sigma,\tau\in\mathfrak{G}_{k+1}}\langle{j}^\prime_{\tau(0)}|\rho|j^\prime_0\rangle\delta_{i_ki^\prime_{\sigma(k)}}\prod_{\ell=1}^k\delta_{(e\beta)_\ell{i}^\prime_{\sigma(\ell-1)}}\delta_{(e^\prime\delta)_\ell{i^\prime}_{\ell-1}}\delta_{(e\alpha)_\ell{j}^\prime_{\tau(\ell)}}\delta_{(e^\prime\gamma)_\ell{j}^\prime_\ell}\,\Wg(\tau\sigma^{-1})\,|i_k\rangle\!\langle{i}^\prime_k|,
\end{align}
where here $\mathfrak{G}_{k+1}$ denotes the symmetric group on $\{0,1,\ldots,k\}$, $\Wg$ is the Weingarten function with implicit dimensional argument $d_\mathsf{SE}$, and which taking $i\to\epsilon\varsigma$ and $j\to\epsilon^\prime\varsigma^\prime$ to recover each \gls{env} and \gls{syst} part explicitly, turns into
\begin{align}
    &\int_{\mbb{U}(d_\mathsf{SE})}U\FS_{\alpha_k\beta_k}\cdots{U}\FS_{\alpha_1\beta_1}U\,\rho\,{U}^\dg\FS_{\gamma_1\delta_1}^\dg{U}^\dg\cdots\FS_{\gamma_k\delta_k}^\dg{U}^\dg\,d\mu(U)\nonumber\\
    &=\sum\sum_{\sigma,\tau\in\mathfrak{G}_{k+1}}\langle\epsilon^\prime_{\tau(0)}\varsigma^\prime_{\tau(0)}|\rho|\epsilon^\prime_0\varsigma^\prime_0\rangle\prod_{\ell=1}^k\delta_{\epsilon_{\sigma(\ell-1)}\epsilon^\prime_{\tau(\ell)}}\delta_{\epsilon_{\ell-1}\epsilon^\prime_\ell}\delta_{\beta_\ell\varsigma_{\sigma(\ell-1)}}\delta_{\delta_\ell\varsigma_{\ell-1}}\delta_{\alpha_\ell\varsigma^\prime_{\tau(\ell)}}\delta_{\gamma_\ell\varsigma^\prime_\ell}\,\Wg(\tau\sigma^{-1})\nonumber\\
    &\qquad\qquad\qquad\qquad|\epsilon_{\sigma(k)}\varsigma_{\sigma(k)}\rangle\!\langle\epsilon_k\varsigma_k|,
\end{align}
and thus
\begin{align}
    &\mbb{E}_\haar(\Upsilon)=\sum\sum_{\sigma,\tau\in\mathfrak{G}_{k+1}}\hspace{-0.1in}\langle\epsilon^\prime_{\tau(0)}\varsigma^\prime_{\tau(0)}|\rho|\epsilon^\prime_0\varsigma^\prime_0\rangle\,\Wg(\tau\sigma^{-1})\,\delta_{\epsilon_{\sigma(k)}\epsilon_k}\prod_{\ell=1}^k\delta_{\epsilon_{\sigma(\ell-1)}\epsilon^\prime_{\tau(\ell)}}\delta_{\epsilon_{\ell-1}\epsilon^\prime_\ell}\nonumber\\
&\qquad\qquad\qquad\qquad|\varsigma_{\sigma(k)}\varsigma_{\sigma(0)}\varsigma^\prime_{\tau(1)}\cdots\varsigma_{\sigma(k-1)}\varsigma^\prime_{\tau(k)}\rangle\!\langle\varsigma_k\varsigma_0\varsigma^\prime_1\cdots\varsigma_{k-1}\varsigma^\prime_k|,
\end{align}
as stated by Eq.\eqref{average state Ui=Uj}.

\subsection{Superchannel case}\label{appendix: average state Ui=Uj superchannel}
For the superchannel case, $k=1$, we have $\mathfrak{G}_2=\{(0,1),(0)(1)\}$ where the elements are permutations stated in \emph{cycle} notation, representing the assignments $(0,1):0\to1\to0$ and $(0)(1)=\boldsymbol{1}^2=0\to0;1\to1$, then (we write $\alpha,\beta,\gamma,\delta$ for $\varsigma_0,\varsigma_1,\varsigma_0^\prime,\varsigma_1^\prime$, respectively to ease the notation)
\begin{align}
    \mbb{E}_\haar(\Upsilon_{1:0})=\f{\sum}{d_\mathsf{S}}\bigg\{&\left[\langle\delta|\rho_\mathsf{S}|\gamma\rangle|\alpha\beta\gamma\rangle\!\langle\beta\alpha\delta|+d_\mathsf{E}^2|\beta\alpha\delta\rangle\!\langle\beta\alpha\delta|\right]\Wg[\boldsymbol{1}^2]\nonumber\\
    &\qquad\qquad+d_\mathsf{E}\left[\langle\delta|\rho_\mathsf{S}|\gamma\rangle|\beta\alpha\gamma\rangle\!\langle\beta\alpha\delta|+|\alpha\beta\delta\rangle\!\langle\beta\alpha\delta|\right]\Wg[(0,1)]\bigg\},
\end{align}
so that with $\Wg[\boldsymbol{1}^2,d]=\f{1}{d^2-1}$ and $\Wg[(0,1),d]=\f{-1}{d(d^2-1)}$~\cite{GuMoments} we get
\begin{align}\mbb{E}_\haar(\Upsilon_{1:0})=\f{1}{d_\mathsf{E}^2d_\mathsf{S}^2-1}\bigg[\f{d_\mathsf{E}^2}{d_\mathsf{S}}\mbb1_\mathsf{SAB}+\f{\swap}{d_\mathsf{S}}\otimes\rho_\mathsf{S}^\mathrm{T}-\f{\swap}{d_\mathsf{S}}\otimes\f{\mbb1_\mathsf{B}}{d_\mathsf{S}}-\f{\mbb1_\mathsf{SA}}{d_\mathsf{S}^2}\otimes\rho_\mathsf{S}^\mathrm{T}\bigg],\label{avg state Ui=Uj superchannel}\end{align} where $\swap=\sum_{i,j}|ij\rangle\!\langle{j}i|$ is the usual swap operator, and hence for the corresponding purity one may verify that
\begin{gather}
    \tr[\mbb{E}_\haar(\Upsilon_{1:0})^2]=\f{2}{(d_\mathsf{E}^2d_\mathsf{S}^2-1)^2}\left[\f{1}{d_\mathsf{S}^3}+\tr(\rho_S^2)\f{d_\mathsf{S}^2-d_\mathsf{S}-1}{2d_\mathsf{S}^2}-\f{d_\mathsf{E}^2}{d_\mathsf{S}}+\f{d_\mathsf{E}^4d_\mathsf{S}}{2}\right].
    \label{purity TI superchannel}
\end{gather}

\section{Average purity - Random interaction}
\label{appendix - average purity random}
Here $\Upsilon$ is normalized to unit trace, in accordance with the main text. Let $\Theta=\rho\otimes\Psi^{\otimes{k}}$, where we assume $\rho$ is pure, $\tr\rho=1$, and $\mathsf{U}_{k:0}:=U_k\mc{S}_k\cdots{U}_1\mc{S}_1U_0$, we first (following the approach in section 2 of~\cite{ZhangXiang}) write the trace as
\begin{align}
    \tr\left(\Upsilon^2\right)&=\tr\left[\left(\tr_\mathsf{E}(\mathsf{U}_{k:0}\Theta\,\mc{U}^\dg_{k:0})\right)^2\right]\nonumber\\
    &=\tr\left[\tr_\mathsf{E}\left((\mbb{1}_E\otimes\Upsilon)\,\mathsf{U}_{k:0}\Theta\,\mc{U}^\dg_{k:0}\right)\right]\nonumber\\
    &=\tr\left[\Gamma_\mathsf{E}(\mathsf{U}_{k:0}\Theta\,\mc{U}^\dg_{k:0})\,\mathsf{U}_{k:0}\Theta\,\mc{U}^\dg_{k:0}\right]\nonumber\\
    &=\tr\left[\Theta\,\mathsf{U}_{k:0}^\dg\Gamma_\mathsf{E}(\mathsf{U}_{k:0}\Theta\,\mc{U}^\dg_{k:0})\,\mathsf{U}_{k:0}\right],
\end{align}
where $\Gamma_\mathsf{E}(\cdot)\equiv\mbb{1}_E\otimes\tr_\mathsf{E}(\cdot)$ is a CP map and can be expressed as an operator-sum, i.e. for any $\mc{X}$ acting on $ESA_1B_1\cdots{A}_kB_k$ we have $\Gamma_\mathsf{E}(\mc{X})=\sum_{i,j=1}^{d_\mathsf{E}}(K_{ij}\otimes\mbb1)\mc{X}(K_{ij}\otimes\mbb1)^\dg$ where the identities are on the ancillary system and $K_{ij}=|i\rangle\!\langle{j}|\otimes\mbb{1}_{S}$ are the ($ES$ system) Kraus operators with $\{|i\rangle\}_{i=1}^{d_\mathsf{E}}$ a given basis for $E$. And so we need to compute the integrals
\begin{align}
    \Omega_{k:0}:=\int_{\mbb{U}(d_\mathsf{SE})}\mathsf{U}_{k:0}^\dg\Gamma_\mathsf{E}(\mathsf{U}_{k:0}\Theta\,\mc{U}^\dg_{k:0})\,\mathsf{U}_{k:0}\,d\muhaar(U_0)\cdots\,d\muhaar(U_k),
    \label{full purity integrals}
\end{align}
and in fact these are really in the \gls{syst-env} part only, which has the form
\begin{align}
    \omega_{k:0}^{(\rho)}:=\int_{\mbb{U}(d_\mathsf{SE})}&U_0^\dg\FS_{x_1\alpha_1}U_1^\dg\cdots{\FS}_{x_k\alpha_k}U_k^\dg{K}_{\tilde\imath\tilde\jmath}U_k\FS_{\beta_kx_k}\nonumber\\
    &\qquad\cdots{U_1}\FS_{\beta_1x_1}U_0\rho{U}_0^\dg\FS_{y_1\lambda_1}U_1^\dg\cdots\FS_{y_k\lambda_k}U_k^\dg{K}^\dg_{\tilde\imath\tilde\jmath}{U}_k\FS_{\theta_ky_k}\nonumber\\
    &\qquad\qquad\qquad\qquad\cdots{U}_1\FS_{\theta_1y_1}U_0\,d\muhaar(U_0)\,d\muhaar(U_1)\cdots d\muhaar(U)_k,
\label{ES part purity integ}
\end{align}
with summation over repeated indices $x,y$ and $i,j$ on the Kraus operators implied, and the ancillary part takes the form
\begin{align}
    \Omega_{k:0}^{(\mathsf{AB})}&:=(|\alpha_1\rangle\!\langle\beta_1|\otimes\mbb1_{\mathsf{B}_1}\otimes\cdots\otimes|\alpha_k\rangle\!\langle\beta_k|\otimes\mbb1_{\mathsf{B}_k})\mathsf{\Psi}^{\otimes\,k}(|\lambda_1\rangle\!\langle\theta_1|\otimes\mbb1_{\mathsf{B}_1}\otimes\cdots\otimes|\lambda_k\rangle\!\langle\theta_k|\otimes\mbb1_{\mathsf{B}_k}).
\end{align}

We may then evaluate each of the integrals using
\begin{gather}\int_{\mbb{U}(d)}U^\dg{A}UX{U}^\dg{B}U\,d\muhaar(U)=\f{d\tr(AB)-\tr(A)\tr(B)}{d(d^2-1)}\tr(X)\mbb1+\f{d\tr(A)\tr(B)-\tr(AB)}{d(d^2-1)}X,
\label{intmp}
\end{gather}
which can be seen to follow from the second moment of the unitary group (an explcit derivation can be seen e.g. in Ref.~\cite{ZhangMInt}), here with $d=d_\mathsf{SE}$, which from now on we employ, then we have
\begin{align}
    &\omega_{k:0}^{(\rho)}=\f{1}{d(d^2-1)}\nonumber\\
    &\bigg\{\delta_{x_1y_1}\int_{\mbb{U}(d)}\tr\bigg[\FS_{\theta_1\alpha_1}U_1^\dg\cdots{\FS}_{x_k\alpha_k}U_k^\dg{K}_{ij}U_k\FS_{\beta_kx_k}\nonumber\\
    &\qquad\qquad\qquad\qquad\cdots{U}_1\FS_{\beta_1\lambda_1}U_1^\dg\cdots\FS_{y_k\lambda_k}U_k^\dg{K}^\dg_{ij}{U}_k\FS_{\theta_ky_k}\cdots{U}_1\bigg](d\mbb1-\rho)\,d\muhaar(U_1)\cdots{d}\muhaar(U_k)\nonumber\\
    &+\delta_{x_1x_1}\delta_{y_1y_1}\int\tr\bigg[\FS_{\beta_1\alpha_1}U_1^\dg\cdots{\FS}_{x_k\alpha_k}U_k^\dg{K}_{ij}U_k\FS_{\beta_kx_k}\cdots{U}_1\bigg]\nonumber\\
    &\qquad\qquad\qquad\qquad\tr\bigg[\FS_{\theta_1\lambda_1}U_1^\dg\cdots\FS_{y_k\lambda_k}U_k^\dg{K}^\dg_{ij}{U}_k\FS_{\theta_ky_k}\cdots{U}_1\bigg](d\rho-\mbb1)\,\,d\muhaar(U_1)\cdots{d}\muhaar(U_k)\bigg\}\nonumber\\
    &=\f{1}{d(d^2-1)}\bigg[d_\mathsf{S}\langle{e}_1\alpha_1|\omega_{k:1}^{(\FS_{\beta_1\lambda_1})}|e_1\theta_1\rangle(d\mbb1-\rho)+d_\mathsf{S}^2\langle{e}_1\alpha_1|\omega_{k:1}^{(|e_1\beta_1\rangle\!\langle{e}^\prime_1\lambda_1|)}|e^\prime_1\theta_1\rangle(d\rho-\mbb1)\bigg],
\end{align}
again with sum implied over $e_i$'s and Greek indices. Now, let us notice that
\begin{align}
    \omega_{k:i-1}^{(X)}&=\f{1}{d(d^2-1)}\bigg[d_\mathsf{S}\langle{e}_i\alpha_i|\omega_{k:i}^{(\FS_{\beta_i\lambda_i})}|e_i\theta_i\rangle(d\tr(X)\mbb1-X)\nonumber\\
    &\qquad\qquad\qquad\qquad+d_\mathsf{S}^2\langle{e}_i\alpha_i|\omega_{k:i}^{(|e_i\beta_i\rangle\!\langle{e}^\prime_i\lambda_i|)}|e^\prime_i\theta_i\rangle(dX-\tr(X)\mbb1)\bigg],
\end{align}
for $k>i-1\geq0$. Then we get
\begin{align}
    &\omega_{k:0}^{(\rho)}=\f{1}{d^2(d^2-1)^2}\nonumber\\
    &\bigg\{d_\mathsf{S}\langle{e}_2\alpha_2|\omega_{k:2}^{(\FS_{\beta_2\lambda_2})}|e_2\theta_2\rangle\bigg[d_\mathsf{S}(dd_\mathsf{E}^2\delta_{\alpha_1\theta_1}\delta_{\beta_1\lambda_1}-d_\mathsf{E}\delta_{\alpha_1\beta_1}\delta_{\lambda_1\theta_1})(d\mbb1-\rho)\nonumber\\
    &\qquad\qquad\qquad\qquad+d_\mathsf{S}^2(dd_\mathsf{E}\delta_{\alpha_1\theta_1}\delta_{\beta_1\lambda_1}-d_\mathsf{E}^2\delta_{\alpha_1\beta_1}\delta_{\lambda_1\theta_1})(d\rho-\mbb1)\bigg]\nonumber\\
    &\qquad\qquad+d_\mathsf{S}^2\langle{e}_2\alpha_2|\omega_{k:2}^{(|e_2\beta_2\rangle\!\langle{e}_2^\prime\lambda_2|)}|e_2^\prime\theta_2\rangle\bigg[d_\mathsf{S}(dd_\mathsf{E}\delta_{\alpha_1\beta_1}\delta_{\lambda_1\theta_1}-d_\mathsf{E}^2\delta_{\alpha_1\theta_1}\delta_{\beta_1\lambda_1})(d\mbb1-\rho)\nonumber\\
    &\qquad\qquad\qquad\qquad+d_\mathsf{S}^2(dd_\mathsf{E}^2\delta_{\alpha_1\beta_1}\delta_{\lambda_1\theta_1}\hspace*{-0.05in}-d_\mathsf{E}\delta_{\alpha_1\theta_1}\delta_{\beta_1\lambda_1})(d\rho-\mbb1)\bigg]\bigg\}.
\label{int1_k:0}
\end{align}

Before carrying on, let us notice that the case $\omega_{k:k}$ is special because here one evaluates the terms on the Kraus operators $\tr(K_{ij}K_{ij}^\dg)$ and $\tr(K_{ij})\tr(K_{ij}^\dg)$ summed over their indices, which leaves
\begin{align}
    \langle{e}_k\alpha_k|\omega_{k:k}^{(\FS_{\lambda_k\beta_k})}|e_k\theta_k\rangle&=\f{1}{d(d^2-1)}\bigg[dd_\mathsf{E}^2(dd_\mathsf{E}\delta_{\alpha_k\theta_k}\delta_{\beta_k\lambda_k}-\delta_{\alpha_k\beta_k}\delta_{\lambda_k\theta_k})\\
    &\qquad\qquad\qquad\qquad+d^2d_\mathsf{E}(d_\mathsf{S}\delta_{\alpha_k\beta_k}\delta_{\lambda_k\theta_k}-\delta_{\alpha_k\theta_k}\delta_{\beta_k\lambda_k})\bigg],
\end{align}
and
\begin{align}
    \langle{e}_k\alpha_k|\omega_{k:k}^{(|e_k\beta_k\rangle\!\langle{e}^\prime_k\lambda_k|)}|e^\prime_k\theta_k\rangle&=\f{1}{d(d^2-1)}\bigg[dd_\mathsf{E}^3(d_\mathsf{S}\delta_{\alpha_k\theta_k}\delta_{\beta_k\lambda_k}-\delta_{\alpha_k\beta_k}\delta_{\lambda_k\theta_k})\nonumber\\
    &\qquad\qquad\qquad\qquad+d^2(dd_\mathsf{E}\delta_{\alpha_k\beta_k}\delta_{\lambda_k\theta_k}-\delta_{\alpha_k\theta_k}\delta_{\beta_k\lambda_k})\bigg].
\end{align}

Notice that we can get rid of the Kronecker deltas easily in the full $\Omega_{k:0}$ when summing over Greek indices, as $\delta_{\alpha\theta}\delta_{\beta\lambda}$ give rise to maximally mixed states in the ancillas while terms $\delta_{\alpha\beta}\delta_{\lambda\theta}$ give rise to identities, and there are only deltas of this kind. Thus we may simply assign $\delta_{\alpha\theta}\delta_{\beta\lambda}\to1/d_\mathsf{S}$ and $\delta_{\alpha\beta}\delta_{\lambda\theta}\to1$ when plugging the corresponding expressions in $\Omega_{k:0}$.\footnote{ In particular this implies that the terms $d_\mathsf{S}\delta_{\alpha\theta}\delta_{\beta\lambda}-\delta_{\alpha\beta}\delta_{\lambda\theta}$ and $dd_\mathsf{E}\delta_{\alpha\beta}\delta_{\lambda\theta}-d_\mathsf{E}^2\delta_{\alpha\theta}\delta_{\beta\lambda}$ won't contribute to the average purity.} Furthermore, a direct consequence of this is that $\Omega_{k:0}$ will be a linear combination of $2^{k+1}$ tensor products between $\mbb1$, $\rho$ and $\mathsf{\Psi}$, implying that (as $\rho$ is pure, $\mathsf{\Psi}$ is idempotent with trace one and the trace of an outer product is the product of traces) the average purity $\mbb{E}_\haarrand[\tr(\Upsilon^2)]$ will be a sum of the scalar terms in $\omega^{(\rho)}_{k:0}$ together with an extra overall factor of $d-1$. Let us denote by
\begin{gather}
    A=d_\mathsf{E}(d_\mathsf{E}^2-1),\quad
    B=d_\mathsf{E}^2-d_\mathsf{E}^2=0,\quad
    C=d_\mathsf{E}^2(d_\mathsf{S}-1/d_\mathsf{S}),\quad
    D=d_\mathsf{E}(dd_\mathsf{E}-1/d_\mathsf{S}),
\end{gather}
the terms that appear in expression~\eqref{int1_k:0} after having taken $\delta_{\alpha\theta}\delta_{\beta\lambda}\to1/d_\mathsf{S}$ and 
$\delta_{\alpha\beta}\delta_{\lambda\theta}\to1$. Doing similarly for the $\omega_{k:k}$ case,
\begin{gather}
    \mc{A}=dd_\mathsf{E}^2(d_\mathsf{E}^2+d_\mathsf{S}^2-2),\quad
    \mc{B}=dd_\mathsf{E}(d^2-1),
\end{gather}
we find that\footnote{ A way to deduce this is to sub-label each factor $A, B,\cdots$ by the $i$ index of the term ${\langle{e}_i\alpha_i|\omega^{X_i}|e_i^{(\prime)}\theta_i\rangle}$ that they come from and then substituting recursively in Eq.~\eqref{int1_k:0} as if one were already evaluating the whole purity. The vanishing of $B$ simplifies this process a great deal.}
\begin{align}
    \mbb{E}_\haarrand[\tr(\Upsilon^{2})]&=\frac{d_\mathsf{S}^k(d-1)}{[d(d^2-1)]^{k+1}}\left[\mc{A}\,A^{k-1}+d_\mathsf{S}\mc{B}\left(CA^{k-2}+C\sum_{i=1}^{k-2}d_\mathsf{S}^iD^iA^{k-i-2}+d_\mathsf{S}^{k-1}D^{k-1}\right)\right],
\end{align}
where the series has to be expanded before this can be evaluated; by means of the geometric series, $\sum_{i=1}^nx^i=\f{x(x^n-1)}{x-1}$, i.e. with
\begin{align}
    A^{k-2}\sum_{i=1}^{k-2}\left(\f{d_\mathsf{S}D}{A}\right)^i=Dd_\mathsf{S}\,\left(\f{d_\mathsf{S}^{k-2}D^{k-2}-A^{k-2}}{d_\mathsf{S}D-A}\right),
\end{align}
then plugging in the dimensional values (with $d=d_\mathsf{SE}$) of all the constants we can simplify the purity to
\begin{align}
    \mbb{E}_\haarrand[\tr(\Upsilon^{2})]&=\frac{(d_\mathsf{SE}+1) \left( d_\mathsf{SE}^2-1\right)^k+\left(d_\mathsf{E}^2-1\right)^{k+1}}{d_\mathsf{E} (d_{\mathsf{SE}}+1) \left(d_\mathsf{SE}^2-1\right)^k},
\end{align}
which simplifies to the expression in Eq.~\eqref{average purity ergodic}.

\section{Average purity - Constant interaction}
\label{appendix - average purity constant}
As in the previous Appendix~\ref{appendix - average purity random}, here $\Upsilon$ is normalized to unit trace. We may write the integral in the $ES$ part, as in Eq.\eqref{ES part purity integ} now with same unitary (throughout we take sums over repeated indices),
\begin{align}
    \omega_{k:0}^{(\rho)}:=\int_{\mbb{U}(d_\mathsf{SE})}&U^\dg\FS_{x_1\alpha_1}U^\dg\cdots{\FS}_{x_k\alpha_k}U_k^\dg{K}_{\tilde\imath\tilde\jmath}U\FS_{\beta_kx_k}\cdots{U}\FS_{\beta_1x_1}U\rho{U}^\dg\FS_{y_1\lambda_1}U^\dg\nonumber\\
    &\qquad\qquad\qquad\qquad\qquad\cdots\FS_{y_k\lambda_k}U^\dg{K}^\dg_{\tilde\imath\tilde\jmath}{U}\FS_{\theta_ky_k}\cdots{U}\FS_{\theta_1y_1}U\,d\muhaar(U),
\label{ES integral k step}
\end{align}

We decompose the unitaries in the whole \gls{syst-env} space as $U=\sum{U}_{ab}|a\rangle\!\langle{b}|$ and $U^\dg=\sum{U}^*_{a^\prime{b}^\prime}|b^\prime\rangle\!\langle{a}^\prime|$ (i.e. the $a$, $b$ labels here refer to the whole \gls{syst-env} space) and we enumerate the labels of the ones to the left of $\rho$ in Eq.~\eqref{ES integral k step} from $0$ to $k$ (priming, ${}^\prime$, the adjoint ones) and the remaining from $k+1$ to $2k+1$, we also do this in increasing order for the adjoint unitaries and decreasing order for the original unitaries so that they match the order of the $\FS$ operators, i.e. the unitary components will originally appear as $U^*_{i^\prime_0j^\prime_0}\cdots{U}^*_{i^\prime_kj^\prime_k}U_{i_kj_k}\cdots{U_{i_0j_0}}
U^*_{i^\prime_{k+1}j^\prime_{k+1}}\cdots{U}^*_{i^\prime_{2k+1}j^\prime_{2k+1}}U_{i^\prime_{2k+1}j^\prime_{2k+1}}\cdots
{U_{i^\prime_{k+1}j^\prime_{k+1}}}$, and we then rearrange them to the form of Eq.~\eqref{k moments of U} just keeping track of the correct order in the operator part, for which we introduce resolutions of identity on \gls{env} into each $\FS$ operator as $\FS_{ab}=\sum|ea\rangle\!\langle{eb}|$ labeled by $e$ for the ones to the left of $\rho$ in Eq.~\eqref{ES integral k step}, priming the ones between adjoint unitaries, and by $\epsilon$ the corresponding ones to the right, also priming the ones between unitaries.

This then leads to
\begin{align}
    \omega_{k:0}^{(\rho)}&=\int_{\mbb{U}(d_{\mathsf{SE}})}\hspace{-0.1in}U_{i_0j_0}\cdots{U}_{i_{2k+1}j_{2k+1}}U^*_{i^\prime_0j^\prime_0}\cdots{U}^*_{i^\prime_{2k+1}j^\prime_{2k+1}}\,d\muhaar(U)\,\langle{i}^\prime_k|K_{\tilde\imath\tilde\jmath}|i_k\rangle\!\langle{j}_0|\rho|j^\prime_{k+1}\rangle\!\langle{i}^\prime_{2k+1}|K^\dg_{\tilde\imath\tilde\jmath}|i_{2k+1}\rangle\nonumber\\
    &\qquad\prod_{\ell=1}^k\delta_{i_{\ell-1}(ex)_\ell}\delta_{j_\ell(e\beta)_\ell}\delta_{i_{\ell+k}(\epsilon{y})_\ell}\delta_{j_{\ell+k+1}(\epsilon\theta)_\ell}\delta_{i^\prime_{\ell-1}(e^\prime{x})_\ell}\delta_{j_\ell^\prime(e^\prime\alpha)_\ell}\delta_{i^\prime_{\ell+k}(\epsilon^\prime{y})_\ell}\delta_{j_{\ell+k+1}^\prime(\epsilon^\prime\lambda)_\ell}|j_0^\prime\rangle\!\langle{j}_{k+1}|\nonumber\\
    &=\langle{i}^\prime_k|K_{\tilde\imath\tilde\jmath}|i_k\rangle\!\langle{j}_0|\rho|j^\prime_{k+1}\rangle\!\langle{i}^\prime_{2k+1}|K^\dg_{\tilde\imath\tilde\jmath}|i_{2k+1}\rangle\sum_{\sigma,\tau\in\mathfrak{G}_{2k+2}}\!\!\Wg(\tau\sigma^{-1})\prod_{n=0}^{2k+1}\delta_{i_ni^\prime_{\sigma(n)}}\delta_{j_nj^\prime_{\tau(n)}}\,|j_0^\prime\rangle\!\langle{j}_{k+1}|\nonumber\\
    &\qquad\prod_{\ell=1}^k\delta_{i_{\ell-1}(ex)_\ell}\delta_{j_\ell(e\beta)_\ell}\delta_{i_{\ell+k}(\epsilon{y})_\ell}\delta_{j_{\ell+k+1}(\epsilon\theta)_\ell}\delta_{i^\prime_{\ell-1}(e^\prime{x})_\ell}\delta_{j_\ell^\prime(e^\prime\alpha)_\ell}\delta_{i^\prime_{\ell+k}(\epsilon^\prime{y})_\ell}\delta_{j_{\ell+k+1}^\prime(\epsilon^\prime\lambda)_\ell}\nonumber\\
    &=\sum_{\sigma,\tau\in\mathfrak{G}_{2k+2}}\Wg(\tau\sigma^{-1})\langle{i}^\prime_k|K_{\tilde\imath\tilde\jmath}|i^\prime_{\sigma(k)}\rangle\!\langle{j}^\prime_{\tau(0)}|\rho|j^\prime_{k+1}\rangle\!\langle{i}^\prime_{2k+1}|K^\dg_{\tilde\imath\tilde\jmath}|i^\prime_{\sigma(2k+1)}\rangle|j_0^\prime\rangle\!\langle{j}^\prime_{\tau(k+1)}|\nonumber\\
    &\qquad\prod_{\ell=1}^k\delta_{i^\prime_{\sigma(\ell-1)}(ex)_\ell}\delta_{j^\prime_{\tau(\ell)}(e\beta)_\ell}\delta_{i^\prime_{\sigma(\ell+k)}(\epsilon{y})_\ell}\delta_{j^\prime_{\tau(\ell+k+1)}(\epsilon\theta)_\ell}\delta_{i^\prime_{\ell-1}(e^\prime{x})_\ell}\delta_{j_\ell^\prime(e^\prime\alpha)_\ell}\delta_{i^\prime_{\ell+k}(\epsilon^\prime{y})_\ell}\delta_{j_{\ell+k+1}^\prime(\epsilon^\prime\lambda)_\ell}.
\end{align}

We now replace $i^\prime\to\varepsilon\varsigma$ and $j^\prime\to\varepsilon^\prime\varsigma^\prime$ in order to split the \gls{env} and \gls{syst} parts explicitly, leaving
\begin{align}
    &\omega_{k:0}^{(\rho)}\nonumber\\&=\hspace{-0.1in}\sum_{\sigma,\tau\in\mathfrak{G}_{2k+2}}\hspace{-0.15in}\Wg(\tau\sigma^{-1})\langle\varepsilon_k\varsigma_k|K_{\tilde\imath\tilde\jmath}|\varepsilon_{\sigma(k)}\varsigma_{\sigma(k)}\rangle\!\langle\varepsilon^\prime_{\tau(0)}\varsigma^\prime_{\tau(0)}|\rho|\varepsilon^\prime_{k+1}\varsigma^\prime_{k+1}\rangle\!\langle\varepsilon_{2k+1}\varsigma_{2k+1}|K^\dg_{\tilde\imath\tilde\jmath}|\varepsilon_{\sigma(2k+1)}\varsigma_{\sigma(2k+1)}\rangle\nonumber\\
    &\qquad\qquad\qquad\prod_{\ell=1}^k\delta_{(\varepsilon\varsigma)_{\sigma(\ell-1)}(ex)_\ell}\delta_{(\varepsilon^\prime\varsigma^\prime)_{\tau(\ell)}(e\beta)_\ell}\delta_{(\varepsilon\varsigma)_{\sigma(\ell+k)}(\epsilon{y})_\ell}\delta_{(\varepsilon^\prime\varsigma^\prime)_{\tau(\ell+k+1)}(\epsilon\theta)_\ell}\delta_{(\varepsilon\varsigma)_{\ell-1}(e^\prime{x})_\ell}\nonumber\\
    &\qquad\qquad\qquad\qquad\qquad\delta_{(\varepsilon^\prime\varsigma^\prime)_\ell(e^\prime\alpha)_\ell}\delta_{(\varepsilon\varsigma)_{\ell+k}(\epsilon^\prime{y})_\ell}\delta_{(\varepsilon^\prime\varsigma^\prime)_{\ell+k+1}(\epsilon^\prime\lambda)_\ell}|\varepsilon^\prime_0\varsigma^\prime_0\rangle\!\langle\varepsilon^\prime_{\tau(k+1)}\varsigma^\prime_{\tau(k+1)}|\nonumber\\
    &=\sum_{\sigma,\tau\in\mathfrak{G}_{2k+2}}\hspace{-0.15in}\Wg(\tau\sigma^{-1})\langle\varepsilon^\prime_{\tau(0)}\varsigma^\prime_{\tau(0)}|\rho|\varepsilon^\prime_{k+1}\varsigma^\prime_{k+1}\rangle|\varepsilon^\prime_0\varsigma^\prime_0\rangle\!\langle\varepsilon^\prime_{\tau(k+1)}\varsigma^\prime_{\tau(k+1)}|\delta_{\varepsilon_k\varepsilon_{\sigma(2k+1)}}\delta_{\varepsilon_{\sigma(k)}\varepsilon_{2k+1}}\nonumber\\
    &\qquad\qquad\qquad\prod_{\ell=1}^k\delta_{\varepsilon_{\sigma(\ell-1)}\varepsilon^\prime_{\tau(\ell)}}\delta_{\varepsilon_{\sigma(\ell+k)}\varepsilon^\prime_{\tau(\ell+k+1)}}\delta_{\varepsilon_{\ell-1}\varepsilon^\prime_\ell}\delta_{\varepsilon_{\ell+k}\varepsilon^\prime_{\ell+k+1}}\delta_{\varsigma^\prime_\ell\alpha_\ell}\nonumber\\
    &\qquad\qquad\qquad\qquad\qquad\delta_{\varsigma^\prime_{\tau(\ell)}\beta_\ell}\delta_{\varsigma^\prime_{\tau(\ell+k+1)}\theta_\ell}\delta_{\varsigma^\prime_{\ell+k+1}\lambda_\ell}\prod_{n=0}^{2k+1}\delta_{\varsigma_{\sigma(n)}\varsigma_n}.\end{align}

The ancillary part is simply $\Omega_{k:0}^{(\mathsf{AB})}=d_\mathsf{S}^{-k}|\alpha_1\beta_1\ldots\alpha_k\beta_k\rangle\!\langle\theta_1\lambda_1\ldots\theta_k\lambda_k|$, so we get for the analogue of Eq.~\eqref{full purity integrals},
\begin{align}
    \Omega_{k:0}&=\omega_{k:0}^{(\rho)}\otimes\Omega_{k:0}^{(AB)}\nonumber\\
    &=\f{1}{d_\mathsf{S}^k}\sum_{\sigma,\tau\in\mathfrak{G}_{2k+2}}\hspace{-0.15in}\Wg(\tau\sigma^{-1})\delta_{\varepsilon_k\varepsilon_{\sigma(2k+1)}}\delta_{\varepsilon_{\sigma(k)}\varepsilon_{2k+1}}\nonumber\\
    &\qquad|\varepsilon^\prime_0\varsigma^\prime_0\varsigma^\prime_1\varsigma^\prime_{\tau(1)}\cdots\varsigma^\prime_k\varsigma^\prime_{\tau(k)}\rangle\langle\varepsilon^\prime_{\tau(k+1)}\varsigma^\prime_{\tau(k+1)}\varsigma^\prime_{\tau(k+2)}\varsigma^\prime_{k+2}\cdots\varsigma^\prime_{\tau(2k+1)}\varsigma^\prime_{2k+1}|\nonumber\\
    &\qquad\qquad\langle\varepsilon^\prime_{\tau(0)}\varsigma^\prime_{\tau(0)}|\rho|\varepsilon^\prime_{k+1}\varsigma^\prime_{k+1}\rangle\prod_{n=0}^{2k+1}\delta_{\varsigma_{\sigma(n)}\varsigma_n}\prod_{\ell=1}^k\delta_{\varepsilon_{\sigma(\ell-1)}\varepsilon^\prime_{\tau(\ell)}}\delta_{\varepsilon_{\sigma(\ell+k)}\varepsilon^\prime_{\tau(\ell+k+1)}}\delta_{\varepsilon_{\ell-1}\varepsilon^\prime_\ell}\delta_{\varepsilon_{\ell+k}\varepsilon^\prime_{\ell+k+1}},
\end{align}
and thus
\begin{align}
    \mbb{E}_\haar[\tr(\Upsilon^2)]&=\tr[(\rho\otimes\mathsf{\Psi}^{\otimes{k}})\,\Omega_{k:0}]\nonumber\\
    &=\f{1}{d_\mathsf{S}^{2k}}\sum_{\sigma,\tau\in\mathfrak{G}_{2k+2}}\hspace{-0.15in}\Wg(\tau\sigma^{-1})\langle\varepsilon^\prime_{\tau(0)}\varsigma^\prime_{\tau(0)}|\rho|\varepsilon^\prime_{k+1}\varsigma^\prime_{k+1}\rangle\!\langle\varepsilon^\prime_{\tau(k+1)}\varsigma^\prime_{\tau(k+1)}|\rho|\varepsilon^\prime_{0}\varsigma^\prime_{0}\rangle\delta_{\varepsilon_k\varepsilon_{\sigma(2k+1)}}\delta_{\varepsilon_{2k+1}\varepsilon_{\sigma(k)}}\nonumber\\
&\hspace{0.5in}\prod_{\substack{\ell=1\\\ell\neq{k+1}}}^{2k+1}
\delta_{\varepsilon_{\ell-1}\varepsilon^\prime_\ell}\delta_{\varepsilon_{\sigma(\ell-1)}\varepsilon^\prime_{\tau(\ell)}}
\delta_{\varsigma^\prime_\ell\varsigma^\prime_{\tau(\ell)}}\prod_{n=0}^{2k+1}\delta_{\varsigma_{\sigma(n)}\varsigma_n},\end{align}
as stated in Eq.~\eqref{average time independent purity}, where we make the definitions
\begin{align}
    &\rho_{\tau(0);k+1}\rho_{\tau(k+1);0}=\langle\varepsilon^\prime_{\tau(0)}\varsigma^\prime_{\tau(0)}|\rho|\varepsilon^\prime_{k+1}\varsigma^\prime_{k+1}\rangle\!\langle\varepsilon^\prime_{\tau(k+1)}\varsigma^\prime_{\tau(k+1)}|\rho|\varepsilon^\prime_{0}\varsigma^\prime_{0}\rangle
    \label{appendix: purity const rho}\\
    &\tilde\Delta_{k,\sigma,\tau}^{(d_\mathsf{E})}=\delta_{e_ke_{\sigma(2k+1)}}\delta_{e_{2k+1}e_{\sigma(k)}}
    \prod_{\substack{\ell=1\\\ell\neq{k+1}}}^{2k+1}\delta_{e_{\ell-1}e^\prime_\ell}\delta_{e_{\sigma(\ell-1)}e^\prime_{\tau(\ell)}},\quad
    \tilde{\tilde\Delta}_{k,\sigma,\tau}^{(d_\mathsf{S})}=\prod_{\substack{\ell=1\\\ell\neq{k+1}}}^{2k+1}\delta_{s^\prime_\ell{s}^\prime_{\tau(\ell)}}\prod_{n=0}^{2k+1}\delta_{s_ns_{\sigma(n)}}\nonumber\\
    &\Delta_{k,\sigma,\tau}^{(d_\mathsf{E},d_\mathsf{S})}=\tilde\Delta_{k,\sigma,\tau}^{(d_\mathsf{E})}\tilde{\tilde\Delta}_{k,\sigma,\tau}^{(d_\mathsf{S})}
    \label{TIpuritydef}
\end{align}

\subsection{Small subsystem limit}\label{appendix: purity Ui=Uj asympt}
In the $d_\mathsf{E}\to\infty$ limit the only term that remains is the one with
\begin{gather}
    \sigma=\tau=(0,k+1)(1,k+2)\cdots(k,2k+1),
\end{gather}
as expressed in cycle notation, which simply means $\sigma(0)=k+1=\tau(0)$, $\sigma(k+1)=0=\tau(k+1)$ for the first $(0,k+1)$, and similarly for the rest of assignments. This then leads to the correct limit, as
\begin{align}
    \tilde{\tilde\Delta}_{k,\sigma,\tau}^{(d_\mathsf{S})}\to\sum_{\substack{\varsigma^{(\prime)}_i=1\\\varsigma^\prime\neq\{\varsigma^\prime_0,\varsigma^\prime_{k+1}\}}}^{d_\mathsf{S}}\prod_{\substack{\ell=1\\\ell\neq{k+1}}}^{2k+1}\delta_{\varsigma^\prime_\ell\varsigma^\prime_{\tau(\ell)}}\prod_{n=0}^{2k+1}\delta_{\varsigma_{\sigma(n)}\varsigma_n}&={d_\mathsf{S}}^{2k+1},\\
    \tilde\Delta_{k,\sigma,\tau}^{(d_\mathsf{E})}\to\sum_{\substack{\varepsilon^{(\prime)}_i=1\\\varepsilon^\prime\neq\{\varepsilon^\prime_0,\varepsilon^\prime_{k+1}\}}}^{d_\mathsf{E}}\delta_{\varepsilon_k\varepsilon_{\sigma(2k+1)}}\delta_{\varepsilon_{2k+1}\varepsilon_{\sigma(k)}}\prod_{\substack{\ell=1\\\ell\neq{k+1}}}^{2k+1}\delta_{\varepsilon_{\ell-1}\varepsilon^\prime_\ell}\delta_{\varepsilon_{\sigma(\ell-1)}\varepsilon^\prime_{\tau(\ell)}}&={d_\mathsf{E}}^{2k+2},
\end{align}
while keeping $\sigma\tau^{-1}=\mathbf{1}^{2k+2}$, i.e. with the argument in the $\Wg$ function an identity (or fixed point), which generates the least denominator powers because $\#\boldsymbol{1}^n=\#[(1)(2)\cdots(n)]=n$, i.e. the identity generates the most number of cycles. When considering other permutations for $\sigma$ and $\tau$ we see that the $d_\mathsf{E}$ powers in the numerator can only decrease while those in the denominator can only increase (as the identity is the permutation that generates the most cycles) so indeed in the large subsystem limit this is the only pair of permutations that survive; the terms in $\rho$ yield traces when summed over basis vectors and thus give factors of one,
\begin{gather}\sum_{\varepsilon^\prime_{k+1},\varsigma^\prime_{k+1}}\langle\varepsilon^\prime_{k+1}\varsigma^\prime_{k+1}|\rho|\varepsilon^\prime_{k+1}\varsigma^\prime_{k+1}\rangle=\sum_{\varepsilon^\prime_0,\varsigma^\prime_0}\langle\varepsilon^\prime_0\varsigma^\prime_0|\rho|\varepsilon^\prime_{0}\varsigma^\prime_{0}\rangle=\tr\rho=1.\end{gather}
 This leads then to the conclusion
\begin{align}\mbb{E}_\haar[\tr(\Upsilon^2)]&\sim\f{d_\mathsf{S}^{2k+1}d_\mathsf{E}^{2k+2}}{d_\mathsf{S}^{2k}}\Wg(\mathbf{1}^{2k+2})=d_\mathsf{S}d_\mathsf{E}^{2k+2}\f{1}{(d_\mathsf{E}d_\mathsf{S})^{2k+2}}=\f{1}{d_\mathsf{S}^{2k+1}},\hspace{0.1in}\text{when}\,\,d_\mathsf{E}\to\infty.\end{align}

\subsection{Long time limit}\label{appendix: purity Ui=Uj asympt time}
For the case when $k\to\infty$, the $\Wg$ function behaves dominantly as in the small subsystem limit and the only pair of permutations that identify all dimension powers in the numerator without dependence of $k$ are identities $\sigma=\tau=\mathbf{1}^{2k+2}$. In this case
\begin{align}\sum_{\substack{\varsigma^{(\prime)}_i=1\\\varsigma^\prime\neq\{\varsigma^\prime_0,\varsigma^\prime_{k+1}\}}}^{d_\mathsf{S}}
\prod_{\substack{\ell=1\\\ell\neq{k+1}}}^{2k+1}\delta_{\varsigma^\prime_\ell\varsigma^\prime_\ell}
\prod_{n=0}^{2k+1}\delta_{\varsigma_n\varsigma_n}
&={d_\mathsf{S}}^{4k+2},\\
\sum_{\substack{\varepsilon^{(\prime)}_i=1\\\varepsilon^\prime\neq\{\varepsilon^\prime_0,\varepsilon^\prime_{k+1}\}}}^{d_\mathsf{E}}
\delta_{\varepsilon_k\varepsilon_{2k+1}}\delta_{\varepsilon_{2k+1}\varepsilon_{k}}
\prod_{\substack{\ell=1\\\ell\neq{k+1}}}^{2k+1}
\delta_{\varepsilon_{\ell-1}\varepsilon^\prime_\ell}\delta_{\varepsilon_{\ell-1}\varepsilon^\prime_{\ell}}
&={d_\mathsf{E}}^{2k+1},\end{align}
giving
\begin{align}\mbb{E}_\haar[\tr(\Upsilon^2)]&\sim\f{d_\mathsf{S}^{4k+2}d_\mathsf{E}^{2k+1}}{d_\mathsf{S}^{2k}}\tr(\rho^2)\Wg(\mathbf{1}^{2k+2})=\f{d_\mathsf{S}^{2k+2}d_\mathsf{E}^{2k+1}}{(d_\mathsf{E}d_\mathsf{S})^{2k+2}}=\f{1}{d_\mathsf{E}},\hspace{0.1in}\text{when}\,\,k\to\infty,\end{align}
as the fiducial state $\rho$ is pure, $\tr(\rho^2)=1$, by assumption.
    \chapter{Sampling from the unitary group}
\label{appendix - numerics Almost}
Here we present detail for how the numerical sampling and therefore the plot in Fig.~\ref{numPlot} of Chapter~\ref{sec:typicality} was obtained. We follow Ref.~\cite{Mezzadri}, which contains a complete discussion and reasoning behind the algorithm to numerically sample unitaries from the Haar measure.

Specifically, we computed the bound on the non-Markovianity $\mc{N}_1$ given by the trace distance $D$ between a process $\Upsilon$ sampled at random and the maximally noisy process $\mbb1/d_\mathsf{S}^{2k+1}$, i.e. $D\left(\Upsilon,\mbb1/d_\mathsf{S}^{2k+1}\right)\geq\mc{N}_1$. This amounts to sampling unitaries $U_i\sim\mu_\haar$ of dimension $d_\mathsf{ES}\times{d}_\mathsf{ES}$, with which the full process is defined, and then computing its trace distance with respect to the maximally mixed state of dimension $d_\mathsf{S}^{2k+1}$. This is repeated a given number of times $n$ in order to obtain an arithmetic average, $\mathrm{E}_n$, which we compare with the upper-bound $\mc{B}_k\geq\mbb{E}_\haarrand[\mc{N}_1]$. These must satisfy $\mathrm{E}_n\leq\mc{B}_k$, given that $\mc{B}_k$ is in turn an upper bound on the uniform (Haar) average trace distance between a process and the maximally noisy process.

The algorithm in Ref.~\cite{Mezzadri} to obtain a uniformly distributed $d\times{d}$ unitary matrix is as follows, with each step explained below:
\begin{compactenum}[\itshape i.]
    \item Generate a $d\times{d}$ complex matrix $Z$ with independent identically distributed (i.i.d.) standard normal complex random variables.
    \item Perform a $QR$ decomposition of $Z$.
    \item Define a diagonal matrix $\Lambda=\mathrm{diag}\left(\f{r_1}{|r_1|},\cdots,\f{r_d}{|r_d|}\right)$, where $r_i$ are the diagonal elements of $R$.
    \item The matrix $U=Q\Lambda$ is distributed according to the Haar measure.
\end{compactenum}

The first step is straightforward, where here ``\emph{standard normal complex}'' refers to a complex random variable with real and imaginary parts being independent normally distributed random variables with mean zero and variance $1/2$. This means the components $Z_{ij}$ are normally distributed with a \gls{pdf} given by $\mscr{P}(Z_{ij})=\f{1}{\pi}\exp\left(-|Z_{ij}|^2\right)$. Since these components are statistically independent, the joint distribution for $Z$ is simply the product of distributions so that $\mscr{P}(Z)=\f{1}{\pi^{d^2}}\exp\left[-\tr(ZZ^\dg)\right]$. This distribution describes the so-called \emph{Ginibre ensemble}, and the variable $Z$ induces a probability measure $d\mu_\mathsf{G}$ on $\mathrm{GL}(\mbb{C},d)$, the set of $d\times{d}$ invertible matrices with ordinary matrix multiplication. Importantly, this measure on the Ginibre ensemble is left and right invariant under unitary transformations, $d\mu_\mathsf{G}(UZ)=d\mu_\mathsf{G}(ZV)=d\mu_\mathsf{G}(Z)$ for any fixed $U,V\in\mbb{U}(d)$.

For step two, any matrix $Z\in\mathrm{GL}(\mbb{C},d)$ can be decomposed as $Z=QR$ with $Q\in\mbb{U}(d)$ and $R$ an upper-triangular and invertible matrix~\cite{Horn_2012}. This is called the $QR$ decomposition of $Z$. This means that, because the matrix $R$ is invertible, we can write $Q=ZR^{-1}$ and we get a unitary random matrix. The $QR$ decomposition is a standard routine in most symbolic softwares such as Matlab or Mathematica or in packages like NumPy for Python.

The last step is related to the non-uniqueness of the $QR$ decomposition. As discussed in Ref.~\cite{Mezzadri}, the unitary $Q$ is not quite Haar-distributed because of this non-uniqueness. That is, for any diagonal unitary matrix $\Lambda\in\mbb{U}(d)$, we have $QR=(Q\Lambda)(\Lambda^\dg{R})=Q'R'$ with $Q'$ unitary and $R'$ upper-triangular. This is remedied with $\Lambda=\mathrm{diag}\left(\f{r_1}{|r_1|},\cdots,\f{r_d}{|r_d|}\right)$ because it forces $R$ to have positive diagonal entries.

Thus, for a $k$-step random interaction process, i.e. with $U_i\neq{U}_j$, in the $i\textsuperscript{th}$ run, we generate $k+1$ Haar-distributed $d_\mathsf{SE}\times{d}_\mathsf{SE}$ unitary matrices, $\left\{U^{(i)}_0,U^{(i)}_1,\ldots,U^{(i)}_k\right\}$, according to the algorithm above and compute the arithmetic average
\begin{equation}
    \mathrm{E}_n=\f{1}{n}\sum_{i=1}^nD\left(\Upsilon^{(i)},\mbb1/d_\mathsf{S}^{2k+1}\right),
    \label{eq: numerical average}
\end{equation}
for a given number $n$ of trials, where $\Upsilon^{(i)}=\tr_\mathsf{E}\left[\mc{U}_k^{(i)}\mc{S}_k\cdots\mc{U}_1^{(i)}\mc{S}_1\mc{U}_0^{(i)}(\rho\otimes\mathsf{\Psi})\right]$ with $\mc{U}_\ell^{(i)}(\cdot)=U_\ell^{(i)}(\cdot)U_\ell^{(i)\,\dg}$, is the random process in the $i\textsuperscript{th}$ trial for any fiducial pure state $\rho$.

Finally, the greatest problem to be overcome is the large-dimensional nature of the Choi state, which prior to partial tracing of the environment, is a $d_\mathsf{E}d_\mathsf{S}^{2k+1}$ dimensional square matrix. While we directly computed Eq.~\eqref{eq: numerical average}, depending on the purpose of the calculation,it might be worth it to try and render this problem efficient through other computational or numerical techniques.
    \backmatter         
    \printbibliography

@article{FigueroaRomero2019almostmarkovian,
  doi = {10.22331/q-2019-04-30-136},
  title = {Almost {M}arkovian processes from closed dynamics},
  author = {Figueroa-Romero, Pedro and Modi, Kavan and Pollock, Felix A.},
  journal = {{Quantum}},
  issn = {2521-327X},
  publisher = {{Verein zur F{\"{o}}rderung des Open Access Publizierens in den Quantenwissenschaften}},
  volume = {3},
  pages = {136},
  year = {2019}
}

@article{FigueroaRomero2020makovianization,
author = {Figueroa-Romero, Pedro and Pollock, Felix A. and Modi, Kavan},
title = {Markovianization by design},
url = {https://arxiv.org/abs/2004.07620},
year = "2020",
pages = {arXiv:2004.07620 [quant-ph]}
}

@article{FigueroaRomero2020equilibration,
  title = {Equilibration on average in quantum processes with finite temporal resolution},
  author = {Figueroa-Romero, Pedro and Modi, Kavan and Pollock, Felix A.},
  journal = {Phys. Rev. E},
  volume = {102},
  issue = {3},
  pages = {032144},
  numpages = {9},
  year = {2020},
  %month = {Sep},
  publisher = {American Physical Society},
  doi = {10.1103/PhysRevE.102.032144},
  %url = {https://link.aps.org/doi/10.1103/PhysRevE.102.032144}
}

@article{ZhangXiang,
author="Zhang, L. and Xiang, H.",
title = {Average entropy of a subsystem over a global unitary orbit of a mixed bipartite state},
journal="Quantum Inf. Process.",
year="2017",
volume="16",
number="5",
pages="112",
issn="1573-1332",
doi="10.1007/s11128-017-1570-6"
}

@article{PhysRevA.76.042113,
  title = {How state preparation can affect a quantum experiment: Quantum process tomography for open systems},
  author = {Kuah, A. and Modi, Kavan and Rodr\'{i}guez-Rosario, C\'{e}sar A. and Sudarshan, E. C. G.},
  journal = {Phys. Rev. A},
  volume = {76},
  issue = {4},
  pages = {042113},
  numpages = {12},
  year = {2007},
  %month = {Oct},
  publisher = {American Physical Society},
  doi = {10.1103/PhysRevA.76.042113},
  %url = {https://link.aps.org/doi/10.1103/PhysRevA.76.042113}
}

@book{carmichael1993open,
  title={An Open Systems Approach to Quantum Optics},
  author={Carmichael, H. and de Bruxelles, U.},
  number={v. 18},
  isbn={9783540566342},
  lccn={93019394},
  %series={An Open Systems Approach to Quantum Optics: Lectures Presented at the Universit{\'e} Libre de Bruxelles, October 28 to November 4, 1991},
  %url={https://books.google.com.au/books?id=El5gxgxWhpgC},
  year={1993},
  publisher={Springer Berlin Heidelberg}
}

@book{alicki2007quantum,
  title={Quantum Dynamical Semigroups and Applications},
  author={Alicki, R. and Lendi, K.},
  isbn={9783540708612},
  lccn={2007920178},
  series={Lecture Notes in Physics},
 % url={https://books.google.com.au/books?id=Y9NrCQAAQBAJ},
  year={2007},
  publisher={Springer Berlin Heidelberg}
}

@Article{Franke1976,
author={Franke, V. A.},
title={On the general form of the dynamical transformation of density matrices},
journal={Theor. Math. Phys+.},
year={1976},
%month={May},
volume={27},
number={2},
pages={406-413},
issn={1573-9333},
doi={10.1007/BF01051230},
%url={https://doi.org/10.1007/BF01051230}
}

@article{Preskill_QI_notes,
  author = {John Preskill},
  title  = {Lecture Notes for Ph219/CS219: Quantum Information},
  month = {10},
  year  = {2018},
  publisher={California Institute of Technology},
  url = {http://theory.caltech.edu/~preskill/ph219/ph219_2018-19}
}

@article{GKS,
author = {Gorini,Vittorio  and Kossakowski,Andrzej  and Sudarshan,E. C. G. },
title = {Completely positive dynamical semigroups of N‐level systems},
journal = {J. Math. Phys.},
volume = {17},
number = {5},
pages = {821-825},
year = {1976},
doi = {10.1063/1.522979},
%URL = {https://aip.scitation.org/doi/abs/10.1063/1.522979}
}

@book{blanchard2000decoherence,
  title={Decoherence: Theoretical, Experimental, and Conceptual Problems: Proceedings of a Workshop Held at Bielefeld Germany, 10--14 November 1998},
  author={Blanchard, P. and Giulini, D. and Joos, E. and Kiefer, C. and Stamatescu, I.O.},
  isbn={9783540668992},
  lccn={99088828},
  series={Lecture Notes in Physics},
  %url={https://books.google.com.au/books?id=tR-cTOiUeTQC},
  year={2000},
  publisher={Springer Berlin Heidelberg}
}

@book{Ledoux,
title={The Concentration of Measure Phenomenon},
author={Ledoux, M.},
isbn={9780821837924},
lccn={2001041310},
series={Mathematical surveys and monographs},
%url={https://books.google.com.au/books?id=mCX\_cWL6rqwC},
year={2005},
publisher={American Mathematical Society}
}

@article{Luchnikov2018,
  title = {Simulation Complexity of Open Quantum Dynamics: Connection with Tensor Networks},
  author = {Luchnikov, I. A. and Vintskevich, S. V. and Ouerdane, H. and Filippov, S. N.},
  journal = {Phys. Rev. Lett.},
  volume = {122},
  issue = {16},
  pages = {160401},
  numpages = {7},
  year = {2019},
  %month = {Apr},
  publisher = {American Physical Society},
  doi = {10.1103/PhysRevLett.122.160401},
  %url = {https://link.aps.org/doi/10.1103/PhysRevLett.122.160401}
}

@article{tamascelli2018,
author = {Tamascelli, D. and Smirne, A. and Huelga, S. F. and Plenio, M. B.},
journal = {Phys. Rev. Lett.},
title = {Nonperturbative Treatment of non-{Markovian} Dynamics of Open Quantum Systems},
  volume = {120},
  issue = {3},
  pages = {030402},
  numpages = {6},
  year = {2018},
 publisher = {American Physical Society},
  doi = {10.1103/PhysRevLett.120.030402}
}

@article{Page_purity,
  title = {Average entropy of a subsystem},
  author = {Page, Don N.},
  journal = {Phys. Rev. Lett.},
  volume = {71},
  issue = {9},
  pages = {1291--1294},
  numpages = {0},
  year = {1993},
  %month = {Aug},
  publisher = {American Physical Society},
  doi = {10.1103/PhysRevLett.71.1291},
  %url = {https://link.aps.org/doi/10.1103/PhysRevLett.71.1291}
}

@Article{Hayden2006,
author={Hayden, Patrick
and Leung, Debbie W.
and Winter, Andreas},
title={Aspects of Generic Entanglement},
journal={Commun. Math. Phys.},
year={2006},
%month={Jul},
%day={01},
volume={265},
number={1},
pages={95-117},
issn={1432-0916},
doi={10.1007/s00220-006-1535-6},
%url={https://doi.org/10.1007/s00220-006-1535-6}
}

@article{schrodinger_1935,
title={Discussion of Probability Relations between Separated Systems},
volume={31},
DOI={10.1017/S0305004100013554},
number={4},
journal={Math. Proc. Cambridge},
publisher={Cambridge University Press},
author={Schrödinger, E.},
year={1935},
pages={555–563}}

@article{Reimann_2008,
  title = {Foundation of Statistical Mechanics under Experimentally Realistic Conditions},
  author = {Reimann, Peter},
  journal = {Phys. Rev. Lett.},
  volume = {101},
  issue = {19},
  pages = {190403},
  numpages = {4},
  year = {2008},
  %month = {11},
  publisher = {American Physical Society},
  doi = {10.1103/PhysRevLett.101.190403}
}

@article{Wood_thesis,
author = {Christopher J. Wood},
title = {Non-completely positive maps: properties and applications},
year = "2008",
pages = {arXiv:0911.3199 [quant-ph]},
url = {https://arxiv.org/abs/0911.3199}
}

@book{Lloyd_1988,
  title={Black Holes, Demons and the loss of Coherence: how complex systems get information, and what they do with it},
  author={Lloyd, S.},
  url={https://books.google.com.au/books?id=rCt8NwAACAAJ},
  year={1988},
  publisher={Rockefeller University}
}

@book{Gemmer_2009,
  title={Quantum Thermodynamics: Emergence of thermodynamic behavior within composite quantum systems},
  author={Gemmer, J. and Michel, M. and Mahler, G.},
  isbn={9783540705093},
  lccn={2009926959},
  series={Lecture Notes in Physics},
  %url={https://books.google.com.au/books?id=NaCjwZCnI\_gC},
  year={2009},
  publisher={Springer Berlin Heidelberg}
}

@article{Cappellini_2007,
author = {Cappellini, Valerio and Sommers, Hans-J\"{u}rgen  and \.{Z}yczkowski,Karol },
title = {Subnormalized states and trace-nonincreasing maps},
journal = {J. Math. Phys.},
volume = {48},
number = {5},
pages = {052110},
year = {2007},
doi = {10.1063/1.2738359},
}

@article{Normal_typicality,
author = {Goldstein, Sheldon  and Lebowitz, Joel L.  and Mastrodonato, Christian  and Tumulka, Roderich  and Zanghì, Nino },
title = {Normal typicality and von Neumann's quantum ergodic theorem},
journal = {P. Roy. Soc. A-Math. Phy.},
volume = {466},
number = {2123},
pages = {3203-3224},
year = {2010},
doi = {10.1098/rspa.2009.0635},
%URL = {https://royalsocietypublishing.org/doi/abs/10.1098/rspa.2009.0635},
%eprint = {https://royalsocietypublishing.org/doi/pdf/10.1098/rspa.2009.0635}
}

@article{Mezzadri,
author = "Mezzadri, Francesco",
title = {How to generate random matrices from the classical compact groups},
journal = {Notices of the AMS},
volume  = "54",
number = "5",
pages = "592-604",
year = "2007",
url = {https://arxiv.org/abs/math-ph/0609050}
}

@article{Zyczkowski_1994,
	doi = {10.1088/0305-4470/27/12/028},
	url = {https://doi.org/10.1088%2F0305-4470%2F27%2F12%2F028},
	year = 1994,
	%month = {jun},
	publisher = {{IOP} Publishing},
	volume = {27},
	number = {12},
	pages = {4235--4245},
	author = {K. \.{Z}yczkowski and M. Kus},
	title = {Random unitary matrices},
	journal = {J. Phys. A-Math. Theor.}
}

@article{PhysRevLett.120.150603,
  title = {Eigenstate Thermalization for Degenerate Observables},
  author = {Anza, Fabio and Gogolin, Christian and Huber, Marcus},
  journal = {Phys. Rev. Lett.},
  volume = {120},
  issue = {15},
  pages = {150603},
  numpages = {7},
  year = {2018},
  %month = {Apr},
  publisher = {American Physical Society},
  doi = {10.1103/PhysRevLett.120.150603},
  %url = {https://link.aps.org/doi/10.1103/PhysRevLett.120.150603}
}

@article{Roberts2017,
author={Roberts, Daniel A.
and Yoshida, Beni},
title={Chaos and complexity by design},
journal={J. High Energy Phys.},
year={2017},
%month={Apr},
%day={20},
volume={2017},
number={4},
pages={121},
issn={1029-8479},
doi={10.1007/JHEP04(2017)121},
%url={https://doi.org/10.1007/JHEP04(2017)121}
}

@article{Ginory_2019,
title = "Weingarten calculus and the IntHaar package for integrals over compact matrix groups",
journal = "J. Symb. Comput.",
year = "2019",
issn = "0747-7171",
doi = "https://doi.org/10.1016/j.jsc.2019.12.003",
author = "Alejandro Ginory and Jongwon Kim",
}

@article{Schrodinger,
author = {Schrödinger, E.},
title = {Energieaustausch nach der Wellenmechanik},
journal = {Ann. Phys.},
volume = {388},
number = {15},
pages = {956-968},
doi = {10.1002/andp.19273881504},
%url = {https://onlinelibrary.wiley.com/doi/abs/10.1002/andp.19273881504},
year = {1927}
}

@book{bengtsson2006geometry,
  title={Geometry of Quantum States: An Introduction to Quantum Entanglement},
  author={Bengtsson, I. and \.{Z}yczkowski, K.},
  isbn={9780511190865},
  %url={https://books.google.com.au/books?id=yHO7jwEACAAJ},
  year={2006},
  publisher={Cambridge University Press}
}

@article{Stinespring,
 ISSN = {00029939, 10886826},
 %url = {http://www.jstor.org/stable/2032342},
 author = {W. Forrest Stinespring},
 journal = {P. Am. Math. Soc.},
 number = {2},
 pages = {211--216},
 publisher = {American Mathematical Society},
 title = {Positive Functions on {$C^*$}-Algebras},
 volume = {6},
 year = {1955}
}

@book{schrodinger2003,
  title={Collected Papers on Wave Mechanics},
  author={Schr{\"o}dinger, E.},
  isbn={9780821835241},
  lccn={80070108},
  series={AMS Chelsea Publishing Series},
  %url={https://books.google.com.au/books?id=8ZROAgAAQBAJ},
  year={2003},
  publisher={AMS Chelsea Pub.}
}

@book{peres2006quantum,
  title={Quantum Theory: Concepts and Methods},
  author={Peres, A.},
  isbn={9780306471209},
  series={Fundamental Theories of Physics},
  %url={https://books.google.com.au/books?id=pQXSBwAAQBAJ},
  year={2006},
  publisher={Springer Netherlands}
}

@article{Peres_recurrence,
  title = {Recurrence Phenomena in Quantum Dynamics},
  author = {Peres, Asher},
  journal = {Phys. Rev. Lett.},
  volume = {49},
  issue = {15},
  pages = {1118--1118},
  numpages = {0},
  year = {1982},
  publisher = {American Physical Society},
  doi = {10.1103/PhysRevLett.49.1118},
  %url = {https://link.aps.org/doi/10.1103/PhysRevLett.49.1118}
}

@article{Wallace_recurrence,
author = {David Wallace},
title = {Recurrence Theorems: a unified account},
year = "2013",
pages = {arXiv:1306.3925 [quant-ph]},
url = {https://arxiv.org/abs/1306.3925}
}

@article{Bhattacharyya_recurrence,
author = {Bhattacharyya,Kamal  and Mukherjee,Debashis },
title = {On estimates of the quantum recurrence time},
journal = {J. Chem. Phys.},
volume = {84},
number = {6},
pages = {3212-3214},
year = {1986},
doi = {10.1063/1.450251},
%url = {https://doi.org/10.1063/1.450251}
}

@article{Gimeno_2017,
	doi = {10.1088/1751-8121/aa67fe},
	%url = {https://doi.org/10.1088%2F1751-8121%2Faa67fe},
	year = 2017,
	publisher = {{IOP} Publishing},
	volume = {50},
	number = {18},
	pages = {185302},
	author = {V Gimeno and J M Sotoca},
	title = {Upper bounds for the Poincar{\'{e}} recurrence time in quantum mixed states},
	journal = {J. Phys. A-Math. Theor.}
}

@article{Gogolin_2011,
  title = {Absence of Thermalization in Nonintegrable Systems},
  author = {Gogolin, Christian and M\"uller, Markus P. and Eisert, Jens},
  journal = {Phys. Rev. Lett.},
  volume = {106},
  issue = {4},
  pages = {040401},
  numpages = {4},
  year = {2011},
  %month = {1},
  publisher = {American Physical Society},
  doi = {10.1103/PhysRevLett.106.040401},
  %url = {https://link.aps.org/doi/10.1103/PhysRevLett.106.040401}
}

@Article{Collins_2006,
author="Collins, B. and {\'{S}}niady, P.",
title = {Integration with Respect to the {Haar} Measure on Unitary, Orthogonal and Symplectic Group},
journal="Commun. Math. Phys.",
year="2006",
volume="264",
number="3",
pages="773--795",
issn="1432-0916",
doi="10.1007/s00220-006-1554-3"
}

@book{deza2009encyclopedia,
title={Encyclopedia of Distances},
author={Deza, M.M. and Deza, E.},
isbn={9783642002342},
lccn={2009921824},
series={Encyclopedia of Distances},
year={2009},
publisher={Springer Berlin Heidelberg},
doi = {10.1007/978-3-642-00234-2}
}

@article{qspeedLipsch,
author = {Epstein, J. M. and Whaley, K. B.},
title = {Quantum speed limits for quantum-information-processing tasks},
journal = {Phys. Rev. A},
volume = {95},
issue = {4},
pages = {042314},
numpages = {8},
year = {2017},
publisher = {American Physical Society},
doi = {10.1103/PhysRevA.95.042314}
}

@book{gibbs2014,
  title={Elementary Principles in Statistical Mechanics},
  author={Gibbs, J.W.},
  isbn={9780486789958},
  series={Dover Books on Physics},
  year={2014},
  publisher={Dover Publications}
}

@book{Bhatia_2013,
  title={Matrix Analysis},
  author={Bhatia, R.},
  isbn={9781461206538},
  lccn={96032217},
  series={Graduate Texts in Mathematics},
  year={2013},
  publisher={Springer New York}
}

@article{Busch_Gleason,
  title = {Quantum States and Generalized Observables: A Simple Proof of Gleason's Theorem},
  author = {Busch, P.},
  journal = {Phys. Rev. Lett.},
  volume = {91},
  issue = {12},
  pages = {120403},
  numpages = {4},
  year = {2003},
  %month = {9},
  publisher = {American Physical Society},
  doi = {10.1103/PhysRevLett.91.120403},
  %url = {https://link.aps.org/doi/10.1103/PhysRevLett.91.120403}
}

@article{Alicki_1995,
  title = {Comment on ``Reduced Dynamics Need Not Be Completely Positive''},
  author = {Alicki, Robert},
  journal = {Phys. Rev. Lett.},
  volume = {75},
  issue = {16},
  pages = {3020--3020},
  numpages = {0},
  year = {1995},
  %month = {10},
  publisher = {American Physical Society},
  doi = {10.1103/PhysRevLett.75.3020},
  %url = {https://link.aps.org/doi/10.1103/PhysRevLett.75.3020}
}

@Article{Modi2012,
author={Modi, Kavan},
title={Operational approach to open dynamics and quantifying initial correlations},
journal={Sci. Rep.},
year={2012},
volume={2},
number={1},
pages={581},
issn={2045-2322},
doi={10.1038/srep00581},
%url={https://doi.org/10.1038/srep00581}
}

@article{Sudarshan_1961,
author = {Jordan,Thomas F.  and Sudarshan,E. C. G. },
title = {Dynamical Mappings of Density Operators in Quantum Mechanics},
journal = {J. Math. Phys.},
volume = {2},
number = {6},
pages = {772-775},
year = {1961},
doi = {10.1063/1.1724221}
}

@article{Sudarshan_2_1961,
  title = {Stochastic Dynamics of Quantum-Mechanical Systems},
  author = {Sudarshan, E. C. G. and Mathews, P. M. and Rau, Jayaseetha},
  journal = {Phys. Rev.},
  volume = {121},
  issue = {3},
  pages = {920--924},
  numpages = {0},
  year = {1961},
  %month = {2},
  publisher = {American Physical Society},
  doi = {10.1103/PhysRev.121.920},
  %url = {https://link.aps.org/doi/10.1103/PhysRev.121.920}
}

@article{Jaynes1,
  title = {Information Theory and Statistical Mechanics},
  author = {Jaynes, E. T.},
  journal = {Phys. Rev.},
  volume = {106},
  issue = {4},
  pages = {620--630},
  numpages = {0},
  year = {1957},
  %month = {5},
  publisher = {American Physical Society},
  doi = {10.1103/PhysRev.106.620},
  %url = {https://link.aps.org/doi/10.1103/PhysRev.106.620}
}

@article{Jaynes2,
  title = {Information Theory and Statistical Mechanics. II},
  author = {Jaynes, E. T.},
  journal = {Phys. Rev.},
  volume = {108},
  issue = {2},
  pages = {171--190},
  numpages = {0},
  year = {1957},
  %month = {10},
  publisher = {American Physical Society},
  doi = {10.1103/PhysRev.108.171},
  %url = {https://link.aps.org/doi/10.1103/PhysRev.108.171}
}

@article{Rigol_2007,
  title = {Relaxation in a completely integrable many-body quantum system: An ab initio study of the dynamics of the highly excited states of 1D lattice hard-core bosons},
  author = {Rigol, Marcos and Dunjko, Vanja and Yurovsky, Vladimir and Olshanii, Maxim},
  journal = {Phys. Rev. Lett.},
  volume = {98},
  issue = {5},
  pages = {050405},
  numpages = {4},
  year = {2007},
  %month = {2},
  publisher = {American Physical Society},
  doi = {10.1103/PhysRevLett.98.050405},
  %url = {https://link.aps.org/doi/10.1103/PhysRevLett.98.050405}
}

@article{Kraus,
title = "General state changes in quantum theory",
journal = "Ann. Phys.",
volume = "64",
number = "2",
pages = "311 - 335",
year = "1971",
issn = "0003-4916",
doi = "https://doi.org/10.1016/0003-4916(71)90108-4",
author = "Karl Kraus",
}

@article{Jamiolkowski,
title = "Linear transformations which preserve trace and positive semidefiniteness of operators",
journal = "Rep. Math. Phys.",
volume = "3",
number = "4",
pages = "275 - 278",
year = "1972",
issn = "0034-4877",
doi = "https://doi.org/10.1016/0034-4877(72)90011-0",
%url = "http://www.sciencedirect.com/science/article/pii/0034487772900110",
author = "A. Jamio{\l{}}kowski"
}

@article{Choi_1975,
title = "Completely positive linear maps on complex matrices",
journal = "Linear Algebra Appl.",
volume = "10",
number = "3",
pages = "285 - 290",
year = "1975",
issn = "0024-3795",
doi = "https://doi.org/10.1016/0024-3795(75)90075-0",
%url = "http://www.sciencedirect.com/science/article/pii/0024379575900750",
author = "Man-Duen Choi"
}

@article{Wood2015Tensor,
  title={Tensor networks and graphical calculus for open quantum systems},
  author={Christopher J. Wood and Jacob D. Biamonte and David G. Cory},
  journal={Quantum Inf. Comput.},
  year={2015},
  volume={15},
  pages={759-811},
  url = {https://arxiv.org/abs/1111.6950}
}

@article{Milz_operational,
author = {Milz, Simon and Pollock, Felix A. and Modi, Kavan},
title = {An Introduction to Operational Quantum Dynamics},
journal = {Open Syst. Inf. Dyn.},
volume = {24},
number = {04},
pages = {1740016},
year = {2017},
doi = {10.1142/S1230161217400169},
%url = {https://doi.org/10.1142/S1230161217400169}
}

@book{kraus1983states,
  title={States, effects, and operations},
  author={Kraus, K. and B{\"o}hm, A. and Dollard, J.D. and Wootters, W.H.},
  isbn={9780387127323},
  lccn={83016906},
  series={Lecture notes in physics},
  year={1983},
  publisher={Springer-Verlag}
}

@book{Watrous,
  title={The Theory of Quantum Information},
  author={Watrous, J.},
  isbn={9781316853122},
  year={2018},
  publisher={Cambridge University Press}
}

@Article{vonNeumann2010,
author={von Neumann, J.},
title={Proof of the ergodic theorem and the H-theorem in quantum mechanics},
journal={Eur. Phys. J. H.},
year={2010},
volume={35},
number={2},
pages={201-237},
issn={2102-6467},
doi={10.1140/epjh/e2010-00008-5}
}

@article{Bocchieri_1957,
  title = {Quantum Recurrence Theorem},
  author = {Bocchieri, P. and Loinger, A.},
  journal = {Phys. Rev.},
  volume = {107},
  issue = {2},
  pages = {337--338},
  numpages = {0},
  year = {1957},
  publisher = {American Physical Society},
  doi = {10.1103/PhysRev.107.337},
  %url = {https://link.aps.org/doi/10.1103/PhysRev.107.337}
}

@book{Boltzmann,
author = {Ludwig Boltzmann},
title = {Further Studies on the Thermal Equilibrium of Gas Molecules},
booktitle = {The Kinetic Theory of Gases},
year = {2003},
pages = {262-349},
doi = {10.1142/9781848161337_0015},
url = {https://www.worldscientific.com/doi/abs/10.1142/9781848161337_0015}
}

@book{Kitaev,
author = {Aharonov, Dorit and Kitaev, Alexei and Nisan, Noam},
title = {Quantum Circuits with Mixed States},
year = {1998},
isbn = {0897919629},
publisher = {Association for Computing Machinery},
address = {New York, NY, USA},
%url = {https://doi.org/10.1145/276698.276708},
doi = {10.1145/276698.276708},
pages = {20–30},
numpages = {11},
location = {Dallas, Texas, USA},
series = {STOC ’98}
}

@book{kittel1980thermal,
  title={Thermal Physics},
  author={Kittel, C. and Charles Kittel, H.K. and Charles, K. and Kroemer, H. and Herbert, K.},
  isbn={9780716710882},
  lccn={79016677},
  %url={https://books.google.com.au/books?id=c0R79nyOoNMC},
  year={1980},
  publisher={W. H. Freeman}
}

@article{Gogolin_2016,
	doi = {10.1088/0034-4885/79/5/056001},
	year = 2016,
	%month = {04},
	publisher = {{IOP} Publishing},
	volume = {79},
	number = {5},
	pages = {056001},
	author = {Christian Gogolin and Jens Eisert},
	title = {Equilibration, thermalisation, and the emergence of statistical mechanics in closed quantum systems},
	journal = {Rep. Prog. Phys.}
}

@article{DAlessio_2016,
author = {Luca D'Alessio and Yariv Kafri and Anatoli Polkovnikov and Marcos Rigol},
title = {From quantum chaos and eigenstate thermalization to statistical mechanics and thermodynamics},
journal = {Adv. Phys.},
volume = {65},
number = {3},
pages = {239-362},
year  = {2016},
publisher = {Taylor & Francis},
doi = {10.1080/00018732.2016.1198134}
}

@article{Besicovitch,
author = {Besicovitch, A. S.},
title = {On Generalized Almost Periodic Functions},
journal = {P. Lond. Math. Soc.},
volume = {s2-25},
number = {1},
pages = {495-512},
doi = {10.1112/plms/s2-25.1.495},
%url = {https://londmathsoc.onlinelibrary.wiley.com/doi/abs/10.1112/plms/s2-25.1.495},
%eprint = {https://londmathsoc.onlinelibrary.wiley.com/doi/pdf/10.1112/plms/s2-25.1.495},
year = {1926}
}

@book{nielsen2000quantum,
  title={Quantum Computation and Quantum Information},
  author={Nielsen, M.E. and Nielsen, M.A. and Chuang, I.L. and Chuang, I.L.},
  isbn={9780521635035},
  lccn={98022029},
  series={Cambridge Series on Information and the Natural Sciences},
  year={2000},
  publisher={Cambridge University Press}
}

@article{EPR_original,
  title = {Can Quantum-Mechanical Description of Physical Reality Be Considered Complete?},
  author = {Einstein, A. and Podolsky, B. and Rosen, N.},
  journal = {Phys. Rev.},
  volume = {47},
  issue = {10},
  pages = {777--780},
  numpages = {0},
  year = {1935},
  %month = {May},
  publisher = {American Physical Society},
  doi = {10.1103/PhysRev.47.777},
  %url = {https://link.aps.org/doi/10.1103/PhysRev.47.777}
}

@article{Masanes,
  title = {Complexity of energy eigenstates as a mechanism for equilibration},
  author = {Masanes, Llu\'{i}s and Roncaglia, Augusto J. and Ac\'{i}n, Antonio},
  journal = {Phys. Rev. E},
  volume = {87},
  issue = {3},
  pages = {032137},
  numpages = {9},
  year = {2013},
  %month = {Mar},
  publisher = {American Physical Society},
  doi = {10.1103/PhysRevE.87.032137},
  %url = {https://link.aps.org/doi/10.1103/PhysRevE.87.032137}
}

@article{Milz_PRL2019,
  title = {Completely Positive Divisibility Does Not Mean Markovianity},
  author = {Milz, Simon and Kim, M. S. and Pollock, Felix A. and Modi, Kavan},
  journal = {Phys. Rev. Lett.},
  volume = {123},
  issue = {4},
  pages = {040401},
  numpages = {6},
  year = {2019},
  %month = {Jul},
  publisher = {American Physical Society},
  doi = {10.1103/PhysRevLett.123.040401},
  %url = {https://link.aps.org/doi/10.1103/PhysRevLett.123.040401}
}

@book{Milman,
title={Asymptotic Theory of Finite Dimensional Normed Spaces},
doi = {10.1007/978-3-540-38822-7},
author={Milman, V. D. and Schechtman, G.},
number={1200},
lccn={86017871},
series={Lecture Notes in Mathematics},
year={1986},
publisher={Springer-Verlag}
}

@book{breuer2002theory,
  title={The Theory of Open Quantum Systems},
  author={Breuer, H.P. and Petruccione, F.},
  isbn={9780198520634},
  lccn={2002075713},
  %url={https://books.google.com.au/books?id=0Yx5VzaMYm8C},
  year={2002},
  publisher={Oxford University Press}
}

@article{choi_1972, 
title={Positive Linear Maps on C*-Algebras},
volume={24},
DOI={10.4153/CJM-1972-044-5},
number={3},
journal={Canadian J. Math.},
publisher={Cambridge University Press},
author={Choi, Man-Duen},
year={1972},
pages={520–529}
}

@book{halmos2013measure,
  title={Measure Theory},
  author={Halmos, P.R.},
  isbn={9781468494402},
  series={Graduate Texts in Mathematics},
  %url={https://books.google.com.au/books?id=jS\_vBwAAQBAJ},
  year={2013},
  publisher={Springer New York}
}

@book{schlosshauer2007decoherence,
  title={Decoherence: and the Quantum-To-Classical Transition},
  author={Schlosshauer, M.A.},
  isbn={9783540357759},
  series={The Frontiers Collection},
  url={https://books.google.com.au/books?id=URAchzQIsTgC},
  year={2007},
  publisher={Springer Berlin Heidelberg}
}

@article{Breuer_2016,
  title = {Colloquium: Non-Markovian dynamics in open quantum systems},
  author = {Breuer, Heinz-Peter and Laine, Elsi-Mari and Piilo, Jyrki and Vacchini, Bassano},
  journal = {Rev. Mod. Phys.},
  volume = {88},
  issue = {2},
  pages = {021002},
  numpages = {24},
  year = {2016},
  %month = {Apr},
  publisher = {American Physical Society},
  doi = {10.1103/RevModPhys.88.021002},
  %url = {https://link.aps.org/doi/10.1103/RevModPhys.88.021002}
}

@article{Vacchini_2011,
	doi = {10.1088/1367-2630/13/9/093004},
	%url = {https://doi.org/10.1088%2F1367-2630%2F13%2F9%2F093004},
	year = 2011,
	%month = {sep},
	publisher = {{IOP} Publishing},
	volume = {13},
	number = {9},
	pages = {093004},
	author = {Bassano Vacchini and Andrea Smirne and Elsi-Mari Laine and Jyrki Piilo and Heinz-Peter Breuer},
	title = {Markovianity and non-Markovianity in quantum and classical systems},
	journal = {New. J. Phys.}
}

@Article{Hanggi1977,
author={H{\"a}nggi, P.
and Thomas, H.},
title={Time evolution, correlations, and linear response of non-Markov processes},
journal={Z. Phys. B Cond. Mat.},
year={1977},
%month={Mar},
%day={01},
volume={26},
number={1},
pages={85-92},
issn={1431-584X},
doi={10.1007/BF01313376},
%url={https://doi.org/10.1007/BF01313376}
}

@article{Hanggi_1982,
title = "Stochastic processes: Time evolution, symmetries and linear response",
journal = "Phys. Rep.",
volume = "88",
number = "4",
pages = "207 - 319",
year = "1982",
issn = "0370-1573",
doi = "https://doi.org/10.1016/0370-1573(82)90045-X",
%url = "http://www.sciencedirect.com/science/article/pii/037015738290045X",
author = "Peter Hänggi and Harry Thomas"
}

@article{Puchala_2017,
author = "Pucha\l{}a, Z. and Miszczak, J. A.",
title = {Symbolic integration with respect to the {H}aar measure on the unitary groups},
journal = {Bull. Pol. Acad. Sci. Tech. Sci.},
volume  = "65",
number = "1",
year = "2017",
pages= "21",
doi= {10.1515/bpasts-2017-0003}
}

@Article{Hanggi1978,
author={Hanggi, P.
and Thomas, H.
and Grabert, H.
and Talkner, P.},
title={Note on time evolution of non-Markov processes},
journal={J. Stat. Phys.},
year={1978},
%month={Feb},
%day={01},
volume={18},
number={2},
pages={155-159},
issn={1572-9613},
doi={10.1007/BF01014306},
%url={https://doi.org/10.1007/BF01014306}
}

@article{Accardi_1982,
  title={Quantum Stochastic Processes},
  author={Luigi Accardi and Alberto Frigerio and John T. Lewis},
  journal={Publ. Res. I. Math. Sci.},
  volume={18},
  number={1},
  pages={97-133},
  year={1982},
  doi={10.2977/prims/1195184017}
}

@article{Yukalov_2011,
author = {Yukalov, V.I.},
title = {Equilibration and thermalization in finite quantum systems},
journal = {Laser Phys. Lett.},
volume = {8},
number = {7},
pages = {485-507},
doi = {10.1002/lapl.201110002},
%url = {https://onlinelibrary.wiley.com/doi/abs/10.1002/lapl.201110002},
%eprint = {https://onlinelibrary.wiley.com/doi/pdf/10.1002/lapl.201110002}
year = {2011}
}

@article{Pechukas_1995,
  title = {Pechukas Replies:},
  author = {Pechukas, Philip},
  journal = {Phys. Rev. Lett.},
  volume = {75},
  issue = {16},
  pages = {3021--3021},
  numpages = {0},
  year = {1995},
  %month = {10},
  publisher = {American Physical Society},
  doi = {10.1103/PhysRevLett.75.3021},
  %url = {https://link.aps.org/doi/10.1103/PhysRevLett.75.3021}
}

@article{Pechukas_1994,
  title = {Reduced Dynamics Need Not Be Completely Positive},
  author = {Pechukas, Philip},
  journal = {Phys. Rev. Lett.},
  volume = {73},
  issue = {8},
  pages = {1060--1062},
  numpages = {0},
  year = {1994},
  %month = {8},
  publisher = {American Physical Society},
  doi = {10.1103/PhysRevLett.73.1060},
  %url = {https://link.aps.org/doi/10.1103/PhysRevLett.73.1060}
}

@article{Popescu2006,
author={Popescu, Sandu and Short, Anthony J. and Winter, Andreas},
title={Entanglement and the foundations of statistical mechanics},
journal={Nat. Phys.},
year={2006},
volume={2},
number={11},
pages={754-758},
issn={1745-2481},
doi={10.1038/nphys444},
%url={https://doi.org/10.1038/nphys444}
}

@article{Linden_2009,
  title = {Quantum mechanical evolution towards thermal equilibrium},
  author = {Linden, Noah and Popescu, Sandu and Short, Anthony J. and Winter, Andreas},
  journal = {Phys. Rev. E},
  volume = {79},
  issue = {6},
  pages = {061103},
  numpages = {12},
  year = {2009},
  publisher = {American Physical Society},
  doi = {10.1103/PhysRevE.79.061103}
}

@article{Short_2011,
	doi = {10.1088/1367-2630/13/5/053009},
	year = 2011,
	publisher = {{IOP} Publishing},
	volume = {13},
	number = {5},
	pages = {053009},
	author = {Anthony J Short},
	title = {Equilibration of quantum systems and subsystems},
	journal = {New. J. Phys.}
}

@article{PhysRevE.85.060101,
  title = {Thermalization of interacting fermions and delocalization in Fock space},
  author = {Neuenhahn, Clemens and Marquardt, Florian},
  journal = {Phys. Rev. E},
  volume = {85},
  issue = {6},
  pages = {060101},
  numpages = {4},
  year = {2012},
  publisher = {American Physical Society},
  doi = {10.1103/PhysRevE.85.060101}
}

@article{Calixto_2015,
	doi = {10.1088/1742-5468/2015/06/p06029},
	year = 2015,
	publisher = {{IOP} Publishing},
	volume = {2015},
	number = {6},
	pages = {P06029},
	author = {M Calixto and E Romera},
	title = {Inverse participation ratio and localization in topological insulator phase transitions},
	journal = {J. Stat. Mech.-Theory E.}
}

@book{Horn_2012,
  title={Matrix Analysis},
  author={Horn, R.A. and Johnson, C.R.},
  isbn={9781139788885},
  year={2012},
  publisher={Cambridge University Press}
}

@article{Reimann_2012,
	doi = {10.1088/1367-2630/14/4/043020},
	year = 2012,
	publisher = {{IOP} Publishing},
	volume = {14},
	number = {4},
	pages = {043020},
	author = {Peter Reimann and Michael Kastner},
	title = {Equilibration of isolated macroscopic quantum systems},
	journal = {New. J. Phys.}
}

@book{Gleason1975,
author="Gleason, Andrew M.",
editor="Hooker, C. A.",
title="Measures on the Closed Subspaces of a Hilbert Space",
bookTitle="The Logico-Algebraic Approach to Quantum Mechanics: Volume I: Historical Evolution",
year="1975",
publisher="Springer Netherlands",
address="Dordrecht",
pages="123--133",
isbn="978-94-010-1795-4",
doi="10.1007/978-94-010-1795-4_7"
}

@article{Goold_2016,
	doi = {10.1088/1751-8113/49/14/143001},
	%url = {https://doi.org/10.1088\%2F1751-8113\%2F49\%2F14\%2F143001},
	year = 2016,
	publisher = {{IOP} Publishing},
	volume = {49},
	number = {14},
	pages = {143001},
	author = {John Goold and Marcus Huber and Arnau Riera and L{\'{\i}}dia del Rio and Paul Skrzypczyk},
	title = {The role of quantum information in thermodynamics{\textemdash}a topical review},
	journal = {J. Phys. A-Math. Theor.}
}

@article{Brandao_return,
  title = {Thermalization and Return to Equilibrium on Finite Quantum Lattice Systems},
  author = {Farrelly, Terry and Brand\~ao, Fernando G. S. L. and Cramer, Marcus},
  journal = {Phys. Rev. Lett.},
  volume = {118},
  issue = {14},
  pages = {140601},
  numpages = {6},
  year = {2017},
  publisher = {American Physical Society},
  doi = {10.1103/PhysRevLett.118.140601},
  %url = {https://link.aps.org/doi/10.1103/PhysRevLett.118.140601}
}

@book{vonNeumann,
  title={Mathematical Foundations of Quantum Mechanics},
  author={von Neumann},
  isbn={9780691028934},
  lccn={53010143},
  series={Goldstine Printed Materials},
  url={https://books.google.com.au/books?id=JLyCo3RO4qUC},
  year={1955},
  publisher={Princeton University Press},
  note={English translation by Robert T. Beyer}
}

@article{Goldstein_2006,
  title = {Canonical Typicality},
  author = {Goldstein, Sheldon and Lebowitz, Joel L. and Tumulka, Roderich and Zangh\`{i}, Nino},
  journal = {Phys. Rev. Lett.},
  volume = {96},
  issue = {5},
  pages = {050403},
  numpages = {3},
  year = {2006},
  publisher = {American Physical Society},
  doi = {10.1103/PhysRevLett.96.050403},
  %url = {https://link.aps.org/doi/10.1103/PhysRevLett.96.050403}
}

@article{Bocchieri_1959,
  title = {Ergodic Foundation of Quantum Statistical Mechanics},
  author = {Bocchieri, P. and Loinger, A.},
  journal = {Phys. Rev.},
  volume = {114},
  issue = {4},
  pages = {948--951},
  numpages = {0},
  year = {1959},
  publisher = {American Physical Society},
  doi = {10.1103/PhysRev.114.948},
  %url = {https://link.aps.org/doi/10.1103/PhysRev.114.948}
}

@article{Peres_1984,
  title = {Ergodicity and mixing in quantum theory. I},
  author = {Peres, Asher},
  journal = {Phys. Rev. A},
  volume = {30},
  issue = {1},
  pages = {504--508},
  numpages = {0},
  year = {1984},
  %month = {7},
  publisher = {American Physical Society},
  doi = {10.1103/PhysRevA.30.504},
  %url = {https://link.aps.org/doi/10.1103/PhysRevA.30.504}
}

@article{Jensen_1985,
  title = {Statistical Behavior in Deterministic Quantum Systems with Few Degrees of Freedom},
  author = {Jensen, R. V. and Shankar, R.},
  journal = {Phys. Rev. Lett.},
  volume = {54},
  issue = {17},
  pages = {1879--1882},
  numpages = {0},
  year = {1985},
  %month = {4},
  publisher = {American Physical Society},
  doi = {10.1103/PhysRevLett.54.1879},
  %url = {https://link.aps.org/doi/10.1103/PhysRevLett.54.1879}
}

@article{Srednicki_1994,
  title = {Chaos and quantum thermalization},
  author = {Srednicki, Mark},
  journal = {Phys. Rev. E},
  volume = {50},
  issue = {2},
  pages = {888--901},
  numpages = {0},
  year = {1994},
  %month = {8},
  publisher = {American Physical Society},
  doi = {10.1103/PhysRevE.50.888},
  %url = {https://link.aps.org/doi/10.1103/PhysRevE.50.888}
}

@article{Srednicki_1999,
	doi = {10.1088/0305-4470/32/7/007},
	year = 1999,
	%month = {1},
	publisher = {{IOP} Publishing},
	volume = {32},
	number = {7},
	pages = {1163--1175},
	author = {Mark Srednicki},
	title = {The approach to thermal equilibrium in quantized chaotic systems},
	journal = {J. Phys. A-Math. Gen.}
}

@article{Short_finite,
	doi = {10.1088/1367-2630/14/1/013063},
	%url = {https://doi.org/10.1088%2F1367-2630%2F14%2F1%2F013063},
	year = 2012,
	%month = {jan},
	publisher = {{IOP} Publishing},
	volume = {14},
	number = {1},
	pages = {013063},
	author = {Anthony J Short and Terence C Farrelly},
	title = {Quantum equilibration in finite time},
	journal = {New. J. Phys.}
}

@article{GarciaPintos_2017,
  title = {Equilibration Time Scales of Physically Relevant Observables},
  author = {Garc\'{i}a-Pintos, Luis Pedro and Linden, Noah and Malabarba, Artur S. L. and Short, Anthony J. and Winter, Andreas},
  journal = {Phys. Rev. X},
  volume = {7},
  issue = {3},
  pages = {031027},
  numpages = {19},
  year = {2017},
  %month = {Aug},
  publisher = {American Physical Society},
  doi = {10.1103/PhysRevX.7.031027},
  %url = {https://link.aps.org/doi/10.1103/PhysRevX.7.031027}
}

@article{deOliveira_2018,
	doi = {10.1088/1367-2630/aab03b},
	%url = {https://doi.org/10.1088%2F1367-2630%2Faab03b},
	year = 2018,
	%month = {mar},
	publisher = {{IOP} Publishing},
	volume = {20},
	number = {3},
	pages = {033032},
	author = {Thiago R. de Oliveira and Christos Charalambous and Daniel Jonathan and Maciej Lewenstein and Arnau Riera},
	title = {Equilibration time scales in closed many-body quantum systems},
	journal = {New. J. Phys.}
}

@article{Hemmer_1958,
  title = {Recurrence Time of a Dynamical System},
  author = {Hemmer, P. Chr. and Maximon, L. C. and Wergeland, H.},
  journal = {Phys. Rev.},
  volume = {111},
  issue = {3},
  pages = {689--694},
  numpages = {0},
  year = {1958},
  %month = {Aug},
  publisher = {American Physical Society},
  doi = {10.1103/PhysRev.111.689},
  %url = {https://link.aps.org/doi/10.1103/PhysRev.111.689}
}

@book{boucheron2013concentration,
author={Boucheron, S. and Lugosi, G. and Massart, P.},
doi = {10.1093/acprof:oso/9780199535255.001.0001},
title = {Concentration Inequalities: A Nonasymptotic Theory of Independence},
isbn={9780199535255},
lccn={2012277339},
year={2013},
publisher={OUP Oxford}
}

@article{Collins_2003,
   volume={2003},
   ISSN={1073-7928},
   url={http://dx.doi.org/10.1155/S107379280320917X},
   title = {Moments and cumulants of polynomial random variables on unitarygroups, the Itzykson-Zuber integral, and free probability},
   DOI={10.1155/s107379280320917x},
   number={17},
   journal={Int. Math. Res. Notices},
   publisher={Oxford University Press (OUP)},
   author={Collins, Benoît},
   year={2003},
   pages={953}
}

@article{glassle2013almost,
  title={Almost all pure quantum states are almost maximally entangled},
  author={Gl{\"a}{\ss}le, Thomas},
  journal={Selected topics in Mathematical Physics:  Quantum Information Theory},
  year={2013},
  url = {http://citeseerx.ist.psu.edu/viewdoc/download?doi=10.1.1.679.3930&rep=rep1&type=pdf}
}

@book{Wilming2018,
author="Wilming, Henrik
and de Oliveira, Thiago R.
and Short, Anthony J.
and Eisert, Jens",
editor="Binder, Felix
and Correa, Luis A.
and Gogolin, Christian
and Anders, Janet
and Adesso, Gerardo",
title="Equilibration Times in Closed Quantum Many-Body Systems",
bookTitle="Thermodynamics in the Quantum Regime: Fundamental Aspects and New Directions",
year="2018",
publisher="Springer International Publishing",
address="Cham",
pages="435--455",
isbn="978-3-319-99046-0",
doi="10.1007/978-3-319-99046-0_18",
%url="https://doi.org/10.1007/978-3-319-99046-0_18"
}

@article{Winter_HamDesign,
  title = {Efficient Quantum Pseudorandomness with Nearly Time-Independent Hamiltonian Dynamics},
  author = {Nakata, Yoshifumi and Hirche, Christoph and Koashi, Masato and Winter, Andreas},
  journal = {Phys. Rev. X},
  volume = {7},
  issue = {2},
  pages = {021006},
  numpages = {20},
  year = {2017},
  %month = {Apr},
  publisher = {American Physical Society},
  doi = {10.1103/PhysRevX.7.021006},
  %url = {https://link.aps.org/doi/10.1103/PhysRevX.7.021006}
}

@article{PhysRevX.7.041015,
  title = {Driven Quantum Dynamics: Will It Blend?},
  author = {Banchi, Leonardo and Burgarth, Daniel and Kastoryano, Michael J.},
  journal = {Phys. Rev. X},
  volume = {7},
  issue = {4},
  pages = {041015},
  numpages = {24},
  year = {2017},
  %month = {Oct},
  publisher = {American Physical Society},
  doi = {10.1103/PhysRevX.7.041015},
  %url = {https://link.aps.org/doi/10.1103/PhysRevX.7.041015}
}

@article{2designsXZ,
author = {Nakata, Yoshifumi and Hirche, Christoph and Morgan, Ciara and Winter, Andreas },
title = {Unitary 2-designs from random {$X$}- and {$Z$}-diagonal unitaries},
journal = {J. Math. Phys.},
volume = {58},
number = {5},
pages = {052203},
year = {2017},
doi = {10.1063/1.4983266},
%URL = {https://doi.org/10.1063/1.4983266},
%eprint = {https://doi.org/10.1063/1.4983266}
}

@article{Nakata2013,
author = {Nakata, Yoshifumi and Murao, Mio},
title = {Diagonal unitary 2-designs and their implementations by quantum circuits},
journal = {Int. J. Quantum Inf.},
volume = {11},
number = {07},
pages = {1350062},
year = {2013},
doi = {10.1142/S0219749913500627},
%URL = {https://doi.org/10.1142/S0219749913500627}
}

@article{Heveling_2020,
  title = {Comment on ``Equilibration Time Scales of Physically Relevant Observables''},
  author = {Heveling, Robin and Knipschild, Lars and Gemmer, Jochen},
  journal = {Phys. Rev. X},
  volume = {10},
  issue = {2},
  pages = {028001},
  numpages = {4},
  year = {2020},
  %month = {Jun},
  publisher = {American Physical Society},
  doi = {10.1103/PhysRevX.10.028001},
  %url = {https://link.aps.org/doi/10.1103/PhysRevX.10.028001}
}

@article{Gemmer_2020,
  title = {Modern concepts of quantum equilibration do not rule out strange relaxation dynamics},
  author = {Knipschild, Lars and Gemmer, Jochen},
  journal = {Phys. Rev. E},
  volume = {101},
  issue = {6},
  pages = {062205},
  numpages = {11},
  year = {2020},
  %month = {Jun},
  publisher = {American Physical Society},
  doi = {10.1103/PhysRevE.101.062205},
  %url = {https://link.aps.org/doi/10.1103/PhysRevE.101.062205}
}

@article{Philthy,
  author = {Taranto, P. and Modi, K. and Pollock, F. A.},
  title = {Emergence of a fluctuation relation for heat in nonequilibrium {Landauer} processes},
  journal = {Phys. Rev. E},
  volume = {97},
  issue = {5},
  pages = {052111},
  numpages = {8},
  year = {2018},
  publisher = {American Physical Society},
  doi = {10.1103/PhysRevE.97.052111}
}

@article{Rigol_2012,
  title = {Alternatives to Eigenstate Thermalization},
  author = {Rigol, Marcos and Srednicki, Mark},
  journal = {Phys. Rev. Lett.},
  volume = {108},
  issue = {11},
  pages = {110601},
  numpages = {5},
  year = {2012},
  %month = {3},
  publisher = {American Physical Society},
  doi = {10.1103/PhysRevLett.108.110601},
  %url = {https://link.aps.org/doi/10.1103/PhysRevLett.108.110601}
}

@article{Murthy_2019,
  title = {Bounds on Chaos from the Eigenstate Thermalization Hypothesis},
  author = {Murthy, Chaitanya and Srednicki, Mark},
  journal = {Phys. Rev. Lett.},
  volume = {123},
  issue = {23},
  pages = {230606},
  numpages = {5},
  year = {2019},
  %month = {12},
  publisher = {American Physical Society},
  doi = {10.1103/PhysRevLett.123.230606}
}

@article{Brandao_2015,
author = {Fernando G.S.L. Brand{\~a}o and Marcus Cramer},
title = {Equivalence of Statistical Mechanical Ensembles for Non-Critical Quantum Systems},
year = "2015",
pages = {arXiv:1502.03263 [quant-ph]},
url = {https://arxiv.org/abs/1502.03263}
}

@Article{Muller2015,
author={M{\"u}ller, Markus P.
and Adlam, Emily
and Masanes, Llu{\'i}s
and Wiebe, Nathan},
title={Thermalization and Canonical Typicality in Translation-Invariant Quantum Lattice Systems},
journal={Commun. Math. Phys.},
year={2015},
volume={340},
number={2},
pages={499-561},
issn={1432-0916},
doi={10.1007/s00220-015-2473-y}
}

@Article{Tasaki_2018,
author={Tasaki, Hal},
title={On the Local Equivalence Between the Canonical and the Microcanonical Ensembles for Quantum Spin Systems},
journal={J. Stat. Phys.},
year={2018},
volume={172},
number={4},
pages={905-926},
doi={10.1007/s10955-018-2077-y}
}

@article{Nielsen_QPT,
author = {Isaac L. Chuang  and  M. A. Nielsen },
title = {Prescription for experimental determination of the dynamics of a quantum black box},
journal = {J. Mod. Optic.},
volume = {44},
number = {11-12},
pages = {2455-2467},
year  = {1997},
publisher = {Taylor & Francis},
doi = {10.1080/09500349708231894},
%URL = {https://www.tandfonline.com/doi/abs/10.1080/09500349708231894}
}

@article{Nielsen1998,
author={Nielsen, M. A.
and Knill, E.
and Laflamme, R.},
title={Complete quantum teleportation using nuclear magnetic resonance},
journal={Nature},
year={1998},
%month={Nov},
volume={396},
number={6706},
pages={52-55},
issn={1476-4687},
doi={10.1038/23891},
%url={https://doi.org/10.1038/23891}
}

@article{Childs_2001,
  title = {Realization of quantum process tomography in NMR},
  author = {Childs, Andrew M. and Chuang, Isaac L. and Leung, Debbie W.},
  journal = {Phys. Rev. A},
  volume = {64},
  issue = {1},
  pages = {012314},
  numpages = {7},
  year = {2001},
  %month = {Jun},
  publisher = {American Physical Society},
  doi = {10.1103/PhysRevA.64.012314},
  %url = {https://link.aps.org/doi/10.1103/PhysRevA.64.012314}
}

@article{Mitchell_2003,
  title = {Diagnosis, Prescription, and Prognosis of a Bell-State Filter by Quantum Process Tomography},
  author = {Mitchell, M. W. and Ellenor, C. W. and Schneider, S. and Steinberg, A. M.},
  journal = {Phys. Rev. Lett.},
  volume = {91},
  issue = {12},
  pages = {120402},
  numpages = {4},
  year = {2003},
  %month = {Sep},
  publisher = {American Physical Society},
  doi = {10.1103/PhysRevLett.91.120402},
  %url = {https://link.aps.org/doi/10.1103/PhysRevLett.91.120402}
}

@article{Weinstein_2004,
author = {Weinstein,Yaakov S.  and Havel,Timothy F.  and Emerson,Joseph  and Boulant,Nicolas  and Saraceno,Marcos  and Lloyd,Seth  and Cory,David G. },
title = {Quantum process tomography of the quantum Fourier transform},
journal = {J. Chem. Phys.},
volume = {121},
number = {13},
pages = {6117-6133},
year = {2004},
doi = {10.1063/1.1785151},
%URL = {https://doi.org/10.1063/1.1785151}
}

@article{OBrien_2004,
  title = {Quantum Process Tomography of a Controlled-NOT Gate},
  author = {O'Brien, J. L. and Pryde, G. J. and Gilchrist, A. and James, D. F. V. and Langford, N. K. and Ralph, T. C. and White, A. G.},
  journal = {Phys. Rev. Lett.},
  volume = {93},
  issue = {8},
  pages = {080502},
  numpages = {4},
  year = {2004},
  %month = {Aug},
  publisher = {American Physical Society},
  doi = {10.1103/PhysRevLett.93.080502},
  %url = {https://link.aps.org/doi/10.1103/PhysRevLett.93.080502}
}

@article{Myrskog_2005,
  title = {Quantum process tomography on vibrational states of atoms in an optical lattice},
  author = {Myrskog, S. H. and Fox, J. K. and Mitchell, M. W. and Steinberg, A. M.},
  journal = {Phys. Rev. A},
  volume = {72},
  issue = {1},
  pages = {013615},
  numpages = {5},
  year = {2005},
  %month = {Jul},
  publisher = {American Physical Society},
  doi = {10.1103/PhysRevA.72.013615},
  %url = {https://link.aps.org/doi/10.1103/PhysRevA.72.013615}
}

@article{Howard_2006,
	doi = {10.1088/1367-2630/8/3/033},
	%url = {https://doi.org/10.1088%2F1367-2630%2F8%2F3%2F033},
	year = 2006,
	%month = {mar},
	publisher = {{IOP} Publishing},
	volume = {8},
	number = {3},
	pages = {33--33},
	author = {M Howard and J Twamley and C Wittmann and T Gaebel and F Jelezko and J Wrachtrup},
	title = {Quantum process tomography and Linblad estimation of a solid-state qubit},
	journal = {New. J. Phys.}
}

@article{Chow_2009,
  title = {Randomized Benchmarking and Process Tomography for Gate Errors in a Solid-State Qubit},
  author = {Chow, J. M. and Gambetta, J. M. and Tornberg, L. and Koch, Jens and Bishop, Lev S. and Houck, A. A. and Johnson, B. R. and Frunzio, L. and Girvin, S. M. and Schoelkopf, R. J.},
  journal = {Phys. Rev. Lett.},
  volume = {102},
  issue = {9},
  pages = {090502},
  numpages = {4},
  year = {2009},
  %month = {Mar},
  publisher = {American Physical Society},
  doi = {10.1103/PhysRevLett.102.090502},
  %url = {https://link.aps.org/doi/10.1103/PhysRevLett.102.090502}
}

@article{Li_2011,
  title = {Experimentally witnessing the initial correlation between an open quantum system and its environment},
  author = {Li, Chuan-Feng and Tang, Jian-Shun and Li, Yu-Long and Guo, Guang-Can},
  journal = {Phys. Rev. A},
  volume = {83},
  issue = {6},
  pages = {064102},
  numpages = {4},
  year = {2011},
  %month = {Jun},
  publisher = {American Physical Society},
  doi = {10.1103/PhysRevA.83.064102},
  %url = {https://link.aps.org/doi/10.1103/PhysRevA.83.064102}
}

@Article{Gessner2014,
author={Gessner, M.
and Ramm, M.
and Pruttivarasin, T.
and Buchleitner, A.
and Breuer, H.-P.
and H{\"a}ffner, H.},
title={Local detection of quantum correlations with a single trapped ion},
journal={Nat. Phys.},
year={2014},
%month={Feb},
volume={10},
number={2},
pages={105-109},
issn={1745-2481},
doi={10.1038/nphys2829},
%url={https://doi.org/10.1038/nphys2829}
}

@article{Jordan_2004,
  title = {Dynamics of initially entangled open quantum systems},
  author = {Jordan, Thomas F. and Shaji, Anil and Sudarshan, E. C. G.},
  journal = {Phys. Rev. A},
  volume = {70},
  issue = {5},
  pages = {052110},
  numpages = {14},
  year = {2004},
  %month = {Nov},
  publisher = {American Physical Society},
  doi = {10.1103/PhysRevA.70.052110},
  %url = {https://link.aps.org/doi/10.1103/PhysRevA.70.052110}
}

@article{PhysRevA.77.042113,
  title = {Dynamics beyond completely positive maps: Some properties and applications},
  author = {Carteret, Hilary A. and Terno, Daniel R. and \ifmmode \dot{Z}\else \.{Z}\fi{}yczkowski, Karol},
  journal = {Phys. Rev. A},
  volume = {77},
  issue = {4},
  pages = {042113},
  numpages = {8},
  year = {2008},
  %month = {Apr},
  publisher = {American Physical Society},
  doi = {10.1103/PhysRevA.77.042113},
  %url = {https://link.aps.org/doi/10.1103/PhysRevA.77.042113}
}

@article{PhysRevA.64.062106,
  title = {Dynamics of open quantum systems initially entangled with environment: Beyond the Kraus representation},
  author = {\v{S}telmachovi\v{c}, Peter and Vladim\'{i}r Bu\v{z}ek},
  journal = {Phys. Rev. A},
  volume = {64},
  issue = {6},
  pages = {062106},
  numpages = {5},
  year = {2001},
 % month = {Nov},
  publisher = {American Physical Society},
  doi = {10.1103/PhysRevA.64.062106},
  %url = {https://link.aps.org/doi/10.1103/PhysRevA.64.062106}
}

@article{Shaji_2005,
title = "Who's afraid of not completely positive maps?",
journal = "Phys. Lett. A",
volume = "341",
number = "1",
pages = "48 - 54",
year = "2005",
issn = "0375-9601",
doi = "https://doi.org/10.1016/j.physleta.2005.04.029",
%url = "http://www.sciencedirect.com/science/article/pii/S0375960105005748",
author = "Anil Shaji and E.C.G. Sudarshan"
}

@article{Jordan_2006,
  title = {Mapping the Schr\"odinger picture of open quantum dynamics},
  author = {Jordan, Thomas F. and Shaji, Anil and Sudarshan, E. C. G.},
  journal = {Phys. Rev. A},
  volume = {73},
  issue = {1},
  pages = {012106},
  numpages = {9},
  year = {2006},
  %month = {Jan},
  publisher = {American Physical Society},
  doi = {10.1103/PhysRevA.73.012106},
  %url = {https://link.aps.org/doi/10.1103/PhysRevA.73.012106}
}

@article{PhysRevLett.90.193601,
  title = {Ancilla-Assisted Quantum Process Tomography},
  author = {Altepeter, J. B. and Branning, D. and Jeffrey, E. and Wei, T. C. and Kwiat, P. G. and Thew, R. T. and O'Brien, J. L. and Nielsen, M. A. and White, A. G.},
  journal = {Phys. Rev. Lett.},
  volume = {90},
  issue = {19},
  pages = {193601},
  numpages = {4},
  year = {2003},
  %month = {May},
  publisher = {American Physical Society},
  doi = {10.1103/PhysRevLett.90.193601},
  %url = {https://link.aps.org/doi/10.1103/PhysRevLett.90.193601}
}

@article{Rodriguez_Rosario_2008,
	doi = {10.1088/1751-8113/41/20/205301},
	%url = {https://doi.org/10.1088%2F1751-8113%2F41%2F20%2F205301},
	year = 2008,
	%month = {apr},
	publisher = {{IOP} Publishing},
	volume = {41},
	number = {20},
	pages = {205301},
	author = {C{\'{e}}sar A Rodr{\'{i}}guez-Rosario and Kavan Modi and Aik-meng Kuah and Anil Shaji and E C G Sudarshan},
	title = {Completely positive maps and classical correlations},
	journal = {J. Phys. A-Math. Theor.}
}

@article{Aspuru_2010,
  title = {Linear assignment maps for correlated system-environment states},
  author = {Rodr\'{i}guez-Rosario, C\'{e}sar A. and Modi, Kavan and Aspuru-Guzik, Al\'an},
  journal = {Phys. Rev. A},
  volume = {81},
  issue = {1},
  pages = {012313},
  numpages = {5},
  year = {2010},
  %month = {Jan},
  publisher = {American Physical Society},
  doi = {10.1103/PhysRevA.81.012313},
  %url = {https://link.aps.org/doi/10.1103/PhysRevA.81.012313}
}

@article{Modi_2011,
author = {Modi, Kavan},
title = {Preparation of States in Open Quantum Mechanics},
journal = {Open Syst. Inf. Dyn.},
volume = {18},
number = {03},
pages = {253-260},
year = {2011},
doi = {10.1142/S1230161211000170},
}

@Article{Xu_2012,
author={Xu, G. F.
and Liu, L. J.
and Tong, D. M.},
title={Effect of preparation procedures on the system's entanglement evolution},
journal={Eur. Phys. J. D.},
year={2012},
%month={Sep},
%day={12},
volume={66},
number={9},
pages={236},
issn={1434-6079},
doi={10.1140/epjd/e2012-30167-4},
%url={https://doi.org/10.1140/epjd/e2012-30167-4}
}

@article{Ringbauer_2015,
  title = {Characterizing Quantum Dynamics with Initial System-Environment Correlations},
  author = {Ringbauer, M. and Wood, C. J. and Modi, K. and Gilchrist, A. and White, A. G. and Fedrizzi, A.},
  journal = {Phys. Rev. Lett.},
  volume = {114},
  issue = {9},
  pages = {090402},
  numpages = {5},
  year = {2015},
  %month = {Mar},
  publisher = {American Physical Society},
  doi = {10.1103/PhysRevLett.114.090402},
  %url = {https://link.aps.org/doi/10.1103/PhysRevLett.114.090402}
}

@Article{Majeed_2019,
author={Majeed, Mehwish
and Chaudhry, Adam Zaman},
title={Effect of initial system--environment correlations with spin environments},
journal={Eur. Phys. J. D.},
year={2019},
%month={Jan},
volume={73},
number={1},
pages={16},
issn={1434-6079},
doi={10.1140/epjd/e2018-90416-0},
%url={https://doi.org/10.1140/epjd/e2018-90416-0}
}

@article{Dankert_2009,
  title = {Exact and approximate unitary 2-designs and their application to fidelity estimation},
  author = {Dankert, Christoph and Cleve, Richard and Emerson, Joseph and Livine, Etera},
  journal = {Phys. Rev. A},
  volume = {80},
  issue = {1},
  pages = {012304},
  numpages = {6},
  year = {2009},
  %month = {Jul},
  publisher = {American Physical Society},
  doi = {10.1103/PhysRevA.80.012304},
  %url = {https://link.aps.org/doi/10.1103/PhysRevA.80.012304}
}

@book{mehta2004random,
  title={Random Matrices},
  author={Mehta, M.L.},
  isbn={9780080474113},
  series={ISSN},
  %url={https://books.google.com.au/books?id=Kp3Nx03\_gMwC},
  year={2004},
  publisher={Elsevier Science}
}

@Article{Blok2014,
author={Blok, M. S.
and Bonato, C.
and Markham, M. L.
and Twitchen, D. J.
and Dobrovitski, V. V.
and Hanson, R.},
title={Manipulating a qubit through the backaction of sequential partial measurements and real-time feedback},
journal={Nat. Phys.},
year={2014},
%month={Mar},
volume={10},
number={3},
pages={189-193},
issn={1745-2481},
doi={10.1038/nphys2881},
%url={https://doi.org/10.1038/nphys2881}
}

@article{Shrapnel_2018,
author = {Shrapnel, Sally and Costa, Fabio and Milburn, Gerard},
title = {Quantum Markovianity as a supervised learning task},
journal = {Int. J. Quantum. Inf.},
volume = {16},
number = {08},
pages = {1840010},
year = {2018},
doi = {10.1142/S0219749918400105},
%URL = {https://doi.org/10.1142/S0219749918400105}
}

@article{Winick2019,
    title={Phenomenological measure of quantum non-Markovianity},
    author={Adam Winick and Joel J. Wallman and Joseph Emerson},
    year={2019},
    pages = {arXiv:1901.00267 [quant-ph]},
    url = {https://arxiv.org/abs/1901.00267}
}

@article{Emerson_2005,
	doi = {10.1088/1464-4266/7/10/021},
	%url = {https://doi.org/10.1088%2F1464-4266%2F7%2F10%2F021},
	year = 2005,
	%month = {sep},
	publisher = {{IOP} Publishing},
	volume = {7},
	number = {10},
	pages = {S347--S352},
	author = {Joseph Emerson and Robert Alicki and Karol {\.{Z}}yczkowski},
	title = {Scalable noise estimation with random unitary operators},
	journal = {J. Opt. B: Quantum S. O.}
}

@article{Wallman_2014,
	doi = {10.1088/1367-2630/16/10/103032},
	%url = {https://doi.org/10.1088%2F1367-2630%2F16%2F10%2F103032},
	year = 2014,
	%month = {oct},
	publisher = {{IOP} Publishing},
	volume = {16},
	number = {10},
	pages = {103032},
	author = {Joel J Wallman and Steven T Flammia},
	title = {Randomized benchmarking with confidence},
	journal = {New. J. Phys.}
}

@article{Morris2019,
    title={Non-Markovian memory in IBMQX4},
    author={Joshua Morris and Felix A. Pollock and Kavan Modi},
    year={2019},
    pages = {arXiv:1902.07980 [quant-ph]},
    url = {https://arxiv.org/abs/1902.07980}
}

@article{Taranto_2019,
  title = {Structure of quantum stochastic processes with finite Markov order},
  author = {Taranto, Philip and Milz, Simon and Pollock, Felix A. and Modi, Kavan},
  journal = {Phys. Rev. A},
  volume = {99},
  issue = {4},
  pages = {042108},
  numpages = {24},
  year = {2019},
  %month = {Apr},
  publisher = {American Physical Society},
  doi = {10.1103/PhysRevA.99.042108},
  %url = {https://link.aps.org/doi/10.1103/PhysRevA.99.042108}
}

@article{Taranto_2019_2,
  title = {Quantum Markov Order},
  author = {Taranto, Philip and Pollock, Felix A. and Milz, Simon and Tomamichel, Marco and Modi, Kavan},
  journal = {Phys. Rev. Lett.},
  volume = {122},
  issue = {14},
  pages = {140401},
  numpages = {6},
  year = {2019},
  %month = {Apr},
  publisher = {American Physical Society},
  doi = {10.1103/PhysRevLett.122.140401},
  %url = {https://link.aps.org/doi/10.1103/PhysRevLett.122.140401}
}

@article{Giarmatzi2018,
    title={Witnessing quantum memory in non-Markovian processes},
    author={Christina Giarmatzi and Fabio Costa},
    year={2018},
    pages = {arXiv:1811.03722 [quant-ph]},
    url = {https://arxiv.org/abs/1811.03722}
}

@article{Guo2020,
    title={Experimental Demonstration of Quantum Processes with Finite Memory},
    author={Yu Guo and Philip Taranto and Bi-Heng Liu and Xiao-Min Hu and Yun-Feng Huang and Chuan-Feng Li and Guang-Can Guo},
    year={2020},
    pages = {arXiv:2003.14045 [quant-ph]},
    url = {https://arxiv.org/abs/2003.14045}
}

@article{Guhr1998,
title = "Random-matrix theories in quantum physics: common concepts",
journal = "Phys. Rep.",
volume = "299",
number = "4",
pages = "189 - 425",
year = "1998",
issn = "0370-1573",
doi = "https://doi.org/10.1016/S0370-1573(97)00088-4",
%url = "http://www.sciencedirect.com/science/article/pii/S0370157397000884",
author = "Thomas Guhr and Axel Müller–Groeling and Hans A. Weidenmüller"
}

@book{shapiro2012quantum,
  title={Quantum Control of Molecular Processes},
  author={Shapiro, M. and Brumer, P.},
  isbn={9783527409044},
  lccn={2012360996},
  %url={https://books.google.com.au/books?id=BcreEjuakwkC},
  year={2012},
  publisher={Wiley}
}

@article{Altafini,
author={C. {Altafini} and F. {Ticozzi}},
journal={IEEE T. Automat. Contr.},
title={Modeling and Control of Quantum Systems: An Introduction},
year={2012},
volume={57},
number={8},
pages={1898-1917},
DOI={10.1109/tac.2012.2195830},
}

@book{feller1968,
  title={An Introduction to Probability Theory and Its Applications},
  author={Feller, W.},
  number={v. 1},
  lccn={68011708},
  series={An Introduction to Probability Theory and Its Applications},
  url={https://books.google.com.au/books?id=wYkQAQAAIAAJ},
  year={1968},
  publisher={Wiley}
}

@article{Taranto_2020,
   title={Memory effects in quantum processes},
   volume={18},
   ISSN={1793-6918},
   %url={http://dx.doi.org/10.1142/S0219749919410028},
   DOI={10.1142/s0219749919410028},
   number={02},
   journal={Int. J. Quantum. Inf.},
   publisher={World Scientific Pub Co Pte Lt},
   author={Taranto, Philip},
   year={2020},
   %month={Jan},
   pages={1941002}
}

@book{tao2011,
  title={An Introduction to Measure Theory},
  author={Tao, T.},
  isbn={9780821869192},
  lccn={2011018926},
  series={Graduate studies in mathematics},
  %url={https://books.google.com.au/books?id=HoGDAwAAQBAJ},
  year={2011},
  publisher={American Mathematical Society}
}

@book{kolmogorov2018,
  title={Foundations of the Theory of Probability: Second English Edition},
  author={Kolmogorov, A.N. and Bharucha-Reid, A.T. and Morrison, N.},
  isbn={9780486821597},
  lccn={2017046438},
  series={Dover Books on Mathematics},
  %url={https://books.google.com.au/books?id=ZZ5ODwAAQBAJ},
  year={2018},
  publisher={Dover Publications}
}

@book{Dynkin_2006,
  title={Theory of Markov Processes},
  author={Dynkin, E.B. and Kovary, T. and Brown, D.E.},
  isbn={9780486453057},
  lccn={2006047438},
  series={Dover books on mathematics},
  %url={https://books.google.com.au/books?id=qk63DAAAQBAJ},
  year={2006},
  publisher={Dover Publications}
}

@book{stroock2013,
  title={An Introduction to Markov Processes},
  author={Stroock, D.W.},
  isbn={9783642405235},
  series={Graduate Texts in Mathematics},
  %url={https://books.google.com.au/books?id=BUC5BAAAQBAJ},
  year={2013},
  publisher={Springer Berlin Heidelberg}
}

@article{VanKampen_1998,
   title = {Remarks on Non-Markov Processes},
   journal = {Braz. J. Phys.},
   author={van Kampen, N.G.},
   ISSN = {0103-9733},
   language = {en},
   URL = {http://www.scielo.br/scielo.php?script=sci_arttext&pid=S0103-97331998000200003&nrm=iso},
   volume = {28},
   year = {1998},
   pages = {90 - 96},
   publisher = {scielo}
   }

@article{Peres_2003, volume={33},
   ISSN={0015-9018},
   title={What's Wrong with These Observables?},
   DOI={10.1023/a:1026000614638},
   number={10},
   journal={Foundations of Physics},
   publisher={Springer Science and Business Media LLC},
   author={Peres, Asher},
   year={2003},
   pages={1543–1547}
}

@article{Pollock_2018,
  title = {Non-Markovian quantum processes: Complete framework and efficient characterization},
  author = {Pollock, Felix A. and Rodr\'{i}guez-Rosario, C\'{e}sar and Frauenheim, Thomas and Paternostro, Mauro and Modi, Kavan},
  journal = {Phys. Rev. A},
  volume = {97},
  issue = {1},
  pages = {012127},
  numpages = {13},
  year = {2018},
  %month = {Jan},
  publisher = {American Physical Society},
  doi = {10.1103/PhysRevA.97.012127},
  %url = {https://link.aps.org/doi/10.1103/PhysRevA.97.012127}
}

@article{PhysRevA.98.012108,
  title = {Reconstructing non-Markovian quantum dynamics with limited control},
  author = {Milz, Simon and Pollock, Felix A. and Modi, Kavan},
  journal = {Phys. Rev. A},
  volume = {98},
  issue = {1},
  pages = {012108},
  numpages = {14},
  year = {2018},
  %month = {Jul},
  publisher = {American Physical Society},
  doi = {10.1103/PhysRevA.98.012108},
  %url = {https://link.aps.org/doi/10.1103/PhysRevA.98.012108}
}

@article{Milz_2020,
   title={Kolmogorov extension theorem for (quantum) causal modelling and general probabilistic theories},
   volume={4},
   ISSN={2521-327X},
   %url={http://dx.doi.org/10.22331/q-2020-04-20-255},
   DOI={10.22331/q-2020-04-20-255},
   journal={Quantum},
   publisher={Verein zur Forderung des Open Access Publizierens in den Quantenwissenschaften},
   author={Milz, Simon and Sakuldee, Fattah and Pollock, Felix A. and Modi, Kavan},
   year={2020},
   %month={Apr},
   pages={255}
}

@Article{Lindblad1979,
author={Lindblad, G{\"o}ran},
title={Non-Markovian quantum stochastic processes and their entropy},
journal={Commun. Math. Phys.},
year={1979},
%month={Oct},
%day={01},
volume={65},
number={3},
pages={281-294},
issn={1432-0916},
doi={10.1007/BF01197883},
%url={https://doi.org/10.1007/BF01197883}
}

@article{Lindblad1976,
author={Lindblad, G{\"o}ran},
title={On the generators of quantum dynamical semigroups},
journal={Commun. Math. Phys.},
year={1976},
%month={Jun},
volume={48},
number={2},
pages={119-130},
issn={1432-0916},
doi={10.1007/BF01608499},
%url={https://doi.org/10.1007/BF01608499}
}

@article{Hardy_2012,
author = {Hardy, Lucien },
title = {The operator tensor formulation of quantum theory},
journal = {Philos. T. R. Soc. A},
volume = {370},
number = {1971},
pages = {3385-3417},
year = {2012},
doi = {10.1098/rsta.2011.0326},
%URL = {https://royalsocietypublishing.org/doi/abs/10.1098/rsta.2011.0326}
}

@Article{Oreshkov2012,
author={Oreshkov, Ognyan
and Costa, Fabio
and Brukner, Caslav},
title={Quantum correlations with no causal order},
journal={Nat. Commun.},
year={2012},
volume={3},
number={1},
pages={1092},
issn={2041-1723},
doi={10.1038/ncomms2076},
%url={https://doi.org/10.1038/ncomms2076}
}

@article{Oreshkov_2016_2,
	doi = {10.1088/1367-2630/18/9/093020},
	%url = {https://doi.org/10.1088%2F1367-2630%2F18%2F9%2F093020},
	year = 2016,
	%month = {sep},
	publisher = {{IOP} Publishing},
	volume = {18},
	number = {9},
	pages = {093020},
	author = {Ognyan Oreshkov and Christina Giarmatzi},
	title = {Causal and causally separable processes},
	journal = {New. J. Phys.}
}

@article{Costa_2016,
	doi = {10.1088/1367-2630/18/6/063032},
	year = 2016,
	%month = {jun},
	publisher = {{IOP} Publishing},
	volume = {18},
	number = {6},
	pages = {063032},
	author = {Fabio Costa and Sally Shrapnel},
	title = {Quantum causal modelling},
	journal = {New J. Phys.}
}

@article{Chiribella_2009,
   title={Theoretical framework for quantum networks},
   volume={80},
   ISSN={1094-1622},
   %url={http://dx.doi.org/10.1103/PhysRevA.80.022339},
   DOI={10.1103/physreva.80.022339},
   number={2},
   journal={Phys. Rev. A},
   publisher={American Physical Society (APS)},
   author={Chiribella, Giulio and D’Ariano, Giacomo Mauro and Perinotti, Paolo},
   year={2009},
   %month={Aug}
}

@article{Chiribella_2013,
  title = {Quantum computations without definite causal structure},
  author = {Chiribella, Giulio and D'Ariano, Giacomo Mauro and Perinotti, Paolo and Valiron, Benoit},
  journal = {Phys. Rev. A},
  volume = {88},
  issue = {2},
  pages = {022318},
  numpages = {15},
  year = {2013},
  %month = {Aug},
  publisher = {American Physical Society},
  doi = {10.1103/PhysRevA.88.022318},
  %url = {https://link.aps.org/doi/10.1103/PhysRevA.88.022318}
}

@article{Chiribella_2008,
	doi = {10.1209/0295-5075/83/30004},
	%url = {https://doi.org/10.1209%2F0295-5075%2F83%2F30004},
	year = 2008,
	%month = {jul},
	publisher = {{IOP} Publishing},
	volume = {83},
	number = {3},
	pages = {30004},
	author = {G. Chiribella and G. M. D{\textquotesingle}Ariano and P. Perinotti},
	title = {Transforming quantum operations: Quantum supermaps},
	journal = {Europhys. Lett.}
}

@article{Accardi_1976,
title = "Nonrelativistic quantum mechanics as a noncommutative Markof process",
journal = "Adv. Math.",
volume = "20",
number = "3",
pages = "329 - 366",
year = "1976",
issn = "0001-8708",
doi = "https://doi.org/10.1016/0001-8708(76)90201-2",
%url = "http://www.sciencedirect.com/science/article/pii/0001870876902012",
author = "Luigi Accardi"
}

@article{Hardy_2016,
    title={Operational General Relativity: Possibilistic, Probabilistic, and Quantum},
    author={Lucien Hardy},
    year={2016},
    pages = {arXiv:1608.06940 [gr-qc]},
    url = {https://arxiv.org/abs/1608.06940}
}

@article{Werner_2005,
  title = {Quantum channels with memory},
  author = {Kretschmann, Dennis and Werner, Reinhard F.},
  journal = {Phys. Rev. A},
  volume = {72},
  issue = {6},
  pages = {062323},
  numpages = {19},
  year = {2005},
  %month = {Dec},
  publisher = {American Physical Society},
  doi = {10.1103/PhysRevA.72.062323},
  %url = {https://link.aps.org/doi/10.1103/PhysRevA.72.062323}
}

@article{Caruso_2014,
  title = {Quantum channels and memory effects},
  author = {Caruso, Filippo and Giovannetti, Vittorio and Lupo, Cosmo and Mancini, Stefano},
  journal = {Rev. Mod. Phys.},
  volume = {86},
  issue = {4},
  pages = {1203--1259},
  numpages = {57},
  year = {2014},
  %month = {Dec},
  publisher = {American Physical Society},
  doi = {10.1103/RevModPhys.86.1203},
  %url = {https://link.aps.org/doi/10.1103/RevModPhys.86.1203}
}

@ARTICLE{Portmann_2017,
author={C. {Portmann} and C. {Matt} and U. {Maurer} and R. {Renner} and B. {Tackmann}}, 
journal={IEEE T. Inform. Theory}, 
title={Causal Boxes: Quantum Information-Processing Systems Closed Under Composition},
DOI={10.1109/tit.2017.2676805},
year={2017},
volume={63},
number={5}, 
pages={3277-3305}
}

@article{Milz_2018,
	doi = {10.1088/1367-2630/aaafee},
	%url = {https://doi.org/10.1088%2F1367-2630%2Faaafee},
	year = 2018,
	%month = {mar},
	publisher = {{IOP} Publishing},
	volume = {20},
	number = {3},
	pages = {033033},
	author = {Simon Milz and Felix A Pollock and Thao P Le and Giulio Chiribella and Kavan Modi},
	title = {Entanglement, non-Markovianity, and causal non-separability},
	journal = {New. J. Phys.}
}

@article{Strasberg_2019,
  title = {Operational approach to quantum stochastic thermodynamics},
  author = {Strasberg, Philipp},
  journal = {Phys. Rev. E},
  volume = {100},
  issue = {2},
  pages = {022127},
  numpages = {24},
  year = {2019},
  %month = {Aug},
  publisher = {American Physical Society},
  doi = {10.1103/PhysRevE.100.022127},
  %url = {https://link.aps.org/doi/10.1103/PhysRevE.100.022127}
}

@article{Strasberg_2019_2,
   title={Repeated Interactions and Quantum Stochastic Thermodynamics at Strong Coupling},
   volume={123},
   ISSN={1079-7114},
   %url={http://dx.doi.org/10.1103/PhysRevLett.123.180604},
   DOI={10.1103/physrevlett.123.180604},
   number={18},
   journal={Phys. Rev. Lett.},
   publisher={American Physical Society (APS)},
   author={Strasberg, Philipp},
   year={2019},
   %month={Oct}
}

@article{Rivas_2014,
	doi = {10.1088/0034-4885/77/9/094001},
	%url = {https://doi.org/10.1088%2F0034-4885%2F77%2F9%2F094001},
	year = 2014,
	%month = {aug},
	publisher = {{IOP} Publishing},
	volume = {77},
	number = {9},
	pages = {094001},
	author = {{\'{A}}ngel Rivas and Susana F Huelga and Martin B Plenio},
	title = {Quantum non-Markovianity: characterization, quantification and detection},
	journal = {Rep. Prog. Phys.}
}

@article{Rivas_2010,
  title = {Entanglement and Non-Markovianity of Quantum Evolutions},
  author = {Rivas, \'Angel and Huelga, Susana F. and Plenio, Martin B.},
  journal = {Phys. Rev. Lett.},
  volume = {105},
  issue = {5},
  pages = {050403},
  numpages = {4},
  year = {2010},
  %month = {Jul},
  publisher = {American Physical Society},
  doi = {10.1103/PhysRevLett.105.050403},
  %url = {https://link.aps.org/doi/10.1103/PhysRevLett.105.050403}
}

@article{Hou_2011,
  title = {Alternative non-Markovianity measure by divisibility of dynamical maps},
  author = {Hou, S.C. and Yi, X.X. and Yu, S.X. and Oh, C.H.},
  journal = {Phys. Rev. A},
  volume = {83},
  issue = {6},
  pages = {062115},
  numpages = {6},
  year = {2011},
  %month = {Jun},
  publisher = {American Physical Society},
  doi = {10.1103/PhysRevA.83.062115},
  %url = {https://link.aps.org/doi/10.1103/PhysRevA.83.062115}
}

@article{Usha_2011,
  title = {Open-system quantum dynamics with correlated initial states, not completely positive maps, and non-Markovianity},
  author = {Usha Devi, A. R. and Rajagopal, A. K. and Sudha},
  journal = {Phys. Rev. A},
  volume = {83},
  issue = {2},
  pages = {022109},
  numpages = {8},
  year = {2011},
  %month = {Feb},
  publisher = {American Physical Society},
  doi = {10.1103/PhysRevA.83.022109},
  %url = {https://link.aps.org/doi/10.1103/PhysRevA.83.022109}
}

@article{ZhangMInt,
author = "Zhang, L",
title = {Matrix integrals over unitary groups: An application of {Schur-Weyl} duality},
year = "2014",
pages = {arXiv:1408.3782 [quant-ph]},
url={https://arxiv.org/abs/1408.3782}
}

@article{Wolf_2008,
  title = {Assessing Non-Markovian Quantum Dynamics},
  author = {Wolf, M. M. and Eisert, J. and Cubitt, T. S. and Cirac, J. I.},
  journal = {Phys. Rev. Lett.},
  volume = {101},
  issue = {15},
  pages = {150402},
  numpages = {4},
  year = {2008},
  %month = {Oct},
  publisher = {American Physical Society},
  doi = {10.1103/PhysRevLett.101.150402},
  %url = {https://link.aps.org/doi/10.1103/PhysRevLett.101.150402}
}

@article{Devi_2012,
   title={Interplay of Quantum Stochastic and Dynamical Maps to Discern Markovian and Non-Markovian Transitions},
   volume={02},
   ISSN={2162-576X},
   %url={http://dx.doi.org/10.4236/JQIS.2012.23009},
   DOI={10.4236/jqis.2012.23009},
   number={03},
   journal={J. Quantum Inf. Sci.},
   publisher={Scientific Research Publishing, Inc.},
   author={Devi, A. R. Usha and Rajagopal, A. K. and Shenoy, S. and Rendell, R. W.},
   year={2012},
   pages={47–54}
}

@article{Manisalco_2014,
  title = {Degree of Non-Markovianity of Quantum Evolution},
  author = {Chru\'{s}ci\'{n}ski, Dariusz and Maniscalco, Sabrina},
  journal = {Phys. Rev. Lett.},
  volume = {112},
  issue = {12},
  pages = {120404},
  numpages = {5},
  year = {2014},
  %month = {Mar},
  publisher = {American Physical Society},
  doi = {10.1103/PhysRevLett.112.120404},
  %url = {https://link.aps.org/doi/10.1103/PhysRevLett.112.120404}
}

@article{Breuer_2009,
  title = {Measure for the Degree of Non-Markovian Behavior of Quantum Processes in Open Systems},
  author = {Breuer, Heinz-Peter and Laine, Elsi-Mari and Piilo, Jyrki},
  journal = {Phys. Rev. Lett.},
  volume = {103},
  issue = {21},
  pages = {210401},
  numpages = {4},
  year = {2009},
  %month = {Nov},
  publisher = {American Physical Society},
  doi = {10.1103/PhysRevLett.103.210401},
  %url = {https://link.aps.org/doi/10.1103/PhysRevLett.103.210401}
}

@article{Rivas_2011,
  title = {Measures of non-Markovianity: Divisibility versus backflow of information},
  author = {Chru\'{s}ci\'{n}ski, Dariusz and Kossakowski, Andrzej and Rivas, \'Angel},
  journal = {Phys. Rev. A},
  volume = {83},
  issue = {5},
  pages = {052128},
  numpages = {6},
  year = {2011},
  %month = {May},
  publisher = {American Physical Society},
  doi = {10.1103/PhysRevA.83.052128},
  %url = {https://link.aps.org/doi/10.1103/PhysRevA.83.052128}
}

@article{Pollock_2018_Markov,
  title = {Operational Markov Condition for Quantum Processes},
  author = {Pollock, Felix A. and Rodr\'{i}guez-Rosario, C\'esar and Frauenheim, Thomas and Paternostro, Mauro and Modi, Kavan},
  journal = {Phys. Rev. Lett.},
  volume = {120},
  issue = {4},
  pages = {040405},
  numpages = {6},
  year = {2018},
  %month = {Jan},
  publisher = {American Physical Society},
  doi = {10.1103/PhysRevLett.120.040405},
  %url = {https://link.aps.org/doi/10.1103/PhysRevLett.120.040405}
}

@article{Sakuldee_2018,
	doi = {10.1088/1751-8121/aabb1e},
	%url = {https://doi.org/10.1088%2F1751-8121%2Faabb1e},
	year = 2018,
	%month = {sep},
	publisher = {{IOP} Publishing},
	volume = {51},
	number = {41},
	pages = {414014},
	author = {Fattah Sakuldee and Simon Milz and Felix A Pollock and Kavan Modi},
	title = {Non-Markovian quantum control as coherent stochastic trajectories},
	journal = {J. Phys. A-Math. Theor.}
}

@book{cover2012elements,
  title={Elements of Information Theory},
  author={Cover, T.M. and Thomas, J.A.},
  isbn={9781118585771},
  lccn={2005047799},
  %url={https://books.google.com.au/books?id=VWq5GG6ycxMC},
  year={2012},
  publisher={Wiley}
}

@book{Hoeffding_1994,
author="Hoeffding, Wassily",
editor="Fisher, N. I.
and Sen, P. K.",
title="On Probabilities of Large Deviations",
bookTitle="The Collected Works of Wassily Hoeffding",
year="1994",
publisher="Springer New York",
address="New York, NY",
pages="473--490",
isbn="978-1-4612-0865-5",
doi="10.1007/978-1-4612-0865-5_29",
%url="https://doi.org/10.1007/978-1-4612-0865-5_29"
}

@article{PhysRevA.71.062310,
  title = {Distance measures to compare real and ideal quantum processes},
  author = {Gilchrist, Alexei and Langford, Nathan K. and Nielsen, Michael A.},
  journal = {Phys. Rev. A},
  volume = {71},
  issue = {6},
  pages = {062310},
  numpages = {14},
  year = {2005},
  %month = {Jun},
  publisher = {American Physical Society},
  doi = {10.1103/PhysRevA.71.062310},
  %url = {https://link.aps.org/doi/10.1103/PhysRevA.71.062310}
}

@article{Low_2009,
title={Large deviation bounds for {$k$}-designs},
volume={465},
DOI={10.1098/rspa.2009.0232},
number={2111},
journal={P. Roy. Soc. A-Math.Phy.},
author={Low, Richard A.},
year={2009},
%month={May},
pages={3289–3308}
}

@article{Knill_1995,
    author = "Knill, E.",
    title = {Approximation by Quantum Circuits},
    year = "1995",
    journal = {arXiv:quant-ph/9508006},
    url={https://arxiv.org/abs/quant-ph/9508006}
}

@article{Knill_2008,
  title = {Randomized benchmarking of quantum gates},
  author = {Knill, E. and Leibfried, D. and Reichle, R. and Britton, J. and Blakestad, R. B. and Jost, J. D. and Langer, C. and Ozeri, R. and Seidelin, S. and Wineland, D. J.},
  journal = {Phys. Rev. A},
  volume = {77},
  issue = {1},
  pages = {012307},
  numpages = {7},
  year = {2008},
  %month = {Jan},
  publisher = {American Physical Society},
  doi = {10.1103/PhysRevA.77.012307},
  %url = {https://link.aps.org/doi/10.1103/PhysRevA.77.012307}
}

@article{Pollock2018tomographically,
  doi = {10.22331/q-2018-07-11-76},
  %url = {https://doi.org/10.22331/q-2018-07-11-76},
  title = {Tomographically reconstructed master equations for any open quantum dynamics},
  author = {Pollock, Felix A. and Modi, Kavan},
  journal = {{Quantum}},
  issn = {2521-327X},
  publisher = {{Verein zur F{\"{o}}rderung des Open Access Publizierens in den Quantenwissenschaften}},
  volume = {2},
  pages = {76},
  %month = jul,
  year = {2018}
}

@MastersThesis{GuMoments,
author = "Gu, Y.",
title = {Moments of Random Matrices and. Weingarten Functions},
school = {Queen's University},
address = {Ontario, Canada},
year = {2013},
url = {http://hdl.handle.net/1974/8241}
}

@article{Weingarten,
author = {Weingarten,Don },
title = {Asymptotic behavior of group integrals in the limit of infinite rank},
journal = {J. Math. Phys.},
volume = {19},
number = {5},
pages = {999-1001},
year = {1978},
doi = {10.1063/1.523807},
%URL = { https://doi.org/10.1063/1.523807},
}

@Article{Hiai_1991,
author={Hiai, Fumio
and Petz, D{\'e}nes},
title={The proper formula for relative entropy and its asymptotics in quantum probability},
journal={Commun. Math. Phys.},
year={1991},
%month={Dec},
%day={01},
volume={143},
number={1},
pages={99-114},
issn={1432-0916},
doi={10.1007/BF02100287},
%url={https://doi.org/10.1007/BF02100287}
}

@article{Vedral_1997,
  title = {Statistical inference, distinguishability of quantum states, and quantum entanglement},
  author = {Vedral, V. and Plenio, M. B. and Jacobs, K. and Knight, P. L.},
  journal = {Phys. Rev. A},
  volume = {56},
  issue = {6},
  pages = {4452--4455},
  numpages = {0},
  year = {1997},
  %month = {Dec},
  publisher = {American Physical Society},
  doi = {10.1103/PhysRevA.56.4452},
  %url = {https://link.aps.org/doi/10.1103/PhysRevA.56.4452}
}

@article{Yukalov_2012,
title = "Decoherence and equilibration under nondestructive measurements",
journal = "Ann. Phys.",
volume = "327",
number = "2",
pages = "253 - 263",
year = "2012",
issn = "0003-4916",
doi = "10.1016/j.aop.2011.09.009",
author = "V.I. Yukalov"
}

@article{Gross2007,
author = {Gross,D.  and Audenaert,K.  and Eisert,J.},
title = {Evenly distributed unitaries: On the structure of unitary designs},
journal = {J. Math. Phys.},
volume = {48},
number = {5},
pages = {052104},
year = {2007},
doi = {10.1063/1.2716992},
%URL = {https://doi.org/10.1063/1.2716992}
}

@Article{Harrow2009,
author={Harrow, Aram W.
and Low, Richard A.},
title={Random Quantum Circuits are Approximate 2-designs},
journal={Commun. Math. Phys.},
year={2009},
volume={291},
number={1},
pages={257-302},
issn={1432-0916},
doi={10.1007/s00220-009-0873-6},
%url={https://doi.org/10.1007/s00220-009-0873-6}
}

@article{Webb_2016,
  author    = {Zak Webb},
  title     = {The Clifford group forms a unitary 3-design},
  journal   = {Quantum Inf. Comput.},
  volume    = {16},
  number    = {15{\&}16},
  pages     = {1379--1400},
  year      = {2016},
  url={https://arxiv.org/abs/1510.02769}
}

@Article{Brandao2016,
author={Brand{\~a}o, Fernando G. S. L.
and Harrow, Aram W.
and Horodecki, Micha{\l{}}},
title={Local Random Quantum Circuits are Approximate Polynomial-Designs},
journal={Commun. Math. Phys.},
year={2016},
volume={346},
number={2},
pages={397-434},
issn={1432-0916},
doi={10.1007/s00220-016-2706-8},
%url={https://doi.org/10.1007/s00220-016-2706-8}
}

@article{Zhu_2017,
  title = {Multiqubit Clifford groups are unitary 3-designs},
  author = {Zhu, Huangjun},
  journal = {Phys. Rev. A},
  volume = {96},
  issue = {6},
  pages = {062336},
  numpages = {7},
  year = {2017},
  %month = {Dec},
  publisher = {American Physical Society},
  doi = {10.1103/PhysRevA.96.062336},
  %url = {https://link.aps.org/doi/10.1103/PhysRevA.96.062336}
}

@article{Phil_MemStr,
author = "Taranto, Philip and Pollock, Felix A. and Modi, Kavan",
title = {Memory Strength and Recoverability of Non-Markovian Quantum Stochastic Processes},
year = "2019",
pages = {arXiv:1907.12583 [quant-ph]},
url = {https://arxiv.org/abs/1907.12583}
}

@book{Paulsen_diamond,
place={Cambridge},
series={Cambridge Studies in Advanced Mathematics},
title={Completely Bounded Maps and Operator Algebras},
DOI={10.1017/CBO9780511546631},
publisher={Cambridge University Press},
author={Paulsen, Vern},
year={2003},
collection={Cambridge Studies in Advanced Mathematics}}

@article{Nakata2017decouplingrandom,
  doi = {10.22331/q-2017-07-21-18},
  %url = {https://doi.org/10.22331/q-2017-07-21-18},
  title = {Decoupling with random diagonal unitaries},
  author = {Nakata, Yoshifumi and Hirche, Christoph and Morgan, Ciara and Winter, Andreas},
  journal = {{Quantum}},
  issn = {2521-327X},
  publisher = {{Verein zur F{\"{o}}rderung des Open Access Publizierens in den Quantenwissenschaften}},
  volume = {1},
  pages = {18},
  %month = jul,
  year = {2017}
}

@article{Cerrillo2014,
  title = {Non-Markovian Dynamical Maps: Numerical Processing of Open Quantum Trajectories},
  author = {Cerrillo, Javier and Cao, Jianshu},
  journal = {Phys. Rev. Lett.},
  volume = {112},
  issue = {11},
  pages = {110401},
  numpages = {5},
  year = {2014},
 % month = {Mar},
  publisher = {American Physical Society},
  doi = {10.1103/PhysRevLett.112.110401},
  %url = {https://link.aps.org/doi/10.1103/PhysRevLett.112.110401}
}

@article{Vega_2020,
author = "In\'{e}s de Vega",
year = "2020",
title = {The quantum dynamical map of the spin boson model},
pages = {arXiv:2001.04236 [quant-ph]},
url = {https://arxiv.org/abs/2001.04236}
}

@article{Guarnieri2014,
  title = {Quantum regression theorem and non-Markovianity of quantum dynamics},
  author = {Guarnieri, Giacomo and Smirne, Andrea and Vacchini, Bassano},
  journal = {Phys. Rev. A},
  volume = {90},
  issue = {2},
  pages = {022110},
  numpages = {11},
  year = {2014},
  %month = {Aug},
  publisher = {American Physical Society},
  doi = {10.1103/PhysRevA.90.022110},
  %url = {https://link.aps.org/doi/10.1103/PhysRevA.90.022110}
}

@article{Deutsch_ETH,
  title = {Quantum statistical mechanics in a closed system},
  author = {Deutsch, J. M.},
  journal = {Phys. Rev. A},
  volume = {43},
  issue = {4},
  pages = {2046--2049},
  numpages = {0},
  year = {1991},
  %month = {Feb},
  publisher = {American Physical Society},
  doi = {10.1103/PhysRevA.43.2046},
  %url = {https://link.aps.org/doi/10.1103/PhysRevA.43.2046}
}

@article{Brenes_2020,
  title = {Multipartite Entanglement Structure in the Eigenstate Thermalization Hypothesis},
  author = {Brenes, Marlon and Pappalardi, Silvia and Goold, John and Silva, Alessandro},
  journal = {Phys. Rev. Lett.},
  volume = {124},
  issue = {4},
  pages = {040605},
  numpages = {6},
  year = {2020},
  %month = {Jan},
  publisher = {American Physical Society},
  doi = {10.1103/PhysRevLett.124.040605},
  %url = {https://link.aps.org/doi/10.1103/PhysRevLett.124.040605}
}

@misc{milz_2019,
title={On the Operational Theory of General (Quantum) Stochastic Processes}, %url={https://bridges.monash.edu/articles/thesis/On_the_Operational_Theory_of_General_Quantum_Stochastic_Processes/8341175/1},
DOI={10.26180/5d157cd26ef20}, 
publisher={Monash University},
author={Milz, Simon Christoph},
year={2019},
%month={Jun}
}

@article{berk2019resource,
    title={Resource theories of multi-time processes: A window into quantum non-Markovianity},
    author={Graeme D. Berk and Andrew J. P. Garner and Benjamin Yadin and Kavan Modi and Felix A. Pollock},
    year={2019},
pages = {arXiv:1907.07003 [quant-ph]},
url = {https://arxiv.org/abs/1907.07003}
}

@article{Luchnikov2020MachineLN,
   title={Machine Learning Non-Markovian Quantum Dynamics},
   volume={124},
   ISSN={1079-7114},
   url={http://dx.doi.org/10.1103/PhysRevLett.124.140502},
   DOI={10.1103/physrevlett.124.140502},
   number={14},
   journal={Phys. Rev. Lett.},
   publisher={American Physical Society (APS)},
   author={Luchnikov, I. A. and Vintskevich, S. V. and Grigoriev, D. A. and Filippov, S. N.},
   year={2020}
}

@article{Mathias_2019,
  title = {Exploiting the Causal Tensor Network Structure of Quantum Processes to Efficiently Simulate Non-Markovian Path Integrals},
  author = {J\o{}rgensen, Mathias R. and Pollock, Felix A.},
  journal = {Phys. Rev. Lett.},
  volume = {123},
  issue = {24},
  pages = {240602},
  numpages = {7},
  year = {2019},
  %month = {Dec},
  publisher = {American Physical Society},
  doi = {10.1103/PhysRevLett.123.240602},
  %url = {https://link.aps.org/doi/10.1103/PhysRevLett.123.240602}
}

@article{Giovannetti_2006,
   title={Improved Transfer of Quantum Information Using a Local Memory},
   volume={96},
   ISSN={1079-7114},
   %url={http://dx.doi.org/10.1103/PhysRevLett.96.030501},
   DOI={10.1103/physrevlett.96.030501},
   number={3},
   journal={Phys. Rev. Lett.},
   publisher={American Physical Society (APS)},
   author={Giovannetti, Vittorio and Burgarth, Daniel},
   year={2006},
  % month={Jan}
}

@article{PhysRevLett.123.170605,
  title = {Unifying Paradigms of Quantum Refrigeration: A Universal and Attainable Bound on Cooling},
  author = {Clivaz, Fabien and Silva, Ralph and Haack, G\'eraldine and Brask, Jonatan Bohr and Brunner, Nicolas and Huber, Marcus},
  journal = {Phys. Rev. Lett.},
  volume = {123},
  issue = {17},
  pages = {170605},
  numpages = {6},
  year = {2019},
  %month = {Oct},
  publisher = {American Physical Society},
  doi = {10.1103/PhysRevLett.123.170605},
  %url = {https://link.aps.org/doi/10.1103/PhysRevLett.123.170605}
}

@article{Proctor_2017,
  title = {What Randomized Benchmarking Actually Measures},
  author = {Proctor, Timothy and Rudinger, Kenneth and Young, Kevin and Sarovar, Mohan and Blume-Kohout, Robin},
  journal = {Phys. Rev. Lett.},
  volume = {119},
  issue = {13},
  pages = {130502},
  numpages = {6},
  year = {2017},
  %month = {Sep},
  publisher = {American Physical Society},
  doi = {10.1103/PhysRevLett.119.130502},
  %url = {https://link.aps.org/doi/10.1103/PhysRevLett.119.130502}
}

@article{Dirkse_2019,
  title = {Efficient unitarity randomized benchmarking of few-qubit Clifford gates},
  author = {Dirkse, Bas and Helsen, Jonas and Wehner, Stephanie},
  journal = {Phys. Rev. A},
  volume = {99},
  issue = {1},
  pages = {012315},
  numpages = {35},
  year = {2019},
  %month = {Jan},
  publisher = {American Physical Society},
  doi = {10.1103/PhysRevA.99.012315},
  %url = {https://link.aps.org/doi/10.1103/PhysRevA.99.012315}
}

@article{Harper_2019,
  title = {Fault-Tolerant Logical Gates in the IBM Quantum Experience},
  author = {Harper, Robin and Flammia, Steven T.},
  journal = {Phys. Rev. Lett.},
  volume = {122},
  issue = {8},
  pages = {080504},
  numpages = {6},
  year = {2019},
  %month = {Feb},
  publisher = {American Physical Society},
  doi = {10.1103/PhysRevLett.122.080504},
  %url = {https://link.aps.org/doi/10.1103/PhysRevLett.122.080504}
}

@article{Helsen_2019,
   title={A new class of efficient randomized benchmarking protocols},
   volume={5},
   ISSN={2056-6387},
   %url={http://dx.doi.org/10.1038/s41534-019-0182-7},
   DOI={10.1038/s41534-019-0182-7},
   number={1},
   journal={NPJ Quantum Inf.},
   publisher={Springer Science and Business Media LLC},
   author={Helsen, Jonas and Xue, Xiao and Vandersypen, Lieven M. K. and Wehner, Stephanie},
   year={2019},
   %month={Aug}
}

@book{coecke_kissinger_2017,
place={Cambridge},
title={Picturing Quantum Processes: A First Course in Quantum Theory and Diagrammatic Reasoning},
DOI={10.1017/9781316219317},
publisher={Cambridge University Press},
author={Coecke, Bob and Kissinger, Aleks},
year={2017}
}

@article{Lindblad_1975,
author={Lindblad, G{\"o}ran},
title={Completely positive maps and entropy inequalities},
journal={Commun. Math. Phys.},
year={1975},
%month={Jun},
%day={01},
volume={40},
number={2},
pages={147-151},
issn={1432-0916},
doi={10.1007/BF01609396},
%url={https://doi.org/10.1007/BF01609396}
}

@article{Sacchi_2005,
  title = {Optimal discrimination of quantum operations},
  author = {Sacchi, Massimiliano F.},
  journal = {Phys. Rev. A},
  volume = {71},
  issue = {6},
  pages = {062340},
  numpages = {4},
  year = {2005},
  publisher = {American Physical Society},
  doi = {10.1103/PhysRevA.71.062340},
  url = {https://link.aps.org/doi/10.1103/PhysRevA.71.062340}
}

@article{Chiribella_2008_mem,
  title = {Memory Effects in Quantum Channel Discrimination},
  author = {Chiribella, Giulio and D'Ariano, Giacomo M. and Perinotti, Paolo},
  journal = {Phys. Rev. Lett.},
  volume = {101},
  issue = {18},
  pages = {180501},
  numpages = {4},
  year = {2008},
  publisher = {American Physical Society},
  doi = {10.1103/PhysRevLett.101.180501},
  url = {https://link.aps.org/doi/10.1103/PhysRevLett.101.180501}
}

@article{Gutoski_2012,
   title={On a measure of distance for quantum strategies},
   volume={53},
   ISSN={1089-7658},
   DOI={10.1063/1.3693621},
   number={3},
   journal={J. Math. Phys.},
   publisher={AIP Publishing},
   author={Gutoski, Gus},
   year={2012},
   pages={032202}
}

\end{document}